\newcommand\cyr
\renewcommand\rmdefault{wncyr}
\renewcommand\sfdefault{wncyss}
\renewcommand\encodingdefault{OT2}
\DeclareTextFontCommand{\textcyr}{\cyr}
\def\cprime{\char"7E }
\def\Eoborotnoye{\char'003}
\newtheorem{theorem}{Theorem}[section]
\newtheorem{lemma}[theorem]{Lemma}
\theoremstyle{definition}
\newtheorem{definition}[theorem]{Definition}
\newtheorem{hypothesis}[theorem]{Hypothesis}
\newtheorem{corollary}[theorem]{Corollary}
\newtheorem{proposition}[theorem]{Proposition}
\theoremstyle{remark}
\newtheorem{remark}[theorem]{Remark}
\numberwithin{equation}{section}
\newcommand{\bbK}{\mathbb K}
\newcommand{\bbR}{\mathbb R}
\newcommand{\bbT}{\mathbb T}
\newcommand{\bbZ}{\mathbb Z}
\newcommand{\bbD}{\mathbb D}
\newcommand{\bbC}{\mathbb C}
\newcommand{\bbH}{\mathbb H}
\newcommand{\bbY}{\mathbb Y}
\newcommand{\bbX}{\mathbb X}
\renewcommand{\epsilon}{\varepsilon}
\newcommand{\be}{\begin{equation}}
\newcommand{\ee}{\end{equation}}
\newcommand{\spec}{\mathrm{spec}}
\newcommand{\Span}{\mathrm{span}}
\renewcommand{\L}{\mathbb{L}}
\newcommand{\EE}{\mathsf{E}}
\newcommand{\cA}{{\mathcal A}}
\newcommand{\cB}{{\mathcal B}}
\newcommand{\cC}{{\mathcal C}}
\newcommand{\cD}{{\mathcal D}}
\newcommand{\cE}{{\mathcal E}}
\newcommand{\cH}{{\mathcal H}}
\newcommand{\cK}{{\mathcal K}}
\newcommand{\cL}{{\mathcal L}}
\newcommand{\cM}{{\mathcal M}}
\newcommand{\cN}{{\mathcal N}}
\newcommand{\cP}{{\mathcal P}}
\newcommand{\cQ}{{\mathcal Q}}
\newcommand{\cR}{{\mathcal R}}
\newcommand{\cU}{{\mathcal U}}
\newcommand{\cW}{{\mathcal W}}
\renewcommand{\Im}{{\ensuremath{\mathrm{Im}}}}
\renewcommand{\Re}{{\ensuremath{\mathrm{Re}}}}
\newcommand{\tr}{\mathrm{tr}}
\newcommand{\sign}{\mathrm{sgn}}
\newcommand{\fA}{\mathfrak{A}}
\newcommand{\fB}{\mathfrak{B}}
\newcommand{\fD}{\mathfrak{D}}
\newcommand{\fH}{\mathfrak{H}}
\newcommand{\fM}{\mathfrak{M}}
\DeclareMathOperator*{\slim}{s-lim}
\DeclareMathOperator{\Ran}{\mathrm{Ran}}
\DeclareMathOperator{\Ker}{\mathrm{Ker}}
\newcommand{\linspan}{\mathrm{lin\ span}}
\newcommand{\Dom}{\mathrm{Dom}}
\newcommand{\dom}{\mathrm{Dom}}
\DeclareFontFamily{U}{rcjhbltx}{}
\DeclareFontShape{U}{rcjhbltx}{m}{n}{<->rcjhbltx}{}
\DeclareSymbolFont{hebrewletters}{U}{rcjhbltx}{m}{n}
\let\aleph\relax\let\beth\relax
\let\gimel\relax\let\daleth\relax
\DeclareMathSymbol{\aleph}{\mathord}{hebrewletters}{39}
\DeclareMathSymbol{\beth}{\mathord}{hebrewletters}{98}
\DeclareMathSymbol{\gimel}{\mathord}{hebrewletters}{103}
\DeclareMathSymbol{\daleth}{\mathord}{hebrewletters}{100}
\DeclareMathSymbol{\lamed}{\mathord}{hebrewletters}{108}
\DeclareMathSymbol{\mem}{\mathord}{hebrewletters}{109}
\DeclareMathSymbol{\ayin}{\mathord}{hebrewletters}{96}
\DeclareMathSymbol{\tsadi}{\mathord}{hebrewletters}{118}
\DeclareMathSymbol{\qof}{\mathord}{hebrewletters}{113}
\DeclareMathSymbol{\shin}{\mathord}{hebrewletters}{152}
\begin{document}

\title [Dissipative and Non-unitary Representations ]
{ Representations of commutation relations in Dissipative Quantum Mechanics}

\author{K. A. Makarov}
\address{Department of Mathematics, University of Missouri, Columbia, MO 63211, USA}
\email{makarovk@missouri.edu}

\author{E. Tsekanovski\u{i} }
\address{ Department of Mathematics, Niagara University, Lewiston,
NY  14109, USA } \email{tsekanov@niagara.edu}

\subjclass[2010]{Primary: 81Q10, Secondary: 35P20, 47N50}

\keywords{Weyl commutation relations, affine group, deficiency indices,
  quasi-selfadjoint extensions, continuous monitoring, Quantum Zeno effect, exponential decay, stable laws.\\
  The first author was partially  supported by the Simons collaboration grant 00061759 while preparing this monograph.}

\begin{abstract}  We prove the uniqueness theorem for the solutions to the restricted Weyl commutation relations braiding unitary groups and semi-groups of contractions
  that are    close to unitaries. We also discuss related mathematical problems of continuous monitoring of quantum systems and provide rigorous foundations for the
  exponential decay phenomenon of a resonant state  in quantum mechanics.

\end{abstract}

\maketitle

\setlength\epigraphwidth{.8\textwidth}

\vspace*{\fill}
\epigraph{One might still like to ask: ``How does it work? What is the machinery behind the law?'' No one has found any machinery behind the law. No one can ``explain'' any more than we have just ``explained.'' No one will give you any deeper representation of the situation. We have no ideas about a more basic mechanism from which these results can be deduced.}{Richard Feynman, \textit{The Feynman Lectures on Physics, Volume III}}

\tableofcontents

\dedicatory{\it In  respectful memory of our beloved teachers M.  Liv\v{s}ic and  B. Pavlov.}



\part{\Large  Representations of operator  commutation relations}

\section{Introduction}\label{s1}

{\it This book is dedicated to the memory of the remarkable Human Beings and Mathematicians Michail Samoilovich  Liv\v{s}ic (M.S.) and Boris Sergeevich Pavlov (B.S.). Their pioneering research in the theory of non-selfadjoint operators and applications to  scattering  problems and system theory have  attracted many researchers and made  the present book possible.}

{\it The  Apostolic service to students, colleagues and the mathematical community provided by M.S. and B.S. was enormous and their good deeds will never be forgotten. The light of scientific accomplishments of M.S. and B.S.  shines brightly   and is succinctly described by the poetic word of Galina Volchek: }
$$
$$
\begin{center}
{\cyr \textsf{Uhodya ostav\cprime te Svet! {\Eoborotnoye}to bol\cprime she, chem ostat\cprime sya...}}

{\cyr \qquad\textsf{{\Eoborotnoye}to luchshe, chem proshchat\cprime sya i vazhne{\u i}, chem dat\cprime \  sovet...}}

{\cyr \textsf{Uhodya ostav\cprime te Svet - pered nim ot{s}tupit holod!}}

{\cyr \qquad\textsf{Svet sobo{\u i} zapolnit gorod... Dazhe esli vas tam net...}}

\end{center}
$$
$$

The classical Stone-von Neumann theorem \cite{EM,Mac,Si,S} states
that  the unitary  representations of the canonical
commutation relations  (CCR) of Quantum Mechanics in the Weyl form \cite{W}
\begin{equation}\label{weylccr}
U_tV_s=e^{ist}V_sU_t
\end{equation}
for  strongly  continuous, one-parameter groups of unitary operators  $U_t$ and $V_s$ in a
separable Hilbert space $\cH$ are unitarily equivalent to a direct sum
of copies of the unique
 irreducible system in the Hilbert space
$\cH=L^2(\bbR)$ with
$$
(U_tf)(x)=\exp (ixt)f(x) \quad \text{ and } \quad (V_sf)(x)=f(x-s).
$$

For the history of the subject we refer to \cite{Rosen} where  one can  find a thorough discussion of the further generalizations initiated by G. W. Mackey
in  his ground-breaking paper \cite{Mac},  and  the subsequent development  in number theory due to A. Weil \cite{Weil}.
We  refer also to  the series of publications
\cite{BL,Dixmier,Foash,Kato62,Lax,LJ,Putnam,Rellich}
where  the interested reader can find a  truly extensive body of information on the subject.





The CCR \eqref{weylccr}  can  be reformulated  in an equivalent
infinitesimal form  (see \cite{EM}, \cite{Si})
as a relation for   the self-adjoint generator $A$  of the group $V_s=e^{isA}$
\begin{equation}\label{ifccr}
U_tAU_t^{*}=A+tI\quad \text{ on } \Dom(A), \quad  t\in \bbR,
\end{equation}
or as the equality invoking the spectral measure   $E(d\lambda)$  of the self-adjoint operator $A$
$$
U_tE(\delta)U_t^{*}=E(\delta-t), \,\, t\in \bbR, \quad \delta \text{ a Borel set}.
$$

It is worth mentioning that rewriting  relations  \eqref{weylccr}  in its infinitesimal
(semi-Weyl) form \eqref{ifccr}
opens a way for the further developments  and generalizations   in various  directions.

In this book we choose a presentation line  taking into account the following observation:
the Stone-von Neumann uniqueness result \cite{JvN} implies that  if a self-adjoint operator $A$   satisfies \eqref{ifccr}, then
$A$  always admits a symmetric restriction $\dot A\subset A$ with deficiency indices $(1,1)$ such that
 the same commutation relations
\begin{equation}\label{1/2Wdot}
U_t\dot AU_t^*=\dot A+t I \quad \text{on } \quad \Dom(\dot A), \quad t\in \bbR,
\end{equation}
hold.

Given the commutation relations for a symmetric operator \eqref{1/2Wdot},  see Hypothesis \ref{muhly}, the following natural problems can be posed:

 (I) $a)$ {\it Characterize  such  symmetric operator solutions    $\dot A$  up to unitary equivalence;}

$\hskip .6cm b)$ {\it Provide an intrinsic characterization  of those solutions. }

 (II) {\it  Find the maximal dissipative solutions $\widehat A$ to  the  infinitesimal Weyl
 relations of the form
 \begin{equation}\label{mmm}
U_t \widehat AU_t^*=\widehat A+tI\quad \text{on } \quad \Dom( \widehat A)
\end{equation}
  such that $\dot A\subset  \widehat A\subset (\dot A)^*$. }

Problem (I) b) was posed  in \cite{J80} and  we will refer to it as the   {\it J$\clock$rgensen-Muhly  problem}.

Notice that in this situation  the semigroup $V_s=e^{is \widehat A}$, $s\ge 0$, generated by the dissipative operator $\widehat A$ and the unitary group $U_t=e^{itB}$, generated by a self-adjoint operator $B=B^*$, satisfy the restricted Weyl commutation relations
\begin{equation}\label{resweyl}
U_tV_s=e^{ist}V_s U_t, \quad t\in \bbR,\quad  s\ge 0.
\end{equation}

More generally,  one can ask to  provide the  complete classification (up to mutual unitary equivalence) of the pairs of corresponding generators $(\widehat A,B)$  under the solely assumption   that the generator  $\widehat A$
is an  extension of a symmetric operator with arbitrary deficiency indices  $(m, n)$.

Much progress has been achieved  in this area of research (see  \cite{BP2007,Jorg1,J79,J80,J80a,J81,JM,Sch1, Sch2},  also see \cite{TMF}).
For instance,
it is known that   a semi-group satisfies the  restricted Weyl  relations if and only if the    characteristic function of its generator has a particularly simple form \cite[Theorem 20]{JM}.
Moreover, in this case,
 the restricted Weyl system can be dilated to  a canonical Weyl system in an extended Hilbert space \cite[Theorem 15]{JM}.
 However, to the best of our knowledge, the  {\it complete classification } of irreducible representations of the restricted commutation
relations \eqref{resweyl},  even in the simplest case  of deficiency indices $(1,1)$, has not  been  obtained yet.

In this book  we give    the {\it complete solution to  problem } (II) under the assumption that
 the generator $\widehat A$ of the semi-group $V_s$ is a dissipative quasi-selfadjoint extension \cite{AkG} of a prime symmetric operator $\dot A$  with deficiency indices $(1,1)$.
 As in the Stone-von Neumann theorem,  we show that the pair  $(\widehat A,B)$  of  generators  is   mutually unitarily  equivalent
 to  the ``canonical''  pair $(\widehat \cP, \cQ)$  on  a metric graph $\bbY$, finite or infinite.
Here  $\widehat \cP$ stands for   a dissipative differentiation  ({\it momentum})  operator
 on  $\bbY$ with appropriate vertex boundary conditions
  and $\cQ$ is  the self-adjoint multiplication  ({\it position}) operator on the graph  $\bbY$.
  In contrast to the Stone-von Neumann  uniqueness theorem,
  where the corresponding graph is just the  real axis (with no reference vertices), the graph geometry of  $\bbY$ is more varied  (see Definition \ref{canonical}
  for the classification).
Moreover, the knowledge of the complete set of  unitary invariants of the solutions  to the commutation relations determines not only the geometry of the metric graph $\bbY$ but also  the location of the central vertex of the graph.
For instance,  given a solution of the commutation relations on a metric graph, one   obtains new series of unitarily inequivalent solutions
 by shifting  the  graph.

Our approach  is based on the detailed study of unitary invariants of operators
  such as  the   Liv\v{s}ic and/or Weyl-Titchmarsh functions   associated with the pair of  a symmetric operator and its self-adjoint (reference) extension
 as well as  the characteristic function of a dissipative triple of operators.
 A comprehensive  study of the concept of a characteristic function associated with
various classes of non-selfadjoint operators, in particular,
with applications to scattering theory, system theory and
boundary value problems one can find in \cite{BrRov,Lax,L46,Lv1,LOOW,LP,Nagy,PavDrog,Pav96,Pav2010} 
as well as in 
  \cite{AA,ABT,BT,BL60,Brod,Davies,DM,DM17,Dub1,Dub2,Kuzhel,KK,LJETP,LVin,LJ,Naboko,Nik1,Nik2,Pav75,PF, Peller,PT,Sakh,
  Str60,St68,TS,Zol,Zolbook}.

The departure point  for our study of commutation relations is   structure  Theorem \ref{lem:diss}. This result states  that in the situation in question
 the characteristic function of a dissipative triple is either  (i) a constant, or   (ii)  a singular inner function  in the upper half-plane with ``mass at infinity",
 or (iii)  the product of those  two.
The examples of   differentiation  operators (more precisely, a  triple of those)  with  either a constant or  entire characteristic function are known (see, e.g., \cite[Ch. IX]{AkG}). The construction of  the model differentiation operator/triple in the general case (iii)
can be achieved in the framework of operator coupling  theory  \cite{MT-MAT}. Notice that  those examples of differentiation operators are {\it the building blocks}  in our approach.
In particular, addressing Problem (I) a),  we obtain the complete classification   of the symmetric operators with deficiency indices $(1,1)$ that solve  the infinitesimal  relations  \eqref{1/2Wdot} up to unitary equivalence.
  We also  provide an intrinsic characterization of the corresponding  symmetric operator solutions to the commutation relations, thus giving a {\it comprehensive answer  to   the
 J$\clock$rgensen-Muhly  problem  } (I) $b)$ \cite{J80} (see Remark \ref{muhlyresolution}).

In  the second part of the book we address  the quantum  continuous monitoring theory  and discuss in detail complementarity  of the {\sc                                                                                                                                                                                                                                                                                                                                                                                                                                                                                                                                                                                                         Quantum Zeno
and {\sc Exponential Decay} scenarios in frequent quantum measurements experiments.}
 For  background material,
we refer to \cite{Atm,Kh,Misra} and \cite{Fock,KMPY,Weiskopf}, respectively, and the  references therein.
In this context, we also want to  mention the revolutionary paper  by Gamow
\cite{Gamov} who  was the first to introduce quantum states with ``complex" energies  and, based on this concept, gave an explanation for the decay law for   a quasi-stationary state.

In the framework of our formalism we give a justification for
the exponential decay scenario (under continuous monitoring) by recognizing  the phenomenon as a
variant of   the Gnedenko-Kolmogorov  $1$-stable limit theorem. Having this link in mind, we
 obtain several principal  results in quantum measurement theory.
In particular,
we show  that a  ``typical  smooth" state of a material $($massive$)$ particle  under continuous monitoring is either a  {\it Zeno state or an  anti-Zeno state} (see Theorems \ref{1/2thm} and \ref{3/2thm}).
In contrast to that, for the systems of massless particles (fields)   the situation is quite different: if  the Hamiltonian of the system  is given by
 the first order differentiation operator, then  the {\it  quantum Zeno
and  exponential decay scenarios are complementary} instead. In addition, it turns out that for that  kind of
systems, the decay rate   is rather  sensitive   to the choice of a self-adjoint realization of the  Hamiltonian  on the metric  graph, especially if the graph is not simply connected.
From the point of view of physics,
this phenomenon is a manifestation of    the  Aharonov-Bohm effect. That is,  in the absence of the magnetic field,
the magnetic potential by itself   affects the magnitude of the decay rate in this case (see Theorem \ref{decr} and Section 18, eq. \eqref{intell}).
We also notice that the existence of  states the decay   rate  of which is independent of the Aharonov-Bohm field  is closely related to   the search for dissipative operator  solutions of  the infinitesimal  Weyl commutation relations  \eqref{mmm}.

As an illustration, within  continuous monitoring paradigm we discuss a {\it Gedankenexperiment}
where the renowned  {\it exclusive} and {\it interference} measurement  alternatives in quantum theory can be rigorously analyzed. In addition,  on the basis of  an explicitly soluble model, we  present a variant of  the celebrated ``double-slit experiment"  in a way that is accessible for mathematicians (see Theorem \ref{quanmon},
 eq. \eqref{QED0}, and Theorem   \ref{quanmondis}, eq. \eqref{clastau1}).

We conclude our treatise by the discussion   of  limit theorems  in the framework of operator coupling theory. More specifically, we introduce a new  mode of convergence for dissipative operators (in distribution) and
 show that basic dissipative solutions to the commutation relations  can be considered  analogs of stable distributions in the orthodox probability theory. This observation,  in our opinion,  sheds some light on the foundations of dissipative quantum theory of open
 systems  which our dear teachers M.S. and B.S. dedicated their scientific life to.

  The book is organized as follows.

$\bullet$ In Section \ref{s2}, following our work \cite{MT-S},  we recall  the concept of the Weyl-Titchmarsh and  the Liv\v{s}ic functions associated with the  pair of operators ($\dot A, A)$ where $\dot A$ is a symmetric operator with deficiency indices $(1,1)$ and $A$ its self-adjoint extension.
Given  a dissipative quasi-selfadjoint extension   $\widehat A$ of $\dot A$, we also introduce a  characteristic function associated with the triple
 $(\dot A, \widehat A, A)$ and
provide a characterization of  the projection of the  deficiency subspace  $\Ker ((\dot A)^*-zI)$  onto
the subspace $\Ker ((\dot A)^*-\overline{z}I)$
along the domain of the dissipative operator $\widehat A$  in terms of the characteristic function of the triple (see Proposition \ref{NeuLif}).


 $\bullet$ In Section \ref{s3} we  study  symmetric operators with deficiency indices $(1,1)$ that satisfy the semi-Weyl commutation relations \eqref{1/2Wdot} (see Hypothesis \ref{muhly}).
 We  show  that if
$\dot A$ admits a self-adjoint extension $A$ that solves  the same commutation relations as the operator $\dot A$ does, then
  the Liv\v{s}ic function associated with the pair $(\dot A, A)$ is identically zero in the upper half-plane (see Theorem \ref{sacr}). In this case the corresponding Weyl-Titchmarsh  function is $z$-independent and coincides with $i=\sqrt{-1}$ on the whole upper half-plane.
On the other hand, if $\dot A$
  has no self-adjoint extension  that solves  the same commutation relations but does have  a  dissipative extension solving \eqref{mmm}, then Theorem  \ref{lem:diss} asserts that the  characteristic function
  of the corresponding triple is periodic with a real period and has  a particularly simple form. Notice that in the case where $\dot A$ admits both a self-adjoint $A$ and a dissipative extension $\widehat A $
that  solve \eqref{mmm}, the characteristic function of the corresponding triple $(\dot A, \widehat A, A)$  is a constant from the open unit disk.

  As a corollary to Theorems    \ref{sacr} and \ref{lem:diss}  one gets an explicit representation
   for the    Liv\v{s}ic function
 of  an arbitrary symmetric operator satisfying  Hypothesis \ref{muhly} (see Corollary \ref{ssss} for a precise statement).

 $\bullet$ In Section \ref{s4intro} we discuss first order symmetric differential operators on a metric graph $\bbY$ assuming that the graph  is in one of the following three cases:   the graph  (i)
is the real axis with a reference point,  (ii)
 is a finite interval,  and finally, (iii)
is  obtained by attaching a finite interval to the real axis.
Notice that the symmetric operators in question
 solve  the semi-Weyl commutation relations  \eqref{1/2Wdot} (see Remark \ref{remdotd}).

$\bullet$  In Section \ref{s5} we give a useful  parameterization  for the family  of all self-adjoint  as well as quasi-selfadjoint extensions of a symmetric differentiation discussed in Section 4. We also  show   that any  self-adjoint realization $D_\Theta$  of the symmetric differentiation operator  $\dot D$ on the graph $\bbY$ in Case (iii) serves as a minimal self-adjoint dilation of an appropriate  quasi-selfadjoint dissipative differentiation operator  on the metric graph in Case (ii) (see Theorem \ref{dilthm}).

 $\bullet$ In Section \ref{s6}  we  examine   the   Liv\v{s}ic functions  associated with the pair $(\dot D, D_\Theta)$ where
 $\dot D$ is the symmetric operator on the metric  graph $\bbY$ (in Cases (i)-(iii)) and $  D_\Theta$ is its  arbitrary reference self-adjoint extension.
In particular, we provide a complete   solution   of Problem (I) $a)$ (see Corollary \ref{muhly1}).

 $\bullet$  Section \ref{s7} is devoted to the comprehensive  study of the corresponding Weyl-Titchmarsh functions.

  $\bullet$ In  Section \ref{secmodop}, given  the graph $\bbY$   in one of the  Cases (i)-(iii),  we introduce maximal  dissipative differentiation (model) operators
  the
     domains  of which are  invariant with respect to the group  of gauge transformations.
Those are  the prototypes of general  dissipative solutions to the commutation relations \eqref{mmm}. Notice that in  Cases (i) and (iii) the corresponding boundary conditions are determined not only by the geometry of the metric  graph $\bbY$ but also by its peculiar  ``conductivity" exponent $k$, $0\le k<1$, which we call the {\it quantum gate coefficient}.
     We also  give  an explicit informal description of  the associated  contraction  semi-groups generated by these operators. A visualization of the corresponding
     dissipative dynamics is shown on  Figures {\bf 1--3}.

  $\bullet$  In Section \ref{s9} we evaluate the characteristic function of the triples   associated with the model dissipative operators
discussed in Section \ref{secmodop}
  and  prove the converse of  important structure Theorem \ref{lem:diss} (see Theorem \ref{thmconv}).

 $\bullet$   In Section \ref{transmission}, following the line of research initiated  by  Liv\v{s}ic
   \cite{LJETP}, who was the first to discover the connection between the Heisenberg scattering matrix and the characteristic function of a dissipative operator (also see  \cite{AA,BL60,Lax}),  we focus on the discussion  of general quasi-selfadjoint dissipative differentiation operators  $\widehat d$ on the metric graph $\bbY$ in Case (ii).
In particular, we relate  the characteristic function of the corresponding triple   with the reciprocal of the transmission coefficient in a scattering problem for a self-adjoint dilation
of $\widehat d$, the magnetic Hamiltonian. Using the fundamental relation \eqref{chchch} in  Appendix C between the characteristic  and  Liv\v{s}ic  functions,  we also get
  a representation for the transmission coefficient via  the Liv\v{s}ic function   and the von  Neumann parameter of the triple (see Corollary \ref{transs} combined with eq. \eqref{slif}). Notice that both the differentiation operator  $\widehat d$ and the magnetic Hamiltonian solve commutation relations \ref{mmm} with respect to a discrete subgroup of the unitary group $U_t$.

      $\bullet$  In Section 11  we discuss  uniqueness results for symmetric operators that commute  with a unitary. In particular,  we show that the unitary group $U_t$ from \eqref{mmm}
 is uniquely determined (up to a character) unless the dissipative solution to the commutation relations has point spectrum filling in the whole upper half-plane. In the latter (exceptional)  case the representation $t \mapsto U_t$ is reducible  and  splits  into the orthogonal sum of  two irreducible  representations uniquely  determined up to a unitary character.

      $\bullet$  In Section 12  we   characterize  maximal dissipative solutions to the  semi-Weyl commutation relations \eqref{mmm}
        that extend a prime symmetric operator satisfying Hypothesis \ref{muhly}  (see Theorem \ref{premain}). As a by-product of these considerations, we provide
        an intrinsic characterization of these symmetric operators thus solving  the J$\clock$rgensen-Muhly problem in the particular case of the deficiency indices $(1,1)$.

    $\bullet$    In Section 13  we  study  solutions to  the restricted Weyl commutation relations \eqref{resweyl}
for     a unitary group $U_t=e^{iBt}$ and  a  semi-group $\widehat V_s=e^{i\widehat A s}$ of contractions. Under the assumption that
  the generator $\widehat A$ of   $\widehat V_s$ is a quasi-selfadjoint extension of a prime symmetric operator with deficiency indices $(1,1)$
   we  characterize  the pairs of generators $(\widehat A, B)$ up to mutual unitary equivalence. In particular, we show that the generators $\widehat A$  and $B$ can be realized
as the dissipative differentiation operator $\widehat \cP$ on the metric graph $\bbY$  in one of the Cases I$^*$, I--III (see Definition \ref{canonical} for the classification) with appropriate boundary conditions at the vertex  of the graph
    and  the multiplication operator $\cQ$ on the graph, respectively (see Theorem \ref{vNvN}).  In contrast to the classic  Stone-von Neumann uniqueness result, the pairs
     $(\widehat \cP, \cQ)$  are not unitarily equivalent for different choices of  the center of the graph. However, the
         uniqueness theorem in the self-adjoint case can be adopted to fit the format of its non-selfadjoint counterpart (see Theorem \ref{vNvNad}). In particular, we get
          the {\it full version} of the Stone-von Neumann Theorem (see Corollary \ref{bell}).

$\bullet$ In Section 14   we consider a family of  unitary
        solutions  to the canonical  Weyl commutation relations  \eqref{weylccr} on the full metric graph
      $\bbX$  obtained by composing two identical copies of the metric graph $(-\infty, \mu)\sqcup(\mu, \infty)$, $\mu\in \bbR $ is a parameter.
       The corresponding
         self-adjoint momentum differentiation operator $\cP$, the generator of the group $V_s$ of shifts,  is determined by the boundary conditions       at the vertex $\mu$ with  the  bond $S$-matrix   given by
            $$
        S=\begin{pmatrix}
        k&-\sqrt{1-k^2}\\ \sqrt{1-k^2}&k
        \end{pmatrix} \quad \text{  for some $0\le k< 1$,}
        $$
        and the generator $\cQ$  of the second group $U_t$ is just the position operator on the graph $\bbX$.
         As long as the generator $\widehat A$ of   $\widehat V_s$ is a quasi-selfadjoint extension of a prime symmetric operator with deficiency indices $(1,1)$,    the   structure  Theorem \ref{cbcb} shows that up to mutual unitary equivalence any  solution to the restricted Weyl  commutation relations \eqref{resweyl} can be obtained by an appropriate  compression of the unitary groups $U_t$ and $V_s$  onto  some subspace $\cK$ of  the Hilbert space $L^2(\bbX)$.
    In fact, the  subspace $\cK$ coincides with  the Hilbert space $L^2(\bbY)$ where  $\bbY\subset \bbX$ is a subgraph of $\bbX$ in one of the three canonical cases discussed above. Notice that the subspace $\cK$ reduces the multiplication group $U_t=e^{it\cQ}$ and is   coinvariant for the group  of shifts $V_s=e^{is\cP}$.
         For a pictorial description of the corresponding  unitary dynamics on the full metric graph $\bbX$ we refer to Figure {\bf  4}.

    In the  second part of the book we discuss applications   to  decay phenomena  in quantum systems theory.

  $\bullet$  In Section 15 we recall the concept of continuous monitoring of a quantum system and describe  possible   {\it ex-post}   monitoring scenarios:
 the Quantum Zeno  and  Anti-Zeno effects as well as  the Exponential Decay phenomenon  in frequent measurements theory.

$\bullet$ In Section 16 we discuss the quantum Zeno versus  Anti-Zeno effect alternative for massive particles (see Theorems \ref{1/2thm}  and \ref{3/2thm}).

 $\bullet$  In Section 17 we examine  the exponential decay phenomenon for ``zero-mass"  systems where  the exponential decay typically alternates with the  quantum Zeno scenario (see Theorem \ref{decr}).

 $\bullet$  In Section 18 we recall main concepts
  for the  ``exclusive'' versus ``interference" alternatives theory going  back
to  the celebrated two slit experiment  in quantum mechanics.

 $\bullet$ In Section 19 we consider a model of a quantum system on a ring that describes a motion of a (relativistic) massless particle and set the stage for monitoring such systems within the continuous observation paradigm.

 $\bullet$ In Section 20 we show that  in some cases continuous monitoring of  the model quantum system
triggers emission of particle, see Theorem \ref{quanmon}. For instance, this phenomenon occurs   if the initial state has a unique jump-point discontinuity on the ring.
The magnitude of the state  decay in this case can be theoretically predicted as if the quantum particle were a wave.  That  is, the particle interferes with itself  at the point of observations (where the wave functions has a jump) and  the results of this interference
can informally be explained  in the framework of  the ``interference" alternatives theory.
We  also
  illustrate some features   in  the state decay  under continuous monitoring of a massless particle moving along  the   Aharonov-Bohm ring (see Corollary \ref{bohmcor} and eq. \eqref{intell}, cf. \cite{KosSch,Kur}).

 $\bullet$ In Section 21 we  discuss results of continuous monitoring of  open quantum systems on a ring under the hypothesis that the  time evolution of the system is governed by a semi-group of contractions. If the initial state of the system  satisfies the radiation condition \eqref{instr}, that is, it belongs to the domain of the evolution generator,
then the decay rate  can easily be computed using purely classical considerations, as if the quantum particle were  a classical particle (see Theorem \ref{quanmondis}, Corollary \ref{smes0} and the related discussion).

 $\bullet$ In Section 22 we explicitly describe the self-adjoint dilation of the dissipative generator  of the open system on the ring.

   $\bullet$  In Section 23 we consider more general states of the open quantum system on the ring. The main result (see Theorem \ref{general})
states that the emission rate splits into the sum of two terms.
 One of the terms  is due to the interference of the particle with itself at the point of observations. The second source of emission  is caused by inelastic collision of the quantum particle
  with the point ``defect'' (membrane) at the observation point where
  the corpuscular nature of the quantum particle is fully manifested.

 $\bullet$  In Section 24 we introduce the concept of convergence of dissipative operators in distribution and  prove several limit theorems with respect to multiple  coupling    of an operator with itself. We also  show that  the generator of the nilpotent semi-group,  one of the building blocks  of the restricted  commutation relations theory,
 can be  obtained as the limit of  appropriately normalized $n$-fold couplings of almost  arbitrary  dissipative operator (see Theorem \ref{limthm}).

 The book has eight  appendices where,  for the reader's convenience, one can find
relevant  information  scattered in the literature. Some of the results  presented there are  new.

In Appendix A we  recall the notion of the Weyl-Titchmarsh functions as well as a characteristic function for rank-one dissipative perturbations of a self-adjoint operator. We also provide  the corresponding uniqueness result (see Theorem \ref{lifunique}).

In Appendix B  we collect necessary background material from the theory of symmetric  operators,

In Appendix C  we present  a functional model for
triples of operators following our work \cite{MT-S} (also see  
\cite[ Ch. 10, Sec. 10.4, p. 357]{ABT} and \cite{Kuzhel,MT,Ryzh2007,Ryzh2008,T88}).

Appendix D contains a discussion aimed at  the spectral analysis of  model dissipative  triple of operators.

 In Appendix E we study
the dependence of the Weyl-Titchmarsh, Liv\v{s}ic, and characteristic function under  affine transformation of the operators.

In Appendix F  we discuss the invariance principle for affine transformations of a dissipative operator.

In Appendix G  we recall the concept of an operator coupling of two  dissipative  operators  and
discuss
the corresponding multiplication theorem.

In Appendix H one can find a brief discussion of stable laws in probability theory and the formulation of the general Gnedenko-Kolmogorov limit theorem.

Some words about notation:

The domain of a linear operator $K$ is denoted by $\Dom(K)$, its range by $\Ran(K)$, and its kernel by $\Ker(K)$. The restriction
of $K$ to a given subset $\cC$ of $\Dom(K)$ is written as $K|_{\cC}$.
We write $\rho(K)$ for the resolvent set of a closed operator $K$ on a Hilbert space, and $K^*$ stands for the adjoint operator of
$K$ if $K$ is densely defined.

\textbf{Acknowledgment.}  We are very grateful to M. Ashbaugh and S. Belyi for the invaluable help in preparation of this paper for publication.
K.A.M. is indebted to A.~B.~Plachenov for stimulating discussions.

\section{Preliminaries and basic definitions}\label{s2}
Let  $\dot A$ be  a densely defined symmetric operator with deficiency indices $(1,1)$ and   $ A$ its self-adjoint (reference) extension.

Following \cite{D,GT,L46,MT-S}  recall the concept of the Weyl-Titchmarsh and   Liv\v{s}ic functions associated with the pair $(\dot A, A)$.

Suppose that (normalized) deficiency elements $g_\pm\in \Ker( (\dot A)^*\mp iI)$,
 $\|g_\pm\|=1$, are chosen in such a way that
 \begin{equation}\label{rss}
g_+-g_-\in \dom (A).
\end{equation}

Consider   the {\it Weyl-Titchmarsh function }
\begin{equation}\label{WTF}
M(z)=
((Az+I)(A-zI)^{-1}g_+,g_+), \quad z\in \bbC_+,
\end{equation}
and     the  {\it  Liv\v{s}ic function}
\begin{equation}\label{charsum}
s(z)=\frac{z-i}{z+i}\cdot \frac{(g_z, g_-)}{(g_z, g_+)}, \quad
z\in \bbC_+,
\end{equation}
 associated with the pair $(\dot A, A)$.
Here $g_z\in \Ker( (\dot A)^*-z I)$, $z\in \bbC_+$.

 Clearly,   the Weyl-Titchmarsh function $M(z)$  does not depend on the concrete choice of the normalized deficiency element $g_+$.
 We also remark  that if
 $$g_\pm'\in \Ker(( \dot A)^*\mp iI), \quad
 \|g_\pm'\|=1,$$ is any deficiency elements  such that
 $$
g_+'-g_-'\in \dom (A),
$$
then necessarily $g_\pm'=\Theta g_\pm$ for some unimodular factor $\Theta$. Therefore, from \eqref{charsum} it follows that
 the  Liv\v{s}ic function does not depend on the choice of the deficiency elements  $g_\pm$ (whenever
 \eqref{rss} holds). However it may and in most of the cases does depend on the reference operator $A$.
As far as the  Weyl-Titchmarsh functions are  concerned we also refer to the related concept of a $Q$-function introduced in \cite{KL}
and discussed in \cite{KO}.

 Recall the  important  relationship between    the Weyl-Titchmarsh and  Liv\v{s}ic functions
  \cite {MT-S}
 \begin{equation}\label{blog}
s(z)=\frac{M(z)-i}{M(z)+i},\quad z\in \bbC_+.
\end{equation}

Next,  suppose  that $\widehat A \ne (\widehat A )^*$ is a  maximal dissipative extension of $\dot A$,
$$
\Im(\widehat A f,f)\ge 0, \quad f\in \dom(\widehat A ).
$$
Since $\dot A$ is symmetric, its dissipative extension $\widehat A$ is automatically quasi-selfadjoint  \cite{Phil,St68} (also see \cite{ABT,MT-S}),
that  is,
$$
\dot A \subset \widehat  A \subset (\dot A)^*,
$$
and hence,
\begin{equation}\label{parpar}
g_+-\varkappa g_-\in \dom
 (\widehat A  )\quad \text{for some }
|\varkappa|<1.
\end{equation}

 By definition, we call $\varkappa$ {\it the von Neumann parameter of the triple}  $(\dot A,  \widehat A,A)$.

\begin{remark}\label{vonbas}
Likewise, one can think of  the von Neumann parameter $\varkappa$ being determined by the dissipative operator $  \widehat A$ and the pair 
$\{g_+,g_-\}$ of normalized deficiency elements $g_\pm\in \Ker( (\dot A)^*\mp iI)$.
Indeed, given $  \widehat A$ and  $\{g_+,g_-\}$ there are a unique $\varkappa$ satisfying \eqref{parpar}
and  a unique self-adjoint reference extension $A$ of $\dot A$ such that \eqref{rss} holds.
Therefore,   the triple  $( \dot A, \widehat A , A)$  is uniquely  determined by the knowledge of  $  \widehat A$ and  $\{g_+,g_-\}$.
 Cleary, $\varkappa$ coincides with 
the von Neumann parameter of the triple $\varkappa_{( \dot A, \widehat A , A)}$ which proves the claim.

\end{remark}

Given  \eqref{rss} and \eqref{parpar},  consider
{\it the characteristic function} $S(z)=S_{(\dot A, \widehat A, A)}(z)$ associated with   the triple  $(\dot A,  \widehat A,A)$  (see \cite{MT-S}, cf. \cite{Lv1})
\begin{equation}\label{ch12}
S(z)=\frac{s(z)-\varkappa} {\overline{ \varkappa }\,s(z)-1}, \quad z\in \bbC_+,
\end{equation}
where  $s(z)=s_{(\dot A, A)}(z)$ is the Liv\v{s}ic function associated with the pair
$(\dot A, A)$.

By  \eqref{blog} and  \eqref{ch12}, one also gets   the  representation for the characteristic function via the Weyl-Titchmarch function as
\begin{equation}\label{charweyl}
S (z )=-\frac{1-\varkappa}{1-\overline{\varkappa}}\cdot
\frac{M\left (z\right )-i\frac{1+\varkappa}{1-\varkappa}}{M\left (z\right )+i\frac{1+\overline{\varkappa}}
{1-\overline{\varkappa}}}.
\end{equation}

We remark that given  a triple  $( \dot A, \widehat A , A)$,
one can always find a
basis $g_\pm$ in the subspace
 $\Ker (\dot A^*-iI)\dot +\Ker (\dot A^*+iI)$ such that
$$\|g_\pm\|=1, \quad g_\pm\in \Ker((\dot A)^*\mp iI),
$$
$$
g_+-g_-\in \dom (A)
$$
and
$$
g_+-\varkappa g_-\in \dom (\widehat A ) \quad \text{for some }
|\varkappa|<1.
$$

In this case, the von Neumann parameter $\varkappa$ can explicitly be evaluated in terms of the characteristic function of the triple  $( \dot A, \widehat A , A)$ as
\begin{equation}\label{sac}
\varkappa =S_{( \dot A, \widehat A , A)}(i).
\end{equation}
Hence, as it follows from \eqref{ch12},  the Liv\v{s}ic function associated with the pair
$(\dot A, A)$  admits the representation
\begin{equation}\label{obr}
s_{(\dot A, A)}(z)=\frac{S(z)-\varkappa} {\overline{ \varkappa }\,S(z)-1}, \quad z\in \bbC_+.
\end{equation}

In particular,
$$
S_{( \dot A, \widehat A , A)}(z)=-s_{(\dot A, A)}(z), \quad z\in \bbC_+,
$$
whenever the von Neumann parameter $\varkappa$ of  the triple $( \dot A, \widehat A , A)$ vanishes.

The following proposition provides a curious characterization for  the projection of the  deficiency subspace  $\Ker ((\dot A)^*-zI)$  onto
the subspace $\Ker ((\dot A)^*-\overline{z}I)$
along  $\Dom(\widehat A)$ in terms of the characteristic function of the triple.

\begin{proposition}\label{NeuLif}{\it  Let  $( \dot A, \widehat A , A)$ be a triple. Suppose that the deficiency elements $g_z\in \Ker((\dot A)^*-zI)$ and  $g_{\overline{z}}\in \Ker((\dot A)^*-\overline{z}I)$,
$\|g_z\|=\|g_{\overline{z}}\|\ne 0$ are chosen in such a way that
\begin{equation}\label{gammalog}
g_z-\gamma(z)g_{\overline{z}}\in \Dom (\widehat A), \quad z\in \bbC_+.
\end{equation}
Then
$$
|\gamma(z)|=|S(z)|, \quad z\in \bbC_+,
$$
where $S(z)$ is the characteristic function of the triple $(\dot A, \widehat A, A)$.}
\end{proposition}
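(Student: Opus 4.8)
The plan is to linearize everything on the two-dimensional quotient $\Dom((\dot A)^*)/\Dom(\dot A)$, using the Lagrange (boundary) bracket
$$[f,g]:=((\dot A)^*f,g)-(f,(\dot A)^*g),\qquad f,g\in\Dom((\dot A)^*).$$
Since $\dot A$ is symmetric, $[f,g]=0$ as soon as $f$ or $g$ lies in $\Dom(\dot A)$, so $[\cdot,\cdot]$ descends to a nondegenerate skew-Hermitian form on the quotient. For deficiency vectors one computes directly (with $(\cdot,\cdot)$ antilinear in its second entry) that $[g_a,g_b]=(a-\overline b)(g_a,g_b)$; in particular the reference pair is orthogonal, $[g_+,g_-]=[g_-,g_+]=0$ with $[g_+,g_+]=2i$, $[g_-,g_-]=-2i$, and likewise $[g_z,g_{\overline z}]=[g_{\overline z},g_z]=0$ with $[g_z,g_z]=2i\,\Im(z)\,\|g_z\|^2$ and $[g_{\overline z},g_{\overline z}]=-2i\,\Im(z)\,\|g_{\overline z}\|^2$. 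Thus $\{g_+,g_-\}$ and $\{g_z,g_{\overline z}\}$ are two orthogonal bases of the quotient, and the normalization $\|g_z\|=\|g_{\overline z}\|=:N$ makes the self-brackets of $g_z$ and $g_{\overline z}$ equal and opposite.

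The one-dimensional subspace $\Dom(\widehat A)/\Dom(\dot A)$ is spanned simultaneously by $g_+-\varkappa g_-$ and by $g_z-\gamma(z)g_{\overline z}$, so there is a scalar $\lambda$ with $g_z-\gamma g_{\overline z}\equiv\lambda(g_+-\varkappa g_-)\pmod{\Dom(\dot A)}$. Evaluating the self-bracket of this element in two ways --- directly from the orthogonality of $g_z,g_{\overline z}$, and then through the right-hand representative --- gives
$$2i\,\Im(z)\,N^{2}\,(1-|\gamma|^{2})=[\,g_z-\gamma g_{\overline z},\,g_z-\gamma g_{\overline z}\,]=|\lambda|^{2}\,[\,g_+-\varkappa g_-,\,g_+-\varkappa g_-\,]=|\lambda|^{2}\cdot2i\,(1-|\varkappa|^{2}),$$
hence $1-|\gamma|^{2}=\dfrac{|\lambda|^{2}(1-|\varkappa|^{2})}{\Im(z)\,N^{2}}$.

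It remains to pin down $|\lambda|$ and to feed in the Liv\v{s}ic function. Pairing the identity $g_z-\gamma g_{\overline z}\equiv\lambda(g_+-\varkappa g_-)$ against $g_z$ in the second slot kills the $g_{\overline z}$-term because $[g_{\overline z},g_z]=0$, leaving $[g_z,g_z]=\lambda\,[g_+-\varkappa g_-,g_z]$. Expanding the right bracket via $[g_a,g_b]=(a-\overline b)(g_a,g_b)$ and factoring out $\overline{(g_z,g_+)}$, one recognizes $\overline{s(z)}=\frac{\overline z+i}{\overline z-i}\,\overline{(g_z,g_-)}/\overline{(g_z,g_+)}$ and obtains $[g_+-\varkappa g_-,g_z]=-(\overline z-i)\overline{(g_z,g_+)}\,(1-\varkappa\overline{s(z)})$, whence $|\lambda|=\dfrac{2\Im(z)N^{2}}{|z+i|\,|(g_z,g_+)|\,|\,\overline\varkappa s(z)-1\,|}$ (using $|\overline z-i|=|z+i|$ and $|1-\varkappa\overline s|=|\overline\varkappa s-1|$). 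Substituting into the previous display leaves the factor $4\Im(z)N^{2}/(|z+i|^{2}|(g_z,g_+)|^{2})$, which equals $1-|s(z)|^{2}$: expanding $g_z\equiv a\,g_+ + b\,g_-$ modulo $\Dom(\dot A)$ in the orthogonal reference basis and equating the two expressions for $[g_z,g_z]$ yields the Pythagorean identity $|z+i|^{2}|(g_z,g_+)|^{2}-|z-i|^{2}|(g_z,g_-)|^{2}=4\Im(z)N^{2}$, which is $1-|s(z)|^2$ after dividing by $|z+i|^2|(g_z,g_+)|^2$. Combining, $1-|\gamma|^{2}=\frac{(1-|s|^{2})(1-|\varkappa|^{2})}{|\,\overline\varkappa s-1\,|^{2}}$, and the elementary algebraic identity $1-|S|^{2}=\frac{(1-|s|^{2})(1-|\varkappa|^{2})}{|\,\overline\varkappa s-1\,|^{2}}$ (immediate from $S=\frac{s-\varkappa}{\overline\varkappa s-1}$) gives $|\gamma(z)|^{2}=|S(z)|^{2}$, i.e.\ $|\gamma(z)|=|S(z)|$.

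The only genuine subtlety is the bookkeeping for the indefinite, skew-Hermitian bracket and the several complex conjugations. The structural reason the argument is short is the mutual orthogonality $[g_z,g_{\overline z}]=0$, which decouples the two deficiency directions, together with the hypothesis $\|g_z\|=\|g_{\overline z}\|$, which equalizes their self-brackets; it is precisely this normalization that makes $|\gamma|$ --- rather than $\gamma$ itself --- the invariant quantity, matching the fact that the statement asserts only equality of moduli (the phase of $\gamma$ absorbs the residual phase freedom in the choice of $g_z,g_{\overline z}$).
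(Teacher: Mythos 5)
Your proof is correct, and it takes a genuinely different route from the paper's. The paper proves Proposition \ref{NeuLif} by first invoking the uniqueness/model Theorem \ref{unitar} of Appendix C to replace $(\dot A,\widehat A,A)$ by the functional model $(\dot\cB,\widehat\cB,\cB)$ in $L^2(\bbR;d\mu)$, where the deficiency elements become Cauchy kernels $\frac{1}{\lambda-z}$; it then decomposes $\frac{1}{\lambda-z}-\alpha\frac{1}{\lambda-\overline z}$ modulo $\Dom(\dot\cB)$ into combinations of $\frac{\lambda}{\lambda^2+1}$ and $\frac{1}{\lambda^2+1}$, uses the membership condition \eqref{nacha3} for $\Dom(\widehat\cB)$, and arrives at the explicit formula $\gamma(z)=F(z)/F(\overline z)$ with $F(z)=1+\varkappa+iM(z)(1-\varkappa)$, after which \eqref{charweyl} gives $S(z)=F(z)/\overline{F(\overline z)}$ and hence $|\gamma(z)|=|S(z)|$. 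You instead work intrinsically with the Lagrange boundary form $[f,g]=((\dot A)^*f,g)-(f,(\dot A)^*g)$ on the two-dimensional von Neumann quotient: the orthogonality $[g_z,g_{\overline z}]=[g_+,g_-]=0$ decouples the deficiency directions, evaluating the self-bracket of the class of $g_z-\gamma g_{\overline z}$ in both bases gives $1-|\gamma|^2=\frac{|\lambda|^2(1-|\varkappa|^2)}{\Im(z)N^2}$, the cross-pairing against $g_z$ pins down $|\lambda|$ in terms of $|\overline\varkappa s(z)-1|$, and the Pythagorean identity $|z+i|^2|(g_z,g_+)|^2-|z-i|^2|(g_z,g_-)|^2=4\Im(z)N^2$ turns the leftover factor into $1-|s(z)|^2$, so that the elementary algebra $1-|S|^2=\frac{(1-|s|^2)(1-|\varkappa|^2)}{|\overline\varkappa s-1|^2}$ closes the argument. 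I have checked each bracket computation (including the conjugations and the identification $|i-\overline z|=|z+i|$, $|1-\varkappa\overline s|=|\overline\varkappa s-1|$) and they are all correct; note also that the relation $2i\,\Im(z)N^2=\lambda\,[g_+-\varkappa g_-,g_z]$ automatically forces $\lambda\neq0$, $(g_z,g_+)\neq0$ and $\overline\varkappa s(z)-1\neq0$, so your divisions are legitimate. What each approach buys: the paper's computation yields the phase of $\gamma$ as well (relative to the model normalization) and stays inside the Weyl--Titchmarsh/model framework used throughout the book, but it leans on the heavy machinery of Appendix C; yours is self-contained and more elementary, requiring only the von Neumann formulas, and it makes structurally transparent why only the modulus $|\gamma|$ can be an invariant --- the hypothesis $\|g_z\|=\|g_{\overline z}\|$ equalizes the self-brackets while leaving a residual phase freedom. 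Moreover, your intermediate identity $1-|\gamma(z)|^2=\frac{(1-|s(z)|^2)(1-|\varkappa|^2)}{|\overline\varkappa s(z)-1|^2}$ is a quantitative defect relation of independent interest.
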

\begin{proof} By Theorem \ref{unitar} in Appendix C,  the triple  $( \dot A, \widehat A , A)$ is mutually unitarily equivalent to  the model triple $(\dot \cB, \widehat \cB,\cB)$ in the Hilbert space $L^2(\bbR; d\mu)$  given by \eqref{nacha1}, \eqref{nacha2} and \eqref{nacha3} in Appendix C. Here $\mu(d \lambda)$ is the measure from the representations
\begin{equation}\label{defi}
M(z) =\int_\bbR\left (\frac{1}{\lambda-z}-\frac{\lambda}{\lambda^2+1} \right) d\mu(\lambda), \quad z\in \bbC_+,
\end{equation}
for the Weyl-Titchmarsh function associated with the pair $(\dot A, A)$.

In view of Remark \ref{snoska} in Appendix C
it suffices to show that
if
\begin{equation}\label{requir}
\frac{1}{\lambda-z}-\alpha\frac{1}{\lambda-\overline{z}}\in \Dom (\widehat \cB),
\end{equation}
then
\begin{equation}\label{zaver}
\alpha=|S(z)|, \quad z\in \bbC_+.
\end{equation}

We claim that
$$
\frac{1}{\lambda-z}=\frac{\lambda}{\lambda^2+1}+\frac{M(z)}{\lambda^2+1}+f(\lambda),
$$
where $f\in \Dom (\dot \cB)$.

Indeed,
\begin{align*}
\int_\bbR f(\lambda)d\mu(\lambda)&=\int_\bbR\left (\frac{1}{\lambda-z}-\frac{\lambda}{\lambda^2+1}\right )d\mu(\lambda)-M(z)\int_\bbR\frac{d\mu(\lambda)}{\lambda^2+1}
\\&=
M(z)-M(z)=0, \quad z\in \bbC_+.
\end{align*}
Here we have used \eqref{defi} and the normalization condition \eqref{normmu} in Appendix C. From the characteriszation
 of the domain of the symmetric operator $\dot \cB$ given by \eqref{nacha2} in Appendix C, it follows that  $f\in \dom (\dot \cB)$, proving the claim.

Next, we have
\begin{equation}\label{ininin}
\frac{1}{\lambda-z}-\alpha\frac{1}{\lambda-\overline{z}}=\frac{(1-\alpha)\lambda}{\lambda^2+1}+\frac{M(z)-\alpha M(\overline{z})}{\lambda^2+1}+h(\lambda),
\end{equation}
where $h\in \Dom (\dot \cB)$.

Recall that by \eqref{nacha3} in Appendix C,
$$
\frac{1}{\lambda-i}-\varkappa\frac{1}{\lambda+i}\in \Dom (\widehat \cB),
$$
 where $\varkappa $ is the von Neumann parameter of the triple   $(\dot \cB, \widehat \cB, \cB)$. Since the triples $(\dot A, \widehat A, A)$
 and $(\dot \cB, \widehat \cB, \cB)$ are mutually unitarily equivalent,  $\varkappa $ coincides with  von Neumann parameter of the triple
 $(\dot A, \widehat A, A)$ as well.

Since
\begin{align*}
\frac{(1-\alpha)\lambda}{\lambda^2+1}+\frac{M(z)-\alpha M(\overline{z})}{\lambda^2+1}&=\frac12 \left[ (1-\alpha)-i(M(z)-\alpha M(\overline{z})\right ]\frac1{\lambda-i}
\\&+\frac12 \left[ (1-\alpha)-i(M(z)-\alpha M(\overline{z})\right ]\frac1{\lambda+i},
\end{align*}
in view of \eqref{ininin}, the requirement \eqref{requir} connecting the characteristic functions $S(z)$ of the triple and the Weyl-Tichmarsch function yields the relation
$$
-\varkappa =\frac{(1-\alpha)+i(M(z)-\alpha M(\overline{z})}{(1-\alpha)-i(M(z)-\alpha M(\overline{z})}, \quad z\in \bbC_+.
$$
Hence,
$$
\alpha=\frac{F(z)}{F(\overline{z})},
$$
where
$$F(z)=1+\varkappa +iM(z)(1-\varkappa).$$

On the other hand, from the relation  \eqref{charweyl} it follows that
$$
S(z)=\frac{F(z)}{\overline{F(\overline{z})}}
$$
and therefore
$$|\alpha|=|S(z)|, \quad z\in \bbC_+.
$$
which proves \eqref{zaver} and completes the proof of the proposition.
\end{proof}

\begin{remark}Notice that given the deficiency elements $\|g_z\|=\|g_{\overline{z}}\|\ne 0$, $z\in \bbC_+$, one can always find $\gamma(z)$ such that
\eqref{gammalog} holds. Indeed, from the definition of the quasi-selfadjoint extension $\widehat A$ it follows that one can find $\alpha$ and $\beta$ such that
$$
0\ne \alpha g_z+\beta g_{\overline{z}}\in \dom (\widehat A).
$$
Therefore, it suffices to show that $\alpha\ne 0$. Otherwise,  $ g_{\overline{z}}\in \dom (\widehat A)$ and hence
$$
\Im (\widehat A g_{\overline{z}},g_{\overline{z}})=\Im ((\dot A )^*g_{\overline{z}},g_{\overline{z}})=
\Im (\overline{z}g_{\overline{z}},g_{\overline{z}})
=
\Im(\overline{z})\|g_{\overline{z}}\|^2<0,
$$
which conradicts the requirement that $\widehat A$ is a dissipative operator.
\end{remark}

\begin{definition}\label{logpotdef}
We call  the  harmonic function
\begin{equation}\label{logpot}
\Gamma_{\widehat A}(z)=\log |\gamma(z)|=\log |S(z)|,  \quad  z\in \bbC_+\setminus\{ z_k\, |\, S(z_k)=0\},
\end{equation}
the {\it von Neumann $($logarithmic$)$ potential } of the dissipative operator $\widehat A$. Here $\gamma(z) $ is the function referred to in Proposition \ref{NeuLif}.
\end{definition}

\begin{remark}\label{mnogopros}
We remark  that under the assumption  that the symmetric operator $\dot A $ is prime,
both
 the Weyl-Titchmarsh function $M(z)$ and  the Liv\v{s}ic function $s(z)$ are
 complete unitary invariants  of the  pair $(\dot A, A)$.
Moreover, in this case,
the characteristic function $S(z)$   is a complete unitary invariant of the triple
$(\dot A, \widehat A, A )$  (see  \cite{MT-S}, also see Theorem \ref{unitar} in Appendix C).
Notice that if the  symmetric operator  $\dot A$ from a triple
$(\dot A, \widehat A, A )$
 in the Hilbert space $\cH$ is not prime and  $\dot A'$ is  its prime part  in a subspace $\cH'\subset \cH$, then the triples
$(\dot A, \widehat A , A)$ and $(\dot A|_{\cH'}, \widehat A|_{\cH'} ,  A|_{\cH'})$ have the same characteristic function (see Theorem \ref{prostota0} in Appendix B for details).

As it follows from    Lemma \ref{ind0}  and Proposition \ref{concept} in Appendix E, the  absolute value of the   Liv\v{s}ic function $|s(z)|$
is a complete unitary invarinat of the symmetric operator $\dot A$ while
$|S(z)|$ is a (complete)  unitary invariant
of the maximal dissipative  operator $\widehat A$.
In particular, in view of \eqref{sac},
the von Neumann parameter $\varkappa_{(\dot A, \widehat A,A )}$ is a unitary invariant of the  triple
$(\dot A, \widehat A,A )$, while  its absolute value
 \begin{equation}\label{defmoduli}\widehat \kappa (\widehat A)=|\varkappa_{(\dot A, \widehat A,A )}|
\end{equation}
is a well defined unitary invariant of the dissipative operator $\widehat A$.

We  also refer  to \cite{L46} where it was shown that the knowledge
or $s(z)$ and $S(z)$ (up to a unimodular constant factor), equivalently, $|s(z)|$ and $|S(z)|$ characterizes  the symmetric operator $\dot A$ and is  maximal dissipative extension $\widehat A$, respectively,
up to  unitary equivalence (whenever  $\dot A$ is a prime operator).

We also notice that the knowledge of  the  von Neumann logarithmic  potential $\Gamma_{\widehat A}(z)$  determines the dissipative operator $\widehat A$.
up to unitary equivalence.
\end{remark}


\section{The commutation relations   and character-automorphic functions }\label{s3}

Throughout this section we assume the following hypothesis.

\begin{hypothesis}\label{muhly} Assume that $ \dot A$ is  a symmetric operator  with deficiency indices $(1,1)$  and  $\bbR\ni t \mapsto U_t$
a strongly continuous unitary group.  Suppose, in addition,   that  $\dom (\dot A)$ is $U_t$-invariant and
the commutation relations
\begin{equation}\label{ccrmuh}
U_t^*\dot AU_t=\dot A+tI\quad \text{on }\quad \Dom (\dot A), \quad t\in \bbR,
\end{equation}
hold.

\end{hypothesis}

It is well known that under Hypothesis \ref{muhly} the symmetric operator $\dot A$  has  either a) a self-adjoint $A$ or/and b)  a  non-selfadjoint maximal dissipative  extension
$\widehat A$
satisfying the same commutation relations (cf. \cite[Theorem 15]{J80}, also see \cite[Theorem 5.4]{TMF}).

It is worth mentioning that  the existence of a quasi-selfadjoint extension of $\dot A$ that solves \eqref{ccrmuh}
required in Hypothesis \ref{muhly} is  a consequence  of   the  Lefschetz fixed point theorem for flows on manifolds:

\begin{proposition}[{see, e.g., \cite[Theorem 6.28]{Vick}}]\label{Lef}
{\it If $\cM$ is a closed oriented manifold such that the Euler characteristic $\chi (\cM)$ of $\cM$ is not zero, then any flow on $\cM$ has a fixed point.
}\end{proposition}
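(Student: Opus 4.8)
The plan is to combine the Lefschetz fixed point theorem with a compactness argument. Write $\phi_t\colon \cM\to\cM$ for the time-$t$ map of the flow, so that $\phi_0=\id$, $\phi_{s+t}=\phi_s\circ\phi_t$, and $(t,x)\mapsto\phi_t(x)$ is continuous. A fixed point of the \emph{flow} is a point $p$ with $\phi_t(p)=p$ for \emph{every} $t\in\bbR$, and this simultaneous statement is what must be produced.

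First I would observe that each individual map $\phi_t$ has a fixed point. The map $H(s,x)=\phi_{st}(x)$, $s\in[0,1]$, is a continuous homotopy from $\phi_0=\id$ to $\phi_t$; hence $\phi_t$ is homotopic to the identity and induces the identity on the rational homology $H_\bullet(\cM;\Q)$. Consequently its Lefschetz number equals
\[
L(\phi_t)=\sum_i(-1)^i\,\tr\bigl((\phi_t)_*\colon H_i(\cM;\Q)\to H_i(\cM;\Q)\bigr)=\sum_i(-1)^i\dim H_i(\cM;\Q)=\chi(\cM).
\]
Since $\cM$ is closed, the Lefschetz fixed point theorem applies, and because $\chi(\cM)\neq 0$ the fixed-point set $\mathrm{Fix}(\phi_t)=\{x:\phi_t(x)=x\}$ is nonempty for every $t$. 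Only compactness without boundary is used here; the orientability hypothesis is not essential for this step.

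The main obstacle is that these fixed points depend on $t$, so one fixed point per $\phi_t$ does not by itself yield a single point fixed by the whole flow. To bridge this gap I would exploit the semigroup law. If $\phi_a(p)=p$ then $\phi_{ka}(p)=p$ for every integer $k$; in particular $\mathrm{Fix}(\phi_{1/(n+1)!})\subseteq\mathrm{Fix}(\phi_{1/n!})$, since $1/n!=(n+1)\cdot 1/(n+1)!$. Setting $F_n=\mathrm{Fix}(\phi_{1/n!})$, each $F_n$ is a nonempty closed (hence compact) subset of $\cM$ and the sequence is nested, $F_1\supseteq F_2\supseteq\cdots$. By Cantor's intersection theorem for compact sets, $\bigcap_n F_n\neq\varnothing$; choose $p$ in this intersection.

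It then remains to verify that such a $p$ is genuinely a rest point of the flow. By construction $\phi_{1/n!}(p)=p$ for all $n$, hence $\phi_{m/n!}(p)=p$ for all $m\in\Z$ and $n\in\N$. The set $\{\,m/n!:m\in\Z,\ n\in\N\,\}$ contains every rational, so it is dense in $\bbR$, and $t\mapsto\phi_t(p)$ is continuous; therefore $\phi_t(p)=p$ for every $t\in\bbR$, which is the required fixed point of the flow. I expect the homotopy-invariance computation of $L(\phi_t)$ to be routine, with the genuine content lying in the nested-intersection step that upgrades the family of pointwise fixed points to a single simultaneous one. (For a smooth flow one could instead take the generating vector field $X$ and invoke the Poincaré--Hopf theorem: a nowhere-vanishing $X$ would force $\chi(\cM)=0$, so $\chi(\cM)\neq 0$ yields a zero of $X$, i.e.\ a rest point; the Lefschetz route above has the advantage of covering merely continuous flows.)
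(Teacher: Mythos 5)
Your proof is correct; the paper itself offers no proof of this proposition, citing Vick's Theorem 6.28, and your argument — homotopy invariance giving $L(\phi_t)=\chi(\cM)\neq 0$, hence a fixed point of each time-$t$ map, followed by the nested-compact-sets and density argument to upgrade to a rest point of the whole flow — is precisely the standard proof in that reference, with $1/n!$ in place of the usual dyadic sequence $1/2^n$. Your side remarks are also accurate: orientability is inessential for the Lefschetz step, and for smooth flows the Poincar\'e--Hopf theorem gives a shorter route.
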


Indeed,  if $\dot A$ satisfies Hypothesis \ref{muhly} and $\widehat A$ is a maximal dissipative extension of $\dot A$, then $\widehat A_t=U_t\widehat AU_t^*$  also extends $\dot A$.  Since the set of maximal dissipative extensions of $\dot A$ is in one-to-one correspondence  with the closed unit disk $\overline{\bbD}$, and
 \begin{equation}\label{farfor}
\widehat A\mapsto \widehat A_t
\end{equation}
determines a flow $\varphi(t, \cdot)$ on $\overline{\bbD}$ (the continuity of the flow can easily be established (see,  e.g., \cite{TMF})).
 By the  Lefschetz theorem the flow   $\varphi(t, \cdot)$  has a fixed point either
on the boundary of the unit disk or in its interior.

First, consider the case where  the flow   $\varphi(t, \cdot)$  has a fixed point
on the boundary of the unit disk
and therefore the commutation relations \eqref{ccrmuh} have a self-adjoint solution.

\begin{theorem}\label{sacr} Assume Hypothesis \ref{muhly}.
Suppose that  $ A$ is a self-adjoint   extension of the symmetric operator   $\dot A$  such that
 \begin{equation}\label{nucom}
U_t^* AU_t=A+tI\quad \text{on }\quad \Dom ( A).
\end{equation}

Then the Weyl-Titchmarsh function $M(z)$ of the pair $(\dot A, A)$  has the form
$$
M(z)=i, \quad z\in \bbC_+.
$$

Equivalently, the Liv\v{s}ic  function  $s(z)$   associated with the pair   $(\dot A, A)$  vanishes identically in the upper half-plane,$$
s(z)=0, \quad z\in \bbC_+.
$$

\end{theorem}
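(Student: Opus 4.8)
The plan is to show directly that the measure representing the Weyl–Titchmarsh function is a constant multiple of Lebesgue measure, whence $M\equiv i$ (and so $s\equiv 0$). First I would reduce to the case where $\dot A$ is prime. The prime subspace $\cH'$ is the closed linear span of the deficiency elements $g_z$, $z\in\bbC_+\cup\bbC_-$, and differentiating the commutation relation shows $(\dot A)^*U_tg_z=(z+t)U_tg_z$, so $U_t$ permutes the one–dimensional deficiency subspaces and therefore leaves $\cH'$ invariant; thus $\cH'$ reduces $U_t$, $\dot A$ and $A$ simultaneously, and by Theorem \ref{prostota0} the Weyl–Titchmarsh function is unchanged. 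Hence I may assume $\dot A$ prime, so that $g_+$ is cyclic for $A$ and the spectral measure $\mu$ in \eqref{defi} is well defined.

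The engine is that the two commutation relations \eqref{ccrmuh} and \eqref{nucom} act together: since $U_t^*\dot AU_t=\dot A+tI$ and $U_t^*AU_t=A+tI$, conjugation by $V=U_t^*$ shows that the pair $(\dot A,A)$ is unitarily equivalent to the shifted pair $(\dot A+tI,\,A+tI)$ for every $t\in\bbR$. Because $\dot A$ is prime, $M(z)$ is a complete unitary invariant of the pair (Remark \ref{mnogopros}), so $M_{(\dot A,A)}=M_{(\dot A+tI,\,A+tI)}$ identically on $\bbC_+$. I would then invoke the affine transformation law for the Weyl–Titchmarsh function (Appendix E): the deficiency element of $\dot A+tI$ at $i$ is the deficiency element of $\dot A$ at the point $i-t$, and a short computation shows that the representing measure of the shifted pair is the $t$–translate of $\mu$, renormalized by the positive scalar $Z(t)=\|g_{i-t}\|^2$. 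Equating the two Weyl functions and using the bijectivity of the Nevanlinna correspondence (measure $\leftrightarrow$ Weyl function) forces $d\mu(\lambda)=Z(t)^{-1}\,d\mu(\lambda-t)$ for all $t$; that is, $\mu$ is translation–quasi-invariant.

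The remainder is measure theory. Quasi-invariance under the full translation group makes $\mu$ equivalent to Lebesgue measure, so $d\mu=\rho(\lambda)\,d\lambda$ with $\rho>0$ a.e.; the cocycle identity $Z(t+s)=Z(t)Z(s)$ together with continuity of $t\mapsto\|g_{i-t}\|$ yields $Z(t)=e^{ct}$, hence $\rho(\lambda)=\rho(0)e^{-c\lambda}$. The normalization $\int_{\bbR}\frac{d\mu(\lambda)}{\lambda^2+1}=1$ of \eqref{normmu} is incompatible with an exponential density unless $c=0$, so $\rho$ is constant and $\mu=\frac1\pi\,d\lambda$. Substituting into \eqref{defi} and computing the Poisson integral gives $\Im M\equiv 1$ and $\Re M\equiv 0$, i.e. $M(z)=i$ on $\bbC_+$; equivalently, by \eqref{blog}, $s(z)\equiv 0$.

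The main obstacle I anticipate is the transformation law of the middle step: one must track precisely how the representing measure and its normalization change under $A\mapsto A+tI$ while the base points $\pm i$ hard-wired into the definitions of $M$ and $s$ stay fixed, so that the shift manifests not as the naive translation $M(z)\mapsto M(z-t)$ but exactly as the quasi-invariance $d\mu(\lambda)=Z(t)^{-1}d\mu(\lambda-t)$. It is here that self-adjointness is indispensable: the equivalence $(\dot A,A)\cong(\dot A+tI,\,A+tI)$ requires $A$ to be a \emph{continuous} self-adjoint solution of \eqref{nucom}. For a symmetric operator merely satisfying Hypothesis \ref{muhly} (for instance differentiation on a finite interval), only a discrete subgroup of $U_t$ preserves the domain of any self-adjoint extension, the equivalence of pairs fails, and indeed $s\not\equiv 0$ — which is exactly why the hypothesis of Theorem \ref{sacr} forces the degenerate conclusion here while Theorem \ref{lem:diss} does not.
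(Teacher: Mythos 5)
Your proof is correct, and it takes a genuinely different route from the paper's. The paper argues analytically: it introduces the auxiliary bounded group $B_t=U_t(A-iI)(A-iI+tI)^{-1}$, shows that $\Ker((\dot A)^*-iI)$ is $B_t$-invariant so that $B_tg_+=b^tg_+$ for some $b\in\bbC$, and from this computes $M'(i-t)=|b|^{-2t}(g_+,g_-)$; a direct spectral estimate gives $M'(i-t)=O(t^2)$ as $t\to\infty$, which forces either $(g_+,g_-)=0$ or $|b|=1$, and in both cases $M$ is constant (in the second case because $M$ would be linear and the Herglotz growth condition $\lim_{y\to\infty}M(iy)/y=0$ kills the linear term), hence $M\equiv M(i)=i$. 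You instead work at the level of the representing measure: the two commutation relations make $(\dot A,A)$ unitarily equivalent via $U_t$ to $(\dot A+tI,A+tI)$; unitary invariance of $M$ together with the affine transformation law of Appendix E (Lemma \ref{masha} combined with \eqref{blog}) yields the functional equation $M(z-t)=\Im M(i-t)\,M(z)+\Re M(i-t)$, and uniqueness of the Herglotz representation turns this into translation quasi-invariance of $\mu$ with constant multiplier $Z(t)=\Im M(i-t)$; the cocycle identity, the finiteness of $\int_\bbR(1+\lambda^2)^{-1}d\mu(\lambda)$, and the normalization \eqref{normmu} then pin down $\mu=\pi^{-1}d\lambda$, i.e.\ $M\equiv i$. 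This is in effect the same functional-equation strategy the paper itself uses to prove Theorem \ref{lem:diss} (there for the characteristic function $S$), transplanted to $M$: your version is more structural and identifies the Case (i) model measure directly, while the paper's is more elementary and self-contained, needing neither Appendix E nor any classification of quasi-invariant measures.

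Three small remarks, none of which is a gap. First, the reduction to the prime case is unnecessary: your argument only uses that $M$ is a unitary invariant of a pair, which holds without primeness (completeness, i.e.\ Remark \ref{mnogopros}, is never needed). Second, the identification $Z(t)=\|g_{i-t}\|^2$ presupposes the model normalization $g_z(\lambda)=(\lambda-z)^{-1}$; what you actually use is only that $Z(t)=\Im M(i-t)$ is positive and continuous in $t$. Third, it is cleaner to derive the cocycle identity $Z(t+s)=Z(t)Z(s)$ directly by applying the measure equation twice, conclude $Z(t)=e^{ct}$, and only then observe that $e^{c\lambda}d\mu(\lambda)$ is a translation-invariant, locally finite Borel measure, hence a multiple of Lebesgue measure; this avoids appealing to the general fact that a translation-quasi-invariant $\sigma$-finite measure on $\bbR$ is equivalent to Lebesgue measure.
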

\begin{proof} Introducing  the family of bounded operators
$$
B_t=U_t(A-iI)(A-iI+tI)^{-1}, \quad t\in \bbR,
$$
it is easy to see that the family $\bbR\ni t\mapsto B_t$ forms a  strongly continuous (commutative) group. Indeed, using the commutation relation \eqref{nucom} for the self-adjoint operator $A$ one obtains
\begin{align*}
B_{t}B_s&=U_t(A-iI)(A+(-i+t)I)^{-1}U_s(A-iI)(A+(-i+s)I)^{-1}\\
&=U_tU_sU_s^*(A-iI)(A-iI+tI)^{-1}U_s(A-iI)(A+(-i+s)I)^{-1}\\
&=U_{t+s}(A-iI+sI)(A-iI+(t+s)I)^{-1}(A-iI)(A-iI+sI)^{-1}\\
&=U_{t+s}(A-iI+(t+s)I)^{-1}(A-iI)
\\&=B_{t+s}, \quad s,t\in \bbR.
\end{align*}

Let $g_+\in \Ker((\dot A)^*-iI)$,  $\|g_+\|=1$, be  a normalized deficiency element of $\dot A$.
Since
$$
(A-iI)(A-(i-t)I)^{-1}g_+\in \Ker((\dot A)^*-(i-t)I)
$$
and
$$
U_t \Ker((\dot A)^*-(i-t)I)= \Ker((\dot A)^*-iI),
$$
one concludes that
the deficiency subspace   $\Ker((\dot A)^*-iI)$ is invariant for $B_t$, $t\in \bbR$. Therefore,
the restriction $B_t$ on the deficiency subspace is a continuous one-dimensional representation
(a one-dimensional representation of a strongly continuous group is continuous).
Hence,
$$
B_tg_+=b^tg_+\quad \text{ for some }\quad b\in \bbC.
$$
From the definition of the Weyl-Titchmarsh function \eqref{WTF} it follows that
$$
M'(z):=\frac{d}{dz}M(z)=((A^2+I)(A-zI)^{-2}g_+,g_+), \quad z\in \bbC_+.
$$

One computes
\begin{align*}
|b|^{2t}&M'(i-t)=|b|^{2t}((A^2+I)(A-iI+tI)^{-2}g_+, g_+)
\\&=((A^2+I)(A-iI+tI)^{-2}B_tg_+, B_tg_+)
\\&=(B_t^*(A^2+I)(A-iI+tI)^{-2}B_tg_+, g_+)
\\&=((A+iI)(A+iI+tI)^{-1}U_t^*(A^2+I)(A-iI-tI)^{-2}B_tg_+, g_+)
\\&=((A+iI)(A+iI+tI)^{-1}U_t^*(A^2+I)(A-iI-tI)^{-2}U_t
\\&\quad\times(A-iI)(A-iI+tI)^{-1}g_+, g_+)
\\&=((A+iI)(A+iI+tI)^{-1}((A+tI)^2+I)(A-iI)^{-2}(A-iI)
\\&\quad\times(A-iI+tI)^{-1}g_+, g_+)
\\&=((A+iI)(A-iI)^{-1}g_+, g_+)
\\&=(g_+, (A-iI)(A+iI)^{-1}g_+).
\end{align*}
Hence,
\begin{equation}\label{prima}
M'(i-t)=|b|^{-2t}(g_+,g_-), \quad t\in \bbR,
\end{equation}
where
$$
g_-=(A-iI)(A+iI)^{-1}g_+ \in  \Ker ((\dot A)^*+iI).
$$

Denote by $\mu(d\lambda)$  the spectral measure of the element $g_+$,
that is,
$$
\mu(d\lambda)=(E_A(d\lambda) g_+,g_+),$$
where $E_A(d\lambda)$ is the projection-valued spectral measure of the self-adjoint operator $A$
from the spectral decomposition
$$
A=\int_\bbR  \lambda dE_A(\lambda).
$$

Since
$$
M(z)=\int_\bbR  \frac{\lambda z+1}{\lambda -z}d\mu(\lambda),
$$
and therefore
$$
M'(i-t)=\int_\bbR \frac{\lambda^2+1}{(\lambda -t-i)^{2}}d\mu(\lambda),
$$
 one gets the estimate
\begin{align*}
|M'(i-t)|&\le \int_\bbR \frac{\lambda^2+1}{(\lambda +t)^2+1}d\mu(\lambda)\\
&=\int_{\{|\lambda|\le2 |t|\}}\frac{\lambda^2+1}{(\lambda +t)^2+1}d\mu(\lambda)
+\int _{\{|\lambda|> 2|t|\}}\frac{\lambda^2+1}{(\lambda +t)^2+1}d\mu(\lambda)
\\&\le (4 t^2+1)\mu\{|\lambda|\le |t|\}+\left (\frac{1}{(1-\frac12)^2}+1\right )\mu\{|\lambda |> |t|\}.
\end{align*}
Therefore,
 \begin{equation}\label{veill}
 M'(i-t)=O(t^2)\quad \text{as}\quad  t\to \infty.
 \end{equation}
Combining \eqref{veill} with \eqref{prima} shows that either  $(g_+,g_-)=0$ or
$|b|=1$.

In the first  case,  i.e. $(g_+,g_-)=0$, $M(z)$ is a constant function and hence
$$
M(z)=M(i)=i, \quad z\in \bbC.
$$

If $|b|=1$, we have $$
M'(i-t)=(g_+,g_-), \quad t\in \bbR.
$$
In particular, $M'(z)=(g_+,g_-)$ for all $z\in \bbC_+$ and hence
$$
M(z)=(g_+,g_-)z+C, \quad z\in \bbC_+,
$$
for some constant $C$.
We have, see \cite{KacKr}, $$
\lim_{y\to \infty}\frac{M(iy)}{y}=0,
$$
which implies
$
(g_+,g_-)=0
$
and  again shows that $M(z)$  is a constant function  in the upper half-plane and
$$
M(z)=C=M(i)=i, \quad z\in \bbC_+,
$$
which completes the proof.

\end{proof}

\begin{remark} In connection with our main hypothesis of this section,
we remark that if a self-adjoint operator  $A$ solves  commutation relations  \eqref{nucom}, then one can always find
 a symmetric restriction $\dot A$ that satisfies Hypothesis \ref{muhly}. Indeed, the commutation relations \eqref{nucom} imply that the one-parameter group $V_s=e^{isA}$ generated by $A$
satisfies  commutation  relations in the Weyl form (see, e.g., \cite[Ch. 3, Sect. 1, Theorem 5]{EM} or \cite{Si})
and then the existence of such a symmetric operator $\dot A$  is an immediate corollary of the Stone-von Neumann uniqueness theorem.
\end{remark}

Next, we threat the case where the  flow   $\varphi(t, \cdot)$  associated with the transformation \eqref{farfor} has a fixed point  in the  interior of the unit disk.
That is,  $\dot A$ admits  a maximal dissipative ``invariant'' extension that is not self-adjoint (cf. \cite[Theorem 20]{JM}).
We present the corresponding result in a slightly stronger  form. In particular, in this case the requirement \eqref{ccrmuh} can be relaxed.

\begin{theorem} \label{lem:diss}
Suppose that   $\widehat A$ is a  quasi-selfadjoint dissipative extension  of a closed symmetric operator $\dot A$ with deficiency indices $(1,1)$
 and $A$ is a $($reference$)$ self-adjoint   extension of $\dot A$.

Suppose that the commutation relation
\begin{equation}\label{order}
U_t^* \widehat AU_t=\widehat A+tI\quad \text{on }\quad \Dom ( \widehat A)
\end{equation}
hold.

Then
the characteristic function $S(z)$  associated with  the triple  $(\dot A, \widehat A, A)$ admits the representation
\begin{equation}\label{savto}
S(z)=k e^{i\ell z}, \quad z\in \bbC_+,
\end{equation}
for some  $|k|\le1$ and $\ell \ge 0$. Furthermore, if $\ell=0$, then necessarily $|k|<1$ and if $|k|=1$, then $\ell>0.$

In particular, the von Neumann parameter $\varkappa$ of the triple $(\dot A, \widehat A, A)  $ is given by
$$\varkappa= k e^{-\ell}.$$

\end{theorem}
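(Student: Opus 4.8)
The plan is to turn the operator identity \eqref{order} into a rigidity statement about the modulus of the characteristic function, and then to use analyticity to recover $S$ exactly. First I would record how $U_t$ acts on the objects entering the definition of $S$. Taking adjoints in \eqref{order} gives $U_t^*(\widehat A)^*U_t=(\widehat A)^*+tI$; since $\widehat A$ is a quasi-selfadjoint dissipative extension with $|\varkappa|<1$, one has $\dot A=\widehat A\cap(\widehat A)^*$, so both intertwining relations restrict to $U_t^*\dot AU_t=\dot A+tI$, and in particular $\Dom(\widehat A)$ and $\Dom((\dot A)^*)$ are $U_t$-invariant. Passing to the adjoint of the last relation yields $U_t^*(\dot A)^*U_t=(\dot A)^*+tI$, whence for $g_z\in\Ker((\dot A)^*-zI)$ a one-line computation gives $(\dot A)^*U_tg_z=(z+t)U_tg_z$. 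Thus $U_t$ maps $\Ker((\dot A)^*-zI)$ isometrically onto $\Ker((\dot A)^*-(z+t)I)$, and $\Ker((\dot A)^*-\overline zI)$ onto $\Ker((\dot A)^*-\overline{(z+t)}I)$.

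Next I would establish that $|S|$ is invariant under real shifts of its argument. Fix $z\in\bbC_+$ and choose, as in Proposition \ref{NeuLif}, deficiency elements $g_z,g_{\overline z}$ of equal nonzero norm with $g_z-\gamma(z)g_{\overline z}\in\Dom(\widehat A)$, so $|\gamma(z)|=|S(z)|$. Applying $U_t$ and using the $U_t$-invariance of $\Dom(\widehat A)$ shows $U_tg_z-\gamma(z)U_tg_{\overline z}\in\Dom(\widehat A)$; here $U_tg_z,U_tg_{\overline z}$ are equal-norm deficiency elements at $z+t$ and $\overline{(z+t)}$, so by the choice-independence of the modulus in Proposition \ref{NeuLif} one gets $|S(z+t)|=|\gamma(z)|=|S(z)|$. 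As $t\in\bbR$ is arbitrary, $|S(x+iy)|$ is independent of $x$.

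Finally I would pin down the form of $S$. From \eqref{charweyl} the function $S$ is analytic and contractive on $\bbC_+$. If $S\equiv0$ the claim holds with $k=0$, $\ell=0$; otherwise, were $S(z_0)=0$, the $x$-independence of $|S|$ would force $S$ to vanish along the whole horizontal line through $z_0$, contradicting the identity theorem, so $S$ is zero-free. Then $\log|S|$ is harmonic on $\bbC_+$ and depends on $y=\Im z$ alone, hence is affine, $\log|S(z)|=b-\ell\,\Im z$; contractivity $|S|\le1$ forces $\ell\ge0$ and $b\le0$, giving $|S(z)|=|k|e^{-\ell\,\Im z}$ with $|k|=e^{b}\le1$. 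Since $S$ is zero-free, $\log S$ has a single-valued analytic branch with real part $b-\ell\,\Im z$; computing its harmonic conjugate gives $\log S(z)=b+ic+i\ell z$ for a real constant $c$, i.e. $S(z)=ke^{i\ell z}$ with $k=e^{b+ic}$. The remaining assertions then follow from \eqref{sac}, namely $\varkappa=S(i)=ke^{-\ell}$: because $\widehat A$ is a proper dissipative extension, \eqref{parpar} yields $|\varkappa|<1$, which excludes $\ell=0$ together with $|k|=1$; equivalently, $\ell=0$ forces $|k|<1$, and $|k|=1$ forces $\ell>0$.

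The step I expect to be the main obstacle is the second one: correctly converting the operator commutation relation \eqref{order} into the statement that $|S(z)|$ depends only on $\Im z$. Everything rests on tracking the $U_t$-action on the deficiency subspaces and on $\Dom(\widehat A)$ and then invoking the choice-independence of $|\gamma(z)|$ from Proposition \ref{NeuLif}; once this invariance is secured, the complex-analytic conclusion (zero-freeness, affine $\log|S|$, recovery via the harmonic conjugate) is routine.
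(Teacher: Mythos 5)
Your proof is correct, but it takes a genuinely different route from the paper's at the decisive step. Both arguments begin identically: you pass from \eqref{order} to the adjoint relations $U_t^*(\widehat A)^*U_t=(\widehat A)^*+tI$ and $U_t^*(\dot A)^*U_t=(\dot A)^*+tI$, obtaining $U_t$-invariance of the relevant domains and the shift action $U_t\colon \Ker((\dot A)^*-zI)\to\Ker((\dot A)^*-(z+t)I)$ on deficiency subspaces. From there the paper proves that the triples $(\dot A,\widehat A, U_t(A+tI)U_t^*)$ and $(\dot A+tI,\widehat A+tI,A+tI)$ are mutually unitarily equivalent and invokes two transformation laws — Lemma \ref{ind0} (change of reference extension multiplies $S$ by a unimodular constant) and the invariance principle of Theorem \ref{coinvv} in Appendix F — to arrive at the full functional equation $S(z-t)=\Theta_t S(z)$; it then needs continuity of the unimodular factors $\Theta_t$ (deferred to \cite{TMF}) to conclude $\Theta_t=e^{i\ell t}$, and finally analyzes the inner--outer structure $S=\theta B e^{iM}$ with $M$ Herglotz. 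You instead extract only the \emph{modulus} invariance $|S(z+t)|=|S(z)|$, directly from Proposition \ref{NeuLif} applied to the $U_t$-transported deficiency elements; since Proposition \ref{NeuLif} is established in Section \ref{s2} via the functional model, there is no circularity. Your route buys a real simplification: because $|\gamma(z)|=|S(z)|$ is choice-independent, no unimodular constants need to be tracked and the continuity issue for $\Theta_t$ disappears entirely; the reconstruction of $S$ from zero-freeness, the affine harmonic function $\log|S|=b-\ell\,\Im z$, and the harmonic conjugate is elementary and self-contained. What the paper's route buys in exchange is the stronger conclusion that $S$ itself (not merely $|S|$) is character-automorphic, $S(z+t)=e^{i\ell t}S(z)$, which is the structural property the section's title advertises and which reappears later in the text. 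Your handling of the degenerate case $S\equiv 0$ and of the constraints linking $k$, $\ell$ to $|\varkappa|=|S(i)|<1$ coincides with the paper's.
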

\begin{proof}

Since $\widehat A$ is a quasi-selfadjoint extension of $\dot A$, we have
 \begin{equation}\label{ongin}
\dot A=\widehat A|_{ \dom (\widehat A)\cap \dom ((\widehat A)^*)}.
\end{equation}

 We claim that
 \begin{equation}\label{orderwidehat}
U_t^*(\widehat  A)^*U_t=(\widehat  A)^*+tI\quad \text{on }\quad \Dom ( (\widehat   A)^*).
\end{equation}
To see that assume that  $g\in\dom (\widehat A)$ and $f\in \dom((\widehat A)^*)$.
Then
\begin{align}
(\widehat A g,U_tf)&=(U_t^*\widehat A g,f)=(U_t^*\widehat AU_tU_t^* g,f)=((\widehat A+tI)U_t^* g,f)\label{forallg}
\\&=(U_t^* g, ((\widehat A)^*+tI)f)=( g, U_t((\widehat A)^*+tI)f)
.\nonumber
\end{align}
Here we have used that the domain $\dom (\widehat A)$ is $U_t$-invariant for  all $t\in \bbR$ and therefore $U_t^* g\in \dom (\widehat A)$.
Since \eqref{forallg} holds for all $g\in\dom (\widehat A)$, one ensures that  $\dom((\widehat A)^*)$ is $U_t$-invariant and
$$
(\widehat A)^*U_tf=U_t ((\widehat A)^*+tI)f,\quad f\in  \Dom ( (\widehat   A)^*),
$$
which proves \eqref{orderwidehat}.

Since $U_t( \Dom ( (\widehat   A)^*))= \Dom ( (\widehat   A)^*)$, from \eqref{ongin} one concludes that the commutation relations
 \begin{equation}\label{orderdot}
U_t^*(\dot  A)^*U_t=(\dot A)^*+tI\quad \text{on }\quad \Dom ( (\dot   A)^*).
\end{equation}
hold.

Taking into account that
$$\widehat A=U_t(\widehat A+tI)U_t^*
$$
and
$$
\dot  A=U_t(\dot  A+tI)U_t^* ,
$$
and also  observing  that the operator $A_t$ given by
$$A_t=U_t(A+tI)U_t^*$$ is a self-adjoint extension of $\dot A$,
we see that the triples $(\dot A, \widehat A, A_t)$ and $(\dot A+tI, \widehat A+tI, A+tI)$ are mutually unitarily equivalent.
In particular,
$$
S_{(\dot A+tI, \widehat  A+tI, A+tI)}(z)=S_{(\dot A,\widehat  A, A_t)}(z),\quad z\in \bbC_+.
$$
Since $A_t$ is a self-adjoint extension of $\dot A$,  by Lemma \ref{ind0}  (see \eqref{vagS}) in Appendix E, we have
$$
S_{(\dot A,\widehat  A, A_t)}(z)=\Theta_t^{(1)}S_{(\dot A,\widehat  A, A)}(z),\quad z\in \bbC_+,
$$
for some  unimodular constant $\Theta_t^{(1)}$, $| \Theta_t^{(1)}|=1$, which is a continuous function of the parameter $t$ (see \cite{TMF} for the proof of continuity).

By Theorem \ref{coinvv} in Appendix F,
$$
S_{(\dot A+tI, \widehat  A+tI, A+tI)}(z)= \Theta_t^{(2)} S_{(\dot A,\widehat  A, A)}(z-t),\quad z\in \bbC_+,
$$
where $\Theta_t^{(2)}$ is another  continuous unimodular function in $t$.
Therefore, the functional equation
$$
S_{(\dot A, \widehat  A, A)}(z-t)=\Theta_t S_{(\dot A, \widehat  A, A)}(z), \quad t\in \bbR,
$$
holds, where
$$
\Theta_t=\Theta_t^{(1)}\overline{\Theta_t^{(2)}}.
$$

 From the  functional equation it also follows that  $ \Theta_t$ is a continuous  unimodular solution of the equation
 $$
  \Theta_{t+s}= \Theta_t \Theta_s, \quad s, t \in \bbR,
 $$
 and therefore (see, e.g., \cite[XVII, 6]{F1})
 $$
 \Theta_t=e^{i\ell t} \quad \text{for some}\quad\ell  \in \bbR.
 $$

In particular, this proves that the characteristic function  $S_{(\dot A, \widehat  A, A)} $ is a character-automorphic function
with respect to the shifts, that is
\begin{equation}\label{funcav}
S_{(\dot A, \widehat  A, A)}(z+t)=e^{i\ell t}S_{(\dot A, \widehat  A, A)}(z),\quad t\in \bbR.
\end{equation}

Since $S_{(\dot A, \widehat  A, A)}(z)$ is a contractive analytic function on
$\bbC_+$, it admits the representation
$$
S_{(\dot A, \widehat  A, A)}(z)= \theta B(z)e^{iM(z)},
$$
where $|\theta|\le 1$,
$B$ is the Blaschke product associated with  the  (possible) zeros of $S_{(\dot A, \widehat  A, A)}(z)$ in the upper half-plane,
and $M(z)$ is a Herglotz-Nevanlinna function.

Suppose that the characteristic  function  $S_{(\dot A, \widehat  A, A)}(z)$ is not identically zero and thus $\theta\ne 0$. Then, from the functional equation \eqref{funcav} it follows that  $S_{(\dot A, \widehat  A, A)}(z)$ has no zeros in $\bbC_+$, and hence
$$
S_{(\dot A,\widehat  A, A)}(z)= \theta e^{iM(z)}, \quad z\in \bbC_+.
$$
Since $S_{(\dot A,\widehat  A, A)}(z)$ is character-automorphic, one concludes that the functional equation
$$
M(z+t)=\ell t+M(z)
$$ holds.

Next, we have  $$\frac{d}{dt} M(z+t)|_{t=0}=M'(z)=\ell.$$
Taking into account that   $M(z) $ maps the upper half-plane into itself, we obtain that
$$
M(z)=\ell z+b,
$$
where $\ell \ge 0$ and
 $\Im (b)\ge 0$.
 Therefore,
$$
S_{(\dot A,\widehat  A, A)}(z)= \theta e^{ib} e^{i\ell z},
$$
which proves \eqref{savto}  with
$
k=\theta e^{ib}.
$
\end{proof}

\begin{remark}\label{metmet}  Notice that in the situation of Therem \ref{lem:diss} we have that
$$
\dot A=\widehat A|_{ \dom (\widehat A)\cap \dom (\widehat A^*)}.
$$
Therefore
 the symmetric operator $\dot A$ is uniquely determined by the generator $\widehat A$.
In this case we will call $\dot A$ the symmetric part of $\widehat A$.

We also remark that if under the hypothesis of Theorem \ref{lem:diss} there is no self-adjoint extension satisfying the commutation relations \eqref{nucom},
the corresponding maximal dissipative extension $\widehat A$ is unique
 (see, e.g.,  \cite[Theorem 6.3]{TMF}).

\end{remark}

The following corollary is the first step towards a complete classification  up to  unitary equivalence of ``invariant" symmetric operators from Hypothesis \ref{muhly}
(see Problem (I) a) in Introduction).

\begin{corollary}\label{ssss} {\it Assume Hypothesis \ref{muhly}. Then there exists a self-adjoint extension $A$ of $\dot A$  such that  the Liv\v{s}ic function associated with the pair $(\dot A, A)$ admits the representation
$$
s_{(\dot A, A)}(z)=k \frac{ e^{iz\ell}-e^{-\ell}}{k^2e^{-\ell} e^{iz\ell}-1}, \quad z\in \bbC_+,
$$
for some $0\le k\le 1 $ and  $\ell> 0$.
}
\end{corollary}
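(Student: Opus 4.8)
The plan is to read off the representation from the two structure theorems already established---Theorem \ref{sacr} in the self-adjoint case and Theorem \ref{lem:diss} in the dissipative case---and then to translate the resulting characteristic function into a Liv\v{s}ic function through the fractional-linear relation \eqref{obr}. The starting point is the Lefschetz dichotomy recorded just before Theorem \ref{sacr}: under Hypothesis \ref{muhly} the flow $\varphi(t,\cdot)$ determined by \eqref{farfor} on $\overline{\bbD}$ has a fixed point, located either on the boundary $\partial\bbD$ or in the interior $\bbD$.

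If the fixed point lies on $\partial\bbD$, it produces a self-adjoint extension $A$ solving the commutation relation \eqref{nucom}, and Theorem \ref{sacr} gives $s_{(\dot A, A)}(z)\equiv 0$ on $\bbC_+$; this is precisely the asserted formula with $k=0$ and any $\ell>0$. If instead the fixed point lies in $\bbD$, it produces a non-selfadjoint maximal dissipative extension $\widehat A$ solving \eqref{order}. I would then fix any reference self-adjoint extension $A$ of $\dot A$ (one exists because the deficiency indices are $(1,1)$) and apply Theorem \ref{lem:diss} to the triple $(\dot A,\widehat A, A)$, obtaining $S(z)=k_0 e^{i\ell z}$ with $|k_0|\le 1$ and $\ell\ge 0$, and $\varkappa=S(i)=k_0 e^{-\ell}$.

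It remains to normalize the constant and to substitute. Since replacing $A$ by another self-adjoint reference extension multiplies $S$ by a unimodular constant (Lemma \ref{ind0} in Appendix E, exactly as used in the proof of Theorem \ref{lem:diss}), I can rotate $k_0$ to $k:=|k_0|\in[0,1]$; then $\varkappa=k e^{-\ell}$ is real, so $\overline{\varkappa}=\varkappa$, and \eqref{obr} yields
$$
s_{(\dot A, A)}(z)=\frac{S(z)-\varkappa}{\overline{\varkappa}\,S(z)-1}=\frac{k e^{i\ell z}-k e^{-\ell}}{k^2 e^{-\ell} e^{i\ell z}-1}=k\,\frac{e^{i\ell z}-e^{-\ell}}{k^2 e^{-\ell}e^{i\ell z}-1},
$$
which is the claimed expression. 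If $\ell=0$ here (which by Theorem \ref{lem:diss} forces $|k_0|<1$), the same computation collapses to $s\equiv 0$, and I again record this as the formula with $k=0$ and an arbitrary $\ell>0$; thus in every case $0\le k\le 1$ and $\ell>0$ can be arranged.

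I expect the only delicate point to be this normalization bookkeeping: one must invoke the transformation law of the characteristic function under change of the reference self-adjoint extension to rotate $k_0$ into $[0,1]$, and one must fold the several degenerate situations where $s\equiv 0$ (the boundary case, and the interior case with $\ell=0$ or $k_0=0$) uniformly into the single value $k=0$ with $\ell>0$ chosen freely. Everything beyond that is the direct substitution into \eqref{obr} displayed above.
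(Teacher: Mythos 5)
Your proof is correct and follows essentially the same route as the paper's: the dichotomy between a self-adjoint and a non-selfadjoint dissipative solution to the commutation relations, Theorem \ref{sacr} in the first case, and Theorem \ref{lem:diss} combined with \eqref{obr} and the unimodular rotation of Lemma \ref{ind0} in the second. The only immaterial differences are that you rotate the characteristic function \emph{before} substituting into \eqref{obr} whereas the paper substitutes first and then rotates the resulting Liv\v{s}ic function, and that you fold the degenerate case $\ell=0$ (and the self-adjoint case) into $k=0$ with arbitrary $\ell>0$ explicitly, a bookkeeping point the paper leaves implicit.
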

\begin{proof}
If $\dot A$   admits a self-adjoint   extension  $A$ that satisfies the same commutation relations as $\dot A$ does,  then  by Theorem \ref{sacr}
$$
s_{(\dot A, A)}(z)=0, \quad z\in \bbC_+,
$$
which proves the claimed representation with $k=0$.

 If $\dot A$ admits a maximal (non-selfadjoint) dissipative extension  $\widehat A$ that satisfies the same commutation relations, then
by Theorem
 \ref{lem:diss} there exists a (reference) self-adjoint extension $A'$ of $\dot A$ such that
the characteristic function of the triple $(\dot A, \widehat A, A')$ is of the form
$$
S(z)=ke^{i\ell z},  \quad z\in \bbC_+.
$$
Therefore, by \eqref{obr}
$$
s_{(\dot A, A')}(z)=\frac{S(z)-S(i)} {\overline{ S(i) }\,S(z)-1}=k \frac{ e^{iz\ell}-e^{-\ell}}{|k|^2e^{-\ell} e^{iz\ell}-1}.
$$
By Lemma \ref{ind0} in Appendix E, one can always find  a possibly different self-adjoint extension $A$  of $\dot A$ such that
$$
s_{(\dot A,  A)}(z)=|k |\frac{ e^{iz\ell}-e^{-\ell}}{|k|^2e^{-\ell} e^{iz\ell}-1},  \quad z\in \bbC_+,
$$
and the claim follows.
\end{proof}


\section{The differentiation operator on  metric graphs}\label{s4intro}

Let $\bbY$ be a directed   metric graph (see, e.g., \cite{BK,KosSch}). We will distinguish   the following three cases.
\begin{itemize}

 \item[Case (i):] $$ \bbY=(-\infty, 0)\sqcup(0,\infty),$$ with  $(-\infty, 0)$ the incoming and  $(0,\infty)$ outgoing bonds;

\item[Case (ii):] $$ \bbY= (0,\ell), \quad \text{ the outgoing bond};$$

 \item[Case (iii):]  $$\bbY =(-\infty, 0)\sqcup(0,\infty)\sqcup(0,\ell),$$ with  $ (-\infty, 0)$ the incoming and  both  $ (0,\infty)$ and $(0,\ell)$ the outgoing bonds.
\end{itemize}




Denote by   $ \dot D =i\frac{d}{dx}$   the differentiation operator    on the metric graph $\bbY$ in Cases (i)-(iii)
 defined on the domain  $\Dom(\dot D)$ of  functions $f\in W_2^1(\bbY)$ with the following
 boundary conditions on the vertices of the graph,

 in Case (i):
 \begin{align}
 \quad f_\infty(0+)=f_\infty(0-)=0; \label{dotos}
\end{align}

 in Case (ii):
 \begin{align}
 \quad f_\ell(0)=f_\ell(\ell)=0; \label{dotint}
\end{align}

 in Case (iii):
 \begin{align} \begin{cases}
f_\infty(0+)&=k f_\infty(0-)
\\
f_\ell(0)&=\sqrt{1-k^2} f_\infty(0-)\\
 f_\ell(\ell)&=0
\end{cases} \quad \text{for some}\quad  0< k<1.  \label{dotdom0}
\end{align}

Here we have  used  the following notation.

If the graph $\bbY$ is in Cases (i) and (ii), the functions from the Hilbert space $L^2(\bbY)$
are  denoted by $f_\infty$
 and $f_\ell$,  respectively.

If the metric graph $\bbY$ is in Case (iii),  in view of the  natural identification of $L^2(\bbY)$ with the orthogonal  sum $L^2(\bbR)\oplus L^2((0, \ell))$,
 it is convenient to represent  an arbitrary element   $f\in L^2(\bbY)$  as
the two-component vector-function
$$
f=\begin{pmatrix} f_\infty\\
f_\ell
\end{pmatrix}
.$$
(Here $L^2(\bbY)$  denotes the Hilbert  space of square-integrable functions with respect to  Lebesgue measure  on the edges of the metric graph $
 \bbY$.)

Notice that if the graph $\bbY$ is  in  Case (iii)
and $k=0$  in \eqref{dotdom0},
then  the boundary conditions \eqref{dotdom0} can be rewritten as
 \begin{align*}
 \begin{cases}
f_\infty(0+)=0
\\
f_\ell(0)=f_\infty(0-)\\
 f_\ell(\ell)=0
\end{cases}.
\end{align*}
In this case, the operator
$\dot D$ splits into the orthogonal  sum of the symmetric  differentiation operators on the semi-axes $(-\infty, \ell)$ and $(0, \infty)$ with the Dirichlet boundary conditions at the
end-points, respectively.  Therefore, if $k=0$, then the operator  $\dot D$ is unitarily equivalent to the symmetric differentiation in Case (i).

\begin{remark}\label{betti}
In Cases (i) and (iii) the metric graph $\bbY$ is not finite. However,    one can assign two additional vertices to the external edges at $\pm \infty$.
Under this hypothesis,
 in all Cases (i)-(iii) the Euler characteristics $\chi (\bbY) $ of the graph $\bbY$, the number of vertices minus the number of edges, equals one.
Therefore, the corresponding first Betti number $\beta(\bbY)=-\chi (\bbY) +1$  of the graph $\bbY$,
 the number   of edges that have to be removed  to turn the graph into a connected tree,
vanishes.
\end{remark}

\begin{lemma}\label{domsopr} The operator $\dot D$  on a metric graph $\bbY$ is Cases $(i)-(iii)$ is symmetric.
Moreover,
$$
\Dom((\dot D)^*)=\begin{cases}
W_2^1((-\infty, 0))\oplus  W_2^1((0, \infty)), &\text{in Case $(i)$}\\
 W_2^1((0, \ell)), &\text{in Case $(ii)$}\\
\end{cases}.
$$
In Case (iii), the domain $\Dom((\dot D)^*)$
 consists of the vector-functions
$$h=(h_\infty, h_\ell)^T \in ( W_2^1(\bbR_-)\oplus W_2^1(\bbR_+))\oplus W_2^1((0, \ell))
$$
satisfying the ``boundary condition''
\begin{equation}\label{domsop}
h_\infty(0-)-k\, h_\infty(0+)
-\sqrt{1-k^2}\, h_\ell(0)=0.
\end{equation}

\end{lemma}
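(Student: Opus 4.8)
The plan is to reduce both assertions to the one–dimensional Lagrange identity for $i\frac{d}{dx}$ and then read off symmetry and the adjoint domain from the resulting boundary form. First I would record that for $u,v$ lying in $W_2^1$ on each edge of $\bbY$ integration by parts gives
\begin{equation*}
(\dot D u, v)-(u, i v')=i\,\Phi(u,v),
\end{equation*}
where $\Phi$ collects the vertex values $u\overline{v}$, each weighted by $\pm1$ according to the orientation of the incident bond; the contributions at $\pm\infty$ are absent because $W_2^1$ functions on a half-line tend to zero at infinity. Writing this out in Case (iii), say,
\begin{equation*}
\Phi(u,v)=u_\infty(0-)\overline{v_\infty(0-)}-u_\infty(0+)\overline{v_\infty(0+)}+u_\ell(\ell)\overline{v_\ell(\ell)}-u_\ell(0)\overline{v_\ell(0)},
\end{equation*}
with the evident simplifications in Cases (i) and (ii). That $W_2^1$ functions on an interval are (absolutely) continuous, so that these boundary values make sense, is standard.

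For symmetry I would substitute $u=f,\ v=g$ with both $f,g\in\Dom(\dot D)$ and check $\Phi(f,g)=0$. In Cases (i) and (ii) this is immediate, since the defining conditions \eqref{dotos}, \eqref{dotint} annihilate every boundary value occurring in $\Phi$. In Case (iii) I would put $a=f_\infty(0-)$, $b=g_\infty(0-)$, insert \eqref{dotdom0}, and observe that the form collapses to $a\overline{b}\bigl(1-k^2-(1-k^2)\bigr)=0$; this cancellation is exactly the relation $k^2+(1-k^2)=1$ encoding the unitarity of the bond $S$-matrix.

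For the adjoint I would argue both inclusions. Testing the defining identity $(\dot D f,g)=(f,h)$ against $f\in C_c^\infty$ supported in the interior of a single edge — such $f$ automatically belong to $\Dom(\dot D)$ — forces $g$ to possess an $L^2$ weak derivative on that edge with $ig'=h$; hence any $g\in\Dom((\dot D)^*)$ already lies in $W_2^1$ on every edge and $(\dot D)^*$ acts as $i\frac{d}{dx}$. Once this regularity is in hand, the Lagrange identity is legitimate for $g$, and membership in $\Dom((\dot D)^*)$ becomes equivalent to the single requirement that $\Phi(f,g)=0$ for all $f\in\Dom(\dot D)$. In Cases (i) and (ii) every vertex value of $f$ vanishes, so this holds for every $g\in W_2^1$ and no boundary condition is imposed, yielding the full Sobolev space. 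In Case (iii) I would once more insert \eqref{dotdom0}, factor out the free parameter $a=f_\infty(0-)$ (noting $f_\ell(\ell)=0$ always, so nothing is imposed on $g_\ell(\ell)$), and conclude that $\Phi(f,g)=0$ for all admissible $f$ is equivalent to
\begin{equation*}
g_\infty(0-)-k\,g_\infty(0+)-\sqrt{1-k^2}\,g_\ell(0)=0,
\end{equation*}
which is precisely \eqref{domsop}.

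The only genuinely delicate point is the Case (iii) adjoint computation: one must keep the orientation–dependent signs in $\Phi$ straight and substitute the coupled conditions \eqref{dotdom0} correctly, so that the three vertex values of $g$ are tied by exactly one relation — essentially the conjugate transpose of the $S$-matrix coupling — rather than by three separate ones. Everything else, namely the half-line decay of $W_2^1$ functions, the weak-derivative regularity step, and the continuity/trace properties needed to define the boundary values, is routine one-dimensional Sobolev theory.
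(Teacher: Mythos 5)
Your proposal is correct and follows essentially the same route as the paper: integration by parts to produce the vertex boundary form, symmetry from the cancellation $1-k^2-(1-k^2)=0$ encoded in the bond matrix, and the adjoint domain obtained by inserting \eqref{dotdom0}, factoring out the free value $f_\infty(0-)$, and reading off the single relation \eqref{domsop}. The only difference is that you spell out the weak-derivative regularity step (testing against $C_c^\infty$ functions on single edges) which the paper leaves implicit, and you treat Cases (i) and (ii) directly rather than citing them as known --- both harmless refinements.
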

\begin{proof} The corresponding result in Cases (i) and (ii) is well known.

In Case (iii),  from \eqref{dotdom0} it follows that for
$f=(f_\infty, f_\ell)^T\in \Dom(\dot D)$
 the ``quantum Kirchhoff rule''
$$|f_\infty(0-)|^2=|f_\infty(0+)|^2+|f_\ell(0)|^2
$$
holds. Since also $f_\ell(\ell)=0$, integration by parts
\begin{align*}
(\dot Df, f)&=\int_{-\infty}^0i f_\infty'(x)\overline{f_\infty(x)}dx+\int_0^{\infty}if_\infty'(x)\overline{f_\infty(x)}dx
+\int_0^\ell if_\ell'(x)\overline{f_\ell(x)}dx
\\&
=-\int_{-\infty}^0if_\infty(x)\overline{f_\infty'(x)}dx-
\int_0^{\infty}if_\infty(x)\overline{f_\infty'(x)}dx
-\int_0^\ell if_\ell(x)\overline{f_\ell'(x)}dx
\\&
\,\,\,\,\,\,+i\left (|f_\infty(0-)|^2-|f_\infty(0+)|^2-|f_\ell(0)|^2\right ),\quad f\in \dom (\dot D),
\end{align*}
shows that the quadratic form $(\dot Df,f)$
 is real and therefore the operator $\dot D$ is indeed symmetric.

Similar computations show that
\begin{align*}
(h,\dot D f)&=\int_{-\infty}^0i h_\infty'(x)\overline{f_-(x)}dx+\int_0^{\infty}i h_\infty'(x)\overline{f_+(x)}dx
+\int_0^\ell i h_\ell'(x)\overline{f_\ell(x)}dx
\\&
=-\int_{-\infty}^0i h_\infty(x)\overline{f_\infty'(x)}dx-
\int_0^{\infty}i h_\infty(x)\overline{f_\infty'(x)}dx
-\int_0^\ell i h_\ell(x)\overline{f_\ell'(x)}dx
\\&
\,\,\,\,\,\,+i\left ( h_\infty(0-)\overline{f_\infty(0-)}-h_\infty(0+)\overline{f_\infty(0+)}-h_\ell(0)\overline{f_\ell(0)}\right ),\quad f\in \dom (\dot D).
\end{align*}
Therefore, $h\in \Dom ((\dot D)^*)$ if and only if
$$
 h_\infty(0-)\overline{f_\infty(0-)}-h_\infty(0+)\overline{f_\infty(0+)}-h_\ell(0)\overline{f_\ell(0)}=0 \quad \text{for all }\quad f\in \Dom(\dot D).
$$
Taking into account the boundary conditions \eqref{dotdom0}, we have
\begin{equation}\label{kukin}
\left ( h_\infty(0-)-h_\infty(0+)k -h_\ell(0)\sqrt{1-k^2}\right )\overline{f_\infty(0-)}=0
\end{equation}
for all $f\in \Dom(\dot D)$. Since
 $f_\infty(0-)$ may be chosen arbitrarily,  \eqref{domsop} follows from \eqref{kukin}.

\end{proof}

The following lemma introduces  a natural (standard) basis in the  subspace  $$\cN=\Ker ((\dot D)^*-iI)\dot+\Ker ((\dot D)^*+iI ).$$
\begin{lemma}\label{defeff}
The deficiency subspaces $\Ker ((\dot D)^*\mp i I)$ of the symmetric operator $\dot D$  on the  metric graph $\bbY$ is Cases $(i)$-$(iii)$ are spanned by the following normalized deficiency elements $g_\pm$.
Here,

in Case $(i)$,
\begin{equation}\label{deftip1}
g_+(x)=\sqrt{2}
e^{x} \chi_{(-\infty, 0)}(x)
\quad \text{and}
\quad
g_-(x)=\sqrt{2}
e^{-x}\chi_{(0,\infty)}(x), \quad x\in \bbR,
\end{equation}

in Case $(ii)$,
\begin{equation}\label{deftip2}
g_+(x)=\frac{\sqrt{2}}{\sqrt{e^{2\ell}-1}}e^{ x}
\quad \text{ and } \quad
g_-(x)=\frac{\sqrt{2}}{\sqrt{e^{2\ell}-1}}e^{\ell  -x}, \quad x\in [0, \ell].
\end{equation}

Finally, in Case $(iii)$,
\begin{equation}\label{hyddef1}
g_+(x)=\frac{\sqrt{2}}{\sqrt{e^{2\ell}-k^2}}e^x\begin{cases}
\sqrt{1-k^2} \chi_{(-\infty,0)}(x), & x \in (-\infty, 0)\sqcup (0, \infty) \\
1, &x\in [0,\ell]
\end{cases},
\end{equation}
\begin{equation}\label{hyddef2}
g_-(x)=\frac{\sqrt{2}}{\sqrt{e^{2\ell}-k^2}} e^{\ell-x}
\begin{cases}
-\sqrt{1-k^2} \chi_{(0,\infty)}(x), & x \in (-\infty, 0)\sqcup (0, \infty)\\
k , &  x\in [0,\ell]
\end{cases}.
\end{equation}

In particular, $\dot D$ is a symmetric operator with deficiency indices $(1,1)$.

\end{lemma}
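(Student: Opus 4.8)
The plan is to compute each deficiency subspace directly by solving the eigenvalue equation for the adjoint and then imposing the domain constraints supplied by Lemma \ref{domsopr}. By definition $g\in\Ker((\dot D)^*\mp iI)$ precisely when $g\in\Dom((\dot D)^*)$ and $i\,g'=\pm i\,g$ on each edge of $\bbY$. On a single connected edge this first-order linear ordinary differential equation has a one-dimensional solution space: $g'=g$ forces $g$ to be a constant multiple of $e^{x}$ (the $+i$ case), and $g'=-g$ forces a constant multiple of $e^{-x}$ (the $-i$ case). The square-integrability requirement built into $\Dom((\dot D)^*)$ then removes, on each half-line, whichever exponential grows at infinity, while on the finite edge $(0,\ell)$ both exponentials survive. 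What remains is to keep track of how many free constants are left after integrability and after the vertex relations of Lemma \ref{domsopr}, and to fix those constants by normalization.

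First I would dispose of Cases (i) and (ii). In Case (i) integrability kills $e^{x}$ on $(0,\infty)$ and $e^{-x}$ on $(-\infty,0)$, and since $\Dom((\dot D)^*)$ imposes no coupling at the origin, the $+i$-eigenfunction is supported on $(-\infty,0)$ and the $-i$-eigenfunction on $(0,\infty)$; each is a one-dimensional space, and evaluating $\int e^{\pm2x}\,dx$ over the appropriate half-line gives the normalizing constant $\sqrt2$ recorded in \eqref{deftip1}. In Case (ii) both $e^{\pm x}$ are square-integrable on the bounded interval and again no vertex condition intervenes, so each deficiency space is one-dimensional; normalizing $e^{x}$ and $e^{\ell-x}$ over $[0,\ell]$ produces the factor $\sqrt2/\sqrt{e^{2\ell}-1}$ in \eqref{deftip2}.

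The substantive case is (iii), where the vertex relation \eqref{domsop} must be invoked to pin down the relative constants. For the $+i$-eigenfunction, integrability forces the $(0,\infty)$-component to vanish, so $h_\infty(0+)=0$; writing the surviving data as $h_\infty(0-)=a$ and $h_\ell(0)=c$, the relation $h_\infty(0-)-k\,h_\infty(0+)-\sqrt{1-k^2}\,h_\ell(0)=0$ collapses to $a=\sqrt{1-k^2}\,c$, leaving a genuinely one-parameter family. Symmetrically, for the $-i$-eigenfunction integrability forces the $(-\infty,0)$-component to vanish, so $h_\infty(0-)=0$, and \eqref{domsop} reduces to $k\,h_\infty(0+)+\sqrt{1-k^2}\,h_\ell(0)=0$, again one-dimensional. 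Substituting these proportionalities into $\|g_\pm\|^2=\int_{\bbY}|g_\pm|^2\,dx$ and summing the elementary exponential integrals over the three edges yields the combined normalization $\sqrt2/\sqrt{e^{2\ell}-k^2}$ of \eqref{hyddef1}--\eqref{hyddef2}.

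I expect the only genuine bookkeeping obstacle to be this last normalization: one must add the incoming-edge contribution proportional to $1-k^2$ to the finite-edge contribution and verify that the $k^2$-terms recombine to give exactly $e^{2\ell}-k^2$ in the denominator, and likewise check that the prefactors $-\sqrt{1-k^2}$ and $k$ in \eqref{hyddef2} are precisely the ones enforced by \eqref{domsop}. Since in every case the eigenvalue equation, integrability, and the single scalar vertex condition together leave each deficiency space spanned by one nonzero element, the operator $\dot D$ has deficiency indices $(1,1)$, which is the final assertion.
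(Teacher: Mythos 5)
Your proposal is correct and follows essentially the same route as the paper: solve $g'=\pm g$ edge by edge, discard the non-square-integrable exponentials on the unbounded edges, impose the vertex condition \eqref{domsop} of Lemma \ref{domsopr} in Case (iii), and normalize (your computations of the relative coefficients $(\sqrt{1-k^2},1)$ and $(-\sqrt{1-k^2},k)$ and of the norming constants all check out). The only cosmetic difference is that the paper records the eigenfunctions $g_z$ for every non-real $z$ (which it reuses later) and specializes to $z=\pm i$, while you treat $z=\pm i$ directly.
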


\begin{proof}

The deficiency subspaces of the symmetric operator  $\dot D$ in Cases (i) and (ii)
can  be easily calculated.

Indeed, in Case (i), we have
 $$
\Ker ((\dot D)^*-z I)=\text{span}\{ g_z\},
$$
where
\begin{equation}\label{defsub1}
g_z(x)=\begin{cases}
e^{-izx}, & x<0\\
 0, & x \ge 0\\
\end{cases}
\quad \text{
and} \quad
g_z(x)=\begin{cases} 0, & x<0\\
e^{-izx}, & x \ge 0
\end{cases}
\end{equation}
for $z\in \bbC_+$ and $z\in
\bbC_-$, respectively, which proves
\eqref{deftip1}.

In Case (ii),
$$
\Ker ((\dot D)^*-z I)=\text{span}\{ g_z\},
$$
where
\begin{equation}\label{defsub2}
g_z(x)=e^{-izx}, \quad x\in [0, \ell], \quad z\in \bbC\setminus \bbR,
\end{equation}
and
\eqref{deftip2} follows.

In Case (iii), from  the description of $\Dom((\dot D)^*)$ provided by Lemma \ref{domsopr} it follows that
   the deficiency subspace
$
\Ker ((\dot D)^*-z I)=\text{span}\{ g_z\}
$, $z\in \bbC\setminus \bbR$, is  generated by the functions
\begin{equation}\label{defsub01}
g_z(x)=\begin{cases}  \sqrt{1-k^2} e^{-izx}\chi_{(-\infty,0)}(x), & x \in (-\infty, 0)\sqcup (0, \infty)\\
e^{-izx}, &x\in [0,\ell]
\end{cases}, \quad  z\in \bbC_+,
\end{equation}
 and
 \begin{equation}\label{defsub02}
g_z(x)=
\begin{cases}
-\sqrt{1-k^2}e^{-izx} \chi_{(0,\infty)}(x), & x \in (-\infty, 0)\sqcup (0, \infty)\\
k e^{-izx}, &  x\in [0,\ell]
\end{cases},\quad z\in \bbC_-,
\end{equation}
proving
\eqref{hyddef1} and \eqref{hyddef2}.

\end{proof}

 Recall (see Appendix \ref{A1}) that a symmetric operator
 $\dot A$ is called  a prime operator
if there is no   (non-trivial) subspace invariant under $\dot A$
such that the restriction of $\dot A$ to this subspace is self-adjoint.

\begin{lemma}\label{primeD} The symmetric differentiation operator $\dot D$ on the  metric graph $\bbY$ in Cases $(i)$-$(iii)$ is a
prime operator.
\end{lemma}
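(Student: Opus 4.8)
The plan is to show that the symmetric differentiation operator $\dot D$ has no nontrivial reducing subspace on which its restriction is self-adjoint, in each of the three cases. The standard criterion for primeness is that the deficiency elements $g_z$, $z \in \bbC_+$, together span a dense subspace: equivalently, the family $\{g_z : z \in \bbC_+\}$ has no nonzero vector orthogonal to it, since the orthogonal complement of $\overline{\Span}\{g_z : z\in\bbC\setminus\bbR\}$ is precisely the maximal subspace reducing $\dot D$ to a self-adjoint part (see the characterization in Appendix B). So the key reduction is: it suffices to prove that if $h \in L^2(\bbY)$ satisfies $(h, g_z) = 0$ for all $z \in \bbC_+$ (equivalently all $z$ in some set with an accumulation point), then $h = 0$.

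\textbf{First} I would write out the explicit deficiency elements from Lemma \ref{defeff}. In Cases (i) and (ii) the functions $g_z$ are given by \eqref{defsub1} and \eqref{defsub2}, namely supported-half exponentials $e^{-izx}$ on a semi-axis, respectively exponentials $e^{-izx}$ on $[0,\ell]$. In Case (iii) they are given by \eqref{defsub01}. The orthogonality condition $(h, g_z) = 0$ then becomes a statement that a certain Fourier/Laplace transform of $h$ vanishes on a half-plane. Concretely, in Case (ii) the condition $\int_0^\ell h(x) \overline{e^{-i\overline{z}x}}\,dx = \int_0^\ell h(x) e^{izx}\,dx = 0$ for all $z \in \bbC_+$ means the entire function $z \mapsto \int_0^\ell h(x)e^{izx}\,dx$ vanishes identically, and by uniqueness of the Fourier transform (or simply expanding and using that $\{e^{izx}\}$ separates points) this forces $h = 0$ on $[0,\ell]$. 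The same Laplace-transform argument disposes of Case (i) on each semi-axis separately.

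\textbf{The main obstacle} will be Case (iii), where $g_z$ has a component on the interval $[0,\ell]$ and a component $\sqrt{1-k^2}e^{-izx}\chi_{(-\infty,0)}$ on the line, coupled through the single coefficient $\sqrt{1-k^2}$. Writing $h = (h_\infty, h_\ell)^T$, the orthogonality $(h, g_z) = 0$ for $z \in \bbC_+$ couples $\int_{-\infty}^0 h_\infty(x) e^{izx}\,dx$ with $\int_0^\ell h_\ell(x) e^{izx}\,dx$ into a single vanishing entire function of $z$. The point is that these two Laplace transforms have different analytic behavior — the interval transform is entire of exponential type, while the half-line transform is analytic only in $\bbC_+$ and decays there — so their sum can vanish on all of $\bbC_+$ only if each piece vanishes separately; I would make this rigorous by analyzing growth/decay as $\Im z \to +\infty$ to force the half-line contribution to zero, then invoke the interval-transform uniqueness as in Case (ii). An alternative, perhaps cleaner route I would consider is to note that the differentiation operator restricted to any reducing subspace would itself have to be a self-adjoint first-order operator, hence generate a unitary group of shifts; since the graph has vanishing first Betti number and the boundary conditions \eqref{dotdom0} with $0 < k < 1$ genuinely couple all three edges (the "quantum Kirchhoff rule" leaks probability out of the finite interval toward $+\infty$), no such shift-invariant self-adjoint piece can exist. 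Either way, I expect the coupling coefficient $\sqrt{1-k^2} \ne 0$ to be exactly what kills any putative self-adjoint reducing subspace, and pinning down the transform-vanishing argument in the coupled Case (iii) is where the real work lies.
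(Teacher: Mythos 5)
There is a genuine gap, and it sits in your very first reduction. The criterion from Appendix B (Theorem \ref{prostota0}, Corollary \ref{den}) is that $\dot D$ is prime if and only if
$\overline{\Span_{\Im(z)\neq 0}\Ker((\dot D)^*-zI)}=L^2(\bbY)$, i.e.\ the span is taken over \emph{both} half-planes. Your reduction to ``$(h,g_z)=0$ for all $z\in\bbC_+$ implies $h=0$'' is not equivalent to this, and for the operator at hand it is demonstrably false. In Case (i) the upper half-plane deficiency elements are, by \eqref{defsub1}, $g_z(x)=e^{-izx}\chi_{(-\infty,0)}(x)$, so every one of them vanishes identically on $(0,\infty)$; any nonzero $h$ supported on $(0,\infty)$ is then orthogonal to the whole family $\{g_z:z\in\bbC_+\}$, yet $\dot D$ is prime. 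The same defect occurs in Case (iii): by \eqref{defsub01} the upper half-plane elements vanish on the outgoing edge $(0,\infty)$, so orthogonality to them can never force $h_\infty=0$ there. Your parenthetical ``equivalently all $z$ in some set with an accumulation point'' fails for the same reason: $z\mapsto (h,g_z)$ is analytic on the resolvent set of a self-adjoint extension, and in Cases (i) and (iii) that extension has spectrum filling all of $\bbR$ (constant density, respectively a Poisson-kernel density), so the domain of analyticity is the disconnected set $\bbC_+\cup\bbC_-$ and vanishing on one component gives no information about the other. This is exactly why the paper's proof uses deficiency elements from both half-planes: the upper half-plane family gives \eqref{vtorr}, which kills $f_\infty$ on $(-\infty,0)$ and $f_\ell$ on $[0,\ell]$, while the lower half-plane family \eqref{defsub02} — supported on $(0,\infty)\sqcup[0,\ell]$ — gives \eqref{perv}, which is what finally forces $f_\infty=0$ on $(0,\infty)$. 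Without the lower half-plane relations your argument simply cannot reach that component.

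A secondary, smaller point: in Case (iii) your anticipated ``main obstacle'' of disentangling the half-line transform from the interval transform by growth estimates is unnecessary. Since both integrands in \eqref{vtorr} carry the same kernel $e^{sx}$, the two pieces glue into a single Laplace transform $\int_{-\infty}^{\ell}h(x)e^{sx}\,dx$ of the function equal to $\sqrt{1-k^2}\,f_\infty$ on $(-\infty,0)$ and $f_\ell$ on $[0,\ell]$, and one application of the Laplace uniqueness theorem kills both at once (here $k\neq 1$ is used, not to ``decouple'' anything, but only to conclude $f_\infty=0$ from $\sqrt{1-k^2}\,f_\infty=0$). Your alternative topological sketch (shift-invariance, Betti numbers) is too vague to repair the main gap.
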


\begin{proof}

In Cases (i) and (ii) the corresponding result  is  known (see \cite{AkG}).

Suppose therefore  that $\dot D$ is in Case (iii).

First, we  show that if $f\in L^2(\bbY)= L^2(\bbR)\oplus L^2 ((0, \ell))$ and
\begin{equation}\label{zerozero}
(f, g_z)=0, \quad \text{ for all }
 z\in \bbC\setminus \bbR,
\end{equation}
then necessarily $f=0$.

Indeed, suppose that  $f=(f_\infty ,f_\ell)^T$, with  $f_\infty \in L^2(\bbR)$  and $f_\ell\in L^2 ((0, \ell))$ and let \eqref{zerozero} hold.
In particular,  for all   $0\ne s\in  \bbR$
$$
(f, g_{is})=0.$$
Then, given
 the description of the deficiency subspaces \eqref{defsub01} and  \eqref{defsub02},
one gets that
\begin{equation}\label{vtorr}
\sqrt{1-k^2} \int_{-\infty}^0 f_\infty(x) e^{sx}dx+
\int_0^\ell f_\ell(x)e^{sx}dx=0 \quad \text{ for all } s>0,
\end{equation}
and
\begin{equation}\label{perv}
-\sqrt{1-k^2} \int_0^\infty f_\infty(x) e^{-sx}dx+
k\int_0^\ell f_\ell(x)e^{-sx}dx=0 \quad \text{ for all } s>0.
\end{equation}
Therefore, from \eqref{vtorr} it follows that
$$
\int_{-\infty}^\ell h(x)e^{sx}dx=0 \quad \text{ for all } s>0,
$$
where
$$
h(x)=\begin{cases}
\sqrt{1-k^2} f_\infty(x),&x<0\\
f_\ell(x),&x\in [0, \ell]\\
\end{cases}.
$$

By the
 uniqueness theorem for  the Laplace transformation
(see, e.g., \cite[Theorem 5.1]{Doet}, we have that
 $h(x)=0$ almost everywhere $x\in  \bbR$. Since $k\ne 1$ (see \eqref{dotdom0}), we have
$$f_\infty(x)=0,
\quad \text{ a.e. } x\in(-\infty, 0)
$$
and
$$f_\ell(x)=0, \quad \text{ a.e. } x\in [0,\ell].
$$
Then,
from \eqref{perv} it follows that
$$
\int_0^\infty f_\infty(x) e^{-sx}dx=0\quad \text{ for all } s>0.
$$
By the uniqueness theorem $ f_\infty(x)=0$ for $x\ge 0$ as well.
That is,  $$f=(f_\infty, f_\ell)^T=0.$$

Thus, \eqref{zerozero} implies $f=0$ and therefore, by Theorem \ref{prostota0} in Appendix B,  the differentiation operator  $\dot D$ in Case (iii)  is a prime symmetric operator
as well.

\end{proof}
\begin{remark}
 We remark that the  symmetric operator $\dot D$ in Case (iii)  determined by the boundary conditions \eqref{dotdom0} with $k=0$
is also a prime operator:  in this case  $\dot D$ is unitarily equivalent  to the symmetric differential operator in Case (i),
which is a prime operator by Lemma \ref{primeD}.
 \end{remark}

\begin{remark}\label{remdotd}
It is easy to see that   the   prime symmetric differentiation operator $\dot D$  on the metric graph $\bbY$ in Cases (i)-(iii) satisfies the semi-Weyl commutation relations in the form (cf. Hypothesis \ref{muhly})
\begin{equation}\label{ccrsym}
U_t^*\dot DU_t=\dot D+tI\quad \text{on }\quad \Dom (\dot D), \quad t\in \bbR,
\end{equation}
where $U_t=e^{-it \cQ}$ is the  unitary group generated
by
the operator $\cQ$ of multiplication by independent variable  on the graph $\bbY$.

To show  that the commutation relations \eqref{ccrsym} hold we proceed as follows.
Let   $\cA(x)$ denote a real-valued piecewise continuous function on $\bbY$.  We remark that  the operators
$\dot D$ and $\dot D+\cA(x)$ are unitarily equivalent.
Indeed,  let
$\phi(x)$ be  any solution to the differential equation
\begin{equation}\label{gauget}
\phi'(x)=\cA(x)
\end{equation}
on the edges of the graph. Since the graph  $\bbY$ is a connected tree, the function $\phi(x)$ is determined  up to a constant, and we may without loss require
that   $\phi$ vanishes at the origin of the graph $\bbY$, that is,
\begin{equation}\label{mihmih}
\phi(0)=0.
\end{equation}

Denote by $V$ the unitary local gauge transformation
\begin{equation}\label{gaugege}
(Vf)(x)=e^{i\phi(x)}f(x),  \quad f \in L^2(\bbY).
\end{equation}
Taking into account the boundary conditions \eqref{dotos},
\eqref{dotint}  and
\eqref{dotdom0}
one concludes that the domain of $\dot D$ is $V$-invariant,
that is,
 $$
V(\dom (\dot D)) =\dom (\dot D).
$$
Next, a simple computation shows that
\begin{equation}\label{dopol}
\dot D=V^*(\dot D+\mathcal{A}(x))V.
\end{equation}

In the particular case  of a constant (magnetic) potential
$
\cA(x)\equiv t,
$
 $t\in \bbR$,
 solving \eqref{gauget} with the boundary condition
\eqref{mihmih}
 on the graph $\bbY$,  one immediately concludes that  the unitary operator $V$ from \eqref{gaugege} is given by
$$
V=e^{it \cQ},
$$
and therefore \eqref{dopol} implies the commutation relations \eqref{ccrsym}.
\end{remark}

\section{The  magnetic Hamiltonian}\label{s5}

In  this section we explicitly describe the set of all self-adjoint (reference) and, more generally, quasi-selfadjoint extensions of the differentiation symmetric operators  $\dot D$
on the metric graph $\bbY$  in  Cases (i)-(iii) introduced in Section \ref{s4intro}.

\begin{theorem}\label{gengen} Suppose that the metric graph $\bbY$ is in one of the Cases $(i)$-$(iii)$ and let $\dot D$ be the symmetric differentiation operator given by
\eqref{dotos}, \eqref{dotint} and \eqref{dotdom0}, respectively.

Then the one-parameter family of  differentiation operators $D_{\Theta}$, $|\Theta|=1$ on the graph $\bbY$  in Cases $(i)$-$(iii)$
 with boundary conditions
\begin{align}
f_\infty (0+)&=-\Theta f_\infty (0-),\label{bcI}
 \\
 f_\ell(0)&=-\Theta f_\ell (\ell),\label{bcII}
\\
\begin{pmatrix}
f_\infty(0+)\\
f_\ell(0)
\end{pmatrix}&=\begin{pmatrix}
k& \sqrt{1-k^2} \Theta\\
 \sqrt{1-k^2}&-k\Theta
\end{pmatrix}
 \begin{pmatrix}
f_\infty(0-)\\
f_\ell(\ell)
\end{pmatrix}, \label{bcIII}
\end{align}
 respectively,  coincides with the set of all self-adjoint extensions of the symmetric operator $\dot D$.

Moreover, let $g_\pm$ be the deficiency elements of $\dot D$ referred to in Lemma \ref{defeff}.

Then
$$g=g_+-\varkappa g_-\in \dom (D_\Theta), \quad |\varkappa|=1,
$$
if and only if
$$
\Theta= F(\varkappa),
$$
where
\begin{equation}\label{kappatheta}
F(\varkappa)=
\begin{cases}\varkappa, & in\text{ Case } (i)\\
-\frac{\varkappa-e^{-\ell}}{e^{-\ell}\varkappa-1},& in\text{ Case } (ii)\\
-\frac{\varkappa-ke^{-\ell}}{ke^{-\ell}\varkappa-1},& in\text{ Case } (iii)
\end{cases}.
\end{equation}
\end{theorem}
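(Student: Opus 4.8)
The plan is to identify the domain of each $D_\Theta$ with a von Neumann self-adjoint extension of $\dot D$ by a direct computation of boundary values, which settles the main assertion and the ``Moreover'' part simultaneously. First I would record that in every one of the Cases (i)--(iii) the homogeneous boundary conditions \eqref{dotos}, \eqref{dotint}, \eqref{dotdom0} defining $\Dom(\dot D)$ imply the conditions \eqref{bcI}, \eqref{bcII}, \eqref{bcIII} for \emph{every} unimodular $\Theta$ (in Case (iii) one uses $f_\ell(\ell)=0$, whence the two rows of \eqref{bcIII} collapse to \eqref{dotdom0}). Together with the description of $\Dom((\dot D)^*)$ in Lemma \ref{domsopr}, this gives the inclusions $\dot D\subseteq D_\Theta\subseteq(\dot D)^*$, so each $D_\Theta$ is an intermediate extension.

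The core step is to test the boundary conditions on the deficiency elements of Lemma \ref{defeff}. For $g=g_+-\varkappa g_-$ I would read off the boundary values from \eqref{deftip1}, \eqref{deftip2} and \eqref{hyddef1}--\eqref{hyddef2} and substitute. In Case (i) one finds $g_\infty(0-)=\sqrt2$ and $g_\infty(0+)=-\varkappa\sqrt2$, so \eqref{bcI} holds iff $\Theta=\varkappa$. In Case (ii), writing $C=\sqrt2/\sqrt{e^{2\ell}-1}$, one gets $g_\ell(0)=C(1-\varkappa e^\ell)$ and $g_\ell(\ell)=C(e^\ell-\varkappa)$, so \eqref{bcII} reads $1-\varkappa e^\ell=-\Theta(e^\ell-\varkappa)$, which rearranges to $\Theta=-(\varkappa e^\ell-1)/(\varkappa-e^\ell)=F(\varkappa)$. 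In Case (iii), after dividing by the common normalization and by $\sqrt{1-k^2}$, the first row of \eqref{bcIII} becomes $\varkappa e^\ell=k+\Theta(e^\ell-\varkappa k)$, i.e. $\Theta=(\varkappa e^\ell-k)/(e^\ell-\varkappa k)=F(\varkappa)$, and a short computation shows the second row yields the same relation. Thus in all three cases $g_+-\varkappa g_-\in\Dom(D_\Theta)$ precisely when $\Theta=F(\varkappa)$, which is the ``Moreover'' claim. Since \eqref{bcI}--\eqref{bcIII} amount to a single nontrivial scalar condition on $\Span\{g_+,g_-\}$, that condition cuts the plane $\Span\{g_+,g_-\}$ down to the line $\Span\{g_+-\varkappa g_-\}$; combined with the inclusion $\Dom(\dot D)\subseteq\Dom(D_\Theta)$ it yields $\Dom(D_{F(\varkappa)})=\Dom(\dot D)\,\dot+\,\Span\{g_+-\varkappa g_-\}$.

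By the von Neumann extension theory (Appendix B), the right-hand side is exactly the self-adjoint extension of $\dot D$ with parameter $\varkappa$, $|\varkappa|=1$, and as $\varkappa$ runs over the circle these exhaust all self-adjoint extensions; in particular each $D_{F(\varkappa)}$ is self-adjoint. To conclude that the whole family $\{D_\Theta:|\Theta|=1\}$ is obtained, it remains to note that $F$ is a bijection of the unit circle: in Cases (ii) and (iii) one has $F(\varkappa)=(\varkappa-a)/(1-a\varkappa)$ with $a=e^{-\ell}$ and $a=ke^{-\ell}$ respectively, both real in $(0,1)$, so $F$ is a Blaschke factor, hence an automorphism of the closed disk $\overline{\bbD}$ preserving the boundary circle, while Case (i) is the identity. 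Therefore $\Theta\mapsto D_\Theta$ ranges over all self-adjoint extensions, proving the theorem. I expect the only delicate bookkeeping to be Case (iii): one must check both that the two rows of the matrix condition \eqref{bcIII} are consistent and reduce, modulo the relation \eqref{domsop} cutting out $\Dom((\dot D)^*)$, to the single constraint $\Theta=F(\varkappa)$, and that the resulting $D_\Theta$ indeed has a one-dimensional defect rather than a strictly larger domain.
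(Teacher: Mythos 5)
Your proposal is correct, but it reaches the first assertion of the theorem by a genuinely different route than the paper. The paper disposes of Cases (i)--(ii) by citation to \cite{AkG} and, in Case (iii), invokes the quantum-graph self-adjointness criterion of \cite[Theorem 2.2.1]{BK}: after identifying the right endpoint of $[0,\ell]$ with the origin, self-adjointness of $D_\Theta$ is read off from unitarity of the bond matrix in \eqref{bcIII}, and the converse (that every self-adjoint extension of $\dot D$ has this form) follows by showing that any unitary bond matrix compatible with $\dot D$ must have the stated second column. Only afterwards does the paper compute the boundary values of $g_+-\varkappa g_-$ to prove the ``Moreover'' part. You instead let that single deficiency-element computation carry the entire theorem: you check $\dot D\subseteq D_\Theta\subseteq(\dot D)^*$, show that modulo $\Dom(\dot D)$ the conditions \eqref{bcI}--\eqref{bcIII} impose one nontrivial scalar constraint on $\Span\{g_+,g_-\}$, conclude $\Dom(D_\Theta)=\Dom(\dot D)\,\dot+\,\Span\{g_+-\varkappa g_-\}$ with $\varkappa=F^{-1}(\Theta)$, and then appeal to von Neumann's parametrization of self-adjoint extensions together with the observation that $F$ is a Blaschke automorphism of $\overline{\bbD}$, hence a bijection of the unit circle. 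This gives a unified, self-contained treatment of all three cases and avoids the graph-theoretic criterion altogether; its only cost is the bookkeeping you correctly flag, namely that in Case (iii) the two rows of \eqref{bcIII} are consistent, $\gamma$-independent on the quotient $\Dom((\dot D)^*)/\Dom(\dot D)$, and reduce to the single relation $\Theta=(\varkappa e^\ell-k)/(e^\ell-\varkappa k)$ --- which does check out. A small bonus of your version: your Case (iii) boundary value $g_\ell(\ell)=a(e^\ell-\varkappa k)$ is the correct one, whereas the paper's displayed value $g_\ell(\ell)=a(1-ke^{\ell}\varkappa)$ is a typo (it is actually $g_\ell(0)$), which propagates to the intermediate expression $\Theta=(e^{\ell}\varkappa-k)/(1-ke^{\ell}\varkappa)$ there; the final formula \eqref{kappatheta} in the statement is nevertheless correct, as your computation confirms.
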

\begin{proof} If $\bbY$ is in Cases (i)-(ii), the first assertion of the theorem   is well known (see, e.g., \cite {AkG}).

 If the graph $\bbY$
is
in Case (iii) and
$$Y=(-\infty, 0)\sqcup (0,\infty)\sqcup (0,\ell), $$ one can identify the right endpoint of the edge $[0,\ell]$  of the graph $\bbY$  with its  origin  thus making the  number of incoming   and outgoing bonds equal. Since the incoming  $ \begin{pmatrix}
f_\infty(0-)\\
f_\ell(\ell)
\end{pmatrix} $ and outgoing $\begin{pmatrix}
f_\infty(0+)\\
f_\ell(0)
\end{pmatrix} $ data
are related by the unitary matrix $\sigma$ with
$$\sigma=\begin{pmatrix}
k& \sqrt{1-k^2} \Theta\\
 \sqrt{1-k^2}&-k\Theta
\end{pmatrix},
$$
from  \cite[Theorem 2.2.1]{BK} it follows that the operator $D$ is self-adjoint.
Recall that this theorem states that the differentiation operator $D=i\frac{d}{dx}$  on an oriented graph  is self-adjoint if and only if for each (finite) vertex $v$ the numbers of incoming and outgoing bonds are equal and the vectors $F^{\text{in}}(v)$ and $F^{\text{out}}(v)$ composed from the values of $f\in \dom (D)$ attained by $f$ from the incoming and outgoing bonds satisfy the condition
$$
F^{\text{out}}(v)=\sigma (v)F^{\text{in}}(v),
$$
where $\sigma$ is a unitary matrix.
Next, if $f\in \dom(\dot D)$,  the boundary conditions \eqref{dotdom0}  imply that the boundary conditions \eqref{bcIII} also hold, and therefore the self-adjoint operator $D$ extends $\dot D$.

Conversely, if $D$ is a self-adjoint extension of $\dot D$,  by  \cite[Theorem 2.2.1]{BK} the boundary conditions
 \begin{align}
\begin{pmatrix}
f_\infty(0+)\\
f_\ell(0)
\end{pmatrix}&=\sigma' \begin{pmatrix}
f_\infty(0-)\\
f_\ell(\ell)
\end{pmatrix}, \label{bcV}
\end{align}
hold, where $\sigma'$ is a unitary matrix.
The requirement that the self-adjoint operator $D$ extends $\dot D$ shows that $\sigma'$ has to be of  the form
$$\sigma'=\begin{pmatrix}
k&\alpha\\
 \sqrt{1-k^2}&\beta
\end{pmatrix}
$$
for some $\alpha$ and $\beta$. Since $\sigma'$ is unitary, we have that
$$
\sigma'=\begin{pmatrix}
k& \sqrt{1-k^2} \Theta\\
 \sqrt{1-k^2}&-k\Theta
\end{pmatrix}
$$
for some $\Theta$, $|\Theta|=1$, which completes the proof of the first assertion of the theorem.

To  prove \eqref{kappatheta} we argue as follows.

We use the following notation $g=g_\infty$ in Case (i),  $g=g_\ell$ in Case (ii),  and finally, $g=(g_\infty, g_\ell)^T$ in Case (iii) as introduced in Section 4.

From the representation  \eqref{deftip1} we get that
$$
g_\infty(x)=
\sqrt{2}
e^{x} \chi_{(-\infty, 0)}(x)
-\varkappa\sqrt{2}
e^{-x}\chi_{(0,\infty)}(x), \quad x\in \bbR,
$$
so that
$$
g_\infty(0+)=-\varkappa \sqrt{2} \quad \text{while}\quad  g_\infty(0-)=\sqrt{2}.
$$
That is, the element $ g$ satisfied boundary condition \eqref{bcI} with $\Theta=\varkappa$ which proves \eqref{kappatheta}  in Case (i).

In Case (ii), we use \eqref{deftip2} to  see that
$$
g_\ell(0)=\frac{\sqrt{2}}{\sqrt{e^{2\ell}-1}}(1-\varkappa e^{\ell})
$$
and
$$
g_\ell(\ell)=\frac{\sqrt{2}}{\sqrt{e^{2\ell}-1}}(e^{\ell}-\varkappa),
$$
which shows that the requirement $g_+-\varkappa g_-\in \dom (D_\Theta)$ means that
$$\Theta=-\frac{1-\varkappa e^{\ell}}{e^{\ell}-\varkappa}=-\frac{\varkappa- e^{-\ell}}{\varkappa e^{-\ell}-1}.
$$

 In Case (iii), from the representations for the deficiency elements $g_\pm$ \eqref{hyddef1} and  \eqref{hyddef2} it follows
 \begin{align*}
g_\infty (0+)&=a \sqrt{1-k^2} e^{\ell}\varkappa,
 \\
 g_\infty (0-)&=a \sqrt{1-k^2},
 \\
 g_\ell(\ell)&=a(1-ke^{\ell}\varkappa),
\end{align*}
 where
 $$a=\frac{\sqrt{2}}{\sqrt{e^{2\ell}-k^2}} .
 $$
 Since $g\in \Dom (D_\Theta)$, by  \eqref{bcIII} we have
$$
g_\infty(0+)=
kg_\infty(0-)+ \sqrt{1-k^2} \Theta g_\ell(\ell),
 $$
 which implies
 $$
 \sqrt{1-k^2} e^{\ell}\varkappa=k\sqrt{1-k^2} + \Theta \sqrt{1-k^2}(1- ke^{\ell}\varkappa).
 $$
Therefore
 $$\Theta= \frac{e^{\ell}\varkappa-k}{1- ke^{\ell}\varkappa}=-\frac{\varkappa-ke^{-\ell}}{ke^{-\ell}\varkappa-1}.
 $$

\end{proof}

\begin{remark}\label{remsamd}
Notice that in Case (i)
the self-adjoint operator   $ D_\Theta$  satisfies the semi-Weyl commutation relations
\begin{equation}\label{ccrsym1}
U_t^* D_\Theta U_t= D_\Theta+tI\quad \text{on }\quad \Dom ( D_\Theta), \quad t\in \bbR, \quad |\Theta|=1.
\end{equation}
Here $U_t=e^{-it \cQ}$ is the unitary  group generated by the self-adjoint
 operator $\cQ$  of multiplication by independent variable   on the graph $\bbY$.

However,  if the  graph $\bbY$ is in Cases (ii) and (iii), then we only have  the commutation relations with respect to a discrete subgroup $$\bbZ\ni n\mapsto U_{n\frac{2\pi}{\ell}}$$
of the group $U_t$. That is,
\begin{equation}\label{per}
U_{n\frac{2\pi}{\ell}}^* D_\Theta U_{n\frac{2\pi}{\ell}}= D_\Theta+n\frac{2\pi}{\ell}I\quad \text{on }\quad \Dom ( D_\Theta), \quad n\in \bbZ, \quad |\Theta|=1.
\end{equation}

This phenomenon has the following topological explanation:
the set of self-adjoint  extensions  $D_\Theta$, $|\Theta|=1$, is in one-to-one correspondence  with the  unit circle $\bbT$. The map
$$
D_\Theta\to D_{\Theta_t}=U_t^* D_\Theta U_t,
$$
determines the flow  $\Theta\mapsto \Theta_t$  on $\bbT$. Using boundary conditions \eqref{bcI}-\eqref{bcIII}
it is straightforward to see that
$$
\Theta_t=\Theta \begin{cases}
1, & \text{in  Case } (i)\\
e^{i\ell t},& \text{in Cases } (ii)\text{ and } (iii)\\
 \end{cases},
$$
so that in Cases (ii) and (iii)   $\Dom (D_\Theta)$ is not invariant with respect to the whole group $U_t$, $t\in \bbR$,
but only to its subgroup   $U_{n\frac{2\pi}{\ell}}$, $n\in \bbZ$.
In particular, the flow  $\Theta\mapsto \Theta_t$ has
 no fixed point, whenever the graph $\bbY$ is in Cases (ii) and (iii) (notice that the Euler characteristics of $\bbT$ is zero and hence Proposition \ref{Lef} is not applicable).
In this  regard, it is worth mentioning
the fall to the center  ``catastrophe" in Quantum  Mechanics \cite{LL,PP,PZ}.
 For a related discussion of the Efimov Effect in three-body systems  see \cite{Ef,FM,FMin}
 where the collapse in a three-body system with  point interactions has been discovered, also see \cite{MM} and references therein.

More generally, suppose that  $\cA(x)$ is  a real-valued piecewise continuous function on $\bbY$. Prescribing the magnetic potential $\cA(x)$ to all edges  of the graph, consider the {\it magnetic} differentiation
 operator $ D_\Theta+\cA(x)$.

 If the graph $\bbY$ is in Case (i), the local gauge transformation
 $$
 f(x)\longmapsto e^{i\phi(x)}f(x),  \quad f \in L^2(\bbY),
 $$
where  $\phi(x)$ is  a solution to the differential equation
$$
\phi'(x)=\cA(x), \quad x\in \bbY,
$$
$$
\phi(0)=0,
$$
eliminates the magnetic potential and one shows that
 the self-adjoint operators
$ D_\Theta$ and $ D_\Theta+\cA(x)$ are unitarily equivalent, with the unitary equivalence
performed by the unitary operator
\begin{equation}
(Vf)(x)=e^{i\phi(x)}f(x),  \quad f \in L^2(\bbY).
\end{equation}
Clearly
  $\dom (D_\Theta)$ is $V$-invariant, that is,
 $$
V(\dom (D_\Theta)) =\dom (D_\Theta),
$$
and therefore
\begin{equation}\label{magham}
 D_\Theta=V^*( D_\Theta+\mathcal{A}(x))V.
\end{equation}

If the graph $\bbY$ is in Cases (ii) and (iii),
the gauge transformation still eliminates the magnetic potential but changes the boundary conditions.
That is,
$$
V(\dom (D_\Theta)) =\dom (D_{\Theta\cdot  e^{-i\Phi}}),
$$
where
$$
\Phi=\int_0^\ell \cA(s)ds,\quad \text{the flux of the magnetic field},
$$
and hence
\begin{equation}\label{flowrel}
 D_{\Theta \cdot e^{-i\Phi}}=V^*( D_\Theta+\mathcal{A}(x))V.
\end{equation}

Notice, that in the particular case of a constant potential
$$\cA(x)\equiv t,
$$
 one gets the commutation relations \eqref{ccrsym1} and \eqref{per} as a corollary of \eqref{magham} and \eqref{flowrel}, respectively.
\end{remark}

Having in mind the unitary equivalences \eqref{magham} and \eqref{flowrel}  we adopt the following definition.

\begin{definition} The self-adjoint differentiation operator $D_\Theta$ for $|\Theta|=1$ referred to in Theorem \ref{gengen} will be called the magnetic Hamiltonian.
\end{definition}

Notice that
in Cases (ii) and (iii),
the boundary conditions \eqref{bcII} and \eqref{bcIII} are not local vertex conditions.
Bearing in mind applications in quantum mechanics, in Cases (ii) and (iii) one can  identify the end points of the interval $[0, \ell]$ to
get a  one-cycle graph $\overline{\bbY}$.
As it has been explained  in Remark \ref{betti}, in Case (iii)    one can also assign two additional vertices to the external edges at $\pm \infty$ of the one-cycle graph $\overline{\bbY}$,
so that the one-cycle graphs in Case (ii) and (iii) have
 the Euler characteristics $\chi (\overline{\bbY}) $ zero with    the corresponding first Betti numbers equal  to  one.
 In this case, the graph   $\overline{\bbY}$ can be considered  to be  the  Aharonov-Bohm ring,
  the configuration space for the quantum system
with  the magnetic Hamiltonian $D_\Theta$. This system  describes  a (massless)  quantum particle moving on the edges of the graph
and   the argument of the parameter $\Theta$ that determines the magnetic Hamiltonian $D_\Theta$  can be interpreted to be the flux of the (zero) magnetic field through the cycle (see \eqref{flowrel}).
For a related  information about graphs with Euler characteristic zero in the context of the inverse scattering theory we refer to \cite{Kur}.

Our next goal is to obtain an explicit description of all quasi-selfadjoint extensions  of the symmetric differentiation operators $\dot D$ introduced in Section \ref{s4intro}.

\begin{theorem}\label{allset} The differentiation operators $D_\Theta$, $\Theta\in \bbC\cup\{\infty\}$ with     $|\Theta|\ne 1$ referred to in Theorem \ref{gengen}    with boundary conditions \eqref{bcI}-\eqref{bcIII} is in one to one correspondence with the set of all quasi-selfadjoint extensions of the symmetric operator $\dot D$.
  \begin{remark}
  If $\Theta=\infty$, the boundary conditions   \eqref{bcI}-\eqref{bcIII} in Cases (i)-(iii) should be understood as follows
 \begin{align}
 f_\infty (0-)&=0,\label{bcI'}
 \\
 f_\ell(\ell)&=0 ,\label{bcII'}
\\
\begin{pmatrix}
k&\sqrt{1-k^2}\\0&0
\end{pmatrix}
\begin{pmatrix}
f_\infty(0+)\\
f_\ell(0)
\end{pmatrix}&=\begin{pmatrix}
f_\infty(0-)\\
f_\ell(\ell)
\end{pmatrix}
 \label{bcIII'},
\end{align}
respectively.

Notice that the boundary condition \eqref{bcIII'}
can be justified as follows.
Rewrite \eqref{bcIII} as
$$
\begin{pmatrix}
k& \sqrt{1-k^2} \Theta\\
 \sqrt{1-k^2}&-k\Theta
\end{pmatrix}^{-1}
\begin{pmatrix}
f_\infty(0+)\\
f_\ell(0)
\end{pmatrix}=
 \begin{pmatrix}
f_\infty(0-)\\
f_\ell(\ell)
\end{pmatrix}
$$
and observe that
\begin{align*}
\begin{pmatrix}
k&\sqrt{1-k^2}\\0&0
\end{pmatrix}
&=\lim_{\Theta\to \infty}\begin{pmatrix}
k& \sqrt{1-k^2} \Theta\\
 \sqrt{1-k^2}&-k\Theta
\end{pmatrix}^{-1}\\
\\&= \lim_{\Theta\to \infty}\frac{1}{-\Theta}\begin{pmatrix}
-k\Theta &-\sqrt{1-k^2}\Theta\\
-\sqrt{1-k^2}&k
\end{pmatrix}
\end{align*}
to get \eqref{bcIII'} as a limiting case.

Notice that the boundary conditions \eqref{bcIII'} can also be rewritten as
$$
kf_\infty(0+)+\sqrt{1-k^2}f_\ell(0)=f_\infty(0-),
$$
$$
f_\ell(\ell)=0.
$$

 \end{remark}

\end{theorem}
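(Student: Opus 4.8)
The plan is to extend the computation underlying Theorem \ref{gengen} from the self-adjoint regime $|\Theta|=1$ to arbitrary $\Theta$, and then to match the resulting operators against von Neumann's parameterization of all extensions lying between $\dot D$ and $(\dot D)^*$. Concretely, I would first verify that for every $\Theta\in\bbC\cup\{\infty\}$ the boundary conditions \eqref{bcI}-\eqref{bcIII} (read as \eqref{bcI'}-\eqref{bcIII'} when $\Theta=\infty$) define an operator $D_\Theta$ satisfying $\dot D\subset D_\Theta\subset(\dot D)^*$. The inclusion $D_\Theta\subset(\dot D)^*$ is automatic in Cases (i) and (ii), where $\Dom((\dot D)^*)$ is all of $W_2^1(\bbY)$ by Lemma \ref{domsopr}; in Case (iii) one substitutes \eqref{bcIII} into the defining relation \eqref{domsop} and checks, by a short cancellation, that \eqref{domsop} holds for every $\Theta$. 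The inclusion $\dot D\subset D_\Theta$ holds because any $f\in\Dom(\dot D)$ obeys the stronger vertex conditions \eqref{dotos}, \eqref{dotint}, \eqref{dotdom0}, which force \eqref{bcI}-\eqref{bcIII} for all $\Theta$; hence each $D_\Theta$ is a quasi-selfadjoint extension.

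Next I would invoke von Neumann's description of the extensions of a symmetric operator with deficiency indices $(1,1)$: every $\widehat D$ with $\dot D\subset\widehat D\subset(\dot D)^*$ arises by adjoining to $\Dom(\dot D)$ a one-dimensional subspace of $\Ker((\dot D)^*-iI)\dot+\Ker((\dot D)^*+iI)$, and these subspaces are parameterized bijectively by $\varkappa\in\bbC\cup\{\infty\}$ through $g_+-\varkappa g_-\in\Dom(\widehat D)$ (with $g_-\in\Dom(\widehat D)$ for $\varkappa=\infty$), where $g_\pm$ are the deficiency elements of Lemma \ref{defeff}. The key observation is that the algebraic computation establishing \eqref{kappatheta} in the proof of Theorem \ref{gengen} never used the constraint $|\varkappa|=1$; repeating it verbatim shows that the extension with von Neumann parameter $\varkappa$ coincides with $D_{F(\varkappa)}$, with $F$ given by \eqref{kappatheta}. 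Consequently $\varkappa\mapsto D_{F(\varkappa)}$ exhausts all quasi-selfadjoint extensions of $\dot D$ as $\varkappa$ ranges over $\bbC\cup\{\infty\}$.

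It then remains to analyze how $F$ acts on the unit circle. In Case (i), $F$ is the identity, so $|\Theta|=|\varkappa|$. In Cases (ii) and (iii) I would recognize $F(\varkappa)=\frac{\varkappa-a}{1-a\varkappa}$ with the real parameter $a=e^{-\ell}$, respectively $a=ke^{-\ell}$, obeying $0<a<1$; this is a \emph{Blaschke factor} with real $a$, hence an automorphism of $\bbC\cup\{\infty\}$ that fixes the unit circle setwise (a one-line check, using $|1-a\varkappa|=|\varkappa-a|$ when $|\varkappa|=1$) and therefore maps $\{|\varkappa|\ne1\}$ bijectively onto $\{|\Theta|\ne1\}$. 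Since Theorem \ref{gengen} identifies $|\Theta|=1$ with the self-adjoint extensions, the remaining operators $D_\Theta$, $|\Theta|\ne1$, are exactly the (non-selfadjoint) quasi-selfadjoint extensions, and $\Theta\mapsto D_\Theta$ is a bijection onto this set because $F$ is a bijection.

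The step I expect to require the most care is the bookkeeping at the singular parameters, namely $\varkappa=\infty$ and the pole of $F$ (where $\Theta=\infty$), which must be reconciled with the limiting boundary conditions \eqref{bcI'}-\eqref{bcIII'}. For instance, in Case (ii) one should confirm that the pole $\varkappa=e^{\ell}$ yields precisely the condition $f_\ell(\ell)=0$ of \eqref{bcII'}, with $|\varkappa|=e^{\ell}>1$ matching $|\Theta|=\infty\ne1$, so that the correspondence remains a bijection at the boundary of the parameter space.
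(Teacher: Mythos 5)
Your proposal is correct and follows essentially the same route as the paper's proof: both rest on von Neumann's parameterization of the intermediate extensions of $\dot D$ by $\varkappa\in\bbC\cup\{\infty\}$ together with the fractional-linear correspondence $\Theta=F(\varkappa)$ of \eqref{kappatheta}, with the exceptional values ($\varkappa=\infty$ and the pole of $F$) matched to the $\Theta=\infty$ boundary conditions. The only organizational difference is that the paper re-derives the correspondence in Case (iii) by explicitly computing the bond matrix for a general element of $\Dom(D^{\varkappa})$, whereas you recycle the algebra already done in Theorem \ref{gengen} and make explicit what the paper leaves implicit, namely that $F$ is a Blaschke factor with real parameter in $(0,1)$, hence preserves the unit circle and carries $\{|\varkappa|\ne 1\}$ bijectively onto $\{|\Theta|\ne 1\}$.
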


\begin{proof} If $\bbY$ is in Case (i) or (ii), the corresponding result is well known (see, e.g., \cite{AkG}).

Suppose that the metric graph  $\bbY$ is in Case (iii).  We will describe the required one to one correspondence explicitly.

Denote by  $D^\varkappa$  ($\varkappa \in \bbC$, $|\varkappa|\ne 1$) a quasi-selfadjoint extension of $\dot D$ such that
\begin{equation}\label{fin}
\dom (D^\varkappa)=\dom (\dot D)+\text{span}\{g_+-\varkappa g_-\},
\end{equation}
 where the deficiency elements $g_\pm$ are given by \eqref{hyddef1} and \eqref{hyddef2}.

If
\begin{equation}\label{domtheta}
f =\begin{pmatrix}
f_\infty \\
f_\ell
\end{pmatrix} \in \Dom (D^\varkappa),
\end{equation}
then
 \begin{align*}
f_\infty (x)&=\alpha \sqrt{1-k^2}e^x\chi_-(x)+\alpha \varkappa \sqrt{1-k^2}e^{\ell-x}\chi_+(x)+h_\infty (x), \quad x \in \bbR,
\\
f_\ell(x)&=\alpha e^x-\alpha \varkappa  k e^{\ell-x}+h_\ell(x), \quad x\in [0, \ell),
\end{align*}
for some $\alpha \in \bbC$ and some
 $$h=
 \begin{pmatrix}
h_\infty \\
h_\ell
\end{pmatrix} \in \dom (\dot D).
$$
In particular,
\begin{align*}
f_\infty (0-)&=\alpha \sqrt{1-k^2} +f_\infty (0-),\\
f_\infty(0+)&=\alpha \varkappa   \sqrt{1-k^2} e^\ell +h_\infty(0+),
\end{align*}
and
\begin{align*}
f_\ell(0)&= \alpha (1-k\varkappa  e^\ell)+h_\ell(0),\\
f_\ell(\ell)&= \alpha( e^\ell -k\varkappa )+h_\ell(\ell).
\end{align*}

Since $h\in \dom( \dot D)$, the boundary conditions \eqref{dotdom0} hold and therefore
\begin{align*}
f_\infty(0-)&=\alpha  \sqrt{1-k^2} +h_\infty (0-),\\
f_\infty (0+)&=\alpha \varkappa    \sqrt{1-k^2} e^\ell +kh_\infty (0-),
\\ f_\ell(0)&= \alpha (1  -k \varkappa  e^\ell) +  \sqrt{1-k^2}h_\infty (0-),
\\ f_\ell(\ell)&= \alpha( e^\ell -k \varkappa ).
\end{align*}
Equivalently,
$$
\begin{pmatrix}
f_\infty(0+)\\
f_\ell(0)
\end{pmatrix}=
\begin{pmatrix}
\varkappa    \sqrt{1-k^2} e^\ell &k\\
1  -k \varkappa  e^\ell& \sqrt{1-k^2}
\end{pmatrix}
\begin{pmatrix}
\alpha\\
h_\infty (0-)
\end{pmatrix}
$$
and
$$
\begin{pmatrix}
f_\infty(0-)\\
f_\ell(\ell)
\end{pmatrix}=
\begin{pmatrix} \sqrt{1-k^2}
&1\\
 e^\ell  -k \varkappa  &0
\end{pmatrix}
\begin{pmatrix}
\alpha\\
h_\infty (0-)
\end{pmatrix}.
$$
If  $e^\ell-k\varkappa\ne0$, one obtains that
\begin{equation}\label{dombond}
\begin{pmatrix}
f_\infty(0+)\\
f_\ell(0)
\end{pmatrix}=S \begin{pmatrix}
f_\infty(0-)\\
f_\ell(\ell)
\end{pmatrix},
\end{equation}
 where
\begin{align*}
S&=\begin{pmatrix}
\varkappa    \sqrt{1-k^2} e^\ell &k\\
1  -k \varkappa  e^\ell& \sqrt{1-k^2}
\end{pmatrix} \begin{pmatrix} \sqrt{1-k^2}
&1\\
 e^\ell  -k \varkappa  &0
\end{pmatrix}^{-1}
\\&
=\begin{pmatrix}
k&\sqrt{1-k^2} \Theta\\
 \sqrt{1-k^2}&-k\Theta
\end{pmatrix}.
\end{align*}
Combining  \eqref{domtheta}, \eqref{dombond}  and \eqref{bcIII} shows that
$$
D^\varkappa =D_\Theta,
$$
where
\begin{equation}\label{autoc}
\Theta=-\frac{\varkappa  e^\ell-k}{k\varkappa-e^{\ell} }.
\end{equation}
If  $e^\ell -k\varkappa =0$,  then necessarily
$
f_\ell(\ell)=0
$
and
\begin{align*}
kf_\infty(0+)+\sqrt{1-k^2}f_\ell(0)&=k\left (\alpha \varkappa \sqrt{1-k^2}e^\ell+kh_\infty(0-)\right )
\\&+\sqrt{1-k^2}\left (\alpha (1  -k \varkappa  e^\ell) +  \sqrt{1-k^2}h_\infty (0-)\right )
\\&
=\alpha \sqrt{1-k^2} +h_\infty (0-)
\\&=f_\infty (0-),
\end{align*}
which shows that boundary conditions \eqref{bcIII'} holds ($\Theta=\infty$, formally).

It remains to consider the case of the quasi-selfadjoint extension $D^\infty$
defined on
\begin{equation}\label{finsing}
\dom (D^\infty)=\dom (\dot D)+\text{span}\{g_-\},
\end{equation}
which  corresponds to  the infinite value of the von Neumann  parameter $\varkappa $ ($\varkappa =\infty$).

If \eqref{finsing} holds ($\varkappa =\infty$), a similar computation shows that the corresponding quasi-selfadjoint extension corresponds to the boundary condition
\eqref{bcIII} with
$$
\Theta=-\frac{e^\ell}{k},
$$
which is well defined ($k\ne0$  by the  hypothesis).

Notice that  \eqref{autoc} gives the link between  the boundary condition parameter $\Theta$ and the von Neumann extension parameter
$\varkappa $ from \eqref{fin} and  thus establishes the required correspondence.

\end{proof}

\begin{remark}\label{remAA}
Observe that in Case (i)
the operator   $ D_\Theta$  satisfies the semi-Weyl commutation relations
\begin{equation}\label{reschebnia}
U_t^* D_\Theta U_t= D_\Theta+tI\quad \text{on }\quad \Dom ( D_\Theta), \quad t\in \bbR,
\end{equation}
for all $\Theta\in \bbC\cup\{\infty\}$,
where $U_t=e^{-it \cQ}$ is the unitary  group generated by the self-adjoint
 operator $\cQ$  of multiplication by independent variable   on the graph $\bbY$.

 If the  graph $\bbY$ is in Cases (ii) and (iii), the commutation relations  \eqref{reschebnia} hold only if $\Theta=0$ or $\Theta=\infty$.
 Otherwise, we only have  the commutation relations with respect to the discrete subgroup $\bbZ\ni n\mapsto U_{n\frac{2\pi}{\ell}}$, cf. Remark \ref{remsamd},
\begin{equation}\label{pergunt}
U_{n\frac{2\pi}{\ell}}^* D_\Theta U_{n\frac{2\pi}{\ell}}= D_\Theta+n\frac{2\pi}{\ell}I\quad \text{on }\quad \Dom ( D_\Theta), \quad n\in \bbZ, \quad |\Theta|=1.
\end{equation}
\end{remark}

It is interesting to notice
that   the metric graph
$\bbY=(-\infty, 0)\sqcup (0,\infty)\sqcup (0,\ell) $
in Case (iii)
 serves as the configuration space
for a  minimal {\it self-adjoint dilation}  of almost all  (with the only one exception)  maximal dissipative differentiation operators on the finite interval
$(0,\ell)$.  Actually, the corresponding self-adjoint dilations coincide with the set of magnetic Hamiltonians $D_\Theta$, $|\Theta|=1$, in Case (iii).

\begin{theorem}[{cf. \cite{PavDil}}]\label{dilthm} The self-adjoint operator $D_\Theta$, $|\Theta|=1$, on the metric graph  $\bbY=(-\infty, 0)\sqcup (0,\infty)\sqcup (0,\ell) $ in Case $(iii)$
with the boundary conditions \eqref{bcIII} $(0<k<1)$
is a $($minimal$)$ self-adjoint dilation of the maximal dissipative differentiation operator $ \widehat d_\Theta$ on its subgraph
     $\bbK=(0,\ell)$ determined by the
 boundary condition
\begin{equation}\label{domneinv}
\Dom (\widehat d_\Theta)=\{f_\ell \in W_2^1((0, \ell))\, |\, f_\ell(0)=-k\Theta f_\ell(\ell)\}.
\end{equation}

That is,
\begin{equation}\label{dil}
P(D_\Theta-zI)^{-1}|_K=(\widehat d_\Theta-zI)^{-1}, \quad z\in \bbC_-,
\end{equation}
where  $P$ is  the orthogonal projection from $L^2(\bbY)$ onto the subspace  $K=L^2(\bbK)=L^2((0,\ell))$.
In particular,
$$
e^{it\widehat d_\Theta}=Pe^{itD_\Theta}|_K, \quad t\ge 0.
$$

\end{theorem}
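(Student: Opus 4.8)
The plan is to prove the compression identity \eqref{dil} by solving the resolvent equation $(D_\Theta-zI)u=\phi$ edge by edge, and then to extract the semigroup formula and the minimality of the dilation from it. Before that I would record the two facts that make the statement sensible: $D_\Theta$ is self-adjoint by Theorem \ref{gengen}, so $(D_\Theta-zI)^{-1}$ exists for every $z\in\bbC_-$; and a single integration by parts on $(0,\ell)$ gives $\Im(\widehat d_\Theta f,f)=\tfrac12(1-k^2)|f_\ell(\ell)|^2\ge 0$, using $|f_\ell(0)|=k|f_\ell(\ell)|$ from \eqref{domneinv} and $|\Theta|=1$. Since $0<k<1$, the operator $\widehat d_\Theta$ is a properly dissipative quasi-selfadjoint extension of the (deficiency $(1,1)$) symmetric differentiation $\dot d$ on $(0,\ell)$, hence maximal dissipative, so its spectrum lies in $\overline{\bbC_+}$ and $(\widehat d_\Theta-zI)^{-1}$ likewise exists for $z\in\bbC_-$.

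The core computation is as follows. Fix $z\in\bbC_-$ and $\phi\in K=L^2((0,\ell))$, extended by zero to $L^2(\bbY)$, and solve $(D_\Theta-zI)u=\phi$ with $u\in\Dom(D_\Theta)$. On the incoming edge $(-\infty,0)$ the source vanishes, so $u=c\,e^{-izx}$; since $\Im z<0$ this exponential is not square-integrable at $-\infty$ unless $c=0$, which forces $u\equiv 0$ there and in particular $f_\infty(0-)=0$. Substituting $f_\infty(0-)=0$ into the vertex conditions \eqref{bcIII}, the second row collapses to $f_\ell(0)=-k\Theta f_\ell(\ell)$, which is precisely the boundary condition \eqref{domneinv} defining $\Dom(\widehat d_\Theta)$, while the first row merely fixes the outgoing amplitude $f_\infty(0+)=\sqrt{1-k^2}\,\Theta f_\ell(\ell)$ (automatically $L^2$ at $+\infty$ because $\Im z<0$). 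Hence the inner component $u|_{(0,\ell)}=Pu$ lies in $\Dom(\widehat d_\Theta)$ and solves $(\widehat d_\Theta-zI)(Pu)=\phi$, so $Pu=(\widehat d_\Theta-zI)^{-1}\phi$; as $\phi\in K$ was arbitrary this is \eqref{dil}. The semigroup identity then follows from the Laplace representations $(\widehat d_\Theta-zI)^{-1}=-i\int_0^\infty e^{-izs}e^{is\widehat d_\Theta}\,ds$ and $(D_\Theta-zI)^{-1}=-i\int_0^\infty e^{-izs}e^{isD_\Theta}\,ds$, both valid on $\bbC_-$: compressing the second by $P$, equating with the first on all of $\bbC_-$, and invoking uniqueness of the Laplace transform yields $e^{it\widehat d_\Theta}=Pe^{itD_\Theta}|_K$ for $t\ge 0$.

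For minimality I would verify that $\overline{\Span}\{(D_\Theta-zI)^{-1}K:\ z\in\bbC\setminus\bbR\}=L^2(\bbY)$. If $g$ is orthogonal to this span, then by self-adjointness of $D_\Theta$ the inner component of $(D_\Theta-wI)^{-1}g$ vanishes for every $w\in\bbC\setminus\bbR$. Solving $(D_\Theta-wI)v=g$ for general $g=(g_\infty,g_\ell)$ and imposing $v|_{(0,\ell)}\equiv 0$ first forces $g_\ell=0$ (read off the inner equation) and, through \eqref{bcIII}, $f_\infty(0\pm)=0$. Taking $w\in\bbC_-$, the boundary value of $v$ on the incoming edge equals $-i\int_{-\infty}^0 e^{iwt}g_\infty(t)\,dt$, a one-sided Laplace transform of $g_\infty|_{(-\infty,0)}$ whose vanishing throughout a half-plane forces $g_\infty|_{(-\infty,0)}=0$ by the uniqueness theorem for the Laplace transform; taking $w\in\bbC_+$ and repeating the argument on the outgoing edge gives $g_\infty|_{(0,\infty)}=0$, whence $g=0$. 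Equivalently, one may note that $\widehat d_\Theta$ is completely non-self-adjoint since its symmetric part is the prime operator $\dot d$ (the differentiation with Dirichlet conditions at both ends of $(0,\ell)$; cf. Lemma \ref{primeD}), and a minimal self-adjoint dilation is characterized by this property. The delicate point throughout is the $L^2$ bookkeeping at $\pm\infty$ for $z\in\bbC_-$, namely the vanishing of the incoming channel that pins down exactly the boundary condition \eqref{domneinv}, and in the minimality step the passage from the vanishing of a boundary value on a half-plane to the vanishing of $g$ via transform uniqueness.
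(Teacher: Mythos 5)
Your proof of the compression identity \eqref{dil} is correct and follows essentially the same route as the paper: extend $\phi\in K$ by zero, solve the resolvent equation for $z\in\bbC_-$, use $\Im z<0$ to force the solution to vanish identically on the incoming edge, and observe that $f_\infty(0-)=0$ collapses the vertex conditions \eqref{bcIII} to the boundary condition \eqref{domneinv}, so the inner component is exactly $(\widehat d_\Theta-zI)^{-1}\phi$. As a bonus, your final paragraph actually proves the minimality of the dilation (via Laplace-transform uniqueness on each external edge, or equivalently via complete non-selfadjointness of $\widehat d_\Theta$ through primeness of $\dot d$), a point which the paper asserts in the statement of the theorem but never verifies in its proof.
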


\begin{proof}
Let $g=(g_\infty, g_\ell)^T\in L^2(\bbY)$ and
$$
f=(D_\Theta-zI)^{-1}g, \quad z\in \bbC_-.
$$
Since $f=(f_\infty, f_\ell)^T\in \dom (D_\Theta)$, the boundary conditions
\begin{equation}\label{zzzz}
\begin{pmatrix}
f_\infty(0+)\\
f_\ell(0)
\end{pmatrix}=\begin{pmatrix}
k& \sqrt{1-k^2} \Theta\\
 \sqrt{1-k^2}&-k\Theta
\end{pmatrix}
 \begin{pmatrix}
f_\infty(0-)\\
f_\ell(\ell)
\end{pmatrix}
\end{equation}
hold.
We have
$$
i \frac{d}{dx}f_\infty(x) - zf_\infty(x) =g_\infty(x), \quad x\in (-\infty, 0)\cup (0, \infty)\subset \bbY,
$$
and
\begin{equation}\label{compr}
i \frac{d}{dx}f_\ell(x) - zf_\ell (x) =g_\ell(x), \quad x\in (0, \ell)\subset \bbY.
\end{equation}
If $g\in K$, then $g_\infty=0$ and hence
$$
i \frac{d}{dx}f_\infty(x) - zf_\infty(x) =0,  \quad x\in (-\infty, 0)\cup (0, \infty)\subset \bbY.
$$
Since  $z\in \bbC_-$,  the function $f_\infty$ has to vanish on the negative real-axis. In particular,  $f_\infty(0-)=0$. From
\eqref{zzzz} it follows that
$$
f_\ell(0)=-k\Theta
f_\ell(\ell).
$$
Therefore the boundary condition \eqref{domneinv} holds and  hence $f\in \Dom(\widehat d_\Theta)$. Combined with \eqref{compr} this means that
$$
f_\ell=(\widehat d_\Theta-zI)^{-1}g_\ell, \quad z\in \bbC_-,
$$
which  proves \eqref{dil} and eventually   shows that
$D_\Theta$ is a self-adjoint dilation of $ \widehat d_\Theta$.

\end{proof}

\begin{remark} Theorem \ref{dilthm} does not say anything about the dilation of the
(exceptional)  maximal dissipative differentiation operator $ \widehat d$
defined on
$$
\Dom (\widehat d)=\{f_\ell \in W_2^1((0, \ell))\, |\, f_\ell(0)=0\}
$$
(it is explicitly assumed that $k\ne 0$ in the boundary condition \eqref{domneinv}).

In fact, the corresponding
self-adjoint  dilation coincides with the self-adjoint realization of
$i\frac{d}{dx}$ on  the metric graph
$$\bbY=(-\infty, 0)\sqcup (0,\ell) \sqcup (\ell,\infty),$$
which can be identified with the real axis. Therefore,  in the exceptional case the configuration space of the dilation can be identified with the graph $\bbY$ in Case (i).

Indeed, to treat the exceptional case, assume that $g\in L^2(\bbR)
$ is supported by the finite interval $[0, \ell]$.
Then the element
$$
f=(D-zI)^{-1}g, \quad z\in \bbC_,
$$
is supported by the positive semi-axis and its continuous representative satisfies the boundary condition
$
f(0)=0.
$
In particular,
$$
i\frac{d}{dx}f(x)-zf(x)=g(x), \quad x\in [0,\ell],\quad f(0)=0,
$$
and therefore the compressed resolvent of $D$ in the lower half-plane coincides with the resolvent of $\widehat d$ proving that
$D$ dilates the dissipative operator $\widehat d$.

\end{remark}

\section{The Liv\v{s}ic function  $s_{(\dot D, D_\Theta)}(z)$}\label{s6}

The main goal of this  and the forthcoming section is to  describe those  unitary invariants  of  the prime symmetric operator $\dot D$
that characterize the operator up to unitary equivalence. Here $\dot D$ is the symmetric differentiation operator on  the metric graph $\bbY$  in one of the Cases (i)-(iii)  with boundary conditions
\eqref{dotos}, \eqref{dotint} and \eqref{dotdom0}, respectively.

To do so, we need to fix  a (reference) self-adjoint extension of the operator $\dot D$.
 We choose as such an extension
the   self-adjoint  realization $D=D_\Theta|_{\Theta=1}$ of the differentiation operator referred to in Theorem \ref{gengen}.
Recall  that  the domain of the self-adjoint operator $D=D_1$ is characterized by the following boundary conditions
  \begin{align}
f_\infty (0+)&=- f_\infty (0-),\label{bcI*}
 \\
 f_\ell(0)&=- f_\ell (\ell),\label{bcII*}
\\
\begin{pmatrix}
f_\infty(0+)\\
f_\ell(0)
\end{pmatrix}&=\begin{pmatrix}
k& \sqrt{1-k^2} \\
 \sqrt{1-k^2}&-k
\end{pmatrix}
 \begin{pmatrix}
f_\infty(0-)\\
f_\ell(\ell)
\end{pmatrix}, \label{bcIII*}
\end{align}
whenever the graph $\bbY$ is in Cases (i)--(iii), respectively.

  We start with the following important  observation.



\begin{lemma}\label{razgon} Let    $g_\pm$ be  the  deficiency elements of the symmetric operator $\dot D$ referred to in Lemma \ref{defeff}.
Then
\begin{equation}\label{vazhno}
f=g_+-g_-\in \Dom(D).
\end{equation}

\end{lemma}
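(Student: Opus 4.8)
The plan is to recognize this statement as the special case $\varkappa=1$ of Theorem \ref{gengen}. That theorem asserts that $g_+-\varkappa g_-\in\dom(D_\Theta)$ precisely when $\Theta=F(\varkappa)$, with $F$ given by \eqref{kappatheta}. Since the reference extension $D$ is by definition $D_\Theta$ at $\Theta=1$, all I need is to verify that $F(1)=1$ in each of the three cases and then quote Theorem \ref{gengen}.

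Checking $F(1)=1$ is a one-line computation in each case. In Case (i) one has $F(\varkappa)=\varkappa$, so $F(1)=1$ is immediate. In Case (ii), substituting $\varkappa=1$ into $F(\varkappa)=-\frac{\varkappa-e^{-\ell}}{e^{-\ell}\varkappa-1}$ gives $-\frac{1-e^{-\ell}}{e^{-\ell}-1}=1$ after cancelling the common factor $e^{-\ell}-1$. In Case (iii) the analogous cancellation of $ke^{-\ell}-1$ in $F(\varkappa)=-\frac{\varkappa-ke^{-\ell}}{ke^{-\ell}\varkappa-1}$ again yields $F(1)=1$. Invoking Theorem \ref{gengen} then gives $g_+-g_-\in\dom(D)$, which is exactly \eqref{vazhno}.

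For a self-contained argument independent of Theorem \ref{gengen}, I would instead substitute the explicit deficiency elements from Lemma \ref{defeff} directly and verify the boundary conditions \eqref{bcI*}--\eqref{bcIII*} that define $\dom(D)$. Since $g_\pm\in\dom((\dot D)^*)$, their difference $f=g_+-g_-$ automatically lies in $\dom((\dot D)^*)$, so only the boundary relations remain to be checked. Reading off the four boundary values of $f$ from \eqref{deftip1}, \eqref{deftip2}, \eqref{hyddef1}, \eqref{hyddef2} is routine: in Case (iii), writing $a=\sqrt2/\sqrt{e^{2\ell}-k^2}$, one finds $f_\infty(0-)=a\sqrt{1-k^2}$, $f_\infty(0+)=a\sqrt{1-k^2}\,e^\ell$, $f_\ell(0)=a(1-ke^\ell)$ and $f_\ell(\ell)=a(e^\ell-k)$, and direct substitution confirms the matrix identity \eqref{bcIII*}. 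Cases (i) and (ii) are simpler, each reducing to the antisymmetry $f_\infty(0+)=-f_\infty(0-)$, respectively $f_\ell(0)=-f_\ell(\ell)$.

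The only real obstacle is the bookkeeping in Case (iii): one must keep track of which characteristic functions $\chi_{(-\infty,0)}$, $\chi_{(0,\infty)}$ are active at $0\pm$, and then verify both components of the $2\times2$ boundary relation. After factoring out $a\sqrt{1-k^2}$ in the first component and $a$ in the second, these reduce to the trivial cancellations $k+(e^\ell-k)=e^\ell$ and $(1-k^2)-k(e^\ell-k)=1-ke^\ell$, so no genuine difficulty arises. I would also note in passing that the same values give $f_\infty(0-)-k f_\infty(0+)-\sqrt{1-k^2}\,f_\ell(0)=0$, confirming the membership $f\in\dom((\dot D)^*)$ via \eqref{domsop} as an internal consistency check.
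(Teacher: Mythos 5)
Your proposal is correct, and your fallback argument is in fact the paper's own proof: the paper establishes \eqref{vazhno} exactly by reading off the boundary values of $f=g_+-g_-$ from \eqref{deftip1}, \eqref{deftip2}, \eqref{hyddef1}, \eqref{hyddef2} and checking the boundary conditions \eqref{bcI*}--\eqref{bcIII*} directly; your values $f_\infty(0-)=a\sqrt{1-k^2}$, $f_\infty(0+)=a\sqrt{1-k^2}\,e^\ell$, $f_\ell(0)=a(1-ke^\ell)$, $f_\ell(\ell)=a(e^\ell-k)$ and the two cancellations $k+(e^\ell-k)=e^\ell$, $(1-k^2)-k(e^\ell-k)=1-ke^\ell$ coincide with the paper's computation in Case (iii). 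Your primary route, however, is a legitimate shortcut the paper does not take: Theorem \ref{gengen} is stated and proved in Section \ref{s5}, before Lemma \ref{razgon}, and its proof nowhere invokes the lemma, so there is no circularity in specializing it at $\varkappa=1$; your evaluations $F(1)=1$ are right in all three cases, and the cancellations are legitimate because $\ell>0$ and $0<k<1$ guarantee $e^{-\ell}\ne 1$ and $ke^{-\ell}\ne 1$, so no denominator in \eqref{kappatheta} vanishes. What the shortcut buys is economy: the boundary computation for general unimodular $\varkappa$ has already been carried out once inside the proof of Theorem \ref{gengen}, so repeating it is redundant. What the paper's direct verification (your fallback) buys is self-containedness: the lemma then does not depend on the correctness of the full parametrization \eqref{kappatheta}, whose derivation in Case (iii) involves precisely the bookkeeping you flag as the only real obstacle, and it can be read immediately after Lemma \ref{defeff} without appeal to the classification of all self-adjoint extensions.
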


\begin{proof}
In Case (i) we have the representation
 $$
 f(x)=\sqrt{2}(e^x\chi_{(-\infty,0)}(x)-e^{-x}\chi_{(0,\infty)}(x))
 $$
 so that
$
f(0-)=\sqrt{2}=-f(0+)
$
and therefore  $f\in \dom (D)$.

In Case (ii),
$$
f(x)=\frac{\sqrt{2}}{\sqrt{e^{2\ell}-1}}(e^{ x}-e^{\ell-x}), \quad x\in [0,\ell],
$$
which implies
$$
f(0)=\frac{\sqrt{2}}{\sqrt{e^{2\ell}-1}}(1-e^{\ell})=-\frac{\sqrt{2}}{\sqrt{e^{2\ell}-1}}(e^{\ell}-1)=-f(\ell)
$$
thus showing that  $f\in\dom (D)$ as well.

Finally, in Case (iii), from \eqref{hyddef1} and \eqref{hyddef2} it follows that the element $f$ admits the representation
$$f=\frac{\sqrt{2}}{\sqrt{e^{2\ell}-k^2}}(f_\infty, f_\ell)^T,$$
where
$$f_\infty(x)=\sqrt{1-k^2}e^{x}\chi_{(-\infty,0)}(x)+\sqrt{1-k^2}e^{\ell-x}\chi_{(0,\infty)}(x), \quad x\in \bbR,
$$
and
$$
f_\ell(x)=e^x-ke^{\ell-x}, \quad x\in [0,\ell].
$$
We have
$$
 f_\infty(0-)=\sqrt{1-k^2},\quad
 f_\infty(0+)=\sqrt{1-k^2}e^{\ell},
$$
and
$$
  f_\ell(0)=1- ke^{\ell},\quad
f_\ell(\ell)=e^{\ell}- k.
$$
Here $k$,
 $0<k<1,$ is the parameter from the boundary conditions  \eqref{dotdom0} and \eqref{bcIII*} describing the domains $\dom(\dot D)$ and $\dom(D)$
in Case (iii), respectively.

As a consequence, the incoming
$F^{\text{in}}=\begin{pmatrix}
f_\infty(0-)\\
f_\ell(\ell)
\end{pmatrix}$
and outgoing
$F^{\text{in}}=\begin{pmatrix}
f_\infty(0+)\\
f_\ell(0)
\end{pmatrix}
$
boundary data are related as
\begin{align*}
\begin{pmatrix}
f_\infty(0+)\\
f_\ell(0)
\end{pmatrix}&=
\begin{pmatrix}
\sqrt{1-k^2}e^{\ell}\\
1- ke^{\ell}
\end{pmatrix}
=
\begin{pmatrix}
k& \sqrt{1-k^2} \\
 \sqrt{1-k^2}&-k
\end{pmatrix}
\begin{pmatrix}
\sqrt{1-k^2}\\
e^{\ell}- k
\end{pmatrix}
\\&=S(1)
\begin{pmatrix}
f_\infty(0-)\\
f_\ell(\ell)
\end{pmatrix},
\end{align*}
where the bond scattering matrix  $S(1)$  is given by \eqref{bcIII} for $\Theta=1$, which shows   that
$$
f\in \Dom(D(1))=\Dom (D).
$$

\end{proof}

 Based on Lemma  \ref{razgon}, now we are ready to evaluate the Liv\v{s}ic function  associated with the pair $(\dot D , D)$, which is one of the unitary  invariants that characterizes the pair
  $(\dot D , D)$ up to unitary equivalence.

  \begin{lemma}\label{pomni} The Liv\v{s}ic function  associated with the pair $(\dot D , D)$  admits the representation
 \begin{equation}\label{srepr}
s_{(\dot D, D)}(z)=
 \begin{cases}
 0, & \text{in Case  $(i)$}
\\ \frac{e^{iz\ell}-e^{-\ell}}{e^{-\ell} e^{iz\ell}-1},& \text{in Case $ (ii)$}
\\k \frac{ e^{iz\ell}-e^{-\ell}}{k^2e^{-\ell} e^{iz\ell}-1},&\text{in Case $ (iii)$}
 \end{cases}.
\end{equation}
Here $k$, $0<k<1$, is the parameter from the boundary conditions  \eqref{dotdom0} and \eqref{bcIII*} describing the domains $\dom(\dot D)$ and $\dom(D)$
in Case $(iii)$.
\end{lemma}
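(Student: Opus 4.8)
The plan is to evaluate the Livšic function directly from its definition \eqref{charsum},
$$
s_{(\dot D, D)}(z) = \frac{z-i}{z+i}\cdot\frac{(g_z, g_-)}{(g_z, g_+)}, \quad z\in\bbC_+ ,
$$
using the explicit deficiency data at hand. Lemma \ref{razgon} guarantees $g_+-g_-\in\Dom(D)$, so the normalization \eqref{rss} required for this formula holds for the reference extension $D=D_\Theta|_{\Theta=1}$, and the normalized elements $g_\pm$ of Lemma \ref{defeff} form exactly the admissible basis. For $g_z\in\Ker((\dot D)^*-zI)$ I would use the closed-form generators \eqref{defsub1}, \eqref{defsub2}, \eqref{defsub01} established in the proof of Lemma \ref{defeff}, and then compute $(g_z,g_+)$ and $(g_z,g_-)$ by integrating edge by edge over $\bbY$.

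In Case (i) the computation is immediate: $g_z$ is supported on $(-\infty,0)$ while $g_-$ is supported on $(0,\infty)$, so $(g_z,g_-)=0$ and hence $s_{(\dot D,D)}(z)\equiv 0$. In Cases (ii) and (iii) I would carry out the elementary exponential integrals. With $c$ the common normalization constant, Case (ii) gives
$$
(g_z,g_+) = \frac{c\,(e^{(1-iz)\ell}-1)}{1-iz}, \qquad (g_z,g_-)=\frac{c\,e^\ell\,(1-e^{-(1+iz)\ell})}{1+iz}.
$$
In Case (iii) the inner products split into three edge contributions, but $g_z$ vanishes on $(0,\infty)$ and $g_-$ vanishes on $(-\infty,0)$, so only two of the six terms survive. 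The $(-\infty,0)$ and $[0,\ell]$ pieces of $(g_z,g_+)$ combine via $(1-k^2)+(e^{(1-iz)\ell}-1)=e^{(1-iz)\ell}-k^2$ into $\frac{c\,(e^{(1-iz)\ell}-k^2)}{1-iz}$, while $(g_z,g_-)$ reduces to the single surviving integral $\frac{c\,k\,e^\ell\,(1-e^{-(1+iz)\ell})}{1+iz}$.

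The key algebraic step, common to both remaining cases, is the identity
$$
\frac{z-i}{z+i}\cdot\frac{1-iz}{1+iz}=-1,
$$
which I would verify from $1-iz=-i(z+i)$ and $1+iz=i(z-i)$. Forming the ratio $(g_z,g_-)/(g_z,g_+)$ produces the factor $\frac{1-iz}{1+iz}$, which cancels the Blaschke factor $\frac{z-i}{z+i}$ up to sign and leaves
$$
s_{(\dot D,D)}(z)=-k\,e^\ell\,\frac{1-e^{-\ell}e^{-iz\ell}}{e^\ell e^{-iz\ell}-k^2},
$$
with $k=1$ in Case (ii). Multiplying numerator and denominator by $e^{iz\ell}$ and extracting $e^\ell$ from the denominator then yields the asserted forms. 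I do not expect a genuine obstacle: the argument reduces to bookkeeping of the edge integrals, and the only point requiring care is tracking the $\sqrt{1-k^2}$ and $k$ weights in Case (iii) so that the collapse $(1-k^2)+(e^{(1-iz)\ell}-1)=e^{(1-iz)\ell}-k^2$ comes out correctly.
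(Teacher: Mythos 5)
Your proposal is correct and follows essentially the same route as the paper's own proof: both invoke Lemma \ref{razgon} to justify using definition \eqref{charsum} with the reference extension $D=D_\Theta|_{\Theta=1}$, dispose of Case (i) by the orthogonality $g_z\perp g_-$, and in Cases (ii)--(iii) evaluate the same elementary exponential integrals, with the factor $\frac{z-i}{z+i}$ cancelling against the $\frac{1-iz}{1+iz}$ ratio exactly as you describe, and the collapse $(1-k^2)+(e^{(1-iz)\ell}-1)=e^{(1-iz)\ell}-k^2$ appearing identically. The only blemish is the prose count ``two of the six terms survive'' in Case (iii) --- it is three (two for $(g_z,g_+)$, one for $(g_z,g_-)$) --- but your subsequent computation identifies all surviving contributions correctly, so this does not affect the argument.
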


\begin{proof} Denote by   $g_\pm$ the  deficiency elements of the symmetric operator $\dot D$ referred to in Lemma \ref{defeff}.
By Lemma \ref{razgon},
\begin{equation}\label{vzh}
g_+-g_-\in \Dom(D)
\end{equation}
in all Cases (i)-(ii).
As long as
\eqref{vzh} is established,
 in accordance with definition \eqref{charsum},
 the Liv\v{s}ic function associated with the pair  $ (\dot D, D)$
can be evaluated as
 $$
s_{(\dot D, D)}(z)=\frac{z-i}{z+i}\cdot \frac{(g_z, g_-)}{(g_z,g_+)}, \quad z\in \bbC_+.
$$
Here the deficiency elements $g_z \in \Ker ((\dot D)^*-zI)$, $z\in \bbC\setminus \bbR$
are given by  \eqref{defsub1}, \eqref{defsub2}, \eqref{defsub01} and \eqref{defsub02}
in Cases (i)-(iii), respectively.

In Case (i), one observes that
$
g_z\perp g_-, \quad z\in \bbC_+,
$
and hence
$$
 s_{(\dot D, D)}(z)=0, \quad   z\in \bbC_+.
 $$

In Case (ii), one computes
\begin{align*}
s_{(\dot D, D)}(z)&=\frac{z-i}{z+i}\cdot \frac{(g_z, g_-)}{(g_z,g_+)}=e^\ell
\frac{z-i}{z+i}\cdot \frac{\int_0^\ell e^{(-iz-1)x}dx}{\int_0^\ell e^{(-iz+1)x}dx}
=
e^{\ell} \frac{e^{(-iz-1)\ell}-1}{e^{(-iz+1)\ell}-1}
\\ &=\frac{e^{iz\ell}-e^{-\ell}}{e^{-\ell} e^{iz\ell}-1}.
\end{align*}

Finally, in Case (iii), we have
\begin{align*}
s_{(\dot D, D)}(z)&=\frac{z-i}{z+i}\cdot \frac{(g_z, g_-)}{(g_z,g_+)}
\\&
=\frac{z-i}{z+i}\cdot
\frac{k\int_0^\ell e^{(-iz-1)x}dx}
{(1-k^2) \int_{-\infty}^0 e^{(-iz+1)x}dx+\int_0^\ell e^{(-iz+1)x}dx}e^{\ell}
\\&
=
\frac{k( e^{(-iz-1)\ell}-1)}
{(1-k^2) + e^{(-iz+1)\ell}-1}e^{\ell}
\\&
=
\frac{k ( e^{-iz\ell}-e^{\ell})}
{e^\ell e^{-iz\ell}-k^2 }=k
\frac{  e^{iz\ell}-e^{-\ell} }
{k^2   e^{-\ell}  e^{iz\ell}-1}
\\&=k
\frac{  e^{iz\ell}-e^{-\ell} }
{k^2   e^{-\ell}  e^{iz\ell}-1}.
\end{align*}
 Combing these results proves  \eqref{srepr}.

\end{proof}

\begin{remark}
  The representation \eqref{srepr}  in Cases (i) and (ii) is known (see, e.g., \cite{AkG}).
\end{remark}

The following corollary
 provides a complete characterization of prime symmetric operators with deficiency indices $(1,1) $ satisfying the commutation relations \eqref{1/2Wdot}
(see Problem (I) a)   in the Introduction).

\begin{corollary}\label{muhly1} {\it Let $\dot A $ be a symmetric operator referred to in Hypothesis \ref{muhly}. Suppose that $\dot A$ is a prime operator.
Then $\dot A$ is unitarily equivalent to one of the differentiation operators $\dot D=i\frac{d}{dx}$  on a metric graph $\bbY$ in  one of the Cases $(i)$-$(iii)$ introduced in Section 4  $($see eqs.  \eqref{dotos}, \eqref{dotint} and
\eqref{dotdom0}$)$.
}
\end{corollary}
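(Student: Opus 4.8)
The plan is to read the result off from the unitary invariants that have already been computed. Since $\dot A$ is assumed prime, Remark \ref{mnogopros} (resting on Lemma \ref{ind0} and Proposition \ref{concept} of Appendix E) guarantees that the modulus $|s(z)|$ of the Liv\v{s}ic function is a \emph{complete} unitary invariant of $\dot A$. Hence it suffices to produce a differentiation operator $\dot D$ on a metric graph $\bbY$ in one of Cases $(i)$--$(iii)$ whose Liv\v{s}ic function agrees in modulus with that of $\dot A$.

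First I would invoke Corollary \ref{ssss}: under Hypothesis \ref{muhly} there is a reference self-adjoint extension $A$ of $\dot A$ for which
$$
s_{(\dot A, A)}(z)=k\,\frac{e^{iz\ell}-e^{-\ell}}{k^2e^{-\ell}e^{iz\ell}-1}, \quad z\in \bbC_+,
$$
for some $0\le k\le 1$ and $\ell>0$.

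Next I would match the pair $(k,\ell)$ to a graph. Taking as reference extension the operator $D=D_\Theta|_{\Theta=1}$ from Theorem \ref{gengen}, Lemma \ref{pomni} evaluates $s_{(\dot D, D)}(z)$ in each case: it vanishes identically in Case $(i)$, equals the displayed expression with $k=1$ in Case $(ii)$, and equals exactly the displayed expression for $0<k<1$ in Case $(iii)$. Thus the three cases of $\dot D$ realize precisely the parameter range $0\le k\le 1$, $\ell>0$ delivered by Corollary \ref{ssss} (with $k=0$ corresponding to Case $(i)$, where the value of $\ell$ is immaterial since $s\equiv 0$). Choosing the matching case, we obtain $s_{(\dot A, A)}(z)=s_{(\dot D, D)}(z)$, and in particular $|s_{(\dot A, A)}(z)|=|s_{(\dot D, D)}(z)|$ on $\bbC_+$.

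Finally, by Lemma \ref{primeD} the differentiation operator $\dot D$ is prime, and $\dot A$ is prime by hypothesis, while both have deficiency indices $(1,1)$. Since $|s(z)|$ is a complete unitary invariant of such an operator, the equality of moduli forces $\dot A$ to be unitarily equivalent to $\dot D$, which is the assertion. The only genuine analytic input is the completeness of the invariant $|s(z)|$, established in Appendix E; the remainder is the bookkeeping of comparing the explicit formulas of Corollary \ref{ssss} and Lemma \ref{pomni} and checking that their parameter ranges coincide, which I expect to be the only point requiring care.
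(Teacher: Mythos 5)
Your proposal is correct and takes essentially the same route as the paper: Corollary \ref{ssss} combined with Lemma \ref{pomni} to match the Liv\v{s}ic functions case by case, followed by a uniqueness/completeness result for prime operators. The only cosmetic difference is the final citation — since you in fact obtain exact equality $s_{(\dot A, A)}=s_{(\dot D, D)}$, the paper concludes directly via the uniqueness Theorem \ref{unitar} for pairs, whereas you pass to moduli and invoke the completeness of $|s(z)|$ from Remark \ref{mnogopros}; both are legitimate.
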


\begin{proof} By Corollary \ref{ssss}, $\dot A$ admits a self-adjoint extensions such that the Liv\v{s}ic function associated with the pair $(\dot A, A)$
coincides with the one  referred to in Lemma \ref{pomni}. Since $\dot A$ is a prime operator,  by the Uniqueness Theorem  \ref{unitar} in Appendix C,
the operator  $\dot A$ is unitarily equivalent to the symmetric differentiation operator  on the  metric graph $\bbY$ in  one of the cases Cases (i)-(iii).
 \end{proof}

More generally, the Liv\v{s}ic function associated with the  pair $(\dot D, D_\Theta)$, $|\Theta|=1$,
where $D_\Theta$ is   the  self-adjoint  realization of the differentiation operator referred to in Theorem \ref{gengen},
 differs  from the function
 $s_{(\dot D, D)}(z)$  evaluated above in Lemma \ref{pomni} by  a constant unimodular factor.
For the sake of completeness, we  present the following result.

\begin{theorem}\label{ind11} Suppose that $|\Theta|=1$.
The Liv\v{s}ic function  $s_{(\dot D, D_\Theta)}(z)$ associated with the pair $(\dot D, D_\Theta)$ admits the representation
  \begin{equation}\label{srepr*}
s_{(\dot D, D_\Theta)}(z)=e^{-2i\alpha}
 \begin{cases}
 0, & \text{in Case}  (i)
\\ \frac{e^{iz\ell}-e^{-\ell}}{e^{-\ell} e^{iz\ell}-1},& \text{in Case } (ii)
\\k \frac{ e^{iz\ell}-e^{-\ell}}{k^2e^{-\ell} e^{iz\ell}-1},&\text{in Case } (iii)
 \end{cases}.
\end{equation}
Here  $k$, $0<k<1$, is the parameter from the boundary conditions  \eqref{dotdom0} and \eqref{bcIII} describing the domains $\dom(\dot D)$ and $\dom(D_\Theta)$ in Case $(iii)$, respectively.

Here  $\alpha$ and the boundary condition parameter $\Theta$ are related as follows
  \begin{equation}\label{prostoeqq}
e^{2i\alpha}=
\begin{cases}
 \Theta, & \text{ in Case  }  (i)
\\ \frac{\Theta+e^{-\ell}}{e^{-\ell}\Theta+1},&  \text{ in Case }  (ii)
\\ \frac{\Theta+e^{-(\ell+\ell')}}{e^{-(\ell+\ell')}\Theta+1},& \text{ in Case }  (iii)
 \end{cases},
\quad \text{
with}\quad
\ell'=\ln \frac1k.
\end{equation}

In particular,
\begin{equation}\label{xyxy}
s_{(\dot D, D_\Theta)}(z)=e^{-2i\alpha}s_{(\dot D, D)}(z),
\end{equation}
where
$$
D=D_\Theta|_{\Theta=1}.
$$
\end{theorem}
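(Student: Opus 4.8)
The plan is to reduce everything to the already-computed function $s_{(\dot D, D)}(z)$ of Lemma \ref{pomni} by exploiting the fact that passing from the reference extension $D=D_1$ to a general self-adjoint extension $D_\Theta$ only forces a rechoice of the normalized deficiency elements, and that such a rechoice multiplies the Liv\v{s}ic function by a unimodular constant. Concretely, I would start from the standard basis $g_\pm$ of Lemma \ref{defeff}, for which Lemma \ref{razgon} guarantees $g_+-g_-\in\Dom(D)$, so that $g_\pm$ are admissible for computing $s_{(\dot D, D)}(z)$ via \eqref{charsum}. To compute $s_{(\dot D, D_\Theta)}(z)$ I need normalized deficiency elements whose difference lies in $\Dom(D_\Theta)$. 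Theorem \ref{gengen} tells me exactly which combination works: $g_+-\varkappa g_-\in\Dom(D_\Theta)$ with $|\varkappa|=1$ precisely when $\Theta=F(\varkappa)$. Hence I set $\tilde g_+=g_+$ and $\tilde g_-=\varkappa g_-$; these are normalized, lie in the correct deficiency subspaces, and satisfy $\tilde g_+-\tilde g_-\in\Dom(D_\Theta)$, so they are admissible for $s_{(\dot D, D_\Theta)}$.

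Next I would substitute $\tilde g_\pm$ into the defining formula \eqref{charsum}. Since the inner product is conjugate-linear in its second argument (the convention forced by the positivity of the measure $\mu$ in \eqref{WTF}), one gets
\begin{equation*}
s_{(\dot D, D_\Theta)}(z)=\frac{z-i}{z+i}\cdot\frac{(g_z,\varkappa g_-)}{(g_z,g_+)}
=\overline{\varkappa}\,\frac{z-i}{z+i}\cdot\frac{(g_z,g_-)}{(g_z,g_+)}
=\overline{\varkappa}\,s_{(\dot D, D)}(z),\quad z\in\bbC_+.
\end{equation*}
Thus the whole theorem collapses to identifying the unimodular factor $\overline{\varkappa}$ with $e^{-2i\alpha}$, i.e. to solving $\Theta=F(\varkappa)$ for $\varkappa$. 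Writing $\varkappa=e^{2i\alpha}$ then yields $s_{(\dot D, D_\Theta)}(z)=e^{-2i\alpha}s_{(\dot D, D)}(z)$, which is \eqref{xyxy}, and combining with Lemma \ref{pomni} gives the explicit form \eqref{srepr*}.

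The remaining step is the elementary inversion of \eqref{kappatheta}. In Case (i), $F(\varkappa)=\varkappa$ gives immediately $\varkappa=\Theta$. In Cases (ii) and (iii), $F$ is a Blaschke-type M\"obius map, and solving $\Theta=-\frac{\varkappa-c}{c\varkappa-1}$ (with $c=e^{-\ell}$ in Case (ii) and $c=ke^{-\ell}$ in Case (iii)) gives $\varkappa=\frac{\Theta+c}{c\Theta+1}$. Substituting $c=e^{-\ell}$ reproduces the Case (ii) entry of \eqref{prostoeqq}, while in Case (iii) the substitution $k=e^{-\ell'}$, $\ell'=\ln\frac1k$, turns $ke^{-\ell}$ into $e^{-(\ell+\ell')}$ and reproduces the Case (iii) entry. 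Setting $e^{2i\alpha}=\varkappa$ then matches \eqref{prostoeqq} in each case, and since $|\varkappa|=1$ the exponent $\alpha$ is real, as required.

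I expect no serious obstacle here; the only points that need care are the conjugation convention in the inner product (getting $\overline{\varkappa}$ rather than $\varkappa$, hence $e^{-2i\alpha}$ rather than $e^{2i\alpha}$) and the correct reading of the ``if and only if'' in Theorem \ref{gengen} so that the M\"obius inversion produces exactly the three cases of \eqref{prostoeqq}. Case (i) is essentially vacuous for the value of the function since $s_{(\dot D, D)}\equiv 0$, but the relation $\varkappa=\Theta=e^{2i\alpha}$ should still be recorded for consistency with the statement.
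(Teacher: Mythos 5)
Your proposal is correct and follows essentially the same route as the paper: identify, via Theorem \ref{gengen}, the unimodular von Neumann parameter $\varkappa=e^{2i\alpha}$ with $g_+-\varkappa g_-\in\Dom(D_\Theta)$ by inverting the M\"obius map \eqref{kappatheta} (which yields \eqref{prostoeqq}), and then conclude $s_{(\dot D, D_\Theta)}=e^{-2i\alpha}s_{(\dot D, D)}$ from the unimodular transformation law. The only cosmetic difference is that the paper cites Lemma \ref{ind0} of Appendix E for that law, whereas you rederive its relevant special case directly from the definition \eqref{charsum} — which is exactly how Lemma \ref{ind0} is proved there anyway.
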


\begin{proof}  From Theorem \ref{gengen} it follows that \begin{equation}\label{figa}
F=g_+-e^{2i\alpha}g_-\in \Dom(D_\Theta),
\end{equation}
where   $g_\pm$ are the deficiency elements referred to in Lemma \ref{defeff} and $e^{2i\alpha}$ is given by \eqref{prostoeqq}.
  Now   \eqref{srepr*} follows from   \eqref{srepr} by  Lemma \ref{ind0} in Appendix E.
\end{proof}

We conclude this section with several remarks of analytic character.

\begin{remark}\label{s2s3}  (i). One observes that   the Liv\v{s}ic function
 $$
s^{II}(z;\ell)= \frac{e^{iz\ell}-e^{-\ell}}{e^{-\ell} e^{iz\ell}-1}, \quad z\in \bbC_+,
$$ given by \eqref{srepr*} in Case (ii)  admits the representations
$$
s^{II}(z; \ell)=\frac{e^{i\ell z} -e^{-\ell } }{  e^{-\ell } e^{i\ell z}-1}=\frac{\sin (z-i)\frac\ell2}{\sin (z+i)\frac\ell2 }=\frac{z-i}{z+i}\cdot \frac{\prod\limits_{n\in \bbZ} \left (1-\left (\frac{z-i}{2\pi n}\ell\right)^2\right )}{\prod\limits_{n\in \bbZ} \left (1-\left (\frac{z+i}{2\pi n}\ell\right)^2\right )}.
$$
Therefore,
 $ s^{II}(z; \ell) $ is
a    pure Blaschke product  with zeroes $z_n$ located on the lattice
$$z_n=i+\frac{2\pi}{\ell} n, \quad n\in \bbZ.$$

(ii).  A direct computation shows that   the Liv\v{s}ic function
$$
s^{III}(z;k, \ell)=k \frac{ e^{iz\ell}-e^{-\ell}}{k^2e^{-\ell} e^{iz\ell}-1}, \quad z\in \bbC_+, \quad 0<k<1,
$$
(cf.  \eqref{srepr*} in Case (iii))
can be obtained
 by an analytic continuation  of
 $
s^{II}(z;\ell)$  with an appropriate identification of the parameters. That is,
\begin{equation}\label{ancont}
s^{III}(z;e^{-\ell'}, \ell)=s^{II}\left (\frac{\ell z+i\ell'}{\ell+\ell'};\ell+\ell'\right ), \quad z\in \bbC_+.
\end{equation}

(iii). In  the inner-outer factorization of the Liv\v{s}ic function in Case (iii)
$$
s^{III}(z;k, \ell)=k \frac{ e^{iz\ell}-e^{-\ell}}{k^2e^{-\ell} e^{iz\ell}-1}=   s^{III}_{\text{in}}(z)\cdot s^{III}_{\text{out}}(z)
$$
the inner factor $ s^{III}_{\text{in}}(z) $  coincides with the Liv\v{s}ic function in Case (ii), i.e.,
$$
 s^{III}_{\text{in}} =s^{II}(z;\ell)= \frac{e^{iz\ell}-e^{-\ell}}{e^{-\ell} e^{iz\ell}-1}.
$$

Indeed,  $$
s^{III}(z;e^{-\ell'}, \ell)
=\frac{\sin ( z-i)\frac{\ell}{2}}{\sin \left ((z+i)\frac\ell2 +i\ell'\right )}=s^{II}(z;\ell)\cdot \frac{\sin (z+i)\frac\ell2}{\sin \left ((z+i)\frac\ell2 +i\ell' \right )}.
$$
(In particular, the functions $ s^{III}(z;e^{-\ell'}, \ell)$ and $s^{II}(z;\ell)$ have the same set of zeros).

To  complete the proof of the claim  it remains to show that
the function
\begin{equation}\label{sinsin}
t(z)= \frac{\sin (z+i)\frac\ell2}{\sin \left ((z+i)\frac\ell2 +i\ell' \right )}
\end{equation}
is an outer function.

First, one observes that   $t(z)$ is a  contractive function in the upper half-plane.
Next, let
$$
t(z)=t_{\text{in}}(z) t_{\text{out}}(z)
$$
be its inner-outer factorization. Since $t(z)$ does not vanish in the upper half-plane, the  inner factor of $t(z)$ is necessarily a singular inner function.
Since $t(z)$ admits an analytic continuation into a strip in the lower half-plane, the singular measure in the exponential representation
 of $t_\text{in}(z)$ does not charge bounded sets and therefore $$
t_{\text{in}}(z)=e^{iLz}
$$
for some $L\ge 0$,  ``mass" at infinity.  In particular,
$$\lim_{y\to \infty} t_{\text{in}}(iy)=0$$ unless $L=0$.
However,  from \eqref{sinsin} it follows that
$$
\lim_{y\to \infty} t(iy)=e^{-\ell'},
$$
which implies that $L=0$. Therefore  $t_{\text{in}}(z)=1$  and hence $t(z)$ is an outer function.

(iv).  In fact,  for the outer factor $ s^{III}_{\text{out}}(z)=t(z)$  one gets the representation
  \begin{equation}\label{trepr}
  s^{III}_{\text{out}}(z)=\sqrt{\frac{\sinh \ell}{\sinh(\ell+2\ell')}}\exp \left (
\frac i{2\pi} \int_\bbR\left  (\frac{1}{\lambda +z}+\frac{\lambda}{1+\lambda^2}\right )\rho(\lambda)  d\lambda
\right ),
  \end{equation}
  where the density is given by
  $$
  \rho (\lambda)=
 \log\frac{P_{e^{-\ell-2\ell'}}(\lambda \ell)}{P_{e^{-\ell}}(\lambda \ell)}.
  $$Here
$$
 P_r(\theta)=\frac{1-r^2}{1+r^2-2r\cos \theta}
 $$  is the  Poisson kernel.

Indeed, since
 $t(z)$ is an outer function in the upper half-plane,  we have the representation \cite{Koosis}
\begin{equation}\label{vnesh}t(z)=
\exp \left (
\frac i\pi \int_\bbR\left  (\frac{1}{\lambda +z}+\frac{\lambda}{1+\lambda^2}\right ) \log |t(\lambda)|d\lambda
\right ).
\end{equation}
Using \eqref{sinsin} one computes that
\begin{align*}
\log |t (\lambda)|&=\log \left | e^{-\ell'} \frac{ e^{iz\ell}-e^{-\ell}}{e^{-2\ell'}e^{-\ell} e^{iz\ell}-1} \right |
\\&
=-\ell'+\frac12\log\frac{(\cos \ell\lambda-e^{-\ell})^2+\sin^2 \ell\lambda}{(e^{-\ell-2\ell'}\cos \ell\lambda-1)^2+ e^{-2\ell-4\ell'}\sin^2 \ell\lambda}
\\&	
=-\ell'+\frac12\log\frac{1-2\cos \ell\lambda e^{-\ell}+e^{-2\ell}}{1-2e^{-\ell-2\ell'}\cos \ell\lambda+ e^{-2\ell-4\ell'}}
\\&
=\frac12 \log \left (e^{-2\ell'} \cdot \frac{1-e^{-2\ell}}{1-e^{-2\ell-4\ell'}}\cdot\frac{P_{e^{-\ell-2\ell'}}(\lambda \ell)}{P_{e^{-\ell}}(\lambda \ell)}
\right )
\\&=
\frac12 \log \frac{\sinh \ell}{\sinh(\ell+2\ell')}+\rho (\lambda),
\end{align*}
and since
$$
\frac i\pi \int_\bbR\left  (\frac{1}{\lambda +z}+\frac{\lambda}{1+\lambda^2}\right ) d\lambda=1, \quad z\in \bbC_+,
$$
representation \eqref{vnesh} simplifies to \eqref{trepr}.

\end{remark}

\section{The Weyl-Titchmarsh function $M_{(\dot D, D_\Theta)} (z)$}\label{s7}

Along with the  Liv\v{s}ic function,
 the Weyl-Titchmarsh function associated with a pair consisting of a prime symmetric operator and its self-adjoint extension  characterizes the pair  up to  mutual unitary  equivalence.
So that our next goal is to evaluate the Weyl-Titchmarsh function associated with the symmetric differentiation
$\dot D$ on the metric graph $\bbY$ and its self-adjoint reference extension.

Suppose that $\bbY$ is the metric graph $
 \bbY$ in one of the Cases (i)--(iii).
As it follows from Lemma \ref{pomni},
the Weyl-Titchmarsh function  $M_{(\dot D, D)} $ associated with the pair $(\dot D, D) $ has the form
 \begin{equation}\label{Mrepr}
M_{(\dot D, D)}(z)=
 \begin{cases}
 i, & \text{in Case  (i)}
\\ \coth\frac\ell2\tan \frac\ell2 z,& \text{in Case  (ii)}
\\\coth\frac{\ell+\ell'}2\tan \left (\frac\ell2 z+i\frac{\ell'}2\right ),&\text{in Case  (iii)}
 \end{cases},
\end{equation}
where
$$
\ell'=\ln \frac1k\quad (0<k<1)
$$
and   $k$,   $0<k<1$, is the parameter from the boundary conditions  \eqref{dotdom0} and \eqref{bcIII*} describing the domains $\dom(\dot D)$ and $\dom(D)$
in Case (iii).

Indeed,   in Case (iii)   one observes that
\begin{align*}
M_{(\dot D, D)}(z)&=
\frac1i\frac{s_{(\dot D, D)}(z)+1}{s_{(\dot D, D)}(z)-1}
=\frac1i\frac{\frac{k e^{iz\ell}-k e^{-\ell} }
{k^2    e^{-\ell} e^{iz\ell}-1}+1}{\frac{k  e^{iz\ell}-k e^{-\ell} }
{k^2  e^{-\ell} e^{iz\ell}-1}-1}
\\&
=
\frac1i\cdot
\frac{1+k e^{-\ell}}{1-k e^{-\ell}}\cdot \frac{k e^{i\ell z}-1}{ke^{i\ell z}+1}
=\coth \frac{\ell+\ell'}{2}\cdot \tan\left (\frac{\ell}2 z+i\frac{\ell'}{2}\right ).
\end{align*}
 Case (ii) then  follows by setting $\ell'=0$, equivalently $k=1$, and the corresponding representation in  Case (i) is obvious (formally take   the limit as $\ell\to \infty $  in Case (ii)).

More generally, we have the following result.
\begin{theorem}  Let
$D_{\Theta}$, $|\Theta|=1$,
 be the one-parameter family of  self-adjoint  reference  operators referred to in Theorem \ref{gengen}.
Then the  Weyl-Titchmarsh  function  $M_{(\dot D, D_\Theta)}$ associated with the pair $(\dot D, D_\Theta)$ admits the representation
\begin{equation}\label{forM}
M_{(\dot D, D_\Theta)} (z)=\begin{cases}
i, & \text{ in Case  } (i)
\\
A(\Phi)\tan \left ( \frac{\ell}{2}z-\frac{\Phi}{2}\right )+\frac{\sin \Phi}{\sinh \ell},& \text{ in Case }  (ii)
\\
A(\Phi)\tan \left ( \frac{\ell}{2}z+\frac{\ell'}2 i-\frac{\Phi}{2}\right )
+\frac{\sin \Phi}{\sinh (\ell+\ell')},&\text{ in Case }   (iii)
\end{cases},
\end{equation}
where
$$
\Phi=\arg \Theta, \quad \ell'=\log \frac1k \quad (0<k<1)
$$
and   $k$,  $0<k<1$, is the parameter from the boundary conditions  \eqref{dotdom0} and \eqref{bcIII} describing the domains $\dom(\dot D)$ and $\dom(D_\Theta)$
in Case $(iii)$.

Here  in Case $(iii)$ the  amplitude $A(\Phi)$ is given by the convex combination
  \begin{equation}\label{ampli}
  A(\Phi)=  \cos^2 \frac{ \Phi}2\cdot  \coth \frac{\ell+\ell'}2 + \sin^2 \frac{\Phi}2\cdot \tanh \frac{\ell+\ell'}2 ,
  \end{equation}
and in Case $(ii)$  $A(\Phi)$ is given by the same expression with $\ell'=0$.
\end{theorem}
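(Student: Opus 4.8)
The plan is to reduce everything to the Liv\v{s}ic functions already computed in Theorem \ref{ind11}, via the fundamental relation \eqref{blog}, which after inversion reads
$$
M_{(\dot D, D_\Theta)}(z)=\frac1i\frac{s_{(\dot D, D_\Theta)}(z)+1}{s_{(\dot D, D_\Theta)}(z)-1}, \quad z\in \bbC_+.
$$
By Theorem \ref{ind11} we have $s_{(\dot D, D_\Theta)}=e^{-2i\alpha}s_{(\dot D, D)}$, so I would insert this, write $s_{(\dot D, D)}=\frac{iM+1}{iM-1}$ with $M:=M_{(\dot D, D)}$ (again from \eqref{blog}), and collapse the two nested fractional-linear maps. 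Factoring $e^{-i\alpha}$ out of numerator and denominator and using $e^{-2i\alpha}\pm1=e^{-i\alpha}(e^{-i\alpha}\pm e^{i\alpha})$ turns the composition into the single \emph{rotation}
$$
M_{(\dot D, D_\Theta)}(z)=\frac{M(z)\cos\alpha-\sin\alpha}{M(z)\sin\alpha+\cos\alpha},
$$
valid in all three cases. In Case (i) this is immediate: $M\equiv i$ forces $s_{(\dot D, D_\Theta)}\equiv0$ and hence $M_{(\dot D, D_\Theta)}\equiv i$.

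For Cases (ii) and (iii) I would substitute the reference function $M(z)=c\tan\xi$ from \eqref{Mrepr}, where $\xi=\frac\ell2 z+i\frac{\ell'}2$ and $c=\coth\frac{\ell+\ell'}2$ (taking $\ell'=0$, i.e.\ $k=1$, for Case (ii)). Dividing through by $\cos\alpha$ rewrites the rotation as $\frac{cT-\tan\alpha}{c\tan\alpha\,T+1}$ with $T=\tan\xi$, and I would match this against the target form
$$
A\tan(\xi-\psi)+B=\frac{(A+B\tan\psi)T+(B-A\tan\psi)}{(\tan\psi)\,T+1}
$$
by comparing the three coefficients. This gives $\tan\psi=c\tan\alpha$ together with linear relations that determine the amplitude and the additive constant uniquely as $A=c/(\cos^2\alpha+c^2\sin^2\alpha)$ and $B=\tan\alpha\,(Ac-1)$.

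The decisive step is to translate the extension parameter $\alpha$ into the flux $\Phi=\arg\Theta$. Starting from the Blaschke-type relation \eqref{prostoeqq}, I set $r=e^{-(\ell+\ell')}$, rationalize $e^{2i\alpha}=(e^{i\Phi}+r)/(1+re^{i\Phi})$, and compute its real and imaginary parts; the half-angle identity $\tan\alpha=\sin 2\alpha/(1+\cos 2\alpha)$ then simplifies dramatically, because the numerator factors as $(1-r)(1+r)\sin\Phi$ and the denominator as $(1+r)^2(1+\cos\Phi)$, yielding
$$
\tan\alpha=\frac{1-r}{1+r}\tan\frac\Phi2=\tanh\frac{\ell+\ell'}2\cdot\tan\frac\Phi2.
$$
Feeding this into $\tan\psi=c\tan\alpha=\coth\frac{\ell+\ell'}2\tanh\frac{\ell+\ell'}2\tan\frac\Phi2=\tan\frac\Phi2$ identifies the phase shift as $\psi=\frac\Phi2$.

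Finally, the same substitution makes $A$ collapse to a convex combination: writing $\tau=\tanh\frac{\ell+\ell'}2$ one gets $\cos^2\alpha+c^2\sin^2\alpha=\cos^{-2}\frac\Phi2\,(1+\tau^2\tan^2\frac\Phi2)^{-1}\cdot(1+\tan^2\frac\Phi2)$, whence $A=\coth\frac{\ell+\ell'}2\cos^2\frac\Phi2+\tanh\frac{\ell+\ell'}2\sin^2\frac\Phi2=A(\Phi)$, matching \eqref{ampli}. For the constant, $Ac-1=\cos^2\frac\Phi2(1-\tau^2)/\tau^2$, and using $1-\tau^2=\cosh^{-2}\frac{\ell+\ell'}2$ together with $\sinh(\ell+\ell')=2\sinh\frac{\ell+\ell'}2\cosh\frac{\ell+\ell'}2$ gives $B=\sin\frac\Phi2\cos\frac\Phi2\,(1-\tau^2)/\tau=\frac{\sin\Phi}{\sinh(\ell+\ell')}$, which is exactly \eqref{forM}; Case (ii) is then the specialization $\ell'=0$. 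I expect the main obstacle to be precisely this last bookkeeping: the relation \eqref{prostoeqq} is opaque as written, and the whole proof hinges on recognizing that the half-angle reduction of the Blaschke factor produces the clean identity $\tan\alpha=\tanh\frac{\ell+\ell'}2\tan\frac\Phi2$, which is what forces the amplitude to be a convex combination and the shift to be $\frac\Phi2$ rather than $\alpha$.
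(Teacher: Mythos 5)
Your proposal is correct and takes essentially the same route as the paper's own proof: you rederive the rotation formula (the paper's \eqref{9797}, obtained there from Lemma \ref{ind0}) directly from \eqref{blog}, recast the resulting fractional-linear transform of $\coth\frac{\ell+\ell'}{2}\tan\zeta$ as $A\tan(\zeta-\psi)+B$ with $\tan\psi=\coth\frac{\ell+\ell'}{2}\tan\alpha$ (the paper's identity-based step \eqref{soki}--\eqref{alongw} in coefficient-matching form), and close with the same key relation $\tan\alpha=\tanh\frac{\ell+\ell'}{2}\tan\frac{\Phi}{2}$ from \eqref{prostoeqq}, which forces $\psi=\frac{\Phi}{2}$ and yields the convex-combination amplitude \eqref{ampli}. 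The only blemish is typographical: your intermediate expression for $\cos^2\alpha+c^2\sin^2\alpha$ should read $\bigl(1+\tan^2\frac{\Phi}{2}\bigr)\bigl(1+\tau^2\tan^2\frac{\Phi}{2}\bigr)^{-1}$, and the factor $\cos^{-2}\frac{\Phi}{2}$ appearing alongside $\bigl(1+\tan^2\frac{\Phi}{2}\bigr)$ is that same quantity written a second time rather than an additional factor — but the values of $A$ and $B$ you then deduce are the correct ones.
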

\begin{proof}

 In Case (i) there is nothing to prove, since   $s_{(\dot D, D_\Theta)}(z)=0$ and hence $M_{(\dot D, D_\Theta)}(z)=i$ for all $z\in \bbC_+$.

 In Case (ii), by Theorem \ref{ind11} (see eq. \eqref{xyxy}), we have
 \begin{equation}\label{xyxyxy}
s_{(\dot D, D_\Theta)}(z)=e^{-2i\alpha}s_{(\dot D, D)}(z),
\end{equation}
where
$$
D=D_\Theta|_{\Theta=1}
$$ and $\alpha $ and $\Theta $ are related as in \eqref{prostoeqq}.

 From  \eqref{xyxyxy} and Lemma \ref{ind0} in Appendix E
 it follows
\begin{equation}\label{9797}
M_{(\dot D, D_\Theta)}(z)=\frac{M_{(\dot D, D)}(z)-\tan \alpha}{1+\tan \alpha\cdot  M_{(\dot D, D)}(z)}.
\end{equation}

By \eqref{Mrepr},
\begin{equation}\label{9798}
M_{(\dot D, D)}(z)=\coth \frac \ell2 \tan  \frac\ell2 z=m \tan \zeta,
\end{equation}
where\begin{equation}\label{inview} m=\coth \frac \ell2
\quad \text{and}\quad
\zeta=\frac\ell2 z,
\end{equation}
and therefore
$$
M_{(\dot D, D_\Theta)}(z)=\frac{m\tan \zeta -\tan \alpha}{1+m\tan \alpha\tan \zeta }.
$$

Using the identity
$$
\frac{m\tan \zeta -\tan \alpha}{1+m\tan \alpha\tan \zeta }=\frac{1+\tan^2\alpha}{1+m^2\tan^2\alpha}m\frac{\tan \zeta -m\tan \alpha}{1+m\tan \alpha\tan \zeta }
+\frac{m^2\tan \alpha -\tan \alpha}{1+m^2\tan^2\alpha},
$$
we obtain the following representation
\begin{equation}\label{soki}
M_{(\dot D, D_\Theta)}(z)=\frac{1+\tan^2\alpha}{1+m^2\tan^2\alpha}M_{(\dot D, D)}(\zeta-t)+\frac{m^2\tan \alpha -\tan \alpha}{1+m^2\tan^2\alpha},
\end{equation}
where
$$
\tan t=m \tan \alpha.
$$
Therefore, \eqref{soki} can be rewritten as
$$
M_{(\dot D, D_\Theta)}(z)=\frac{1+\frac{1}{m^2}\tan^2t }{1+\tan^2t }M_{(\dot D, D)}(\zeta-t)+\frac{m-\frac1m}{1+\tan^2t}\tan t.
$$
In view of  \eqref{9798} and \eqref{inview}, we have
\begin{equation}\label{alongw}
M_{(\dot D, D_\Theta)}(z)=\left (\coth \frac\ell2 \cos^2t+\tanh \frac\ell2\sin^2 t\right )
\tan\left ( \frac{\ell}{2}z-t\right )+\frac{1}{\sinh \ell}\sin 2t.
\end{equation}

From \eqref{prostoeqq} it follows that
\begin{align*}
\tan \alpha &=\frac1i\frac{ \frac{\Theta-e^{-\ell}}{e^{-\ell}\Theta-1}-1}{ \frac{\Theta-e^{-\ell}}{e^{-\ell}\Theta-1} +1}
=\tanh \frac\ell2 \cdot\tan \left ( \frac12\arg \Theta\right ),
\end{align*}
so that
$$
\tan t=m
\tan \alpha =\coth  \frac\ell2\cdot  \tan \alpha  =\tan  \left ( \frac12 \arg \Theta\right ).
$$
In particular,
\begin{equation}\label{ttdd}
t= \frac12 \arg \Theta=\frac12\Phi,
\end{equation}
which along with \eqref{alongw} shows that
\begin{align*}
M_{(\dot D, D_\Theta)}(z)=A(\Phi)
\tan \left ( \frac{\ell}{2}z-\frac{\Phi}{2}\right )
+\frac{\sin \Phi}{\sinh \ell},
\end{align*}
proving \eqref{forM}  with $A(\Phi)$ given by \eqref{ampli} in Case (ii).

To prove \eqref{forM} in Case (iii), in  a similar way one gets
  (cf. \eqref{9797}) $$
M_{(\dot D, D)}(z)=\coth\frac{\ell+\ell'}2\tan \left (\frac\ell2 z+i\frac{\ell'}2\right )
$$
and establishes that \eqref{soki} holds where now
$$
\zeta=\frac\ell2 z+i\frac{\ell'}2,\quad  m=\coth\frac{\ell+\ell'}2
$$
and
$$
\tan t=m \tan \alpha.
$$
Observing that
$$\tan \alpha =\frac1i\frac{ \frac{\Theta+e^{-\ell-\ell'}}{e^{-\ell-\ell'}\Theta+1}-1}{ \frac{\Theta+e^{-\ell-\ell'}}{e^{-\ell-\ell'}\Theta+1}+1}
=\tanh \frac{\ell+\ell'}2 \tan  \left (\frac12\arg \Theta\right ),
$$
one justifies \eqref{ttdd} in Case (iii) as well.
Literally repeating the reasoning above  one obtains the representation  \eqref{forM} in Case (iii).

\end{proof}

Our last result in this section  shows that the spectral measure  of the reference operator  $D_\Theta$,
 $|\Theta|=1$, is rather sensitive to the magnitude of the ``flux'' $\Phi=\arg \Theta$.

\begin{corollary}\label{quasimain*}  {\it The  Weyl-Titchmarsh function  $M_\Theta(z)=M_{(\dot D, D_\Theta)}(z) $ associated with the pair $(\dot D, D_\Theta)$
admits the representation
$$ M_\Theta(z)=\int_\bbR \left (\frac{1}{\lambda-z}-\frac{\lambda}{1+\lambda^2}\right )d\mu_\Theta(\lambda),\quad  |\Theta|=1,
$$
where
 $\mu_\Theta(d\lambda)$, the spectral measure, is
 \begin{itemize}
 \item[(i) ] the absolutely continuous measure with a constant density
 \begin{equation}\label{mucon*}
 \mu_\Theta(d \lambda)=\frac1\pi d\lambda \quad \text{in Case $(i)$};
\end{equation}
\item[(ii)]
 the discrete pure point measure
 \begin{equation}\label{muintt*}
 \mu_\Theta(d\lambda)=  \frac{2}{\ell}A(\arg \Theta)
 \sum_{k\in \bbZ} \delta_{\frac{(2 k+1)\pi+\Phi}{\ell}}(d\lambda)  \quad \text{in Case $(ii)$},
\end{equation}
with  $\delta_x(d\lambda)$  the  Dirac mass at $x$ and
$$
A(\Phi)=  \cos^2 \frac{ \Phi}2\cdot  \coth \frac{\ell}2 + \sin^2 \frac{\Phi}2\cdot \tanh \frac{\ell}2 ;
$$
\item[(iii)]
the absolutely continuous measure  with  a   periodic density
 \begin{equation}\label{tuda}
 \mu_\Theta(d\lambda)=\frac1\pi A(\arg \Theta) P_{e^{-\ell'}}\left (\ell \lambda-\pi -\arg \Theta\right )d \lambda.
\end{equation}
Here
$$
 P_r(\varphi)=\frac{1-r^2}{1+r^2-2r\cos \varphi}
 $$
 is the Poisson kernel,
 $$
 \ell'=\log \frac1k,
 $$
 with   $k$,  $0<k<1$,  the parameter from the boundary conditions  \eqref{dotdom0}, \eqref{bcIII},
 and
  $$
A(\Phi)=  \cos^2 \frac{ \Phi}2\cdot  \coth \frac{\ell+\ell'}2 + \sin^2 \frac{\Phi}2\cdot \tanh \frac{\ell+\ell'}2 .
$$
\end{itemize}
}
 \end{corollary}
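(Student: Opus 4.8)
The plan is to recover $\mu_\Theta$ from the boundary behaviour of the function $M_\Theta(z)=M_{(\dot D, D_\Theta)}(z)$ already evaluated in \eqref{forM}, using Stieltjes inversion. Since each expression in \eqref{forM} is bounded along the imaginary axis, $M_\Theta$ is a Herglotz--Nevanlinna function with no linear term, so it admits the representation $M_\Theta(z)=a+\int_\bbR\bigl(\frac{1}{\lambda-z}-\frac{\lambda}{1+\lambda^2}\bigr)\,d\mu_\Theta(\lambda)$ and only the imaginary part of its boundary values matters: the absolutely continuous part of $\mu_\Theta$ has density $\frac1\pi\Im M_\Theta(\lambda+i0)$, while atoms sit at the real poles of $M_\Theta$, the mass at a pole being minus the residue of $M_\Theta$ there. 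The real additive constants $\frac{\sin\Phi}{\sinh\ell}$ and $\frac{\sin\Phi}{\sinh(\ell+\ell')}$ are absorbed into $a$ and are invisible to $\mu_\Theta$. Case (i) is then immediate: $M_\Theta\equiv i$, and a direct check (comparing $\Im M_\Theta(i)=1$ with the Poisson integral of the density $\frac1\pi$ at $z=i$) confirms that the constant density $\frac1\pi\,d\lambda$ of \eqref{mucon*} reproduces the value $i$.

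For Case (ii) I would locate the real poles of $A(\Phi)\tan\!\bigl(\frac\ell2 z-\frac\Phi2\bigr)$, which occur where $\frac\ell2\lambda-\frac\Phi2=\frac\pi2+k\pi$, i.e.\ at $\lambda_k=\frac{(2k+1)\pi+\Phi}{\ell}$, $k\in\bbZ$. Since $\mathrm{Res}_{w=\pi/2+k\pi}\tan w=-1$ and $\frac{dw}{dz}=\frac\ell2$ for $w=\frac\ell2 z-\frac\Phi2$, the residue of $M_\Theta$ at $\lambda_k$ equals $-\frac2\ell A(\Phi)$; matching this against the pole of $\frac{\mu_\Theta(\{\lambda_k\})}{\lambda_k-z}$ yields mass $\frac2\ell A(\Phi)$ at each $\lambda_k$. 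Because $\tan$ of a real argument is real, $\Im M_\Theta(\lambda+i0)=0$ away from the poles, so there is no absolutely continuous part, and the Mittag--Leffler expansion of $\tan$ confirms that $\mu_\Theta$ is exactly the pure point measure \eqref{muintt*}.

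In Case (iii) the shift $+\frac{\ell'}2 i$ pushes the argument of $\tan$ off the real axis (its imaginary part is $\frac{\ell'}2>0$), so $M_\Theta$ extends analytically across $\bbR$ and $\mu_\Theta$ is purely absolutely continuous with density $\frac1\pi\Im M_\Theta(\lambda)$. Setting $u=\frac\ell2\lambda-\frac\Phi2$ and $v=\frac{\ell'}2$ and invoking the identity $\Im\tan(u+iv)=\frac{\sinh 2v}{\cos 2u+\cosh 2v}$, the density becomes $\frac1\pi A(\Phi)\frac{\sinh\ell'}{\cosh\ell'+\cos(\ell\lambda-\Phi)}$. The main technical step is to identify this with the Poisson kernel: from $P_r(\varphi)=\frac{1-r^2}{1+r^2-2r\cos\varphi}$ with $r=e^{-\ell'}$ one computes $P_{e^{-\ell'}}(\varphi)=\frac{\sinh\ell'}{\cosh\ell'-\cos\varphi}$, and the sign discrepancy in the denominator is cured by the phase shift $\cos(\ell\lambda-\Phi)=-\cos(\ell\lambda-\pi-\Phi)$, producing precisely $\frac1\pi A(\Phi)P_{e^{-\ell'}}(\ell\lambda-\pi-\Phi)$, i.e.\ \eqref{tuda}. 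I expect the delicate bookkeeping here — the combined phase shift by $\pi+\Phi$ and the accompanying sign flip in the denominator — to be the most error-prone point, so I would cross-check it by evaluating both sides at a convenient value of $\lambda$, and would likewise verify that the amplitude $A(\Phi)$ emerges with the correct coefficient rather than absorbing a stray factor.
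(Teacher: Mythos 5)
Your proposal is correct and follows essentially the same route as the paper: both start from the explicit formula \eqref{forM}, recover the measure from boundary values/pole structure, use the pole (Mittag--Leffler) structure of $\tan$ in Case (ii), and identify $\Im\tan(u+iv)$ with the Poisson kernel $P_{e^{-\ell'}}(\ell\lambda-\pi-\arg\Theta)$ in Case (iii). The only cosmetic difference is that in Case (ii) you extract the atomic masses by residue calculus and then confirm with the Mittag--Leffler expansion, whereas the paper uses that expansion directly for $\ell=2$, $\Theta=1$ and then rescales.
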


\begin{proof}

Indeed, since
$$
i=\frac1\pi \int_\bbR \left (\frac{1}{\lambda-z}-\frac{\lambda}{1+\lambda^2}\right )d\lambda, \quad z\in \bbC_+,
$$
 \eqref{mucon*} follows from  the  equality $M_\Theta=i$ in Case (i).

To check \eqref{muintt*}, we use the representation
\begin{align}
\tan(z)&=-\sum _{{k=0}}^{{\infty }}\left({\frac {1}{z-(k+{\frac {1}{2}})\pi }}+{\frac {1}{z+(k+{\frac {1}{2}})\pi }}\right)\label{tantan}
\\&=\int_\bbR \left (\frac{1}{\lambda-z}-\frac{\lambda}{1+\lambda^2}\right )d\nu(\lambda),\nonumber
 \end{align}
 where $\nu(d\lambda)$ is a discrete point measure
 \begin{equation}\label{muintan}
 \nu(d\lambda)=  \sum_{k\in \bbZ} \delta_{( k+\frac12)\pi }(d\lambda),
\end{equation}
which proves \eqref{muintt*} in the particular case of $\ell=2$, and $\Theta=1$, and then the   general case follows by  making a simple change of variables.

Using the explicit representation \eqref{forM}  for the Weyl-Titchmarsh function $M_\Theta(z)$  in Case (iii) one observes that
$M_\Theta(z)$ is bounded in the upper half-plane. Therefore, the representing  measure $\mu_\Theta(d\lambda)$ in Case (iii) is absolutely continuous with
the density given by
\begin{align}
 \mu_\Theta(d\lambda)
 &=\frac1\pi \Im \, M_{\Theta}(\lambda+i0)d\lambda=\frac1\pi A(\arg \Theta) \cdot \Im
\tan \left ( \frac{\ell \lambda+\ell' i-\arg\Theta}{2}\right ) d\lambda\nonumber
\\&=\frac1\pi A(\arg \Theta) P_{e^{-\ell'}}\left (\ell \lambda-\pi -\arg \Theta\right )d\lambda,\label{mugid}
\end{align}
which proves \eqref{tuda}.

Here we have used
the  representation for  the imaginary part of the tangent function of a complex argument via the  the Poisson kernel,
\begin{align*}
\Im \tan \left (\lambda+i\tau\right)&=\Im \frac1i\cdot\frac{e^{i(\lambda+i\tau)}-e^{-i(\lambda+i\tau)}}{e^{i(\lambda+i\tau)}+e^{-i(\lambda+i\tau)}}
\\
&=
-\Re \frac{e^{i\lambda}e^{-\tau}-e^{-i\lambda}e^{\tau}}{e^{i\lambda}e^{-\tau}+e^{-i\lambda}e^{\tau}} \cdot
\frac{e^{-i\lambda}e^{-\tau}+e^{i\lambda}e^{\tau}}{e^{-i\lambda}e^{-\tau}+e^{i\lambda}e^{\tau}}
\\ &=\frac{e^{2\tau}-e^{-2\tau}}{e^{2\tau}+2\cos  2\lambda +e^{-2\tau}}
=P_{e^{-2\tau}}(2\lambda-\pi),
\end{align*}
$$
\quad \lambda\in \bbR, \quad \tau > 0.
$$

\end{proof}

\begin{remark}
Observing that  $$
\frac1{2\pi}P_r(\lambda)d\lambda \to \sum_{n}\delta_n (d\lambda) \quad \text{as} \quad r\uparrow 1,$$
we get the following spectral
hierarchy of the representing  measures  given by \eqref{tuda}, \eqref{muintt*} and \eqref{mucon*}:
\begin{align*}
\frac1\pi A(\arg \Theta) \cdot P_{e^{-\ell'}}&\left (\ell \lambda-\pi -\arg \Theta\right )d\lambda
\\&
\downarrow \quad (\ell'\to 0)
\\
 \frac{2}{\ell}A(\arg \Theta)\cdot
 &\sum_{k\in \bbZ} \delta_{\frac{(2\pi k+1)\pi+\Phi}{\ell}} (d\lambda)
\\&
\downarrow  \quad (\ell\to \infty)
\\
 & \frac1\pi \,d\lambda,
\end{align*}
with the limits  as $\ell'\to 0$ and $\ell\to \infty$ taken in the sense of the  weak convergence of the measures. Here we used the inequality (see \eqref{ampli})
$$
\tanh\frac{\ell+\ell'}{2}\le A(\Phi)\le \coth\frac{\ell+\ell'}{2}
$$
on the last step that ensures that the amplitude $A(\Phi)$ approaches  $1$ as  $\ell \to \infty$.

\end{remark}




\section{The model dissipative operators}\label{secmodop}

Given a metric graph $\bbY$  in one of the Cases (i)--(iii) and a real parameter $k$, $0\le k<1$, we construct a family of
{\it   model} maximal dissipative differentiation operators
on $\bbY$. In what follows we will refer to the parameter $k$ as  {\it  the quantum gate coefficient  on the graph} $\bbY$.

If the metric graph $\bbY$ is in Case (i),
$$
\bbY=(-\infty, 0)\sqcup (0, \infty),
$$
denote  by  $$\widehat  D=\widehat  D_I(k)=i\frac{d}{dx},\quad   0\le k<1,$$the   maximal dissipative differentiation operator
with  the boundary condition at the origin
\begin{equation}\label{domdomi}
f_\infty (0+)=kf_\infty (0-),\quad 0\le k<1.
 \end{equation}

Notice that the case $k=0$ is exceptional in the sense that the point spectrum of the dissipative operator $ \widehat  D_I(0)$ fills in the entire (open) upper half-plane.

The corresponding strongly continuous semi-group generated by $\widehat  D_I(k)$ describes the motion  from left to right  of a (quantum)
particle which is  emitted outside (see Fig. 1) with probability $1-k^2$
through the quantum gate at the origin and  keeps moving  along the axes with probability $k^2$.

\vskip 1cm

\begin{pspicture}(12,4)%

\psline(.7, 2)(11.5,2)

\pscurve(1,2.2)(1.8,2.4)(2.5,3.5)(3.2,2.4)(4,2.2)

\psline[linewidth=1.3pt]{->}(3,2.8)(4.2,2.8)

\pscurve(8,2.1)(8.8,2.2)(9.5,2.75)(10.2,2.2)(11,2.1)

\psline[linewidth=1.3pt]{->}(10,2.4)(11.2,2.4)

\psline[linewidth=.5pt]{->}(6.5,2)(7.1,3)

\psline[linewidth=.5pt]{->}(6.5,2)(7.1,1)

\psline[linewidth=.5pt]{->}(6.5,2)(7.3,1.7)

\psline[linewidth=.5pt]{->}(6.5,2)(7.3,2.3)

\rput(6.5,3){$k$}

\psdot*[dotscale=2](6.5,2)

\psarc[linestyle=dashed,linewidth=.5pt]{<-}(6.5,2){.7}{20}{160}

\rput(6.5,0.7){{\bf Fig. 1} {\sc Dynamics on  the  metric graph $\bbY$  in Case} (i)}
\rput(6.5,0.2){{\sc with the quantum gate coefficient $k$ }}

\end{pspicture}
\vskip 1cm

If the metric graph $\bbY$ is in Case (ii) and
$$
\bbY=(0, \ell)
$$
for some $\ell>0$,
denote  by  \begin{equation}\label{domdomii**}\widehat  D=\widehat  D_{II}(k,\ell)=i\frac{d}{dx},
\end{equation}
 the   maximal dissipative differentiation operator
 determined by  the boundary condition
\begin{equation}\label{domdomii*}
 f_\ell(0)=kf_\ell(\ell),\quad 0\le k<1.
 \end{equation}

Notice  that  the case $k=0$ is also exceptional. That is, the dissipative differentiation operator $ \widehat  D_{II}(0,\ell)$
corresponding to the boundary condition
\begin{equation}\label{domdomii}
 f_\ell(0)=0
 \end{equation}
 has no spectrum.

The (nilpotent) semi-group generated by   $ \widehat  D_{II}(0,\ell)$ describes the motion of a particle which is emitted with probability one at  the right end-point of the   finite interval $[0,\ell]$ (see Fig. 2).

\begin{pspicture}(12,5)%

\psline(4.11, 3) (8.8,3)

\pscurve(5,3.2)(5.8,3.4)(6.5,4.5)(7.2,3.4)(8,3.2)

\psdot*[dotscale=2](8.8,3)

\psline[linewidth=1.3pt]{->}(7,3.9)(8.2,3.9)

\psarc(4, 3){.12}{0}{360}

\psline[linewidth=.5pt]{->}(9,3)(10,3)

\psline[linewidth=.5pt]{->}(9,3)(9.6,4)

\psline[linewidth=.5pt]{->}(9,3)(9.6,2)

\psline[linewidth=.5pt]{->}(9,3)(9.7,2.7)

\psline[linewidth=.5pt]{->}(9,3)(9.7,3.3)


\rput(6.5,1.5){{\bf Fig. 2} {\sc Dynamics on  the  metric graph $\bbY$  in Case} (ii)}
\rput(6.5,1){{\sc with the quantum gate coefficient $k=0$ }}

\end{pspicture}

Finally, if  the graph $\bbY $ is in Case (iii),
$$\bbY=(-\infty, 0)\sqcup (0, \infty) \sqcup(0, \ell),
$$
denote  by  $\widehat  D= \widehat  D_{III}(k, \ell)=i\frac{d}{dx}$ the   maximal dissipative differentiation operator on $
\bbY$ with   the boundary conditions
\begin{equation}\label{domdomiii}
\begin{pmatrix}
f_\infty(0+)\\
f_\ell(0)
\end{pmatrix}=\begin{pmatrix}
k& 0\\
 \sqrt{1-k^2}&0
\end{pmatrix}
 \begin{pmatrix}
f_\infty(0-)\\
f_\ell(\ell)
\end{pmatrix}, \quad 0< k<1.
\end{equation}

The dynamics associated with the strongly continuous  semi-group  generated by  $ \widehat  D_{III}(k, \ell)$
describes the motion
a  wave-packet
  moving from  left to  right (see Fig. 3).
\vskip 3cm



\begin{pspicture}(12,4)

\pscurve(1,4.2)(1.8,4.4)(2.5,5.5)(3.2,4.4)(4,4.2)

\psline[linewidth=1.3pt]{->}(9.9,5.4)(11.1,5.4)

 \pscurve(8,5.1)(8.8,5.2)(9.5,5.75)(10.2,5.2)(11,5.1)

\psline[linewidth=1.3pt]{->}(8.9,3.3)(9.9,3.3)

 \pscurve(7,3.05)(7.8,3.1)(8.5,3.35)(9.2,3.1)(10,3.05)

\psline[linewidth=1.3pt]{->}(2.9,5)(4.1,5)


\psline(1,4)(6,4)

\psline(6.5,5)(12,5)

\psline(6.5,3)(10.5,3)

\psdot*[dotscale=2](6,4)
\psline(6,4)(6.5,5)
\psline(6,4)(6.5,3)




\psdot*[dotscale=2](10.5,3)


\psarc[linestyle=dashed,linewidth=.5pt]{<-}(6,4){1.5}{50}{170}

\psarc[linestyle=dashed,linewidth=.5pt]{->}(6,4){1.5}{190}{300}

\rput(5.9,5.9){$k$}

\rput(5.9,2){$\sqrt{1-k^2}$}

\psline[linewidth=.5pt]{->}(10.7,3)(12,3)

\psline[linewidth=.5pt]{->}(10.7,3)(11.6,4)

\psline[linewidth=.5pt]{->}(10.7,3)(11.6,2)

\psline[linewidth=.5pt]{->}(10.7,3)(11.7,2.7)

\psline[linewidth=.5pt]{->}(10.7,3)(11.7,3.3)

\rput(6.5,.5){{\bf Fig. 3} {\sc Dynamics on  the  metric graph $\bbY$  in Case} (iii)}
\rput(6.5,0){{\sc with the quantum gate coefficient $k$ }}

\end{pspicture}
\vskip 1cm

 If the packet is  initially supported by the negative semi-axis, after  the interaction with the scatterer located at the center of the graph, the particle
continues its rightward motion along the real axis with its initial shape amplified by the factor $k$
while  a copy  of the  wave-packet   amplified by $\sqrt{1-k^2}$ turns right onto an appendix of length $\ell$
attached to the obstacle. When the  wave-packet
approaches the right end of the interval $[0,\ell]$ the wave is terminated.

From the boundary conditions \eqref{domdomiii} it follows that  the quantum Kirchhoff rule
(at the junction)
\begin{equation}\label{qkr}|f_\infty(0-)|^2= |f_\infty(0+)|^2 +|f_\ell(0)|^2
\end{equation}holds.

Taking into account  wave-particle duality, one can also say that
the corresponding particle with probability $k^2$ keeps moving  along the real axis and with probability $1-k^2$ enters the appendix.
Then, assuming that the initial profile of the wave-packet
was supported by the interval $[-L,0$), the particle   is emitted   with probability one  after time $t=\ell+L$ has elapsed.

Notice that a wave-packet   initially supported to the right of the obstacle moves to the right without changing its shape regardless whether
the wave is  supported by the semi-axis $(-\infty, 0]$ or by the finite interval $[0, \ell]$.
To complete the description of the dynamics in the general case, one applies the superposition principle.

 \begin{remark}\label{remdisd}
Notice that the  boundary conditions \eqref{domdomi}, \eqref{domdomii}  and \eqref{domdomiii}  (but not \eqref{domdomii*} with $k\ne0$) are the local vertex conditions, which means that different vertices do not interact.
In particular, the   domains of the corresponding dissipative  operators $\widehat D$
  are invariant with respect to the group of local  gauge transformations.
As a corollary,  the dissipative operators $\widehat D=\widehat D_{I, II, III}  $  satisfy the commutation relations
 \begin{equation}\label{ccrdisss}
U_t^*\widehat  DU_t=\widehat  D+tI\quad \text{on }\quad \Dom (\widehat  D), \quad t\in \bbR,
\end{equation}
where $U_t=e^{-it \cQ}$ is the unitary group generated by
 the operator $\cQ$ of multiplication by independent variable
on the graph $\bbY$.

This can be justified  immediately  but it also follows from a more general considerations below (cf. Remark \ref{remdotd}).

Let   $\cA(x)$ denote a real-valued piecewise continuous function on $\bbY$.  We remark that  the operators
$\widehat  D$ and $\widehat  D+\cA(x)$ are unitarily equivalent.
Indeed,  let
$\phi(x)$ be  any solution to the differential equation
$$
\phi'(x)=\cA(x),
$$
on the edges of the graph and continuous at the origin $\{0\}\in\bbY$.
Denote by $V$ the unitary local gauge transformation
\begin{equation}\label{gaugege3}
(Vf)(x)=e^{i\phi(x)}f(x),  \quad f \in L^2(\bbY).
\end{equation}
Then, taking into account the boundary conditions \eqref{domdomi},
\eqref{domdomii}  and
\eqref{domdomiii}
one concludes that the domain of $\widehat  D$ is $V$-invariant
$$
V(\dom (\widehat D)) =\dom (\widehat  D),
$$
and  moreover,
$$
\widehat D=V^*(\widehat D+\mathcal{A}(x))V.
$$
In particular,    \eqref{ccrdisss} holds.
\end{remark}

\begin{remark}\label{arlin}
Notice that the model  dissipative differentiation operators $\widehat D$  extend
 the  symmetric differentiation operator $\dot D$ on the graph
 $\bbY$   with   the boundary conditions
\eqref{dotos}, \eqref{dotint} and \eqref{dotdom0}, respectively.  Moreover, the  symmetric operator $\dot D$
is uniquely determined by $\widehat D$
and
$$
\dot D=\widehat D|_\cD\quad \text{where}\quad  \cD=\dom(\widehat D)\cap \dom((\widehat D)^*).$$

If $\widehat D$ is in Case (i), then the corresponding
 symmetric operator $\dot D$ admits a quasi-selfadjoint extension  the point spectrum of which fills in the entire upper half-plane.
Notice that  this property characterizes the operator up to  unitary equivalence. That is, any prime closed  symmetric operator   with deficiency indices $(1,1)$ that admits  a
quasi-selfadjoint  extension
with point spectrum filling in the whole upper half-plane   is unitarily equivalent to the operator $\dot D$ in Case (i)  \cite[Ch. IX, Sec. 114]{AkG}.
Apparently, any point from $\bbC_+$ is an eigenvalue for the  extension  $\widehat  D_I(0)$.

Moreover, the symmetric operator $\dot D$
has a relatively poor family of unitarily inequivalent  (dissipative) quasi-selfadjoint extensions.
The reason is that  any two (dissipative) extensions with the same
absolute value of the von Neumann parameter $k$, $(0\le k \le 1)$,
 are unitarily equivalent to the operator  $\widehat  D_I(k)$.
Recall that the absolute value of the von Neumann parameter of the dissipative operators in question is well defined (see Remark \ref{mnogopros}).
 This property also characterizes $\dot D$ up to unitary equivalence:
any prime closed  symmetric operator with deficiency indices $(1,1)$ that admits two distinct unitarily equivalent
quasi-selfadjoint  extensions is unitarily equivalent to the operator $\dot D$ \cite[Theorem 2]{ArDTs}.

The dissipative operator $\widehat D=\widehat D_{II}(0,\ell)$  in Case (ii) given by \eqref{domdomii**} and \eqref{domdomii*}
is the only dissipative extension of   $\dot D_{II}(\ell)$ whose resolvent set coincides with the whole complex plane $\bbC$.  Moreover,
 any dissipative
quasi-selfadjoint  extension  of a  prime closed  symmetric operator   with deficiency indices $(1,1)$
 without spectrum
 is unitarily equivalent to the symmetric differentiation operator $\dot D_{II}(\ell)$ on a finite interval of length $\ell$  \cite[Theorem 14]{L46}.

We remark  that in contrast to  Case (i), in  Cases (ii) and (iii)  the  maximal dissipative operator $\widehat D$ with the boundary  conditions
\eqref{domdomi} and  \eqref{domdomii}, respectively, is  the only one  maximal dissipative  extension  of the  symmetric  differentiation operator $\dot D$ with  a  gauge invariant domain.
Indeed, the boundary conditions \eqref{bcII} and \eqref{bcIII} are gauge invariant if  either   $\Theta=0$ or  $\Theta=\infty$. If $\Theta=\infty$, the corresponding quasi-selfadjoint extension is not dissipative, which proves the claim.
\end{remark}

The following structure theorem  shows that the differentiation operator $\widehat D$,
 $$\widehat D=\widehat D_{III}(k, \ell),$$
in  Case (iii) can be obtained  as  the result of an operator coupling (spectral synthesis) of  the more ``elementary"  dissipative differentiation operators
$\widehat D=\widehat D_{I}(k)$ in Case (i) and
$\widehat D=\widehat D_{II}(0,\ell)$
in Case (ii). For the concept of an operator coupling we refer to Appendix  \ref{littem}.

\begin{theorem}\label{opcup} The differentiation operator $\widehat D_{III}(k, \ell)$  on the metric graph $$\bbY_{III}=(-\infty, 0]\sqcup [0, \infty) \sqcup[0, \ell]$$
with the quantum gate coefficient $k$  is an operator coupling
of the  differentiation operator $\widehat D_{I}(k)$  on the edge $$\bbY_{I}=(-\infty, 0]\sqcup [0, \infty)$$
with the same quantum gate coefficient $k$
and the operator  $\widehat D_{II}(0,\ell)$  on  the remaining  edge $$\bbY_{II}=[0,\ell]$$ with the quantum gate coefficient $0$, respectively.
 That is,
 $$\bbY_{III}=
 \bbY_{I}\sqcup\bbY_{II} $$
 and
\begin{equation}\label{ccoouu}
\widehat D_{III}(k, \ell)= \widehat D_{I}(k)\uplus \widehat D_{II}(0,\ell).
\end{equation}
\end{theorem}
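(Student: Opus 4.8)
The plan is to unfold the definition of the operator coupling from Appendix \ref{littem} for the two concrete nodes $\widehat D_I(k)$ and $\widehat D_{II}(0,\ell)$ and to check that the resulting operator, realized on $L^2(\bbY_I)\oplus L^2(\bbY_{II})=L^2(\bbY_{III})$, is exactly the differentiation operator $\widehat D_{III}(k,\ell)$ determined by the boundary conditions \eqref{domdomiii}. Recall that a coupling of two maximal dissipative operators with deficiency indices $(1,1)$ is built on the orthogonal sum of the two spaces and acts as $i\frac{d}{dx}$ on each edge; the new domain is obtained by keeping the through part of each node and identifying the \emph{output} (emission) channel of the first node with the \emph{input} (injection) channel of the second. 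Since $\bbY_{III}=\bbY_I\sqcup\bbY_{II}$ as a union of edges, the underlying Hilbert space of the coupling is canonically $L^2(\bbY_{III})$, so it suffices to compare domains and actions.

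First I would read off the channel vectors of the two summands by integration by parts, using the deficiency elements of Lemma \ref{defeff}. For $\widehat D_I(k)$ on $\bbY_I=(-\infty,0]\sqcup[0,\infty)$ with $f_\infty(0+)=kf_\infty(0-)$ one finds $\Im(\widehat D_I(k)f,f)=\tfrac12(1-k^2)|f_\infty(0-)|^2$, so the rank-one imaginary part is concentrated at the origin and the emission amplitude of this node is $\sqrt{1-k^2}\,f_\infty(0-)$, while $f_\infty(0+)=kf_\infty(0-)$ is the retained through channel. For $\widehat D_{II}(0,\ell)$ on $[0,\ell]$ with $f_\ell(0)=0$ one computes $\Im(\widehat D_{II}(0,\ell)f,f)=\tfrac12|f_\ell(\ell)|^2$, so its defect sits at the right endpoint; in the coupling the injection channel at the left endpoint, previously frozen to $f_\ell(0)=0$ in \eqref{domdomii}, is the port that receives the incoming signal.

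The coupling identification then sets the injection channel of the second node equal to the emission channel of the first, i.e.\ $f_\ell(0)=\sqrt{1-k^2}\,f_\infty(0-)$, leaves the through condition $f_\infty(0+)=kf_\infty(0-)$ untouched, and imposes no feedback from the interval so that $f_\ell(\ell)$ remains a free emission channel. These are precisely the relations packaged in the matrix boundary condition \eqref{domdomiii}, whence $\widehat D_I(k)\uplus\widehat D_{II}(0,\ell)$ and $\widehat D_{III}(k,\ell)$ have the same domain and action and therefore coincide. As a conceptual backbone and independent check I would invoke the multiplication theorem of Appendix \ref{littem}: the characteristic function of the coupling is the product $S_{\widehat D_I(k)}(z)\cdot S_{\widehat D_{II}(0,\ell)}(z)=k\cdot e^{i\ell z}$, which is exactly the characteristic function $k\,e^{i\ell z}$ of the triple attached to $\widehat D_{III}(k,\ell)$ computed in Section \ref{s9}; combined with primeness (Lemma \ref{primeD}) and the uniqueness Theorem \ref{unitar} this already forces unitary equivalence.

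The main obstacle I anticipate is bookkeeping rather than conceptual: one must align the normalization and sign conventions of the channel maps in Appendix \ref{littem} with the explicit deficiency basis $g_\pm$ of Lemma \ref{defeff} so that the gate coefficient $k$ and the leakage factor $\sqrt{1-k^2}$ emerge with the correct placement in \eqref{domdomiii}, and one must confirm that the coupling is maximal dissipative with symmetric part of deficiency indices $(1,1)$ so that the construction is legitimate. Finally, a small point of care is to upgrade the characteristic-function coincidence (which by itself yields only unitary equivalence) to the genuine operator identity under the canonical identification $L^2(\bbY_{III})=L^2(\bbY_I)\oplus L^2(\bbY_{II})$; this is exactly what the direct domain comparison above supplies.
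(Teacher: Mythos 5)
Your proposal has a genuine gap: it never engages with the definition of operator coupling that the theorem actually refers to. In this paper the symbol $\uplus$ is not introduced by any gluing construction; Definition \ref{defcoup} in Appendix G declares that a given operator $\widehat A\in\fD(\cH_1\oplus\cH_2)$ \emph{is} a coupling $\widehat A_1\uplus\widehat A_2$ precisely when two conditions hold: (i) the restriction/invariance property relative to the decomposition $\cH_1\oplus\cH_2$, and (ii) the inclusion $\dot A\subset\widehat A_1\oplus(\widehat A_2)^*$ for the symmetric part $\dot A=\widehat A|_{\Dom(\widehat A)\cap\Dom((\widehat A)^*)}$. Accordingly, the paper's proof of Theorem \ref{opcup} consists exactly of verifying (i) and (ii) for the already-defined operator $\widehat D_{III}(k,\ell)$; in particular (ii) rests on the identification $\Dom((\widehat D_{II}(0,\ell))^*)=\{f_\ell\in W_2^1((0,\ell))\,:\,f_\ell(\ell)=0\}$ and the resulting boundary conditions \eqref{bcond2}. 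You instead ``recall'' a constructive rule — keep the through part of each node and identify the emission channel of the first with the injection channel of the second — and check that this rule reproduces \eqref{domdomiii}. No such rule appears in Appendix G, and none could canonically define ``the'' coupling, because couplings are not unique: the paper says so explicitly for the bounded class (see \eqref{kokol}), and here replacing your identification $f_\ell(0)=\sqrt{1-k^2}\,f_\infty(0-)$ by $f_\ell(0)=\Theta\sqrt{1-k^2}\,f_\infty(0-)$ with any unimodular $\Theta$ yields a different operator that is equally a coupling of the same two nodes. So your argument establishes that $\widehat D_{III}(k,\ell)$ is the output of your gluing recipe, which is not the statement \eqref{ccoouu}; the bridge — that the glued operator satisfies (i) and (ii) — is missing, and that bridge is the entire content of the paper's half-page proof.

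The characteristic-function fallback cannot close this gap either, and you partly concede it. Theorem \ref{opcoup} (the multiplication theorem) has the coupling relation as its \emph{hypothesis}, so it cannot be used to establish that relation; routing the argument through the existence of some coupling plus the uniqueness Theorem \ref{unitar} yields only that $\widehat D_{III}(k,\ell)$ is unitarily equivalent to some coupling, and the implementing unitary need not respect the decomposition $L^2(\bbR)\oplus L^2((0,\ell))$, so the identity \eqref{ccoouu} with the factors $\widehat D_I(k)$ and $\widehat D_{II}(0,\ell)$ sitting on their given edges does not follow. What is worth keeping from your draft are the boundary computations $\Im(\widehat D_I(k)f,f)=\tfrac12(1-k^2)|f_\infty(0-)|^2$ and $\Im(\widehat D_{II}(0,\ell)f,f)=\tfrac12|f_\ell(\ell)|^2$: combined with the computation of the adjoint $(\widehat D_{II}(0,\ell))^*$ as the differentiation operator on $\{f_\ell(\ell)=0\}$, they give condition (ii) in one line from \eqref{dotdom0}, and a direct inspection of $\Dom(\widehat D_{III}(k,\ell))$ intersected with the component subspaces handles condition (i). Making those two checks explicit — rather than the gluing narrative — turns your note into the paper's proof.
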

\begin{proof}

To see that,  set  $\cH_1=L^2((-\infty, \infty))$ and $\cH_2=L^2((0,\ell))$. One observes that
$$
  \widehat D_{III}(k, \ell)|_{ \dom ( \widehat D_{III}(k,\ell))\cap  \cH_1}= \widehat D_{I}(k)
$$
and
hence  the requirement (i) in the definition  \ref{defcoup} (see Appendix G) of a coupling of two dissipative operators is met.

Next,
$
\dom ( \widehat D_{I}(0,k))\oplus \dom ( (\widehat D_{II}(\ell))^*)
$
consists of the three-component functions $f=(f_-, f_+, f_\ell)^T$, with
$$f_-\oplus f_+\oplus f_\ell\in
 W_2^1((-\infty, 0))\oplus W_2^1((0,\infty))\oplus W_2^1((0, \ell))
$$
such that
\begin{equation}\label{bcond2}
 f_+(0+)=
k f_-(0-)
\quad \text{ and } \quad
 f_\ell(\ell-)=0
\end{equation}
and hence
$$
\dom (\dot  D)\subset \dom ( \widehat D_{I}(k))\oplus \dom ( ( \widehat D_{II}(0,\ell))^*).
$$
In particular,
$$
\dot  D\subset  \widehat D_{I}(k)\oplus(\widehat D_{II}(0,\ell))^*
$$
 and hence the requirement (ii) in the definition of an operator coupling is met as well. Therefore,
\eqref{ccoouu} holds.
\end{proof}

For the further references it is convenient to adopt the following Hypothesis.

\begin{hypothesis}\label{tripleops}
Assume that the metric graph $\bbY$ is in one of the Cases (i)--(iii) and $\widehat D$ is the model dissipative operator with the quantum gate coefficient $k$ on
$\bbY$ given by the boundary conditions \eqref{domdomi}, \eqref{domdomii*} and \eqref{domdomiii}, respectively.
 Let $\dot D$ be the restriction of $\widehat D$ on $\cD=\dom(\widehat D)\cap \dom((\widehat D)^*)$.
 Assume that $D_\Theta$ is the self-adjoint reference extension of $\dot D$   referred to in Theorem \ref{gengen} (see \eqref{bcI}, \eqref{bcII}, and \eqref{bcIII}).
If the graph $\bbY$ is in Case (ii)  assume that $k=0$ and if $\bbY$ is in Case (iii) we require that $k\ne0$.

\end{hypothesis}

\begin{definition}\label{modeltrip} Under 
Hypothesis \ref{tripleops} suppose  that  $k=0$ whenever
 the graph $\bbY$ is in Case (ii) and  that $k\ne0$  whenever $\bbY$ is in Case (iii).
Under this assumption
we
call  the triple of differentiation operators $(\dot D, \widehat D, D_\Theta)$
\it the model triple on $\bbY$ with the quantum gate coefficient $k$.
\end{definition}

More explicitly,
each of the differentiation  operators from the triple  $(\dot D, \widehat D, D_\Theta)$   in the Hilbert $L^2(\bbY)$  is given by the differential expression
 $$
 \tau=i\frac{d}{dx}
 $$
(on the edges of the graph $\bbY$) initially defined  on the Sobolev space $W_2^1(\bbY)$,
$$
W_2^1(\bbY)=\begin{cases}
W_2^1((-\infty, 0))\oplus W_2^1((0, \infty))&\\
W_2^1((0,\ell))&\\
(W_2^1((-\infty, 0))\oplus W_2^1((0, \infty))\oplus W_2^1((0,\ell))&
\end{cases}.
$$

In Case (i), the metric graph  has the form
$ \bbY=(-\infty, 0)\sqcup(0,\infty)$,
and
\begin{align*}
\Dom(\dot D)&=\{f_\infty\in W_2^1(\bbY)\, |  f_\infty(0+)=f_\infty(0-)=0\},
\\
\Dom(\widehat D)&=\{f_\infty\in W_2^1(\bbY)\, | \, f_\infty(0+)=kf_\infty(0-)\},
\\
\Dom(D_\Theta)&=\{f_\infty\in W_2^1(\bbY)\, | \, f_\infty(0+)=-\Theta f_\infty(0-)\},
\end{align*}
$$
 0\le k<1.
$$

In Case (ii),
$ \bbY=(0,\ell)$,
\begin{align*}
\Dom(\dot D)&=\{f_\ell \in W_2^1(\bbY)\, | f_\ell(0)=  f_\ell(\ell)=0\},
\\
\Dom(\widehat  D)&=\{f_\ell\in  W_2^1(\bbY)\, | f_\ell(0)=0\},
\\
\Dom(D_\Theta)&=\{f_\ell\in  W_2^1(\bbY)\, | f_\ell(0)=-\Theta f_\ell(\ell)\}.
\end{align*}

  In Case (iii),
$\bbY =(-\infty, 0)\sqcup(0,\infty)\sqcup(0,\ell)$, and
\begin{align*}
\Dom(\dot D)&=\left \{ f_\infty\oplus f_\ell\in  W_2^1(\bbY)\,|\,  \begin{cases}
f_\infty(0+)&=k f_\infty(0-)
\\
f_\ell(0+)&=\sqrt{1-k^2} f_\infty(0-)\\
 f_\ell(\ell)&=0
\end{cases}
\right \},
\\
\Dom(\widehat  D)&=\left \{ f_\infty\oplus f_\ell\in  W_2^1(\bbY)\, |\,  \begin{cases}
f_\infty(0+)&=k f_\infty(0-)
\\
f_\ell(0+)&=\sqrt{1-k^2} f_\infty(0-)\\
\end{cases}
\right \},
\\
\Dom(  D_\Theta )&=\left \{ f_\infty\oplus f_\ell\in  W_2^1(\bbY)\,|\,
 \begin{cases}
f_\infty(0+)&=k f_\infty(0-) + \sqrt{1-k^2}\Theta  f_\ell(\ell)
\\
f_\ell(0+)&=\sqrt{1-k^2} f_\infty(0-)-k\Theta f_\ell(\ell)\\
\end{cases}
\right \},
\end{align*}
$$
0<k<1.
$$




\section{The characteristic function $S_{(\dot D, \widehat D, D_\Theta)}(z)$ }\label{s9}

Now we are ready  to evaluate the characteristic function $S_{(\dot D, \widehat D, D_\Theta)}(z)$ of the triple  $(\dot D, \widehat D, D_\Theta)$  of differentiation operators on a metric graph $\bbY$ in  Cases (i)-(iii).




First, we evaluate the characteristic function of the triple
$(\dot D, \widehat D, D)$
for a particular choice of the reference self-adjoint operator
$D$  referred to in Theorem \ref{gengen},
in Cases (i), (ii) and (iii) with $\Theta=1$, respectively.
Recall that the operator $D$ is determined by the boundary conditions
$$
f_\infty(0+)=- f_\infty(0-),
$$
$$ f_\ell(0)=-f_\ell(\ell),
$$
$$
 \begin{cases}
f_\infty(0+)&=k f_\infty(0-) + \sqrt{1-k^2}  f_\ell(\ell)
\\
f_\ell(0+)&=\sqrt{1-k^2} f_\infty(0-)-k f_\ell(\ell)\\
\end{cases}
$$
in Cases (i)--(iii), respectively.

\begin{lemma}\label{key}
Let   $(\dot D, \widehat D, D)$ be the model triple of the differentiation operators on the metric graph $\bbY$ in one of the Cases $(i)-(iii)$  as above.
Then  the characteristic function of the  triple  $(\dot D, \widehat D, D)$  admits the representation
\begin{equation}\label{dlaS}
S_{(\dot D, \widehat D, D)}(z)= \begin{cases}
 k, & \text{ in Case }  (i)
\\e^{i\ell z}, &  \text{ in Case }  (ii)
\\ke^{i\ell z}, &\text{ in Case }  (iii)
 \end{cases},
 \quad z\in \bbC_+,
\end{equation}
where  $0\le k<1$  $($in Case $(i)$$)$ and $0<k<1$ $($in Case $(iii)$$)$.
\end{lemma}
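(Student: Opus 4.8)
The plan is to obtain \eqref{dlaS} directly from the definition \eqref{ch12} of the characteristic function,
$$
S_{(\dot D,\widehat D,D)}(z)=\frac{s_{(\dot D,D)}(z)-\varkappa}{\overline{\varkappa}\,s_{(\dot D,D)}(z)-1},\quad z\in\bbC_+,
$$
in which the Liv\v{s}ic function $s_{(\dot D,D)}(z)$ has already been evaluated in Lemma \ref{pomni}. Thus the only missing ingredient is the von Neumann parameter $\varkappa$ of the model triple $(\dot D,\widehat D,D)$, and the proof reduces to (a) computing $\varkappa$, and (b) simplifying the resulting rational expression.

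To compute $\varkappa$ I would use the explicit normalized deficiency elements $g_\pm$ from Lemma \ref{defeff} and locate the unique constant with $g_+-\varkappa g_-\in\dom(\widehat D)$, where $\widehat D$ is the model operator defined by the vertex conditions \eqref{domdomi}, \eqref{domdomii*} (with $k=0$) and \eqref{domdomiii}. In each case this is a single boundary evaluation. In Case (i) the boundary data of $g_+-\varkappa g_-$ are $\sqrt2$ at $0-$ and $-\varkappa\sqrt2$ at $0+$, so \eqref{domdomi} forces $\varkappa=-k$; in Case (ii) the value at the left endpoint is a multiple of $1-\varkappa e^{\ell}$, so \eqref{domdomii} forces $\varkappa=e^{-\ell}$; in Case (iii), substituting the two-component elements \eqref{hyddef1}--\eqref{hyddef2} into \eqref{domdomiii} produces the single consistency relation $\varkappa e^{\ell}=k$, i.e. $\varkappa=ke^{-\ell}$. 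In all three cases $\varkappa$ is real, which removes the complex conjugation from \eqref{ch12}.

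With $\varkappa$ in hand the substitution becomes transparent once organized correctly. Case (i) is immediate: since $s_{(\dot D,D)}\equiv0$, \eqref{ch12} collapses to the constant $S(z)=\varkappa$. For Cases (ii) and (iii) I would exploit the Möbius structure: writing $w=e^{i\ell z}$, the Liv\v{s}ic function of Lemma \ref{pomni} is a Möbius transform $w\mapsto s$ with coefficient matrix $\left(\begin{smallmatrix}k&-ke^{-\ell}\\ k^{2}e^{-\ell}&-1\end{smallmatrix}\right)$ (set $k=1$ for Case (ii)), while the map $v\mapsto(v-\varkappa)/(\varkappa v-1)$ in \eqref{ch12} has matrix $\left(\begin{smallmatrix}1&-\varkappa\\ \varkappa&-1\end{smallmatrix}\right)$. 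With the above values of $\varkappa$ the product of these two matrices equals $(1-k^{2}e^{-2\ell})\,\diag(k,1)$ in Case (iii) and $(1-e^{-2\ell})\,\diag(1,1)$ in Case (ii). Hence the composite Möbius map sends $w\mapsto kw$ and $w\mapsto w$ respectively, yielding $S(z)=ke^{i\ell z}$ and $S(z)=e^{i\ell z}$.

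I expect the only genuine labor to be this final simplification, and the $2\times2$ matrix bookkeeping above is precisely the device that keeps it from degenerating into an opaque manipulation of nested fractions, since the product exhibits the cancellation of the off-diagonal cross terms at a glance. The one point requiring care is the overall sign in Case (i): the above computation gives $S=\varkappa=-k$, so reconciling it with the stated constant amounts to fixing the orientation of $g_-$ (equivalently, the sign convention in \eqref{domdomi}) consistently with Cases (ii)--(iii).
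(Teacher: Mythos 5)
Your proposal is correct and follows essentially the same route as the paper: the paper also identifies the von Neumann parameter by testing $g_+-\varkappa g_-$ against the vertex conditions \eqref{domdomi}, \eqref{domdomii}, \eqref{domdomiii} and then simplifies \eqref{ch12} against the Liv\v{s}ic function of Lemma \ref{pomni}; your $2\times 2$ matrix bookkeeping merely streamlines that final computation. Your Case (i) caveat also points at a real slip in the paper itself: with the conventions \eqref{deftip1}, \eqref{domdomi} and \eqref{bcI*}, the element the paper's proof actually exhibits is $g_++kg_-$ (its displayed $f_\infty$ carries the coefficient $+k$), so the von Neumann parameter of the triple $(\dot D,\widehat D, D_\Theta|_{\Theta=1})$ is $\varkappa=-k$ and $S\equiv -k$, and the constant $+k$ of \eqref{dlaS} is recovered only after a compensating sign choice (e.g., writing \eqref{domdomi} as $f_\infty(0+)=-kf_\infty(0-)$, or taking $D_\Theta|_{\Theta=-1}$ as the reference extension), exactly as you indicate.
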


\begin{proof}

To check \eqref{dlaS} we proceed as follows.

Let $g_\pm$ be  the deficiency elements given by  \eqref{deftip1} in Case (i), \eqref{deftip2} in Case (ii) and
\eqref{hyddef1}, \eqref{hyddef2} in Case (iii).

We claim that  \begin{equation}\label{SDD}
f=g_+-S(i)g_-\in \Dom (\widehat D),
\end{equation}
where we have used the shorthand notation
$$ S(z)=S_{(\dot D, \widehat D, D)}(z).$$

It suffices to check that $f$  satisfies the boundary conditions \eqref{domdomi}, \eqref{domdomii} and \eqref{domdomiii} in Cases (i)-(iii), respectively, and therefore
$$f=g_+-S(i)g_-\in \Dom (\widehat D).$$

 Indeed, in Case (i),
 \begin{equation}\label{for1}
 S(i)=k,
 \end{equation} and hence
 $$
 f_\infty(x)=\sqrt{2}(e^x\chi_{(-\infty,0)}(x)+ke^{-x}\chi_{(0,\infty)}(x)).
 $$
Clearly,
$$
f_\infty(0-)=kf_\infty(0+),
$$
and therefore  $f_\infty \in  \dom (\widehat D)$.

In Case (ii),
 \begin{equation}\label{for2}
S(i)=e^{-\ell},
\end{equation}
 and therefore the element
$$
f_\ell(x)=\frac{\sqrt{2}}{\sqrt{e^{2\ell}-1}}(e^{ x}-e^{-\ell}e^{\ell-x}), \quad x\in [0,\ell],
$$
satisfies the Dirichlet boundary condition
$
f_\ell(0)=0
$
which proves that $f_\ell\in\dom (\widehat D)$.

Finally, in Case (iii) we have that
 \begin{equation}\label{param}S(i)=ke^{-\ell}.\end{equation}
 Therefore, the element
 $
 f=g_+-S(i)g_-=g_+-ke^{-\ell}g_-
 $
admits the representation (see \eqref{hyddef1} and \eqref{hyddef2})
$$f=\frac{\sqrt{2}}{\sqrt{e^{2\ell}-k^2}}(f_\infty, f_\ell)^T,$$
where
$$
f_\infty(x)=\sqrt{1-k^2}e^{x}\chi_{(-\infty, 0)}(x)+ke^{-\ell}\sqrt{1-k^2}e^{\ell-x}\chi_{(0, \infty)}(x),  \quad x\in \bbR,
$$
and
$$
f_\ell(x)=e^x-(ke^{-\ell})ke^{\ell-x}, \quad x\in [0, \ell].
$$
We have \begin{align*}
 f_\infty(0-)&=\sqrt{1-k^2},
 \\ f_\infty(0+)&=k\sqrt{1-k^2},\\
 f_\ell(0)&=1- k^2,\\
\end{align*}
which shows that
 the boundary conditions \eqref{domdomiii} hold. Therefore  $
f\in \Dom(\widehat D).
$

Since  $$g_+-g_-\in \Dom (D)$$ and
$$g_+-S(i)g_-\in \Dom (\widehat D),$$
one computes
  $$
S_{(\dot D,\widehat D, D)}(z)=\frac{s_{(\dot D, D)}(z)-S(i)}
{\overline{S(i)}s_{(\dot D, D)}(z)-1}=\frac{s_{(\dot D, D)}(z)-ke^{-\ell}}
{ke^{-\ell}s_{(\dot D, D)}(z)-1}.
  $$
It remains to remark that since  the Liv\v{s}ic function  $s_{(\dot D, D)}(z)$   is given by \eqref{srepr},  one gets    \eqref{dlaS}  by  a direct computation.

\end{proof}

\begin{remark}

Notice that    Lemma \ref{key},  in particular, states  that the characteristic function of the  triple  $(\dot D, \widehat D, D)$  in Case (iii)
is the product of the characteristic functions  in Cases (i) and (ii), respectively. That is,
 \begin{equation}\label{prodrule}
S_{(\dot D,\widehat D, D)}(z)=k\cdot e^{i\ell z}.
 \end{equation}
In view of Theorem \ref{opcup}, the rule \eqref{prodrule} can also be obtained
 as  a corollary  of the Multiplication Theorem \ref{opcoup} in Appendix G.

Also, comparing \eqref{for1}, \eqref{for2}  and \eqref{param}, one observes  that the von Neumann parameter $ke^{-\ell}$ of the coupling  $\widehat D_{III}(k, \ell)$ associated with the bases \eqref{hyddef1} and  \eqref{hyddef2}
        is the product of the von Neumann parameters $k$ and $e^{-\ell}$
     of the dissipative operators $\widehat D_{I}(k)$ and $\widehat D_{II}(\ell)$
       with respect to the bases \eqref{deftip1} and \eqref{deftip2}, respectively (see Remark \ref{vonbas} for the terminology).
        This observation
      illustrates   the multiplicativity property for  the absolute values of the  von Neumann extension parameters  under coupling
(see   \cite[Theorem  5.4]{MT-MAT} or Theorem \ref{opcoup} in Appendix G).
Recall that the concept of  absolute value of the von Neumann parameter  is well defined by Remark \ref{mnogopros}.
\end{remark}

The  more general  result  below  can be understood as the solution of the following inverse problem: find a triple with a prescribed characteristic function referred to in Theorem  \ref{lem:diss}  (cf. \cite[Theorem 20]{JM}).

\begin{theorem}\label{ind12}
The characteristic function $S_{(\dot D, \widehat D, D_\Theta)}(z)$ of the  model triple  $(\dot D, \widehat D, D_\Theta)$, $|\Theta|=1$, admits the representation
\begin{equation}\label{dlaSS}
S_{(\dot D, \widehat D, D_\Theta)}(z)= e^{-2i\alpha}\begin{cases}
 k, & \text{ in Case   $(i)$}
\\e^{i\ell z}, &  \text{ in Case   $(ii)$}
\\ke^{i\ell z}, &\text{ in Case   $(iii)$}
 \end{cases},
 \quad z\in \bbC_+,
 \end{equation}
where  $0\le k<1$  $($in Case $(i)$$)$ and $0<k<1$ $($in Case $(iii)$$)$.

Here  $\alpha$ and the boundary condition parameter $\Theta$ are related as follows
  \begin{equation}\label{prostoeqqq}
e^{2i\alpha}=
\begin{cases}
 \Theta, & \text{ in Case   $(i)$}
\\ \frac{\Theta+e^{-\ell}}{e^{-\ell}\Theta+1},&  \text{ in Case   $(ii)$}
\\ \frac{\Theta+e^{-(\ell+\ell')}}{e^{-(\ell+\ell')}\Theta+1},& \text{ in Case   $(iii)$, \quad
with}
 \quad
\ell'=\ln \frac1k.
 \end{cases}.
\end{equation}

In particular,
\begin{equation}\label{xyxyxy2}
S_{(\dot D, \widehat D, D_\Theta)}(z)=e^{-2i\alpha}S_{(\dot D,\widehat D,  D)}(z).
\end{equation}
\end{theorem}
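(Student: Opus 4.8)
The plan is to reduce the computation to the characteristic function $S_{(\dot D, \widehat D, D)}(z)$ already found in Lemma \ref{key}, by tracking how the characteristic function of a triple responds to a change of the self-adjoint reference extension. The whole argument runs exactly parallel to the proof of Theorem \ref{ind11}, with the Liv\v{s}ic function replaced by the characteristic function throughout; in particular the $\alpha$--$\Theta$ dictionary \eqref{prostoeqqq} is literally the same relation \eqref{prostoeqq} that appeared there.

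First I would invoke Theorem \ref{gengen}: with the normalized deficiency elements $g_\pm$ of Lemma \ref{defeff}, the combination $g_+ - e^{2i\alpha} g_-$ lies in $\Dom(D_\Theta)$ precisely when $e^{2i\alpha}$ and the boundary parameter $\Theta$ are tied together by \eqref{kappatheta} with $\varkappa$ specialized to the unimodular value $e^{2i\alpha}$. Inverting that Möbius relation in each of the three cases yields exactly \eqref{prostoeqqq}; in Case (iii) one uses $k e^{-\ell} = e^{-(\ell+\ell')}$, since $e^{-\ell'} = k$ by definition of $\ell'$. Simultaneously, Lemma \ref{razgon} records that $g_+ - g_- \in \Dom(D)$, so the distinguished reference extension $D = D_\Theta|_{\Theta = 1}$ is the one with phase $e^{2i\alpha} = 1$.

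The heart of the proof is the transformation law for $S$ under a change of reference. By Lemma \ref{ind0} in Appendix E (used in the form \eqref{vagS} inside the proof of Theorem \ref{lem:diss}), replacing $D$ by $D_\Theta$ multiplies the characteristic function by the unimodular constant $e^{-2i\alpha}$, which is precisely \eqref{xyxyxy2}. Conceptually this single factor arises because passing from the deficiency pair $\{g_+, g_-\}$ adapted to $D$ to the pair $\{g_+,\, e^{2i\alpha} g_-\}$ adapted to $D_\Theta$ rescales the Liv\v{s}ic function by $e^{-2i\alpha}$ and the von Neumann parameter by the same factor $\varkappa \mapsto e^{-2i\alpha}\varkappa$; substituting both into the defining expression \eqref{ch12} leaves a single overall factor $e^{-2i\alpha}$, every other occurrence cancelling in numerator and denominator.

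Finally I would substitute the explicit values of $S_{(\dot D, \widehat D, D)}(z)$ from Lemma \ref{key} --- equal to $k$, $e^{i\ell z}$, and $k e^{i\ell z}$ in Cases (i), (ii), (iii) respectively --- to obtain \eqref{dlaSS}, whence $\varkappa = S_{(\dot D, \widehat D, D_\Theta)}(i)$ recovers the von Neumann parameter as well. No genuinely hard step appears: the only point demanding care is the bookkeeping of the unimodular factor, namely confirming that Lemma \ref{ind0} produces exactly $e^{-2i\alpha}$ with the same $\alpha$--$\Theta$ relation \eqref{prostoeqqq} governing the Liv\v{s}ic function in Theorem \ref{ind11}. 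This consistency is forced, since both transformation laws are driven by the single normalization condition $g_+ - e^{2i\alpha} g_- \in \Dom(D_\Theta)$ coming from Theorem \ref{gengen}.
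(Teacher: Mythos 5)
Your proposal is correct and takes essentially the same route as the paper: the paper's proof simply cites Lemma \ref{ind0} and Theorem \ref{ind11}, whose content (the relation $g_+-e^{2i\alpha}g_-\in\Dom(D_\Theta)$ from Theorem \ref{gengen}, the unimodular transformation law \eqref{vagS}, and substitution of Lemma \ref{key}) is exactly what you spell out. Your explicit verification that the factor $e^{-2i\alpha}$ survives in \eqref{ch12} after the simultaneous rescaling of $s$ and $\varkappa$ is a correct unpacking of the same bookkeeping.
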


\begin{proof} In view of Lemma \ref{ind0} in Appendix $E$, the assertion of the theorem  is a direct consequence of Theorem \ref{ind11}.
\end{proof}

\begin{remark}\label{singr}
Observe that  in Case (ii)
the characteristic function of the triple  $(\dot D, \widehat D, D_\Theta)$ is  a singular inner function with
``mass at infinity."
\end{remark}

\begin{corollary}\label{gogol} {\it The model triples  $(\dot D, \widehat D, D_\Theta)$, $|\Theta|=1$, and  $(\dot D, \widehat D, D_{\Theta'})$, $|\Theta'|=1$,
$\Theta'\ne \Theta$, are not mutually unitarily equivalent unless the graph $\bbY$ is in Case $(i) $ and the point spectrum of the dissipative differentiation operator $\widehat D$ fills in the
whole upper half-plane $\bbC_+$.
 In the latter case the triples in question are mutually unitarily equivalent to one another for any $\Theta$ and $\Theta'$ $(|\Theta|=|\Theta'|=1)$.
}
\end{corollary}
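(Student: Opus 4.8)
The plan is to reduce the question of mutual unitary equivalence of the two model triples to an identity between their characteristic functions, and then to read off the answer from the explicit formula in Theorem \ref{ind12}. The starting point is that $\dot D$ is a prime symmetric operator (Lemma \ref{primeD}), so by Remark \ref{mnogopros} the characteristic function of a triple is a \emph{complete} unitary invariant: the triples $(\dot D, \widehat D, D_\Theta)$ and $(\dot D, \widehat D, D_{\Theta'})$ are mutually unitarily equivalent if and only if $S_{(\dot D,\widehat D, D_\Theta)}(z) = S_{(\dot D,\widehat D, D_{\Theta'})}(z)$ for all $z\in\bbC_+$.

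Next I would invoke \eqref{xyxyxy2} in Theorem \ref{ind12}, which expresses each of these characteristic functions as the fixed function $S_{(\dot D,\widehat D, D)}(z)$ multiplied by the unimodular constant $e^{-2i\alpha}$, where $\alpha$ is tied to $\Theta$ through \eqref{prostoeqqq}. Thus the equivalence question splits according to whether $S_{(\dot D,\widehat D, D)}(z)$ vanishes identically. By the explicit values in \eqref{dlaS}, $S_{(\dot D,\widehat D, D)}\equiv 0$ happens precisely when the graph is in Case (i) with $k=0$, and in that (and only that) situation both characteristic functions are identically zero, so the triples share the same invariant and are mutually unitarily equivalent for every pair $\Theta, \Theta'$. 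By Remark \ref{arlin}, Case (i) with $k=0$ is exactly the case in which the point spectrum of $\widehat D$ fills the whole upper half-plane, which is the exceptional alternative in the statement.

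In the remaining (generic) case $S_{(\dot D,\widehat D, D)}\not\equiv 0$, the required identity reduces to $e^{-2i\alpha}=e^{-2i\alpha'}$, i.e. $e^{2i\alpha}=e^{2i\alpha'}$. The key step is then to verify that, for each of the three cases, the map $\Theta\mapsto e^{2i\alpha}$ defined by \eqref{prostoeqqq} is injective on the unit circle. In Case (i) this is the identity $e^{2i\alpha}=\Theta$; in Cases (ii) and (iii) it is the Blaschke-type Möbius transformation $\Theta\mapsto (\Theta+a)/(a\Theta+1)$ with $a=e^{-\ell}$, respectively $a=e^{-(\ell+\ell')}$, a real number in $(0,1)$, which maps the unit circle bijectively onto itself. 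Injectivity then forces $\Theta=\Theta'$, so that $\Theta\ne\Theta'$ yields inequivalent triples, completing the dichotomy.

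I expect the only point requiring genuine care to be the bijectivity of $\Theta\mapsto e^{2i\alpha}$ on $\bbT$ in Cases (ii) and (iii); everything else is a direct application of the completeness of the characteristic function as a unitary invariant together with the already-computed formulas. Since that map is a single Blaschke factor with a real parameter of modulus less than one, its restriction to the circle is a homeomorphism, so no real obstacle arises.
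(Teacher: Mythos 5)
Your proof is correct and follows essentially the same route as the paper's: both reduce the question to equality of characteristic functions via Theorem \ref{ind12} together with the uniqueness Theorem \ref{unitar} in Appendix C, and both identify the exceptional case as $S_{(\dot D,\widehat D, D)}\equiv 0$, i.e. Case (i) with $k=0$, where the point spectrum of $\widehat D$ fills $\bbC_+$. The only difference is that you spell out the injectivity of $\Theta\mapsto e^{2i\alpha}$ as a M\"obius (Blaschke-type) bijection of the unit circle, a detail the paper's one-line ``Combining \eqref{dlaSS}, \eqref{prostoeqqq} and \eqref{xyxyxy2}'' leaves implicit.
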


\begin{proof}
Combining \eqref{dlaSS}, \eqref{prostoeqqq} and \eqref{xyxyxy2} shows that the characteristic functions   $S_{(\dot D,  \widehat D, D_{\Theta})}(z)$
and  $S_{(\dot D,  \widehat D, D_{\Theta'})}(z)$ are different for $\Theta\ne \Theta'$ unless
$$
 S_{(\dot D,  \widehat D, D_{\Theta})}(z)=0 \quad \text{for all} \quad z\in \bbC_+
$$
for some (and therefore  for all) $|\Theta|=1$. In the latter case $\widehat D$ has no regular points in the upper half-plane and the
 triples  $(\dot D, \widehat D, D_\Theta)$, $|\Theta|=1$, and  $(\dot D, \widehat D, D_{\Theta'})$, $|\Theta'|=1$, are mutually unitarily equivalent by the uniqueness Theorem \ref{unitar}  in Appendix C.
 \end{proof}

\begin{remark}\label{remgend} Let  $\cA$ be  a real-valued piecewise continuous function on the metric graph  $\bbY$ in  Cases (i)-(iii). Combining Remarks \ref{remdotd}, \ref{remsamd} and \ref{remdisd}
imply that   if the graph $\bbY$ is in Case (i), then
 the triple   $(\dot D+\mathcal{A}(x), \widehat  D+\mathcal{A}(x), D_\Theta+\mathcal{A}(x))$ is  mutually unitarily equivalent
 to the triple $ (\dot D, \widehat D, D_{\Theta })$. If the graph is in Cases (ii) or (iii), then  $(\dot D+\mathcal{A}(x), \widehat  D+\mathcal{A}(x), D_\Theta+\mathcal{A}(x))$
is  mutually unitarily equivalent to the triple
  $ (\dot D, \widehat D, D_{\Theta e^{-i\Phi}})$,
where
 $$
\Phi=\int_0^\ell\mathcal{A}(x)dx .$$
Moreover, the corresponding  unitary equivalence is given by a gauge transformation.
\end{remark}

The knowledge of the characteristic function of the triple   $(\dot D, \widehat D, D_\Theta)$, which is its complete unitary invariant, enables us to obtain  the converse to structure Theorem \ref{lem:diss}.

  \begin{theorem} \label{thmconv}
 Let $(\dot A, \widehat A, A)$  be a triple of operators in a  Hilbert space $\cH$
where $\dot A$ is a prime symmetric  operator with deficiency indices $(1,1)$, $\widehat  A$ is its  dissipative quasi-selfadjoint extension  and $A$ is a reference self-adjoint extension of $\dot A$

Suppose that  the characteristic function $S(z)=S_{(\dot A, \widehat A, A)}(z)$  of the triple   $(\dot A, \widehat A, A)$ admits the representation
\begin{equation}\label{savtoconv}
S(z)=k e^{i\ell z}, \quad z\in \bbC_+,
\end{equation}
for some  $|k|\le1$ and $\ell \ge 0$.  We also assume that  if $\ell=0$, then necessarily $|k|<1$ and if $|k|=1$, then $\ell>0.$

Then there exists a unitary group $U_t$ such that the domains $ \Dom ( \dot  A)$ and $ \Dom ( \widehat A)$ are $U_t$-invariant and
\begin{equation}\label{orderconv0}
U_t^* \dot  AU_t=\dot  A+tI\quad \text{on }\quad \Dom ( \dot  A),
\end{equation}
\begin{equation}\label{orderconv}
U_t^* \widehat AU_t=\widehat A+tI\quad \text{on }\quad \Dom ( \widehat A).
\end{equation}

\end{theorem}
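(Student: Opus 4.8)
The plan is to realize the abstract triple $(\dot A, \widehat A, A)$ concretely as one of the model triples $(\dot D, \widehat D, D_\Theta)$ on a metric graph $\bbY$ constructed in Section \ref{secmodop}, for which the commutation relations with the multiplication group $U_t = e^{-it\cQ}$ are already established (Remarks \ref{remdotd} and \ref{remdisd}), and then to transport that group back along the unitary equivalence supplied by the uniqueness theorem. Concretely, I would first read off from the prescribed form $S(z) = ke^{i\ell z}$ which of the Cases (i)--(iii) to use, together with the quantum gate coefficient and the reference parameter $\Theta$.

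For the selection I would split according to the degenerate configurations. If $k = 0$, then $S \equiv 0$ and I take the exceptional model in Case (i) with quantum gate coefficient $0$ (and arbitrary $\Theta$). If $k \ne 0$ and $\ell = 0$, so that $0 < |k| < 1$, I take Case (i) with gate coefficient $|k|$. If $|k| = 1$, so that $\ell > 0$, I take Case (ii) with gate coefficient $0$ and length $\ell$. Finally, if $0 < |k| < 1$ and $\ell > 0$, I take Case (iii) with gate coefficient $|k|$ and length $\ell$. In every instance Theorem \ref{ind12} evaluates the characteristic function of the model triple as $e^{-2i\alpha}$ times the gate factor times $e^{i\ell z}$, and the unimodular freedom $e^{-2i\alpha}$, controlled by $\Theta$ through \eqref{prostoeqqq}, can be chosen to equal $k/|k|$ (respectively $k$ in Case (ii)), so that the model triple has characteristic function exactly equal to the prescribed $S(z)$. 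The hypotheses $\ell = 0 \Rightarrow |k| < 1$ and $|k| = 1 \Rightarrow \ell > 0$ are precisely what guarantee that one of these admissible model configurations (in the sense of Definition \ref{modeltrip}) is available.

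With a model triple $(\dot D, \widehat D, D_\Theta)$ on $L^2(\bbY)$ sharing the characteristic function of $(\dot A, \widehat A, A)$ in hand, I would invoke the Uniqueness Theorem \ref{unitar} of Appendix C: since $\dot A$ is prime by hypothesis and $\dot D$ is prime by Lemma \ref{primeD}, and the characteristic function is a complete unitary invariant of a triple, there is a unitary $W : \cH \to L^2(\bbY)$ with $W \dot A W^* = \dot D$, $\ W \widehat A W^* = \widehat D$ and $W A W^* = D_\Theta$. The model operators satisfy $(e^{-it\cQ})^* \dot D\, e^{-it\cQ} = \dot D + tI$ and $(e^{-it\cQ})^* \widehat D\, e^{-it\cQ} = \widehat D + tI$, with $\dom(\dot D)$ and $\dom(\widehat D)$ invariant under $e^{-it\cQ}$ (Remarks \ref{remdotd} and \ref{remdisd}). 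Setting $U_t = W^* e^{-it\cQ} W$ produces a strongly continuous unitary group on $\cH$; the domains $\Dom(\dot A) = W^* \dom(\dot D)$ and $\Dom(\widehat A) = W^* \dom(\widehat D)$ are $U_t$-invariant, and conjugating the model relations by $W^*$ (using $W W^* = I$) yields \eqref{orderconv0} and \eqref{orderconv} directly.

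The bulk of the genuine work is concentrated in the bookkeeping of the case analysis, namely matching the prescribed pair $(k, \ell)$ to an admissible model triple (keeping in mind that Case (ii) is admissible only with gate coefficient $0$ and Case (iii) only with $0 < k < 1$) and verifying that the constraints on $(k,\ell)$ leave no prescribed $S$ unmatched. Once the matching is secured, the remaining steps—appealing to the completeness of the characteristic function as a unitary invariant and transporting the commutation relations under conjugation by $W^*$—are routine and present no essential difficulty. This argument is exactly the converse to the structure Theorem \ref{lem:diss}, as claimed.
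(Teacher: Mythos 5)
Your proposal is correct and follows essentially the same route as the paper's own proof: match the prescribed $S(z)=ke^{i\ell z}$ with a model triple $(\dot D,\widehat D, D_\Theta)$ via Theorem \ref{ind12}, invoke the Uniqueness Theorem \ref{unitar} (using that both $\dot A$ and $\dot D$ are prime), and pull back the multiplication group $e^{-it\cQ}$, whose commutation relations with the model operators are given in Remarks \ref{remdotd} and \ref{remdisd}. The only difference is that you spell out the case-matching between $(k,\ell)$ and the admissible model configurations, which the paper leaves implicit in its appeal to Theorem \ref{ind12}.
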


\begin{proof} By the uniqueness Theorem \ref{unitar} in Appendix C, the triple   $(\dot A, \widehat A, A)$ is mutually unitarily equivalent to the triple $(\dot D, \widehat D, D_\Theta)$
referred to in Theorem \ref{ind12} for some choice of the extension parameter $\Theta$.
By \eqref{ccrsym} and \eqref{ccrdisss},
$$
e^{it \cQ}\dot  De^{-it \cQ}=\dot   D+tI\quad \text{on }\quad \Dom (\dot  D), \quad t\in \bbR,
$$
and
$$
e^{it \cQ}\widehat  De^{-it \cQ}=\widehat  D+tI\quad \text{on }\quad \Dom (\widehat  D), \quad t\in \bbR,
$$
where  $\cQ$ the self-adjoint  operator of multiplication by independent variable  on the graph $\bbY$. Pulling back the group  $e^{-it \cQ}$ in $L^2(\bbY)$ to the original Hilbert space
$\cH$ proves the assertion.
\end{proof}

It is worth mentioning that the choice of the orientation of the graph $\bbY$ was {\it ad hoc} from the very beginning. To complete the exposition,
along with the graph $\bbY$,
consider   the metric graph  $\bbY^*$ obtained from   $\bbY$ by reversing the orientation.  The corresponding differentiation operator $-(\widehat D)^*$ on $\bbY^*$
extends the (symmetric)  differentiation operator
$$- \dot D=-i\frac{d}{dx},$$
and its domain is determined
by the following boundary conditions

in Case (i):
\begin{align}
 f_\infty(0-)=kf_\infty(0+);    \label{dotdom*1}
\end{align}

in Case (ii):
\begin{align}
f_\ell(\ell)=0;  \label{dotdom*2}
\end{align}

 in Case (iii):

 \begin{align}
\begin{cases}
f_\infty (0-)&=k f_\infty (0+)+\sqrt{1-k^2} f_\ell(0)
\\
f_\ell(\ell)&=0
\end{cases}.   \label{dotdom*}
\end{align}

Notice  that    the graph $\bbY^*$ in Case (iii)  (as opposed to the graph $\bbY$)
has only two incoming and only one outgoing bonds which is reflected in a slightly different way of posing boundary conditions (cf. \eqref{domdomiii} and \eqref{dotdom*}).
Meanwhile, both  $-(\dot D)^*$ and $-(\widehat D)^*$ solve the commutation relations \eqref{ccrdisss}.

On  the  algebraic level, it can   be seen   by observing that
 the relations
$$
U_t^* \widehat AU_t=\widehat A+tI\quad \text{on }\quad \Dom ( \widehat A).
$$
and
$$
V_t^* (-\widehat A)^*V_t=(- \widehat A)^*+tI\quad \text{on }\quad \Dom ( \widehat A),
$$
 with $V_t=U_{-t}$, imply one another.

In fact, reversing  the orientation of the graph $\bbY$ does not lead to the new solutions as far as the classification  up to unitary equivalence  is concerned.

\begin{lemma}\label{clon}  The triples $(\dot D, \widehat D, D_\Theta)$ and  $(-\dot D, -(\widehat D)^*, -D_{\overline{\Theta}})$ are mutually unitarily equivalent.
\end{lemma}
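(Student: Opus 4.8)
The plan is to reduce the assertion to an equality of complete unitary invariants and then verify that equality. Since $\dot D$ is prime (Lemma \ref{primeD}) and primeness is preserved under $\dot D\mapsto-\dot D$ (a subspace reduces $-\dot D$ and carries a self-adjoint restriction of it exactly when the same holds for $\dot D$), both $(\dot D,\widehat D,D_\Theta)$ and $(-\dot D,-(\widehat D)^*,-D_{\overline\Theta})$ are triples with a prime symmetric part. Here $(\widehat D)^*$ is maximal accretive (the adjoint of a maximal dissipative operator satisfies $\Im((\widehat D)^*f,f)\le 0$), so $-(\widehat D)^*$ is maximal dissipative, and $-\dot D=-(\widehat D)^*|_{\dom(\widehat D)\cap\dom(\widehat D)^*}$ is its symmetric part. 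Hence, by the uniqueness Theorem \ref{unitar} in Appendix C, it suffices to prove that the two triples have the same characteristic function, i.e. $S_{(-\dot D,-(\widehat D)^*,-D_{\overline\Theta})}(z)=S_{(\dot D,\widehat D,D_\Theta)}(z)$, and for this I would match the von Neumann parameter and the Liv\v{s}ic function separately and then invoke \eqref{ch12}.

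First I would pin down the canonical deficiency elements of $-\dot D$. Since $(-\dot D)^*=-(\dot D)^*$, one has $\Ker((-\dot D)^*\mp iI)=\Ker((\dot D)^*\pm iI)$, so the normalized deficiency elements of $-\dot D$ are $\tilde g_+=\lambda g_-$ and $\tilde g_-=\mu g_+$ with $|\lambda|=|\mu|=1$, $g_\pm$ as in Lemma \ref{defeff}. I fix the phases by the reference condition $\tilde g_+-\tilde g_-\in\dom(-D_{\overline\Theta})=\dom(D_{\overline\Theta})$, which via \eqref{figa} applied to $D_{\overline\Theta}$ forces $\lambda/\mu=e^{2i\beta}$, where $e^{2i\beta}$ is \eqref{prostoeqqq} with $\Theta$ replaced by $\overline\Theta$; because the coefficients in \eqref{prostoeqqq} are real, $e^{2i\beta}=\overline{e^{2i\alpha}}=e^{-2i\alpha}$. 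Reading off from \eqref{hyddef1}--\eqref{hyddef2} which combinations $ag_++bg_-$ lie in $\dom(\widehat D)^*$ (a short boundary-value computation gives the single condition $a=-b\,ke^{-\ell}$ in Case (iii), the other two cases being analogous), I obtain the von Neumann parameter of the reversed triple,
$$
\tilde\varkappa=e^{2i\beta}ke^{-\ell}=e^{-2i\alpha}ke^{-\ell}.
$$
By \eqref{sac} and Theorem \ref{ind12} this is exactly $\varkappa=S_{(\dot D,\widehat D,D_\Theta)}(i)$, so $\tilde\varkappa=\varkappa$.

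It then remains to identify the two reference Liv\v{s}ic functions. Writing $\phi(w)=(g_w,g_-)/(g_w,g_+)$ and using $\tilde g_z=g_{-z}\in\Ker((-\dot D)^*-zI)$, definition \eqref{charsum} gives $\tilde s(z)=e^{-2i\alpha}\tfrac{z-i}{z+i}\,\phi(-z)^{-1}$, whereas $s_{(\dot D,D_\Theta)}(z)=e^{-2i\alpha}\tfrac{z-i}{z+i}\,\phi(z)$. Thus the whole matter collapses to the single identity
$$
\phi(z)\,\phi(-z)=1,\qquad z\in\bbC_+,
$$
which I would verify directly from \eqref{defsub01}--\eqref{defsub02}: in Case (i) both relevant numerators vanish by disjointness of supports (see \eqref{defsub1}), while in Cases (ii)--(iii) one evaluates the four elementary exponential integrals, factors out $e^{-(1\pm iz)\ell}$, and finds the product telescopes to $1$. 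This identity is the analytic shadow of the time-reversal symmetry of the differentiation operator together with the reality of $g_\pm$. Combining $\tilde\varkappa=\varkappa$ and $\tilde s=s_{(\dot D,D_\Theta)}$ in \eqref{ch12} yields $\tilde S=S$, and the uniqueness theorem finishes the proof.

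I expect the only genuine obstacles to be the phase bookkeeping in the passage $\Theta\mapsto\overline\Theta$ that makes $\tilde\varkappa=\varkappa$, and the identity $\phi(z)\phi(-z)=1$ in Case (iii); everything else is formal. I would also record a transparent shortcut valid in Cases (i) and (ii): the coordinate reflection $(Rf)(x)=f(-x)$ (respectively $f(\ell-x)$) is a genuine unitary that sends $i\tfrac{d}{dx}$ to $-i\tfrac{d}{dx}$ and carries $\dot D,\widehat D,D_\Theta$ onto $-\dot D,-(\widehat D)^*,-D_{\overline\Theta}$ \emph{on the nose}. This shortcut fails in Case (iii) precisely because the junction conditions \eqref{dotdom0} couple the three edges, so that $R$ no longer preserves $\dom(\dot D)$; this is exactly why the invariant-theoretic computation above is the robust route and why I would adopt it uniformly.
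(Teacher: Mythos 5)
Your argument is correct, and it reaches the paper's conclusion by a genuinely different middle step. The paper's own proof is a two-line citation: Theorem \ref{ind12} gives $S_{(\dot D,\widehat D,D_\Theta)}(z)=e^{-2i\alpha}ke^{i\ell z}$, and the abstract involution law of Theorem \ref{comin} in Appendix F, namely $S_{(-\dot A,-(\widehat A)^*,-A)}(z)=\overline{S_{(\dot A,\widehat A,A)}(-\overline z)}$, applied to the triple $(\dot D,\widehat D,D_{\overline\Theta})$ (whose phase is $e^{+2i\alpha}$, since the coefficients in \eqref{prostoeqqq} are real) yields the same function; the uniqueness Theorem \ref{unitar} then finishes, exactly as in your final step. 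You never invoke Theorem \ref{comin}: instead you recompute both ingredients of \eqref{ch12} for the reversed triple by hand --- the von Neumann parameter via the boundary conditions (your $\tilde\varkappa=e^{-2i\alpha}ke^{-\ell}=\varkappa$ is correct in Case (iii); note that in Case (i) the membership condition reads $a=kb$ rather than being literally ``analogous,'' but the two parameters still agree there) and the Liv\v{s}ic function via the identity $\phi(z)\phi(-z)=1$, which does hold in Cases (ii) and (iii) and degenerates correctly in Case (i), where both $(g_z,g_-)$ and $(g_{-z},g_+)$ vanish so both Liv\v{s}ic functions are identically $0$. In effect you re-prove, for these concrete model operators, precisely the special case of Theorem \ref{comin} that the paper quotes: your identity is Lemma \ref{minus} combined with the symmetry $\overline{s(-\overline z)}=s(z)$ that these particular Liv\v{s}ic functions happen to enjoy. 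The paper's route buys brevity and case-free generality; yours buys self-containedness (only the boundary data of Section \ref{s4intro} and Theorem \ref{unitar} are needed), plus the instructive reflection unitary $R$ in Cases (i)--(ii) and the correct diagnosis that $R$ fails to preserve $\dom(\dot D)$ in Case (iii), which is a cleaner explanation than the paper offers of why an invariant-theoretic argument is needed there at all.
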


\begin{proof}  From  Theorem \ref{ind12} it follows that
$$
S_{  (\dot D, \widehat D, D_{\Theta})}(z)=e^{-2i\alpha}ke^{i\ell z},
$$
where $\alpha$ is given by \eqref{prostoeqq}.
Applying Lemma \ref{comin} in Appendix F,  we have that
$$
S_{  (-\dot D, -(\widehat D)^*,- D_{\overline{\Theta}})}(z)=\overline{S_{  (\dot D, \widehat D, D_{\overline{\Theta}})}(-\overline{z})}=e^{-2i\alpha}ke^{i\ell z}=S_{ (\dot D, \widehat D, D_\Theta)}(z),
$$ which ensures a mutual unitary equivalence of the triples  in question by the uniqueness Theorem \ref{unitar} in Appendix C (the symmetric operator $\dot D$ is prime by Lemma \ref{primeD}, so is  the operator $-\dot D$).

\end{proof}

\section{The transmission coefficient and the characteristic function }\label{transmission}

Recall that if the  metric graph $\bbY$ is in Cases (ii) or (iii),  the differentiation operators $D_\Theta$, $\Theta\in \bbC\cup \{\infty\}$, referred to in Theorem \ref{allset}
satisfy the commutation relations
$$
U_t^* D_\Theta U_t= D_\Theta+tI, \quad t\in \frac{2\pi}{\ell}\bbZ,
$$
  with respect to a discrete subgroup of one-parameter strongly continuous group of unitary operators $U_t$. On fact, one can choose
  $$U_t=e^{-it\cQ}, \quad t\in \bbR, $$ where $\cQ$ is the multiplication operator by independent variable on the graph $\bbY$.
 However, in the exceptional cases  $\Theta=\infty$, the semi-Weyl relations
 $$
 U_t^* D_\Theta U_t= D_\Theta+tI, \quad t\in \bbR,
\quad
(\Theta=0\quad \text{or}\quad \Theta=\infty),
$$
 hold ($D_\infty=-D_0^*$).

Suppose that the metric graph $\bbY$  is in Case (ii), that is,
$$
\bbY=(0,\ell).
$$

Our fist goal is to evaluate the characteristic function of  a dissipative  triple   on the graph $\bbY$.

To be more specific,
let
  $(\dot d, \widehat d_\Theta, d)$ be the triple, where
 $\dot d$ is the symmetric differentiation on
$$
\Dom (\dot  d)=\{f_\ell \in W_2^1((0, \ell))\, |\, f_\ell(0)= f_\ell(\ell)=0\},
$$
$$\widehat d_\Theta=\widehat D_{II}(-k\Theta, \ell)\quad (0<k<1)$$ is the  maximal dissipative  differentiation operator  on
$$
\Dom (\widehat d_\Theta)=\{f_\ell \in W_2^1((0, \ell))\, |\, f_\ell(0)=-k\Theta f_\ell(\ell)\},
$$
and  $d$ is  the self-adjoint differentiation  operator   defined on
$$
\dom (d)=\{f_\ell \in W_2^1((0, \ell))\, |\, f_\ell(0)=- f_\ell(\ell)\}.
$$

\begin{lemma}
The characteristic function  $ S_{(\dot d, \widehat d_\Theta, d)}(z)$ of the triple
$(\dot d, \widehat d_\Theta, d )$ has the form
\begin{equation}\label{charchar}
 S_{(\dot d, \widehat d_\Theta, d)}(z)=
 \frac{\Theta+e^{-\ell }k}{k e^{-\ell} \Theta+1}\cdot B(z), \quad z\in \bbC_+,
\end{equation}
where
$$
B(z)=\frac{e^{i\ell z}+k\Theta }{\Theta+e^{i\ell z}k}=\frac{\cos\frac{\ell z-\arg \Theta -i\ell'}{2}}{\cos\frac{\ell z-\arg \Theta +i\ell'}{2}}, \quad z\in \bbC_+,
$$
 is  a pure Blaschke product with simple
zeros $z_n$ given by
\begin{equation}\label{gamovpol}
z_n=i\frac{\ell'}{\ell}+(2\pi n+1)\frac{\pi}{\ell}+\frac{\arg\Theta}{\ell} , \quad\text{with}\quad \ell'=\log \frac1k,
\end{equation}
$$
0<k<1.
$$

In particular, the characteristic function $S_{(\dot d, \widehat d_\Theta, d)}(z)$ of the triple  $(\dot d, \widehat d_\Theta, d)$ is a periodic function
$$
S_{(\dot d, \widehat d_\Theta, d)}\left  (z+\frac{2\pi}{\ell}\right)=S_{(\dot d, \widehat d_\Theta, d)}(z), \quad z\in \bbC_+,
$$
with the minimal  period $ T=\frac{2\pi}{\ell}$.

\end{lemma}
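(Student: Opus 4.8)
The plan is to read off the two invariants that feed the definition \eqref{ch12} of the characteristic function---the Liv\v{s}ic function of the reference pair $(\dot d, d)$ and the von Neumann parameter of the dissipative extension $\widehat d_\Theta$---and then to simplify the resulting M\"obius transform and factor out the Blaschke part. First I would observe that the reference operator $d$ is exactly the self-adjoint extension $D_\Theta|_{\Theta=1}$ of $\dot d$ in Case (ii) of Theorem \ref{gengen}, so by Lemma \ref{pomni}
$$s_{(\dot d, d)}(z)=\frac{e^{i\ell z}-e^{-\ell}}{e^{-\ell}e^{i\ell z}-1}, \quad z\in\bbC_+,$$
and by Lemma \ref{razgon} the normalized deficiency elements $g_\pm$ of \eqref{deftip2} satisfy $g_+-g_-\in\Dom(d)$, so they form an admissible basis for reading off the von Neumann parameter of the triple.

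Next I would compute $\varkappa$. Since $\widehat d_\Theta=\widehat D_{II}(-k\Theta,\ell)$ is determined by the boundary condition $f(0)=-k\Theta f(\ell)$ (cf.\ \eqref{domdomii*}), imposing $f=g_+-\varkappa g_-\in\Dom(\widehat d_\Theta)$ with $f(0)=c(1-\varkappa e^\ell)$ and $f(\ell)=c(e^\ell-\varkappa)$ gives a single linear equation whose solution is
$$\varkappa=\frac{k\Theta+e^{-\ell}}{k\Theta e^{-\ell}+1}.$$
In the present setting $\widehat d_\Theta$ is maximal dissipative with $|\Theta|=1$, so $\overline\Theta=\Theta^{-1}$ and $\overline\varkappa=\frac{k+e^{-\ell}\Theta}{ke^{-\ell}+\Theta}$. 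Substituting $s_{(\dot d,d)}$ and $\varkappa$ into $S=\frac{s-\varkappa}{\overline\varkappa s-1}$ and clearing denominators (I would abbreviate $w=e^{i\ell z}$ to keep the expressions compact), the factors $1-e^{-2\ell}$ and $e^{-\ell}w-1$ cancel from numerator and denominator, leaving exactly \eqref{charchar}. As a consistency check, $S(i)=\varkappa$ as required by \eqref{sac}, and the prefactor $\frac{\Theta+ke^{-\ell}}{ke^{-\ell}\Theta+1}$ turns out to be a unimodular constant, so that $S$ itself is inner. The bulk of the work in this step is routine bookkeeping.

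It remains to analyze $B(z)$. Writing $k=e^{-\ell'}$ and factoring $e^{i\ell z}+k\Theta$ and $\Theta+ke^{i\ell z}$ via $e^{a}+e^{b}=2e^{(a+b)/2}\cosh\frac{a-b}{2}$ together with $\cosh(iw)=\cos w$, the exponential prefactors coincide and yield the trigonometric form $B(z)=\frac{\cos\frac{\ell z-\arg\Theta-i\ell'}{2}}{\cos\frac{\ell z-\arg\Theta+i\ell'}{2}}$. The zeros solve $e^{i\ell z}=-k\Theta$, i.e.\ $i\ell z=-\ell'+i(\arg\Theta+\pi)+2\pi i n$, which is exactly the lattice \eqref{gamovpol}. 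To identify $B$ as a \emph{pure} Blaschke product I would argue that the denominator is nonvanishing for $\Im z\ge0$ (since $|ke^{i\ell z}|=ke^{-\ell\Im z}\le k<1=|\Theta|$), so $B$ is analytic and bounded in $\bbC_+$; the elementary identity $|e^{i\ell\lambda}+k\Theta|^2=|\Theta+ke^{i\ell\lambda}|^2$ for $\lambda\in\bbR$ shows $|B|=1$ on $\bbR$, so $B$ is inner; and the zeros $z_n$ lie on the horizontal line $\Im z=\ell'/\ell$ with spacing $2\pi/\ell$, hence satisfy the Blaschke condition. The quotient of $B$ by the Blaschke product on these zeros is then a zero-free inner function $e^{iLz}$ (up to a unimodular constant), and the limit $B(iy)\to k\ne0$ as $y\to\infty$ forces $L=0$.

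Finally, periodicity $S(z+\frac{2\pi}{\ell})=S(z)$ is immediate because $e^{i\ell(z+2\pi/\ell)}=e^{i\ell z}$ and the prefactor is constant, and minimality of $T=\frac{2\pi}{\ell}$ follows since any strictly smaller period would have to carry the zero lattice $\{z_n\}$ into itself, which its spacing $\frac{2\pi}{\ell}$ forbids. I expect the one genuinely delicate point to be the ruling out of a singular inner factor in $B$---that is, confirming $B$ is a pure Blaschke product rather than a general inner function---and the limit computation $B(iy)\to k$ is what settles it; the rest is algebra.
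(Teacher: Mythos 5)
Your proposal is correct, and it reaches the identity \eqref{charchar} by a genuinely different route than the paper. Both arguments begin identically: read off $s_{(\dot d,d)}(z)=\frac{e^{i\ell z}-e^{-\ell}}{e^{-\ell}e^{i\ell z}-1}$ from Lemma \ref{pomni} and extract $\varkappa=\frac{k\Theta+e^{-\ell}}{k\Theta e^{-\ell}+1}$ from the boundary condition on $g_+-\varkappa g_-$. The divergence is in the central step. The paper avoids the algebra entirely: it shows that the inner functions $S(z)$ and $B(z)$ have the same zero set (via $S(z_0)=0 \Leftrightarrow s(z_0)=\varkappa \Leftrightarrow e^{i\ell z_0}=-k\Theta$), asserts that both are Blaschke products, concludes $B(z)=\frac{B(i)}{S(i)}S(z)$, and then computes the single ratio $\frac{B(i)}{S(i)}$. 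You instead substitute $s$ and $\varkappa$ directly into $S=\frac{s-\varkappa}{\overline{\varkappa}s-1}$ and simplify (in the variable $w=e^{i\ell z}$ the factors $(1-e^{-2\ell})$ and $(e^{-\ell}w-1)$ do cancel exactly as you claim, yielding $S=\frac{\Theta+ke^{-\ell}}{ke^{-\ell}\Theta+1}\cdot\frac{w+k\Theta}{kw+\Theta}$). Your route buys two things: it is constructive, and it does not require knowing \emph{a priori} that $S$ itself is a Blaschke product — a fact the paper's proof asserts without justification (a M\"obius shift of an inner function need not be a Blaschke product in general, by Frostman's theorem, which the paper itself invokes in a later remark). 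Conversely, you then must prove separately that $B$ is a \emph{pure} Blaschke product, which you do, and which the paper's proof simply takes for granted; your treatment of the trigonometric factorization, the zero lattice, and the periodicity/minimality claims also covers parts of the statement that the paper's proof leaves implicit.

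One small gloss: in your final step you say the quotient of $B$ by the Blaschke product over its zeros is ``a zero-free inner function $e^{iLz}$,'' which skips ruling out a singular inner factor whose measure lives on bounded subsets of $\bbR$. This is easily repaired by the observation you already made in stronger form: the denominator $\Theta+ke^{i\ell z}$ vanishes only on the line $\Im z=-\ell'/\ell<0$, so $B$ is analytic in the strip $\{\Im z>-\ell'/\ell\}\supset\bbR$, hence any singular inner factor cannot charge bounded sets and must be of the form $e^{iLz}$; your limit $B(iy)\to k\ne 0$ then forces $L=0$. This is exactly the argument pattern the paper uses in Remark \ref{s2s3}(iii) for the outer-factor analysis in Case (iii).
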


\begin{proof}

Let  $g_\pm$ be the deficiency elements of the symmetric operator  $\dot d$ given by \eqref{deftip2}.

Since $\widehat  d_\Theta$ is a quasi-selfadjoint extension of $\dot d$, its domain can be represented as
$$
\Dom ( \widehat  d_\Theta)=\Dom (\dot d) \dot+\text{span}\{g_+-\varkappa g_-\},
$$
for some $|\varkappa|<1$.
In particular the function  $g_+-\varkappa g_-$ satisfies the boundary condition
$$
(g_+-\varkappa g_-)(0)=-k\Theta (g_+-\varkappa g_-)(\ell),
$$
which allows to relate  the von Neumann  extension parameter $\varkappa$ and the coefficient $ k$  as
 $$
1-\varkappa e^{\ell}=-k\Theta  (e^\ell - \varkappa).
$$
Therefore,
$$\varkappa=\frac{k \Theta+e^{-\ell}}{k\Theta e^{-\ell}+1}.
$$

Since
$
g_+-g_-\in \dom (d),
$
one computes  (see \eqref{obr}) that
\begin{equation}\label{slif}
S(z)=S_{(\dot d,\widehat d_\Theta, d)}(z)=\frac{s(z)-\varkappa}{\overline{\varkappa}s(z)-1},
\end{equation}
where $s(z)=s_{(\dot d, d)}(z)$ is the Liv\v{s}ic function associated with the pair  $(\dot d, d)$
given by Lemma \ref{pomni} as
$$
s(z)= s_{(\dot d, d)}(z)=\frac{e^{i\ell z}-e^{-\ell}}{e^{-\ell}e^{i\ell z}-1}.
$$

We claim   that the inner functions $B(z)$ and $S(z)$
 have the same set of (simple) roots.

 Indeed, if $S(z_0)=0$, then
 \begin{equation}\label{vanish}
s(z_0)=\frac{e^{i\ell z_0}-e^{-\ell}}{e^{-\ell}e^{i\ell z_0}-1}=\varkappa.
\end{equation}
Therefore,
$$
e^{i\ell z_0}=\frac{e^{-\ell}- \varkappa}{1-\varkappa e^{-\ell}}=\frac{1-e^{\ell} \varkappa}{e^{\ell}-\varkappa}=-k\Theta,
$$
which implies that $B(z_0)=0$, and vice versa.

 Since both $B(z)$ and $S(z)$ are Blaschke products, we get
\begin{equation}\label{inview1}
B(z)=\frac{B(i)}{S(i)}S(z).
\end{equation}
To complete the proof it remains to observe that
\begin{equation}\label{inview2}
\frac{B(i)}{S(i)}=\frac{B(i)}{\varkappa}=\frac{e^{-\ell }+k\Theta }{\Theta+e^{-\ell }k}\cdot  \frac{k\Theta e^{-\ell}+1}{k \Theta+e^{-\ell}}
=\frac{k\Theta e^{-\ell}+1}{\Theta+e^{-\ell }k}.
\end{equation}

\end{proof}

Next, recall  that    by Theorem \ref{dilthm} the self-adjoint magnetic Hamiltonian  $D_\Theta$, $|\Theta|=1$,
on the metric graph  $$\bbY=(-\infty, 0)\sqcup (0,\infty)\sqcup (0,\ell) $$
in Case (iii)    referred to in  Theorem \ref{dilthm}  dilates the   maximal dissipative differentiation operator $\widehat d_\Theta$.

  Define the  transmission coefficient $t(\lambda )$ in the scattering problem on the graph $\overline{\bbY}$  (obtained from  $\bbY $  by identifying the right vertex of the edge $[0, \ell]$ with the vertex at the origin)
as  the amplitude
of the   generalized eigenfunction of the Hamiltonian $D_\Theta$, the solution  $f$ to the equation
\begin{equation}\label{scatprob}
i \frac{d}{dx}f=\lambda f\quad  \text{on }\quad \bbY=(-\infty, 0)\sqcup (0,\infty)\sqcup (0,\ell)
\end{equation}   that coincides with   $e^{-i\lambda x}$ on the incoming edge $(-\infty, 0)$ of the graph $\bbY$,
 equals  $t(\lambda) e^{-i\lambda x}$ on the outgoing edge $(0,\infty)$,
  and $f=(f_\infty, f_\ell)$  satisfies the boundary conditions  \eqref{bcIII},
\begin{align}
\begin{pmatrix}
f_\infty(0+)\\
f_\ell(0)
\end{pmatrix}&=\begin{pmatrix}
k& \sqrt{1-k^2} \Theta\\
 \sqrt{1-k^2}&-k\Theta
\end{pmatrix}
 \begin{pmatrix}
f_\infty(0-)\\
f_\ell(\ell)
\end{pmatrix}. \label{bcIIII}
\end{align}

The analytic counterpart  of the dilation Theorem \eqref{dilthm} is  as follows.

 \begin{theorem} The  transmission coefficient $t(\lambda )$  in the scattering problem \eqref{scatprob}, \eqref{bcIIII} has  the form
\begin{equation}\label{vyrt}
t(\lambda )=\frac{\Theta+e^{i\ell \lambda }k}{e^{i\ell \lambda }+k\Theta }, \quad \lambda \in \bbR.
\end{equation}

\end{theorem}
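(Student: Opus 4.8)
The plan is to solve the scattering problem directly and then reconcile the outcome with the spectral data of the dissipative triple. On each edge the equation $i f' = \lambda f$ has a one-dimensional solution space spanned by $e^{-i\lambda x}$, so the generalized eigenfunction is determined by a single amplitude per edge. The scattering prescription normalizes the incoming wave on $(-\infty,0)$ to $e^{-i\lambda x}$ and writes the transmitted wave on $(0,\infty)$ as $t(\lambda)e^{-i\lambda x}$, leaving one free amplitude $a$ on the finite bond $(0,\ell)$. This furnishes the four boundary values $f_\infty(0-)=1$, $f_\infty(0+)=t(\lambda)$, $f_\ell(0)=a$, and $f_\ell(\ell)=ae^{-i\lambda\ell}$.

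I would then insert these values into the vertex relation \eqref{bcIIII}. The two resulting scalar equations are
\begin{align*}
t(\lambda)&=k+\sqrt{1-k^2}\,\Theta\, a\, e^{-i\lambda\ell},\\
a&=\sqrt{1-k^2}-k\Theta\, a\, e^{-i\lambda\ell}.
\end{align*}
The second equation is linear in $a$ and gives $a=\sqrt{1-k^2}\big/(1+k\Theta e^{-i\lambda\ell})$. Substituting into the first and putting everything over the common denominator, the numerator collapses because $k(1+k\Theta e^{-i\lambda\ell})+(1-k^2)\Theta e^{-i\lambda\ell}=k+\Theta e^{-i\lambda\ell}$. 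Multiplying numerator and denominator by $e^{i\lambda\ell}$ then yields exactly \eqref{vyrt}.

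Two consistency checks confirm the formula and tie it to the rest of the section. First, since $|\Theta|=1$, a short computation gives $|t(\lambda)|=1$ for real $\lambda$, as it must for a single-channel scatterer governed by the self-adjoint Hamiltonian $D_\Theta$. Second, comparing \eqref{vyrt} with the Blaschke factor $B(z)$ in \eqref{charchar} shows at once that $t(\lambda)=1/B(\lambda)$; that is, the transmission coefficient is the reciprocal of the inner part of the characteristic function $S_{(\dot d,\widehat d_\Theta,d)}$ of the dissipative triple, which is precisely the sense in which this theorem is the analytic counterpart of the dilation Theorem~\ref{dilthm}. In particular the poles of $1/t$ reproduce the resonances \eqref{gamovpol}.

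The computation is elementary, so the only point requiring genuine care is the set-up rather than the algebra: one must verify that the ansatz above is the admissible generalized eigenfunction, i.e.\ that assigning a pure outgoing profile on $(0,\infty)$ and a single amplitude on $(0,\ell)$ correctly encodes the scattering boundary conditions associated with the self-adjoint realization $D_\Theta$ underlying the dilation. I expect this identification of the correct normalization, together with keeping the edge orientations straight, to be the main thing to pin down; once it is fixed, the displayed linear algebra delivers \eqref{vyrt} immediately.
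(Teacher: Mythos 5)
Your proof is correct and takes essentially the same route as the paper: the same one-amplitude-per-edge exponential ansatz, the same pair of scalar equations obtained by inserting $f_\infty(0-)=1$, $f_\infty(0+)=t(\lambda)$, $f_\ell(0)=a$, $f_\ell(\ell)=ae^{-i\lambda\ell}$ into \eqref{bcIIII}, and the same elimination of $a(\lambda)$ yielding \eqref{vyrt}. Your added consistency checks (unimodularity of $t(\lambda)$ on the real line and the identity $t(\lambda)=1/B(\lambda)$) are correct and agree with what the paper records separately in Corollary \ref{transs}.
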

\begin{proof}
Let $f$ be the solution to the scattering problem \eqref{scatprob}.

We have
 $$
f_\infty (\lambda , x) =e^{-i \lambda x}\chi_{(-\infty, 0)}(x)+t(\lambda )e^{-i \lambda x}\chi_{(0, \infty)}(x), \quad x\in \bbR,
$$
and
$$
f_\ell (\lambda ,x)=a(\lambda)e^{-i\lambda x}, \quad x\in (0, \ell).
$$
From \eqref{bcIII} it follows that
$$
\begin{pmatrix}
t(\lambda )\\a(\lambda)
\end{pmatrix}=\begin{pmatrix}
k& \sqrt{1-k^2} \Theta\\
 \sqrt{1-k^2}&-k\Theta
\end{pmatrix} \begin{pmatrix}
1\\a(\lambda)e^{-i\lambda \ell}
\end{pmatrix}.
$$
Solving for $a(\lambda)$ we get that
$$
a(\lambda )=\frac{\sqrt{1-k^2}}{1+k \Theta e^{-i\lambda\ell}}
$$
and hence
\begin{align}
t(\lambda)&=k+ \sqrt{1-k^2} \Theta e^{-i \lambda\ell}a(k)=k + \sqrt{1-k^2} \Theta e^{-i\lambda\ell}\frac{\sqrt{1-k^2}}{1+k\Theta e^{-i\lambda \ell}}
\nonumber\\
&=\frac{\Theta+e^{i\ell \lambda}k}{e^{i\ell \lambda}+k\Theta }, \quad \lambda\in \bbR. \nonumber
\end{align}
\end{proof}

\begin{remark}
We observe  that  if one sets $\Theta=1$ in \eqref{vyrt}, then
$$
t(\lambda)=\frac{1}{s_{\ell'}\left (\frac\ell{\ell'}\lambda\right )},
$$
where $\ell'=\log \frac1k$ and
$$
s_{\ell'}(z)=\frac{e^{iz\ell'}-e^{-\ell'}}{e^{-\ell'} e^{iz\ell'}-1}
$$
is the Liv\v{s}ic function associated with the pair $(\dot D, D)$ on  the metric graph
$$\bbY=(0,\ell')$$
in Case (ii)   referred to in Lemma \ref{pomni}.
\end{remark}

\begin{corollary}\label{transs}{\it
Let $t(\lambda)$ be the transmission coefficient in the scattering problem \eqref{scatprob}, \eqref{bcIIII}.
}

{\it
 Then,
 \begin{equation}\label{rama}
t(\lambda)=  \frac{\Theta+e^{-\ell }k}{k e^{-\ell} \Theta+1}\cdot S^{-1}(\lambda+i0), \quad \lambda\in \bbR,
\end{equation}
where  $S(z)$ is  the characteristic function of the triple $(\dot d, \widehat d_\Theta, d)$ in Case $(ii)$.
}

{\it
In particular,  the poles of the analytic (meromorphic)  continuation of the   transmission coefficient $t(\lambda )$ to the upper half-plane
 coincide with  the eigenvalues of the dissipative operator $\widehat d_\Theta$.
}

\end{corollary}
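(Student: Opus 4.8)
The plan is to read off \eqref{rama} by directly combining the two preceding explicit computations: the closed form \eqref{vyrt} for the transmission coefficient and the closed form \eqref{charchar} for the characteristic function of the triple $(\dot d, \widehat d_\Theta, d)$. Since both quantities are already evaluated, the relationship between them will reduce to an elementary algebraic rearrangement, and the spectral statement will then follow by inspection of the resulting expression.

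First I would recall from \eqref{charchar} that $S(z)$ is a nonzero constant multiple of the Blaschke factor $B(z)=(e^{i\ell z}+k\Theta)/(\Theta+e^{i\ell z}k)$, namely $S(z)=\frac{\Theta+e^{-\ell}k}{ke^{-\ell}\Theta+1}B(z)$. Inverting and multiplying by the same constant then gives
$$\frac{\Theta+e^{-\ell}k}{ke^{-\ell}\Theta+1}\,S^{-1}(z)=B^{-1}(z)=\frac{\Theta+e^{i\ell z}k}{e^{i\ell z}+k\Theta},$$
and comparison with \eqref{vyrt} shows that the right-hand side is precisely $t(z)$ once $z$ is taken real. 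To justify writing the boundary value $S^{-1}(\lambda+i0)$, I would note that since $0<k<1$ the zeros $z_n$ of $B$ (hence of $S$) have imaginary part $\ell'/\ell>0$ by \eqref{gamovpol}, so $S$ continues analytically across $\bbR$ with $S(\lambda)\ne0$; symmetrically, the denominator $e^{i\ell\lambda}+k\Theta$ of $t$ cannot vanish for real $\lambda$ because $|e^{i\ell\lambda}|=1>k=|k\Theta|$. This yields \eqref{rama}.

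For the spectral assertion, I would use that \eqref{rama} exhibits $t$ as a nonvanishing constant times $S^{-1}$, so the poles of the meromorphic continuation of $t$ to $\bbC_+$ are exactly the zeros of $S$, i.e. the points with $e^{i\ell z}=-k\Theta$. On the other hand, solving the eigenvalue equation $i f_\ell'=z f_\ell$ on $(0,\ell)$ under the boundary condition $f_\ell(0)=-k\Theta f_\ell(\ell)$ that defines $\Dom(\widehat d_\Theta)$ produces $f_\ell(x)=Ce^{-izx}$ together with the compatibility relation $e^{i\ell z}=-k\Theta$; hence the eigenvalues of $\widehat d_\Theta$ are exactly these zeros of $S$, as claimed. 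This is also an instance of the general principle that the zeros in $\bbC_+$ of the characteristic function of a dissipative triple are the eigenvalues of the dissipative operator.

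Since everything rests on the two explicit formulas already in hand, I anticipate no genuine difficulty; the only point deserving a moment's care is confirming that the boundary values in \eqref{rama} are taken away from the zeros of $S$, which is exactly what the strict inequality $0<k<1$ secures.
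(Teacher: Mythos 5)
Your proposal is correct and follows essentially the same route as the paper: both obtain \eqref{rama} by combining the explicit formulas \eqref{charchar} and \eqref{vyrt}, and then identify the poles of the meromorphic continuation of $t$ with the zeros of $S$, i.e., with the eigenvalues of $\widehat d_\Theta$. Your additional checks (non-vanishing of $S$ on $\bbR$ since $0<k<1$, and the direct solution $f_\ell(x)=Ce^{-izx}$ of the eigenvalue equation giving $e^{i\ell z}=-k\Theta$) merely fill in details the paper leaves implicit.
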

\begin{proof}
Representation \eqref{rama} follows from \eqref{charchar} and \eqref{vyrt}. Since $S(z)$ is analytic in $\bbC_+$, the transmission coefficient $t(\lambda)$
can be meromorphically continued on the whole complex plane. The (simple) poles of this continuation are located at the zeroes of the characteristic function $S(z)$
which are the eigenvalues of the dissipative operator $\widehat d_\Theta$.

\end{proof}

\begin{remark}  The exceptional case of  the triple $(\dot d, \widehat  d, d)$,
where    $\widehat d$  is  the differentiation operator   on
\begin{equation}\label{dominavar}
\Dom (\widehat d)=\{f_\ell \in W_2^1((0, \ell))\, |\, f_\ell(0)=0\}
\end{equation}
deserves a special discussion.
Notice that $\widehat d$  is the only one  dissipative quasi-selfadjoint   extension of  the symmetric differentiation $\dot d$ which is not in the family \eqref{domneinv}
with $k\ne 0$ (cf.
 Frostman's observation of general character:
if $S$ is an inner function,  then $\frac{S-\varkappa}{1-\overline{\varkappa} S}$ is  a pure Blaschke product   for almost all
$\varkappa \in \bbD$).

By Lemma \ref{key}, the characteristic function of the triple
$$
S_{(\dot d, \widehat  d, d)}=e^{i\ell z}, \quad z\in \bbC_+,
$$
 is a singular inner function (see Remark \ref{singr}).

 On the other hand, the transmission coefficient of the self-adjoint  dilation $D$ of the dissipative operator  $\widehat d$  on the one-cycle graph
 can be evaluated by solving  the equation $$
i \frac{d}{dx}f=\lambda f
$$
 on the metric graph $\bbY$ with boundary conditions  \eqref{bcIII} with $k=0$, and $\Theta=1$,
 $$
\begin{pmatrix}
f_\infty(0+)\\
f_\ell(0)
\end{pmatrix}=\begin{pmatrix}
0& 1 \\
 1&0
\end{pmatrix}
 \begin{pmatrix}
f_\infty(0-)\\
f_\ell(\ell)
\end{pmatrix}$$
  to get an analog of \eqref{rama}.  In this case,
  $$
  t(\lambda)=e^{-i\ell z}=S^{-1}_{(\dot d, \widehat  d, d)}(\lambda),  \quad \lambda\in \bbR.
  $$

\end{remark}

\section{Uniqueness results}\label{sec12}

So far we were interested in  characterizing solutions to  the commutation relations \eqref{1/2Wdot}
under the assumption   that  the unitary group $U_t$ is  given.
Our  next  goal is to show that  if  the symmetric operator $\dot A$  (from Hypothesis \ref{muhly}) is a prime symmetric operator, then
the commutation relations  \eqref{1/2Wdot}  uniquely  determine the group $U_t$ up to a character $t\to e^{it\mu}$ (with the only one exception).

We start with a preliminary observation  that  a prime symmetric operator with   deficiency indices $(1,1)$
  has a rather poor set of symmetries.

\begin{lemma}\label{nudada}
 Suppose that
$
\dot A$
is a symmetric operator with deficiency indices $(1,1)$ and $\cU$  is a unitary operator. Assume that  the  operator $\cU$  commutes with
$\dot A$ in the sense that
\begin{equation}\label{ravny}
\cU(\dom(\dot A))=\dom(\dot A)
\end{equation}
and
$$
\dot A\cU f=\cU \dot A f\quad \text{for all} \quad f\in \dom(\dot A).
$$
Then   the subspaces  $$
 \cH_\pm=\overline{\Span_{z\in \bbC_\pm}\Ker ((\dot A)^*-zI)}
$$ are invariant for $\cU$. Moreover,   the corresponding restrictions of $\cU$ onto those subspaces are multiples of the identity. That is,
\begin{equation}\label{upm}
\cU|_{\cH_\pm}=\Theta_\pm I_{\cH_\pm}\quad \text{for some } \quad |\Theta_\pm|=1.
\end{equation}

\end{lemma}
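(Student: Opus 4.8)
The plan is to first upgrade the commutation hypothesis from $\dot A$ to its adjoint $(\dot A)^*$, and then to exploit the one-dimensionality of the deficiency subspaces. Since $\cU(\dom(\dot A))=\dom(\dot A)$ and $\cU\dot A f=\dot A\cU f$ for $f\in\dom(\dot A)$, the unitarity of $\cU$ gives the operator identity $\cU\dot A\cU^*=\dot A$ with matching domains. Taking Hilbert-space adjoints and using that $\cU$ is boundedly invertible (so that $(\cU\dot A\cU^*)^*=\cU(\dot A)^*\cU^*$), I would conclude $\cU(\dot A)^*\cU^*=(\dot A)^*$; equivalently $\cU$ maps $\dom((\dot A)^*)$ onto itself and $(\dot A)^*\cU f=\cU(\dot A)^*f$ for every $f\in\dom((\dot A)^*)$. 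In particular, if $(\dot A)^*g=zg$ then $(\dot A)^*\cU g=z\cU g$, so each deficiency subspace $\Ker((\dot A)^*-zI)$, $z\in\bbC\setminus\bbR$, is invariant under $\cU$. Invariance of $\cH_\pm$ follows at once, since $\cU$ preserves the algebraic span $\Span_{z\in\bbC_\pm}\Ker((\dot A)^*-zI)$ and, being bounded, preserves its closure.

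Next I would invoke that $\dot A$ has deficiency indices $(1,1)$. By the standard constancy of $\dim\Ker((\dot A)^*-zI)$ on each open half-plane, every deficiency subspace with $z\in\bbC_+$ (resp.\ $z\in\bbC_-$) is one-dimensional. Fixing $0\neq g_z\in\Ker((\dot A)^*-zI)$, the invariance just established forces $\cU g_z=\Theta(z)g_z$ for a scalar $\Theta(z)$, and unitarity of $\cU$ gives $|\Theta(z)|=1$. The whole problem is thereby reduced to showing that $z\mapsto\Theta(z)$ is constant on each of the connected half-planes $\bbC_\pm$.

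To control the $z$-dependence I would introduce a reference self-adjoint extension $A$ of $\dot A$, which exists because the deficiency indices are equal. Fixing $z_0\in\bbC_+$ and $0\neq g_{z_0}\in\Ker((\dot A)^*-z_0I)$, I would set
\begin{equation*}
g_z=(A-z_0I)(A-zI)^{-1}g_{z_0}=\big[I+(z-z_0)(A-zI)^{-1}\big]g_{z_0},\quad z\in\bbC_+.
\end{equation*}
A direct computation using $(\dot A)^*|_{\dom(A)}=A$ and $A(A-zI)^{-1}=I+z(A-zI)^{-1}$ shows $(\dot A)^*g_z=zg_z$, so $g_z\in\Ker((\dot A)^*-zI)$; moreover $g_z$ is holomorphic and nowhere vanishing on $\bbC_+$, the operators $A-z_0I$ and $(A-zI)^{-1}$ being injective. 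Choosing a fixed vector $h$ with $(g_{z_0},h)\neq0$ and pairing $\cU g_z=\Theta(z)g_z$ with $h$ yields
\begin{equation*}
\Theta(z)=\frac{(\cU g_z,h)}{(g_z,h)},\quad z\in\bbC_+,
\end{equation*}
where both numerator and denominator are holomorphic in $z$; hence $\Theta$ extends to a holomorphic function on $\bbC_+$, the possible zeros of $(g_z,h)$ being removable because $\Theta$ is bounded by $1$.

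Finally, a holomorphic function on the connected domain $\bbC_+$ whose modulus is identically $1$ must be constant, by the maximum modulus principle, since its image would otherwise be open yet contained in the unit circle. Therefore $\Theta(z)\equiv\Theta_+$ on $\bbC_+$, which gives $\cU|_{\cH_+}=\Theta_+I_{\cH_+}$ with $|\Theta_+|=1$; repeating the argument on $\bbC_-$ produces $\Theta_-$. I expect the main obstacle to be exactly the third step: producing a genuinely holomorphic, nowhere-vanishing section $z\mapsto g_z$ of the family of deficiency lines, so that the unimodular factor $\Theta(z)$ inherits analyticity and the rigidity of constant-modulus analytic functions can then be applied.
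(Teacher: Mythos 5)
Your proof is correct, but the engine driving the final step is genuinely different from the paper's. Both arguments begin the same way: you transfer the commutation relation to $(\dot A)^*$ (you do this abstractly, by taking adjoints in $\cU\dot A\,\cU^*=\dot A$; the paper does a direct inner-product computation), you conclude that each one-dimensional deficiency subspace is an eigenspace of $\cU$ with a unimodular eigenvalue $\Theta(z)$, and both proofs bring in a reference self-adjoint extension $A$ together with the family $g_z=(A-z_0I)(A-zI)^{-1}g_{z_0}$. The divergence is in how constancy of $\Theta$ on each half-plane is forced. The paper proves that any two deficiency subspaces $\Ker((\dot A)^*-zI)$ and $\Ker((\dot A)^*-\zeta I)$ with $z,\zeta\in\bbC_+$ are \emph{non-orthogonal}, via the spectral-measure identity $(g_z,g_\zeta)=\frac{1}{z-\overline{\zeta}}\int_\bbR \left(\frac{1}{\lambda-z}-\frac{1}{\lambda-\overline{\zeta}}\right)(\lambda^2+1)\,d\mu(\lambda)$, whose integrand has strictly positive imaginary part; then unitarity alone finishes the job, since $(g_z,g_\zeta)=(\cU g_z,\cU g_\zeta)=\Theta(z)\overline{\Theta(\zeta)}(g_z,g_\zeta)\ne 0$ forces $\Theta(z)=\Theta(\zeta)$. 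You instead exhibit $\Theta$ as a ratio of holomorphic functions, remove its singularities using the bound $|\Theta|\le 1$, and invoke the rigidity of unimodular holomorphic functions on a connected domain (open mapping/maximum modulus). Both routes are complete; the paper's buys a purely Hilbert-geometric argument with no function theory and yields the non-orthogonality of deficiency subspaces as a by-product of independent interest, while yours is a clean analytic-continuation argument that would carry over verbatim to any setting where a holomorphic, nowhere-vanishing section of the eigenline bundle is available.
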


\begin{proof} Suppose that $f\in \dom ((\dot A)^*)$. Then
$$
(\dot Ag,\cU f)=(\cU^*\dot A g,f)= (\cU^*\dot A \cU\cU^*g,  f)\quad  \text{for all} \quad g\in \dom (\dot A).
$$
From  \eqref{ravny} it follows that   $ \cU^*g\in \Dom(\dot A)$, $g\in \dom(\dot A)$.
Therefore,
$$
(\cU^*\dot A \cU\cU^*g,  f)=(\dot A \cU^*g,  f)=(g, \cU (\dot A )^*f).
$$
That is,
$$
(\dot Ag,\cU f)=(g, \cU (\dot A )^*f)\quad  \text{for all} \quad g\in \dom (\dot A),
$$
which means that
$$
\cU (\dom ((\dot A)^*))\subset   \dom ((\dot A)^*)
$$
and
\begin{equation}\label{arttt}
(\dot A)^*\cU=\cU (\dot A)^*
\quad
\text{on}\quad  \dom ((\dot A)^*).
\end{equation}

Since \eqref{arttt} holds,
 the deficiency subspace $$N_z=\Ker ((\dot A)^*-zI), \quad z\in \bbC_+,$$
is an eigensubspace of  $\cU$.
Therefore,  the subspace
$$
\cH_+=\overline{\Span_{z\in \bbC_+}\Ker ((\dot A)^*-zI)}
$$
is invariant for $\cU$.

Next we claim   that  the deficiency  subspaces $N_z$ and  $N_\zeta$
 are not orthogonal to each other  for $z,\zeta \in\bbC_+$.

Indeed, let $A$ be a self-adjoint extension of $\dot A$. Suppose that
 $g_+\in N_i$ with  $g_+\ne 0$.
 Then the element
$g_z=(A-iI)(A-zI)^{-1}g_+$ generates the subspace $N_z$,
$\Im(z)\ne 0$.
Therefore,
\begin{align*}
(g_{z},g_\zeta)&=\left ((A-iI)(A-zI)^{-1}g_+,(A-iI)(A-\zeta I)^{-1}g_+\right )
\\&
=\int_\bbR\frac{\lambda^2+1}{(\lambda-z)(\lambda-\overline{\zeta})}d\mu(\lambda)
\\
&
=\frac{1}{z-\overline{\zeta}}\int_\bbR\left (\frac{1}{\lambda-z}-
\frac{1}{\lambda-\overline{\zeta}}\right )(\lambda^2+1)d\mu(\lambda)
,\end{align*}
where $\mu(d\lambda)$ is the spectral measure of the element $g_+$ associated with
the self-adjoint operator $A$. That is,
$$
d\mu(\lambda)=(dE(\lambda)g_+, g_+),
$$
where $E(\lambda)$ is the resolution of identity for the self-adjoint operator $A$,
$$
A=\int_\bbR\lambda dE(\lambda).
$$

 Clearly,
$$
\Im\left (\frac{1}{\lambda-z}-
\frac{1}{\lambda-\overline{\zeta}}\right )>0,\quad \text{whenever} \quad z,\zeta \in \bbC_+,
$$
and therefore
$$
(g_{z},g_\zeta)\ne 0, \quad z,\zeta\in \bbC_+,
$$
which proves the claim.

Finally, since   $\Ker ((\dot A)^*-zI)$ and  $\Ker ((\dot A)^*-\zeta I)$ for $z,\zeta \in \bbC_+$ are not orthogonal to each other,  the restrictions of $\cU$ onto
these subspaces have the same eigenvalues, proving that
the restriction of $\cU$ onto $\cH_+$
is a multiple of the identity.

The same reasoning
 shows that   the restriction of $\cU$ onto its invariant subspace
$$
\cH_-=\overline{\Span_{z\in \bbC_-}\Ker ((\dot A)^*-zI)}
$$
is also a (possibly different) multiple of the identity as well.

The proof is complete.

\end{proof}

\begin{lemma}\label{uniuni}  Suppose that
$
\dot A$
is a symmetric operator with deficiency indices $(1,1)$.
Assume that $U_t$ and $V_t$ are strongly continuous unitary groups such that the commutation relations
$$
U_t^*\dot AU_t=V_t^*\dot AV_t=\dot A+tI\quad \text{on }\quad \Dom ( \dot  A)
$$
hold.

Then the subspaces  $$
 \cH_\pm=\overline{\Span_{z\in \bbC_\pm}\Ker ((\dot A)^*-zI)}
$$ reduce  the groups $U_t $ and $V_t$ and
$$
V_t|_{\cH_\pm}=e^{it\mu_\pm}U_t|_{\cH_\pm}\quad \text{for some} \quad \mu_\pm\in\bbR.
$$

\end{lemma}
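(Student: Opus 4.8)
The plan is to reduce the statement to the rigidity Lemma \ref{nudada} by forming the product unitary $W_t=V_tU_t^*$ and showing that it commutes with $\dot A$ in the sense required there. First I would observe that, since $\Dom(\dot A)$ is invariant under both $U_t$ and $V_t$ for every $t\in\bbR$ (hence under $U_t^*=U_{-t}$), the operator $W_t$ preserves $\Dom(\dot A)$. The key computation is that $W_t$ intertwines $\dot A$ with itself. Indeed, conjugating the relation $U_t^*\dot AU_t=\dot A+tI$ gives $U_t\dot AU_t^*=\dot A-tI$, and therefore
\begin{align*}
W_t^*\dot AW_t&=U_tV_t^*\dot AV_tU_t^*=U_t(\dot A+tI)U_t^*\\
&=(\dot A-tI)+tI=\dot A\quad\text{on }\Dom(\dot A).
\end{align*}
Thus $W_t(\Dom(\dot A))=\Dom(\dot A)$ and $\dot AW_tf=W_t\dot Af$ for all $f\in\Dom(\dot A)$, so Lemma \ref{nudada} applies to the unitary $\cU=W_t$: the subspaces $\cH_\pm$ are invariant for $W_t$ and $W_t|_{\cH_\pm}=\Theta_\pm(t)I_{\cH_\pm}$ for some unimodular scalars $\Theta_\pm(t)$.

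Next I would establish that $\cH_\pm$ actually reduce each of the groups $U_t$ and $V_t$ separately. For this I use the companion relation for the adjoint, $U_t^*(\dot A)^*U_t=(\dot A)^*+tI$ on $\Dom((\dot A)^*)$, proved exactly as in the derivation of \eqref{orderdot}: for $g\in\Dom(\dot A)$ and $f\in\Dom((\dot A)^*)$ one pushes $U_t^*$ through $\dot A$ using domain invariance. Consequently, if $(\dot A)^*g_w=wg_w$ then $(\dot A)^*U_tg_w=(w+t)U_tg_w$, so $U_t$ maps $\Ker((\dot A)^*-wI)$ onto $\Ker((\dot A)^*-(w+t)I)$; since $\Im(w+t)=\Im w$, the closed span $\cH_+=\overline{\Span_{z\in\bbC_+}\Ker((\dot A)^*-zI)}$ is invariant under $U_t$ for every $t\in\bbR$, and likewise $\cH_-$. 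Invariance for all $t$ of a one-parameter unitary group forces invariance under $U_{-t}=U_t^*$, so $\cH_\pm$ reduce $U_t$; the same argument applies verbatim to $V_t$.

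Finally I would combine the two facts. Since $\cH_\pm$ reduce both groups, $W_t|_{\cH_\pm}=V_t|_{\cH_\pm}\,U_t^*|_{\cH_\pm}=\Theta_\pm(t)I$, whence $V_t|_{\cH_\pm}=\Theta_\pm(t)\,U_t|_{\cH_\pm}$. Although $W_t$ need not be a group (unless $U$ and $V$ commute), on the reducing subspace $\cH_\pm$ the relation $V_{t+s}=V_tV_s$ together with $U_{t+s}=U_tU_s$ yields $\Theta_\pm(t+s)U_{t+s}=\Theta_\pm(t)\Theta_\pm(s)U_{t+s}$, so $\Theta_\pm(t+s)=\Theta_\pm(t)\Theta_\pm(s)$. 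Strong continuity of $t\mapsto U_t,V_t$ makes $\Theta_\pm(t)=(V_tu,U_tu)/\|u\|^2$ continuous for any fixed $0\ne u\in\cH_\pm$, so $\Theta_\pm$ is a continuous character of $\bbR$ and hence $\Theta_\pm(t)=e^{it\mu_\pm}$ for some $\mu_\pm\in\bbR$, giving $V_t|_{\cH_\pm}=e^{it\mu_\pm}U_t|_{\cH_\pm}$. I expect the main obstacle to be the careful bookkeeping in the first step, namely verifying the commutation $W_t^*\dot AW_t=\dot A$ and the adjoint relation with the correct domain identifications; once $W_t$ is seen to commute with $\dot A$, Lemma \ref{nudada} and the spectral-shift action on deficiency subspaces do the rest.
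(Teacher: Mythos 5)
Your proof is correct and follows essentially the same route as the paper: the paper likewise applies Lemma \ref{nudada} to the product unitary (there $\cU_t=U_t^*V_t$ rather than your $V_tU_t^*$), obtains the reduction of $\cH_\pm$ from the adjoint commutation relation $U_t^*(\dot A)^*U_t=(\dot A)^*+tI$ and the resulting spectral-shift action on the deficiency subspaces, and then solves the continuous character equation to get $e^{it\mu_\pm}$. The differences (order of the steps, and your more explicit verification of multiplicativity and continuity of $\Theta_\pm$) are cosmetic.
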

\begin{proof} In a similar way as in the proof of Lemma  \ref{nudada}, one observes that
the commutation relations
$$
U_t^*(\dot A)^*U_t=V_t^*(\dot A)^*V_t=(\dot A)^*+tI\quad \text{on }\quad \Dom ( (\dot  A)^*)
$$
for the adjoint operator $(\dot A)^*$ hold.

Since  obviously
$$
U_t(\Ker ((\dot A)^*-zI))=V_t(\Ker ((\dot A)^*-zI))=(\Ker ((\dot A)^*-(z-t)I)), \quad t\in \bbR,
$$
the subspaces $\cH_\pm$ are invariant for both  $U_t$ and $V_t$ for all $t$, and therefore $\cH_\pm$ reduce the groups $U_t$ and $V_t$.
Since   the unitary operator  $\cU_t=U_t^*V_t$  commutes with $\dot A$ on $\dom(\dot A)$,
by Lemma \ref{nudada}  one gets that
\begin{equation}\label{adad}
U_t^*V_t|_{\cH_\pm}=\cU_t|_{\cH_\pm}=e^{i\alpha_\pm (t)} I_{\cH_\pm}, \quad t\in \bbR,
\end{equation}
for some continuous real-valued functions $\alpha_\pm(t)$ (the continuity of the function $\alpha(t)$ follows from the hypothesis that the groups $U_t$ and $V_t$ are strongly continuous).
That is,
$$
V_t=e^{i\alpha_\pm(t)}U_t \quad \text{on}\quad \cH_\pm.
$$
Since $U_t$ and $V_t$ are one-parameter groups, it follows that the functional equation
$$
\alpha_\pm(t+s)=\alpha_\pm (t)+\alpha_\pm (s)
$$
holds
and hence, due to the continuity of $\alpha_\pm$, we conclude  that
$$
\alpha_\pm (t)=\mu_\pm  t,
$$
for some $\mu_\pm \in \bbR$, which combined  with \eqref{adad} completes the proof.
\end{proof}

\begin{theorem}\label{maininv}
Suppose that   $\widehat A$ is a maximal  dissipative extension of a prime symmetric operator  $\dot A$  with deficiency indices $(1,1)$.

Assume that $U_t$ and $V_t$ are strongly continuous unitary groups such that the commutation relations
\begin{equation}\label{zilot}
U_t^*\widehat  AU_t=V_t^*\widehat  AV_t=\widehat A+tI\quad \text{on }\quad \Dom ( \widehat   A)
\end{equation}
hold.

If $\widehat A$ is self-adjoint assume, in addition, that
\begin{equation}\label{zilotdot}
U_t^*\dot AU_t=V_t^*\dot AV_t=\dot A+tI\quad \text{on }\quad \Dom ( \dot  A).
\end{equation}

If $\widehat  A$ has a regular point in the upper half-plane,
 in particular, if $\widehat A$ is self-adjoint,
 then
 \begin{equation}\label{u=v}
V_t=e^{it\mu}U_t\quad \text{ for some  $\mu\in \bbR$.}
\end{equation}

Moreover, in the exceptional case when  the spectrum of $\widehat A$ fills in  the whole upper half-plane,  the subspaces    $$
 \cH_\pm=\overline{\Span_{z\in \bbC_\pm}\Ker ((\dot A)^*-zI)}
$$ are orthogonal, reduce  the groups $U_t $ and $V_t$ and
$$
V_t|_{\cH_\pm}=e^{it\mu_\pm}U_t|_{\cH_\pm}\quad \text{for some  $\mu_\pm\in \bbR.$}
$$

\end{theorem}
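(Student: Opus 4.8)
The plan is to transfer the commutation relations from $\widehat A$ to its symmetric part $\dot A$ and then invoke Lemma~\ref{uniuni}. First I would note that both $U_t$ and $V_t$ satisfy
\[
U_t^*\dot A U_t = V_t^*\dot A V_t = \dot A + tI \quad\text{on}\quad \Dom(\dot A).
\]
When $\widehat A$ is self-adjoint this is exactly the extra hypothesis \eqref{zilotdot}; when $\widehat A$ is properly dissipative it comes for free, precisely as in the proof of Theorem~\ref{lem:diss}: the relation \eqref{zilot} propagates from $\widehat A$ to $(\widehat A)^*$ (cf. \eqref{orderwidehat}), and since $\dot A = \widehat A|_{\Dom(\widehat A)\cap\Dom((\widehat A)^*)}$ by \eqref{ongin} while both domains are invariant under each group, the relation restricts to $\dot A$. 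With this in hand Lemma~\ref{uniuni} shows that $\cH_\pm$ reduce both $U_t$ and $V_t$ and that $V_t|_{\cH_\pm} = e^{it\mu_\pm}U_t|_{\cH_\pm}$ for some $\mu_\pm\in\bbR$. Writing $\cU_t = U_t^*V_t$, the operator $\cU_t$ leaves $\Dom(\widehat A)$ invariant (each of $U_t$, $V_t$ does, by \eqref{zilot}) and acts on $\cH_\pm$ as multiplication by $e^{it\mu_\pm}$.

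The core of the argument is the case in which $\widehat A$ has a regular point in $\bbC_+$. Applying the structure Theorem~\ref{lem:diss} to $U_t$ gives $S(z) = ke^{i\ell z}$; since the spectrum of $\widehat A$ in $\bbC_+$ is the zero set of $S$, the presence of a regular point forces $S\not\equiv 0$, hence $k\ne 0$ and, by \eqref{sac}, the von Neumann parameter $\varkappa = S(i) = ke^{-\ell}$ is nonzero (in the self-adjoint case one argues identically with the unimodular $\varkappa$ for which $g_+-\varkappa g_-\in\Dom(A)$). Now I would apply $\cU_t$ to the defining deficiency element of \eqref{parpar}: since $g_+\in\cH_+$ and $g_-\in\cH_-$,
\[
\cU_t(g_+-\varkappa g_-) = e^{it\mu_+}g_+ - \varkappa e^{it\mu_-}g_- \in \Dom(\widehat A).
\]
A purely deficiency vector belongs to $\Dom(\widehat A)$ only if it is a scalar multiple of $g_+-\varkappa g_-$; comparing the (linearly independent) $g_+$- and $g_-$-components and using $\varkappa\ne 0$ forces $e^{it\mu_+}=e^{it\mu_-}$ for all $t$, so $\mu_+=\mu_-=:\mu$. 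Because $\dot A$ is prime, the closed span of its deficiency subspaces is all of $\cH$, i.e. $\overline{\cH_+ + \cH_-}=\cH$; as $V_t$ and $e^{it\mu}U_t$ agree on each of $\cH_+$ and $\cH_-$ they agree on $\cH$, which is \eqref{u=v}.

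It remains to treat the exceptional case in which the spectrum of $\widehat A$ fills $\bbC_+$. Here every point of $\bbC_+$ is an eigenvalue, so the characteristic function vanishes identically, $S\equiv 0$; by \eqref{obr} with $\varkappa=S(i)=0$ the Liv\v{s}ic function satisfies $s_{(\dot A,A)}\equiv 0$, which by \eqref{charsum} yields $(g_z,g_-)=0$ for all $z\in\bbC_+$. By the characterization recalled in Remark~\ref{arlin}, a prime operator admitting a quasi-selfadjoint extension with point spectrum filling $\bbC_+$ is unitarily equivalent to the Case~(i) differentiation operator $\dot D$, whose deficiency subspaces for $z\in\bbC_+$ and $z\in\bbC_-$ are supported on disjoint half-lines (see \eqref{defsub1}); consequently $\cH_+\perp\cH_-$. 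Together with primeness this gives the orthogonal reduction $\cH=\cH_+\oplus\cH_-$, and the character relation already supplied by Lemma~\ref{uniuni} completes the splitting.

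I expect the main obstacle to be not the short linking computation but making the dichotomy exhaustive and correctly aligned: one must know that ``$\widehat A$ has a regular point in $\bbC_+$'' is the exact complement of ``the spectrum fills $\bbC_+$'' and that these correspond respectively to $\varkappa\ne 0$ and $\varkappa=0$. This is exactly where the structure Theorem~\ref{lem:diss} is indispensable, since for a generic inner function one could have $S(i)=0$ without $S\equiv 0$; the special form $S(z)=ke^{i\ell z}$ rules this out and makes $\varkappa=S(i)$ a faithful indicator of which case one is in.
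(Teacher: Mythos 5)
Your proposal is correct, and its skeleton (transfer the relations to $\dot A$, invoke Lemma \ref{uniuni}, conclude by primeness, and identify the exceptional case with the Case (i) model operator) matches the paper's; but the pivotal step, showing $\mu_+=\mu_-$ when $\widehat A$ has a regular point, is carried out by a genuinely different mechanism. The paper builds a bridge between $\cH_-$ and $\cH_+$ with the resolvent: for $0\ne g\in \Ker((\dot A)^*+iI)$ and $z$ a regular point, the element $f=(\widehat A+iI)(\widehat A-zI)^{-1}g$ lies in $\Ker((\dot A)^*-zI)\subset\cH_+$, and since $\cU_t=U_t^*V_t$ commutes with $\widehat A$ (hence with its resolvent), $\cU_t f=e^{i\mu_- t}f$ while Lemma \ref{uniuni} gives $\cU_t f=e^{i\mu_+ t}f$; this needs nothing beyond the existence of one regular point. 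You instead use the domain element $g_+-\varkappa g_-$ of \eqref{parpar} as the bridge: $\cU_t$ preserves $\Dom(\widehat A)$, purely deficiency vectors in $\Dom(\widehat A)$ are exactly the multiples of $g_+-\varkappa g_-$ (von Neumann's formula), so $e^{it\mu_+}g_+-\varkappa e^{it\mu_-}g_-\in\Dom(\widehat A)$ forces $\mu_+=\mu_-$ provided $\varkappa\ne 0$. To secure $\varkappa\ne0$ you must invoke the structure Theorem \ref{lem:diss}, $S(z)=ke^{i\ell z}$, so that ``regular point exists'' $\Leftrightarrow$ $S\not\equiv 0$ $\Leftrightarrow$ $\varkappa=S(i)\ne 0$ — a correct inference, and exactly the point you flag at the end (a generic inner function could vanish at $i$ without vanishing identically). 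So your route buys a transparent dichotomy and a uniform treatment of the self-adjoint case (where $|\varkappa|=1\ne 0$ trivially), at the cost of leaning on the full structure theorem; the paper's resolvent argument is lighter, using only commutation of $\cU_t$ with $\widehat A$. One cosmetic remark on your exceptional case: the observation that $s_{(\dot A,A)}\equiv 0$ gives $(g_z,g_-)=0$ only yields orthogonality of $\cH_+$ to the single vector $g_-$, not $\cH_+\perp\cH_-$; this sub-argument is dispensable, and your actual conclusion correctly rests on the identification of $\dot A$ with the Case (i) differentiation operator (via Remark \ref{arlin}, equivalently via the paper's route through Lemma \ref{key} and the uniqueness Theorem \ref{unitar}), whose deficiency subspaces are the mutually orthogonal spaces $L^2(\bbR_\mp)$.
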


\begin{proof}
Let $z$ be  a regular point of $\widehat A$ in the upper half-plane. As it has been explained in the proof of Theorem \ref{lem:diss}, if $\widehat A$ is not self-adjoint,
then  \eqref{zilotdot} holds automatically. Therefore,
Lemma \ref{uniuni} is applicable and hence
 $$
 \cU_tg=e^{i\mu_-t} g \quad \text{for some}\quad \mu_-\in \bbR,
 $$
 where $\cU_t=U_t^*V_t$ and $0\ne g\in
 \Ker ((\dot A)^*+iI)\subset \cH_-$.

 Set
 $$
 f=(\widehat A+iI)(\widehat A-zI)^{-1}g.
 $$
Since   $f\in \Ker ((\dot A)^*-zI)\subset \cH_+$, by Lemma \ref{uniuni},
$$
\cU_tf=e^{i\mu_+t} f \quad \text{for some}\quad \mu_+\in \bbR.
$$
 On the other hand,
since  the unitary operator $\cU_t$ commutes with $\widehat A$, we have
 $$
 \cU_t f=(\widehat A+iI)(\widehat A-zI)^{-1}\cU_tg=(\widehat A+iI)(\widehat A-zI)^{-1} e^{i\mu_-t} g=e^{i\mu_-t} f.
$$
Therefore, $\mu_+=\mu_-$.

Finally, taking into account that  $\dot A$ is a prime operator, it follows  that the subspaces $\cH_\pm$ span the whole Hilbert space $\cH$, and the claim follows.

  In view of Lemma   \ref{uniuni}, to prove the last assertion it remains to show that $\cH_\pm$ are mutually orthogonal  whenever
 the spectrum of  $\widehat  A$ fills in the upper half-plane.

Since \eqref{zilot} holds and the dissipative operator $\widehat A$ has no regular points in the upper half-plane,  one can apply Theorem \ref{lem:diss}
to conclude that for any self-adjoint extension of $\dot A$  the characteristic function of the triple  $(\dot A, \widehat A, A)$ is identically zero.
By Lemma \ref{key}, the characteristic function of the triple  $(\dot D, \widehat D_I(0),D)$ on the metric graph $\bbY$ in Case (i) with quantum gate coefficient $k=0$
also vanishes identically in the upper-half-plane. The operators  $\dot A$ and $\dot D$ are prime symmetric operators, therefore
 $\dot A$ is unitarily equivalent to $\dot D$ on the metric graph $\bbY$ in Case (i), where $\widehat D$ is in the exceptional case, that is,
its point spectrum  fills in $\bbC_+$. In this case the subspaces $\cH_\pm(\dot D)=\overline{\Span_{z\in \bbC_\pm}\Ker ((\dot D)^*-zI)}=L^2(\bbR_\pm)$ for the operator $\dot D$ are  orthogonal,
so are the subspaces
$$\cH_\pm(\dot A)=\overline{\Span_{z\in \bbC_\pm}\Ker ((\dot A)^*-zI)}
$$ for the symmetric operator $\dot A$.

\end{proof}

\section{Dissipative solutions to the CCR}\label{sec11}

Now we are prepared to  get  a complete classification  (up to  unitary equivalence) of the  simplest non-self-adjoint  maximal dissipative solutions $\widehat A$ to the commutation relations \eqref{mmm}.

More generally, we have the following result.

\begin{theorem}\label{premain}
{\it Assume Hypothesis \ref{muhly}. Suppose, in addition, that $\dot A$ is a prime operator and $A$ its self-adjoint extension. Suppose that $\widehat A$ is a maximal  dissipative  extension of $\dot A$ such that
$$
U_t^* \widehat AU_t=\widehat A+tI\quad \text{on }\quad \Dom ( \widehat A).
$$

\begin{itemize}
\item[(i)]
If $\widehat A$ is self-adjoint, then there exists a unique $\Theta$, $|\Theta|=1$, such that the triple $(\dot A, \widehat A, A)$ is mutually unitarily equivalent to the  triple $(\dot D,  D, D_\Theta)$ on the metric graph $\bbY$ in Case $(i)$
 and therefore $\Theta$ is
 a unitary invariant of   $(\dot A, \widehat A, A)$.

\item[(ii)]
If $\widehat A$ is not self-adjoint, then the triple $(\dot A, \widehat A, A)$ is mutually unitarily equivalent to the model triple $(\dot D, \widehat D, D_\Theta)$ in one of the Cases $(i)-(iii)$ for some $|\Theta|=1$.
 If, in addition,  $\widehat A$ has at least one regular point in the upper half-plane, then  the parameter $\Theta$ is uniquely determined by the  triple  $(\dot A, \widehat A, A)$
 and therefore $\Theta$ is
 a unitary invariant of   $(\dot A, \widehat A, A)$ in this case. That is, if some triples $(\dot A, \widehat A, A)$ and $(\dot A', \widehat A', A')$ are mutually unitarily equivalent, then the corresponding parameters $\Theta$ and $\Theta'$ coincide.
\end{itemize}
}
\end{theorem}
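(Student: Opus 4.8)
The plan is to prove both parts by matching complete unitary invariants — the Livšic function for the symmetric operator and the characteristic function for the triple — and then to extract $\Theta$ from the resulting normal form. Throughout I would lean on the uniqueness Theorem \ref{unitar} in Appendix C: since $\dot A$ is prime (hypothesis) and $\dot D$ is prime (Lemma \ref{primeD}), two such triples coincide up to mutual unitary equivalence as soon as their characteristic functions agree. The splitting into parts (i) and (ii) is forced because the structure Theorem \ref{lem:diss} applies only to genuinely dissipative $\widehat A$, so the self-adjoint case must be handled by Theorem \ref{sacr} instead.

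First I would dispose of part (i). As $\widehat A$ is a self-adjoint extension solving the commutation relations, Theorem \ref{sacr} applied to the pair $(\dot A, \widehat A)$ gives $s_{(\dot A, \widehat A)}(z)\equiv 0$. Because $|s(z)|$ is a complete unitary invariant of $\dot A$ and is independent of the reference extension up to a unimodular factor (Remark \ref{mnogopros}, Lemma \ref{ind0}), this yields $|s_{(\dot A, A)}|\equiv 0$; comparing with the normal forms of Lemma \ref{pomni}, in which only Case (i) has vanishing Livšic function, $\dot A$ is unitarily equivalent to $\dot D$ in Case (i). Both $\widehat A$ and $A$ are then self-adjoint extensions of $\dot D$, so by Theorem \ref{gengen} I may, after composing the equivalence with a diagonal unitary commuting with $\dot D$ (which rotates the extension parameter freely, as in the proof of Lemma \ref{nudada}), arrange $\widehat A\mapsto D=D_1$ and $A\mapsto D_\Theta$ for a unique $|\Theta|=1$. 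A short computation with the standard basis of Lemma \ref{defeff} shows that the von Neumann parameter of $(\dot D, D_1, D_\Theta)$ equals $\bar\Theta$, and since $s_{(\dot D, D_\Theta)}\equiv 0$, formula \eqref{ch12} gives $S_{(\dot D, D_1, D_\Theta)}(z)\equiv\bar\Theta$, a unimodular constant that is a complete invariant of the triple and hence pins down $\Theta$.

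For part (ii) the operator $\widehat A$ is genuinely dissipative, so Theorem \ref{lem:diss} applies to the given reference $A$ (using Lemma \ref{ind0} to transport the normal form across reference extensions) and yields $S_{(\dot A, \widehat A, A)}(z)=k\,e^{i\ell z}$ with $|k|\le 1$, $\ell\ge 0$, subject to the constraints that $\ell=0$ forces $|k|<1$ and $|k|=1$ forces $\ell>0$. I would then match this against the model normal form of Theorem \ref{ind12}, namely $S_{(\dot D, \widehat D, D_\Theta)}(z)$ equal to $e^{-2i\alpha}k$, to $e^{-2i\alpha}e^{i\ell z}$, or to $e^{-2i\alpha}k\,e^{i\ell z}$ in Cases (i)--(iii) respectively: the pair $(\ell,|k|)$ selects exactly one case (constant when $\ell=0$, pure singular inner when $|k|=1$, and the genuine product otherwise), while the phase of $k$ is absorbed by choosing $\alpha$, hence $\Theta$, so that $e^{-2i\alpha}=k/|k|$ (any $\Theta$ serving when $k=0$). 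With the two characteristic functions identical, Theorem \ref{unitar} delivers the mutual unitary equivalence of $(\dot A, \widehat A, A)$ with the model triple.

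It remains to settle uniqueness of $\Theta$ under the hypothesis that $\widehat A$ has a regular point in $\bbC_+$, and this is precisely Corollary \ref{gogol}: the model triples $(\dot D, \widehat D, D_\Theta)$ and $(\dot D, \widehat D, D_{\Theta'})$ with $\Theta\ne\Theta'$ are inequivalent except in the exceptional Case (i) where the point spectrum of $\widehat D$ fills $\bbC_+$. A regular point of $\widehat A$ transfers under the equivalence to a regular point of $\widehat D$, excluding that situation, so $\Theta$ is forced; the final assertion that equivalent triples share the same $\Theta$ then follows, since $\Theta$ was read off from the characteristic function, itself a unitary invariant. The one genuinely delicate point is the bookkeeping of the reference-dependent unimodular factors (via Lemma \ref{ind0} and the affine-invariance results of Appendix F) needed to guarantee that the phase of $k$ can be reconciled with the parameter $\alpha$; once that is controlled, the case analysis and the extraction of $\Theta$ are routine.
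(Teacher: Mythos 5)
Your proof is correct and takes essentially the same route as the paper's: part (i) rests on Theorem \ref{sacr}, primality, and the commutant analysis behind Lemmas \ref{nudada}/\ref{uniuni}, while part (ii) combines Theorem \ref{lem:diss} with the model characteristic functions of Lemma \ref{key}/Theorem \ref{ind12}, the uniqueness Theorem \ref{unitar}, and Corollary \ref{gogol} for the uniqueness of $\Theta$ --- the only difference being cosmetic (you absorb the unimodular factor on the model side through $\Theta\mapsto e^{-2i\alpha}$, whereas the paper changes the reference extension $A\mapsto A'$ on the original side). One minor caveat: in part (i) your appeal to the characteristic function as a \emph{complete} invariant is not literally covered by Theorem \ref{unitar} (stated only for $\widehat A\ne(\widehat A)^*$), but this is harmless, since there you only need that the von Neumann parameter is a unitary invariant of the triple (Remark \ref{mnogopros}), and your diagonal-unitary normalization argument already settles uniqueness on its own.
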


\begin{proof}

(i). If $\widehat A$ is self-adjoint, we argue as follows.  By Theorem \ref{sacr}, the Weyl-Titchmarsh function of $(\dot A, \widehat A)$ coincides with $i$ in the upper half-plane.
Therefore,  since $\dot A$ is a prime operator,
the pair  $(\dot A, \widehat A)$     is mutually unitarily equivalent to the pair $(\dot D,  D)$ on the metric graph $\bbY$ in Case (i).
Since $A$ is a self-adjoint extension of $\dot A$,  the triple $(\dot A, \widehat A, A)$ is mutually unitarily equivalent to the  triple $(\dot D,  D, D_\Theta)$ for some $|\Theta|=1$, which proves the existence of such a $\Theta$.

 To establish  the uniqueness,  suppose that the triples $(\dot D,  D, D_{\Theta_1})$  and  $(\dot D,  D, D_{\Theta_2})$ in Case (i) (recall that $D$ is self-adjoint here)  are  mutually unitarily equivalent.

 In particular, there exists a unitary operator $U$ such that
$$
U(\dom(\dot D )=\dom(\dot D )), \quad   U(\dom( D ))=\dom( D ),
$$
$$
U\dot D f=\dot D U f \quad \text{ for all}\quad f\in \dom(\dot D),
$$
$$
U D f=D U f \quad \text{ for all}\quad f\in \dom( D)
$$
and
$$
U^* D_{\Theta_1}U=D_{\Theta_2}.
$$

By Lemma \ref{uniuni}, the subspaces $L^2(\bbR_\pm)$ are eigensubspaces for the unitary operator $U$, and since
$ U(\dom( D ))=\dom( D )$, the corresponding eigenvalues of $U$ coincide. Therefore, $U$ is necessarily  a (unimodular) multiple of the identity
and hence
$$
D_{\Theta_2}=U^*D_{\Theta_1}U=D_{\Theta_1}$$
so that
$$\Theta_1=\Theta_2.
$$

 (ii). Suppose that $\widehat A$ is not self-adjoint.
By Theorem \ref{lem:diss}, the characteristic function of the triple admits the representation \eqref{savto}.
Combining   Theorem \ref{lem:diss} and Lemma \ref{key} one concludes that
there exists a (possibly different) self-adjoint extension $A'$ of $\dot A$ such that   the triples $(\dot A, \widehat A, A')$ and $(\dot D, \widehat D, D)$
are mutually unitarily equivalent.

In particular,  $\cU^{-1}\dot A\cU=\dot D$ and $\cU^{-1}A'\cU=D$ for some unitary operator. Hence   $\cU^{-1} A\cU$ is a self-adjoint extension of $\dot D$
and hence $\cU^{-1}A\cU=D_\Theta$ for some $ \Theta$. It is the unitary transformation $\cU$ that establishes the required mutual unitary equivalence of the triples.

To prove the last assertion, one observes that since $\widehat A$  has at least one regular point in the upper half-plane,  the characteristic function of the triple is not
 identically zero.
In particular, if  $A'$ is another reference operator, then the triples $(\dot A, \widehat A, A)$ and  $(\dot A, \widehat A, A')$ are mutually unitarily equivalent if and only if  $A'$ and $A$ coincide by Lemma \ref{ind0} in Appendix E.

Therefore, taking into account that  the triple $(\dot A, \widehat A, A)$ is mutually unitarily equivalent to  $(\dot D,  \widehat D, D_\Theta)$ for some $|\Theta|=1$, one concludes that
in this case  the unimodular parameter  $\Theta$ is uniquely determined by the triple $(\dot A, \widehat A, A)$.

\end{proof}

\begin{remark} If $\widehat A$ has no regular points in the upper half-plane,
then   $(\dot A, \widehat A, A')$ is mutually unitarily equivalent to $ (\dot D,  \widehat D, D_{\Theta})$ for some (and therefore for all) $\Theta$, $|\Theta |=1$,
where $ \widehat D$
is in  the exceptional case (Case (i) with $k=0$) (see Corollary \ref{gogol}). Therefore, in this exceptional case, the parameter $\Theta$ is not determined uniquely by the triple  $(\dot A, \widehat A, A')$.
\end{remark}

\begin{remark}\label{muhlyresolution}
On account of the  remarks that we made in Section \ref{secmodop}, structure Theorem  \ref{opcup} combined with Theorem \ref{premain}
provides the following intrinsic characterization of all symmetric operators satisfying Hypothesis \ref{muhly}
thus giving  a {\it complete solution of the J$\clock$rgensen-Muhly problem} for symmetric operators in the case of deficiency indices $(1,1)$ (see Problem (I) b) in the Introduction):

 either

\begin{itemize}
\item[i)]  $\dot A$ admits a (dissipative) quasi-selfadjoint extension with the point spectrum filling in $\bbC_+$, equivalently,   $\dot A$ admits
a pair of distinct quasi-selfadjoint extensions that are unitarily equivalent,
\item[]or,
\item[ii)]  $\dot A$ admits a quasi-selfadjoint extension with no spectrum,
\item[]or, finally,
\item[iii)]  $\dot A$ is the symmetric part  of an operator coupling
of a dissipative extension of the symmetric operator  $\dot A$ without  point spectrum  in case i)   and the dissipative extension of  $\dot A$  with no  spectrum in case ii)
(see Remark  \ref{metmet} for the definition of the symmetric part of a dissipative operator in connection with Hypothesis \ref{muhly}).
\end{itemize}

\end{remark}

\section{Main results}\label{s13}

In this section  we provide the  complete classification of the simplest solutions   to  the restricted Weyl commutation relations
 $$
V_sU_t=e^{ist}U_tV_s, \quad t\in \bbR, \quad s\ge 0,
$$
for
 a strongly continuous  group of unitary operators   $U_t$ and    a strongly continuous semi-group of contractions $V_s$  in a separable Hilbert space $\cH$.

\begin{hypothesis}\label{refref}
 Let $(-\infty, \infty) \ni t \to U_t =e^{iBt}$
 be a strongly continuous  group of unitary  operators and  $[0, \infty)\ni s\to V_s=e^{i\widehat A s}$ a semi-group of  contractions in a separable Hilbert space $\cH$.
Suppose that   the restricted Weyl commutation relations   \begin{equation}\label{resthyp}
V_sU_t=e^{ist}U_tV_s, \quad t\in \bbR, \quad s\ge 0,
\end{equation}
hold.
\end{hypothesis}

We remark that in the light of Corollary \ref{muhly1} one could have  started from the  much weaker  Hypothesis \ref{muhly}.

The following two results characterize the  simplest solutions  to  the restricted Weyl commutation relations.
We start with the case where  $V_s$ is a semi-group of isometries.

\begin{theorem}\label{thm01} Assume Hypothesis \ref{refref}.
Suppose, in addition, that the generator
$\widehat A $ of the semi-group $V_s$ is a prime symmetric operator with deficiency indices $(0,1)$.

Then there exists a unique metric  graph $\bbY=(\mu,\infty)$, $\mu \in \bbR$,
such that
 the pair $(\widehat A, B)$ is mutually unitarily equivalent to the  pair $(\widehat \cP, \cQ)$,
where $\widehat \cP=i\frac{d}{dx}$ is the differentiation operator in $L^2(\bbY)=L^2((\mu, \infty))$
on
$$
\Dom (\widehat \cP)=\{f\in W_2^1((\mu, \infty))\, |\, f(\mu)=0\}
$$
and $\cQ$ is the  operator of multiplication by  independent variable in   $L^2(\bbY)$.
\end{theorem}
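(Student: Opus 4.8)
The plan is to reduce the semigroup-of-isometries case to the already-established symmetric-operator classification. The generator $\widehat A$ is a symmetric operator with deficiency indices $(0,1)$ that, by virtue of Hypothesis \ref{refref}, satisfies the commutation relations $U_t^*\widehat A U_t=\widehat A+tI$ on $\Dom(\widehat A)$ (this is the infinitesimal form of the restricted Weyl relations \eqref{resthyp}, obtained by differentiating in $s$). The key structural fact to exploit is that a symmetric operator with deficiency indices $(0,1)$ is already maximal dissipative (it is its own maximal symmetric extension), so there is no extension parameter to fix: the classification is rigid. The plan is therefore to (a) identify the abstract unitary invariant that pins down such an operator, (b) match it against the concrete differentiation operator $\widehat\cP$ on a half-line, and (c) recover the group $U_t$ (equivalently $\cQ$) up to the translation parameter $\mu$.

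First I would establish that $\widehat A$ is unitarily equivalent to the differentiation operator $i\frac{d}{dx}$ on $L^2((0,\infty))$ with the Dirichlet condition $f(0)=0$. This is the classical characterization: a prime symmetric operator with deficiency indices $(0,1)$ whose adjoint generates a semigroup of isometries is, up to unitary equivalence, the momentum operator on a half-line (this is exactly the model underlying the exceptional Case (i) with $k=0$, restricted to one outgoing ray; cf.\ the prime-operator discussion of Lemma \ref{primeD} and the model operators in Section \ref{secmodop}). The primeness hypothesis is what rules out a self-adjoint direct summand and guarantees the model is irreducible. Concretely, one checks that the semigroup $V_s=e^{i\widehat A s}$ acts as the shift $(V_s f)(x)=f(x-s)$ (extending by zero), which forces the half-line structure; the deficiency element $g_+\in\Ker((\widehat A)^*-iI)$ is then a single exponential $e^x\chi_{(-\infty,0)}$ transported to the model, giving deficiency indices $(0,1)$ on the right half-line.

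Next I would pin down the group $U_t=e^{iBt}$. From the commutation relation and the identification of $\widehat A$ with $\widehat\cP$ on $L^2((0,\infty))$, the operator $B$ must commute with the shift structure in the Weyl-braided way, so $U_t$ is realized as multiplication by $e^{-itx}$; i.e.\ $B$ is unitarily equivalent to the position operator $\cQ$ of multiplication by the independent variable. The only remaining freedom is an additive shift: replacing $x$ by $x-\mu$ sends the half-line $(0,\infty)$ to $(\mu,\infty)$ and simultaneously shifts $\cQ$ by $\mu I$, which is precisely the ambiguity of choosing the origin. The parameter $\mu$ is fixed by the spectrum of $B=\cQ$ (the position operator on $(\mu,\infty)$ has spectrum $[\mu,\infty)$), which is a unitary invariant; hence $\mu$ is uniquely determined and the graph $\bbY=(\mu,\infty)$ is unique. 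This is the content of the uniqueness-of-$U_t$ results proved in Section \ref{sec12} (Theorem \ref{maininv}), adapted to the isometric rather than unitary semigroup setting.

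\textbf{The main obstacle} I anticipate is the passage from the formal infinitesimal commutation relation back to the global structure, and in particular justifying that $\widehat A$ really is unitarily equivalent to the Dirichlet differentiation operator \emph{with a definite orientation} rather than merely symmetric with the correct deficiency indices. The indices $(0,1)$ already force maximality, but one must verify that the characteristic/Livšic data are those of the half-line momentum operator and not some other prime symmetric operator with the same indices; here the commutation relation does the work, since it rigidifies the spectral picture exactly as in the proof of Theorem \ref{sacr}, forcing the associated Weyl--Titchmarsh function to be the half-plane value $M(z)\equiv i$ (equivalently, the Livšic function vanishes identically). Once the model is fixed, the recovery of $\mu$ from the spectrum of $B$ and the uniqueness of the graph are routine. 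I would expect the isometric-semigroup degeneracy (only one deficiency direction, so no von Neumann parameter and no reference-extension choice) to make this case strictly easier than the genuinely dissipative $(1,1)$ case treated by Theorem \ref{premain}.
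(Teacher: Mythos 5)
Your proposal is correct and follows essentially the same route as the paper's own proof: first the classical Akhiezer--Glazman/Stone--von Neumann model theorem identifying a prime maximal symmetric operator with deficiency indices $(0,1)$ with the Dirichlet differentiation operator on $L^2((0,\infty))$ (no commutation relation is needed for this step), then the uniqueness-of-the-group machinery of Section \ref{sec12} adapted to the $(0,1)$ setting exactly as you indicate (the paper does this via Lemma \ref{uniuni} combined with Lemma \ref{popopo}, which shows the deficiency span fills the whole space so that $U_t$ is determined up to a single character $e^{i\mu t}$), and finally uniqueness of $\mu$ read off from $\spec(B)=[\mu,\infty)$. Two cosmetic corrections: for the half-line model the nontrivial deficiency subspace is $\Ker((\widehat A)^*+iI)$, spanned by $e^{-x}\chi_{(0,\infty)}$, while $\Ker((\widehat A)^*-iI)$ is trivial (that is the ``$0$'' in $(0,1)$), and the appeal to Theorem \ref{sacr} and $M(z)\equiv i$ is out of place here, since a $(0,1)$ operator admits no self-adjoint reference extension and the paper's Weyl--Titchmarsh function is not defined for it---the classical model theorem already does all the work your ``main obstacle'' paragraph worries about.
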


\begin{proof}
Since $\widehat A $ is a generator of a semi-group,  $\widehat A $ is  a closed operator.
By the Stone-von Neumann uniqueness result (see, e.g., \cite[Theorem 2, Ch. VIII, Sec. 104]{AkG}) there exists a isometric map $\cU$ from $\cH$ onto
$L^2((0, \infty))$ such that
$
\cU \widehat A \cU^{-1}
$  coincides with the differentiation operator in   $L^2((0, \infty))$ with the Dirichlet boundary condition at the origin.

Lemma \ref{uniuni} and Lemma \ref{popopo} in Appendix B
show  that there exists a $\mu$ such that  the pair $(\widehat A, B)$ is mutually unitarily equivalent to the  pair $(\cP_0, \cQ_0+\mu I)$,
  where $$\cP_0=i\frac{d}{dx}$$ is the differentiation operator in $L^2((0, \infty))$
on
$$
\Dom (\cP_0)=\{f\in W_2^1((0, \infty))\, |\, f(0)=0\},
$$
and $\cQ_0$ is the operator of multiplication by independent variable in   $L^2((0,\infty))$.  Clearly the pairs  $(\cP_0, \cQ_0+\mu I)$  and
$(\widehat \cP, \cQ)$ in the Hilbert spaces
 in $L^2((0,\infty)) $ and
in $L^2((\mu, \infty))$, respectively,  are mutually unitarily equivalent.
By  construction, the spectrum of the generator $B$ coincides with the semi-axis $[\mu, \infty)$.
Therefore,  $\mu$ is a unitary invariant  which is  uniquely determined by the  pair $(\widehat A, B).$ In particular, the graph $\bbY=(\mu, \infty)$ is also uniquely determined by the pair $(\widehat A, B).$
\end{proof}

In a completely analogous way one proves the following result.

\begin{theorem}\label{thm10}   Assume Hypothesis \ref{refref}.
Suppose, in addition, that the generator
$\widehat A $ of the semi-group $V_s$ is a maximal dissipative extension of a prime symmetric operator with deficiency indices $(1,0)$.

Then there exists a unique metric  graph $\bbY=(-\infty, \nu)$, $\nu \in \bbR$,
such that  the pair $(\widehat A, B)$ is mutually unitarily equivalent to the  pair $(\widehat \cP, \cQ)$
where $\widehat \cP =i\frac{d}{dx}$ is the differentiation operator in $L^2(\bbY)=L^2((-\infty, \nu))$
on
$$
\Dom (\widehat \cP)=W_2^1((- \infty, \nu))
$$
and $\cQ$ is the   operator of multiplication by  independent variable in   $L^2(\bbY)$.
\end{theorem}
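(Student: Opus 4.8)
The plan is to follow the proof of Theorem~\ref{thm01} line by line, interchanging the two half-lines and replacing ``isometry'' by ``genuine contraction.'' First I would note that, being the generator of a strongly continuous contraction semigroup $V_s$, the operator $\widehat A$ is closed and maximal dissipative, and by hypothesis its symmetric part $\dot A=\widehat A|_{\Dom(\widehat A)\cap\Dom((\widehat A)^*)}$ is a prime symmetric operator with deficiency indices $(1,0)$. Since $\Ker((\dot A)^*+iI)=\{0\}$, the von Neumann formula gives $\Dom((\dot A)^*)=\Dom(\dot A)\dot+\Ker((\dot A)^*-iI)$, a one-dimensional enlargement, so the only proper quasi-selfadjoint extension of $\dot A$ is $(\dot A)^*$ itself; writing a general element as $f=\phi+cg_+$ with $\phi\in\Dom(\dot A)$ and $g_+\in\Ker((\dot A)^*-iI)$, a direct computation shows $\Im((\dot A)^*f,f)=|c|^2\|g_+\|^2\ge 0$, confirming that $(\dot A)^*$ is dissipative. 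Hence $\widehat A=(\dot A)^*$ is uniquely determined by $\dot A$.

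Next I would invoke the Stone--von Neumann--type uniqueness result \cite[Theorem 2, Ch. VIII, Sec. 104]{AkG}, now in its version for a prime symmetric operator with deficiency indices $(1,0)$: there is an isometry $\cU$ of $\cH$ onto $L^2((-\infty,0))$ carrying $\dot A$ to $i\frac{d}{dx}$ with the Dirichlet condition $f(0)=0$, and therefore carrying $\widehat A=(\dot A)^*$ to the maximal dissipative differentiation operator $\cP_0=i\frac{d}{dx}$ on $\Dom(\cP_0)=W_2^1((-\infty,0))$ (no endpoint condition). Integration by parts gives $\Im(\cP_0 f,f)=\tfrac12|f(0)|^2\ge0$, so $\cP_0$ is indeed the maximal dissipative extension matching $\widehat A$.

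Then I would use the commutation relation to pin down $B$. The restricted Weyl relation \eqref{resthyp} yields, after multiplying by $U_t^*$ and passing to generators, the infinitesimal relation $U_t^*\widehat AU_t=\widehat A+tI$ on $\Dom(\widehat A)$. Applying the one-sided analog of Lemma~\ref{uniuni} together with Lemma~\ref{popopo} in Appendix~B, the group $U_t=e^{iBt}$ is determined up to a character $t\mapsto e^{it\nu}$, so that $(\widehat A,B)$ is mutually unitarily equivalent to $(\cP_0,\cQ_0+\nu I)$, where $\cQ_0$ is multiplication by the independent variable on $L^2((-\infty,0))$. The translation $x\mapsto x+\nu$ furnishes a unitary equivalence of $(\cP_0,\cQ_0+\nu I)$ with $(\widehat\cP,\cQ)$ on $L^2((-\infty,\nu))$, which is the asserted model. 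Finally, since $\cQ$ has spectrum $(-\infty,\nu]$, the number $\nu$ equals $\sup\spec(B)$ and is a unitary invariant of the pair $(\widehat A,B)$; this yields uniqueness of the graph $\bbY=(-\infty,\nu)$.

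The step I expect to be the main obstacle is the correct transcription of the one-sided uniqueness machinery: Lemma~\ref{uniuni} is stated for deficiency indices $(1,1)$ and a pair of unitary groups, so I would need to verify that its argument---which rests only on the non-orthogonality of the deficiency subspaces $\Ker((\dot A)^*-zI)$ and $\Ker((\dot A)^*-\zeta I)$ for $z,\zeta$ in the same half-plane---survives when only $\cH_+=\overline{\Span_{z\in\bbC_+}\Ker((\dot A)^*-zI)}=\cH$ is present, and that the additive normalization $\nu$ delivered by Lemma~\ref{popopo} is exactly the one aligning the spectra of $B$ and $\cQ$. Everything else is sign-and-side bookkeeping parallel to the proof of Theorem~\ref{thm01}.
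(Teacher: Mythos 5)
Your proposal is correct and takes essentially the same route the paper intends: the paper's entire proof of Theorem \ref{thm10} is the remark that it is ``completely analogous'' to Theorem \ref{thm01}, and your argument is exactly that dualization (Akhiezer--Glazman model for the $(1,0)$ operator, then Lemma \ref{uniuni}-type rigidity of the group $U_t$ up to a character, then translation and the spectral invariant $\nu=\sup\spec(B)$), with the useful extra observations that $\widehat A$ is forced to be $(\dot A)^*$ and that the non-orthogonality step behind Lemma \ref{nudada} must be re-verified one-sidedly, which indeed holds (e.g.\ by computing $(g_z,g_\zeta)$ in the model on $L^2((-\infty,0))$, since no self-adjoint extension of $\dot A$ exists in $\cH$). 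One small slip: for deficiency indices $(1,0)$ the primeness criterion you need is Lemma \ref{radik}, not Lemma \ref{popopo}; the property you actually use, $\cH_+=\cH$, is the content of Lemma \ref{radik}.
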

\begin{remark} We remark that if $\widehat A$ is a prime symmetric operator with deficiency indices $(0,1)$ or $(1,0)$ in a Hilbert space $\cH$, then $\cH$ is necessarily separable.
\end{remark}
\begin{remark}
Theorems \ref{thm01} and \ref{thm10} are dual to each other: if $(\widehat A, B)$ satisfies the hypotheses of Theorem \ref{thm01}, then
$(-\widehat A^*, -B)$  satisfies the hypotheses of Theorem \ref{thm10}, however  the pairs $(\widehat A, B)$ and  $(-\widehat A^*, -B)$ are not mutually unitarily equivalent.
In this case the self-adjoint operator $B$ is semi-bounded from below  but $-B$ is semi-bounded from above. Moreover, the   generator $\widehat A$ has no point spectrum while the point spectrum of $(-\widehat A)^*$ fills in the entire  open upper half-plane.
\end{remark}

Next we treat the case where the dissipative generator of the semi-group  $V_s$  is a quasi-selfadjoint extension of a prime symmetric operator
with deficiency indices $(1,1)$.

To formulate the corresponding uniqueness result, we need some preparations.
\begin{definition}\label{canonical}
Let  $\bbY$ be a metric graph in
the following  cases
 \begin{equation}\label{vaivai}
\bbY=
\begin{cases}
 (-\infty, \nu) \sqcup(\mu, \infty) & \text{ Case I$^*$}  \quad (\nu \ne \mu)
\\
(-\infty, \mu) \sqcup(\mu, \infty)  & \text{ Case I}
\\
 (\mu, \nu)& \text{ Case II}  \quad (\mu < \nu)
 \\
 (-\infty, \mu) \sqcup(\mu, \infty)\sqcup (\mu,\nu) & \text{ Case III}  \quad (\mu < \nu).
\end{cases}.
\end{equation}

 Given  a real number $k$, $0\le k<1$, define
  \begin{itemize}
\item[]  the {\it position} operator $\cQ$ as  the  operator  of multiplication  by  independent variable on the edges of the graph $\bbY$
\item [] \,\,\,\,and

\item []
  the  {\it momentum}
 operator  $\widehat \cP$ as the differentiation operator $i\frac{d}{dx}$  on the edges
of the graph $\bbY$,
\begin{equation}
( \widehat \cP f)(x)= i \frac{d}{dx}f(x) \quad \text{a. e.} \,\,x\in
e  \text{ on every edge $e$ of } \bbY
\end{equation}
on
 $$\Dom(\widehat \cP)\subsetneqq \bigoplus\limits_{e\subset \bbY}W_2^1(e)\subsetneqq L^2(\bbY),$$
the space of locally absolutely continuous functions  on the edges with   the following  vertex   boundary conditions \end{itemize}

$$
\Dom(\widehat \cP)=\begin{cases}
\left \{f_-\oplus f_+\in W_2^1((-\infty, \nu))\oplus  { W}_2^1(( \mu, \infty)) \,\, |\,\,  f_+(\mu+)=0\right \},  & \text{in Case I$^*$}
\\
\left \{f_\infty\in W_2^1((-\infty, \mu))\oplus W_2^1(( \mu, \infty)) \,\, |\,\, f_\infty (\mu+)=k f_\infty(\mu-)\right \}, &\text{in Case I}
\\
 \left \{f_\ell\in W_2^1(( \mu, \nu))\,\, |\,\, f_\ell(\mu+)=0\right \},& \text{in Case II}
 \end{cases}.
$$
Here, in Case I we require that $0<k<1$ and in Case II we assume that $\nu=\mu+\ell>\mu$.

In Case III,
$
\Dom(\widehat \cP )$ consists  of the two-component vector-functions $f=(f_\infty, f_\ell)^T$,
$$f_\infty \oplus f_\ell\in
 \left (W_2^1((-\infty,\mu))\oplus W_2^1((\mu, \infty))\right )\oplus W_2^1((\mu , \nu))
$$
that satisfy  the ``boundary conditions"
\begin{equation}\label{bcond*}
 f_\infty(\mu+)=
k f_\infty(\mu-)
\quad \text{ and } \quad
 f_\ell(\mu+)=\sqrt{1-k^2} f_\infty(\mu-) \quad (\text{Case III}).
\end{equation}

By definition the pair $(\widehat \cP, \cQ)$ is said to be   the dissipative  {\it canonical pair with the quantum gate coefficient} $k$, $0\le k<1$ on the metric graph $\bbY$. In Case III
we always assume that $k>0$ and
 formally set $k=0$ whenever the graph $\bbY$ is in Case I$^*$  or in Case II.
 We also call the triple $(\dot \cP, \widehat \cP, \cQ)$, where  $$
 \dot \cP=\widehat \cP|_{\dom ( \widehat \cP)\cap \dom (\widehat \cP^*)},
$$
{\it the canonical dissipative triple on $\bbY$ with the quantum gate coefficient $k$}.

\end{definition}

\begin{remark}
In  Case I$^*$ the metric graph is ``disconnected'' whenever $\nu <\mu$, while if  $\mu<\nu$,  one may  think that the edges of the graph eventually  ``overlap'' over  the finite interval $[\mu, \nu]$.
Also, in Case III the boundary conditions \eqref{bcond*} at the junction point $\mu$ of the graph $\bbY$, the center of the graph, yield  the quantum Kirchhoff rule
$$
|f_\infty(\mu+)|^2+ |f_\ell(\mu+)|^2= |f_\infty(\mu-)|^2.
$$

\end{remark}

\begin{remark}\label{leble}
It is easy to see  that the spectrum of the position operator $\cQ$ is given by
\begin{equation}\label{specq}
\spec (\cQ)= \begin{cases}
(-\infty, \nu]\cup [\mu, \infty),&\text{   in Case I$^*$}\\
(-\infty, \infty),&\text{   in Case I }\\
[\mu, \nu],&\text{  in Case II}\\
(-\infty, \infty),&\text{  in Case III }
\end{cases}.
\end{equation}

From \eqref{specq} it follows that if $\bbY$  is in Case I$^*$ with  $\nu >\mu$ or  in
Case III, then
 the spectrum of the position operator $\cQ$ has multiplicity 2 on the finite interval $[\mu, \nu]$.  In Case I and II
the position operator $Q$ has simple Lebesgue  spectrum filling in the whole real axis $(-\infty, \infty)$  and the finite interval $[\mu, \nu]$ respectively.

We also notice that the spectrum of the dissipative momentum operator $\widehat \cP$ is
\begin{equation}\label{specp}
\spec (\widehat \cP)= \begin{cases}
\bbC_+\cup (-\infty, \infty),&\text{   in Case I$^*$ and Case I with $k=0$}\\
(-\infty, \infty),&\text{    in Case I with $k>0$ }\\
\emptyset, &\text{   in Case II}\\
(-\infty, \infty),&\text{   in Case III }
\end{cases}.
\end{equation}
Notice that  Case I$^*$ and  Case I with $k=0$ are exceptional in the sense that any point in the (open) upper half-plane is an eigenvalue of $\widehat \cP$.
\end{remark}

If a metric graph $\bbY$ is in Case I, we also introduce the concept of {\it the  Weyl canonical triple}
 on $\bbY$.

\begin{definition}\label{canwey} Let $\bbY$ be a metric graph in Case I, that is,
$$
\bbY=(-\infty, \mu)\sqcup (\mu, \infty)\quad \text{for some}\quad \mu\in \bbR.
$$
Let $\cP=i\frac{d}{dx}$ be the self-adjoint differentiation operator  on
\begin{equation}\label{sspp}
\Dom(\cP)=\{f\in W_2^1(\bbY)\, |\, f(\mu-0)=f(\mu+0\},
\end{equation}
$\dot \cP$ its symmetric restriction on
$$
\Dom(\dot \cP)=\{f\in W_2^1(\bbY)\, |\, f(\mu-0)=f(\mu+0)=0\},
$$
and $\cQ$ the position operator on $\bbY$.
We call $(\dot \cP, \cP, \cQ)$ {\it the Weyl canonical  triple on $\bbY$} (centered at $\mu$).
\end{definition}

\begin{remark}
If  $(\widehat \cP  , \cQ)=(\widehat \cP(k) , \cQ)$ is   the dissipative  {\it canonical pair} on the  metric graph  $
\bbY=(-\infty, \mu]\cup [\mu, \infty)
$  {\it with the quantum gate coefficient} $k$,
then
$$
\slim_{k\to1}\widehat \cP(k)=\cP,
$$
where    $\cP$ is the self-adjoint differentiation  operator defined  on \eqref{sspp} and the limit is taken in the strong resolvent sense.
Therefore,  the Weyl canonical  triple  on the metric graph $\bbY$
 can be considered  the limiting case of the dissipative triple  $(\dot \cP,\widehat  \cP, \cQ)=(\dot \cP(k),\widehat  \cP(k), \cQ)$ with the quantum gate coefficient $k$
as $k\to 1$.
\end{remark}

Our first auxiliary result is that the pair $(\bbY, k)$ is a  unitary invariant of a dissipative canonical pair.
\begin{lemma}\label{unickmu} Suppose that  the  canonical pairs  $(\widehat \cP(k), \cQ)$ and $(\widehat \cP'(k'),\cQ')$ with the quantum gate coefficients $ k$ and $k'$ on metric graphs $\bbY$ and $\bbY'$, respectively,  are mutually unitarily equivalent. Then
$$
\bbY=\bbY'
\quad \text{and}\quad
 k=k'.
$$
 \end{lemma}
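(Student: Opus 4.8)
The plan is to recover the pair $(\bbY,k)$ entirely from unitary invariants of the canonical pair $(\widehat\cP,\cQ)$. Let $W$ be a unitary with $W^*\widehat\cP'W=\widehat\cP$ and $W^*\cQ'W=\cQ$. Then $W$ intertwines the self-adjoint position operators, so $\cQ$ and $\cQ'$ have the same spectrum and the same spectral multiplicity function; and $W$ intertwines the dissipative momentum operators, so by Remark \ref{mnogopros} they share the complete unitary invariant $|S(z)|$ (in particular the same spectrum). First I would use the spectra \eqref{specq} and \eqref{specp} to separate the four cases. Case II is singled out by $\spec(\widehat\cP)=\emptyset$. The remaining cases split according to whether $\widehat\cP$ has eigenvalues filling $\bbC_+$ (this happens exactly in Case I$^*$ and in Case I with $k=0$) or $\spec(\widehat\cP)=\bbR$ (Case I with $k>0$ or Case III). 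Within each group the two alternatives are told apart by the multiplicity function of $\cQ$: Case I has multiplicity one on all of $\bbR$, whereas Case I$^*$ exhibits either a spectral gap (if $\nu<\mu$) or a multiplicity-two interval (if $\nu>\mu$), and Case III exhibits a multiplicity-two interval. Since all of these quantities are preserved by $W$, the two pairs must fall into the same case.

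Next I would read off the remaining data in every case but Case I. In Case II the spectrum $\spec(\cQ)=[\mu,\nu]$ determines the endpoints directly; in Case I$^*$ the endpoints of the spectral gap (resp. of the multiplicity-two interval) give $\mu$ and $\nu$; and in Case III the endpoints of the multiplicity-two interval $[\mu,\nu]$ give $\mu$ and $\nu$. The coefficient $k$ is then recovered from $|S(z)|$, which by Lemma \ref{key} (cf. \eqref{dlaSS}) equals $k$ in Case I, $e^{-\ell\Im z}$ in Case II, and $k\,e^{-\ell\Im z}$ in Case III, with $\ell=\nu-\mu$ already known and the real centering irrelevant to $|S(z)|$. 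As $\bbY$ is determined by its case together with the numbers $\mu$ and $\nu$, this already yields $\bbY=\bbY'$ and $k=k'$ in Cases I$^*$, II and III.

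The main obstacle is Case I, where $\spec(\cQ)=\bbR$ with uniform multiplicity one and $|S(z)|\equiv k$ are both invariant under real translations, so neither $\cQ$ nor $\widehat\cP$ alone can detect the location $\mu$ of the vertex. To overcome this I would pass to a genuinely joint invariant. Since $W$ also intertwines the symmetric parts $\dot\cP=\widehat\cP|_{\dom(\widehat\cP)\cap\dom((\widehat\cP)^*)}$ and $\dot\cP'$, it maps the one-dimensional deficiency subspace $\Ker((\dot\cP)^*-iI)$ onto $\Ker((\dot\cP')^*-iI)$, whence $Wg_+=\theta\,g_+'$ for the normalized deficiency elements $g_+,g_+'$ of \eqref{deftip1} (now centered at $\mu$, resp. $\mu'$) and some $|\theta|=1$. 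Using $W\cQ=\cQ'W$ and $\|g_+\|=1$, the quantity $(\cQ g_+,g_+)$ is then a unitary invariant of the pair, i.e. $(\cQ g_+,g_+)=(\cQ' g_+',g_+')$. A direct computation with $g_+(x)=\sqrt2\,e^{x-\mu}\chi_{(-\infty,\mu)}(x)$ gives $(\cQ g_+,g_+)=\mu-\tfrac12$ (and likewise $\mu'-\tfrac12$ on the other side), so $\mu=\mu'$; together with $k=k'$ (again from $|S(z)|\equiv k$) this settles Case I and completes the proof. The expected-position invariant $(\cQ g_+,g_+)$ is precisely the device that breaks the translation symmetry that defeats the separate spectral data, and isolating it is the crux of the argument.
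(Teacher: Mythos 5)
Your proof is correct, and it reaches the conclusion along a route that coincides with the paper's for the case separation and for Cases I$^*$, II, III, but genuinely differs at the one delicate point, namely pinning down the vertex $\mu$ in Case I. Like the paper, you separate the cases by the spectrum and multiplicity function of $\cQ$ together with the spectrum of $\widehat\cP$ (point spectrum filling $\bbC_+$ versus $\bbC_+\subset\rho(\widehat\cP)$), read off the vertices from the spectral data of $\cQ$ in Cases I$^*$, II, III, and recover $k$ from the unitary invariant $|S(z)|$ of the dissipative operator (Remark \ref{mnogopros}, Lemma \ref{key}), which is essentially the paper's use of the modulus of the von Neumann parameter. In Case I, however, the paper argues structurally: since $\cQ$ has simple spectrum, any intertwining unitary commuting with $\cQ$ is multiplication by a unimodular function $u$, and if $\mu\ne\mu'$ then $|f|=|uf|$ would be continuous near $\mu'$ for every $f\in\dom(\widehat\cP'(k'))$, contradicting the jump condition $f(\mu'+)=k'f(\mu'-)$ with $k'<1$. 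You instead exploit the fact that the intertwining unitary carries the normalized deficiency element $g_+$ of the symmetric part onto $\theta g_+'$, so the number $(\cQ g_+,g_+)$ is a joint unitary invariant of the pair, and the explicit computation $(\cQ g_+,g_+)=\mu-\tfrac12$ forces $\mu=\mu'$. Both arguments are sound; the paper's avoids any explicit computation but invokes the structure of the commutant of a simple-spectrum multiplication operator, while yours is more elementary at that point (it needs only that $W$ maps deficiency subspaces to deficiency subspaces, the argument of Proposition \ref{concept}) and has the pleasant by-product of exhibiting the vertex as an explicit numerical invariant, $\mu=\tfrac12+(\cQ g_+,g_+)$, of the canonical pair.
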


\begin{proof}
As it has been explained in Remark \ref{leble}, there are two options: either  the position operator  $\cQ$  has simple spectrum or
$\cQ$ has spectrum of multiplicity 2  filling in  a finite interval.

Assume, first,   that the position operator  $\cQ$ has spectrum of multiplicity 2 supported by  a finite interval $[\mu , \nu]$, $\nu>\mu$. So does $\cQ'$.
Therefore the graphs $\bbY$ and $\bbY'$ have the same vertices but may possibly be in different cases, in Case I$^*$ or in Case III only.

     Suppose that $\bbY$ is in Case I$^*$and therefore $k=0$.
Then   the point spectrum of the dissipative momentum operator $\widehat \cP=\widehat \cP(0)$
fills in  the whole upper half-plane $\bbC_+$, so does  the dissipative momentum operator $\widehat \cP'$
since  $\widehat \cP$ and $\widehat \cP'$ are unitarily equivalent. Therefore, $\bbY'$ is in  Case I$^*$   with $k'=0$ as well.
Analogously, if $\bbY$ is in Case III, then   $\bbC_+$ belongs to the resolvent set of  $\widehat \cP(k)$. Again, since  $\widehat \cP(k)$ and $\widehat \cP'(k')$ are unitarily equivalent,
$\bbC_+$ belongs to the resolvent set of  $\widehat \cP'(k')$
 and then necessarily  $\bbY'$  is in Case III. Thus   $\bbY$ and $\bbY'$ have the same vertices and are in the same  cases.  Therefore,  $\bbY=\bbY'$.

It remains to treat the case where  the multiplication operator $\cQ$ has  simple spectrum.  There are two options: either both $\bbY$ and $\bbY'$ are in Case II, or both of them are in Case I.

If they are in  Case II,   the knowledge of the spectrum of
 $\cQ$ ($\cQ')$   uniquely determines the location of the  vertices of  the graph $\bbY$ ($\bbY'$) and   the graph(s)  itself.

If both  $\bbY$ and $\bbY'$ are in Case I, we proceed as follows.
 Since  the pairs $(\widehat \cP(k), \cQ) $ and  $(\widehat \cP'(k'), \cQ')$ are mutually  unitarily equivalent and  $\cQ=\cQ'$,
there exists a unitary operator $U$ commuting with the multiplication operator $\cQ$
such that
$$
\widehat \cP'(k')=U^*\widehat \cP(k) U.
$$
Since $\cQ$ has simple spectrum and the unitary operator  $U$ commutes with $\cQ$, the  operator  $U$ is the multiplication operator by a unimodular function $u$. We have
\begin{equation}\label{domeqq}
\dom(\widehat \cP(k))=U(\dom (\widehat \cP'(k'))).
\end{equation}

Suppose that  the vertices $\mu$ and $\mu'$ of the graphs $\bbY$ and $\bbY'$ are different, that is, $\mu\ne \mu'$.
From \eqref{domeqq} it follows that the function $u(x)f(x)$ is a continuous function in a neighborhood of the point $\mu'$ for all $f\in \dom (\widehat \cP'(k'))$, so is
the function $|f(x)|=|u(x)f(x)|$, which is incompatible with the boundary condition $f(\mu'+)=k'f(\mu'-)$ for all $f\in \dom (\widehat \cP'(k'))$, since $k'<1$.
Therefore, the vertices of the graphs $\bbY $ and $ \bbY'$ coincide,  $\mu=\mu'$,  and hence  $\bbY=\bbY'$.

To prove  that $k=k'$,    notice that  if the metric graph $\bbY$ and therefore $\bbY'$ is in Cases I$^*$ or   II, then $k=k'=0$ by definition.

  Suppose  that  $\bbY=\bbY'$   is in Cases I,
$$
\bbY=(-\infty, \mu) \sqcup(\mu, \infty)=\bbY' .
$$
In this case, the absolute values of the von Neumann parameters of $\widehat \cP(k)$ and $\widehat \cP'(k')$ (more precisely, of the corresponding triples)
coincide with $k$ and $k'$, respectively. By the hypothesis,  $\widehat \cP(k)$ and $\widehat \cP'(k')$  are unitarily equivalent. Therefore,
$$k=k',$$ since the absolute value of the von Neumann parameter is a unitary invariant of a dissipative operator by Remark \ref{mnogopros}.

Next, assume that  $\bbY=\bbY'$   is in Case III,
$$
\bbY=
 (-\infty, \mu) \sqcup(\mu, \infty)\sqcup (\mu,\nu)=\bbY'   \quad (\mu < \nu).
$$
By Theorem \ref{ind12}, the absolute values of the von Neumann parameters of $\widehat \cP(k)$ and $\widehat \cP'(k')$
are $ke^{-\ell}$ and  $k'e^{-\ell}$,  respectively (see the relation \eqref{sac}),
where $$\ell=\nu-\mu.$$ Therefore, $k=k'$, since $\widehat \cP(k)$ and $\widehat \cP'(k')$  are unitarily equivalent by the hypothesis.

\end{proof}

Now we are ready to present the central result of the first part of the  book.

\begin{theorem}\label{vNvN} Assume Hypothesis \ref{refref}.
Suppose, in addition, that the generator
$\widehat A $ of the semi-group $V_s$  is not self-adjoint and that  the restriction
$$
\dot A=\widehat A|_{\dom (\widehat A)\cap \dom ((\widehat A)^*)}
$$
is a prime symmetric operator with deficiency indices $(1,1)$.

Then there exists a unique metric   graph $\bbY$  in one of the Cases I$^*$, I--III
and a unique   $ k\in[0,1) $   such that
 the pair  $ (\widehat A, B)$  $($triple
$ (\dot A, \widehat A, B))$ is mutually unitarily equivalent to the canonical  dissipative
 pair  $(\widehat \cP(k),\cQ)$ $($triple  $(\dot \cP,\widehat \cP(k),\cQ))$ on $\bbY$, respectively.
\end{theorem}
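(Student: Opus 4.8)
The plan is to convert the restricted Weyl relations into the infinitesimal commutation relation and then feed the result into the classification machinery of the previous sections. Rewriting Hypothesis \ref{refref} as $U_t^*V_sU_t=e^{ist}V_s=e^{i(\widehat A+tI)s}$ and comparing the generators of these two contraction semi-groups shows at once that $\Dom(\widehat A)$ is $U_t$-invariant and that $U_t^*\widehat A U_t=\widehat A+tI$ on $\Dom(\widehat A)$, which is exactly relation \eqref{order}. Fix any reference self-adjoint extension $A$ of the prime symmetric operator $\dot A=\widehat A|_{\dom(\widehat A)\cap\dom((\widehat A)^*)}$. As in the proof of Theorem \ref{lem:diss}, the relation propagates to $\dot A$ and $(\dot A)^*$, so Hypothesis \ref{muhly} holds for $\dot A$ with the group $U_t$. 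Since $\widehat A$ is assumed non-self-adjoint, I would invoke Theorem \ref{premain}(ii) (which rests on Theorem \ref{lem:diss}, Lemma \ref{key} and the uniqueness Theorem \ref{unitar}) to obtain a unitary $W:\cH\to L^2(\bbY_0)$ carrying $(\dot A,\widehat A,A)$ to a model triple $(\dot D,\widehat D,D_\Theta)$ on a graph $\bbY_0$ in one of Cases (i)--(iii), with a quantum gate coefficient $k$ and a boundary parameter $\Theta$.

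Next I would transport the group to the model: $\tilde U_t=WU_tW^{-1}$ is a strongly continuous unitary group on $L^2(\bbY_0)$ obeying $\tilde U_t^*\widehat D\tilde U_t=\widehat D+tI$, and by Remark \ref{remdisd} the multiplication group $e^{-it\cQ_0}$ obeys the same relation. Here the exceptional dichotomy of Theorem \ref{maininv} enters. If $\widehat D$ has a regular point in $\bbC_+$ (Cases (ii), (iii), or Case (i) with $k>0$), Theorem \ref{maininv} forces $\tilde U_t=e^{it\mu}e^{-it\cQ_0}=e^{it(\mu I-\cQ_0)}$ for a single $\mu\in\bbR$, so $B$ is unitarily equivalent to $\mu I-\cQ_0$. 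If instead $\widehat D$ is in the exceptional Case (i) with $k=0$ (point spectrum filling $\bbC_+$), Theorem \ref{maininv} only determines $\tilde U_t$ up to independent characters $e^{it\mu}$ and $e^{it\nu}$ on the two reducing half-line subspaces $\cH_\pm=L^2(\bbR_\pm)$; this extra freedom is precisely what is encoded by the two distinct vertices of a Case I$^*$ graph.

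It remains to recognize $(\widehat D,\mu I-\cQ_0)$ (and its exceptional analogue) as a canonical pair in the sense of Definition \ref{canonical}. Applying the affine reflection $(Rf)(x)=f(\mu-x)$, which carries $\mu I-\cQ_0$ to the honest position operator $\cQ$ on the reflected graph $\bbY=\mu-\bbY_0$ and sends $i\frac{d}{dx}$ to $-i\frac{d}{dx}$, and then using the orientation-reversal equivalence of Lemma \ref{clon} together with the gauge invariance recorded in Remark \ref{remgend}, one identifies the resulting pair with $(\widehat\cP(k),\cQ)$ on a graph $\bbY$ lying in exactly one of the Cases I$^*$, I, II, III: model Case (i) with $k>0$ goes to Case I, Case (ii) to Case II, Case (iii) to Case III, and the exceptional Case (i) with $k=0$ to Case I$^*$ with $\nu\ne\mu$ in general. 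The spectrum of $\cQ\cong B$ then places the vertices $\mu$ (and, in the exceptional case, $\nu$), which completes the existence part.

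For uniqueness I would appeal directly to Lemma \ref{unickmu}: the spectrum of $\cQ\cong B$ fixes the vertices and the case, while $k$ is recovered from the absolute value $|S(i)|$ of the von Neumann parameter (a unitary invariant by Remark \ref{mnogopros}), corrected by the factor $e^{\ell}$ in Case III through $\ell=\nu-\mu$. The step I expect to be most delicate is the orientation bookkeeping in the third paragraph: the reflection needed to turn $B\cong\mu I-\cQ_0$ into the genuine position operator reverses the sign of the momentum, so one must invoke Lemma \ref{clon} to return to $\widehat\cP=i\frac{d}{dx}$ while checking that the reflected vertex conditions land in the admissible canonical list. Equally delicate is the exceptional Case I$^*$, where the failure of Theorem \ref{maininv} to pin down a single character is exactly what allows, and forces, the two vertices $\mu$ and $\nu$ to be chosen independently; one must verify that every such disconnected configuration genuinely arises and that no spurious identifications between distinct $(\bbY,k)$ occur.
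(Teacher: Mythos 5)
Your overall route---passing to the infinitesimal relation \eqref{comrel}, invoking Theorem \ref{premain} (ii), then Theorem \ref{maininv}, and finally Lemma \ref{unickmu} for uniqueness---is the same as the paper's, and your treatment of the exceptional case (point spectrum filling $\bbC_+$, two independent characters, Case I$^*$) also matches the paper. The genuine gap is in your third paragraph. Having (correctly, for the literal form \eqref{resthyp} of Hypothesis \ref{refref}) concluded that $B$ is carried to $\mu I-\cQ_0$, you try to repair the sign by the reflection $R$ followed by Lemma \ref{clon} and gauge invariance. That repair cannot work. The reflection turns the pair $(\widehat D,\,\mu I-\cQ_0)$ into a pair $\bigl(-(\widehat D)^*,\,\cQ\bigr)$, and you then need a single unitary $V$ satisfying simultaneously $V\bigl(-(\widehat D)^*\bigr)V^{-1}=\widehat \cP(k)$ and $V\cQ V^{-1}=\cQ$. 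Lemma \ref{clon} supplies a unitary doing the first of these---it is a statement about triples of momentum-type operators only---but it says nothing about the position operator, and in fact no unitary can do both: conjugation gives $e^{-it\cQ}\bigl(-(\widehat D)^*\bigr)e^{it\cQ}=-(\widehat D)^*+tI$, while $e^{-it\cQ}\,\widehat \cP(k)\,e^{it\cQ}=\widehat \cP(k)-tI$, so the existence of such a $V$ would force $\widehat \cP(k)+tI=\widehat \cP(k)-tI$. The sign of $t$ in the Weyl relation linking the two members of a pair is a unitary invariant of the pair; no combination of orientation reversal and gauge transformations can flip it while keeping $\cQ$ fixed.

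What your attempt actually exposes is a sign discrepancy inside the paper itself: Hypothesis \ref{refref} is written as $V_sU_t=e^{ist}U_tV_s$, whereas \eqref{resweyl} and the paper's own step \eqref{qqwwqq} correspond to $U_tV_s=e^{ist}V_sU_t$. With the latter form one obtains $B\cong\cQ_0-\mu I$, and the identification with the canonical pair $(\widehat \cP(k),\cQ)$ requires only a translation of the graph---which is exactly what the paper does; no reflection is ever needed. Under the hypothesis taken literally, as you took it, the correct conclusion would be $(\widehat A,B)\cong(\widehat \cP(k),-\cQ)$, and the theorem must be read with the relation in the form \eqref{resweyl}. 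So your application of Theorem \ref{maininv} is the right one for the hypothesis as printed, but your ``orientation bookkeeping'' is not a proof: it is precisely the point where the statement fails under the literal hypothesis, and Lemma \ref{clon} cannot be promoted from an equivalence of triples to an equivalence of pairs containing the position operator.
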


\begin{proof}  By the hypothesis the restricted Weyl relations \eqref{resthyp} hold. Therefore (see \cite{EM,Si})
\begin{equation}\label{comrel}
U_t^* \widehat AU_t=\widehat A+tI\quad \text{on }\quad \Dom ( \widehat A), \quad t\in \bbR.
\end{equation}

As in the proof of Theorem \ref{lem:diss} one shows that symmetric operator $\dot A$ solves the commutation relations
\begin{equation}\label{nonnadot}
U_t^*\dot AU_t=\dot A+tI\quad \text{on}\quad \Dom (\dot A).
\end{equation}
Therefore, the operator $\dot A$ satisfies Hypothesis \ref{muhly}. In this situation  one
 can apply Theorem \ref{premain} (ii) to see that  there is  a metric graph $\bbY_0$ in one of the Cases (i)-(iii)
with the quantum  gate coefficient $k$ such that
the dissipative operator
$\widehat  A$ is unitarily equivalent to  the one of the following model dissipative differentiation operators:
\begin{itemize}
\item[(i)] $\widehat D=\widehat D_I(k)=i\frac{d}{dx}$ on $\bbY_0=(-\infty, 0) \sqcup(0, \infty) $
with the boundary condition
 \begin{equation}\label{domdomi8}
f_\infty (0+)=kf_\infty (0-),\quad 0\le k<1;
 \end{equation}

\item[] or
$$
$$
\item[(ii)]  $\widehat D=\widehat D_{II}(0,\ell)=i\frac{d}{dx}$ on $\bbY_0=(0,\ell)$ with the boundary condition
\begin{equation}\label{domdomii8}
 f_\ell(0)=0;
 \end{equation}

\item[] or
$$
$$
\item[(iii)] $\widehat D=\widehat D_{III}(k,\ell)=i\frac{d}{dx}$  on $\bbY_0=(-\infty, 0) \sqcup(0, \infty)\sqcup (0,\ell) $
with the boundary conditions
\begin{equation}\label{domdomiii8}
\begin{pmatrix}
f_\infty(0+)\\
f_\ell(0)
\end{pmatrix}=\begin{pmatrix}
k& 0\\
 \sqrt{1-k^2}&0
\end{pmatrix}
 \begin{pmatrix}
f_\infty(0-)\\
f_\ell(\ell)
\end{pmatrix}, \quad 0< k<1.
\end{equation}
\end{itemize}
  That is, there exists a unitary map $\cW$ from $\cH$ onto  $L^2(\bbY_0)$ such that
  \begin{equation}\label{qqww}
  \cW\widehat A \cW^{-1}=\widehat D.
  \end{equation}
In particular, from \eqref{comrel} it follows that
  \begin{equation}\label{comrel*}
W_t^* \widehat DW_t=\widehat D+tI\quad \text{on }\quad \Dom ( \widehat D), \quad t\in \bbR,
\end{equation}
where $W_t$ is the unitary group on $L^2(\bbY)$ given by
$$
W_t=\cW U_t \cW^{-1}, \quad t\in \bbR.
$$
On the other hand, $$
e^{it\cQ_0}\widehat D e^{-it\cQ_0}=\widehat D+tI\quad \text{on } \quad \Dom ( \widehat D), \quad t\in \bbR,
 $$
 where $\cQ_0$ is the  operator of multiplication by independent variable on the graph $\bbY_0$.

 Applying  Theorem   \ref{maininv} to the dissipative operator $\widehat D$, one obtains
  \begin{equation}\label{qqwwqq}
W_t=\cW e^{iBt} \cW^{-1}=e^{-i\mu t}e^{i\cQ_0 t} \quad \text{for some}\quad \mu \in \bbR,
\end{equation}
whenever  $\widehat D$, and therefore $\widehat A$,  has a  regular point in the upper half-plane.

In this case,  combining \eqref{qqww} and  \eqref{qqwwqq} one concludes that
 the  pair
$ ( \widehat A, B)$ is mutually unitarily equivalent to the pair
  $(\widehat D,\cQ-\mu I )$ on the graph $\bbY_0$   (with the quantum gate  coefficient $k$).
   The pair
    $(\widehat D,\cQ-\mu I )$ is  in turn
mutually unitarily equivalent to the canonical dissipative pair $(\widehat \cP(k),\cQ)$ with the quantum  gate coefficient $k$ on the metric graph $\bbY$  centered at $\mu$.
Notice that $\bbY$
 can be obtained from   the graph  $\bbY_0$ by a shift.

If the dissipative operator $\widehat A$ has no regular points in the upper half-plane, Theorem   \ref{premain} asserts that
 $\widehat A$ is unitarily equivalent   the model differentiation operator $\widehat D=\widehat D_I(0)$ on the graph $\bbY_0=(-\infty, 0) \sqcup(0, \infty) $
 in Case (i) with quantum  gate coefficient $k=0$.

The same reasoning as above shows that  in this exceptional case,
 the  pair
$ ( \widehat A, B)$ is mutually unitarily equivalent to the pair
  $(\widehat D_I(0),\cQ-\mu_-\cR_--\mu_+\cR_+ )$  on the graph $\bbY_0=(-\infty, 0) \sqcup(0, \infty)$ for some  $\mu_\pm \in \bbR$.
 Here   $\cR_-$ and $\cR_+$  are  the orthogonal projections
in $$L^2(\bbY_0)=L^2((-\infty, 0)) \oplus L^2((0,\infty))$$
 onto the subspace  $L^2((-\infty, 0)) $ and  $L^2((0,\infty))$,
 respectively.

 If $\mu_+=\mu_-=\mu$,  the pair  $(\widehat D_I(0),\cQ-\mu_-\cR_--\mu_+\cR_+ )=(\widehat D_I(0),\cQ-\mu I )$  is mutually unitarily equivalent the canonical dissipative pair $(\widehat \cP(0), \cQ)$
on the graph $\bbY=(-\infty, \mu) \sqcup(\mu, \infty) $ in Case I with quantum gate coefficient $k=0$.

If  $\mu_+\ne \mu_-$,   then  the pair  $(\widehat D_I(0),\cQ-\mu_-\cR_--\mu_+\cR_+ )$ on the metric graph
$
\bbY_0=(-\infty, 0) \sqcup(0, \infty)$ (in Case(i)) is mutually unitarily equivalent  to the canonical dissipative  pair $(\widehat \cP, \cQ)$
on the graph $\bbY=(-\infty, \mu_-) \sqcup(\mu_+, \infty) $ in Case  I$^*$.

 The uniqueness part of the statement  is an immediate consequence of  Lemma \ref{unickmu}.

\end{proof}

\begin{remark} \label{sutkin}
 If in addition to the hypotheses of Theorem \ref{vNvN} one assumes that  $A$ is a self-adjoint (reference)  extension of $\dot A$, then we immediately get that there exists a self-adjoint  extension  $\cP$ of $\dot \cP$
 such that
 the quadruple
$(\dot A, \widehat A, A, B)$ is mutually unitarily equivalent to the  quadruple $(\dot \cP, \widehat \cP(k) ,\cP, \cQ) $ on  the metric graph $\bbY$ in Cases I$^*$, I--III with  the quantum gate coefficient $k\ne 0$ for some $k\in [0, 1)$. The extension $\cP $ is determined by the quadruple
$(\dot A, \widehat A, A, B)$ uniquely unless the graph $\bbY$  is in Case $I^*$ or in Case  $I$ with
 the quantum gate coefficient $k= 0$.
\end{remark}

With a  minor modification,  the result of Theorem \ref{vNvN} extends  to the case where the generator $\widehat A$ is self-adjoint.

\begin{theorem}\label{vNvNad} Assume Hypothesis \ref{refref}.
Supose, in addition, that  $\widehat A=A$  is self-adjoint and that   $\dot A $ is  a prime symmetric restriction  of $\widehat A $ with deficiency indices $(1,1)$ such that
\begin{equation}\label{addnucomnu}
U_t^*\dot AU_t=\dot A+tI\quad \text{on}\quad \Dom (\dot A).
\end{equation}

Then there exists a unique metric  graph $\bbY=(-\infty, \mu)\sqcup(\mu,\infty)$ $($in Case $I)$ centered at  $\mu\in \bbR$
such that the  triple
$ (\dot A,A, B)$ is mutually unitarily equivalent to the Weyl canonical triple $(\dot \cP, \cP, \cQ)$ on $\bbY$ $($see Definition \ref{canwey}$)$.
\end{theorem}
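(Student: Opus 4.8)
The plan is to treat this as the self-adjoint (Stone--von Neumann) specialization of Theorem~\ref{vNvN}, running the argument parallel to that proof but through the self-adjoint branch of the uniqueness machinery. First I would observe that $\dot A$ falls under Hypothesis~\ref{muhly}: it is a prime symmetric operator with deficiency indices $(1,1)$, its domain is $U_t$-invariant, and the commutation relation \eqref{addnucomnu} holds. Moreover, since $\widehat A=A$ is self-adjoint, the semigroup $V_s=e^{iAs}$ is in fact a unitary group, so differentiating the restricted Weyl relation \eqref{resthyp} at $s=0$ yields $U_t^*AU_t=A+tI$ on $\Dom(A)$. Thus $A$ is a self-adjoint extension of $\dot A$ solving the same commutation relations, and Theorem~\ref{sacr} gives $M_{(\dot A,A)}(z)\equiv i$, equivalently $s_{(\dot A,A)}(z)\equiv 0$ on $\bbC_+$.

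Next, since $\dot A$ is prime and its Liv\v{s}ic function relative to $A$ vanishes identically, I would invoke the uniqueness Theorem~\ref{unitar} (equivalently Corollary~\ref{muhly1}) to conclude that the pair $(\dot A,A)$ is mutually unitarily equivalent to the pair $(\dot\cP,\cP)$ of Definition~\ref{canwey} centered at the origin, where $\dot\cP=i\frac{d}{dx}$ carries Dirichlet conditions at $0$ and $\cP$ is its continuity (full-line) extension; here I use that every self-adjoint extension in Case~(i) has vanishing Liv\v{s}ic function, so the given reference extension $A$ may be matched with the continuity extension $\cP$. Let $\cW$ denote a unitary realizing this equivalence and set $W_t=\cW U_t\cW^{-1}$. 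Then $W_t^*\dot\cP W_t=\dot\cP+tI$ and $W_t^*\cP W_t=\cP+tI$, while the canonical position group $e^{-it\cQ}$ satisfies the same two relations by Remarks~\ref{remdotd} and~\ref{remsamd}. Because $\cP$ is self-adjoint, and therefore has regular points off the real axis, the uniqueness Theorem~\ref{maininv} applies and forces $W_t$ to coincide with $e^{-it\cQ}$ up to a character $e^{it\mu}$, $\mu\in\bbR$. Absorbing this character by recentering the graph at $\mu$ exhibits $(\dot A,A,B)$ as mutually unitarily equivalent to the Weyl canonical triple $(\dot\cP,\cP,\cQ)$ on $\bbY=(-\infty,\mu)\sqcup(\mu,\infty)$.

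For uniqueness of $\bbY$ I would argue as in Lemma~\ref{unickmu}: if two Weyl canonical triples centered at $\mu$ and $\mu'$ were equivalent via a unitary $U$, then $U$ intertwines the two position operators, which have simple Lebesgue spectrum on $\bbR$, so $U$ is multiplication by a unimodular function; such an operator cannot carry the Dirichlet point $\mu'$ of $\dot\cP'$ onto a different point $\mu$, whence $\mu=\mu'$. This pins $\mu$ down as a genuine unitary invariant of the triple.

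The main obstacle I anticipate is conceptual rather than computational. Although $\cP$ is merely the momentum operator on $\bbR$ and is insensitive to the marked point, the triple remembers $\mu$ through the symmetric operator $\dot\cP$ and its coupling to the generator $B$, so the center really is a unitary invariant even though the self-adjoint pair $(\cP,\cQ)$ alone is not. The delicate point is therefore the correct invocation of the self-adjoint branch of Theorem~\ref{maininv} (verifying that \eqref{addnucomnu} transports so that both $W_t$ and $e^{-it\cQ}$ satisfy the commutation relations with \emph{both} $\dot\cP$ and $\cP$), which is what pins $B$ down up to the character $e^{it\mu}$, together with checking that the equivalence of the pair $(\dot A,A)$ can be chosen compatibly with the position group.
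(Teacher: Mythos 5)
Your proposal is correct and follows essentially the same route as the paper's proof: derive the commutation relation for $A$ from the (now unrestricted) Weyl relations, apply Theorem~\ref{sacr} to get $M_{(\dot A,A)}\equiv i$, use primeness and the uniqueness Theorem~\ref{unitar} to identify $(\dot A,A)$ with $(\dot\cP,\cP)$, transport the commutation relations and invoke the self-adjoint branch of Theorem~\ref{maininv} to pin down $W_t=e^{it(\cQ-\mu I)}$, recenter the graph at $\mu$, and settle uniqueness by a commutant argument. The only cosmetic difference is in the uniqueness step, where you pass through multiplication operators via the simple spectrum of $\cQ$ while the paper concludes directly that the intertwining unitary is scalar from its commuting with both $\cP$ and $\cQ$; these are the same argument in substance.
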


\begin{proof} From the hypothesis it follows that in fact (unrestricted)   Weyl commutation relations   $$
V_sU_t=e^{ist}U_tV_s, \quad t\in \bbR, \quad s\in \bbR,
$$
hold
and therefore
\begin{equation}\label{nucomnu}
U_t^* AU_t=A+tI\quad \text{on }\quad \Dom ( A), \quad t\in \bbR.
\end{equation}

By Theorem \ref{sacr},
 the Weyl-Titchmarsh function associated with the pair $(\dot A , A)$
  has the form
$$
M(z)=i, \quad z\in \bbC_+.
$$
So does  the Weyl-Titchmarsh function associated with the pair $(\dot \cP, \cP)$  on the graph $\bbY_0=(-\infty, 0)\sqcup(0,\infty)$ in Case (i).

Since $\dot A$ and    $\dot \cP$ are prime operators, the pair
 $(\dot A ,  A)$ is   mutually unitarily equivalent to the pair $(\dot \cP, \cP)$ on the graph $\bbY_0$.
  That is,
there exists
a unitary operator $\cW:\cH\to \L^2(\bbR)$ such that
$$
\cW A \cW^{-1}=\cP \quad
 \text{ and }
\quad
\cW\dot A\cW^{-1}=\dot  \cP.
$$
From \eqref{addnucomnu} and \eqref{nucomnu} it follows that
 $$
W_t^* \cP W_t=\cP +tI\quad \text{on }\quad \Dom ( A)
$$
and
 $$
W_t^* \dot \cP W_t=\dot \cP +tI\quad \text{on}\quad \Dom (\dot A),
$$
where
$$
W_t=\cW U_t^* \cW^{-1}.
$$

By the definition of the Weyl canonical triple  $(\dot\cP,\cP,\cQ)$,
the commutation relations
$$
e^{it\cQ}\cP e^{-it\cQ}=\cP+tI \quad \text{ and }
\quad e^{it\cQ}\dot \cP e^{-it\cQ}=\dot \cP+tI
$$
hold. Now one can apply Theorem  \ref{maininv} to see that   there exists a   $\mu \in \bbR$ such
that
$$
W_t=\cW U_t\cW^{-1}=e^{-i\mu t}e^{it\cQ}= e^{it(\cQ-\mu I)}.
$$

Therefore,  the triple $(\dot A, A, B)$ is mutually unitarily equivalent to the triple
$(\dot \cP, \cP, \cQ-\mu I)$ on the metric graph $\bbY_0$.

In turn,  the triple  $(\dot \cP, \cP, \cQ-\mu I)$   is  mutually unitarily equivalent to the  Weyl canonical triple
$(\dot \cP_\mu, \cP, \cQ)$ on the metric graph $\bbY_\mu=(-\infty, \mu)\sqcup(\mu,\infty)$ in Case I.
(Recall that $
\dom(\dot \cP_\mu)=\{f\in W_2^1(\bbR)\,|\, f(\mu)=0\}
$.)

To complete the proof of the theorem  it remains to show that
the Weyl triples  $(\dot \cP_\mu, \cP, Q)$   and $(\dot \cP_{\mu'}, \cP, Q)$ on the graphs  $\bbY_\mu$ and $\bbY_{\mu'}$,  respectively, are not mutually unitarily equivalent unless $\mu=\mu'$. Indeed, assume that they are. Denote by $\cU$ the unitary operator that establishes the mentioned mutual unitary equivalence.
Since $\cU$ commutes with $\cP$ and $Q$, the operator  $\cU$
is a (unimodular) multiple of the identity. Therefore, $\cU^*\dot \cP_\mu U=\dot \cP_{\mu'}$ implies $\dot \cP_\mu =\dot \cP_{\mu'}$ and hence $\mu=\mu'$, a contradiction.

\end{proof}

\begin{remark}\label{independeny}
Comparing the assumptions of Theorems \ref{vNvN} and    \ref{vNvNad} it is clearly seen that the main difference is that in Theorem   \ref{vNvNad}
one has to require the commutation relation for the symmetric operator $\dot A$, while in the case of Theorem \ref{vNvN} the corresponding relations hold automatically.
As we have already mentioned in the Introduction, the existence of a symmetric operator $\dot A$ with the required properties
in the hypothesis of Theorem \ref{vNvNad}
follows from the Stone-von Neumann uniqueness result.

Indeed, let  $(\cP,\cQ)$ be the canonical pair with $\cP=i\frac{d}{dx}$ and $\cQ$  the operator of multiplication  by independent variable in $L^2(\bbR)$.

Suppose that
 $\cW:\cH\to \L^2(\bbR)$ is a unitary operator such that
\begin{equation}\label{uneq2}
\cW A \cW^{-1}=\cP \quad
 \text{ and }
\quad
\cW B\cW^{-1}=  \cQ.
\end{equation}

Then  the symmetric restriction $\dot \cP_\mu$ of $\cP$ on
$$
\dom(\dot \cP_\mu)=\{f\in W_2^1(\bbR)\,|\, f(\mu)=0\}
$$
has deficiency indices $(1,1)$  and  satisfies the commutation relations
$$
e^{it\cQ}\dot \cP_\mu e^{-it\cQ}=\dot \cP_\mu+tI.
$$It remains to choose
\begin{equation}\label{perlo}
\dot A= \cW^{-1} \dot P_\mu \cW
\end{equation}
and the existence of a restriction   with the required properties follows.

From the uniqueness part of Theorem \ref{vNvNad} it also follows that
if   a closed symmetric restriction $\dot \cP$ of  $\cP$  (with deficiency indices $(1,1)$)
satisfies commutation relations
$$
e^{it\cQ}\dot \cP e^{-it\cQ}=\dot \cP+tI,
$$ then $\dot \cP=\dot \cP_\mu$ for some $\mu\in \bbR$ (cf. \cite{Jorg1}).
 In this sense as far as the unitary equivalence \eqref{uneq2} is established (based on the Stone-von Neumann uniqueness result),  the choice of $\dot A$ via  \eqref{perlo}
 in the hypothesis of Theorem \ref{vNvNad} is canonical.

Notice that as long as the existence of  such a restriction is established/required
the reasoning  above  can be considered
an independent  proof of the Stone-von Neumann uniqueness result.

\end{remark}

The following corollary can be considered an {\it important extension of the Stone-von Neumann uniqueness theorem}.

 \begin{corollary}\label{bell}
{\it Suppose that strongly continuous  groups of unitary  operators
$ V_t =e^{i A  t}$ and  $U_t =e^{iBt}$ in the Hilbert space $\cH$
solve    the  Weyl commutation relations   $$
V_sU_t=e^{ist}U_tV_s, \quad s,  t\in \bbR.
$$
Assume  that the  self-adjoint operator $A$ has simple spectrum. Without loss of generality
suppose that $\dot A$ is a closed symmetric restriction of $A$ with deficiency indices $(1,1)$
such that
\begin{equation}\label{nucomnudot}
U_t^* \dot AU_t=\dot A+tI\quad \text{on }\quad \Dom ( \dot A), \quad t\in \bbR.
\end{equation}

If $A'$ is any other  self-adjoint extension of $\dot A$,
then  the  Weyl commutation relations
\begin{equation}\label{nui0}
V_s'U_t=e^{ist}U_tV_s', \quad\text{with }\quad V_s'=e^{isA'},\quad  s,  t\in \bbR,
\end{equation}
hold.}
 \end{corollary}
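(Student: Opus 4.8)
The plan is to transport the whole question into the canonical $(\cP,\cQ)$-model on $L^2(\bbR)$, where the assertion reduces to the elementary fact that, in Case~(i), the multiplication group leaves invariant the domain of \emph{every} self-adjoint extension of $\dot\cP$. First I would apply the Stone-von Neumann uniqueness theorem: because $A$ has simple spectrum, the representation of the Weyl relations furnished by $V_s=e^{iAs}$ and $U_t=e^{iBt}$ is irreducible (a single copy of the canonical system), so there is a unitary $\cW\colon\cH\to L^2(\bbR)$ with $\cW A\cW^{-1}=\cP=i\frac{d}{dx}$ and $\cW U_t\cW^{-1}=e^{-it\cQ}$, the orientation of $\cQ$ being fixed so that the infinitesimal normalization $U_t^*XU_t=X+tI$ is carried to $e^{it\cQ}Xe^{-it\cQ}=X+tI$ (routine sign bookkeeping).

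Next I would locate $\dot A$ inside the model. Under $\cW$ the operator $\dot\cP:=\cW\dot A\cW^{-1}$ is a closed symmetric restriction of $\cP$ with deficiency indices $(1,1)$ satisfying $e^{it\cQ}\dot\cP e^{-it\cQ}=\dot\cP+tI$; by the J\o rgensen-type uniqueness recorded in Remark~\ref{independeny} this forces $\dot\cP=\dot\cP_\mu$ for some $\mu\in\bbR$, the restriction with domain $\{f\in W_2^1(\bbR)\mid f(\mu)=0\}$. This is precisely the symmetric differentiation on the metric graph $\bbY=(-\infty,\mu)\sqcup(\mu,\infty)$ of Case~(i), which is prime by Lemma~\ref{primeD}; hence $\dot A$ itself is prime and Theorem~\ref{vNvNad} indeed applies, identifying $(\dot A,A,B)$ with the Weyl canonical triple $(\dot\cP,\cP,\cQ)$ of Definition~\ref{canwey}. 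The payoff is that an arbitrary self-adjoint extension $A'$ of $\dot A$ is carried by $\cW$ to a self-adjoint extension $\cP'=\cW A'\cW^{-1}$ of $\dot\cP_\mu$, and by Theorem~\ref{gengen} (Case~(i), centered at $\mu$) every such $\cP'$ equals $D_\Theta$ for a unique $\Theta$ with $|\Theta|=1$, i.e.\ satisfies the local vertex condition $f(\mu+)=-\Theta f(\mu-)$.

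The decisive step is then to invoke Remark~\ref{remsamd}: in Case~(i) this boundary condition is gauge invariant, so the flow $\Theta\mapsto\Theta_t$ induced by $D_\Theta\mapsto e^{it\cQ}D_\Theta e^{-it\cQ}$ is trivial ($\Theta_t\equiv\Theta$) and every $\Theta$ is a fixed point. Therefore $e^{it\cQ}D_\Theta e^{-it\cQ}=D_\Theta+tI$ for all $|\Theta|=1$ and $t\in\bbR$. Pulling this relation back through $\cW$ gives $U_t^*A'U_t=A'+tI$ on $\Dom(A')$, which, as recalled in the Introduction (see \cite{EM,Si}), is exactly the infinitesimal form equivalent to the Weyl relation $V_s'U_t=e^{ist}U_tV_s'$ with $V_s'=e^{isA'}$; this establishes \eqref{nui0}.

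The crux---more conceptual than computational---is recognizing that the self-adjointness of $\widehat A=A$ forces the model to be Case~(i). By Theorem~\ref{sacr} the Liv\v{s}ic function of $(\dot A,A)$ vanishes identically, and this is the exact signature of Case~(i); it is only here that the \emph{full} one-parameter group $e^{it\cQ}$, rather than merely the discrete subgroup $\tfrac{2\pi}{\ell}\bbZ$ of Cases~(ii) and~(iii), preserves $\Dom(D_\Theta)$ for every extension parameter. In Cases~(ii) or~(iii) the conclusion would fail for generic $A'$, so the argument must isolate Case~(i) before exploiting gauge invariance; apart from this, the only delicate point is the consistent choice of sign/orientation when passing between $V_sU_t=e^{ist}U_tV_s$ and its chosen infinitesimal normalization.
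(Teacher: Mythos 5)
Your proposal is correct and follows essentially the same route as the paper's own proof: identify $(\dot A,A,B)$ with the Weyl canonical triple on $\bbY=(-\infty,\mu)\sqcup(\mu,\infty)$ via Theorem \ref{vNvNad}, describe every self-adjoint extension of $\dot\cP_\mu$ as a $D_\Theta$ with $|\Theta|=1$, invoke the Case~(i) relation $e^{it\cQ}D_\Theta e^{-it\cQ}=D_\Theta+tI$ (the paper's eq.\ \eqref{nui1}, i.e.\ Remark \ref{remsamd}), and pull back to obtain \eqref{nui0}. The only difference is cosmetic: you route through the classical Stone--von Neumann theorem and Remark \ref{independeny} to pin down $\dot\cP=\dot\cP_\mu$ and justify the primeness hypothesis of Theorem \ref{vNvNad} explicitly, a detail the paper's proof leaves implicit in its ``without loss of generality'' normalization.
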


\begin{proof}
By Theorem \ref{vNvNad},
there exists a unique metric  graph $\bbY=(-\infty, \mu)\sqcup(\mu,\infty)$ such that the  triple
$ (\dot A,A, B)$ is mutually unitarily equivalent to the Weyl canonical triple $(\dot \cP, \cP, \cQ)$ on $\bbY$,  so that
$$
\dot A=\cU \dot \cP\cU^{-1},
\quad
A= \cU \cP\cU^{-1}
\quad \text{and} \quad
B=\cU \cQ\cU^{-1}
$$ for some unitary map $\cU$ from $L^2(\bbY)$ onto $\cH$.  Let $A'$ be a self-adjoint extension of $\dot A$. Therefore, $\cP'=\cU^{-1}A'\cU$ is a self-adjoint
extension of  $\dot \cP$ on
$$
\Dom(\cP')=\{ f\in W_2^1((-\infty ,\mu))\oplus W_2^1((\mu, \infty))\, |\, f(\mu_-)=\Theta f(\mu+)\}
$$
for some $|\Theta|=1$.

We have
\begin{equation}\label{nui1}
e^{it\cQ}  \cP'e^{-it\cQ} =\cP'+tI\quad \text{on }\quad \Dom ( \cP'), \quad t\in \bbR,
\end{equation}
and therefore
\begin{equation}\label{nui2}
U_t^*  A'U_t=\dot A'+tI\quad \text{on }\quad \Dom (A'), \quad t\in \bbR,
\end{equation}
which in turn implies \eqref{nui0}.
\end{proof}

\begin{remark}
In the situation in question one can state more, cf. Remark \ref{sutkin}.
For instance,  there exists a unique $\mu \in \bbR$
and a unique $\Phi\in [0, 2\pi)$ such that the quadruples $(\dot A, A, A', B)$ and  $(\dot \cP, \cP, \cP', \cQ)$
are  mutually unitarily equivalent.

Here  $\dot \cP, \cP$ and $\cP'$ are differentiation operators in $L^2(\bbR)$
defined on
\begin{align*}
\dom (\dot \cP)&=\{f\in W_2^1(\bbR)\,|\, f(\mu)=0\},\\
\dom (\cP)&=W_2^1(\bbR),\\
\dom (\cP')&=\{f\in W_2^1((-\infty, 0))\oplus W_2^1 ((0, \infty))\, |\, f(\mu+)=e^{i\Phi}f(\mu-)\},
\end{align*}
respectively, and $\cQ$ is the operator  of multiplication  by  independent variable in  $L^2(\bbR)$.

More generally, if $\widehat  A$ is a maximal dissipative extension of $\dot A$, then the restricted Weyl commutation relations
$$
\widehat V_sU_t=e^{ist}U_t\widehat V_s, \quad\text{with }\quad  \widehat V_s=e^{is\widehat A},\quad    t\in \bbR,\quad  s\ge 0,
$$
hold.

In this case,  there exists a unique point
 $(\mu, \Phi, k)\in   \bbR\times  [0, 2\pi)\times [0,1)$ such that
 the quadruples $(\dot A, \widehat A, A', B)$ and  $(\dot \cP, \widehat \cP, \cP', \cQ)$ are  mutually unitarily equivalent.
Here the operators $ \dot \cP, \cP' $ and $  \cQ$ are as above and
$\widehat \cP$ is differentiation operators in $L^2(\bbR)$
 on
 $$
 \dom (\widehat \cP)=\{f\in W_2^1((-\infty, \mu))\oplus W_2^1( (\mu, \infty))\, |\, f(\mu+)=kf(\mu-)\}.
 $$
\end{remark}

To be complete, we provide the description
of the  dynamics associated with the strongly continuous  semi-group $V_s=e^{i\widehat \cP s}$ generated by the dissipative momentum operator $\widehat \cP=\widehat \cP(k)$ with the quantum gate coefficient $k\in [0, 1)$ on a metric graph $\bbY$ in  Cases I$^*$, I--III (see  Section \ref{secmodop} for a more informal description of the dynamics).



\begin{theorem}\label{cases}
Let $\widehat \cP=\widehat \cP(k)$ be the canonical dissipative momentum operator  with the quantum gate coefficient $0\le k<1$ on  a metric graph  $\bbY$ in one of the Cases I*, I-III.
Then the strongly continuous   semi-group $ \widehat V_s$ of contractions generated by $\widehat \cP(k)$ in the Hilbert space $L^2(\bbY)$ admits the following explicit description.

In Case I*, the semigroup $ \widehat V_s$ acts as the right shift on the semi-axis $[\mu, \infty)$
and as the truncated right shift on  $(-\infty, \nu]$
$$
(\widehat V_sF)_+(x)=\chi_{[s+\mu, \infty)}(x)f_+(x-s),
\quad x\in[\mu,\infty),
$$and
$$
(\widehat V_sF)_-(x)=f_-(x-s),\quad x\in (-\infty,\nu],
$$
where
$$F=(f_-, f_+)^T\in L^2(\bbY) =L^2((-\infty, \nu))\oplus\L^2((\mu, \infty)). $$

In Case I,  we have that
$$
(\widehat V_sF )(x)=(\chi_{(-\infty, \mu)}(x)+k\chi_{[ \mu, \infty)}(x))f_\infty(x-s),\quad x\in \bbR,
$$
$$F=f_\infty\in L^2(\bbY)=L^2((-\infty, \mu))\oplus\L^2((\mu, \infty)).$$

In Case II,   the semi-group $V_s$
is  a nilpotent shift with index $\ell=\nu-\mu >0$ ($V_s=0$ for $ s\ge \ell
$)
$$(\widehat V_sF)_\ell(x)=\chi_{[\mu+s, \nu]}(x)f_\ell(x), \quad x\in [\mu, \nu],
$$
$$
F=f_\ell\in L^2(\bbY)=L^2((\mu,\nu)).
$$

In Case III,
the action of the semi-group $\widehat V_s$  is given by
$$
(\widehat V_sF)_\infty(x)=(\chi_{(-\infty, \mu)}(x)+k\chi_{[ \mu, \infty)}(x))f_\infty(x-s),\quad x\in \bbR,
$$
$$
(\widehat V_sF)_\ell(x)=\chi_{[\mu+s, \nu]}(x)f_\ell(x-s)+\sqrt{1-k^2}f_\infty(x-s), \quad x\in [\mu, \nu],
$$
where
$$F=(f_\infty, f_\ell)^T\in L^2(\bbY)=L^2((-\infty, \mu))\oplus L^2((\mu, \infty))\oplus L^2((\mu, \nu))$$ and  $\ell=\nu-\mu$.
\end{theorem}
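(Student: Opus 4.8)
The plan is to identify $\widehat V_s=e^{i\widehat\cP(k)s}$ by producing an explicit candidate family and then invoking the uniqueness of the contraction semigroup generated by a maximal dissipative operator. Since $\widehat\cP(k)$ is maximal dissipative (Definition \ref{canonical}), the operator $i\widehat\cP(k)$ satisfies the Lumer--Phillips dissipativity condition and hence generates a \emph{unique} strongly continuous semigroup of contractions. It therefore suffices to exhibit operators $\widehat V_s$ given by the stated formulas, to check that $\{\widehat V_s\}_{s\ge 0}$ is a strongly continuous semigroup of contractions on $L^2(\bbY)$, and to verify that its infinitesimal generator coincides with $i\widehat\cP(k)$, in particular that the domain of the generator is exactly $\Dom(\widehat\cP(k))$.

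The structural source of the formulas is the observation that the abstract Cauchy problem $\frac{d}{ds}u=i\widehat\cP u$ reads, on each edge, as the transport equation $\partial_s u=-\partial_x u$, whose solutions propagate by right translation along characteristics; the vertex conditions prescribe how a characteristic continues once it reaches the centre $\mu$. First I would dispose of the three elementary cases, where the propagation is transparent. In Case I$^*$ the two edges decouple ($k=0$): on $(-\infty,\nu)$ there is no vertex condition, giving the pure (isometric) right shift, while on $(\mu,\infty)$ the Dirichlet condition $f_+(\mu+)=0$ forces the truncated right shift, i.e. a shift inserting a null interval of length $s$ at the left end. In Case II the single Dirichlet condition at $\mu$ yields the nilpotent right shift: the profile is advected towards $\nu$, exits through the right endpoint, and $\widehat V_s=0$ once $s\ge\ell=\nu-\mu$. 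In Case I the characteristic crossing $\mu$ has its amplitude rescaled by the gate factor $k$, which is precisely the jump relation $f_\infty(\mu+)=k f_\infty(\mu-)$ built into $\Dom(\widehat\cP)$; a one-line change of variables shows $\|\widehat V_sF\|\le\|F\|$, the deficit measuring the mass lost at the gate.

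The main work, and the step I expect to be the real obstacle, is Case III, where the incoming edge $(-\infty,\mu)$ simultaneously feeds the outgoing edge $(\mu,\infty)$ with weight $k$ and the appendix $(\mu,\nu)$ with weight $\sqrt{1-k^2}$, according to the coupling conditions \eqref{bcond*}. Three points must be checked. First, the semigroup law $\widehat V_s\widehat V_{s'}=\widehat V_{s+s'}$, including the cross term $\sqrt{1-k^2}\,f_\infty(\,\cdot-s)$ that transfers mass from the axis into the appendix; this requires bookkeeping the characteristics according to whether they have already passed $\mu$, so that the gate weight is applied exactly once and only at the crossing. Second, contractivity: the quantum Kirchhoff rule $|f_\infty(\mu+)|^2+|f_\ell(\mu+)|^2=|f_\infty(\mu-)|^2$ guarantees that no mass is manufactured at the junction, while mass genuinely leaves the system through the endpoint $\nu$, so that $\{\widehat V_s\}$ is a proper contraction semigroup. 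Third, and most delicate, the identification of the generator's domain with $\Dom(\widehat\cP(k))$: one computes $\lim_{s\to 0^+}s^{-1}(\widehat V_sF-F)=-F'=i\widehat\cP F$ on a core of smooth functions obeying \eqref{bcond*}, and must then argue that the precise boundary behaviour of $\widehat V_sF$ at $\mu$ reproduces \eqref{bcond*} and no other extension condition.

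An alternative and perhaps cleaner route to Case III, which I would keep in reserve, is to note that after centring the graph at $\mu$ the canonical operator is unitarily equivalent to the model operator $\widehat D_{III}(k,\ell)$ of Section \ref{secmodop}, which by Theorem \ref{opcup} is the operator coupling of $\widehat D_I(k)$ and $\widehat D_{II}(0,\ell)$; the Case III dynamics then assembles from the already-established Case I and Case II shifts through the coupling. In either approach, once the candidate family is shown to be a strongly continuous contraction semigroup with generator $i\widehat\cP(k)$, uniqueness of the generated semigroup completes the proof, and the passage from the centred model operators to the stated formulas on $\bbY$ is the harmless shift by the centre $\mu$ (and by the two centres $\mu,\nu$ in Case I$^*$).
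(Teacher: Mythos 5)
Your proposal follows essentially the same route as the paper's proof: exhibit the stated explicit family, check it is a strongly continuous contraction semigroup, compute its generator on a core via the transport/characteristics picture, and conclude from closedness of generators together with the maximal dissipativity of $\widehat \cP(k)$, which forces the generator to equal $i\widehat\cP(k)$ (so your worry about pinning down the domain "and no other extension condition" is absorbed by maximality, exactly as in the paper). The only difference is emphasis: the paper dismisses Cases I$^*$ and II as immediate, carries out the detailed generator computation for Case I (on the core $C_0^\infty(\bbR\setminus\{0\})+\mathrm{span}\{g\}$ with an explicit $g\in\Dom(\widehat\cP)$), and disposes of Case III precisely by your "reserve" option — the operator-coupling structure of Theorem \ref{opcup} — in the remark following the proof.
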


\begin{proof}
If the metric graph $\bbY$ is in Cases I* or II, there is nothing to prove.

 We provide a complete proof  when $\bbY$ is in Case I.

 Without loss we may assume that $\mu=0$.
From the definition of the semi-group $V_s$ it follows that
\begin{equation}\label{cfi}
\lim_{s\downarrow 0} s^{-1}( \widehat V_s-I)f=-f'=i\widehat \cP f, \quad f\in C_0^\infty(\bbR\setminus \{0\}).
\end{equation}

Introduce the functions
$$
 g(x)=\begin{cases} k e^{-x},  & x\ge 0\\
 e^x,  & x<0
\end{cases}\quad \text{ and }
 \quad h(x)=\begin{cases}
k e^{-x},  & x\ge 0\\
- e^x,  & x<0
\end{cases}.
$$

Clearly, $g\in \Dom (i\widehat \cP)$ and
$
i\widehat \cP g=h.
$
It is sufficient to show that
\begin{equation}\label{cifii}
\lim_{s\downarrow 0} s^{-1}( \widehat  V_s-I)g=h=i\widehat \cP g.
\end{equation}
Indeed, \eqref{cfi} and \eqref{cifii} mean that the generator of $V_s$ restricted on
the dense linear set $\cD=C_0^\infty(\bbR\setminus \{0\})+\text{span}\{g\}$
coincides with the operator $\widehat \cP|_\cD$. Since the generator
is a closed operator and   the closure of $\widehat \cP|_\cD$ coincides with $\widehat \cP$
one proves that $\widehat \cP$ is the generator of the semi-group $ \widehat  V_s$.

Thus, it remains to prove \eqref{cifii}.

We proceed as follows.
Note that
$$
(\widehat  V_sg)(x)=
\begin{cases}
e^{x-s}, & x<0\\
k e^{x-s}, &0\le x<s\\
e^{s-x}, & x-s\ge0
\end{cases},
\quad \text{ a. e. } x\in \bbR,
$$
and hence
\begin{align*}
 \|s^{-1}&(\widehat  V_s g-g)-h \|^2=
\int_{-\infty}^0
|s^{-1}( e^{x-s}-e^{x})+e^{x}|^2 dx
\\ &
+\int_{0}^s|s^{-1}(k  e^{x-s} -k e^{x})- k e^{-x}|^2dx
\\ &
+\int_{s}^{\infty }|s^{-1}( k e^{s-x} -k e^{-x})-k e^{-x}|^2dx
\\ &
=\left (\frac{e^{-s}-1+s}{s}\right )^2\int_{-\infty}^0
e^{2x} dx
+s^{-2}k^2  \int_0^{s}e^{2x}\left  |
(e^{-s} -1- se^{-2x}\right |^2dx
\\ &
+k \left  (\frac{ e^{s} - 1-s}{s}
\right )^2\int_{s}^\infty e^{-2x}dx\to 0 \quad \text{ as } s\to 0,
\end{align*}
proving \eqref{cifii}.

The proof  is complete.

\end{proof}

\begin{remark}
Taking into account that the generator of the group $V_s$ in Case III is an operator coupling of the ones in Cases I and II, cf. Theorem  \ref{opcup}, the restriction of the dissipative dynamics in case III to its invariant subspace
$L^2((\mu, \nu))$ (with $\nu=\mu+\ell>\mu)$ gives rise to the dissipative dynamics in Case II, while the  compression  $P_{\cH} V_s|_{\cH}$ of the dissipative dynamics $V_s$  onto its  coinvariant subspace $\cH=L^2((-\infty, \mu))\oplus L^2((\mu, \infty))$ leads to the dissipative dynamics in Case I.

One can also compress the dynamics to the channel   $L^2((-\infty,\nu))\oplus L^2 ((\mu,\infty))$ to obtain the semi-group $\widehat V_s$ in Case I$^*$, provided that $\nu <\mu$.
\end{remark}

\section{Unitary dynamics on  the full graph}\label{s14}

The  metric graph $\bbY$    given by  \eqref{vaivai}
can naturally be considered a subgraph of the full metric graph  $\bbX=\bbY_\mu\sqcup\bbY_\mu$   composed of
 two identical copies of  the metric graph $ \bbY_\mu=(-\infty, \mu) \sqcup(\mu, \infty)$.

In turn, the dynamics on   the metric graph $\bbY$ can be dilated to the  unitary groups $\widehat V_s$ and $\widehat U_t$
generated by the canonical variables on $\bbX$.

To be more precise, we proceed as follows.

In the Hilbert space $L^2(\bbX)$ introduce
the unitary  group of shifts   $V_t $ that  can be recognized as the evolution  operator that maps the initial values of
the solution of
the following  first order hyperbolic system
\begin{equation}\label{system}
\partial_t\begin{pmatrix}u\\v\end{pmatrix}+\partial_x\begin{pmatrix}u\\v\end{pmatrix}=0
\end{equation}
with the boundary condition at the vertex $x=\mu$
\begin{equation}\label{systembc}
\begin{pmatrix}u\\v\end{pmatrix}(\mu+)=
\begin{pmatrix}k &-\sqrt{1-k^2}\\\sqrt{1-k^2}&k\end{pmatrix}
\begin{pmatrix}u\\v\end{pmatrix}(\mu-)
\end{equation}
into their value at $t$.  Here $k$ is a parameter such that $0\le k\le 1$.

The action
of  the  group $V_t$ can easily  be described   explicitly.

If $t>0$, one gets that
\begin{equation}\label{unitgruppa}
\left ( V_t\begin{pmatrix}
u\\
v\end{pmatrix}\right )(x)=
\begin{pmatrix}
u(x-t)\\
v(x-t)\end{pmatrix}
\end{equation}
whenever $ x<\mu$ or $ t\le x+\mu$, and
\begin{equation}\label{unitgruppa1}
\left ( V_t\begin{pmatrix}
u\\
v\end{pmatrix}\right )(x)=
\begin{pmatrix}
k \, u(x-t)-\sqrt{1-k^2} \,v(x-t)\\
\sqrt{1-k^2} \,u(x-t)+k \,u(x-t)\end{pmatrix}
\end{equation}
whenever $  \mu\le x <t$.

If $t<0$, one clearly has that
$$
V_t=(V_{-t})^*, \quad t<0.
$$

The self-adjoint generator $ \cP$  of the group
is the following  self-adjoint realization of the differentiation operator on the full graph $\bbX$ on
 $
\Dom(\cP )$ consisting of the two-component functions $f=(f_{\uparrow},f_{\downarrow})^T$,
$$(f_\uparrow,f_\downarrow)\in\left [
 W_2^1((-\infty,\mu))\oplus W_2^1((\mu, \infty))\right ]\oplus \left [W_2^1((-\infty,\mu))\oplus W_2^1((\mu, \infty))\right ]
$$
that satisfy the  boundary conditions
\begin{equation}\label{bcdilation}
\begin{pmatrix}
 f_\uparrow(\mu+)\\
 f_\downarrow(\mu+)
\end{pmatrix}=
\begin{pmatrix}
k&- \sqrt{1-k^2}\\
\sqrt{1-k^2}&k
\end{pmatrix}
\begin{pmatrix}
 f_\uparrow(\mu-)\\
  f_\downarrow(\mu-)
\end{pmatrix}.
\end{equation}
We will call the  generator  $\cP$ the  {\it self-adjoint momentum operator with the quantum gate coefficient} $0\le k \le 1$ on the full graph $\bbX$.

The unitary dynamics $ V_t$ can be illustrated on the  example of wave propagation along the  transmission line (see \eqref{systembc} and also  Fig. 4 below) and can be described
informally as follows.

The wave  packet  initially located on  $(-\infty, \mu)$ in the upper channel is transmitted to the upper channel of the semi-axis $(\mu, \infty)$  with the quantum gate coefficient $k $.
The  ``rest" of the packet gets amplified by   $\sqrt{1-k^2}$ and then  is transmitted to the lower channel of the semi-axis $(\mu, \infty)$.
The wave packet  on $(-\infty, \mu)$ in the lower channel
gets amplified by the factor  $-\sqrt{1-k^2}$ and  by the quantum gate coefficient $k$
and then is transmitted to the upper
and lower
channel  of the  semi-axis $(\mu, \infty)$, respectively.

$$
$$

\begin{pspicture}(12,6)%

\psline[linewidth=1.5pt]{->}(4.2,5.4)(5.2,5.4)

\psline[linewidth=1.5pt]{->}(4.2,2)(5.2,2)

\psline(0,3)(5.5,3)

\psline(0,5)(5.5,5)

\psline(6.5,5)(11,5)

\psline(6.5,3)(11,3)

\psdot*[dotscale=2](6,4)

\psline(5.5, 5)(6,4)

\psline(6,4)(6.5,5)

\psline(6,4)(6.5,3)

\psline(5.5, 3)(6,4)


\psline{->}(6.8,4)(6.2,4)

\rput(7,4){$\mu$}

\psarc[fillcolor=gray](2,4.8){1}{11}{169}

\psarc[fillcolor=gray](7.5,4.8){.9}{14}{167}

\psarc[fillcolor=gray](7.5,2.3){1.1}{39}{140}

\psarc*[fillcolor=gray](4.5,3){.5}{180}{360}

\psarc*[fillcolor=gray](10,4.2){.95}{57}{123}

\psarc*[fillcolor=gray](10,3.5){.7}{225}{315}

\rput(6.5,.5){{\bf Fig. 4} {\sc Unitary dynamics on  the  full metric graph $\bbX$ } }

\end{pspicture}

Based on  the explicit description of the semi-group of contractions  $ \widehat V_s$ provided by Lemma \ref{cases},  we arrive to the main result of this section
that shows that
the dissipative dynamics $ \widehat  V_s, $ $s\ge0$,  on the metric graph  $\bbY$ in any of the Cases I*, I-III can be dilated to a unitary one  in the Hilbert space  $L^2( \bbX)$
where $\bbX$ is the full graph. Equivalently,
any solution to the restricted Weyl commutation relations
$$
U_t \widehat  V_s=e^{ist} \widehat  V_sU_t, \quad s\ge 0, \,\,\,t\in \bbR,
$$
such that the generator of the semi-group   $ \widehat  V_s$ belongs to the class $\fD(\cH)$
(see Appendix G for the definition of the class $\fD(\cH)$)
 is unitarily equivalent
 to a  compression of the canonical  solution to the Weyl commutation relations
$$
 U_t  V_s=e^{ist}V_s  U_t, \quad s,t \in \bbR,
$$
in $L^2(\bbR, \bbC^2)$ onto an appropriate   coinvariant  subspace $K\subset L^2(\bbR, \bbC^2)$ that reduces the multiplication group $U_t $.

The precise statement is as follows (cf. \cite[Theorem 15]{JM}).

\begin{theorem}\label{cbcb}
Let  $\bbY\subset \bbX$  be a  metric graph  in one of the Cases I*, I-III given by \eqref{vaivai}.
Suppose that   $(\widehat \cP(k), \cQ(\bbY))$   is  the   canonical dissipative  pair with  the quantum gate coefficient $k$  on the metric  graph $\bbY$
 and $(\cP(k), \cQ(\bbX))$ is  the   canonical pair on the full graph $\bbX$.

 Then
 \begin{equation}\label{sila}
 e^{i\widehat \cP(k)s}=P_{L^2(\bbY)}e^{i\cP(k) s}|_{L^2(\bbY)},\quad  s\ge 0,
\end{equation}
and
 \begin{equation}\label{znanie}
e^{i\cQ(\bbY) t}=P_{L^2(\bbY)}e^{i\cQ (\bbX)t}|_{L^2(\bbY)},\quad  t\in \bbR.
\end{equation}
Here $P_{L^2(\bbY)}$  stands for  the orthogonal projection from the space  $ L^2(\bbX)$ onto the subspace  $ L^2(\ \bbY)\subset L^2(\bbX)$.
\end{theorem}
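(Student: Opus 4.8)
The plan is to prove the two compression identities separately, since they concern the generators $\cQ$ and $\widehat\cP(k)$ independently. The identity \eqref{znanie} for the position group is the soft one: the subspace $L^2(\bbY)\subset L^2(\bbX)$ consists precisely of those $L^2(\bbX)$–functions supported, edgewise, on the subgraph $\bbY$, and $\cQ(\bbX)$ acts as multiplication by the independent variable on the edges of $\bbX$. A multiplication operator preserves the support of a function, so $L^2(\bbY)$ is a reducing subspace for $\cQ(\bbX)$, hence for the whole unitary group $e^{i\cQ(\bbX)t}$. Consequently $P_{L^2(\bbY)}e^{i\cQ(\bbX)t}|_{L^2(\bbY)}=e^{i\cQ(\bbX)t}|_{L^2(\bbY)}$, and the right-hand side is multiplication by $e^{itx}$ on the edges of $\bbY$, i.e. $e^{i\cQ(\bbY)t}$, which gives \eqref{znanie}.

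For \eqref{sila} I would recast the semigroup statement as the static dilation (resolvent) identity
\begin{equation*}
(\widehat\cP(k)-zI)^{-1}=P_{L^2(\bbY)}(\cP(k)-zI)^{-1}\big|_{L^2(\bbY)},\qquad z\in\bbC_-,
\end{equation*}
exactly in the spirit of Theorem \ref{dilthm}. This reduction is legitimate because $\cP(k)$ is self-adjoint and $\widehat\cP(k)$ is maximal dissipative, so for $z\in\bbC_-$ both resolvents are given by convergent Laplace integrals, $(\cP(k)-zI)^{-1}=i\int_0^\infty e^{-izt}e^{i\cP(k)t}\,dt$ and $(\widehat\cP(k)-zI)^{-1}=i\int_0^\infty e^{-izt}e^{i\widehat\cP(k)t}\,dt$. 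Pulling the bounded maps $P_{L^2(\bbY)}$ and $\cdot|_{L^2(\bbY)}$ under the first integral shows that $s\mapsto P_{L^2(\bbY)}e^{i\cP(k)s}|_{L^2(\bbY)}$ and $s\mapsto e^{i\widehat\cP(k)s}$ have the same operator-valued Laplace transform throughout $\bbC_-$; by the uniqueness theorem for the Laplace transform and strong continuity the two families coincide for all $s\ge0$. In particular one need not check beforehand that the compression is a semigroup — this falls out automatically, which also sidesteps the fact that $L^2(\bbY)$ is only semi-invariant (not coinvariant) for $e^{i\cP(k)s}$ in the cases with $k=0$.

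The resolvent identity itself is a direct ODE computation carried out case by case, using the natural identification of $L^2(\bbY)$ with a subspace of $L^2(\bbX)=L^2(\bbR,\bbC^2)$ from Section \ref{s14} (the through–line $f_\infty$ in one channel, the finite appendix or interval $f_\ell$ in the other). Fix $g\in L^2(\bbY)$ and set $f=(f_\uparrow,f_\downarrow)=(\cP(k)-zI)^{-1}g$ with $z\in\bbC_-$, so that $if'_\uparrow-zf_\uparrow=g_\uparrow$, $if'_\downarrow-zf_\downarrow=g_\downarrow$ on the edges together with the vertex relation \eqref{bcdilation}. On every semi-axis running to $-\infty$ on which $g$ vanishes, the condition $\Im z<0$ forces the otherwise non-$L^2$ homogeneous solution to be absent, so that component of $f$ vanishes there; this yields $f_\downarrow(\mu-)=0$ and, where relevant (Cases~II and I$^*$), also $f_\uparrow(\mu-)=0$. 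Substituting these vanishing boundary values into the scattering relation \eqref{bcdilation} collapses it precisely to the vertex conditions defining $\Dom(\widehat\cP(k))$ in Definition \ref{canonical}: in Cases I and III one reads off $f_\uparrow(\mu+)=kf_\uparrow(\mu-)$ and $f_\downarrow(\mu+)=\sqrt{1-k^2}\,f_\uparrow(\mu-)$, while the $k=0$ swap matrix degenerates to a Dirichlet condition at $\mu$ on the outgoing/finite edge and leaves the incoming half-line unconstrained, reproducing Cases~II and I$^*$. Restricting $f$ to the edges of $\bbY$ (which is what $P_{L^2(\bbY)}$ does, since the discarded components live on edges not belonging to $\bbY$) then gives an element of $\Dom(\widehat\cP(k))$ solving $(\widehat\cP(k)-zI)\,P_{L^2(\bbY)}f=g$, establishing the identity.

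The main obstacle I anticipate is bookkeeping rather than analysis: pinning down the correct embedding $L^2(\bbY)\hookrightarrow L^2(\bbX)$ in each of the four cases — particularly Case I$^*$ and the $k=0$ swap, where the two halves of $\bbY$ attach to different decoupled through–lines of $\bbX$ — and checking that the components of $f$ discarded by $P_{L^2(\bbY)}$ (the escaped wave on $(\nu,\infty)$, the complementary channel, the gap $(\nu,\mu)$) are genuinely orthogonal to $L^2(\bbY)$, while the retained restriction truly lies in $\Dom(\widehat\cP(k))$ with the correct regularity at the free endpoints. Once the identification is fixed so that \eqref{bcdilation} degenerates to the $\bbY$–vertex conditions, the remaining verification is routine and modeled directly on the proof of Theorem \ref{dilthm}.
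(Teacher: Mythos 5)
Your proposal is correct, but it proves \eqref{sila} by a genuinely different route than the paper. For \eqref{znanie} you and the paper argue identically: $L^2(\bbY)$ reduces the multiplication group, so the compression is just the restriction. For \eqref{sila}, however, the paper works entirely in the time domain: it identifies $L^2(\bbY)$ inside $L^2(\bbX)$ case by case, observes that $L^2(\bbX)\ominus L^2(\bbY)$ splits into incoming and outgoing subspaces (so the compression of $e^{i\cP(k)s}$ is automatically a contraction semigroup), and then simply compares the explicit action of the unitary group on the full graph, eqs.\ \eqref{unitgruppa}--\eqref{unitgruppa1}, with the explicit action of the dissipative semigroup $\widehat V_s$ already computed in Theorem \ref{cases}. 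You instead pass to the frequency domain: by Laplace-transform uniqueness for strongly continuous contractive families you reduce \eqref{sila} to the resolvent identity $(\widehat\cP(k)-zI)^{-1}=P_{L^2(\bbY)}(\cP(k)-zI)^{-1}|_{L^2(\bbY)}$, $z\in\bbC_-$, and prove that by the ODE argument of Theorem \ref{dilthm} extended to all four cases. What your route buys: you do not need the explicit semigroup formulas of Theorem \ref{cases} at all, the semigroup property of the compression falls out automatically rather than being an input, and the vertex analysis is a uniform one-line consequence of the decay of $e^{-izx}$ at $+\infty$ for $\Im z<0$. What it costs: the case-by-case bookkeeping at the vertex that the paper's comparison of explicit formulas makes transparent. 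One slip in that bookkeeping: in Case I$^*$ your claim that $f_\uparrow(\mu-)=0$ is false in general --- if $g_\uparrow$ is supported in $(-\infty,\nu)$ with $\nu<\mu$, the unique $L^2$ solution on $(-\infty,\mu)$ is $f_\uparrow(x)=-ie^{-izx}\int_{-\infty}^x e^{izs}g_\uparrow(s)\,ds$, which need not vanish at $\mu-$; but this value is irrelevant, since only $f_\downarrow(\mu-)=0$ is needed to collapse \eqref{bcdilation} (with $k=0$) to the Dirichlet condition $f_\uparrow(\mu+)=0$ on the outgoing edge, the nonzero trace $f_\uparrow(\mu-)$ merely feeding the escaped wave $f_\downarrow(\mu+)=f_\uparrow(\mu-)$ that the projection discards. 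With that correction your argument goes through in all cases.
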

\begin{proof} It is convenient to  identify
the  Hilbert space  $L^2(\bbX)$ as the  von Neumann integral
$$L^2(\bbX)=
\begin{pmatrix}
L^2((-\infty, \mu))& L^2((\mu, \infty))\\
L^2((-\infty, \mu))& L^2((\mu, \infty))\\
\end{pmatrix}.
$$

Since any  metric graph $\bbY$   in  one  of the Cases I*, I-III can naturally be considered  a subgraph of the full graph $\bbX$,
 the Hilbert space $L^2( \bbY) $ can be identified with
 $$
L^2( \bbY) \approx
\begin{cases}
\begin{cases}
\begin{pmatrix}
L^2((-\infty, \nu)), & L^2((\mu, \infty))\\
0& 0\\
\end{pmatrix}, &(\nu <\mu )\\
\begin{pmatrix}
L^2((-\infty, \mu))& L^2((\mu, \infty ))\\
0& L^2((\mu, \nu))\\
\end{pmatrix}, &(\nu \ge \mu)\end{cases},& \text{in Case I*}\\
\begin{pmatrix}
L^2((-\infty, \mu))& L^2((\mu, \infty))\\
0&0
\end{pmatrix},
& \text{in Case I}\\
\begin{pmatrix}
0& 0\\
0&L^2((\mu , \nu))
\end{pmatrix} ,& \text{in Case II}\\
\begin{pmatrix}
L^2((-\infty, \mu))& L^2((\mu, \infty))\\
0&L^2((\mu , \nu))
\end{pmatrix}, &\text{in Case III}
\end{cases}.
$$
Clearly, the subspace $L^2(\bbX)\ominus L^2(\bbY)$ splits into the direct sum of incoming $\cD_-$ and outgoing subspaces $\cD_+$ for  the group $\widehat V_t$,
$$
L^2(\bbX)\ominus L^2(\bbY)=\cD_-\oplus \cD_+.
$$
For instance, in Case III,
$$
\cD_-=\begin{pmatrix}
0& 0\\
L^2((-\infty, \mu))&0
\end{pmatrix}  \quad \text{and}\quad \cD_+=\begin{pmatrix}
0& 0\\
0&L^2(( \nu,\infty)).
\end{pmatrix}
$$
Therefore the restriction of the unitary group $ V_t=e^{i\cP(k) t}$  onto its coinvariant subspaces $K=L^2(\bbY)$ is a strongly continuous semi-group of contractions.

Comparing \eqref{unitgruppa}, \eqref{unitgruppa1} with the explicit description  for the semi-groups $ \widehat  V_s$ action provided by Lemma \ref{cases} in each of the cases,
one gets that
 the unitary evolution $V_s$, $s\in \bbR$,  in $L^2 (\bbX)$  compressed to the (coinvariant) subspace
$L^2 ( \bbY)$ gives rise to
the  corresponding semi-groups of contractions
$$
\widehat  V_s=P_{L^2(\bbY)} V_s|_{L^2(\bbY)},\quad  s\ge 0,
$$
which proves \eqref{sila}. To obtain \eqref{znanie} it remains to observe that the subspace $L^2(\bbY)$ reduces the multiplication group
$U_t=e^{-i\cQ(\bbX) t}$.
\end{proof}


\newpage
\part{\Large Applications}

\section{Continuous monitoring of the quantum systems }

The aim  of this section is to present  a general discussion of possible outcomes  in  the  frequent  quantum  measurement theory.

Let  $\cH$ be the Hilbert space used in the description of a quantum system with the Hamiltonian $H$, a self-adjoint operator in $\cH$.
Recall that the time evolution of an initial  state $\phi$  of the system, a unit vector  in the Hilbert space $\cH$, is described by a one-parameter group of unitary operators
$U_t=e^{-it/\hbar H}$. Notice that if   $\phi \in \Dom (H)$, the vector $\psi(t)= e^{-it/\hbar H}\phi$ satisfies the time-dependent Schr\"odinger equation
$$
i\hbar \frac{\partial \psi }{\partial t}=H\psi.
$$

Let    $p(t)$  denote  the  survival probability
\begin{equation}\label{sprob}
p(t)=|(e^{-it/\hbar H}\phi,\phi)|^2=|(e^{it/\hbar H}\phi,\phi)|^2.
\end{equation}

One of the central problems of  frequent quantum  measurement theory is to study  the time behavior of  $[p(t/n)]^m$ for large $m$ and $ n$, and,
in particular, the computation of the limit
 $$p_c(t)=\lim_{n\to \infty}[p(t/n)]^n.$$

We will call  $p_c(t)$ the {\it survival probability under continuous monitoring} of the system and
 focus on the following  three possible scenarios:
\begin{itemize}
\item[$\alpha)$] $p_c(t)=1$, the quantum Zeno effect;
\item[$\beta)$] $p_c(t)=0$,  the quantum Anti-Zeno effect;
\item[$\gamma)$]$p_c(t)=e^{-\tau |t|}$ for some $\tau >0$, the Exponential Decay.
\end{itemize}

\subsection{Quantum Zeno effect}

\begin{hypothesis}\label{raspadpad}
Suppose that $H$ is a self-adjoint operator in the Hilbert space $\cH$ expressed by
$$
H=\int_\bbR\lambda dE_H(\lambda),
$$
where $\EE_H(\lambda)$ is  a resolution of the identity.
Let  $\phi$  be a unit vector (state) in $\cH$ and
 $\nu_\phi(d\lambda)$ denote  the spectral measure of the state $\phi$,
$$
\nu_\phi(d\lambda)=(\EE_H(d\lambda)\phi,\phi).$$
Assume that   $N(\lambda)$ is the corresponding right-continuous  distribution function
\begin{equation}\label{comdistr}
N(\lambda)=\nu_\phi((-\infty, \lambda]),\quad  \lambda\in \bbR.
\end{equation}
\end{hypothesis}

\begin{definition}We say that $\phi$  is a Zeno state under continuous monitoring of the quantum  unitary evolution $\phi \to e^{itH}\phi$  if
$$
\lim_{n\to \infty}|(e^{it/nH}\phi, \phi)|^{2n}=1 \quad \text{for all} \quad t\ge  0.
$$
\end{definition}

Recall the following necessary and sufficient conditions for the Quantum Zeno effect  to occur.
\begin{proposition}[{\cite{Atm}}]\label{zenoprop}
{\it Assume Hypothesis \ref{raspadpad}. Then
the  state $\phi$ is a Zeno state if and only if the light tails requirement
 \begin{equation}\label{lt}
\lim_{\lambda\to \infty}\lambda(1-N(\lambda)+N(-\lambda))=0
\end{equation}
holds.}

{\it In particular,  if  $\phi\in \Dom(|H|^{1/2})$, then $\phi $ is a Zeno state.}
\end{proposition}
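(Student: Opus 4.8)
The plan is to prove Proposition~\ref{zenoprop} by reducing the limit $\lim_{n\to\infty}|(e^{it/n\,H}\phi,\phi)|^{2n}$ to the asymptotics of the characteristic function of the measure $\nu_\phi$ near the origin. First I would write, using Hypothesis~\ref{raspadpad},
\begin{equation*}
(e^{i\frac{t}{n}H}\phi,\phi)=\int_\bbR e^{i\frac{t}{n}\lambda}\,d\nu_\phi(\lambda)=:\widehat\nu_\phi\Big(\frac{t}{n}\Big),
\end{equation*}
so that the quantity of interest is $|\widehat\nu_\phi(t/n)|^{2n}$. Since $\nu_\phi$ is a probability measure, $\widehat\nu_\phi(0)=1$ and $\widehat\nu_\phi$ is continuous, so $\widehat\nu_\phi(s)\to1$ as $s\to0$. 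Taking logarithms, the convergence $|\widehat\nu_\phi(t/n)|^{2n}\to1$ is equivalent to
\begin{equation*}
n\big(1-|\widehat\nu_\phi(t/n)|^2\big)\to0\quad\text{as }n\to\infty,
\end{equation*}
because $|\widehat\nu_\phi(t/n)|^2=1-o(1)$ and $\log(1-x)=-x+O(x^2)$. Writing $s=t/n$, this says precisely that the Zeno condition holds for all $t\ge0$ if and only if $\tfrac{1}{s}\big(1-|\widehat\nu_\phi(s)|^2\big)\to0$ as $s\downarrow0$.

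Next I would express $1-|\widehat\nu_\phi(s)|^2$ in terms of $N$. The key identity is
\begin{equation*}
1-|\widehat\nu_\phi(s)|^2=\int_\bbR\int_\bbR\big(1-\cos(s(\lambda-\eta))\big)\,d\nu_\phi(\lambda)\,d\nu_\phi(\eta),
\end{equation*}
obtained by expanding $|\widehat\nu_\phi(s)|^2=\widehat\nu_\phi(s)\overline{\widehat\nu_\phi(s)}$ as a double integral and using $\Re\,e^{is(\lambda-\eta)}=\cos(s(\lambda-\eta))$ together with the fact that the total mass is one. Since $0\le 1-\cos x\le \min(x^2/2,2)$, this double integral is controlled on one side by the tails and on the other by the truncated second moment. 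The plan is then to show the equivalence of $\tfrac1s(1-|\widehat\nu_\phi(s)|^2)\to0$ with the light-tail condition \eqref{lt}. For the sufficiency of \eqref{lt}, I would split the $\lambda$-integration region at $|\lambda|\sim 1/s$: on $\{|\lambda|\le 1/s\}$ use $1-\cos\le \tfrac12 s^2\lambda^2$ and a summation-by-parts (Abel) argument converting $\int_{|\lambda|\le1/s}\lambda^2\,d\nu_\phi$ into an integral of $\lambda(1-N(\lambda)+N(-\lambda))$, which is $o(1/s)$ under \eqref{lt}; on $\{|\lambda|>1/s\}$ use $1-\cos\le2$ so the contribution is bounded by $2\nu_\phi(\{|\lambda|>1/s\})=2(1-N(1/s)+N(-1/s))=o(s)$. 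For necessity, I would run the estimate in the reverse direction, using the elementary lower bound $1-\cos x\ge c>0$ on a fixed interval bounded away from multiples of $2\pi$ to extract control of the tails from the smallness of the increment.

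The main obstacle I anticipate is the handling of the double integral: the naive bounds on $1-\cos(s(\lambda-\eta))$ involve the difference $\lambda-\eta$, whereas \eqref{lt} is a one-variable tail condition on $N$. The cleanest route, which I would adopt, is to bound $1-\cos(s(\lambda-\eta))\le 2\big(1-\cos(s\lambda)\big)+2\big(1-\cos(s\eta)\big)$ (a consequence of $1-\cos(a-b)\le 2(1-\cos a)+2(1-\cos b)$, itself following from the product formula and $|\sin|\le1$), thereby separating the two variables and reducing everything to the single-variable integral $\int_\bbR(1-\cos(s\lambda))\,d\nu_\phi(\lambda)=1-\Re\,\widehat\nu_\phi(s)$. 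With the integral separated, the Abel summation relating $\int(1-\cos s\lambda)\,d\nu_\phi$ to $\int \lambda(1-N(\lambda)+N(-\lambda))\,d\lambda$ is the only remaining technical point, and it is routine. Finally, for the last assertion, I would note that $\phi\in\Dom(|H|^{1/2})$ means $\int_\bbR|\lambda|\,d\nu_\phi(\lambda)<\infty$; by Chebyshev, $\lambda(1-N(\lambda)+N(-\lambda))=\lambda\,\nu_\phi(\{|\eta|>\lambda\})\le\int_{\{|\eta|>\lambda\}}|\eta|\,d\nu_\phi(\eta)\to0$ as $\lambda\to\infty$ by dominated convergence, so \eqref{lt} holds automatically and $\phi$ is a Zeno state.
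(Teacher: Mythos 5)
Your route is genuinely different from the paper's: the paper does not prove the equivalence at all, but quotes it from \cite{Atm} (where it is deduced from Feller's classical criterion \cite[Theorem 1, p.\ 232]{F} for the weak law of large numbers, see Remark \ref{fogel}), and the only in-text argument is for the final clause, via independent copies and the strong law of large numbers. Your characteristic-function argument is self-contained in spirit, its sufficiency half (the splitting at $|\lambda|\sim 1/s$ plus the Abel summation) is correct, and your Chebyshev/dominated-convergence proof of the last assertion is correct and in fact simpler than the paper's SLLN argument. However, the necessity half has two genuine gaps.

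First, the asserted equivalence between ``Zeno for all $t\ge 0$'' and the continuous limit $\tfrac1s\bigl(1-|\widehat\nu_\phi(s)|^2\bigr)\to 0$ as $s\downarrow 0$ is not ``precise'': only the implication from the continuous limit to the sequential one is immediate. The converse (which is what necessity needs) requires a Baire-category argument of Croft type, or must be bypassed altogether, e.g.\ by running the necessity proof along the single sequence $s=1/n$ (take $t=1$) and using at the end that $x\mapsto 1-N(x)+N(-x)$ is monotone, so control at integer scales controls all scales. As written, this step is unjustified.

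Second, and more seriously, the necessity mechanism itself fails as described. For a \emph{fixed} $s$ the function $u\mapsto 1-\cos(su)$ vanishes on the whole lattice $\frac{2\pi}{s}\bbZ$, so smallness of $\iint\bigl(1-\cos(s(\lambda-\eta))\bigr)\,d\nu_\phi(\lambda)\,d\nu_\phi(\eta)$ at one value of $s$ gives no control of the full tail: a measure concentrated near $0$ and near $2\pi/s$ makes $|\widehat\nu_\phi(s)|=1$ while the tail at $2\pi/s$ is of order one. You must either use the integrated (truncation) inequality
\begin{equation*}
\frac1s\int_0^s\bigl(1-\Re\,\widehat\mu(v)\bigr)\,dv
=\int_\bbR\Bigl(1-\frac{\sin(su)}{su}\Bigr)\,d\mu(u)\;\ge\; c\,\mu\bigl(\{|u|\ge 2/s\}\bigr),
\end{equation*}
or choose $s$ adapted to the scale (e.g.\ $s\approx 1/(2^k x)$ on the dyadic annulus $2^kx\le|u|\le 2^{k+1}x$, where $1-\cos(su)$ is uniformly bounded below) and sum over $k$. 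In addition, either device controls the tails of the \emph{symmetrized} distribution, i.e.\ of $\lambda-\eta$, whereas \eqref{lt} concerns $N$ itself; your separation inequality $1-\cos(a-b)\le 2(1-\cos a)+2(1-\cos b)$ points in the wrong direction here. To close this you need a concentration step: restrict $\eta$ to a bounded interval of $\nu_\phi$-mass at least $\tfrac12$ (around a median), so that $|\lambda-\eta|\ge |\lambda|/2$ for $|\lambda|$ large, and the double integral dominates $c\,\nu_\phi(\{|\lambda|>x\})$ on each annulus. With these repairs your plan goes through, but neither point is addressed in the proposal.
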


\begin{remark}\label{fogel}
The discovery of the phenomenon  that the evolution of  quantum system can be  eventually frozen  under  continuous monitoring  is due to L. Khalfin \cite{Kh} while
the term  the quantum Zeno effect was coined by B. Misra and E.~C.~G.~Sudarshan \cite{Misra}.
The necessary and sufficient condition for the occurrence of the quantum Zeno effect  is due to H. Atmanspacher, W. Ehm and T. Gneiting  \cite{Atm}
 where the authors
 explored  the well known fact (see \cite[Theorem 1, p. 232]{F}) that  the light tails requirement \eqref{lt} is necessary and sufficient for   the  weak law of large numbers  to hold.
 We also refer to  \cite{Schmidt} for an excellent introduction to the subject. 
The quantum Zeno dynamics of a relativistic system is discussed in \cite{MenBel}. 
As for an experimental confirmation of the effect see \cite{Zeno}.

\end{remark}
\begin{remark}
Notice that the membership  $  \phi\in\Dom (|H|^{1/2})$ means that the spectral (probability) measure
 $(E_H(d\lambda)\phi, \phi)$ has  the first moment and hence $\phi $ is a Zeno state. This can also  be seen directly  (cf. \cite{Kh}) as follows.

Consider a sequence  $\xi_1, \xi_2, \dots$ of independent copies of a random variable $\xi$
 with the common distribution function $N(\lambda)$ given by \eqref{comdistr}.
By the strong law of large numbers,
 $$
 \lim_{n\to \infty} \frac{\xi_1+\xi_2+\dots +\xi_n}{n}=a \quad \text{almost surely},
 $$
 where
 $$
 a=\EE\xi=(\sign (H) |H|^{1/2}\phi, |H|^{1/2}\phi)\in \bbR
 $$
is    the mathematical expectation of the  random variable $\xi$.
Recall that $\phi \in \Dom (|H|^{1/2})$ and therefore the right hand side  $ (\sign (H) |H|^{1/2}\phi, |H|^{1/2}\phi) $ is well defined.
 Since the random variables $\xi_1, \xi_2, \dots$ are independent and equidistributed,
 we have  \begin{align*}
  (e^{it/nH}\phi, \phi)^n&= (\EE e^{it/n\xi})^n=\underbrace{\EE e^{it/n\xi_1}\cdot \EE e^{it/n\xi_2}\dots \cdot \EE e^{it/n\xi_n}}_{n \text{ times}}
  \\
 &=\EE e^{it \frac{\xi_1+\xi_2+\dots +\xi_n}{n} }\to e^{it  a }\quad \text{as} \quad n\to \infty,
 \end{align*}
 which  shows that
 \begin{equation}\label{ZZZZ}
 \lim_{n\to \infty} |(e^{it/nH}\phi, \phi)|^{2n}=1.
 \end{equation}
 That is, $\phi $ is a Zeno state.

 \end{remark}

\subsection{Anti-Zeno effect}
Frequent observations can also accelerate the decay  process and the corresponding phenomenon  is known as the quantum  anti-Zeno effect.

\begin{definition} We say that $\phi$, $\|\phi\|=1$, is an anti-Zeno state under continuous monitoring of the quantum  unitary evolution $\phi \to e^{itH}\phi$  if
\begin{equation}\label{antizeno}\lim_{n\to \infty}|(e^{it/nH}\phi, \phi)|^{2n}=0 \quad \text{for all} \quad t> 0.
\end{equation}
\end{definition}

The following lemma provides a simple sufficient condition for  a state to be an anti-Zeno state.
We state the corresponding result   using the
language of the theory of limit distributions of sums of independent random variables  (we refer to Appendix H for the  terminology and a brief exposition of the theory).

\begin{lemma}[cf. \cite{Ex}]\label{AZS} Assume Hypothesis \ref{raspadpad}. Suppose that the distribution $N(\lambda)$ of the   state $\phi$ belongs to the domain of attraction of an $\alpha$-stable law with  $0<\alpha<1$.
Then $\phi$ is  an anti-Zeno state.
\end{lemma}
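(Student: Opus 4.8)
\textbf{Proof plan for Lemma \ref{AZS}.}

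The plan is to recast the survival probability under continuous monitoring as the characteristic function of a normalized sum of i.i.d. random variables and then invoke the stable limit theorem from Appendix H. First I would introduce a sequence $\xi_1,\xi_2,\dots$ of independent copies of a random variable $\xi$ whose common distribution function is $N(\lambda)$, the spectral distribution of the state $\phi$ from Hypothesis \ref{raspadpad}. The key identity, exactly as in the Zeno case above, is that since the $\xi_j$ are independent and equidistributed,
\begin{equation}\label{azcf}
(e^{it/nH}\phi,\phi)^n=\left(\E\, e^{(it/n)\xi}\right)^n=\E\, e^{it\frac{\xi_1+\xi_2+\dots+\xi_n}{n}}.
\end{equation}
Thus $|(e^{it/nH}\phi,\phi)|^{2n}=|\E\, e^{it S_n/n}|^2$, where $S_n=\xi_1+\dots+\xi_n$, and the question becomes whether the characteristic function of $S_n/n$ tends to zero pointwise in $t\ne 0$.

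The heart of the argument is the hypothesis that $N(\lambda)$ lies in the domain of attraction of an $\alpha$-stable law with $0<\alpha<1$. By the Gnedenko--Kolmogorov theory recalled in Appendix H, there exist normalizing constants $B_n>0$ and centering constants $A_n$ such that $S_n/B_n-A_n$ converges in distribution to a (non-degenerate) $\alpha$-stable law, where the scaling $B_n$ is regularly varying of index $1/\alpha$, i.e. $B_n=n^{1/\alpha}L(n)$ for a slowly varying $L$. The crucial point is that for $0<\alpha<1$ one has $1/\alpha>1$, so $B_n/n=n^{1/\alpha-1}L(n)\to\infty$. Writing the characteristic function of $S_n/n$ as
\begin{equation}\label{azscale}
\E\, e^{it S_n/n}=e^{itA_nB_n/n}\cdot\phi_{S_n/B_n-A_n}\!\left(\frac{tB_n}{n}\right),
\end{equation}
where $\phi_{S_n/B_n-A_n}$ is the characteristic function of the convergent sequence, I would argue that the argument $tB_n/n\to\infty$ for every fixed $t\ne0$. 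Since the limiting $\alpha$-stable characteristic function $g(u)$ satisfies $|g(u)|=e^{-c|u|^\alpha}\to 0$ as $|u|\to\infty$ (with $c>0$ because the limit law is non-degenerate), and since the convergence $\phi_{S_n/B_n-A_n}\to g$ is uniform on compact sets, a standard equicontinuity/uniform-convergence argument yields $|\phi_{S_n/B_n-A_n}(tB_n/n)|\to 0$. Because the centering factor $e^{itA_nB_n/n}$ has modulus one, we conclude $|\E\, e^{itS_n/n}|\to 0$, hence $|(e^{it/nH}\phi,\phi)|^{2n}\to 0$ for all $t>0$, which is precisely \eqref{antizeno}.

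The main obstacle I anticipate is the step where the modulus of the characteristic function at the diverging argument $tB_n/n$ is forced to zero: one cannot simply substitute the diverging argument into the limit function $g$, since convergence of characteristic functions is only locally uniform, not uniform on all of $\bbR$. The careful way to handle this is to note that the sequence $\phi_{S_n/B_n-A_n}$ is a sequence of characteristic functions of distributions converging weakly to the stable law, and to control $|\phi_{S_n/B_n-A_n}(u_n)|$ along the specific diverging sequence $u_n=tB_n/n$ by combining the explicit regularly-varying form of $B_n$ with the known tail decay $|g(u)|=e^{-c|u|^\alpha}$ of the stable characteristic function; alternatively one may bypass the limit law entirely and work directly with the asymptotic expansion of $\log\E\,e^{(it/B_n)\xi}$ provided by the domain-of-attraction condition, substituting $1/n$ for $1/B_n$ and using $B_n/n\to\infty$ to see the real part diverges to $-\infty$. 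Either route requires a genuine (if routine) estimate rather than a bare appeal to the limit theorem, so I would flag this as the technical core of the proof.
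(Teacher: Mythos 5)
Your starting point coincides with the paper's: $(e^{itH}\phi,\phi)$ \emph{is} the characteristic function $f(t)$ of the spectral distribution $N$, so everything reduces to the behavior of $|f(t/n)|^{2n}$. But your main line of argument stalls exactly where you flag it, and the gap is genuine rather than routine. Writing $\E\, e^{itS_n/n}=e^{itA_nB_n/n}\,\phi_{S_n/B_n-A_n}(tB_n/n)$ and appealing to the stable limit theorem is circular: since $\phi_{S_n/B_n-A_n}(u)=e^{-iuA_n}f(u/B_n)^n$, one has \emph{identically} $|\phi_{S_n/B_n-A_n}(tB_n/n)|=|f(t/n)|^n$, i.e. the limit theorem has merely restated the quantity to be estimated. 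Weak convergence gives only locally uniform convergence of characteristic functions, which says nothing along the divergent sequence $u_n=tB_n/n$, and your first repair (combining the regular variation of $B_n$ with the decay of the limit function $g$) cannot close this, because nothing relates $\phi_{S_n/B_n-A_n}$ to $g$ off compact sets. If you insist on this route, the correct repair is a rescaling trick: choose $k(n)\to\infty$ with $B_{k(n)}/n\to1$, so that $|f(t/n)|^{n}=\bigl|\phi_{S_{k}/B_{k}-A_{k}}(tB_{k}/n)\bigr|^{n/k}$ has its argument confined to a neighborhood of $t$ while the exponent $n/k(n)\to\infty$ (Karamata); then locally uniform convergence together with $|g(t)|<1$ for $t\ne0$ finishes the proof.

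Your second fallback, on the other hand, \emph{is} the paper's proof, and it is both shorter and gap-free, requiring no i.i.d.\ construction, centering constants, or limit theorem at all. By Remark \ref{strem} in Appendix H, membership in the domain of attraction of an $\alpha$-stable law gives the expansion
$$
(e^{itH}\phi,\phi)=\exp\left(-\sigma|t|^{\alpha}\,\tilde h(t)\left(1-i\beta\,\tfrac{t}{|t|}\,\omega(t,\alpha)\right)\right),
$$
with $\tilde h$ slowly varying as $t\to0$, whence
$$
|(e^{it/nH}\phi,\phi)|^{2n}=\exp\left(-2\sigma\, n^{1-\alpha}\,\tilde h(t/n)\,|t|^{\alpha}(1+o(1))\right)\quad\text{as }n\to\infty,
$$
and the Karamata property of slowly varying functions gives $n^{1-\alpha}\tilde h(t/n)\to+\infty$ precisely because $0<\alpha<1$; thus the limit is $0$ and \eqref{antizeno} holds. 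The whole content of the lemma is the local behavior of $\log f$ at the origin, so you should promote your ``alternative'' to the main (and only) argument.
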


\begin{proof}  The characteristic function of the distribution $N(\lambda)$ coincides with $ (e^{itH}\phi, \phi)$ and therefore admits
the representation (by Remark \ref{strem} in Appendix H)
$$
(e^{itH}\phi, \phi)=\exp \left (
-\sigma |t|^\alpha\tilde h(t)(1-i\beta \frac{t}{|t|} \omega(t,\alpha) )\right ),
$$ where $\tilde h(t)$ is slowly varying as $t\to 0$.
In particular,
$$
|(e^{it/nH}\phi, \phi)|^{2n}=\exp (
-2cn^{1-\alpha} \tilde h(t/n) |t|^\alpha(1+o(1)) \quad \text{as }\quad n\to \infty.
$$
Since (see, e.g., \cite[Appendix 1]{IL})
$$
\lim_{n\to \infty}n^{1-\alpha} \tilde h(t/n)=+\infty,
$$
one concludes that
$$
\lim_{n\to \infty}|(e^{it/nH}\phi, \phi)|^{2n}=0, \quad t\ne 0.
$$

\end{proof}
\begin{remark}
 Necessary and sufficient conditions for the quantum anti-Zeno effect to occur
can be found  in \cite[Theorem 2]{Atm}.
\end{remark}

\begin{remark} It is worth mentioning that  the situation is quite different  if the distribution function  belongs to the domain of attraction of an $\alpha$-stable law with  $1<\alpha\le 2$.
In this case
 the probability measure
$\nu_\phi(d\lambda)$ has the first moment and therefore the state $\phi$ is a Zeno state.
\end{remark}

\subsection{The exponential decay}

Next we turn to the borderline case of $\alpha$-stable distributions with $\alpha =1$ which play an exceptional role
in explanation of   the exponential decay  phenomenon    under continuous monitoring.

\begin{definition}
We say that  $\phi$ is   a {\it resonant state}  under continuous monitoring of the quantum  unitary evolution $\phi \to e^{itH}\phi$  if
\begin{equation}\label{expsc}
\lim_{n\to \infty}
|(e^{it/nH}\phi, \phi)|^{2n}=e^{-\tau |t|}, \quad\text{for some }\quad \tau>0  \quad \text{  and  all} \quad t\ge 0.
\end{equation}
\end{definition}

\begin{remark} Notice that one can also consider  exponentially decaying (resonant) states    by requiring that
the survival probability $p(t)=|(e^{itH}\phi, \phi)|^{2}$ tends to zero exponentially fast as $|t|$ approaches infinity. In this case, however, the spectrum of the Hamiltonian
$H$ has to fill in the whole real axis which excludes from the consideration the quantum systems with semi-bounded Hamiltonians.

It can be easily seen as follows.
The requirement that  the survival probability  $p(t)$  falls off exponentially  implies that the survival probability amplitude $ (e^{itH}\phi, \phi)$ is the  Fourier transform of an absolute continuous measure with the density that is analytic in a strip
containing the real axis. In particular,  the Radon-Nykodim derivative $\frac{d}{d\lambda}(E_H(\lambda) \phi,\phi)$ of the spectral measure of the element $\phi $ is positive almost everywhere which shows that $\spec(H)=\bbR$. If the Hamiltonian $H$ has a gap in its spectrum, then  there are no exponentially decaying states whatsoever unless the quantum system is under continuous monitoring. The geometric reason behind   is that the unitary group $e^{itH}$ does not have orthogonal incoming and outgoing subspaces as it follows from the Hegerfeldt Theorem \cite{Hegerfeldt}.
\end{remark}

 A sufficient condition for the exponential decay  \eqref{expsc} is provided by the following corollary of the   Gnedenko-Kolmorogov limit theorem (see Theorem  \ref{stthm} in Appendix H).

\begin{theorem}\label{levy}  Assume Hypothesis \ref{raspadpad}.
Suppose, in addition, that
\begin{equation}\label{zapozdalo}
\lim_{\lambda\to \infty}\lambda(1-N(\lambda))=\frac{1+\beta}{\pi}\sigma
\quad \text{
and }\quad
\lim_{\lambda\to \infty}\lambda N(-\lambda)=\frac{1-\beta}{\pi}\sigma
\end{equation}
for some
 $\sigma > 0$ and  $\beta\in [-1,1]$.  Then $\phi $ is a resonant state
and
\begin{equation}\label{eds}
\lim_{n\to \infty}|(e^{it/n H}\phi, \phi)|^{2n}=e^{-2\sigma |t|}, \quad t\in \bbR.
\end{equation}

\end{theorem}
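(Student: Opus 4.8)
\textbf{Proof plan for Theorem \ref{levy}.}
The plan is to recognize the survival probability amplitude $(e^{itH}\phi,\phi)$ as the characteristic function of the spectral distribution $N(\lambda)$, and then to invoke the Gnedenko--Kolmogorov theory of convergence to stable laws in the exact borderline case $\alpha=1$. First I would observe that, by Hypothesis \ref{raspadpad},
$$
(e^{itH}\phi,\phi)=\int_\bbR e^{it\lambda}\,d\nu_\phi(\lambda)=\int_\bbR e^{it\lambda}\,dN(\lambda),
$$
so that $\varphi(t):=(e^{itH}\phi,\phi)$ is precisely the characteristic function of a random variable $\xi$ with distribution function $N$. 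The quantity we must compute, $|(e^{it/nH}\phi,\phi)|^{2n}=|\varphi(t/n)|^{2n}$, is then the squared modulus of the characteristic function of the normalized sum $(\xi_1+\cdots+\xi_n)/n$ of $n$ independent copies of $\xi$ (here the scaling is by $n$, i.e.\ $a_n=n$ and centering $b_n=0$).

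The key step is to apply the general Gnedenko--Kolmogorov limit theorem (Theorem \ref{stthm} in Appendix H) to the tail conditions \eqref{zapozdalo}. These are exactly the canonical hypotheses placing $N$ in the domain of attraction of a $1$-stable (Cauchy-type) law with scale $\sigma$ and skewness $\beta$: the requirements
$$
\lim_{\lambda\to\infty}\lambda\bigl(1-N(\lambda)\bigr)=\frac{1+\beta}{\pi}\sigma,
\qquad
\lim_{\lambda\to\infty}\lambda N(-\lambda)=\frac{1-\beta}{\pi}\sigma,
$$
fix both the symmetric tail mass (governing $\sigma$) and the tail asymmetry (governing $\beta$). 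With the norming constants $a_n=n$, the limit theorem yields convergence of $(\xi_1+\cdots+\xi_n)/n$ (after an appropriate centering that drifts the argument of $\varphi$ but not its modulus) to a $1$-stable distribution. By the continuity theorem for characteristic functions, this gives $\varphi(t/n)^n\to \exp\bigl(-\sigma|t|(1-i\beta\,\mathrm{sgn}(t)\tfrac{2}{\pi}\log|t|)\bigr)$, whose modulus is $e^{-\sigma|t|}$; squaring produces the claimed $e^{-2\sigma|t|}$ and hence establishes \eqref{eds}.

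The main obstacle, and the point deserving the most care, is the handling of the logarithmic centering term intrinsic to the $\alpha=1$ case. Unlike the cases $0<\alpha<1$ (treated in Lemma \ref{AZS}) or $1<\alpha\le 2$, the $1$-stable limit requires a centering sequence $b_n$ that grows like $\log n$, and the limiting characteristic function carries a nonanalytic $|t|\log|t|$ correction in its phase. The essential observation that saves the computation is that this correction affects only the \emph{argument} of $\varphi(t/n)^n$, while the survival probability $|\varphi(t/n)|^{2n}$ depends only on the modulus; thus the troublesome centering is invisible to the limit we actually need. I would therefore verify carefully that the real part of $n\log\varphi(t/n)$ converges to $-\sigma|t|$ under \eqref{zapozdalo}, using the standard truncation of $\int(1-\cos t\lambda)\,dN(\lambda)$ against the tail asymptotics and the fact that for $\alpha=1$ the dominant tail contribution to $\mathrm{Re}\,\log\varphi$ is exactly $\sigma|t|$. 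Once this modulus computation is isolated from the phase, the result follows directly from Theorem \ref{stthm}, and the positivity $\tau=2\sigma>0$ confirms that $\phi$ is genuinely a resonant (exponentially decaying under monitoring) state.
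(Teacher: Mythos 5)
Your proposal is correct and follows essentially the same route as the paper: both identify $(e^{itH}\phi,\phi)$ as the characteristic function of $N$, observe that the tail conditions \eqref{zapozdalo} are exactly the hypotheses of the Gnedenko--Kolmogorov theorem (Theorem \ref{stthm} together with Remark \ref{strem} in Appendix H) with $\alpha=1$, $h\equiv 1$, $c_1=\tfrac{1+\beta}{\pi}\sigma$, $c_2=\tfrac{1-\beta}{\pi}\sigma$, and both dispose of the $\alpha=1$ logarithmic centering by noting that it enters only as a unimodular factor (the paper's $e^{iA_n}$), which disappears upon taking $|\cdot|^{2n}$, yielding $e^{-2\sigma|t|}$.
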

\begin{proof}
By Theorem \ref{stthm} and Remark \ref{strem} in Appendix H, the distribution $N(\lambda)$ belongs to the domain of normal  attraction of the  $1$-stable law. In particular,  there are constants $A_n$ such  that
$$
\lim_{n\to \infty}(e^{it/n H}\phi, \phi)^ne^{iA_n}=\exp \left (-\sigma |t| \left (1+i\beta \frac2\pi \frac{t}{|t|} \log |t|\right )\right ),
$$
from which  \eqref{eds} follows.
\end{proof}

\begin{remark} If $\sigma>0$ and therefore  $\phi$ is a resonant state,
the probability measure $\nu_\phi(d\lambda)$ does not have the first moment and hence
$$
\phi\notin\Dom( {|H|}^{1/2}).
$$
In this case
 the  ``total energy" of the quantum system  in the state $\phi$ is infinite.
 Introducing   the ``ultra-violet" cut-off Hamiltonian
$$
H_E=\int_{|\lambda|\le E}\lambda dE_H(\lambda),
$$
one observes that
the ``truncated" energy $(|H_E|\phi, \phi)$ of the state $\phi$ is log-divergent  as the truncation parameter $E $ approaches infinity. That is,
\begin{equation}\label{rashod}
(|H_E|\phi, \phi)=\frac{2}\pi \sigma \log E+o(\log E)\quad \text{as } \quad E\to \infty.
\end{equation}
In particular, the parameter $\sigma$ determines the rate of convergence of the  mean-value cut-off energy in the logarithmic scale  as
$$
\sigma=\frac\pi2\lim_{E\to \infty}\frac{(|H_E|\phi, \phi)}{\log E}.
$$
Indeed,
one gets that
\begin{equation}\label{tosha}
(H_E\phi, \phi)=\int_{[-E, E]}\lambda dN(\lambda)=\int_{[-E, 0)}\lambda dN(\lambda)+\int_{(0, E]}\lambda dN(\lambda).
\end{equation}
Integrating by parts (see, e.g.,  \cite[Theorem 3.36]{Folland}) one obtains
\begin{align*}
\int_{(0, E]}\lambda dN(\lambda)&=\int_{(0, E]}\lambda d(N(\lambda)-1)\nonumber
\\&=E(N(E)-1)+\int_0^E(1-N(\lambda))d\lambda
\\&=-\frac{1+\beta}{\pi} \sigma+\left (\frac{1+\beta}{\pi} \sigma \log E+o(\log E)\right )\quad \text{as}\quad E\to \infty,
\end{align*} where we have used   \eqref{zapozdalo} on the last step.

Therefore,
\begin{equation}\label{chardiv}
\int_{(0, E]}\lambda dN(\lambda)=\frac{1+\beta}{\pi} \sigma \log E+o(\log E)\quad \text{as} \quad  E\to \infty.
\end{equation}

In a similar way one shows that
\begin{equation}\label{tosha1}
\int_{[-E, 0)}\lambda dN(\lambda)=-\frac{1-\beta}{\pi }\sigma \log E +o(\log E)\quad \text{as} \quad  E\to \infty,
\end{equation}
which together with \eqref{chardiv} implies \eqref{rashod}.

 Combing \eqref{tosha}, \eqref{chardiv} and \eqref{tosha1}, one
also justifies  the logarithmic divergence of the averaged truncated  energy  of the state ($\beta \ne 0$), that is,
\begin{equation}\label{aven}
(H_E\phi, \phi)=\frac{2\beta}\pi \sigma \log E+o(\log E)\quad \text{as } \quad E\to \infty.
\end{equation}
\end{remark}

We conclude this subsection by an abstract result  that takes place  for  a special class of hyperbolic (quantum) systems the Hamiltonian of which is a self-adjoint dilation of a dissipative operator (cf. \eqref{dil} for the definition of a self-adjoint dilation).

\begin{lemma}\label{dissss}
Suppose that a self-adjoint operator $H$ in a Hilbert space $\cH$ dilates   a maximal dissipative operator  $\widehat A$ acting in a subspace  $\cK\subset \cH$.
 Assume that  $\phi \in \Dom (\widehat A)\subset \cK$
is such that $\|\phi\|=1$.

Then
 \begin{equation}\label{raspad100}
\lim_{n\to \infty}|(e^{it/n H } \phi, \phi)|^{2n}=\lim_{n\to \infty}|(e^{it/n \widehat A } \phi, \phi)|^{2n}=e^{-\tau t }, \quad t\ge 0,
\end{equation}
where   $$
\tau=2 \Im (\widehat A \phi, \phi).
$$

\end{lemma}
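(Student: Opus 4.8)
The plan is to reduce both limits in \eqref{raspad100} to a single elementary computation by exploiting the dilation property, and then to extract the exponential rate from a first-order expansion of the survival amplitude at the origin.

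First I would establish that the survival amplitudes of $H$ and of $\widehat A$ coincide on the positive time axis. By the definition of a (minimal) self-adjoint dilation (cf. \eqref{dil} and the accompanying semigroup identity in Theorem \ref{dilthm}), one has $e^{is\widehat A}=Pe^{isH}|_\cK$ for $s\ge 0$, where $P$ is the orthogonal projection of $\cH$ onto $\cK$. Since $\phi\in\cK$ we have $P\phi=\phi$, and since $P=P^*$,
$$
(e^{isH}\phi,\phi)=(e^{isH}\phi,P\phi)=(Pe^{isH}\phi,\phi)=(e^{is\widehat A}\phi,\phi),\qquad s\ge 0.
$$
Applying this with $s=t/n\ge 0$ already yields the first equality in \eqref{raspad100} for every $n$, so it suffices to compute $\lim_{n\to\infty}|(e^{it/n\,\widehat A}\phi,\phi)|^{2n}$.

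Next, since $\phi\in\Dom(\widehat A)$, the vector-valued map $s\mapsto e^{is\widehat A}\phi$ is strongly differentiable at $s=0$ with derivative $i\widehat A\phi$; taking the inner product with $\phi$ gives the scalar expansion
$$
(e^{is\widehat A}\phi,\phi)=1+is(\widehat A\phi,\phi)+o(s),\qquad s\downarrow 0.
$$
Writing $a=(\widehat A\phi,\phi)$ and inserting $s=t/n$ (so that $o(s)=o(1/n)$ for fixed $t$), I would then compute the modulus squared to first order:
$$
|(e^{it/n\,\widehat A}\phi,\phi)|^{2}=\Big(1-\tfrac{t}{n}\Im a\Big)^{2}+\Big(\tfrac{t}{n}\Re a\Big)^{2}+o(1/n)=1-\tfrac{2t}{n}\Im a+o(1/n).
$$

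Finally, raising to the $n$-th power and passing to the limit through the standard relation $(1+c/n+o(1/n))^n\to e^{c}$ gives
$$
\lim_{n\to\infty}|(e^{it/n\,\widehat A}\phi,\phi)|^{2n}=e^{-2t\,\Im(\widehat A\phi,\phi)}=e^{-\tau t},\qquad t\ge 0,
$$
with $\tau=2\Im(\widehat A\phi,\phi)\ge 0$, the non-negativity being precisely the dissipativity of $\widehat A$. The only point requiring care is the justification of the expansion at $s=0$ together with the control of the remainder needed to carry the $o(1/n)$ term through the $n$-th power; this is routine once $\phi\in\Dom(\widehat A)$ guarantees strong differentiability of $s\mapsto e^{is\widehat A}\phi$ at the origin, and I expect no genuine obstacle beyond this bookkeeping.
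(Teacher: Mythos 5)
Your proof is correct and follows essentially the same route as the paper's: a first-order expansion $(e^{is\widehat A}\phi,\phi)=1+is(\widehat A\phi,\phi)+o(s)$ (valid since $\phi\in\Dom(\widehat A)$), passage to the modulus to extract $-2s\,\Im(\widehat A\phi,\phi)$, and the dilation identity $e^{is\widehat A}=Pe^{isH}|_{\cK}$, $s\ge 0$, to equate the two survival amplitudes. The only difference is that you spell out the projection argument $(e^{isH}\phi,\phi)=(Pe^{isH}\phi,\phi)=(e^{is\widehat A}\phi,\phi)$, which the paper leaves implicit in the phrase ``by the hypothesis, the Hamiltonian $H$ dilates $\widehat A$.''
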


\begin{proof} Since $\phi \in \Dom (\widehat A)$,
we have
$$
(e^{i\varepsilon\widehat A}\phi, \phi)=1+i\varepsilon   (\widehat A \phi, \phi)+o(\varepsilon)\quad \text{as }\quad \varepsilon\downarrow 0.
$$
Therefore,
$$
|(e^{i\varepsilon\widehat A}\phi, \phi)|=1-\varepsilon  \Im (\widehat A \phi, \phi)+o(\varepsilon)\quad \text{as} \quad \epsilon \downarrow 0,
$$
and hence
$$
\lim_{n\to \infty}|(e^{it/n \widehat A } \phi, \phi)|^{2n}=e^{- 2 \Im (\widehat A \phi, \phi)t }, \quad t\ge 0.
$$

 By the hypothesis, the Hamiltonian $H$ dilates $\widehat A$ and hence
$$
\lim_{n\to \infty}|(e^{it/n H } \phi, \phi)|^{2n}=\lim_{n\to \infty}|(e^{it/n \widehat A } \phi, \phi)|^{2n}=e^{-\tau t }, \quad t\ge 0.
$$

\end{proof}

\subsection{Frequent measurements and the time-energy uncertainty principle}\label{sub15.4}

The study of the limit behavior of the survival probability $[p_c(t/n)]^m$ as $m,n \to \infty$ and $m\ne n$ in appropriate time-scales  is also of definite interest.
For instance,  if $\frac{m}{n}\to \infty$, we deal with the case of
  {\it prolonged  } frequent quantum measurements.

For instance,  prolonged frequent   measurements with  $m(n)=n^2$ can  eventually ``unfreeze" states that were {\it a priory} Zeno states.

Indeed, assume
  that $\phi\in \Dom(H)$ and therefore $\phi$ is a  Zeno state.
  We have
  $$
  (e^{itH}\phi, \phi)=1+i(H\phi,\phi)t-\frac 12(H\phi,H\phi) t^2+o(t^2), \quad \text{as }\quad t\to0.
  $$
  In particular,
  $$
  | (e^{itH}\phi, \phi)|^2=1-t^2(\Delta H)^2+o(t^2),
  $$
  where
  $$
  \Delta H=\left (\|H\phi\|^2-(H\phi, \phi)^2\right )^{1/2}<\infty.
  $$
Therefore,
\begin{equation}\label{cltzeno}
   \lim_{n\to \infty} | (e^{it/nH}\phi, \phi)|^{2n^2}=e^{-(\Delta H)^2t^2}
  \end{equation}
and the state $\phi$  exhibits an exponential decay in a non-linear  time scale.
In other words,  if the quantum system is observed  $n^2$ times with  the ``frequency'' $\omega=n/t$ ($t$ is fixed and  $n$ is large)
the results of  the {\it  prolonged frequent measurements}  of the survival probability
unfreezes the  Zeno state $\phi\in \dom (H)$.

  More precisely, the state $\phi$ is a resonant state in the time-scale
$$\mathfrak{t}(t)=\sqrt{t}
$$
in the sense that
\begin{equation}\label{CLT}
\lim_{n\to \infty}[p(\mathfrak{t}(t/n))]^n=\exp\left (- (\Delta H)^2 |t|\right ).
\end{equation}

In this situation one can give an   estimate for the survival  probability     from above
via the angle $\theta(t)=\arccos  |(e^{itH}\phi, \phi)|$ between the  states $\phi $ and $e^{itH}\phi$,  which is an important  geometric characteristics  of the trajectory
$\psi(t)= e^{itH}\phi$ in the Hilbert space.

 To do so, we
use  the Mandelstam-Tamm time-energy uncertainty relation \cite{MTamm} (also see \cite{PFR})
  \begin{equation}\label{PFF}
 \theta(t)\le t\Delta H \quad\left ( 0\le t\Delta H\le\frac{\pi}{2 }\right ),
  \end{equation}
one gets an   {\it aposteriory}  estimate
$$
\lim_{n\to\infty}|(e^{it/nH}\phi, \phi)|^{2n^2}\le e^{-\theta^2(t)}.
$$
via the angle $\theta(t)$.

\section{The Quantum Zeno versus Anti-Zeno effect alternative}\label{s16}

In this section  we focus our attention  on  continuous monitoring of massive one-dimensional particles on a semi-axis.
We assume that
the Hamiltonian for a particle with one degree of freedom is the  one-dimensional Schr\"odinger operator $$H=- \frac{\hbar^2}{2m}\frac{d^2}{dx^2}$$
 in the Hilbert space $L^2((0,\infty))$.
  In the system of units where $\hbar =1$ and mass $m=1/2$ the Hamiltonian is given by the differential expression
$$H=- \frac{d^2}{dx^2}
$$
with appropriate boundary conditions at the origin.
It turns out that the results of frequent measurements for such quantum systems   depend on  the specific choice
of the boundary conditions at the origin and they differ qualitatively. For instance,
in the case of the Dirichlet Schr\"odinger operator, any smooth initial state with $\phi(0)\ne 0$ is an Anti-Zeno  state under the continuous monitoring. In contrast to this,
if the quantum evolution is governed by any other self-adjoint realization of the second order differentiation operator, then  all smooth initial states are Zeno states.

The proper understanding of this phenomenon requires   a more thorough   analysis of the decay properties of quantum systems, which we will proceed below.

We start with a definition of a resonant state under continuous monitoring in a non-linear time-scale.

\begin{definition}
Let $\mathfrak{t}(t)$  be an increasing continuous function  of $t$ such that
$$
\mathfrak{t}(0)=0.
$$
We say that the state $\phi$  is a resonant state  under continuous monitoring of the quantum  unitary evolution $\phi \to e^{itH}\phi$
in the time-scale  $\mathfrak{t}(t)$ if
$$
\lim_{n\to \infty}[p(\mathfrak{t}(t/n))]^n=e^{-\sigma |t|}.
$$
where
$$
p(t)=|(e^{itH}\phi, \phi)|^2
$$
is the survival probability.
\end{definition}

\begin{remark}
If $\phi\in \dom(H)$,  then $\phi$ is a Zeno state by Proposition \ref{zenoprop} in the standard (linear) time-scale
$$
\mathfrak{t}(t)=t,
$$
but $\phi$ is simultaneously a resonant state in the non-linear time-scale
$$
\mathfrak{t}(t)=\sqrt{t}
$$
as it follows from \eqref{cltzeno} (see Subsection \ref{sub15.4} where the concept of  a prolonged frequent measurement is discussed).
\end{remark}

First, we treat  the case of the   Schr\"odinger operator on the positive semi-axis with the Dirichlet boundary condition at the origin.

Before formulating the corresponding result recall (see \cite{F}) that  if $\cN$ denotes the normal distribution
$$
\cN(\lambda)=\frac{1}{\sqrt{2\pi}}\int_{\-\infty}^\lambda e^{-\frac12 y^2}dy,
$$
then
\begin{equation}\label{levydis}
F_\sigma(\lambda)=2\left [1- \cN\left (\sqrt{\frac{\sigma}{\lambda}}\right )\right],\quad \lambda>0,
\end{equation}
defines one-sided stable (L\'evy) distribution with index of stability  $\frac12$
the characteristic function $f(t)$ of which  is given by
 \begin{equation}\label{f12}
 f(t)=
\exp\left (-\sigma |t|^{1/2}\left (1-i\, \frac{t}{|t|}\right)\right ).
\end{equation}
It is remarkable that along with the Gaussian and Cauchy  distributions, the probability density function of the L\'evy distribution  is known in closed form
\cite{F}
$$
\rho(\lambda)=\left (\frac{\sigma}{2\pi} \right )^{1/2}\frac{1}{\lambda^{3/2}}e^{-\frac{\sigma}{2\lambda}}, \quad \lambda>0.
$$

Our first result shows that for  a typical initial state $\phi$ ($\phi \in W_2^2((0, \infty))$ such that $\phi(0)\ne 0$)
the spectral measure $(E_H(d\lambda)\phi,\phi)$ of the state $\phi$ has $r$-moments for all $r<1/2$ but not for $r=1/2$.
Here $H$ denotes the Schr\"odinger operator with the Dirichlet boundary condition at the origin. In particular this means that such states are anti-Zeno states under continuous
monitoring of the quantum evolution $\phi\mapsto e^{itH}\phi$. However, in the time scale $\mathfrak{t}(t)=t^2$ such states do  exhibit exponential decay.  Equivalently,
short frequent measurements yield
$$
\lim_{n\to\infty}\left |(e^{it/n H}\phi,\phi)\right |^{2\sqrt{n}}=e^{-\sigma |t|^{1/2}}
$$
for some $\sigma >0$.

\begin{theorem}\label{1/2thm}   Let $\cH=L^2((0, \infty))$ and
\begin{equation}\label{schr1}
H=- \frac{d^2}{dx^2}
\end{equation}
be  the Schr\"odinger operator with the Dirichlet boundary condition at the origin
$$
\Dom ( H)=\{f\in W_2^2((0, \infty))\,|\, f(0)=0\}.
$$

  Suppose that $\phi\in W_2^2((0, \infty))$  is such that $\phi(0)\ne 0$ and  $\|\phi\|=1$.

Then the distribution function $N(\lambda)$  of
the spectral measure
$$\nu_\phi(d\lambda)=(E_H(d\lambda)\phi, \phi)$$ of the element $\phi$ belongs to the domain of normal attraction  of
the one-sided $\frac12$-stable L\'evy distribution $F_\sigma $  \eqref{levydis} the characteristic function of which is given by \eqref{f12} with
$$
\sigma=\sqrt{\frac2\pi}|\phi(0)|^2.
$$

In particular, the state $\phi$ is a resonant state in the time-scale
$$\mathfrak{t}(t)=t^2.
$$ That is,
\begin{equation}\label{nuono}
\lim_{n\to \infty}[p(\mathfrak{t}(t/n))]^n=\exp\left (-2 \sigma |t|\right ),
\end{equation}
where
$$
p(t)=|(e^{itH}\phi, \phi)|^2
$$
is the survival probability.

\end{theorem}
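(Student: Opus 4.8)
The plan is to diagonalize the Dirichlet operator $H$ by the sine transform, read off the precise high-energy tail of the spectral measure $\nu_\phi$ from the regularity of $\phi$ together with $\phi(0)\neq 0$, and then appeal to the Gnedenko--Kolmogorov domain-of-attraction machinery of Appendix H. First I would set up the spectral representation: the sine transform
$$\hat\phi(k)=\sqrt{\tfrac2\pi}\int_0^\infty \phi(x)\sin(kx)\,dx,\quad k>0,$$
is a unitary map of $L^2((0,\infty))$ onto itself carrying $H$ into multiplication by $k^2$. Since $H\ge 0$, the measure $\nu_\phi$ is supported on $[0,\infty)$, so $N(\lambda)=0$ for $\lambda<0$, while for $\lambda>0$
$$1-N(\lambda)=\nu_\phi((\lambda,\infty))=\int_{\sqrt\lambda}^\infty |\hat\phi(k)|^2\,dk.$$

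The heart of the argument is the large-$k$ behaviour of $\hat\phi$. Using $\phi\in W_2^2((0,\infty))$---so that $\phi,\phi'$ are continuous and vanish at $+\infty$ while $\phi''\in L^2$---two integrations by parts give
$$\hat\phi(k)=\sqrt{\tfrac2\pi}\,\frac{\phi(0)}{k}+\rho(k),\qquad \rho(k)=-\sqrt{\tfrac2\pi}\,\frac{1}{k^2}\int_0^\infty\phi''(x)\sin(kx)\,dx,$$
the boundary contribution at the origin producing exactly the term $\sqrt{2/\pi}\,\phi(0)/k$ and the contributions at $+\infty$ vanishing. Since the sine transform of $\phi''\in L^2$ is again in $L^2$, one has $k^2\rho\in L^2((0,\infty))$. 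Inserting this into the tail integral, the leading term contributes $\int_{\sqrt\lambda}^\infty\frac2\pi\frac{|\phi(0)|^2}{k^2}\,dk=\frac2\pi|\phi(0)|^2\lambda^{-1/2}$, whereas the cross term and the $|\rho|^2$ term are estimated by Cauchy--Schwarz against $\|k^2\rho\|_{L^2}$ and turn out to be $O(\lambda^{-5/4})$ and $O(\lambda^{-2})$, hence $o(\lambda^{-1/2})$. This yields the one-sided tail asymptotics
$$1-N(\lambda)\sim\frac2\pi|\phi(0)|^2\,\lambda^{-1/2}\quad(\lambda\to+\infty),\qquad N(-\lambda)=0.$$

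Finally I would invoke the domain-of-normal-attraction criterion for $\alpha$-stable laws with $\alpha=\tfrac12$ (Theorem \ref{stthm} and Remark \ref{strem} in Appendix H): a right tail $\sim c_+\lambda^{-1/2}$ with negligible left tail places $N$ in the domain of normal attraction of the fully skewed one-sided $\tfrac12$-stable L\'evy law, and matching the constants identifies the parameter as $\sigma=\sqrt{2/\pi}\,|\phi(0)|^2$. For the resonance in the time-scale $\mathfrak{t}(t)=t^2$ I would compute the small-argument behaviour of the survival amplitude $\chi(\tau)=(e^{i\tau H}\phi,\phi)$ directly: one integration by parts gives $1-\chi(\tau)=-i\tau\int_0^\infty(1-N(\lambda))e^{i\tau\lambda}\,d\lambda$, and feeding in the tail together with
$$\int_0^\infty\lambda^{-1/2}e^{i\tau\lambda}\,d\lambda=\sqrt\pi\,(-i\tau)^{-1/2}$$
produces $1-\chi(\tau)\sim\sigma\,\tau^{1/2}(1-i)$ as $\tau\downarrow0$, in agreement with the characteristic function \eqref{f12}. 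Since $\mathfrak{t}(t/n)=t^2/n^2$ and hence $\mathfrak{t}(t/n)^{1/2}=|t|/n$, this gives $|\chi(\mathfrak{t}(t/n))|^2=1-2\sigma|t|/n+O(n^{-2})$, so $[p(\mathfrak{t}(t/n))]^n\to e^{-2\sigma|t|}$, which is \eqref{nuono}.

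The step I expect to be the main obstacle is the error control in the tail estimate: one must use the Sobolev regularity of $\phi$ to guarantee that the remainder $\rho$ decays with quantitative $L^2$ weight (so that $k^2\rho\in L^2$), ensuring that the $\phi(0)/k$ term genuinely dominates the tail integral. Because the stability exponent $\alpha=\tfrac12$ is degenerate, a crude bound on $\rho$ is not enough; the asymptotic constant must be tracked exactly through the change of variables $\lambda=k^2$ and the Tauberian evaluation of $\int_0^\infty\lambda^{-1/2}e^{i\tau\lambda}\,d\lambda$ in order to land on $\sigma=\sqrt{2/\pi}\,|\phi(0)|^2$ rather than a nearby constant.
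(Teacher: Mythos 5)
Your proposal is correct, and it proves the theorem by a genuinely different route than the paper. The paper never diagonalizes $H$ explicitly: it restricts $H$ to the symmetric operator $\dot H$ on $\{f\in \Dom(H)\,|\,f(0)=f'(0)=0\}$, uses the explicit Weyl--Titchmarsh function $M(z)=i\sqrt{2z}+1$ of the pair $(\dot H,H)$, passes to the functional model of Appendix C (where the spectral density is $\tfrac{\sqrt{2}}{\pi}\sqrt{\lambda}$ and the deficiency elements become the fractions $\Theta/(\lambda\mp i)$), decomposes $\phi=\alpha g_{+}+\beta g_{-}+h$ by von Neumann's formula, and obtains the tail $1-N(\lambda)=\tfrac{2\sqrt{2}}{\pi}|\alpha+\beta|^{2}\lambda^{-1/2}+o(\lambda^{-5/4})$, the boundary value entering through the identity $|\alpha+\beta|^{2}=|\phi(0)|^{2}/\sqrt{2}$ computed from the explicit deficiency elements $g_{\pm}$. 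Your sine-transform argument reaches the same tail $\tfrac{2}{\pi}|\phi(0)|^{2}\lambda^{-1/2}+O(\lambda^{-5/4})$ more concretely: the two integrations by parts are the hands-on counterpart of the von Neumann decomposition (the boundary term $\sqrt{2/\pi}\,\phi(0)/k$ plays the role of $\alpha g_{+}+\beta g_{-}$, and the remainder with $k^{2}\rho\in L^{2}$ plays the role of $h\in\Dom(\dot H)$, i.e.\ of $Uh\in L^{2}(\bbR;(1+\lambda^{2})d\mu)$), and your Cauchy--Schwarz bounds mirror the paper's estimates of the cross and remainder integrals. Both proofs then feed the identical data $\alpha=\tfrac12$, $c_{1}=\tfrac{2}{\pi}|\phi(0)|^{2}$, $c_{2}=0$, $d(\tfrac12)=\tfrac12\sqrt{2\pi}$ into Theorem \ref{stthm} of Appendix H to land on $\sigma=\sqrt{2/\pi}\,|\phi(0)|^{2}$. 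You diverge once more at the end: for \eqref{nuono} the paper invokes the $1/2$-stable limit theorem, whereas you rederive the required convergence from the Abelian expansion $1-\chi(\tau)\sim\sigma\tau^{1/2}(1-i)$ as $\tau\downarrow 0$; this is legitimate precisely because your quantitative error bound $O(\lambda^{-5/4})$ is absolutely integrable at infinity, so the remainder contributes $O(\tau)=o(\tau^{1/2})$ --- with only the crude bound $o(\lambda^{-1/2})$ this step would not close by absolute-value estimates, so your emphasis on tracking the $L^{2}$ weight of $\rho$ is exactly the right one. What your route buys is elementarity: no functional model, no Liv\v{s}ic machinery, and a self-contained proof of the limit \eqref{nuono}. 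What the paper's route buys is uniformity: the same abstract argument, driven only by the Weyl--Titchmarsh function, also yields the $3/2$-stable Theorem \ref{3/2thm} for the mixed boundary conditions, where no equally clean explicit eigenfunction transform is at hand.
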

\begin{proof}Let $\dot H$ be the restriction of $H$ on
$$
\Dom(\dot H)=\{f\in \Dom (H)\, |\, f(0)=f'(0)=0\}.
$$

It is known that  the Weyl-Titchmarsh function $M(z)$ associated with the pair $(\dot H, H)$  admits the representation   \cite{GMT}
$$
M(z)=i\sqrt{2z}+1, \quad z\in \bbC_+.
$$
By the Stieltjes inversion formula,  we have that
$$
M(z)=\int_0^\infty\left (\frac{1}{\lambda-z}-\frac{\lambda}{\lambda^2+1}\right )d\mu(\lambda),
$$
where $\mu(d\lambda)$ is an absolutely continuous measure supported by      the positive semi-axis with the density
\begin{equation}\label{forsss}
\frac{d\mu(\lambda)}{d\lambda}=\frac1\pi  \Im (M(\lambda+i0))=\frac{\sqrt{2}}{\pi}\sqrt{\lambda}, \quad \lambda>0.
\end{equation}

Suppose that $g_\pm$,  $\|g_\pm \|=1$,
 are deficiency elements   $g_\pm \in \Ker ((\dot H)^*\mp i I)$ such that
\begin{equation}\label{proverka}
g_+-g_-\in \Dom(H).
\end{equation}

In fact, the  deficiency  elements $g_\pm $ of the symmetric operator $\dot H$
 can be chosen as  (see, e.g., \cite{GMT})
\begin{equation}\label{nudada1}
g_+(x)=2^{1/4}e^{i\frac{\sqrt{2}}{2}x}e^{-\frac{\sqrt{2}}{2}x}
\quad \text{and}\quad
g_-(x)=\overline{g_+(x)}, \quad x\ge 0.
\end{equation}
In this case, $g_+(0)-g_-(0)=0$ which shows that \eqref{proverka} holds.

 One can apply  Theorem \ref{unitar} in Appendix C to conclude that  there is a unitary map $U$  from $L^2((0, \infty))$ onto $L^2((0, \infty); d\mu)$ such that
$UHU^{-1}$ is the operator of multiplication  by  independent variable in $L^2((0, \infty); d\mu)$.

Since \eqref{proverka} holds, from Remark \ref{snoska} in Appendix C it follows that
\begin{equation}\label{vania}
(Ug_\pm)(\lambda)=\frac{\Theta}{\lambda \mp i},\quad \lambda >0,
\end{equation}
for some $|\Theta|=1$.

By the hypothesis,
$
\phi \in W_2^2((0, \infty))= \Dom((\dot H)^*)
$.
Therefore, in accordance  with von Neumann's formula
the element $\phi$ admits the representation
\begin{equation}\label{phirep}
\phi=\alpha g_++\beta g_-+h
\end{equation}
for some uniquely determined $\alpha, \beta \in \bbC$ and
$
h\in \Dom (\dot H)\subset \dom (H).
$

We claim
 that the distribution function $N(\lambda)$   of
the spectral measure $$ \nu_\phi(d\lambda)=(E_H(d\lambda)\phi, \phi)$$ of the element $\phi$
 admits the asymptotic representation
 \begin{equation}\label{polust}
1-N(\lambda)=\frac{2\sqrt{2}}{\pi}\frac{|\alpha+\beta|^2}{\sqrt{\lambda}}+o(\lambda^{-5/4})  \quad \text{as } \lambda \to \infty.
\end{equation}

Indeed, from \eqref{vania}  and  \eqref{phirep}
 it follows that
\begin{equation}\label{abrep}
(U\phi)(\lambda)=\frac{a}{\lambda-i}+\frac{b}{\lambda+i}+(Uh)(\lambda),
\end{equation}
where $a=\Theta\alpha  $ and $b=\Theta\beta$. In particular,
\begin{equation}\label{abba}
 |a+b|=|\alpha+\beta|.
 \end{equation}
Working out the computations in the model representation provided by Theorem \ref{unitar},
in Appendix C, we obtain for   the  distribution function  $N(\lambda)$  the representation
\begin{align}
1-N(\lambda)&=\int_{\lambda}^\infty \left |  \frac a{s-i}+ \frac b{s+i} +(Uh)(s)\right |^2d\mu(s)
=|a +b |^2\int_\lambda^\infty
\frac{d\mu(s) }{s^2+1}\label{odmin}
\\&+2\Re \, a \overline b\int_\lambda^\infty  \left (
\frac{1}{(s-i)^2}-\frac{1}{s^2+1}\right )d\mu(s)  \nonumber
\\&+2 \Re \int_\lambda^\infty \left ( \frac a{s-i}+ \frac b{s+i}\right )\overline{(Uh)(s)}d\mu(s)  \nonumber
\\&+\int_\lambda^\infty\left | (Uh)(s)\right |^2d\mu(\lambda), \quad \lambda\ge 0.\nonumber
\end{align}

From \eqref{forsss} it follows
\begin{equation}\label{ddd1}
\int_\lambda^\infty
\frac{d\mu(s) }{s^2+1}=\frac{\sqrt{2}}{\pi} \int_\lambda^\infty\frac{\sqrt{s}}{s^2+1}d s=\frac{2\sqrt{2}}{\pi} \frac{1}{\sqrt{\lambda}}+O(\lambda^{-3/2})
\quad\text{as }\lambda \to \infty.
\end{equation}
Therefore, for the first  term of the right hand side of \eqref{odmin} we have the asymptotic representation
$$
|a +b |^2\int_\lambda^\infty
\frac{d\mu(s) }{s^2+1}=|\alpha+\beta|^2\frac{2\sqrt{2}}{\pi} \frac{1}{\sqrt{\lambda}}+O(\lambda^{-3/2}).
$$
Here we have used \eqref{abba}.

The remaining  three terms in \eqref{odmin} can be estimated as follows
\begin{align}
\int_{\lambda}^\infty \left |\frac{1}{(s -i)^2}-\frac{1}{s^2+1}\right |d\mu(s)
&\le \frac{3}\lambda \int_{\lambda}^\infty  \frac{d\mu(s) }{s^2+1}
=O(\lambda^{-3/2}),
 \label{ddd2}
\end{align}
\begin{align}
\left |\int_{\lambda}^\infty \frac{\overline{(Uh)(s)}}{s\pm i}d\mu(s)\right |&\le \frac1\lambda \sqrt{\int_{\lambda}^\infty  \frac{d\mu(s) }{s^2+1}}\cdot\sqrt{\int_\lambda^\infty (1+s^2)|(Uh)(s)|^2 d\mu(s)}\label{ddd3}
\\&
=o(\lambda^{-5/4}), \nonumber
\end{align}
and
\begin{equation}\label{ddd4}
\int_{\lambda}^\infty\left | (Uh)(s)\right |^2d\mu(s)\le\frac1{\lambda^2}\int_\lambda^\infty (1+s^2)|(Uh)(s)|^2 d\mu(s)=o(\lambda^{-2}),
\end{equation}
$$\text{as}\quad \lambda\to \infty.
$$
Here, in \eqref{ddd3} and \eqref{ddd4} we have used that $h\in \Dom (\dot H)\subset  \Dom (H)$, so that
   $$Uh\in L^2(\bbR; (1+\lambda^2)d\mu(\lambda)).$$

Combining \eqref{ddd1} and the asymptotic estimates \eqref{ddd2}-\eqref{ddd4},  from \eqref{odmin}, we get
\begin{equation}\label{polust0}
1-N(\lambda)=\frac{2\sqrt{2}}{\pi}\frac{|\alpha+\beta|^2}{\sqrt{\lambda}}+o(\lambda^{-5/4}) \quad\text{as}\quad  \lambda  \to \infty.
\end{equation}

Next, we evaluate   $|\alpha+\beta|^2$  via   the boundary data  $|\phi(0)|^2$.

One observes  (see \eqref{phirep}) that the boundary condition
$$
\phi(0)=\alpha g_+(0)+\beta g_-(0)+h(0)
$$
holds. Now,
since $h\in \Dom (\dot H)$,  we have $h(0)=0$, so that
$$
\phi(0)=\alpha g_+(0)+\beta g_-(0)=2^{1/4}(\alpha+\beta)
$$
as it follows from \eqref{nudada1}.
Hence,
$$
|\alpha+\beta|^2=\frac{|\phi(0)|^2}{\sqrt{2}}.
$$

Now, taking into account
that  $\phi(0)\ne0$, from \eqref{polust0} we get the asymptotic representation   $$
N(\lambda)=1-\frac{2}{\pi}\frac{|\phi(0)|^2}{\lambda^{1/2}}(1+o(1))
\quad \text{as}\quad \lambda\to \infty.
$$
Moreover, since $H$ is a non-negative operator,   we obviously have
$$
N(\lambda)=0, \quad \lambda<0.
$$

 By Theorem  \ref{stthm}  in Appendix H,   the distribution $N(\lambda)$ belongs to the domain  of normal attraction of
 the one-sided stable L\'evy distribution $F_\sigma $
with the characteristic function
\begin{equation}\label{lawlaw}f(t)=
\exp\left (-\sigma |t|^{1/2}\left (1-i\, \frac{t}{|t|}\right )\right ),
\end{equation}
 where
 $$
\sigma=\sqrt{\frac{2}{\pi}}|\phi(0)|^2.
$$

 Indeed, the distribution $N(\lambda)$ satisfies the conditions \eqref{as+} and \eqref{as-}  of Theorem \ref{stthm} in Appendix H with
 $\alpha =\frac12
 $,
 $$c_1= \frac{2}{\pi}|\phi(0)|^2 \quad \text{and }\quad c_2=0.
 $$
 Therefore, $N(\lambda) $ belongs to the domain  of normal attraction of   a stable law with the characteristic function
$$f(t)=
\exp\left (-\sigma |t|^{1/2}\left (1-i\, \beta \frac{t}{|t| }\omega\left (t, \frac12\right)\right )\right ).
$$
Here
\begin{align*}
\sigma&= (c_1+c_2)d\left ( \frac12\right )
=\frac{2}{\pi}|\phi(0)|^2\cdot d\left ( \frac12\right ),
\\
\beta&=\frac{c_1-c_2}{c_1+c_2}=1,
\\
\omega\left (t, \frac12\right)&=\tan \left (\frac{\pi}{4} \right )=1,
\end{align*}
and
$$
d\left (\frac12\right )=\Gamma (1/2)\cos \frac\pi4=\frac12\sqrt{2\pi}.
$$
The main assertion of the theorem is now proven.

To complete the proof of the theorem it remains to apply the $1/2$-stable limit theorem
to see that
\begin{equation}\label{nuonomb}
\lim_{n\to \infty}\left |\left (\exp \left (i n^{-2}t  H\right )\phi, \phi\right )\right |^{2n}=\exp\left (-2 \sigma |t|^{1/2}\right ),
\end{equation}
which justifies \eqref{nuono} by a  change of variables.

\end{proof}

The situation is quite different for any other self-adjoint realization of the  free  Schr\"odinger operator $H'$ on the semi-axis. In this case,
the spectral measure $(E_H(d\lambda)\phi,\phi)$ of a typical state $\phi \in W_2^2((0,\infty))$ has $r$-moments for all $r<3/2$. As a consequence,  such states are Zeno states under continuous
monitoring of the quantum evolution $\phi\mapsto e^{itH'}\phi$. However, $\phi$ becomes a resonant state in the time scale $\mathfrak{t}(t)=t^{2/3}$.
 Equivalently,
prolonged  measurements ``unfreeze" the quantum  system  and
$$
\lim_{n\to\infty}\left |(e^{it/n H}\phi,\phi)\right |^{2n\sqrt{n}}=e^{-\sigma' |t|^{3/2}}
$$
for some $\sigma' >0$.

\begin{theorem}\label{3/2thm}
Let $\cH=L^2((0, \infty))$, $\gamma \in \bbR$ and
\begin{equation}\label{schr2}
H'=- \frac{d^2}{dx^2}
\end{equation}
be  the Schr\"odinger operator with the  mixed boundary condition at the origin
$$
\Dom ( H')=\{f\in W_2^2((0, \infty))\,|\,f'(0)+\gamma f(0)=0\}.
$$

Suppose that $\phi\in W_2^2((0, \infty))$, $\|\phi\|=1$,  and assume, in addition,   that $$\phi '(0)+\gamma \phi(0)\ne 0.$$

Then
the  distribution function $N(\lambda)$ of
the spectral measure
$$ \nu_\phi(d\lambda)=(E_{H'}(d\lambda)\phi, \phi)$$ of the element $\phi$ belongs to the domain of normal attraction of the $3/2$-stable law
with the characteristic function
 $$
f(t)=\exp\left (-\sigma' |t|^{3/2}(1+i\, \sign(t))\right ),
$$
where
$$
 \sigma'=\frac23 \sqrt{\frac2\pi}|\phi'(0)+\gamma \phi (0)|^2.
$$

In particular, the state $\phi$ is a resonant state in the time-scale
$$\mathfrak{t}(t)=t^{2/3}.
$$ That is,
\begin{equation}\label{nuonohom}
\lim_{n\to \infty}[p(\mathfrak{t}(t/n))]^n=\exp\left (-2 \sigma' |t|\right ),
\end{equation}
where
$$
p(t)=|(e^{itH}\phi, \phi)|^2
$$
is the survival probability.

\end{theorem}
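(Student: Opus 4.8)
The plan is to follow the template of Theorem \ref{1/2thm} closely, since the two statements are structurally identical except for the boundary condition and the resulting power of $\lambda$ in the tail. The guiding idea is that the tail asymptotics of the distribution function $N(\lambda)$ are controlled by the failure of the test state $\phi$ to lie in $\Dom(\dot H')$, measured through the boundary data $\phi'(0)+\gamma\phi(0)$. First I would introduce the symmetric operator $\dot{H'}$ as the restriction of $H'$ to functions with $f(0)=f'(0)=0$ (the minimal operator), so that $\dot{H'}$ has deficiency indices $(2,2)$ as a second-order operator; more conveniently, I would work with the pair $(\dot{H'},H')$ and its Weyl--Titchmarsh function. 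The key analytic input is the explicit form of the Weyl--Titchmarsh function for the mixed (Robin) realization, which by the same Stieltjes machinery as in Theorem \ref{1/2thm} yields a spectral density behaving like $\Im M(\lambda+i0)\sim c\,\lambda^{1/2}$ up to corrections; the essential difference is that the relevant deficiency elements now enter through $\phi'(0)+\gamma\phi(0)$ rather than through $\phi(0)$, and the effective density governing the tail behaves like $\lambda^{3/2}$ instead of $\lambda^{1/2}$.

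Next I would carry out the model-representation computation exactly as in the proof of Theorem \ref{1/2thm}. Using the unitary map $U$ onto $L^2((0,\infty);d\mu)$ provided by Theorem \ref{unitar} in Appendix C, I would write the von Neumann decomposition $\phi=\alpha g_++\beta g_-+h$ with $h\in\Dom(\dot{H'})$ and compute $(U\phi)(\lambda)=\frac{a}{\lambda-i}+\frac{b}{\lambda+i}+(Uh)(\lambda)$. The dominant contribution to $1-N(\lambda)=\int_\lambda^\infty|(U\phi)(s)|^2\,d\mu(s)$ comes from the term $|a+b|^2\int_\lambda^\infty\frac{d\mu(s)}{s^2+1}$, and the cross terms and the $|Uh|^2$ term are controlled by the membership $Uh\in L^2((1+\lambda^2)d\mu)$ using Cauchy--Schwarz, giving errors of smaller order. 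With the density scaling as $\lambda^{3/2}$, the integral $\int_\lambda^\infty\frac{s^{3/2}}{s^2+1}\,ds$ diverges logarithmically in a naive sense, so the delicate point is that here the relevant combination is $|a-b|^2$ (tied to the \emph{derivative} boundary data) rather than $|a+b|^2$, producing the correct $\lambda^{-3/2}$ tail $1-N(\lambda)\sim\frac{2}{3}\sqrt{\tfrac{2}{\pi}}\frac{|\phi'(0)+\gamma\phi(0)|^2}{\lambda^{3/2}}$. I would then identify the boundary constant by evaluating $\phi'(0)+\gamma\phi(0)$ in terms of $\alpha,\beta$ through the explicit deficiency elements and the fact that $h$ satisfies the homogeneous boundary condition.

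Having established the one-sided tail $N(\lambda)=1-c\,\lambda^{-3/2}(1+o(1))$ with $N(\lambda)=0$ for $\lambda<0$, I would invoke Theorem \ref{stthm} in Appendix H with $\alpha=\tfrac{3}{2}$, $c_1=\frac{2}{\pi}|\phi'(0)+\gamma\phi(0)|^2$ (up to the normalizing constant $d(3/2)$) and $c_2=0$, to conclude that $N$ lies in the domain of normal attraction of the $\tfrac32$-stable law with characteristic function $\exp(-\sigma'|t|^{3/2}(1+i\,\sign(t)))$, where $\omega(t,3/2)=\tan(3\pi/4)=-1$ supplies the $+i\,\sign(t)$ factor and $d(3/2)=\Gamma(-1/2)\cos(3\pi/4)$ fixes $\sigma'=\frac{2}{3}\sqrt{\tfrac2\pi}|\phi'(0)+\gamma\phi(0)|^2$. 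The exponential-decay statement \eqref{nuonohom} in the time-scale $\mathfrak{t}(t)=t^{2/3}$ then follows from the $\tfrac32$-stable limit theorem by the same change of variables as in \eqref{nuonomb}.

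The main obstacle I anticipate is pinning down the correct boundary functional and the precise constant in the tail: one must verify that for the Robin realization it is genuinely $\phi'(0)+\gamma\phi(0)$ (not $\phi(0)$) that controls the leading $\lambda^{-3/2}$ term, which requires the explicit Weyl--Titchmarsh/deficiency-element data for $H'$ and careful bookkeeping of the half-power $\Gamma$-function and Poisson-type constants. The error-term estimates \eqref{ddd2}--\eqref{ddd4} must also be redone with the heavier density $d\mu\sim\lambda^{1/2}d\lambda$, checking that the threshold $h\in\Dom(\dot{H'})$ still suffices to push all remainder terms below $\lambda^{-3/2}$; this is where the hypothesis $\phi\in W_2^2$ is used essentially, and I would confirm that $W_2^2$ regularity gives $Uh\in L^2((1+\lambda^2)d\mu)$ with the scaled density so that the Cauchy--Schwarz bounds close.
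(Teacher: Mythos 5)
Your overall scaffolding (model representation via Theorem \ref{unitar}, von Neumann decomposition, tail asymptotics of $1-N(\lambda)$, then the Gnedenko--Kolmogorov theorem with $\alpha=\tfrac32$, $c_2=0$, $\omega(t,3/2)=-1$) matches the paper, but the analytic core of your argument is wrong, and the device you introduce to rescue it does not exist. First, a smaller point: the restriction of $H'$ to $\{f:\,f(0)=f'(0)=0\}$ has deficiency indices $(1,1)$, not $(2,2)$ --- the expression $-d^2/dx^2$ is in the limit point case at $\infty$; this is exactly what makes the rank-one/Appendix C machinery applicable, so getting it right is not cosmetic. Second, and decisively: you assert that the spectral density for the Robin realization still behaves like $\Im M(\lambda+i0)\sim c\,\lambda^{1/2}$ (``the heavier density $d\mu\sim\lambda^{1/2}d\lambda$''), and, seeing that this cannot produce a $\lambda^{-3/2}$ tail, you postulate that the leading coefficient switches from $|a+b|^2$ to $|a-b|^2$. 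In fact the Weyl--Titchmarsh measure of the pair $(\dot H,H')$ has density
\begin{equation*}
\frac{d\mu(\lambda)}{d\lambda}=\frac1\pi\,\frac{\sqrt{2\lambda}}{(\sin\alpha-\cos\alpha)^2+2\lambda\cos^2\alpha}
\sim\frac{1}{\pi\cos^2\alpha}\,\frac{1}{\sqrt{2\lambda}},\qquad\lambda\to\infty ,
\end{equation*}
i.e.\ it \emph{decays} like $\lambda^{-1/2}$ precisely because $\cos\alpha\ne0$ for every non-Dirichlet boundary condition. This decay is the entire mechanism: $\int_\lambda^\infty\frac{d\mu(s)}{s^2+1}\sim\frac{\sqrt2}{3\pi\cos^2\alpha}\,\lambda^{-3/2}$, and the leading term of $1-N(\lambda)$ is, exactly as in the Dirichlet case, $|\alpha+\beta|^2\int_\lambda^\infty\frac{d\mu(s)}{s^2+1}$ --- no cancellation, no sign flip. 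The boundary functional enters through $|\alpha+\beta|^2$, not $|\alpha-\beta|^2$: once the deficiency elements are normalized so that $g_+-g_-\in\Dom(H')$ (which forces $g_-=\Theta\overline{g_+}$ with $\Theta=(\zeta+\gamma)/(\bar\zeta+\gamma)$, $\zeta=\tfrac{\sqrt2}{2}(1-i)$), and since $h\in\Dom(\dot H)$ kills both $h(0)$ and $h'(0)$, one finds $\phi'(0)+\gamma\phi(0)=(\alpha+\beta)\,\bigl(g_+'(0)+\gamma g_+(0)\bigr)$ up to the normalization constant, whence $|\alpha+\beta|^2=\frac{1}{\sqrt2}\frac{|\phi'(0)+\gamma\phi(0)|^2}{(\gamma-\frac{\sqrt2}{2})^2+\frac12}$.

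With your (incorrect) growing density the cross term $2\Re\,a\bar b\int_\lambda^\infty\bigl(\frac{1}{(s-i)^2}-\frac{1}{s^2+1}\bigr)d\mu(s)$ and the $|Uh|^2$ term remain of lower order than $|a+b|^2\int_\lambda^\infty\frac{d\mu(s)}{s^2+1}$, so the tail would come out $\sim\lambda^{-1/2}$ and you would land back in the domain of attraction of the $\tfrac12$-stable law --- the $\tfrac32$-law is unreachable along your route. The correct proof requires computing $\Im M(\lambda+i0)$ for the Robin extension explicitly (using $\gamma=2^{-1/2}(1-\tan\alpha)$), observing the $\lambda^{-1/2}$ decay, and then redoing the remainder estimates to the finer order $o(\lambda^{-7/4})$ (rather than $o(\lambda^{-5/4})$ as in the Dirichlet case), which the bounds \eqref{ddd2}--\eqref{ddd4} do deliver once the decaying density is inserted. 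Your final step --- Theorem \ref{stthm} with $c_1=\frac{2}{3\pi}|\phi'(0)+\gamma\phi(0)|^2$, $d(3/2)=\Gamma(-1/2)\cos\frac{3\pi}{4}=\frac12\sqrt{2\pi}$, and the change of variables giving \eqref{nuonohom} --- is fine, but it is built on a tail formula your argument cannot produce.
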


\begin{proof} Let  $\dot H$  be the symmetric restriction of the operator $H'$ on
$$
\Dom(\dot H)=\{f\in \Dom (H')\, |\, f(0)=f'(0)=0\}.
$$

Denote by  $g_\pm$, $\|g_\pm\|=1$,  the deficiency elements of the symmetric operator $\dot H$
\begin{equation}\label{gpmrep}
g_+(x)=2^{1/4}e^{i\frac{\sqrt{2}}{2}x}e^{-\frac{\sqrt{2}}{2}x}\quad\quad \text{and }\quad
g_-(x)=\overline{ g_+(x)}\Theta,
\end{equation}
where $\Theta$ is chosen in such a way to ensure  that
\begin{equation}\label{minusy}
g_+-g_-\in \Dom (H').
\end{equation}

The parameter $\Theta$ can be determined as follows. From \eqref{minusy} it follows that   $g_+(x)-g_-(x)$  should  satisfy the boundary condition
$$
(g_+'(0)-g_-'(0))+\gamma (g_+(0)-g_-(0))=0
$$
and hence
\begin{equation}\label{eqfort}
(\zeta -\bar \zeta \Theta)+\gamma  (1-\Theta)=0,
\end{equation}
where
\begin{equation}\label{delo}
\zeta=
\frac{\sqrt{2}}{2}(1-i).
\end{equation}
Solving \eqref{eqfort} for $\Theta$  yields
$$
\Theta=\frac{\zeta +\gamma}{\bar\zeta +\gamma}.
$$

Since \eqref{minusy} holds,   one can apply Theorem \ref{unitar} in Appendix C, and
the same reasoning as the one in the proof of Theorem \ref{1/2thm} shows that the leading term of the asymptotics of the distribution function $N(\lambda)$ is given by
\begin{equation}\label{justi}
1-N(\lambda)= |\alpha +\beta|^2\int_\lambda^\infty  \frac{d\mu(s)}{s^2+1}+o(\lambda^{-7/4}) \quad \text{as}\quad \lambda \to \infty.
\end{equation}
Here
$\mu(d\lambda) $ is the measure associated with the Herglotz-Nevanlinna decomposition for the Weyl-Titchmarsh  function   $M(z)$ associated  with the pair $(\dot H, H')$
$$
M(z)=\int_{\text{spec}(H')}\left (\frac{1}{\lambda-z}-\frac{\lambda}{\lambda^2+1}\right )d\mu(\lambda)
$$
  and  $\alpha$ and $\beta $ are determined by   the von Neumann decomposition
\begin{equation}\label{vnrepr}
\phi= \alpha g_++\beta g_-+h, \quad h\in \Dom (\dot H).
\end{equation}

To justify the asymptotic representation \eqref{justi} we argue as follows.

First recall, that it is known that
the Weyl-Titchmarsh function associated with the pair $(\dot H,H')$ has    the form
\begin{equation}\label{mspec}
M(z)=\frac{\cos \alpha +\sin \alpha (i\sqrt{2z}+1)}{\sin  \alpha -\cos\alpha (i\sqrt{2z}+1)}, \quad z\in \bbC_+.
\end{equation}
Here the boundary condition parameter $\gamma$ and (the von Neumann extension) parameter $\alpha$ are related as   \cite{GMT}
\begin{equation}\label{tutsi}
\gamma=2^{-1/2} (1- \tan \alpha), \quad \alpha\ne \frac\pi2.
\end{equation}
From \eqref{mspec} it follows that the restriction of the measure $\mu(d\lambda)$ on the positive semi-axis is an absolutely continuous measure with the density given by
$$
\frac{d\mu(\lambda)}{d\lambda}=\frac1\pi \Im (M(\lambda+i0))d\lambda, \quad \lambda>0.
$$
Explicit computations show that
\begin{align*}
\frac1\pi \Im (M(\lambda+i0))d\lambda&
=\frac1\pi \Im \frac{\cos \alpha +\sin \alpha (i\sqrt{2\lambda}+1)}{\sin  \alpha -\cos\alpha (i\sqrt{2\lambda}+1)}
\\&
=\frac1\pi \Im  \frac{(\cos \alpha +\sin \alpha (i\sqrt{2\lambda}+1))(\sin  \alpha -\cos\alpha (-i\sqrt{2\lambda}+1))}{(\sin  \alpha -\cos\alpha)^2+2\lambda \cos^2\alpha }
\\&
=\frac1\pi \Im  \frac{(\cos \alpha   +\sin \alpha+ i \sqrt{2\lambda}\sin \alpha )(\sin  \alpha -\cos\alpha +i\sqrt{2\lambda}\cos\alpha )}{(\sin  \alpha -\cos\alpha)^2+2\lambda \cos^2\alpha }
\\&
=\frac1\pi  \frac{\sqrt{2\lambda}}{(\sin  \alpha -\cos\alpha)^2+2\lambda \cos^2\alpha }, \quad \lambda>0.
\end{align*}
Therefore,
\begin{equation}\label{meraa}
d\mu(\lambda)=\frac1\pi  \frac{\sqrt{2\lambda}}{(\sin  \alpha -\cos\alpha)^2+2\lambda \cos^2\alpha }d\lambda, \quad \lambda>0.
\end{equation}

To justify \eqref{justi}, in particular, to see that the  error term is of the order of $o(\lambda^{-7/4}) $ as $\lambda \to \infty$,  we argue  exactly as  in the proof of Theorem \ref{1/2thm}.  To do so,  we need to estimate  the following three integrals (we use the notation from the proof of Theorem \ref{1/2thm})
\begin{align*}
I&=\int_{\lambda}^\infty \left |\frac{1}{(s -i)^2}-\frac{1}{s^2+1}\right |d\mu(s),
\\
II&=\left |\int_{\lambda}^\infty \frac{\overline{(Uh)(s)}}{s\pm i}d\mu(s)\right |,
\end{align*}
and
$$
III=\int_{\lambda}^\infty\left | (Uh)(s)\right |^2d\mu(s).
$$

We have (as $\lambda\to  \infty)$
\begin{equation}\label{ddd22}
I\le \frac{3}\lambda \int_{\lambda}^\infty  \frac{d\mu(s) }{s^2+1}=O(\lambda^{-5/2}).
 \end{equation}
 Since  $h\in \Dom (\dot H)\subset  \Dom (H)$, and therefore
   $Uh\in L^2(\bbR; (1+\lambda^2)d\mu(\lambda)))$, we also have the asymptotic estimates
\begin{align}
II&\le \frac1\lambda \sqrt{\int_{\lambda}^\infty  \frac{d\mu(s) }{s^2+1}}\cdot\sqrt{\int_\lambda^\infty (1+s^2)|(Uh)(s)|^2 d\mu(s)}
=o(\lambda^{-7/4})
\label{ddd32}
\end{align}
and
\begin{equation}\label{ddd42}
III\le \frac1{\lambda^2}\int_\lambda^\infty (1+s^2)|(Uh)(s)|^2 d\mu(s)=o(\lambda^{-2})
\end{equation}
$$\text{as}\quad \lambda\to \infty.
$$
Therefore,
$$I+II+III=o(\lambda^{-7/4}) \quad \text{as} \quad  \lambda\to  \infty,
$$
which completes the justification of the representation  \eqref{justi}.

Next, combining \eqref{justi} and \eqref{meraa} we obtain
 \begin{align}
1-N(\lambda)&=  |\alpha +\beta|^2\int_\lambda^\infty \frac1\pi  \frac{\sqrt{2s}}{(\sin  \alpha -\cos\alpha)^2+2s\cos^2\alpha }\frac{ds}{s^2+1}+o(\lambda^{-7/4})
\nonumber
\\
&=|\alpha +\beta |^2\frac{\sqrt{2}}{3\pi\cos^2\alpha}\lambda^{-3/2} (1+o(1)) +o(\lambda^{-7/4})
 \nonumber
\\
&
=|\alpha +\beta |^2\frac{\sqrt{2}}{3\pi} ((\sqrt{2} \gamma-1)^2+1)\lambda^{-3/2}(1+o(1))+o(\lambda^{-7/4})
\quad \text{as}\quad \lambda\to \infty.\label{pochti}
\end{align}
Here we have  used the relation
$$
\frac{1}{\cos^2\alpha}=((\sqrt{2} \gamma-1)^2+1)
$$
that easily follows from \eqref{tutsi}. Recall that $\alpha\ne \frac\pi2$ and therefore $\cos \alpha\ne 0$.

Our next claim is that
\begin{equation}\label{pochtichti}
|\alpha +\beta |^2=\frac{1}{\sqrt{2}}\frac{1}{(\gamma -\frac{\sqrt{2}}{2})^2+\frac12}|\phi'(0)+\gamma\phi(0)|^2.
\end{equation}

From \eqref{vnrepr} it follows that
\begin{equation}\label{abel1}
\phi(0)=\alpha g_+(0)+\beta g_-(0)=\alpha g_+(0)+\beta \Theta \overline{g_+(0)}=(\alpha +\beta \Theta)2^{1/4}
\end{equation}
and
\begin{equation}\label{abel2}
\phi'(0)=\alpha g_+'(0)+\beta g_-'(0)=(\alpha\zeta  +\beta\overline{\zeta } \Theta)2^{1/4},
\end{equation}
where $\zeta$ is given by \eqref{delo}.

Rewriting \eqref{abel1} and \eqref{abel2} as
$$
\begin{pmatrix}1&\Theta\\
\zeta&\bar \zeta \Theta
\end{pmatrix}
\begin{pmatrix}\alpha\\\beta
\end{pmatrix}=2^{-1/4}
\begin{pmatrix}\phi(0)\\
\phi'(0)
\end{pmatrix},
$$
and solving this system of algebraic equations
one obtains
$$
\begin{pmatrix}\alpha\\\beta
\end{pmatrix}= \frac{1}{\Theta (\bar \zeta-\zeta)}
\begin{pmatrix}\bar \zeta \Theta&-\Theta\\
-\zeta&1
\end{pmatrix}
2^{-1/4}
\begin{pmatrix}\phi(0)\\
\phi'(0)
\end{pmatrix}.
$$
Therefore,
\begin{align*}
\alpha +\beta & =\frac{1}{2^{1/4}\Theta (\bar \zeta-\zeta)}[(\bar \zeta \Theta-\zeta)\phi(0)+(1-\Theta)\phi'(0)]
\\&=\frac{1-\Theta}{2^{1/4}\Theta (\bar \zeta-\zeta)}\left [\frac{\bar \zeta \Theta-\zeta}{1-\Theta}\phi(0)+\phi'(0)\right ].
\end{align*}
From \eqref{eqfort} it follows that
$$
\gamma=\frac{\bar \zeta \Theta-\zeta}{1-\Theta},
$$
so that
$$
\alpha +\beta=\frac{1-\Theta}{2^{1/4}\Theta (\bar \zeta-\zeta)}[\gamma\phi(0)+\phi(0)].
$$
One also observes that
$$
\frac{1-\Theta}{\Theta(\bar \zeta-\zeta)}=\frac{1}{ \zeta+\gamma},$$
which yields
\begin{align*}
|\alpha +\beta |^2&=\frac{1}{\sqrt{2}}\frac{1}{|\zeta +\gamma |^2}|\phi'(0)+\gamma\phi(0)|^2
\\&=\frac{1}{\sqrt{2}}\frac{1}{(\gamma -\frac{\sqrt{2}}{2})^2+\frac12}|\phi'(0)+\gamma\phi(0)|^2,
\end{align*}
and the claim  \eqref{pochtichti} follows.

Combining \eqref{pochti} and \eqref{pochtichti}  and taking into account that  $\phi'(0)+\gamma\phi(0)\ne 0$ (by the hypothesis),
we
finally obtain the asymptotic representation
\begin{align}
1-N(\lambda)&=|\alpha +\beta |^2\frac{\sqrt{2}}{3\pi} ((\sqrt{2} \gamma-1)^2+1)\lambda^{-3/2}(1+o(1))\label{polustH}\\
&=\frac{1}{\sqrt{2}}
\cdot  \frac{\sqrt{2}}{3\pi} \frac{(\sqrt{2} \gamma-1)^2+1}{(\gamma -\frac{\sqrt{2}}{2})^2+\frac12}|\phi'(0)+\gamma\phi(0)|^2 \lambda^{-3/2}(1+o(1))
\nonumber \\
&=\frac{2}{3\pi} |\phi'(0)+\gamma\phi(0)|^2 \lambda^{-3/2} (1+o(1))\quad \text{as } \lambda \to \infty. \nonumber
\end{align}

Notice that for $\gamma <0$  the operator  $H'$ has a simple eigenvalue $\lambda_0=-\gamma^2$ and therefore $N(\lambda)=0$ whenever $\lambda <-\gamma^2$,   and   $N(\lambda)=0$
for all  $\lambda<0$ if $\gamma \ge 0$.
Therefore,
\begin{equation}\label{naosi}
\lim_{\lambda\to -\infty}|\lambda|^{3/2}N(\lambda)=0.
\end{equation}

 By Theorem  \ref{stthm} in Appendix H,   the distribution $N(\lambda)$ belongs to the domain  of normal attraction of
the one-sided $\frac32$-stable  distribution
with the characteristic function
\begin{equation}\label{lawlaw23}
f(t)=
\exp\left (-\sigma' |t|^{3/2}\left (1+i\, \frac{t}{|t|}\right )\right ),
\end{equation}
 where
 $$
\sigma'= \frac23 \sqrt{\frac2\pi}|\phi'(0)+\gamma \phi (0)|^2.
$$

 Indeed, the distribution $N(\lambda)$ satisfies the conditions \eqref{as+} and \eqref{as-}  of Theorem \ref{stthm} in Appendix H with
 $\alpha =\frac32
 $,
 $$c_1= \frac{2}{3\pi}|\phi'(0)+\gamma \phi (0)|^2 \quad \text{and }\quad c_2=0.
 $$
 By Theorem \ref{stthm}, $N(\lambda) $ belongs to the domain  of normal attraction of   a stable law with the characteristic function
$$f(t)=
\exp\left (-\sigma' |t|^{3/2}\left (1-i\, \beta \frac{t}{|t| }\omega\left (t, \frac32\right)\right )\right ),
$$
where the parameters $\sigma'$ and $\beta$ are given by
\begin{align*}
\sigma'&= (c_1+c_2)d\left ( \frac32\right )
=\frac{2}{\pi}|\phi'(0)+\gamma \phi (0)|^2\cdot d\left ( \frac32\right ),
\\
\beta&=\frac{c_1-c_2}{c_1+c_2}=1,
\end{align*}
with
$$
d\left (\frac32\right )=\Gamma (-1/2)\cos \frac{3\pi}4=\frac12\sqrt{2\pi},
$$
and
$$
\omega\left (t, \frac32\right)=\tan \left (\frac{3\pi}{4} \right )=-1.
$$
Now \eqref{lawlaw23} follows.

To complete the proof of the theorem it remains to apply the $3/2$-stable limit theorem
to see that
\begin{equation}\label{nuonomb'}
\lim_{n\to \infty}\left |\left (\exp \left (i n^{-2/3}t  H\right )\phi, \phi\right )\right |^{2n}=\exp\left (-2 \sigma' |t|^{3/2}\right ),
\end{equation}
which justifies \eqref{nuonohom} by a  change of variables.
\end{proof}

\begin{remark}
The right hand side of \eqref{nuonomb'} is the characteristic functions of the Holtsmark distribution \cite{HM}. The Holtsmark distribution is a special case of a symmetric stable distribution with the index of stability   $\alpha=3/2$ and skewness parameter $\beta=0$  (see Appendix \ref{gnedkol}, eqs. \eqref{stlaw}, \eqref{data}
 with $\alpha=3/2$ and  $\beta=\gamma=0$).

\end{remark}

{\bf Scholium.} The $1/2$- and $3/2$-central limit theorems, Theorems  \ref{1/2thm} and  \ref{3/2thm}, respectively, show  that the results of continuous monitoring of the quantum evolution of a smooth state $\phi$ are rather sensitive to the choice of a self-adjoint realization of the Hamiltonian, the Schr\"odinger operator \eqref{schr1} and\eqref{schr2}, respectively.

For instance,  for the   Schr\"odinger operator   $H$   with the Dirichlet boundary condition at the origin  we have
$$
\lim_{n\to \infty}|(e^{it/nH}\phi, \phi)|^{2\sqrt{n}}=e^{-2\sigma |t|^{1/2}},
$$
where
$$
\sigma=\sqrt{\frac2\pi}|\phi(0)|^2.
$$
Therefore,  if the probability density      $|\phi(0)|^2$  to find a quantum particle at the origin  does not vanish, then  the state $\phi$ is an anti-Zeno state. That is,
$$
\lim_{n\to \infty}|(e^{it/nH}\phi, \phi)|^{2n}=0. $$

In the meanwhile,  for the  Schr\"odinger operator $H'$
with the  mixed boundary condition
$$
f'(0)+\gamma f(0)=0,
$$
one obtains that
$$
\lim_{n\to \infty}|(e^{it/nH'}\phi, \phi)|^{2n\sqrt{n}}=e^{-2\sigma' |t|^{3/2}} ,
$$
where
$$
 \sigma'=\frac23 \sqrt{\frac2\pi}|\phi'(0)+\gamma \phi (0)|^2.
$$

Hence, $$
\lim_{n\to \infty}|(e^{it/nH'}\phi, \phi)|^{2n}=1 .
$$
In other words, any smooth  state $\phi$  is a Zeno  state under the continuous monitoring of the evolution $\phi\mapsto e ^{it/nH'}\phi$ where $H'$ is any self-adjoint realizations of the  second differentiation operator different form the Friedrichs extension $H$ of $\dot H$.

We summarize the observations above in a more formal way.

\begin{corollary}\label{frid}
{\it Suppose that $
\phi\in W_2^2((0, \infty))
$, $\|\phi\|=1$.

\begin{itemize}
\item[(i)]  Let $H$  be  the Schr\"odinger operator with the Dirichlet boundary condition at the origin.
Then  $\phi$ is a Zeno state under the continuous monitoring of the unitary evolution
$\phi\mapsto e^{itH}\phi$ if and only if $\phi(0)=0$. Otherwise, $\phi$ is an anti-Zeno state.

\item[(ii)]
If $H'$ is any other self-adjoint realization of the differential expression
$$\tau=-\frac{d^2}{dx^2}
$$ different from its Friedrichs extension, then $\phi $ is a Zeno state under the continuous monitoring of the unitary evolution
$\phi\mapsto e^{itH'}\phi$.
\end{itemize}
}
\end{corollary}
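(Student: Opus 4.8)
The plan is to recognize Corollary \ref{frid} as a direct packaging of Theorems \ref{1/2thm} and \ref{3/2thm} together with the Zeno/anti-Zeno dichotomy encoded in Proposition \ref{zenoprop} and Lemma \ref{AZS}. The whole statement reduces to understanding the tail behavior of the spectral distribution function $N(\lambda)=(E_H(d\lambda)\phi,\phi)((-\infty,\lambda])$ for $\phi\in W_2^2((0,\infty))$ and deciding, via the light-tails criterion \eqref{lt}, whether $\phi$ freezes or accelerates under continuous monitoring.

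For part (i), I would split into the two cases $\phi(0)=0$ and $\phi(0)\ne 0$. If $\phi(0)\ne 0$, Theorem \ref{1/2thm} applies verbatim and gives the asymptotics $1-N(\lambda)=\frac{2}{\pi}|\phi(0)|^2\lambda^{-1/2}(1+o(1))$ as $\lambda\to\infty$, with $N(\lambda)=0$ for $\lambda<0$ since $H\ge 0$. Hence $N$ lies in the domain of normal attraction of the one-sided $\tfrac12$-stable law, an $\alpha$-stable law with $\alpha=\tfrac12<1$, so Lemma \ref{AZS} immediately yields that $\phi$ is an anti-Zeno state. Conversely, if $\phi(0)=0$, then the leading term in the tail expansion \eqref{polust0} vanishes and the estimates \eqref{ddd2}--\eqref{ddd4} already show $1-N(\lambda)=o(\lambda^{-5/4})$; in particular $\lambda(1-N(\lambda)+N(-\lambda))\to 0$, so the light-tails requirement \eqref{lt} holds and Proposition \ref{zenoprop} forces $\phi$ to be a Zeno state. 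This establishes the ``if and only if.''

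For part (ii), with $H'$ any non-Friedrichs realization, I would invoke Theorem \ref{3/2thm}. The key point is that for such $H'$ the relevant boundary functional is $\phi'(0)+\gamma\phi(0)$ rather than $\phi(0)$, and the spectral density \eqref{meraa} decays like $\lambda^{-3/2}$ rather than $\lambda^{-1/2}$, producing the tail $1-N(\lambda)=O(\lambda^{-3/2})$ as in \eqref{polustH}, while the point spectrum (for $\gamma<0$) is bounded below so that \eqref{naosi} gives $|\lambda|^{3/2}N(\lambda)\to 0$. In either event $\lambda(1-N(\lambda)+N(-\lambda))\to 0$, so again \eqref{lt} holds and Proposition \ref{zenoprop} makes $\phi$ a Zeno state regardless of the value of $\phi'(0)+\gamma\phi(0)$. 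Note that the $W_2^2$ regularity of $\phi$ is what guarantees $\phi\in\Dom((\dot H)^*)$ and hence the clean von Neumann decomposition $\phi=\alpha g_++\beta g_-+h$ with $h\in\Dom(\dot H)$ underlying all these tail estimates.

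The main obstacle, and the only genuine content beyond bookkeeping, is the case $\phi(0)=0$ in part (i): one must confirm that removing the $\lambda^{-1/2}$ term genuinely improves the tail to satisfy \eqref{lt}, i.e. that no slower-than-$\lambda^{-1}$ decay survives. This follows because when $\alpha+\beta=0$ (equivalently $\phi(0)=0$ by the relation $\phi(0)=2^{1/4}(\alpha+\beta)$ from the proof of Theorem \ref{1/2thm}), the dominant term in \eqref{odmin} drops out and the remaining three terms are all $o(\lambda^{-5/4})$ by \eqref{ddd2}--\eqref{ddd4}, which is more than enough for $\lambda(1-N(\lambda))\to 0$. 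Thus the argument is complete once these asymptotics are assembled; no new estimates are required beyond those already proved in Theorems \ref{1/2thm} and \ref{3/2thm}.
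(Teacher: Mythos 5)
Your proof is correct, but it routes through heavier machinery than the paper does, in two places. For part (i) with $\phi(0)=0$, the paper does not re-run any tail estimates: since the Dirichlet domain is exactly $\{f\in W_2^2((0,\infty)) : f(0)=0\}$, it observes that $\phi\in\Dom(H)\subset\Dom(H^{1/2})$ and invokes Proposition \ref{zenoprop} directly; your route through \eqref{odmin} with $\alpha+\beta=0$ and the error bounds \eqref{ddd2}--\eqref{ddd4} is valid but re-derives what domain membership gives for free. For part (ii), the paper avoids Theorem \ref{3/2thm} entirely: for every non-Friedrichs self-adjoint realization $H'$ the form domain is $\Dom(|H'|^{1/2})=W_2^1((0,\infty))$ with no boundary condition, so $\phi\in W_2^2\subset W_2^1$ forces the spectral measure to have a first moment, and Proposition \ref{zenoprop} concludes. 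Your route via the $\lambda^{-3/2}$ tail in \eqref{polustH} works, but be aware that Theorem \ref{3/2thm} is stated under the hypothesis $\phi'(0)+\gamma\phi(0)\ne 0$; in the degenerate case $\phi'(0)+\gamma\phi(0)=0$ your phrase ``in either event'' glosses over the needed step, which is either to repeat the vanishing-leading-term argument (as you did explicitly in part (i)) or, more simply, to note $\phi\in\Dom(H')$. Also, a small slip: in \eqref{meraa} it is the density $d\mu/d\lambda$ that decays like $\lambda^{-1/2}$ (versus the $\lambda^{1/2}$ growth in \eqref{forsss}), while it is the tail $1-N(\lambda)$ that decays like $\lambda^{-3/2}$. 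What your approach buys is uniformity and quantitative content: everything reduces to explicit tail asymptotics of the spectral distribution, which also display the decay rates. What the paper's approach buys is economy: both Zeno assertions follow from Proposition \ref{zenoprop} plus elementary domain or form-domain inclusions, with the stable-law machinery needed only for the anti-Zeno half of (i), where your argument and the paper's coincide.
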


\begin{proof} (i). If $\phi(0)=0$, then $\phi\in \Dom (H)$ and therefore $\phi$ is a Zeno state  under the continuous monitoring of the unitary evolution
$\phi\mapsto e^{itH}\phi$. If $\phi(0)\ne 0$, by Theorem \ref{1/2thm} the distribution function of the spectral measure of the element $\phi$ belongs to the domain of attraction
of a $1/2$-stable law, and therefore $\phi $ is an anti-Zeno state by Lemma \ref{AZS}.

(ii).
Notice that  for any self-adjoint extension $H'$ different from the Friedrichs extension $H$
the domain of the quadratic form of $H'$ coincides with the Sobolev class $W_2^1((0,\infty))$. Since  $\phi \in W_2^2((0,\infty))\subset  W_2^1((0,\infty))$, we have that
  the distribution $N(\lambda)$ of the spectral measure $(\EE_{H'}(d\lambda)\phi, \phi)$
of the state $\phi$ has the first moment and hence $\phi$ is necessarily a Zeno state.

\end{proof}
\begin{remark} (i). In the case of the Schr\"odinger operator $H$ with the Dirichlet boundary condition at the origin, one can slightly  relax the smoothness requirement on the state $\phi$ that $\phi\in W_2^2((0, \infty))$:
If $\phi\in W_2^1((0, \infty))$ only and  $\phi(0)=0$, then the  state $\phi$ belongs to the domain of the quadratic form of the  Schr\"odinger operator $H$.  In this case,  $\phi$
 is also a Zeno state under the continuous monitoring of the unitary evolution
$\phi\mapsto e^{itH}\phi$  by Proposition \ref{zenoprop}.

 (ii). From Theorem \ref{3/2thm} it follows that  the spectral measure
 $$\nu_\phi (d\lambda)=(\EE_{H'}(d\lambda)\phi, \phi)$$
of the state $\phi$ has  moments of order $r$ for all  $r<\frac32$. In particular,  the state $\phi$ belongs to  the domain of the quadratic form of $H'$ and therefore,
$\phi$ is a Zeno state under the continuous monitoring of the unitary evolution
$\phi\mapsto e^{itH'}\phi$  by Proposition \ref{zenoprop}.

(iii). As far as the domain issues are concerned, we have the following inclusions
$$
\Dom (\dot H)\subset \Dom ( H)\subset\Dom ((\dot H)^*)=W_2^2((0, \infty))$$
and
$$
 \Dom ((\dot H)^*)\cap \Dom ( H^{1/2})= \Dom ( H)=\{f\in W_2^2((0, \infty))\,|\,  f(0+)=0\}
$$
for  the Friedrichs extension $H$  of $\dot H$, $H\ge 0$.
For any self-adjoint extension $H'$ different from the Friedrichs extension $H$ we have
$$
\Dom (\dot H)\subset \Dom ( H')\subset\Dom ((\dot H)^*) \subset \Dom ( |H'|^{1/2})=W_2^1((0, \infty))
$$
and therefore
$$
 \Dom ((\dot H)^*)\cap \Dom ( |H'|^{1/2})=\Dom ((\dot H)^*)=W_2^2((0, \infty)).
$$

Notice that  for the Friedrichs extension $H$ we have
$$
 \Dom ( H^{1/2})= \{f \in W_2^1((0, \infty))\,|\, f(0+)=0\}\ne  \Dom ( |H'|^{1/2})=W_2^1((0, \infty)),
$$
which  explains the peculiar ``phase transition" in the relative geometry of domains (the Sobolev spaces) when replacing the Friedrichs extension $H$  with  any other self-adjoint extension $H'$.
\end{remark}

\section{The Quantum Zeno Effect versus Exponential Decay   alternative  }\label{s17}

Throughout this section we assume that  $\bbY$  is a  metric graph  in one of the  Cases (i)-(iii) (see the classification in the beginning of Section 4).
Denote by
$(\dot D, \widehat D, D)$  the triple of  differentiation operators on $\bbY$  as introduced in Section \ref{s9}.

Recall that  in Case (i), the metric graph  has the form
$ \bbY=(-\infty, 0)\sqcup(0,\infty)$, in Case (ii),
$ \bbY=(0,\ell)$, in  Case (iii),
$\bbY =(-\infty, 0)\sqcup(0,\infty)\sqcup(0,\ell)$.
Also recall that the reference self-adjoint operator   $D$ is   the differentiation operator  on  the graph $\bbY$
 defined on
\begin{align*}
\Dom(D)&=\{f_\infty\in W_2^1(\bbY)\, | \, f_\infty(0+)=- f_\infty(0-)\},
\\
\Dom(D)&=\{f_\ell\in  W_2^1(\bbY)\, |\, f_\ell(0)=-f_\ell(\ell)\},
\end{align*}
$$
\Dom(  D)=\left \{ f_\infty\oplus f_\ell\in  W_2^1(\bbY)\,
\Bigg|
\,
 \begin{cases}
f_\infty(0+)&=k f_\infty(0-) + \sqrt{1-k^2}  f_\ell(\ell)
\\
f_\ell(0+)&=\sqrt{1-k^2} f_\infty(0-)-k f_\ell(\ell)\\
\end{cases}
\right \},
$$
in Cases (i)--(iii), respectively.
Here
$
0<k<1$ is the parameter from the boundary condition \eqref{dotdom0} (the quantum gate coefficient)  that determines the symmetric operator $\dot D$ in Case (iii).

More generally, see Theorem \ref{decr} below, we will also  deal with the triples $(\dot D, \widehat D, D_\Theta)$ where $D_\Theta$, $|\Theta|=1$ is the self-adjoint operator referred to in Theorem \ref{gengen}.

Our main concern is to
study small-time asymptotic behavior of the quantum survival probability
$$
p(t)=|(e^{itH}\phi,\phi)|^2 \quad \text{as } \quad t\to 0,
$$
where $H=D$, or, more generally,   $H=D_\Theta$, the magnetic Hamiltonian. We also assume that the   state $\phi$  belongs to the test space
$$
\cL=\Dom ((\dot D)^*).
$$
We obviously have the inclusion
$$
\cL\subsetneqq \bigoplus_{e\subset \bbY}W_2^1(e),
$$
where  the sum is taken over all edges $e$ of the graph $\bbY$.

The main goal of this section
is to show that
 the  survival probability under  continuous monitoring   of the quantum evolution
 $$\phi \mapsto e^{itH}\phi, \quad \phi \in \cL, $$
 on the metric graph
either  experiences an exponential decay  or, alternatively,  the quantum Zeno effect takes place. This
   justifies  the complementarity of  the Exponential Decay  and  the Quantum Zeno Effect  scenarios for hyperbolic systems first indicated in  \cite{KMPY}.

We start  our analysis with the observation that the normalized  deficiency elements   of the symmetric operator $\dot D$
are resonant states under continuous monitoring of the unitary evolution $\phi \mapsto e^{itH}\phi$ where
$
H=D.
$

\begin{lemma}\label{equidi}
Suppose that a metric graph $\bbY$ is in  one of the Cases $(i)-(iii)$ and $\dot D$ is the symmetric differentiation operator on
$\bbY$ with boundary conditions \eqref{dotos}, \eqref{dotint} and \eqref{dotdom0}, respectively.
 Let $
g_\pm \in \Ker ((\dot D)^*\mp iI)$,  $ \|g_\pm\|=1$,
be   normalized  deficiency elements $g_\pm$ of the symmetric operator $\dot D$.
 Then   $g_\pm$
 are equidistributed, that is, $g_\pm$  have  the same spectral measure \begin{equation}\label{defnu}
\nu(d\lambda)=(\EE_{H}(d\lambda)g_+, g_+)=(\EE_{H}(d \lambda)g_-, g_-),\end{equation}
where  the Hamiltonian $H$ is given by the differentiation operator $D$.

Moreover,
\begin{align}
\lim_{\lambda\to +\infty} \lambda\nu\left ((\lambda, \infty)\right) &=\lim_{\lambda\to +\infty} \lambda \nu \left ((-\infty, -\lambda)\right ) \label{kkkk}
\\&=\frac1\pi
\begin{cases}
1, & \text{in Case $(i)$}\\
\coth\frac\ell 2,& \text{in Case $(ii)$}\\
 \coth\frac{\ell+\ell'}{2},& \text{in Case $(iii)$}
 \end{cases}.\nonumber
\end{align}
Here, in Case $(iii)$,
$$\ell'=\log \frac1k
$$
and  $0<k<1$ is the quantum gate coefficient from  the boundary condition \eqref{dotdom0} that determines the symmetric operator $\dot D$ in Case $(iii)$.

In particular,  the deficiency elements $g_\pm$ are resonant states with respect to the continuous monitoring of the  unitary dynamics $g_\pm \mapsto e^{itH}g_\pm$.

In this case,
\begin{equation}\label{raspad}
\lim_{n\to \infty}|(e^{it/n H}g_\pm,g_\pm)|^{2n}=e^{-\tau |t|},
\end{equation}
where the decay constant $\tau$ is given by
$$
\tau =2\begin{cases}
1, & \text{in Case $(i)$}\\
\coth\frac\ell 2,& \text{in Case $(ii)$}\\
 \coth\frac{\ell+\ell'}{2},& \text{in Case $(iii)$}
 \end{cases}.
$$

 \end{lemma}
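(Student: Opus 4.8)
The plan is to reduce the claim about equidistribution to explicit spectral-measure computations using the Weyl--Titchmarsh function, and then to extract the tail asymptotics and the exponential decay from Theorem \ref{levy} in the previous section. First I would establish \eqref{defnu}, the equidistribution of $g_+$ and $g_-$. The cleanest route is to note that $g_-$ is obtained from $g_+$ by the Cayley-type transform: since $g_+-g_-\in\Dom(D)$ (this is precisely Lemma \ref{razgon}), one has $g_-=(D-iI)(D+iI)^{-1}g_+$, so that $\EE_H(d\lambda)g_-=\frac{\lambda-i}{\lambda+i}\EE_H(d\lambda)g_+$ in the appropriate sense. Taking inner products, the unimodular factor $\left|\frac{\lambda-i}{\lambda+i}\right|=1$ cancels, giving $(\EE_H(d\lambda)g_-,g_-)=(\EE_H(d\lambda)g_+,g_+)$. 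This proves that $g_\pm$ share the common spectral measure $\nu(d\lambda)$.

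Next I would identify $\nu(d\lambda)$ with the representing measure $\mu_\Theta(d\lambda)$ of the Weyl--Titchmarsh function. By definition \eqref{WTF}, $M_{(\dot D, D)}(z)=((Dz+I)(D-zI)^{-1}g_+,g_+)$, and the Herglotz--Nevanlinna measure in the representation \eqref{defi} is exactly the spectral measure $\nu(d\lambda)=(\EE_D(d\lambda)g_+,g_+)$ of the element $g_+$. Thus $\nu=\mu_\Theta$ with $\Theta=1$, and I can simply invoke Corollary \ref{quasimain*}, which gives $\mu_\Theta$ explicitly in each of the three cases: the constant density $\frac1\pi d\lambda$ in Case (i), the discrete point measure in Case (ii), and the Poisson-kernel density in Case (iii). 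From these explicit forms the tail limits \eqref{kkkk} follow by elementary computation. In Case (i) one has $\lambda\nu((\lambda,\infty))=\frac\lambda\pi\int_\lambda^\infty\frac{d s}{1+s^2}\to\frac1\pi$; in Case (iii) the same integral against the bounded periodic density, whose average value over a period governs the tail, produces the factor $\coth\frac{\ell+\ell'}{2}$ (one integrates the Poisson kernel over a period and uses that its mean is $1$, with the amplitude $A(\arg\Theta)$ at $\Theta=1$ collapsing to $\coth\frac{\ell+\ell'}2$). Case (ii), where the measure is a lattice of Dirac masses with equal weights $\frac2\ell A(0)=\frac2\ell\coth\frac\ell2$ spaced $\frac{2\pi}\ell$ apart, gives $\lambda\nu((\lambda,\infty))\to\frac1{2\pi}\cdot\frac2\ell\coth\frac\ell2\cdot\frac{\ell}{1}=\frac1\pi\coth\frac\ell2$ by the standard counting of lattice points below $\lambda$.

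Finally, to conclude the exponential decay \eqref{raspad} I would verify the hypotheses of Theorem \ref{levy} with the distribution function $N(\lambda)=\nu((-\infty,\lambda])$. By the symmetric tail limits just computed, $\lim_{\lambda\to\infty}\lambda(1-N(\lambda))=\lim_{\lambda\to\infty}\lambda N(-\lambda)=c$, where $c$ equals $\frac1\pi$, $\frac1\pi\coth\frac\ell2$, or $\frac1\pi\coth\frac{\ell+\ell'}2$ respectively. This matches condition \eqref{zapozdalo} with skewness $\beta=0$ and scale $\sigma=\pi c$, so Theorem \ref{levy} yields $\lim_{n\to\infty}|(e^{it/nH}g_\pm,g_\pm)|^{2n}=e^{-2\sigma|t|}=e^{-\tau|t|}$ with $\tau=2\sigma=2\pi c$, reproducing the stated constant in each case. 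The main obstacle I anticipate is the careful bookkeeping of constants in the tail computations, particularly in Case (ii) where a discrete lattice measure must be handled (the limit $\lim_\lambda\lambda\nu((\lambda,\infty))$ involves counting lattice points and then multiplying by the fixed point mass, and one must confirm the density-of-states heuristic gives the exact coefficient $\frac1\pi\coth\frac\ell2$ rather than merely the correct order); everything else is a routine consequence of the already-established Corollary \ref{quasimain*} and Theorem \ref{levy}.
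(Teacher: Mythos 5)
Your Cayley-transform argument for equidistribution is correct, and it is a genuinely different (and more elementary) route than the paper's: from $g_+-g_-\in\Dom(D)$ (Lemma \ref{razgon}) one indeed gets $g_-=(D-iI)(D+iI)^{-1}g_+$, and since the Cayley transform is a unitary function of $D$ it commutes with $\EE_H(\delta)$, so the two spectral measures coincide; the paper instead reads both facts off the functional model. Your final reduction to Theorem \ref{levy} (equal tails force $\beta=0$ and $\sigma=\pi c$, hence $\tau=2\sigma=2\pi c$) is also correct.

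However, there is a genuine error at the pivot of your argument: the identification ``$\nu=\mu_\Theta$ with $\Theta=1$'' is false. The Herglotz representing measure $\mu$ in \eqref{defi} is not the spectral measure of $g_+$; from the definition \eqref{WTF} and the identity $\frac{1+\lambda z}{\lambda-z}=(1+\lambda^2)\left(\frac{1}{\lambda-z}-\frac{\lambda}{1+\lambda^2}\right)$ one gets $\mu(d\lambda)=(1+\lambda^2)\,\nu(d\lambda)$, that is, $\nu(d\lambda)=\frac{\mu(d\lambda)}{1+\lambda^2}$. A quick sanity check exposes the problem: $\nu(\bbR)=\|g_+\|^2=1$, consistent with the normalization \eqref{normmu}, whereas $\mu(\bbR)=\infty$ by \eqref{infm}; if $\nu$ really equaled $\mu$, then in Case (i) $\nu$ would be $\frac1\pi$ times Lebesgue measure and every tail limit in \eqref{kkkk} would be $+\infty$, so the argument would collapse. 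Your subsequent computations are internally inconsistent with your own identification: in Case (i) you integrate $\frac1\pi\frac{ds}{1+s^2}$, and in Case (ii) the lattice sum only converges because each point mass is tacitly weighted by $\frac{1}{1+\lambda_k^2}$ --- in other words, you silently use the correct measure $\frac{d\mu}{1+\lambda^2}$ while asserting the wrong one. The fix is one line (the identity above), or, as the paper does, invoke Lemma \ref{primeD}, Theorem \ref{unitar} and Remark \ref{snoska}, under which $g_\pm$ become $\Theta_\pm/(\lambda\mp i)$ in $L^2(\bbR;d\mu)$ and hence $(\EE_H(\delta)g_\pm,g_\pm)=\int_\delta\frac{d\mu(s)}{1+s^2}$, which yields equidistribution and the correct tail integrands simultaneously. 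Once this is repaired, your tail asymptotics and the application of Theorem \ref{levy} go through as written.
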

 \begin{proof}
 Let $M(z)$ be the Weyl-Titchmarsh function associated with the pair $(\dot D, D)$.
 By Corollary   \ref{quasimain*},
\begin{equation}\label{wtwt}M(z)=\int_\bbR \left (\frac{1}{\lambda-z}-\frac{\lambda}{1+\lambda^2}\right )d\mu(\lambda),
\end{equation}
where
 the measure $\mu(d\lambda)$ is given by
\begin{equation}\label{mera}\mu(d\lambda)=\frac1\pi
 \begin{cases}
  d\lambda, &\text{ in Case (i)}\\
\\ \frac{2\pi }{\ell}\coth\frac\ell2
 \sum_{k\in \bbZ} \delta_{\frac{(2 k+1)\pi}{\ell}} (d\lambda),& \text{ in Case (ii)}
\\  \coth\frac{\ell+\ell'}{2} P_{e^{-\ell'}}\left (\ell \lambda-\pi \right )d \lambda ,& \text{ in Case (iii)}
\end{cases}.
\end{equation}
Here, in Case (iii),
$$
 P_r(\varphi)=\frac{1-r^2}{1+r^2-2r\cos \varphi}
 $$
 denotes  the Poisson kernel.

Recall that by Lemma \ref{primeD} the operator  $\dot D$ is a prime symmetric operator.
  Therefore,  Theorem \ref{unitar} in Appendix C  ensures the existence of a unitary map $\cU$  from $L^2(\bbY)$ onto
the Hilbert space  $L^2(\bbR, d\mu)$, where  $\mu(d\lambda)$  is given by  \eqref{mera}, with the following properties:
\begin{itemize}
\item[(i)] $\cU D\cU^{-1}$ coincides with the  operator of multiplication  by  independent variable
and
\item[(ii)] the deficiency elements $g_\pm$ get mapped to simple fractions
$$
(\cU g_\pm)(\lambda)=\frac{\Theta_\pm}{\lambda\mp i}\quad \text{for some}\quad |\Theta_\pm|=1.
$$
\end{itemize}

In particular,   for any Borel set $ \delta\subset \bbR$ we have
$$
(\EE_{H}(\delta)g_+, g_+)=(\EE_{H}(\delta)g_-, g_-)=\int_\delta\frac{d\mu(s)}{s^2+1},
$$
which shows that $g_\pm$ are equidistributed and  hence the spectral measure $\nu(d\lambda)$ in \eqref{defnu} is well defined.

It follows that
$$
 \lambda\nu((\lambda, \infty))=\lambda\int_\lambda^\infty\frac{d\mu(s)}{s^2+1}.
$$

In Case (i),  in view of \eqref{mera}  we have the following asymptotic representation
$$
\lambda\nu((\lambda, \infty))=\lambda\int_\lambda^\infty\frac{1}{\pi}\frac{ds}{s^2+1}=\frac1\pi\left (1+o(1)\right)\quad \text{as}\quad  \lambda\to +\infty,
$$
which proves that the first limit in  \eqref{kkkk} exists and coincides with the right hand side of  \eqref{kkkk}.

In Case (ii), by  \eqref{mera},
\begin{align*}
\lambda\nu((\lambda, \infty))&=\lambda \frac{2}{\ell}\coth\frac\ell2\,\,  \sum_{\frac{(2 k+1)\pi}{\ell}\ge \lambda}  \frac{1}{\left (\frac{(2 k+1)\pi}{\ell}\right )^2+1}\\
&= \frac{2}{\ell}\coth\frac\ell2\,\, \left (\frac{\ell}{2\pi}\right )^{2}\lambda\int_{ \frac{\lambda \ell}{2\pi}}^\infty  \frac{dk}{k^2} \cdot\left (1+o(1)\right)
\\&=  \frac{2}{\ell}\coth\frac\ell2\cdot  \frac{\ell}{2\pi} \left (1+o(1)\right)
\\&={ \frac1\pi \coth\frac\ell2} \left (1+o(1)\right) \quad \text{as}\quad  \lambda\to +\infty,
\end{align*}
 proving the first equality \eqref{kkkk} in Case (ii).

Finally, in Case (iii), using \eqref{mera} we have
\begin{align*}
\lambda\nu((\lambda, \infty))&=  \frac1\pi \coth\frac{\ell+\ell'}{2}\lambda \int_\lambda^\infty P_{e^{-\ell'}}\left (\ell s-\pi \right )\frac{d s}{s^2+1}
\\&={\frac1\pi \coth\frac{\ell+\ell'}{2}}\left (1+o(1)\right) \quad \text{as}\quad  \lambda\to +\infty,
\end{align*}
 which shows    that the first limit in  \eqref{kkkk} exists and coincides with the right hand side of  \eqref{kkkk}.
Here we used that the Poisson kernel admits  the representation
$$
P_r(s)=\frac{1-r^2}{1+r^2-2r\cos s}
=1+G_r(s),
$$
where $G_r(s)$ is a bounded  $2\pi$-periodic function with zero mean over the period such that
$$\lim_{\lambda\to +\infty}\lambda \int_\lambda^\infty G_{e^{-\ell'}}(\ell s-\pi)\frac{d s}{s^2+1}=0.
$$
Notice that the  equality above  can be justified by  integration by parts.

In a completely similar way  one shows that in all  Cases (i)-(iii) the second
limit in  \eqref{kkkk} exists and coincides with the right hand side of  \eqref{kkkk}.

 \end{proof}

 \begin{remark}   In Case (i),  one can apply  the residue theorem to see that the survival probability  amplitude $(e^{it D}g_\pm , g_\pm)$ itself is  exponentially decaying as
  $$
 (e^{it D}g_\pm , g_\pm)=\frac1\pi\int_{-\infty}^\infty e^{i\lambda t}\frac{d\lambda}{\lambda^2+1}=e^{- |t|}
 ,$$
 which in particular implies    \eqref{raspad}.  In this case the result of continuous monitoring of the corresponding quantum system on the time interval  $[0,t]$ and  a ``one time  observation''  at the moment of time $t$ are identical. That is,
 $$
 (e^{it/n D}g_\pm , g_\pm)^n= (e^{it D}g_\pm , g_\pm)
 $$
 and therefore
  $$
 |(e^{it/n D}g_\pm , g_\pm)|^{2n}= |(e^{it D}g_\pm , g_\pm)|^2 \quad \text{for all }\quad t.
 $$
  In this exceptional $($resonant$)$ case the continuous monitoring can neither stop  nor modify the evolution.
 \end{remark}

To understand better fine decay properties of a particular state from the test space $\cL=\Dom (\dot D^*)$ we need a comprehensive information about the boundary functionals
associated with  the von Neumann  decomposition of the test space
 \begin{equation}\label{VNF}
\cL= \Dom ((\dot D)^*)=\Ker((\dot D)^*-iI)\dot +\Ker((\dot D)^*+iI)\dot +\Dom (\dot D).
 \end{equation}

\begin{lemma}\label{bloody}  Suppose that a metric graph $\bbY$ is in  one of the Cases $(i)$-$(iii)$ and $\dot D$ is the symmetric differentiation operator on
$\bbY$ with boundary conditions \eqref{dotos}, \eqref{dotint} and \eqref{dotdom0}, respectively.  Denote by   $g_\pm$ the  deficiency elements of the symmetric operator $\dot D$ referred to in Lemma \ref{defeff}.

Assume   that $\phi\in\cL=\Dom((\dot D)^*)$ and  let
\begin{equation}\label{bfunc}
\phi =\alpha g_++\beta g_-+f, \quad  \phi\in \Dom((\dot D)^*),
\end{equation}
be  the decomposition associated with  von Neumann's formula \eqref{VNF},
where  $\alpha, \beta\in C$ and $f\in \Dom (\dot D)$.

Then
\begin{equation}\label{buch}
\alpha +\beta=\frac{1}{\sqrt{2}}
\begin{cases}
\phi_\infty(0-)+\phi_\infty(0+), &\text{in Case $(i)$}\\
\sqrt{\tanh\frac{\ell}{2}}\left (\phi_\ell(0)+\phi_\ell(\ell)\right ),&\text{in Case $(ii)$}
\\
\sqrt{\tanh\frac{\ell+\ell'}{2}}
\left  ( \phi_\ell(\ell)-\frac{ \phi_\infty(0+)-k\phi_\infty(0-)}{\sqrt{1-k^2}}\right ), &\text{in Case $(iii)$}
\end{cases},
\end{equation}
where, in Case $(iii)$,
$$
\ell'=\log \frac1k
$$
and  $0<k<1$ is the quantum gate coefficient from  the boundary condition \eqref{dotdom0} that determines the symmetric operator $\dot D$ in Case $(iii)$.

\end{lemma}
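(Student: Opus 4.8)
The statement asserts a formula for the combination $\alpha+\beta$ of von Neumann coefficients in terms of boundary values of $\phi$. The natural approach is to compute the boundary values of $\phi$ directly from the decomposition \eqref{bfunc}, using the explicit formulas for the deficiency elements $g_\pm$ from Lemma \ref{defeff} together with the fact that $f\in\Dom(\dot D)$ satisfies the boundary conditions \eqref{dotos}, \eqref{dotint}, \eqref{dotdom0} (which force certain boundary values of $f$ to vanish or to be linearly related). First I would read off, from \eqref{deftip1}, \eqref{deftip2}, \eqref{hyddef1}, \eqref{hyddef2}, the values $g_\pm(0\pm)$, $g_\pm(\ell)$ at the relevant vertices, taking the same normalization constants fixed there. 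Then, evaluating \eqref{bfunc} at those vertices and eliminating the contribution of $f$, I expect to obtain a small linear system relating the boundary data $\phi_\infty(0\pm)$, $\phi_\ell(0)$, $\phi_\ell(\ell)$ to $\alpha$ and $\beta$, from which the claimed expression for $\alpha+\beta$ will follow.

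\textbf{Case (i).} Here the graph splits and the boundary conditions \eqref{dotos} on $\Dom(\dot D)$ give $f_\infty(0-)=f_\infty(0+)=0$, so the boundary values of $\phi$ are carried entirely by $\alpha g_++\beta g_-$. Using $g_+(0-)=\sqrt2$, $g_+(0+)=0$ and $g_-(0-)=0$, $g_-(0+)=\sqrt2$ from \eqref{deftip1}, I get $\phi_\infty(0-)=\sqrt2\,\alpha$ and $\phi_\infty(0+)=\sqrt2\,\beta$. Adding these immediately yields $\alpha+\beta=\tfrac{1}{\sqrt2}(\phi_\infty(0-)+\phi_\infty(0+))$, which is the first line of \eqref{buch}. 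This case is essentially a sanity check and should require no real work.

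\textbf{Cases (ii) and (iii).} These are the substantive computations. In Case (ii), the condition $f_\ell(0)=f_\ell(\ell)=0$ again removes $f$ from the boundary values, and I would evaluate $g_\pm$ at $0$ and $\ell$ using \eqref{deftip2}: $g_+(0)=\frac{\sqrt2}{\sqrt{e^{2\ell}-1}}$, $g_+(\ell)=\frac{\sqrt2}{\sqrt{e^{2\ell}-1}}e^\ell$, and symmetric expressions for $g_-$. Summing $\phi_\ell(0)+\phi_\ell(\ell)$ and simplifying the prefactor should produce the factor $\sqrt{\tanh(\ell/2)}$, recalling that $e^{2\ell}-1=e^\ell\cdot 2\sinh\ell$ and $\tanh(\ell/2)=\frac{e^\ell-1}{e^\ell+1}$; the algebraic consolidation of these hyperbolic identities is where care is needed. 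In Case (iii) the same strategy applies, but now $f\in\Dom(\dot D)$ only satisfies the coupled conditions \eqref{dotdom0}, so $f$ does \emph{not} drop out of every boundary value; instead the specific combination $\frac{\phi_\infty(0+)-k\phi_\infty(0-)}{\sqrt{1-k^2}}$ is designed to annihilate the $f$-contribution (since \eqref{dotdom0} gives $f_\infty(0+)=kf_\infty(0-)$ and $f_\ell(0)=\sqrt{1-k^2}f_\infty(0-)$, so that $f_\infty(0+)-kf_\infty(0-)=0$), leaving a clean expression in $\alpha,\beta$.

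\textbf{Main obstacle.} The only real difficulty is the bookkeeping in Case (iii): one must choose the correct linear combination of the three boundary functionals so that the $f$-terms cancel, then verify that the surviving combination of $g_\pm$ boundary values collapses, after the hyperbolic-function simplifications involving $\ell'=\log\frac1k$ and the normalizing factor $\frac{\sqrt2}{\sqrt{e^{2\ell}-k^2}}$, to exactly $\sqrt{\tanh\frac{\ell+\ell'}{2}}\,(\alpha+\beta)$. I expect this to reduce, via the analytic-continuation correspondence \eqref{ancont} between Cases (ii) and (iii), to the Case (ii) identity with $\ell$ replaced by $\ell+\ell'$, which both explains the appearance of $\tanh\frac{\ell+\ell'}{2}$ and provides a useful cross-check on the final constants.
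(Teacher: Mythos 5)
Your proposal is correct and follows essentially the same route as the paper: evaluate the von Neumann decomposition at the graph's vertices using the explicit deficiency elements of Lemma \ref{defeff} together with the boundary conditions on $\Dom(\dot D)$, then simplify the hyperbolic prefactors. The only difference is organizational — in Case (iii) the paper solves the full $3\times 3$ linear system for $\alpha\xi$, $\beta\xi$, $\gamma=f_\infty(0-)$ by inverting the matrix, whereas you exploit directly that the combination $\phi_\ell(\ell)-\frac{\phi_\infty(0+)-k\phi_\infty(0-)}{\sqrt{1-k^2}}$ is $f$-free (using $f_\ell(\ell)=0$ as well as $f_\infty(0+)=kf_\infty(0-)$ from \eqref{dotdom0}), which is a slightly cleaner path to the same identity.
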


\begin{proof} In Case (i), we have
$$
\phi_\infty(x)=\alpha \sqrt{2}e^{x}\chi_{(-\infty, 0)}(x)+\beta \sqrt{2}e^{-x}\chi_{( 0, \infty)}(x)+f(x)
$$
for some $f\in \Dom (\dot D)$.

Since $f(0-)=f(0+)=0$, we have
$$
\phi_\infty(0-)=\alpha \sqrt{2}\quad \text{and}\quad \phi_\infty(0+) =\beta\sqrt{2}.
$$
Therefore
$$
\alpha+\beta=\frac{\phi_\infty(0+)+\phi_\infty(0+)}{\sqrt{2}},
$$
proving \eqref{buch} in that case.

In Case (ii),
$$
\phi_\ell(x)=\alpha\sqrt{\frac{2}{e^{2\ell}-1}}e^x+\beta \sqrt{\frac{2}{e^{2\ell}-1}}e^{\ell-x}+f(x)
$$
and therefore
$$\alpha+e^{\ell}\beta= \sqrt{\frac{e^{2\ell}-1}{2}}\phi_\ell(0)
$$
and
$$
e^\ell \alpha+\beta= \sqrt{\frac{e^{2\ell}-1}{2}}\phi_\ell(\ell).
$$
Hence
$$
\alpha+\beta= \sqrt{\frac{e^{\ell}-1}{e^{\ell}+1}}\frac{\phi_\ell(0)+\phi_\ell(\ell)}{\sqrt{2}}=\sqrt{\tanh \frac\ell2}\cdot\frac{\phi_\ell(0)+\phi_\ell(\ell)}{\sqrt{2}},
$$
proving \eqref{buch} in Case (ii).

In Case (iii), the elements $\phi$ and $f$ from the von Neumann  decomposition \eqref{bfunc} are the two-component vector functions
$$
\phi =\begin{pmatrix}
\phi_\infty \\
\phi_\ell
\end{pmatrix}
\quad \text{and}\quad
f=
 \begin{pmatrix}
f_\infty \\
f_\ell
\end{pmatrix}.
$$
From \eqref{bfunc} it follows that
$$
\phi_\infty (x)=\alpha\xi \sqrt{1-k^2}e^x\chi_{(-\infty,0)}(x)-\beta  \xi  \sqrt{1-k^2}e^{\ell-x}\chi_{(0,\infty)}(x)+f_\infty (x), \quad x \in \bbR,
$$
and
$$
\phi_\ell(x)=\alpha\xi e^x+\beta \xi k e^{\ell-x}+f_\ell(x), \quad x\in [0, \ell),
$$
where the norming constant $\xi$ is given by
$$
\xi =\sqrt{\frac{2}{e^{2\ell}-k^2}}.
$$

In particular,
\begin{align*}
\phi_\infty (0-)&=\alpha\xi \sqrt{1-k^2} +f_\infty (0-),
\\
\phi_\infty(0+)&=-\beta\xi  \sqrt{1-k^2} e^\ell +f_\infty(0+),
\\
\phi_\ell(\ell)&= \alpha \xi e^\ell +k\beta \xi+f_\ell(\ell).
\end{align*}

Since $f\in \dom( \dot D)$, the boundary conditions
\begin{align*}
f_\infty(0+)&=kf_\infty (0-),
\\
f_\ell(0)&=\sqrt{1-k^2}f_\infty(0-),
\\
f_\ell(\ell)&=0
\end{align*}
hold  and hence
\begin{align*}
\phi_\infty(0-)&=\alpha \xi \sqrt{1-k^2} +\gamma,\\
\phi_\infty (0+)&=-\beta  \xi  \sqrt{1-k^2} e^\ell +k\gamma,
\\\phi_\ell(\ell)&= \alpha \xi e^\ell +k\beta \xi ,
\end{align*}
where we use the shorthand notation
$$
\gamma=f_\infty (0-).
$$

Combing the obtained equations we arrive at  the following system of equations
$$
\begin{pmatrix}\sqrt{1-k^2}&0 &1\\
0&-e^{\ell}\sqrt{1-k^2}&k\\
e^\ell &k&0
\end{pmatrix}
\begin{pmatrix}
\alpha \xi  \\\beta \xi  \\\gamma
\end{pmatrix}=
\begin{pmatrix}
x\\y\\z
\end{pmatrix},
$$
where
$$
\begin{pmatrix}
x\\y\\z
\end{pmatrix}=
\begin{pmatrix}
\phi_\infty (0-)\\\phi_\infty (0+)\\
\phi_\ell(\ell)
\end{pmatrix}.
$$

Taking into account that the inverse matrix of  the system
is of the form

$$
\frac{1}{(e^{2\ell}-k^2)\sqrt{1-k^2}}
\begin{pmatrix}-k^2&k&e^\ell\sqrt{1-k^2}\\ke^\ell&-e^\ell&-k\sqrt{1-k^2}\\
e^{2\ell}\sqrt{1-k^2}&-k\sqrt{1-k^2}&-e^\ell(1-k^2)
\end{pmatrix}
$$
one easily obtains that
$$
\alpha \xi  =\frac{-k^2x+ky+e^\ell\sqrt{1-k^2}z}{(e^{2\ell}-k^2)\sqrt{1-k^2}}
\quad \text{and}\quad
\beta \xi  =\frac{ke^\ell x -e^\ell y -k\sqrt{1-k^2}z}{(e^{2\ell}-k^2)\sqrt{1-k^2}}.
$$
Therefore,
\begin{align*}
\alpha+\beta&=\xi^{-1}\frac{(ke^\ell-k^2)x+(k-e^\ell)y+(e^\ell-k) \sqrt{1-k^2}z}{(e^{2\ell}-k^2)\sqrt{1-k^2}}
\\&=\xi^{-1}\frac{(ke^\ell-k^2)\phi_\infty(0-)+(k-e^\ell)\phi_\infty(0+)+(e^\ell-k) \sqrt{1-k^2}\phi_\ell(\ell)}{(e^{2\ell}-k^2)\sqrt{1-k^2}}
\\&=\xi^{-1}\frac{1}
{(e^{\ell}+k)}\left  ( \phi_\ell(\ell)-\frac{ \phi_\infty(0+)-k\phi_\infty(0-)}{\sqrt{1-k^2}}\right )
\\&=\frac{1}{\sqrt{2}}\sqrt{\tanh \frac{\ell+\ell'}{2}}\left  ( \phi_\ell(\ell)-\frac{ \phi_\infty(0+)-k\phi_\infty(0-)}{\sqrt{1-k^2}}\right )
,
\end{align*}
which completes the proof of \eqref{buch} in Case (iii).

\end{proof}

The main result of this section is  the following

\begin{theorem}[{\sc Exponential Decay-Quantum Zeno Effect alternative}]\label{decr}

$\quad $ Suppose that a metric graph $\bbY $ is in one of the Cases $(i)$-$(iii)$. Let $\dot D$ be the symmetric differentiation operator given by
\eqref{dotos}, \eqref{dotint} and \eqref{dotdom0}, respectively.
Assume, in addition, that $\phi\in \Dom ( (\dot D)^*)$, $\|\phi\|=1$.

Let $H=D_\Theta$, $|\Theta|=1$,  be the $($magnetic$)$ Hamiltonian  referred to in Theorem \ref{gengen}.

Then
\begin{equation}\label{dvadva}
\lim_{n\to \infty}|(e^{it/n H}\phi, \phi)|^{2n}=e^{- \tau (\Theta) |t|}, \quad t\in \bbR,
\end{equation}
 where the decay constant $\tau(\Theta)$ is given by
\begin{equation}\label{tautheta}
\tau(\Theta)=
\begin{cases}
|\Theta\phi_\infty(0-)+\phi_\infty(0+)|^2, & \text{in Case $(i)$}\\
|\Theta \phi_\ell(\ell)+\phi_\ell(0)|^2, & \text{in Case $(ii)$}\\
\left | \Theta \phi_\ell(\ell) - \frac{\phi_\infty(0+)-
k \phi_\infty(0-)}{\sqrt{1-k^2}}\right |^2, & \text{in Case $(iii)$}
\end{cases}.
\end{equation}
Here $0<k<1$ is the quantum gate coefficient from  the boundary condition \eqref{dotdom0} that determines the symmetric operator $\dot D$ in
Case $(iii)$.

In particular,  the state  $\phi\in \cL= \Dom ( (\dot D)^*)$ is a resonant state  under continuous monitoring of the quantum  unitary evolution $\phi \to e^{itH}\phi$ if and only if
$$\phi\notin \dom(H)=\Dom (D_\Theta).$$

Otherwise, the state $\phi$ is a Zeno state.

\end{theorem}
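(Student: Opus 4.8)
The plan is to realise the Hamiltonian $H=D_\Theta$ as a multiplication operator through the functional model of Theorem \ref{unitar} and then to read off the tail behaviour of the spectral distribution $N(\lambda)=\nu_\phi((-\infty,\lambda])$, $\nu_\phi(d\lambda)=(\EE_{D_\Theta}(d\lambda)\phi,\phi)$, which via the Gnedenko--Kolmogorov $1$-stable limit theorem (Theorem \ref{stthm}, in the form of Theorem \ref{levy}) controls the continuous-monitoring limit. Concretely, let $\cU$ be the unitary map onto $L^2(\bbR;d\mu_\Theta)$ carrying $D_\Theta$ to multiplication by $\lambda$, where $\mu_\Theta$ is the representing measure of $M_{(\dot D,D_\Theta)}$ computed in Corollary \ref{quasimain*}. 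I would choose the deficiency basis adapted to $D_\Theta$, namely $g_+$ and $\omega g_-$ with $|\omega|=1$ and $g_+-\omega g_-\in\Dom(D_\Theta)$ (so that $\omega=e^{2i\alpha}$ as in \eqref{prostoeqq}); then Remark \ref{snoska} in Appendix C gives $\cU g_+=\Theta_0(\lambda-i)^{-1}$ and $\cU(\omega g_-)=\Theta_0(\lambda+i)^{-1}$ for one common unimodular $\Theta_0$.

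Writing the von Neumann decomposition $\phi=\alpha g_++\beta g_-+f$, $f\in\Dom(\dot D)$, of Lemma \ref{bloody}, this yields $(\cU\phi)(\lambda)=\dfrac{a}{\lambda-i}+\dfrac{b}{\lambda+i}+(\cU f)(\lambda)$ with $|a+b|=|\alpha+\overline{\omega}\,\beta|$, and $\cU f\in L^2(\bbR;(1+\lambda^2)\,d\mu_\Theta)$ since $f\in\Dom(\dot D)\subset\Dom(D_\Theta)$. Because $\bigl|\tfrac{a}{\lambda-i}+\tfrac{b}{\lambda+i}\bigr|^2=\dfrac{|(a+b)\lambda+i(a-b)|^2}{(\lambda^2+1)^2}\sim\dfrac{|a+b|^2}{\lambda^2}$, the very same Cauchy--Schwarz estimates used in the proofs of Theorems \ref{1/2thm} and \ref{3/2thm} (the cross terms and the $\cU f$-contribution are $o(\lambda^{-1})$) give
$$
1-N(\lambda)=|a+b|^2\int_\lambda^\infty\frac{d\mu_\Theta(s)}{s^2}\,(1+o(1)),\qquad N(-\lambda)=|a+b|^2\int_{-\infty}^{-\lambda}\frac{d\mu_\Theta(s)}{s^2}\,(1+o(1)).
$$
Inserting the explicit $\mu_\Theta$ from Corollary \ref{quasimain*} (whose density, resp. averaged atomic weight, has leading constant $\tfrac1\pi A(\arg\Theta)$ at both $\pm\infty$, with $A\equiv 1$ in Case (i)) one obtains $\lambda(1-N(\lambda))\to\lambda N(-\lambda)\to\tfrac1\pi A(\arg\Theta)|a+b|^2$. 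The two tails are thus balanced, so the skewness in Theorem \ref{levy} vanishes and $\sigma=A(\arg\Theta)|a+b|^2$, whence
$$
\lim_{n\to\infty}|(e^{it/n H}\phi,\phi)|^{2n}=e^{-2\sigma|t|},\qquad \tau(\Theta)=2\sigma=2A(\arg\Theta)\,|\alpha+\overline{\omega}\,\beta|^2 .
$$

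It then remains to identify $2A(\arg\Theta)\,|\alpha+\overline{\omega}\beta|^2$ with the explicit right-hand side of \eqref{tautheta}, and this reconciliation for general $\Theta$ is the hard part. One has to combine the explicit von Neumann coefficients $\alpha,\beta$ produced by solving the boundary linear systems in the proof of Lemma \ref{bloody}, the value $\omega=e^{2i\alpha}$ from \eqref{prostoeqq}, and the amplitude $A(\arg\Theta)$ from \eqref{ampli}; the point is that the $\tanh\tfrac{\ell+\ell'}{2}$-type normalization of $|\alpha+\overline{\omega}\beta|^2$ exactly cancels the $\coth/\tanh$ convex combination defining $A(\arg\Theta)$, collapsing the product to the squared boundary defect displayed in \eqref{tautheta}. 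I would first verify the clean case $\Theta=1$ (where $\omega=1$, $A(0)=\coth\tfrac{\ell+\ell'}{2}$, and Lemma \ref{bloody} applies verbatim), which already reproduces \eqref{tautheta}; the general case is most transparently handled by repeating the boundary matching of Lemma \ref{bloody} directly with the $D_\Theta$-adapted basis $\{g_+,\omega g_-\}$, so that the factor $\Theta$ enters the boundary data linearly and the $\Theta$-dependence of \eqref{tautheta} appears automatically.

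Finally, the dichotomy follows by inspection of \eqref{tautheta}: $\tau(\Theta)=0$ precisely when the single boundary functional appearing there vanishes, that is, when the element $\phi\in\Dom((\dot D)^*)$ additionally satisfies the vertex condition defining $D_\Theta$, i.e. $\phi\in\Dom(D_\Theta)=\Dom(H)$. In that (and only that) case $1-N(\lambda)+N(-\lambda)=o(\lambda^{-1})$, so the light-tails requirement \eqref{lt} holds and $\phi$ is a Zeno state by Proposition \ref{zenoprop} (equivalently, $\phi\in\Dom(H)\subset\Dom(|H|^{1/2})$); otherwise $\tau(\Theta)>0$ and $\phi$ is a resonant state. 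This establishes both \eqref{dvadva} and the claimed characterization of resonant versus Zeno states.
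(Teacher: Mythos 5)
Your strategy is essentially the paper's: realize the operator in the functional model, extract the two-sided $O(1/\lambda)$ tail asymptotics of the spectral distribution from the von Neumann decomposition (with the $\Dom(\dot D)$-component contributing only $o(1/\lambda)$ by the Cauchy--Schwarz estimates), and invoke the $1$-stable limit theorem in the form of Theorem \ref{levy}. The difference is in how general $\Theta$ is handled, and this is where your proof is incomplete. You work directly with the measure $\mu_\Theta$ of Corollary \ref{quasimain*} and the $D_\Theta$-adapted basis $\{g_+,\omega g_-\}$, arriving at $\tau(\Theta)=2A(\arg\Theta)\,|\alpha+\overline{\omega}\beta|^2$, and then you explicitly defer the identification of this quantity with the right-hand side of \eqref{tautheta}, calling it ``the hard part.'' But that identification \emph{is} the content of the theorem: \eqref{tautheta} is the formula being proved. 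The deferred computation does in fact close (for instance, in Case (ii) one solves the linear system of Lemma \ref{bloody} to get $\alpha+\overline{\omega}\beta=\frac{c}{\overline{\Theta}+e^{\ell}}\bigl(\phi_\ell(\ell)+\overline{\Theta}\phi_\ell(0)\bigr)$ with $c^2=(e^{2\ell}-1)/2$, and then verifies $A(\Phi)=\frac{e^{2\ell}+1+2e^{\ell}\cos\Phi}{e^{2\ell}-1}=\frac{|\overline{\Theta}+e^{\ell}|^2}{e^{2\ell}-1}$, which produces exactly the cancellation you predict), so your route is viable --- but as written it is a sketch, not a proof, precisely at the theorem's crux. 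The paper avoids this bookkeeping entirely: it proves the case $\Theta=1$ first and then disposes of general $\Theta$ by unitary conjugation, using $D_\Theta=U_\Theta D U_\Theta^*$ with a piecewise-constant gauge $U_\Theta$ in Case (i), and $D_\Theta=U_\Theta\bigl(D+\tfrac{\arg\Theta}{\ell}I\bigr)U_\Theta^*$ with $U_\Theta$ multiplication by $e^{-i\arg\Theta\, x/\ell}$ in Cases (ii)--(iii); since $|(e^{it/nD_\Theta}\phi,\phi)|=|(e^{it/nD}U_\Theta^*\phi,U_\Theta^*\phi)|$, the $\Theta$-dependence of \eqref{tautheta} falls out of the $\Theta=1$ formula applied to $U_\Theta^*\phi$ with no extra algebra. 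That reduction is what your approach buys itself out of at the cost of redoing Lemma \ref{bloody} for each $\Theta$.

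The second gap is in your final ``by inspection'' dichotomy. In Cases (i) and (ii) the vanishing of the boundary functional in \eqref{tautheta} is indeed manifestly the vertex condition defining $\Dom(D_\Theta)$. In Case (iii) it is not: $\Dom(D_\Theta)$ is cut out by the \emph{two} conditions \eqref{bcIII}, whereas $\tau(\Theta)=0$ gives only the single relation $\Theta\phi_\ell(\ell)=\frac{\phi_\infty(0+)-k\phi_\infty(0-)}{\sqrt{1-k^2}}$. To conclude $\phi\in\Dom(D_\Theta)$ you must combine this with the constraint \eqref{domsop} that encodes $\phi\in\Dom((\dot D)^*)$, and derive the second boundary condition $\phi_\ell(0)=\sqrt{1-k^2}\,\phi_\infty(0-)-k\Theta\phi_\ell(\ell)$ from the two; the paper carries out exactly this short algebraic argument at the end of its proof. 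Without it, the ``if and only if'' statement (resonant state $\Leftrightarrow$ $\phi\notin\Dom(D_\Theta)$) is unproved in Case (iii).
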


\begin{proof} {\it Part 1.} First, we prove the assertion in the particular case of $\Theta=1$,
where the Hamiltonian $H$  is given by the differentiation operator $D$, i.e.,
 $$H=D=D_\Theta|_{\Theta=1}.$$

Since $\phi \in \Dom((\dot D)^*)$,
by the von Neumann formula, the element $\phi$ admits a unique decomposition
\begin{equation}\label{modneu}
\phi =\alpha g_++\beta g_-+f,
\end{equation}
where   $\alpha, \beta \in \bbC$, $
f\in \dom (\dot D)
$.
Here we choose
the deficiency elements $g_\pm$ to be  given by \eqref{deftip1}, \eqref{deftip2}, and finally by  \eqref{hyddef1} and \eqref{hyddef2}
whenever the graph $\bbY$ is in Cases (i),  (ii),  and (iii), respectively.

 Without loss of generality, we may assume that
the operator $H=D$ is already realized in its model representation in the Hilbert space $L^2(\bbR; d\mu)$ as the  operator of multiplication by independent variable
with the measure $\mu(d\lambda)$  determined by \eqref{mera}.

Indeed, the Weyl-Titchmarsh function $M_{(\dot D,D)}(z)$ associated with the pair $(\dot D,D)$  and  given by \eqref{Mrepr} admits the representation \eqref{wtwt}
with the measure $\mu(d\lambda)$ from \eqref{mera}. By Lemma \ref{primeD}, the symmetric differentiation operator $\dot D$ is prime and therefore the Hamiltonian $H=D$ is unitarily equivalent
to  its model representation in the Hilbert space $L^2(\bbR; d\mu)$ by Theorem \ref{unitar} in Appendix C.
By Lemma \ref{razgon},
$$g_+-g_-\in\Dom(H)= \Dom(D).
$$
Therefore, one can also assume that the decomposition \eqref{modneu}
takes place in the  model Hilbert space $L^2(\bbR; d\mu)$, where
the deficiency elements $g_\pm$ are given by the  partial fractions
(see Remark \ref{snoska} in Appendix C)
$$
g_\pm = \frac{1}{\lambda\mp i},
\quad \lambda \in\bbR \,\,\,\mu-\text{a.e.},
$$ and
\begin{equation}\label{fgut}f\in L^2(\bbR;(1+\lambda^2) d\mu(\lambda)).
\end{equation}

The spectral measure  $(\EE_H(d\lambda)\phi,\phi)$ of the element $\phi $ can be evaluated as follows
$$
(\EE_H(\delta)\phi,\phi)=\int_\delta \left | \alpha \frac1{\lambda-i}+\beta \frac1{\lambda+i} +f(\lambda)\right |^2d\mu(\lambda),$$
with $\delta\subset \bbR$  a Borel set.

Therefore,
\begin{align}
(\EE_H(\delta)\phi,\phi)&=|\alpha +\beta |^2\int_\delta
\frac{d\mu(\lambda) }{\lambda^2+1}\label{tailss}
\\&+2\Re \, \alpha \overline \beta \int_\delta  \left (
\frac{1}{(\lambda -i)^2}-\frac{1}{\lambda^2+1}\right )d\mu(\lambda) \nonumber
\\&+2 \Re \int_\delta \left ( \alpha \frac1{\lambda-i}+\beta \frac1{\lambda+i}\right )\overline{f(\lambda)}d\mu(\lambda)  \nonumber
\\&+\int_\delta \left | f(\lambda)\right |^2d\mu(\lambda).\nonumber
\end{align}

It turns out that the first term in  \eqref{tailss} determines the leading term of the  asymptotics
 in the heavy-tailed distribution of the
spectral measure $(\EE_H(d\lambda)\phi,\phi)$ whenever
$$
\alpha+\beta\ne 0.
$$

Indeed,  we have
$$
I=:\int_{|s|>\lambda} \left |\frac{1}{(s -i)^2}-\frac{1}{s^2+1}\right |d\mu(s)\le \frac{2}\lambda \int_{|s|>\lambda}  \frac{d\mu(s) }{s^2+1}.
$$
By Lemma \ref{equidi},
the following limit exists,
$$
\lim_{\lambda \to +\infty} \lambda \int_{|s|>\lambda}  \frac{d\mu(s) }{s^2+1}<\infty
$$
and therefore$$I=O(\lambda^{-2}) \quad \text{as }\quad \lambda \to+\infty.
$$Next,
\begin{align*}
II&=:\left |\int_{|s|>\lambda} \frac{\overline{f(s)}}{s\pm i}d\mu(s)\right |
\\&\le \frac1\lambda \sqrt{\int_{|s|>\lambda}  \frac{d\mu(s) }{s^2+1}}\cdot \sqrt{\int_{|s|>\lambda}^\infty (1+s^2)|f(s)|^2 d\mu(s)}\\
&=o(\lambda^{-3/2}) \quad \text{as }\quad \lambda \to+\infty,
\end{align*}
where we have used \eqref{fgut}.

Finally,
\begin{align*}
III&=:\int_{|s|>\lambda}\left | f(s)\right |^2d\mu(s)
\\&\le \frac1{\lambda^2}\int_{|s|>\lambda} (1+s^2)|f(s)|^2 d\mu(s)= o(\lambda^{-2})\quad\text{as}\quad \lambda \to\infty.
\end{align*}
Therefore,
$$I+II+III=o(\lambda^{-3/2}) \quad \text{as }\quad \lambda \to+\infty
$$
and from \eqref{tailss}
we obtain
\begin{align*}
\lambda(\EE_H((\lambda, \infty))\phi,\phi)&= |\alpha+\beta|^2 \lambda \int_\lambda^\infty \frac{d\mu(s)}{s^2+1}+o(\lambda^{-1/2})
\quad \text{as } \quad \lambda \to \infty.
\end{align*}
In a similar way one proves that
$$\lambda(\EE_H((-\infty, -\lambda))\phi,\phi)= |\alpha+\beta|^2 \lambda \int_{-\infty}^{-\lambda} \frac{d\mu(s)}{s^2+1}+o(\lambda^{-1/2})
\quad \text{as } \quad \lambda \to \infty.
$$
By Lemma \ref{equidi},
\begin{align*}
\lim_{\lambda\to+ \infty}\lambda \int_\lambda^\infty \frac{d\mu(s)}{s^2+1}&=\lim_{\lambda\to +\infty}\lambda \int_{-\infty}^{-\lambda} \frac{d\mu(s)}{s^2+1}
&=\begin{cases}
1, & \text{in Case (i)}\\
\coth\frac\ell 2,& \text{in Case (ii)}\\
 \coth\frac{\ell+\ell'}{2},& \text{in Case (iii)}
 \end{cases}.
\end{align*}

Therefore,
\begin{align*}
 \lim_{\lambda\to + \infty}\lambda(\EE_H((\lambda, \infty))\phi,\phi)&=\lim_{\lambda\to +\infty}\lambda(\EE_H((-\infty, -\lambda))\phi,\phi)
 \\&=\frac1\pi |\alpha+\beta|^2
 \begin{cases}
1, & \text{in Case (i)}\\
\coth\frac\ell 2,& \text{in Case (ii)}\\
 \coth\frac{\ell+\ell'}{2},& \text{in Case (iii)}
 \end{cases}.
 \end{align*}

 On the other hand, from  Lemma \ref{bloody} it follows that
 \begin{equation}\label{buch1}
|\alpha +\beta|^2=\frac{1}{2}
\begin{cases}
\left |\phi_\infty(0-)+\phi_\infty(0+)\right |^2, &\text{in Case (i)}\\
\tanh\frac{\ell}{2}\left |\phi_\ell(0)+\phi_\ell(\ell)\right |^2,&\text{in Case (ii)}
\\
\tanh\frac{\ell+\ell'}{2}
\left  | \phi_\ell(\ell)-\frac{ \phi_\infty(0+)-k\psi_\infty(0-)}{\sqrt{1-k^2}}\right |^2, &\text{in Case (iii)}
\end{cases}.
\end{equation}
Here, in Case (iii),
$$\ell'=\log \frac1k.
$$

Hence
\begin{align*}
 \lim_{\lambda\to + \infty}\lambda(\EE_H((\lambda, \infty))\phi,\phi)&=\lim_{\lambda\to + \infty}\lambda(\EE_H((-\infty, -\lambda))\phi,\phi)
 \\&=\frac1{2\pi}
 \begin{cases}
\left |\phi_\infty(0-)+\phi_\infty(0+)\right |^2, & \text{in Case (i)}\\
\left |\phi_\ell(0)+\phi_\ell(\ell)\right |^2,& \text{in Case (ii)}\\
 \left  | \phi_\ell(\ell)-\frac{ \phi_\infty(0+)-k\phi_\infty(0-)}{\sqrt{1-k^2}}\right |^2,& \text{in Case (iii)}
 \end{cases}.
 \end{align*}
To complete the proof of \eqref{tautheta}  in case $\Theta=1$ it remains to apply
 Theorem \ref{levy}.

{\it Part 2}.
Now we can treat  the general case of  an arbitrary $\Theta$, $|\Theta|=1$.

 Let $H=D_\Theta$ be the magnetic Hamiltonian in Case (i).
Denote by $U_\Theta$ the unitary operator in $L^2(\bbY)=L^2(\bbR)$ defined as
$$
(U_\Theta f)(x)=\overline{\Theta}\chi_{(-\infty, 0)}(x)f(x)+ \chi_{(0,\infty)}(x)f(x).
$$
One verifies that
$$
D_\Theta=U_\Theta D U^*_\Theta
$$
and therefore
$$\lim_{n\to \infty}|(e^{it/n H}\phi, \phi)|^{2n}=\lim_{n\to \infty}|(e^{it/n D_\Theta}\phi, \phi)|^{2n}=
\lim_{n\to \infty}|(e^{it/n D}U_\Theta^*\phi,U_\Theta^* \phi)|^{2n}.
$$

By Part 1 of the proof,
\begin{align*}
\lim_{n\to \infty}|(e^{it/n D}U_\Theta^*\phi,U_\Theta^* \phi)|^{2n}&=
\exp \left (-|(U_\Theta^*\phi_\infty)(0+)+(U_\Theta^*\phi_\infty)(0-)|^2|t|\right )
\\&=\exp \left (-|\phi_\infty(0+)+\Theta \phi_\infty(0-)|^2|t|\right ).
\end{align*}
Therefore,
$$
\tau(\Theta)=|\phi_\infty(0+)+\Theta \phi_\infty(0-)|^2,
$$
which proves \eqref{tautheta} in Case (i).

In Cases (ii) and (iii),
we have the commutation relation (cf. \eqref{ccrsym1})
$$ D_\Theta=U_\Theta\left ( D+\frac{\arg \Theta}{\ell}I\right )U_\Theta^*,
$$
where   $U_\Theta$  the unitary multiplication operator   in $L^2(\bbY)$ given by
$$
(U_\Theta\phi(x))=e^{-i\frac{\arg \Theta}{\ell} x}\phi(x), \quad x\in \bbY.
$$
Therefore,
\begin{align*}
\lim_{n\to \infty}|(e^{it/n H}\phi, \phi)|^{2n}&=
\lim_{n\to \infty} | e^{it\frac{\arg \Theta}{n\ell}}(e^{it/n D}U^*\phi,U^* \phi)|^{2n}
\\&=\lim_{n\to \infty}|(e^{it/n D}U^*\phi,U^* \phi)|^{2n}.
\end{align*}
Now the claim \eqref{tautheta} follows by applying the result of Part 1 of the proof  to the state $U^*\phi$.

To complete the proof it remains to show that
under the hypothesis that $\phi\in \Dom ((\dot D)^*)$ the decay constant $\tau(\Theta)$ vanishes if and only if $\phi\in \Dom (D_\Theta)$.
Indeed, if $\phi \in \Dom (D_\Theta)\subset \Dom (|D_\Theta|^{1/2}) $, then $\phi$ is a Zeno state (by Proposition \ref{zenoprop}) and hence $\tau(\Theta)=0$. One can also see right away that   the boundary conditions \eqref{bcI}, \eqref{bcII} and \eqref{bcIII}
imply $\tau(\Theta)=0$.

The converse (under the hypothesis that $\phi\in \Dom (\dot D)^*)$) is also true. It is obvious in Cases (i) and (ii). In Case (iii) the equality $\tau(\Theta)=0$ implies
 \begin{equation}\label{bc111}
 \Theta \phi_\ell(\ell) = \frac{\phi_\infty(0+)-
k \phi_\infty(0-)}{\sqrt{1-k^2}}
 \end{equation}
 and since $\phi \in \Dom ((\dot D)^*)$, from \eqref{domsop} it also follows that
 \begin{equation}\label{bc222}
 \phi_\infty(0-)-k\, \phi_\infty(0+)
-\sqrt{1-k^2}\, \phi_\ell(0)=0.
  \end{equation}
 Multiplying \eqref{bc111} by $k$ and using \eqref{bc222} we obtain
   \begin{align*}
k\Theta \phi_\ell(\ell)&= \frac{k \phi_\infty(0+)-
k^2 \phi_\infty(0-)}{\sqrt{1-k^2}}
\\&=\frac{ \phi_\infty(0-)
-\sqrt{1-k^2}\, \phi_\ell(0+)-
k^2 \phi_\infty(0-)}{\sqrt{1-k^2}}
\\&
=\sqrt{1-k^2}\phi_\infty(0-)-\phi_\ell(0),
\end{align*}
 which shows that
  \begin{equation}\label{bc1111}
\phi_\ell(0)= \sqrt{1-k^2}\phi_\infty(0-) -k\Theta \phi_\ell(\ell).
\end{equation}
From \eqref{bc111} and \eqref{bc1111} we get
\begin{align*}
\begin{pmatrix}
\phi_\infty(0+)\\
\phi_\ell(0)
\end{pmatrix}&=\begin{pmatrix}
k& \sqrt{1-k^2} \Theta\\
 \sqrt{1-k^2}&-k\Theta
\end{pmatrix}
 \begin{pmatrix}
\phi_\infty(0-)\\
\phi_\ell(\ell)
\end{pmatrix}
\end{align*}
and hence \eqref{bcIII} holds proving that $\phi \in \Dom (D_\Theta)$.
\end{proof}

Let $\widehat D$ be   the maximal dissipative differential operator
  defined by \eqref{domdomi}  (with $k=0$) whenever the graph $\bbY$ is in Case (i) and
  by  \eqref{domdomii} and \eqref{domdomiii} whenever the graph $\bbY$ is in Cases  (ii) and (iii), respectively.
  Assume,  in addition, that the initial state $\phi$ is such that
  $ \phi\in \Dom (\widehat D)\cup \Dom (\widehat D^*)$.

The following lemma   shows that  under these assumptions  the decay rate of the state $\phi$ under continuous monitoring of the unitary evolution
\begin{equation}\label{elkaa}\phi \mapsto e^{itH }\phi
\end{equation}
 is determined by the state only  and  is, in fact,   independent of  the  self-adjoint realization   $H=D_\Theta$ of the symmetric differentiation operator $\dot D$.

 \begin{lemma}\label{palka}Assume the hypothesis of Theorem \ref{decr}.
 Let $\widehat D$ be   the maximal dissipative differential operator
  defined by \eqref{domdomi}  $($with $k=0$$)$ whenever the graph $\bbY$ is in Case $(i)$ and
  by  \eqref{domdomii} and \eqref{domdomiii} whenever the graph $\bbY$ is in Cases  $(ii)$ and $(iii)$, respectively.

 Then the decay constant
$ \tau=\tau(\Theta)$ given by \eqref{tautheta} does not depend on  $\Theta$ if and only if
$$
\phi\in \Dom (\widehat D)\cup \Dom ((\widehat D)^*).
$$

In this case,
\begin{equation}\label{tautheta1}
\tau=
\begin{cases}
|\phi_\infty(0-)|^2, & \text{in Case }(i)\\
|\phi_\ell(\ell)|^2, & \text{in Case }(ii)\\
\left | \phi_\ell(\ell) \right |^2, & \text{in Case }(iii)
\end{cases},\quad \text{whenever } \quad \phi\in\Dom(\widehat D),
\end{equation}
and
\begin{equation}\label{tautheta2}
\tau=
\begin{cases}
|\phi_\infty(0+)|^2, & \text{in Case }(i)\\
|\phi_\ell(0)|^2, & \text{in Case }(ii)\\
  \frac{|\phi_\infty(0+)-
k \phi_\infty(0-)|^2}{1-k^2}, & \text{in Case  }(iii)
\end{cases}, \quad \text{whenever } \quad \phi\in\Dom((\widehat D)^*).
\end{equation}
Here, in Case $(iii)$, $k$ is the quantum gate coefficient.

 \end{lemma}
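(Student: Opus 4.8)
The plan is to exploit the fact that each entry of the decay constant \eqref{tautheta} has the common shape $\tau(\Theta)=|\Theta a+b|^2$, where the coefficients $a$ and $b$ are the boundary functionals of $\phi$ read off directly from \eqref{tautheta}: in Case (i) $a=\phi_\infty(0-)$, $b=\phi_\infty(0+)$; in Case (ii) $a=\phi_\ell(\ell)$, $b=\phi_\ell(0)$; and in Case (iii) $a=\phi_\ell(\ell)$, $b=-\frac{\phi_\infty(0+)-k\phi_\infty(0-)}{\sqrt{1-k^2}}$. Since $|\Theta|=1$, one has
$$\tau(\Theta)=|a|^2+|b|^2+2\Re(\Theta\, a\overline{b}),$$
so $\tau(\Theta)$ is independent of $\Theta$ as $\Theta$ runs over the unit circle if and only if $\Re(\Theta\, a\overline{b})$ vanishes for all such $\Theta$, that is, if and only if $a\overline{b}=0$, i.e. $a=0$ or $b=0$. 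This reduces the lemma to identifying the two alternatives $a=0$ and $b=0$ with membership of $\phi$ in $\Dom((\widehat D)^*)$ and $\Dom(\widehat D)$, respectively.

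Next I would match these vanishing conditions with the domains of $\widehat D$ and its adjoint. Reading the boundary conditions \eqref{domdomi} (with $k=0$), \eqref{domdomii}, and \eqref{domdomiii} directly, membership $\phi\in\Dom(\widehat D)$ forces $\phi_\infty(0+)=0$, $\phi_\ell(0)=0$, and $\phi_\infty(0+)=k\phi_\infty(0-)$ in Cases (i)--(iii), respectively; in every case this is precisely the relation $b=0$. In Case (iii) I would observe that, since $\phi\in\Dom((\dot D)^*)$ already satisfies \eqref{domsop}, the single equation $\phi_\infty(0+)=k\phi_\infty(0-)$ entails the second defining relation $\phi_\ell(0)=\sqrt{1-k^2}\,\phi_\infty(0-)$, so that on $\Dom((\dot D)^*)$ the condition $b=0$ is indeed equivalent to $\phi\in\Dom(\widehat D)$. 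For the adjoint I would compute $\Dom((\widehat D)^*)$ by Green's identity: inserting the boundary conditions of $\widehat D$ into the boundary form and requiring it to annihilate every $f\in\Dom(\widehat D)$ shows that $\Dom((\widehat D)^*)$, intersected with $\Dom((\dot D)^*)$, is cut out by $\phi_\infty(0-)=0$ in Case (i), by $\phi_\ell(\ell)=0$ in Case (ii), and by $\phi_\ell(\ell)=0$ in Case (iii); in each case this is exactly $a=0$.

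Combining the two steps yields the asserted equivalence: $\tau(\Theta)$ is $\Theta$-independent if and only if $a=0$ or $b=0$, i.e. if and only if $\phi\in\Dom(\widehat D)\cup\Dom((\widehat D)^*)$. The explicit values then follow by substitution: when $\phi\in\Dom(\widehat D)$ we have $b=0$ and $\tau=|a|^2$, which is \eqref{tautheta1}, while when $\phi\in\Dom((\widehat D)^*)$ we have $a=0$ and $\tau=|b|^2$, which is \eqref{tautheta2} after simplifying $|b|^2=\frac{|\phi_\infty(0+)-k\phi_\infty(0-)|^2}{1-k^2}$ in Case (iii). The only genuinely delicate point will be the adjoint computation in Case (iii): after inserting the boundary conditions of $\widehat D$ and collecting the coefficients of the two free boundary values $f_\infty(0-)$ and $f_\ell(\ell)$, one must recognize the first resulting condition as \eqref{domsop}, hence automatically satisfied on $\Dom((\dot D)^*)$, so that $\Dom((\widehat D)^*)$ is cut out within $\Dom((\dot D)^*)$ by the single equation $\phi_\ell(\ell)=0$; the remaining cases reduce to one-dimensional integrations by parts and are immediate.
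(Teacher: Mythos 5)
Your proposal is correct and follows essentially the same route as the paper: reduce $\Theta$-independence of $\tau(\Theta)=|\Theta a+b|^2$ to $a=0$ or $b=0$, identify $b=0$ with $\phi\in\Dom(\widehat D)$ (using \eqref{domsop} to recover the second boundary condition in Case (iii)) and $a=0$ with $\phi\in\Dom((\widehat D)^*)$ (recognizing the first adjoint boundary condition in Case (iii) as membership in $\Dom((\dot D)^*)$), and then substitute. The only cosmetic difference is that you rederive the adjoint's boundary conditions via Green's identity, whereas the paper cites the conditions \eqref{dotdom*1}--\eqref{dotdom*} already established for $(\widehat D)^*$ in Section \ref{s9}.
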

\begin{proof}
It is easily seen from  \eqref{tautheta} that  the decay constant  $\tau(\Theta)$  does not depend on  $\Theta$ if and only if
 either
\begin{equation}\label{nez1}
\begin{cases}
\phi_\infty(0+)=0, &  \text{in Case (i)}
\\
  \phi_\ell(0)=0, &\text{in Case (ii)}
\\
\phi_\infty(0+)=
k \phi_\infty(0-), & \text{in Case (iii)}
\end{cases},
\end{equation}
or
\begin{equation}\label{nez2}
\begin{cases}
 \phi_\infty(0-)= 0 , &  \text{in Case (i)}
\\
\phi_\ell(\ell)=0   ,  &\text{in Case (ii)}
\\
\phi_\ell(\ell)=0,& \text{in Case (iii)},
\end{cases}
\end{equation} or both.

Recall that  the boundary conditions   \eqref{domdomi}  (with $k=0$),  \eqref{domdomii}, and  \eqref{domdomiii}
for the dissipative differentiation operator $\widehat D$
yield
\begin{align}
\phi_\infty(0+)&=0 \quad \text{(in Case (i))}\label{chtoxa}
\\
\phi_\ell(0)&=0 \quad \text{(in Case (ii))}\nonumber
\end{align}
and
\begin{equation}\label{chtoto*}
\begin{cases}
\phi_\infty(0+)=k \phi_\infty(0-)
\\
\phi_\ell(0)=\sqrt{1-k^2}\phi_\infty(0-)
\end{cases} \quad \text{(in Case (iii))}.
\end{equation}
Notice that in Case (iii) the first condition in \eqref{chtoto*} implies the second one whenever $\phi\in \Dom ((\dot D)^*) $. Indeed, the membership
 $\phi\in \Dom ((\dot D)^*) $ means that
$$
\phi_\infty(0-)-k\, \phi_\infty(0+)
-\sqrt{1-k^2}\, \phi _\ell(0)=0
$$
by Lemma \ref{domsopr} and the claim follows by a simple computation.

Now, it is straightforward to see that under the hypothesis that  $\phi\in\cL= \Dom ((\dot D)^*) $ the boundary conditions \eqref{nez1}   hold if and only if $\phi\in \Dom (\widehat D)$.
 In this case \eqref{tautheta1}  follows from \eqref{tautheta} in Theorem \ref{decr}.

Next, the boundary conditions for the adjoint operator $(\widehat D)^*$
 \eqref{dotdom*1} (with $k=0$), \eqref{dotdom*2}, and  \eqref{dotdom*}
 can be rewritten as
 \begin{align}
\phi_\infty(0-)&=0 \quad \text{(in Case (i))}\label{chtoxaxa}
\\
\phi_\ell(\ell)&=0 \quad \text{(in Case (ii))}\nonumber
\end{align}
and
\begin{equation}\label{chtototo}
\begin{cases}
\phi_\infty (0-)&=k \phi_\infty (0+)+\sqrt{1-k^2} \phi_\ell(0)
\\
\phi_\ell(\ell)&=0
\end{cases} \quad \text{(in Case (iii))}.
\end{equation}
   By Lemma  \ref{domsopr}, the first condition in \eqref{chtototo} simply means that $\phi\in \Dom ((\dot D)^*) $. Therefore, the boundary conditions \eqref{nez2} hold if and only if
and $\phi\in \Dom (\widehat D^*)$,
In this case \eqref{tautheta2}  follows from \eqref{tautheta} in Theorem \ref{decr}.

\end{proof}

If  the initial state
 $\phi$ is taken from
the somewhat narrower test space
$$
\cM=\Dom (\widehat D)\cup\Dom ((\widehat D))^*\subset \Dom ((\dot D)^*)=\cL,
$$
then Lemma \ref{palka} states  that  continuous monitoring of
the unitary evolution \eqref{elkaa}  is universal  (in the sense that the corresponding decay rate $
\tau$ referred to in Lemma \ref{palka} is independent of the choice
of the magnetic Hamiltonian $H$).

In this case, i.e. if $\phi \in \cM$,  the universal exponent  $\tau$ can also be recognized as the decay rate  associated with continuous monitoring of the unitary evolution
\begin{equation}\label{dub}\widehat \phi \mapsto e^{it\bbH }\widehat \phi
\end{equation}
in an extended Hilbert $\fH$  containing $L^2(\bbY)$ as a proper subspace. Here  $\fH=L^2(\bbX)$, where $\bbX$ is the full metric graph:
$\bbX=\bbY\sqcup
\bbY$ if the metric graph $\bbY$ in Cases (i) and (iii), and $\bbX$ can be identified with $\bbR$ if $\bbY=(0,\ell)$ is in Case (ii),
 the Hamiltonian $\bbH$ is a self-adjoint dilation  of the dissipative differentiation operator $\widehat D$,
and the new state $\widehat \phi \in L^2(\bbX)$ of the extended quantum system
 is  identified with  the initial state  $\phi$ being naturally  imbedded to the space
 $\fH=L^2(\bbX)$.

The precise statement is as follows.

\begin{corollary}{\it
 Let   $\bbH$ be
 a self-adjoint dilation  in the Hilbert space $\fH=L^2(\bbX)$  of the dissipative differentiation operator $\widehat D$.
 Assume that
 $$
\phi\in \cM=\Dom (\widehat D)\cup \Dom (\widehat D^*), \quad \|\phi\|=1.
$$
Denote by $\widehat \phi$ a state in $\fH$ such that
$$
\phi = P_ {L^2(\bbY)}\widehat \phi\quad \text{and}\quad (I-P_ {L^2(\bbY)}) \widehat \phi=0,
$$
where  $P_ {L^2(\bbY)}$ is the orthogonal projection  from
the Hilbert space $L^2(\bbX)$ onto its subspace $L^2(\bbY)$.

 Then
\begin{equation}\label{univer}
\lim_{n\to \infty}|(e^{it/n \bbH}{\widehat  \phi}, {\widehat \phi})|^{2n}
=e^{-\tau |t|},
\quad t\in \bbR, \end{equation}
where the decay constant
$\tau $
 is given by
\eqref{tautheta1} if  $\phi\in \Dom (\widehat D)$
  and by \eqref{tautheta2} if $\phi\in \Dom( (\widehat D)^*)
$, respectively.
}

\end{corollary}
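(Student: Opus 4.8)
The plan is to reduce the survival amplitude of the dilated dynamics to that of the dissipative dynamics on the subspace $L^2(\bbY)$, and then invoke Lemma \ref{dissss}. The starting point is the observation that $\widehat\phi$ is supported in $L^2(\bbY)$, i.e. $(I-P_{L^2(\bbY)})\widehat\phi=0$ and $P_{L^2(\bbY)}\widehat\phi=\phi$, so that for any bounded operator $T$ on $\fH$ one has $(T\widehat\phi,\widehat\phi)_{\fH}=(P_{L^2(\bbY)}T|_{L^2(\bbY)}\phi,\phi)_{L^2(\bbY)}$. Applying this with $T=e^{it\bbH}$ and using the semigroup form of the defining dilation property (cf. \eqref{dil}), namely $P_{L^2(\bbY)}e^{it\bbH}|_{L^2(\bbY)}=e^{it\widehat D}$ for $t\ge 0$, gives the identity
\begin{equation}\label{reduc}
(e^{it\bbH}\widehat\phi,\widehat\phi)_{\fH}=(e^{it\widehat D}\phi,\phi)_{L^2(\bbY)},\qquad t\ge 0.
\end{equation}
Thus for $\phi\in\Dom(\widehat D)$ the hypotheses of Lemma \ref{dissss} are met verbatim with $H=\bbH$, $\widehat A=\widehat D$, $\cK=L^2(\bbY)$ and state $\widehat\phi$, and the lemma yields \eqref{univer} with $\tau=2\Im(\widehat D\phi,\phi)$ (the limit being automatically even in $t$ since $\bbH=\bbH^*$ forces $|(e^{it\bbH}\widehat\phi,\widehat\phi)|=|(e^{-it\bbH}\widehat\phi,\widehat\phi)|$).

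For $\phi\in\Dom((\widehat D)^*)$ the subtlety — and the step I expect to be the main obstacle — is that the compressed semigroup identity \eqref{reduc} holds in the opposite time direction. Taking adjoints in the resolvent dilation relation $P_{L^2(\bbY)}(\bbH-zI)^{-1}|_{L^2(\bbY)}=(\widehat D-zI)^{-1}$ ($z\in\bbC_-$) and using $\bbH=\bbH^*$ shows that $\bbH$ is simultaneously a self-adjoint dilation of $(\widehat D)^*$ on the upper half-plane; equivalently, $-\bbH$ is a self-adjoint dilation of the maximal dissipative operator $-(\widehat D)^*$. Since $\widehat\phi$ corresponds to $\phi\in\Dom((\widehat D)^*)=\Dom(-(\widehat D)^*)$, I would apply Lemma \ref{dissss} to the pair $(-\bbH,\,-(\widehat D)^*)$; together with the evenness noted above this delivers \eqref{univer} with $\tau=2\Im(-(\widehat D)^*\phi,\phi)=-2\Im((\widehat D)^*\phi,\phi)\ge 0$. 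Getting the sign and the domain right here — recognizing that the accretive generator $(\widehat D)^*$ governs the backward half-trajectory — is the delicate point; everything else is a direct transcription of Lemma \ref{dissss}.

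It remains to evaluate $\tau$ explicitly. Integration by parts exactly as in the proof of Lemma \ref{domsopr} furnishes the boundary sesquilinear identity
\begin{equation}\label{bform}
2\Im(\widehat D\phi,\phi)=|\phi_\infty(0-)|^2-|\phi_\infty(0+)|^2-|\phi_\ell(0)|^2+|\phi_\ell(\ell)|^2
\end{equation}
in Case (iii), with the obvious reductions in Cases (i) and (ii). Substituting the boundary conditions \eqref{domdomi} (with $k=0$), \eqref{domdomii}, \eqref{domdomiii} satisfied by $\phi\in\Dom(\widehat D)$ collapses \eqref{bform} to \eqref{tautheta1}, while substituting \eqref{dotdom*1}--\eqref{dotdom*} for $\phi\in\Dom((\widehat D)^*)$ collapses $-2\Im((\widehat D)^*\phi,\phi)$ to \eqref{tautheta2}; the quantum Kirchhoff cancellation of the first three terms in \eqref{bform} is what leaves only the surviving boundary datum. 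These are precisely the $\Theta$-independent values already recorded in Lemma \ref{palka}, so the two computations agree and the corollary follows. This last paragraph is routine and requires no new ideas beyond the boundary bookkeeping.
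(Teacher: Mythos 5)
Your proof is correct and follows essentially the same route as the paper's: reduce everything to Lemma \ref{dissss} via the dilation identity, handle $\phi\in\Dom((\widehat D)^*)$ through the backward-time (adjoint) semigroup, and evaluate $\tau$ by the integration-by-parts boundary identity \eqref{lagrang} combined with the boundary conditions for $\widehat D$ and $(\widehat D)^*$. The only cosmetic difference is that you package the adjoint case as Lemma \ref{dissss} applied to the pair $(-\bbH,-(\widehat D)^*)$ (after taking adjoints in the resolvent dilation relation), whereas the paper applies the lemma to the semigroup $e^{-it(\widehat D)^*}$ and then uses $\bigl(e^{-it(\widehat D)^*}\bigr)^*=e^{it\widehat D}$ together with the dilation of $\widehat D$; the two arguments are interchangeable.
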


 \begin{proof}

 Suppose first that $\phi \in \Dom (\widehat D)$.

 Denote by $\cD$ the differentiation operator $i\frac{d}{dx}$ on $$\Dom(\cD)= \bigoplus_{e\subset \bbY}W_2^1(e),$$
where  the sum is taken over all edges $e$ of the graph $\bbY$.

Integration by parts for $\phi \in \Dom (\cD)$ yields
\begin{equation}\label{lagrang}
\Im(\cD\phi, \phi)=\frac12
\begin{cases}
|\phi_\infty(0-)|^2-|\phi_\infty(0+)|^2 \\
|\phi_\ell(\ell)|^2-|\phi_\ell(0)|^2 \\
|\phi_\infty(0-)|^2-|\phi_\infty(0+)|^2+|\phi_\ell(\ell)|^2-|\phi_\ell(0)|^2
\end{cases}
\end{equation}
in Cases (i), (i) and (iii), respectively.
Since $\phi \in  \Dom (\widehat D)$, taking into account  the boundary conditions \eqref{chtoxa} and \eqref{chtoto*}, from \eqref{lagrang} we obtain
$$
\Im(\widehat D\phi, \phi)= \Im( \cD\phi, \phi)=\frac12
\begin{cases}
|\phi_\infty(0-)|^2,& \text{in Case (i)}\\
|\phi_\ell(\ell)|^2,& \text{in Case (ii)}\\
|\phi_\ell(\ell)|^2,& \text{in Case (iii)}
\end{cases}.$$
Therefore,
$$2\Im(\widehat D\phi, \phi)=
\tau,
$$
where $\tau $ is given by \eqref{tautheta1}.

By Lemma \ref{dissss},
$$
\lim_{n\to \infty}|(e^{it \widehat D}\phi, \phi)|^{2n}
=e^{-2 \Im (\widehat D\phi, \phi) t}, \quad t\ge 0, \quad \phi\in\Dom(\widehat D),
$$
and therefore
\begin{equation}\label{limd}
\lim_{n\to \infty}|(e^{it/n \widehat D}\phi, \phi)|^{2n}
=e^{-\tau t}, \quad t\ge 0, \quad \phi\in\Dom(\widehat D),
 \end{equation}
 where  $\tau$ given by \eqref{tautheta1}.

Since $\bbH$ dilates $\widehat D$, we have
\begin{equation}\label{10}
(e^{it/n \bbH}\widehat \phi, \widehat \phi)=(e^{it \widehat D}\phi, \phi), \quad t\ge 0,
\end{equation}
and therefore
\begin{equation}\label{11}
\lim_{n\to \infty}|(e^{it/n \bbH}\widehat \phi, \widehat \phi)|^{2n}=\lim_{n\to \infty}|(e^{it/n \widehat D}\phi, \phi)|^{2n}=e^{-\tau t}, \quad t\ge 0.
\end{equation}
To complete the proof of  \eqref{univer} for $\phi \in \Dom(\widehat D)$  it remains  to observe that the return probability $p(t)=|(e^{it/n \bbH}\widehat \phi, \widehat \phi)|^2$
is an even function in $t$.

Next, suppose that   $
 \phi\in \Dom ((\widehat D)^*).
$

As above, by Lemma \ref{dissss},
\begin{equation}\label{terra}
\lim_{n\to \infty}|(e^{-it/n( \widehat D)^*}\phi, \phi)|^{2n}=e^{2\Im ((\widehat D)^*\phi, \phi)t}, \quad t\ge 0.
\end{equation}

Since $\phi \in\dom((\widehat D)^*)$, one can use boundary conditions  \eqref{chtoxaxa},  \eqref{chtototo} and the equality \eqref{lagrang}
to obtain that
\begin{align*}
 \Im ((-\widehat D)^*\phi, \phi)&=
 \Im ((- \cD)\phi, \phi)
 \\&=\frac12
 \begin{cases}|\phi_\infty(0+)|^2,& \text{in Case (i)}\\
|\phi_\ell(0)|^2,& \text{in Case (ii)  }\\
|\phi_\infty(0-)|^2-|\phi_\infty(0+)|^2-|\phi_\ell(0)|^2
,& \text{in Case (iii)  }
\end{cases}.
\end{align*}
(In Case (iii), we took into account  the second condition in \eqref{chtototo} that  $\phi_\ell(\ell)=0$).

Now, using the first condition in \eqref{chtototo}, one computes
$$
|\phi_\infty(0-)|^2-|\phi_\infty(0+)|^2-|\phi_\ell(0)|^2
=\left |\frac{\phi_\infty(0+)-k\phi_\infty(0-)}{\sqrt{1-k^2}}\right |^2,
$$
which shows that
\begin{align*}
2 \Im ((\widehat D)^*\phi, \phi)=-\begin{cases}|\phi_\infty(0+)|^2,& \text{in Case (i)}\\
|\phi_\ell(0)|^2,& \text{in Case (ii)  }\\
\left |\frac{\phi_\infty(0+)-k\phi_\infty(0-)}{\sqrt{1-k^2}}\right |^2
,& \text{in Case (iii)  }
\end{cases}
=\tau,
\end{align*}
where $\tau$ is given by \eqref{tautheta2}.

Again, by Lemma \ref{dissss},
$$
\lim_{n\to \infty}|(e^{-it (\widehat D)^*}\phi, \phi)|^{2n}
=e^{2 \Im ((\widehat D)^*\phi, \phi) t}, \quad t\ge 0, \quad \phi\in\Dom((\widehat D)^*),
$$
 and therefore
  \begin{equation}\label{limd*}
 \lim_{n\to \infty}|(e^{-it/n (\widehat D)^*}\phi, \phi)|^{2n}=
e^{-\tau t}, \quad t\ge 0,  \quad \phi\in\Dom((\widehat D)^*),
\end{equation}
 where  $\tau$ given by \eqref{tautheta2}.

Since $\bbH$ dilates $\widehat D$, we obtain
\begin{align*}
 \lim_{n\to \infty}|(e^{-it/n (\widehat D)^*}\phi, \phi)|^{2n}&= \lim_{n\to \infty}|(\phi, (e^{-it/n (\widehat D)^*})^*\phi)|^{2n}
 = \lim_{n\to \infty}|( e^{it/n \widehat D}\phi,\phi)|^{2n}
 \\&=\lim_{n\to \infty}|(e^{it/n \bbH}\phi, \phi)|^{2n}=e^{-\tau t}, \quad t>0,
\end{align*}
which  proves \eqref{univer}  for $\phi\in \Dom((\widehat D)^*)$
 with $\tau$  given by \eqref{tautheta2} and then, by symmetry,  for all $t\in \bbR$.

\end{proof}

\begin{remark}\label{chtoto}
We remark  that
if $\phi \in \Dom (\widehat D)\cap  \Dom ((\widehat D)^*)=\dom (\dot D)$ and therefore $\phi\in \Dom (D_\Theta)$ for all $|\Theta|=1$, then  $\phi$ is a Zeno state under continuous monitoring of the  dynamics
$\phi \mapsto e^{itH}\phi$ for any self-adjoint realizations $H=D_\Theta $ of the differentiation operator.  Therefore, in this case the corresponding decay constant  $\tau=\tau(\Theta)=0$ is $\Theta$-independent for  an  obvious reason.

Also notice that  under the requirement that $
\phi\in\cM= \Dom (\widehat D)\cup \Dom ((\widehat D)^*)
$
the boundary data that determine the decay constant \eqref{tautheta1} and \eqref{tautheta2}
can also be evaluated as follows.

Assume, for instance, that the metric graph $\bbY$ is in Case (iii) with its  main vertex  at the origin ($\mu=0$).
Denote by   $\bbX$   the full metric  graph containing $\bbY$ as its subgraph and let $\bbH$ be
the self-adjoint dilation   in the extended Hilbert space $L^2(\bbX)$ of the dissipative operator
$\widehat D$  in $L^2( \bbY)$.

 Suppose that  a two-component vector-function $ \Psi=\begin{pmatrix}
\phi_{\uparrow}
\\
\phi_{\downarrow}
\end{pmatrix}\in L^2(\bbX)
$
is a continuation  of the function $\phi$ from the graph $\bbY$ onto the full graph $\bbX$,
\begin{equation}\label{exten}
\Psi(x)=\phi(x), \quad x\in \bbY,
\end{equation}
such that  $\Psi \in \Dom(\bbH) $.

Then  the decay constant  \eqref{tautheta1} can be evaluated via the second component of the vector-function  $\Psi$ as
$$
\tau=|\phi_{\downarrow}(\ell)|^2.$$

Indeed, since
$\Psi\in \Dom (\bbH)$, the two-component vector-function $\Psi(x)$
is continuous, so is  its second component  $\phi_{\downarrow}(x)$.
 In particular,
 $
 \Psi(\ell)=\phi(\ell),
 $
 so that
\begin{equation}\label{vl1}
\phi_\ell(\ell)=\phi_{\downarrow}(\ell)
\end{equation}
and hence
$$
\tau=|\phi_\ell(\ell)|^2=|\phi_{\downarrow}(\ell)|^2.$$

   Moreover,  for  the decay constant  \eqref{tautheta2}  we have a similar expression
      \begin{equation}\label{vtoraia}
\tau=|\phi_{\downarrow}(0-)|^2.
\end{equation}

Indeed, since  $\Psi \in \Dom (\bbH)$, by  \eqref{bcdilation} we get
$$
\begin{pmatrix}
\phi_{\uparrow}(0+)\\
\phi_{\downarrow}(0+)
\end{pmatrix}
=
\begin{pmatrix}
k&-\sqrt{1-k^2}\\
\sqrt{1-k^2}&k
\end{pmatrix}
\begin{pmatrix}
\phi_{\uparrow}(0-)\\
\phi_{\downarrow}(0-)
\end{pmatrix}.
$$
In particular,
$$
\phi_{\uparrow}(0+)=k \phi_{\uparrow}(0-)-\sqrt{1-k^2}\phi_{\downarrow}(0-).
$$
Hence
 \begin{equation}\label{vtor}
\phi_{\downarrow}(0-)=\frac{k \phi_{\uparrow}(0-)-\phi_{\uparrow}(0+)}{\sqrt{1-k^2}}=\frac{k \phi_{\infty}(0-)-\phi_{\infty}(0+)}{\sqrt{1-k^2}}.
\end{equation}
By  \eqref{tautheta2},
$$
\tau=\left |\frac{k \phi_{\infty}(0-)-\phi_{\infty}(0+)}{\sqrt{1-k^2}}\right |^2,
$$
which together  with \eqref{vtor}  proves \eqref{vtoraia}.
\end{remark}

\section{Preliminaries. Probabilities versus amplitudes}\label{s18}

To  discuss  applications of the continuous monitoring principle in connection with the exponential decay phenomenon in quantum mechanics, we   need to warm up with some preliminaries.

Recall that
``when we deal with probabilities under ordinary circumstances, there are the following ``rules of composition": 1)
if something can happen in alternative ways, we add the probabilities for each of the different ways: 2) if the event occurs as a succession of steps - or depends on a number of things happening |`concomitantly" (independently) - then we multiply the probabilities of each of the steps (or things)" \cite[Ch. 3]{Feyn1}.

   Apparently,  under certain circumstances the probability $P$  of an event that   can  be realized in   two  (at  first glance mutually exclusive) alternative ways  $A_1$ and $A_2$ is not necessarily equals the sum of  probabilities $P_1$ and $P_2$ of the  events $A_1$ and $A_2$, that is,
$$P\ne P_1+P_2 \quad \text{(in general)}.
$$
A more detailed analysis of the experimental data  shows that the concept of an alternative should be analyzed  more carefully  and one has to distinguish between  {\it exclusive} and {\it  interference alternatives}. The latter occurs if  there is  no (experimental)  evidence  available  to answer the question  of how  the final event  has been realized, via the occurrence of $A_1$ or $A_2$? In other words, ``when alternatives cannot possibly be resolved by any experiment, they always interfere" \cite[page 14]{Feyn2}.

If the alternative ways of a realization of the event are exclusive, one  has the usual  addition law of probabilities
\begin{equation}\label{problaw}
P_{\text{ex}}=P_1+P_2.
\end{equation}

 In the case of an interference alternative, the rules of composition  should  be applied to  the amplitudes of probability  instead. Recall that there are  complex numbers
 $\phi$, $\phi_1$, $\phi_2$ (the probability amplitudes),  obtained, for example,  by solving a kind of wave  equation,
such  that  $$
P_{\text{int}}=|\phi|^2 , \quad P_1=|\phi_1|^2 \quad \text{and }\quad P_2=|\phi_2|^2.
 $$
In particular, the  addition law for (probability) amplitudes
 \begin{equation}\label{amplaw1}
 \phi=\phi_1+\phi_2
 \end{equation}
yields
 \begin{equation}\label{amplaw2}
 P_{\text{int}}=|\phi|^2=|\phi_1+\phi_2|^2 \quad (\ne P_{\text{ex}} \text{ in general)}.
 \end{equation}
 In this context it should be stressed that
the  (experimental)  knowledge of the  probabilities $P_{1,2}$ (but not the amplitudes $\phi_{1,2}$)  only  gives the two sided-estimate  for the probability $P_{\text{int}}$  of the final  event
 $$
 P_1+P_2- 2\sqrt{P_1P_2}\le P_{\text{int}}\le  \min\{1,P_1+P_2+ 2\sqrt{P_1P_2}\}.
 $$

  All of this is well known and has been extensively discussed in detail in connection with the two slit experiment (see, e.g.,
 \cite{Feyn1, Feyn2,Hei}, also see \cite{Vaid1, Vaid2} for the concept of interaction-free measurements).

Our goal  is to provide a solid mathematical background for understanding the phenomenon on a simple one-dimensional example of a quantum system and develop a framework where the concepts of exclusive and interference alternatives can be rigorously discussed

\section{Massless particles on a ring}

Consider a quantum system the configuration space of which is  a ring $\mathbb{S}$ obtained by identifying  the end-points of  a finite interval $[0, \ell]$.
  The dynamics  of the system   is described by the strongly continuous group of unitary operators
$U(t)=e^{-it/\hbar H}$,
 where the Hamiltonian $H$ is given by the differentiation operator on the ring, or, equivalently,  by the differentiation operator on  the  finite interval $[0, \ell]$ with periodic boundary conditions. That is,
  \begin{equation}\label{givenby}
 H=ic\hbar \frac{d}{dx} \quad \text{on}\quad   \dom(H)=\{f\in W_2^1((0, \ell))\,|\, f(0)=f(\ell)\}.
 \end{equation}
 To motivate the  choice of the Hamiltonian we  use the energy-momentum relation
 $$
 E^2=(cP)^2+(mc^2)^2
 $$
 and assume that we are dealing with a massless particle $(m=0)$ and then choose a square root brunch of  $(cP)^2$ to define   the energy operator $H$  as (cf. \cite{Sus})
 $$
 H=-cP.
 $$
Here $P$ denotes the momentum of a particle moving  with no dispersion at  the speed of light on the ring $\mathbb{S}$
in the direction  from ``$x=0$ to $x=\ell$.''

 In order to save ourselves from inventing new words such as ``wavicles'', we have chosen to call these objects  ``$\lamed\gimel$-particles"  (cf. \cite[p. 85]{Feyn1}).
In our opinion,
 $\lamed\gimel$-particles may, for instance,  serve as a one-dimensional prototype of  low energy electrons  in the vicinity of an impurity
in a zero-gap semiconductor. Recall that such electrons  can formally
 be described by the two-dimensional Dirac-like Hamiltonian
 \[
  H=-ic\hbar \,{\boldsymbol \sigma} \cdot {\boldsymbol\nabla}+V\,,\quad \text{with }\quad c=\nu_F,
 \]
 where  $\nu_F$ is the Fermi velocity, $\boldsymbol\sigma =(\sigma_x,\sigma_y)$ are the $2\times 2$ conventional Pauli matrices
  and $V $ is a short range ``defect'' potential \cite{DM84}.
 In this simplified  model we will imagine these electrons as fake spin-zero electrons which, however, can carry the  charge $\mathfrak{e}$.

Suppose that $\phi\in L^2((0, \ell))$, $\|\phi\|=1$, is a  wave-function describing the initial state of the  quantum system with the Hamiltonian
$$H=ic\hbar \frac{d}{dx}$$
with periodic boundary conditions \eqref{givenby}.
 If  no observation is made whatsoever,
  the time evolution
 $$U(t)\phi=e^{-it/\hbar H}\phi$$ of the state $\phi$
 is given by the family of unitary transformations
   $$
( U(t)\phi)(x)=\widetilde \phi(x+ct), \quad x\in [0,\ell], \quad t\in \bbR.
 $$
Here   $\widetilde \phi $ denotes the periodic extension of the function $\phi(x)$ from the interval $[0, \ell]$  onto the full real axis.
In other words, the wave packet  $U(t)\phi$ is confined to  move
at the speed of light  $c$
without  dispersion on
the ring $\mathbb{S}$ of radius $\ell/2\pi$, obtained from the interval $[0,\ell]$  by identifying its end-points.

 In this case, the survival probability  $$
 p(t)=|(e^{-it/\hbar H}\phi ,\phi)|^2
 $$
  to the initial state $\phi $
 is a periodic function with the period  $T=\ell/c$.

 In the forthcoming sections we will learn that  under continuous monitoring  the quantum system on the ring $\mathbb{S}$ becomes an open quantum system,
the  particles can be emitted and the whole system can be considered as a kind of quantum antenna.

 \section{Continuous monitoring with interference}\label{quansz}

Throughout this section we
assume that   the initial state $\phi$ is a $ W_2^1((0, \ell))$-function   that is allowed to have  a discontinuity (jump) at the point of the observation $x=0\equiv\ell$, that is,
 $$
 \phi(0)\ne \phi(\ell), \text{in general}.$$
Notice that although    $\phi\in  W_2^1((0, \ell))$, the initial state $\phi$ is not  required to belong to the domain of the Hamiltonian $H$ in general. That is,
it is not assumed that $\phi \in W_2^1(\mathbb{S})$, with $\mathbb{S}$ the ring obtained by identifying the end-points of the interval $[0,\ell]$

 The decay properties of  states with a unique jump-point on the ring under continuous
 monitoring are  described by the following result.

 \begin{theorem}\label{quanmon} Suppose that $\bbY$ is the  metric graph  $\bbY=(0,\ell)$ in Case $(ii)$. In the Hilbert space $\cH=L^2(\bbY)$  denote by
  $H$ the differentiation operator
    \begin{equation}\label{operring}
 H=ic\hbar \frac{d}{dx} \quad \text{on}\quad   \dom(H)=\{f\in W_2^1((0, \ell))\,|\, f(0)=f(\ell)\}.
 \end{equation}

If $\phi\in W_2^1((0, \ell))$ and  $\|\phi \|=1$, then
 \begin{equation}\label{interfer}
\lim_{n\to\infty}|(e^{-it/(\hbar n)H }\phi, \phi)|^{2n}=e^{-\tau |t|},\quad t\in \bbR,
 \end{equation}
 where
  \begin{equation}\label{QED0}
 \tau=c|\Delta \phi|^2=c|\phi(\ell)-\phi(0)|^2.
  \end{equation}
\end{theorem}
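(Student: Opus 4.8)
The plan is to reduce this statement to the already-proven Theorem \ref{decr}. The Hamiltonian $H=ic\hbar\frac{d}{dx}$ on the ring with periodic boundary conditions $f(0)=f(\ell)$ is, up to the physical constant $c\hbar$, precisely the self-adjoint differentiation operator $D_\Theta$ on the metric graph $\bbY=(0,\ell)$ in Case (ii) with the distinguished value $\Theta$ corresponding to the periodicity condition. Recall that the boundary condition \eqref{bcII} reads $f_\ell(0)=-\Theta f_\ell(\ell)$, so the periodic condition $f(0)=f(\ell)$ is the special case $\Theta=-1$. Thus the first step is to identify $\frac{1}{c\hbar}H=D_{\Theta}$ with $\Theta=-1$ and to absorb the factor $c\hbar$ by rescaling time. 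Concretely, $e^{-it/(\hbar n)H}=e^{(it c/n)\cdot(-D_{-1})}$; writing $-D_{-1}=D_{\Theta'}$ for the reversed-orientation operator (or simply tracking the sign and the constant $c$ through the asymptotics) lets me invoke the universal decay formula directly.

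The second step is to apply Theorem \ref{decr} in Case (ii). That theorem asserts that for any state $\phi\in\Dom((\dot D)^*)=W_2^1((0,\ell))$ with $\|\phi\|=1$, the survival probability under continuous monitoring decays exponentially, with rate $\tau(\Theta)=|\Theta\phi_\ell(\ell)+\phi_\ell(0)|^2$ in Case (ii). The membership $\phi\in W_2^1((0,\ell))$ is exactly the hypothesis of the present theorem, so the test-space requirement is met. Substituting the periodicity parameter $\Theta=-1$ gives the raw rate $|-\phi(\ell)+\phi(0)|^2=|\phi(\ell)-\phi(0)|^2=|\Delta\phi|^2$. I would take care with the direction of differentiation and the sign convention $H=ic\hbar\,d/dx$ versus the convention $D=i\,d/dx$ used in Section \ref{s9}; since the decay rate is a modulus squared and $|\Theta|=1$, any reversal of orientation merely conjugates $\Theta$ by a unimodular factor and cannot change $|\Delta\phi|^2$.

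The third step is to restore the physical constants and the speed $c$. Because time enters as $t/(\hbar n)$ against a generator carrying the factor $c\hbar$, the effective rescaled time is $ct/n$ up to the sign. Applying the change of variables $t\mapsto ct$ in the limit \eqref{dvadva} of Theorem \ref{decr} converts $e^{-|\Delta\phi|^2|t|}$ into $e^{-c|\Delta\phi|^2|t|}$, which is exactly \eqref{interfer} with $\tau=c|\Delta\phi|^2=c|\phi(\ell)-\phi(0)|^2$ as claimed in \eqref{QED0}. The even symmetry of the survival probability $p(t)=|(e^{-it/\hbar H}\phi,\phi)|^2$ in $t$ guarantees that the result holds for all $t\in\bbR$, not merely $t\ge 0$, so no separate argument for negative times is needed.

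The main obstacle, such as it is, is purely bookkeeping: correctly matching the conventions of the dynamical setup here (the factor $ic\hbar$, the minus sign in $e^{-it/\hbar H}$, and the periodic rather than antiperiodic boundary data) to the normalized conventions $H=D_\Theta=i\,d/dx$ of Theorem \ref{decr}, and verifying that the periodic condition corresponds to a legitimate value $|\Theta|=1$. I expect no analytic difficulty, since all the heavy lifting — the asymptotic expansion of the spectral measure, the appeal to the Gnedenko–Kolmogorov $1$-stable limit theorem via Theorem \ref{levy}, and the evaluation of the boundary functionals through Lemma \ref{bloody} — has already been carried out in establishing Theorem \ref{decr}. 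The one point that genuinely deserves a line of justification is that $\phi$ need not lie in $\Dom(H)$ (it is only required to be in $W_2^1((0,\ell))$, hence may have a jump at the identified endpoints), which is precisely what makes $\Delta\phi\ne 0$ possible and the state resonant rather than Zeno; this is consistent with the test space $\cL=\Dom((\dot D)^*)=W_2^1((0,\ell))$ of Theorem \ref{decr}.
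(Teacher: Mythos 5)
Your proposal is correct and is essentially the paper's own proof: the paper likewise writes $H=c\hbar D_{\Theta}$ with $\Theta=-1$ (periodicity being the case $\Theta=-1$ of the boundary condition \eqref{bcII}), applies Theorem \ref{decr} with \eqref{tautheta} in Case (ii), and absorbs $c\hbar$ by the time rescaling $t/(\hbar n)\mapsto ct/n$ to get $\tau=c|\phi(\ell)-\phi(0)|^2$. Your extra remarks on sign conventions and the evenness in $t$ are harmless but not needed, since \eqref{dvadva} already holds for all $t\in\bbR$ with decay in $|t|$.
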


 \begin{proof}   Let $D_\Theta$  be
 the differentiation operator $i\frac{d}{dx}$ on the finite interval $[0,\ell]$    defined on
 $$
 \dom(D_\Theta)=\{f\in W^1_2(0,\ell)\,|\, f(0)=- \Theta(\ell)\}.
 $$

Since $
 H=c\hbar D_\Theta,
 $ with $\Theta=-1$,
by   Theorem  \ref{decr} (see \eqref{tautheta} in Case (ii) with $\Theta=-1$)  we have
 \begin{align*}
\lim_{n\to \infty} |( e^{-it/(n\hbar) H}  \phi, \phi)|^{2n}&= \lim_{n\to \infty}|(e^{-i  ct/nD_{-1}}\phi, \phi)|^{2n}\\&=e^{-|-\phi(\ell)+\phi(0)|^2c|t|}
=e^{-c|\Delta \phi|^2|t|},
 \end{align*}
which proves   \eqref{interfer}.
 \end{proof}

More generally, just repeating  the proof presented above for an arbitrary $\Theta$, $|\Theta|=1$, we have the following
 \begin{corollary}\label{bohmcor}
{\it Let  $H^\Phi$, $ \Phi\in \bbR$, be the self-adjoint realization of the differential expression
$$
H^\Phi=ic\hbar \frac{d}{dx}
$$
on
$$
\dom (H^\Phi)=\{f\in W^1_2((0,\ell))\,|\, f(0)=e^{-i\Phi}f(\ell)\}.
$$

Then,
  \begin{equation}\label{QED} \lim_{n\to \infty} |(e^{-it/(\hbar n) H^\Phi} \phi,  \phi)|^{2n}=
  e^{-\tau_\Phi  |t|},
 \end{equation}
 where
 \begin{equation}\label{QED1}
 \tau_\Phi=c|\Delta_\Phi \phi|^2=c|e^{-i\Phi}\phi(\ell)-\phi(0)|^2 .
 \end{equation}

In particular,
if $$e^{-i\Phi}\phi(\ell)\ne\phi(0),$$
then  $ \phi$ is a resonant state
under continuous monitoring of the unitary  evolution $$\phi \mapsto e^{-it/\hbar H^\Phi} \phi$$
governed by the Hamiltonian $H^\Phi$.

}
 \end{corollary}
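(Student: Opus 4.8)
The plan is to reduce the statement to the already-proved Theorem \ref{quanmon} by a gauge (unitary) transformation that eliminates the twist in the boundary condition $f(0)=e^{-i\Phi}f(\ell)$, turning $H^\Phi$ into the periodic Hamiltonian $H=H^0$ of Theorem \ref{quanmon}. Concretely, first I would introduce the unitary multiplication operator $V_\Phi$ on $L^2((0,\ell))$ defined by
\begin{equation}
(V_\Phi f)(x)=e^{i\frac{\Phi}{\ell}x}f(x),\quad x\in[0,\ell].
\end{equation}
A direct check of boundary conditions shows that $V_\Phi$ maps $\dom(H^\Phi)=\{f\,|\,f(0)=e^{-i\Phi}f(\ell)\}$ onto $\dom(H^0)=\{f\,|\,f(0)=f(\ell)\}$: indeed, if $g=V_\Phi f$ then $g(0)=f(0)$ and $g(\ell)=e^{i\Phi}f(\ell)$, so $f(0)=e^{-i\Phi}f(\ell)$ becomes $g(0)=g(\ell)$. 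Differentiating, one obtains the intertwining relation $V_\Phi^* H^\Phi V_\Phi = H^0 - \frac{c\hbar\Phi}{\ell}I$, which is just the statement that the twisted differentiation operator is unitarily equivalent (up to an additive constant, the analogue of the magnetic gauge transformation in Remark \ref{remsamd}) to the periodic one.

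The second step is to transfer the survival-probability computation across this equivalence. Since $H^\Phi$ and $H^0-\frac{c\hbar\Phi}{\ell}I$ are unitarily equivalent via $V_\Phi$, and since the additive scalar $-\frac{c\hbar\Phi}{\ell}I$ contributes only a unimodular factor $e^{i c\Phi t/(n\ell)}$ to the amplitude $(e^{-it/(\hbar n)H^\Phi}\phi,\phi)$ (which disappears upon taking absolute values), I would write
\begin{equation}
|(e^{-it/(\hbar n)H^\Phi}\phi,\phi)|=|(e^{-it/(\hbar n)H^0}V_\Phi^*\phi,V_\Phi^*\phi)|.
\end{equation}
Applying Theorem \ref{quanmon} to the state $\psi=V_\Phi^*\phi\in W_2^1((0,\ell))$ (noting $\|\psi\|=\|\phi\|=1$) then yields $\lim_{n\to\infty}|(e^{-it/(\hbar n)H^0}\psi,\psi)|^{2n}=e^{-c|\psi(\ell)-\psi(0)|^2|t|}$.

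The final step is to evaluate the jump $|\psi(\ell)-\psi(0)|$ in terms of the original data. Here $\psi(x)=e^{-i\Phi x/\ell}\phi(x)$, so $\psi(0)=\phi(0)$ and $\psi(\ell)=e^{-i\Phi}\phi(\ell)$, whence
\begin{equation}
|\psi(\ell)-\psi(0)|=|e^{-i\Phi}\phi(\ell)-\phi(0)|=|\Delta_\Phi\phi|,
\end{equation}
giving exactly $\tau_\Phi=c|\Delta_\Phi\phi|^2$ as claimed in \eqref{QED1}. The resonance criterion then follows immediately: $\tau_\Phi>0$ iff $e^{-i\Phi}\phi(\ell)\neq\phi(0)$.

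Alternatively, and perhaps more in the spirit of the text's own proof of Theorem \ref{quanmon}, I would invoke Theorem \ref{decr} directly in Case (ii). One observes that $H^\Phi=c\hbar D_\Theta$ with $\Theta=-e^{-i\Phi}$ (so that $|\Theta|=1$), reads off the decay constant from \eqref{tautheta} in Case (ii), namely $\tau(\Theta)=|\Theta\phi_\ell(\ell)+\phi_\ell(0)|^2$, and scales by $c$ coming from the factor $c\hbar$ and the change of variables $t\mapsto ct$. Substituting $\Theta=-e^{-i\Phi}$, $\phi_\ell(\ell)=\phi(\ell)$, $\phi_\ell(0)=\phi(0)$ gives $\tau_\Phi=c|-e^{-i\Phi}\phi(\ell)+\phi(0)|^2=c|e^{-i\Phi}\phi(\ell)-\phi(0)|^2$, which is \eqref{QED1}. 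I expect the only genuine subtlety—hardly an obstacle—to be the bookkeeping of the speed-of-light factor $c$ and the $\hbar$-scaling when passing from the normalized differentiation operator $D_\Theta$ of Theorem \ref{decr} to the physical Hamiltonian $H^\Phi=ic\hbar\,d/dx$; both the time-rescaling $t\mapsto ct$ and the identification $\Theta=-e^{-i\Phi}$ must be tracked consistently, but everything else is a verbatim application of the already-established machinery.
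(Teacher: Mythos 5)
Your proposal is correct, and in fact your ``alternative'' second route is precisely the paper's own proof: the paper disposes of Corollary \ref{bohmcor} with the single remark that one repeats the proof of Theorem \ref{quanmon} for arbitrary $|\Theta|=1$, i.e., one writes $H^\Phi=c\hbar D_\Theta$ with $\Theta=-e^{-i\Phi}$ (so that the boundary condition $f(0)=-\Theta f(\ell)$ of \eqref{bcII} becomes $f(0)=e^{-i\Phi}f(\ell)$) and reads off $\tau(\Theta)=|\Theta\phi(\ell)+\phi(0)|^2$ from \eqref{tautheta} in Case (ii), with the factor $c$ coming from the time rescaling $t\mapsto ct$. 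Your primary gauge-transformation route is also valid, but it is not really a shortcut past the paper's machinery: Part 2 of the paper's proof of Theorem \ref{decr} handles general $\Theta$ by exactly this device, conjugating $D_\Theta=U_\Theta\bigl(D+\tfrac{\arg\Theta}{\ell}I\bigr)U_\Theta^*$ with the multiplication operator $(U_\Theta\phi)(x)=e^{-i\frac{\arg\Theta}{\ell}x}\phi(x)$; so your first argument reproduces, one level down, what Theorem \ref{decr} already packages, reducing the twisted operator to the $\Theta=1$ (here: periodic) case plus a harmless scalar.

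One local error you should fix in the first route: with your definition $(V_\Phi f)(x)=e^{i\Phi x/\ell}f(x)$, the map $V_\Phi$ sends $\dom(H^0)$ onto $\dom(H^\Phi)$, not the other way around. Indeed, if $g=V_\Phi f$ and $f(0)=e^{-i\Phi}f(\ell)$, then $g(0)=f(0)=e^{-i\Phi}f(\ell)=e^{-2i\Phi}g(\ell)$, not $g(0)=g(\ell)$. This does not damage the argument, because the intertwining relation you actually use, $V_\Phi^*H^\Phi V_\Phi=H^0-\tfrac{c\hbar\Phi}{\ell}I$, is the correct one (its domain is $V_\Phi^*(\dom(H^\Phi))=\dom(H^0)$), and your final bookkeeping $\psi=V_\Phi^*\phi$, $\psi(0)=\phi(0)$, $\psi(\ell)=e^{-i\Phi}\phi(\ell)$ is consistent with it; only the sentence claiming $V_\Phi(\dom(H^\Phi))=\dom(H^0)$ and its ``direct check'' need to be corrected.
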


\begin{remark}\label{aabb} Notice that the magnetic Hamiltonian $H^\Phi$  is unitarily equivalent to the operator  $H+\mathfrak{e}\cA(x)$
with
$$
\Phi= \frac{\mathfrak{e}}{c\hbar}\int_0^\ell \cA(x)dx.
$$
Here  $\mathfrak{e}$ is  the ``charge" of the $\lamed\gimel$-particle, $  \cA(x)$ is the magnetic potential (we assume that $\cA(x)$ is a piecewise real-valued continuous function),
and   $\int_0^\ell \cA(x)dx $  is  the flux of the field through the  ring.

 Indeed,
denote by  $U$  the   unitary  multiplication operator
$$
(U\Psi)(x)=\exp \left [{i\frac{\mathfrak{e}}{c\hbar}\int_0^x\cA(s)ds}\right ]\cdot \Psi(x), \quad \Psi \in L^2((0, \ell)).
$$

 Then
 $$
U^*(H+\mathfrak{e}\cA(x))U=H^\Phi,
$$
 which  follows from  the equality $$ \left (ic\hbar \frac{d}{dx}+\mathfrak{e}\cA(x)\right ) [ \cE(x)\cdot \Psi(x)]=\cE(x)\cdot
 ic\hbar \frac{d}{dx}\Psi(x),
 $$
 where
 $$
 \cE(x)=\exp \left [{i\frac{\mathfrak{e}}{c\hbar}\int_0^x\cA(s)ds}\right ],
 $$
 and the observation that
 $$
 \Dom (H)=U(\Dom (H^\Phi)) =U\left (\{f\in W_2^1((0, \ell))\,|\, f(0)=e^{-i \Phi}f(\ell)\}\right ).
 $$

\end{remark}

 Theorem \ref{quanmon}  and Corollary \ref{bohmcor} clearly suggest  that continuous observation over a quantum system
  should rather be  treated in the framework  of open quantum systems theory.
  Below is  a suitable model for that.

  \begin{theorem}\label{model} Given $\Phi\in [0,2\pi)$,
  in the Hilbert space $
  \cH=L^2((0,\ell))\oplus \bbC$ introduce the maximal dissipative operator $\widehat H^\Phi$
 defined  on
  $$
  \Dom(\widehat H^\Phi)=\left \{\begin{pmatrix}
  f\\c
  \end{pmatrix}
  \,\Bigg| \, f\in W_2^1((0,\ell)), \, c=f(0) \right \}
  $$
 as
  $$
  \widehat H^\Phi\begin{pmatrix}
  f\\c
  \end{pmatrix}=i\begin{pmatrix}
\frac{d}{dx}f(x)\\
f(0)-e^{-i\Phi}f(\ell)
 \end{pmatrix}.
  $$

  If $\phi \in \cH$ is a state such that $\phi \in \Dom (\widehat H^\Phi)$,
  $$
  \int_0^\ell|\phi(x)|^2dx+|\phi(0)|^2=1,
  $$
  then
  \begin{equation}\label{QED2} \lim_{n\to \infty} |(e^{it/ n) \widehat H^\Phi} \phi,  \phi)|^{2n}=
  e^{-\tau_\Phi  |t|},
 \end{equation}
 where
 \begin{equation}\label{QED3}
 \tau_\Phi=|e^{-i\Phi}\phi(\ell)-\phi(0)|^2 .
 \end{equation}

  \end{theorem}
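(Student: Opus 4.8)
The plan is to recognize $\widehat H^\Phi$ as a maximal dissipative operator on $\cH=L^2((0,\ell))\oplus\bbC$ and then to read the decay rate directly off Lemma~\ref{dissss}, which reduces the computation of $\lim_{n}|(e^{it/n\widehat H^\Phi}\phi,\phi)|^{2n}$ for $\phi\in\Dom(\widehat H^\Phi)$ to the single number $2\Im(\widehat H^\Phi\phi,\phi)$. First I would establish dissipativity: writing $\psi=(g,g(0))^T\in\Dom(\widehat H^\Phi)$ and integrating by parts, using $\Re\int_0^\ell g'\overline g\,dx=\tfrac12(|g(\ell)|^2-|g(0)|^2)$, one finds
$$
\Im(\widehat H^\Phi\psi,\psi)=\Re\left[\int_0^\ell g'(x)\overline{g(x)}\,dx+(g(0)-e^{-i\Phi}g(\ell))\overline{g(0)}\right]=\tfrac12\bigl|e^{-i\Phi}g(\ell)-g(0)\bigr|^2\ge0 .
$$
Maximality would then follow from the Lumer--Phillips criterion: for $\lambda=-iR$ with $R>0$ large the resolvent equation $(\widehat H^\Phi-\lambda)\psi=(u,v)^T$ is uniquely solvable by integrating the first order equation $ig'-\lambda g=u$ to express $g$ through $g(0)$ and $u$, and then determining $g(0)$ from the scalar component, the relevant coefficient $i(1+R)-ie^{-i\Phi}e^{-R\ell}$ being nonzero. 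Hence $\Ran(\widehat H^\Phi-\lambda)=\cH$ for such $\lambda$, and $\widehat H^\Phi$ is maximal dissipative.

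Next I would compute the decay constant for the given state. Taking $\psi=\phi$ (so the $L^2$-part is $\phi$ and the scalar part is $\phi(0)$) in the identity above yields $\Im(\widehat H^\Phi\phi,\phi)=\tfrac12|e^{-i\Phi}\phi(\ell)-\phi(0)|^2$, so that
$$
2\Im(\widehat H^\Phi\phi,\phi)=\bigl|e^{-i\Phi}\phi(\ell)-\phi(0)\bigr|^2=\tau_\Phi .
$$
Since the normalization $\int_0^\ell|\phi|^2dx+|\phi(0)|^2=1$ is precisely $\|\phi\|_{\cH}^2=1$ and $\phi\in\Dom(\widehat H^\Phi)$, Lemma~\ref{dissss} applies verbatim and gives
$$
\lim_{n\to\infty}\bigl|(e^{it/n\,\widehat H^\Phi}\phi,\phi)\bigr|^{2n}=e^{-2\Im(\widehat H^\Phi\phi,\phi)\,t}=e^{-\tau_\Phi t},\quad t\ge0,
$$
which is \eqref{QED2}--\eqref{QED3} for $t\ge0$.

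Finally, to cover all $t\in\bbR$ I would extend by the symmetry argument already used in the preceding corollary: for $s\ge0$ one has $\overline{(e^{is\widehat H^\Phi}\phi,\phi)}=(e^{-is(\widehat H^\Phi)^*}\phi,\phi)$, and $(\widehat H^\Phi)^*$ generates the adjoint contraction semigroup, so the modulus of the return amplitude is even in the time parameter; this promotes $e^{-\tau_\Phi t}$ to $e^{-\tau_\Phi|t|}$. The form computation and the limit are routine once the structural facts are in place, so the only genuinely delicate point is the verification of maximal dissipativity of $\widehat H^\Phi$ (and with it the well-posedness of the semigroup $e^{it\widehat H^\Phi}$); that is where I would concentrate the care, after which the theorem is an immediate instance of Lemma~\ref{dissss}.
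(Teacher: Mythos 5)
Your proposal is correct and follows essentially the same route as the paper: the identical integration-by-parts computation giving $\Im(\widehat H^\Phi\psi,\psi)=\tfrac12|e^{-i\Phi}\psi(\ell)-\psi(0)|^2$, followed by maximality and then the short-time expansion argument of Lemma \ref{dissss} to identify $\tau_\Phi=2\Im(\widehat H^\Phi\phi,\phi)$. In fact you go slightly further than the paper, which merely remarks that one must check that the lower half-plane lies in the resolvent set; your explicit solution of $(\widehat H^\Phi+iR)\psi=(u,v)^T$ supplies that omitted verification.
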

  \begin{proof}
  We have
  $$
  (\widehat H^\Phi f,f)=i\int_0^\ell f'(x)\overline{f(x)}dx+i(f(0)-
  e^{-i\Phi}f(\ell))\overline {f(0)},\quad f\in \Dom (\widehat H^\Phi).
  $$
  In particular,
 \begin{align*}
 \Im \left ( (\widehat H^\Phi f,f)\right )&=
 \frac12 |f(\ell)|^2-\frac12|f(0)|^2+|f(0)|^2-\Re\left (
  e^{-i\Phi}f(\ell)\overline {f(0)}\right)
  \\&=\frac12| e^{-i\Phi}f(\ell)-f(0)|^2\ge 0,
  \end{align*}
  which shows that $\widehat H^\Phi$ is a dissipative operator.
  To complete the proof it remains to check that  the lower half-plane belongs to the resolvent set of $\widehat H^\Phi$, so that $\widehat H^\Phi$ is a maximal dissipative operator, and then use the same    reasoning
  as the one in the proof of
   Lemma \ref{dissss}.

  \end{proof}
  \begin{remark}
The idea to add to the original Hilbert space $\cH$ a one-dimensional ``vacuum" subspace $\bbC$ is due to Schrader \cite{Schrader}, also see   \cite{Makthree}, \cite{Pavthree} and \cite{Shondin},
 where such extensions of a non-densely-defined symmetric operators found applications in modeling three-body systems with $\delta$-like interactions that are  free of the ``fall to the center" phenomenon.
 For the general extension theory for non-densely-defined operators and its applications we also refer to \cite{ABT,Krasn0,Krein0,Pav0}.

  \end{remark}

\subsection{Discussion}

The decay law  \eqref{interfer} shows that continuous monitoring eventually triggers an exponential decay of the  system. Meanwhile,  the  explicit expression  \eqref{QED0}
 for the decay constant $\tau$, $\tau=c|\Delta\psi|^2$,  suggests that we are dealing with  an interference alternative, which means that   we cannot  apply the laws of probabilities \eqref{problaw} and have to count on the composition  laws of amplitudes \eqref{amplaw1}.
 Indeed,
a particle arriving at the junction point, the point of observation,  has  two options: a) either to   stay on the track or b) be emitted. However, there is no way
to ``experimentally" confirm which option has been realized in reality. That is,
 we are not certain  about what  happened at the junction  point $x=0=\ell$, and  consequently,  the   description
 of  motion becomes an interference alternative.

 On the quantitative level, the reasoning presented above can be supported by the following considerations.

The incoming $\phi(\ell)$, outgoing $\phi(0)$ and  emission amplitudes  $\phi_\text{em}$ are to  satisfy the ``interference'' relation
  \begin{equation}\label{amplaw}
 \phi(\ell)=\phi_\text{em}+\phi(0).
 \end{equation}

 Since
 the quantity  $c|\phi_{\text{em}}|^2  t$ asymptotically describes
 the probability that the emitted particle can eventually be detected during the time interval $[0,t]$,
 \footnote{This can be justified as follows.
 Suppose we   put in  an ideal  detector   that  counts particles
 passing through the point $x_0$  in  the interval   $(0,\ell)$. If the initial state $\phi$ is a smooth function in a neighborhood of $x_0$, the probability $p_{x_0}(t)$ that the detector  will detect a particle during the time interval $t$ is asymptotically given by
 \begin{equation}\label{explog}
p_{x_0}(t)=c|\phi(x_0)|^2 t +o(t) \quad \text{as} \quad t\to 0.
\end{equation}

To justify the claim, recall that
 in accordance with the probabilistic interpretation of the wave-function, the probability to find the particle inside the interval $\delta \subset [0,\ell]$ is given by
 $$
 \text{Pr}\{ \text{``particle''}  \in \delta\}=\int_\delta |\phi(x)|^2dx.
 $$
In particular,
\begin{align*}
 \text{Pr}\{ \text{``particle''}  \in [x_0-\varepsilon,x_0]\}&=\int_{x_0-\varepsilon}^{x_0} |\phi(x)|^2dx
 \\&=|\phi(x_0)|^2 \varepsilon  +o(\varepsilon) \quad \text{as} \quad \varepsilon\to 0.
 \end{align*}

  If we repeat the experiment $N_\infty$ times,
  the quantity  $$
 \Delta N=N_\infty \int_{x_0-\varepsilon}^{x_0} |\phi(x)|^2dx
 $$
gives  the (average) number of outcomes when the (quantum) particle is accommodated by
the interval  $[x_0-\varepsilon,x_0]$.

One can change the point of view and assume that we are dealing with a beam of particles and that  initially  there were $N_\infty$   particles in the system.
Therefore,   $ \Delta N$ would have  the meaning
of the    averaged number   of particles
in the interval   $[x_0-\varepsilon,x_0]$.
For the quantum system in question wave-particle duality is exact and  hence we may assume
 that the particles are moving to the right
with speed of light $c$. Therefore,  in time
$$
t=\frac{\varepsilon}{c}
$$
all the  particles will leave the interval  $[x_0-ct,x_0]$ and will eventually be counted by  the detector.

Finally, the probability that the detector  will go off within the time interval $[0,t]$ is asymptotically given by
$$
 p_{x_0}(t)\sim\frac{\Delta N}{N_\infty}=\int_{x_0-ct}^{x_0} |\phi(x)|^2dx=
c|\phi(x_0)|^2 t +o(t)\quad \text{as} \quad t\to 0,
$$
which justifies the claim \eqref{explog}.
}
  the probability $P(t)$
of  staying on the ring
should fall off exponentially as
 \begin{equation}\label{eqqed}
 P(t)=e^{- c|\phi_{\text{em}}|^2  t}=e^{- c|\Delta \phi|^2  t}, \quad t\ge 0,
 \end{equation}
 where
  \begin{equation}\label{quantdecr}
 |\phi_{\text{em}}|^2= |\Delta \phi|^2=|\phi(\ell)-\phi(0)|^2,
 \end{equation}
 which gives a heuristic justification of decay law  \eqref{interfer} in Theorem \ref{quanmon}.

Notice that if  $\phi\in \Dom (H) $, then   the wave function is continuous at the junction point, that is,
 \begin{equation}\label{contj} \phi(\ell)=\phi(0).
 \end{equation} Therefore,   $|\phi_{\text{em}}|^2=0$ by \eqref{quantdecr} and hence
there is no emission of particles.
 In this case, the dynamics is frozen by the  continuous
  monitoring and  we face the quantum Zeno effect.

   However,  in the situation in question, in view of Corollary \ref{bohmcor} and Remark \ref{aabb}  one can  unfreeze the evolution (stopped by  continuous monitoring)
   by switching on the magnetic field through the ring.
 Indeed, since  the configuration space of the system (the ring $\mathbb{S}$) is not a simply connected set,
the effect of the magnetic potential will be to produce  the phase shift of the wave function  \cite{AhB} at the junction point
even if the magnetic field is absent in a neighborhood of the ring $\mathbb{S}$ (the Aharonov-Bohm effect)
  $$
  \phi(\ell)\mapsto \phi (\ell) e^{-i\Phi},
  $$
  where
  $$
  \Phi=\frac{\mathfrak{e}}{c\hbar}
   \int_0^\ell \cA(x)dx.
    $$
  Here  $\mathfrak{e}$ is  the ``charge" of the $\lamed\gimel$-particle and   $\int_0^\ell \cA(x)dx $  is  the flux of the field through the  ring.

    In this case,  the interference relation \eqref{amplaw} should be modified as  $$
 e^{-i\Phi} \phi(\ell)=\phi_\text{em}+\phi(0).
 $$
Therefore,  the decay properties of the state under continuous monitoring are  determined by the quantity
\begin{equation}\label{aharbom}
  |\Delta_\Phi \phi|^2=|e^{-i\Phi}\phi(\ell)-\phi(0)|^2,
 \end{equation}
which finally leads to the  decay law
$$
 P(t)=e^{- |\phi_{\text{em}}|^2 c t}=e^{- |\Delta_\Phi \phi|^2 c t}, \quad t\ge 0,
$$
 for the probability $P(t)$  to detect a particle remaining on the ring.

      In particular, under the assumption that  $|\phi(\ell)|=|\phi(0)|$, it follows from \eqref{aharbom} that
      the decrement
      $ |\Delta_\Phi \phi|^2
      $ experiences quite typical  Aharonov-Bohn oscillations which are periodic with respect to the flux of the field. That is,
     \begin{align}
       |\Delta_\Phi \phi|^2&= |e^{-i\Phi}e^{i \arg \phi(\ell)}-e^{i\arg\phi(0)}|^2\cdot |\phi(\ell)|^2\nonumber
       \\&=4 \sin^2\left (\frac{\Phi -\Delta \arg \phi}{2}\right )\cdot |\phi(\ell)|^2, \label{intell}
 \end{align}
where
      $$
     \Delta  \arg \phi=\arg \phi(\ell) -\arg \phi(0).
      $$

      The above discussion provides  a physically motivated example of a quantum system  with   the decay law
       \eqref{QED}, \eqref{QED1}, see  Corollary  \ref{bohmcor}.

 \section{Continuous monitoring with no interference}\label{classsz}

 Continuous monitoring of  open quantum systems leads to a completely  different understanding of decay processes.

 We will assume that the  time evolution of the system  is governed by a semi-group of contractive transformations generated by  a dissipative differentiation operator   such that  the initial state
 belongs to the domain of the  operator.

 \begin{theorem}\label{quanmondis}  Suppose that $\bbY$ is the  metric graph  $\bbY=(0,\ell)$ in Case $(ii)$.   Given   $|\varkappa|<1$, in the Hilbert space $\cH=L^2(\bbY)$  denote by  $\widehat H_\varkappa$
   the dissipative differentiation operator  $$
\widehat H_\varkappa=i c \hbar \frac{d}{dx}
 $$
 on
 \begin{equation}\label{bckappa}
 \Dom( \widehat H_\varkappa)=\left \{f\in W_2^1((0, \ell))\,|\, f(0)=\varkappa f(\ell)\right \}.
 \end{equation}

Assume that
 $\phi\in \Dom (\widehat H_\varkappa)
 $
and  $\|\phi \|=1$.

 Then
 \begin{equation}\label{clastau}
 \lim_{n\to\infty}|(e^{it/(\hbar n)\widehat H_\varkappa }\phi, \phi)|^{2n}=
e^{-\tau t},\quad t\ge 0,
\end{equation}
 where
\begin{equation}\label{clastau1}
 \tau=c\Delta |\phi|^2=c(|\phi(\ell)|^2-|\phi(0)|^2).
 \end{equation}

\end{theorem}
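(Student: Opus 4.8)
Theorem \ref{quanmondis} asserts that for the dissipative differentiation operator $\widehat H_\varkappa = ic\hbar\frac{d}{dx}$ on the finite interval with boundary condition $f(0)=\varkappa f(\ell)$, and for an initial state $\phi$ in its domain, the survival probability under continuous monitoring decays exponentially with rate $\tau = c(|\phi(\ell)|^2 - |\phi(0)|^2)$. Let me sketch how I would prove it.

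The plan is to recognize that $\widehat H_\varkappa$ is (up to the constant factor $c\hbar$) a maximal dissipative differentiation operator of exactly the type studied in Section \ref{secmodop}, so that its continuous-monitoring decay rate is governed by the abstract Lemma \ref{dissss}. That lemma tells us that for an initial state $\phi$ lying in the domain of a maximal dissipative generator $\widehat A$, the survival probability under continuous monitoring decays exponentially with rate $2\,\Im(\widehat A\phi,\phi)$. Thus the whole problem reduces to two ingredients: first, checking that $\widehat H_\varkappa$ is indeed maximal dissipative (so that $s\mapsto e^{is\widehat H_\varkappa/\hbar}$ is a genuine contraction semigroup to which the lemma applies), and second, an explicit evaluation of the boundary functional $\Im(\widehat H_\varkappa\phi,\phi)$ for $\phi\in\Dom(\widehat H_\varkappa)$.

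The central computation is a one-line integration by parts. For $\phi\in\Dom(\widehat H_\varkappa)$ I would write
\begin{equation}
(\widehat H_\varkappa\phi,\phi)=ic\hbar\int_0^\ell\phi'(x)\overline{\phi(x)}\,dx,
\end{equation}
and, integrating by parts, obtain that $\int_0^\ell\phi'\overline{\phi}\,dx+\overline{\int_0^\ell\phi'\overline{\phi}\,dx}=|\phi(\ell)|^2-|\phi(0)|^2$, i.e. twice the real part of $\int_0^\ell\phi'\overline{\phi}\,dx$ equals $|\phi(\ell)|^2-|\phi(0)|^2$. Since $\Im(iz)=\Re(z)$ and $c\hbar$ is real, this yields
\begin{equation}
\Im(\widehat H_\varkappa\phi,\phi)=\frac{c\hbar}{2}\bigl(|\phi(\ell)|^2-|\phi(0)|^2\bigr).
\end{equation}
Invoking the boundary condition $\phi(0)=\varkappa\phi(\ell)$ gives $|\phi(0)|^2=|\varkappa|^2|\phi(\ell)|^2\le|\phi(\ell)|^2$ because $|\varkappa|<1$, so $\Im(\widehat H_\varkappa\phi,\phi)\ge0$; this simultaneously confirms dissipativity of $\widehat H_\varkappa$ and shows that $\tau=c(1-|\varkappa|^2)|\phi(\ell)|^2\ge0$ as a decay constant should be.

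To finish, I would feed this into the argument of Lemma \ref{dissss}, taking care of the extra factor $1/\hbar$ present in the evolution $e^{it\widehat H_\varkappa/(\hbar n)}$. Expanding $(e^{i\varepsilon\widehat H_\varkappa}\phi,\phi)=1+i\varepsilon(\widehat H_\varkappa\phi,\phi)+o(\varepsilon)$ and taking modulus gives $|(e^{i\varepsilon\widehat H_\varkappa}\phi,\phi)|=1-\varepsilon\,\Im(\widehat H_\varkappa\phi,\phi)+o(\varepsilon)$; substituting $\varepsilon=t/(\hbar n)$ and raising to the $2n$-th power produces, in the limit, $\exp\!\bigl(-\tfrac{2t}{\hbar}\Im(\widehat H_\varkappa\phi,\phi)\bigr)=e^{-c(|\phi(\ell)|^2-|\phi(0)|^2)t}$, which is exactly \eqref{clastau}--\eqref{clastau1}. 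The only genuinely non-routine point is the standing assumption that $\widehat H_\varkappa$ is maximal dissipative rather than merely dissipative; I would dispatch it by noting that $\widehat H_\varkappa=c\hbar\,\widehat D_{II}(\varkappa,\ell)$ and that the lower half-plane lies in its resolvent set (the resolvent being computed explicitly exactly as in Theorem \ref{model}), so that $\widehat H_\varkappa$ generates a contraction semigroup and Lemma \ref{dissss} is applicable. Everything else is bookkeeping around the $\hbar$-rescaling.

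This identifies the decay constant $\tau=c\bigl(|\phi(\ell)|^2-|\phi(0)|^2\bigr)$ as a purely ``classical'' flux balance—outgoing probability current at $x=\ell$ minus incoming current at $x=0$—in sharp contrast to the interference-type rate $c|\phi(\ell)-\phi(0)|^2$ of the unitary Theorem \ref{quanmon}.
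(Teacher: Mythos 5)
Your proposal is correct and follows essentially the same route as the paper: integration by parts to evaluate $\Im(\widehat H_\varkappa\phi,\phi)=\tfrac{c\hbar}{2}\bigl(|\phi(\ell)|^2-|\phi(0)|^2\bigr)$, followed by the small-time expansion argument of Lemma \ref{dissss} applied to the contraction semigroup $e^{it\widehat H_\varkappa/\hbar}$, giving the rate $2\hbar^{-1}\Im(\widehat H_\varkappa\phi,\phi)=\tau$. Your additional verification of maximal dissipativity (via the resolvent in the lower half-plane) is a point the paper leaves implicit, but it is routine and does not change the substance of the argument.
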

 \begin{proof} Integration by parts yields
$$
\int_0^\ell \phi'(x)\overline{\phi(x)}dx=|\phi(\ell)|^2 -|\phi(0)|^2-\int_0^\ell \phi(x)\overline{\phi'(x)}dx,
 $$
and therefore
 \begin{align*}
\Re  \int_0^\ell \phi'(x)\overline{\phi(x)}dx&=\frac{\int_0^\ell \phi'(x)\overline{\phi(x)}dx+\overline{\int_0^\ell \phi'(x)\overline{\phi(x)}dx}
}{2}
\\&=\frac{|\phi(\ell)|^2 -|\phi(0)|^2}{2}\ge 0  \quad (\text{for} \quad \phi\in \Dom(\widehat H_\varkappa).
\end{align*}

 Since  $\phi \in \Dom (\widehat H_\varkappa)$,
 as in the proof of Lemma \ref{dissss} one obtains
$$
\lim_{n\to \infty}|(e^{it/(n\hbar) \widehat H_\varkappa } \phi, \phi)|^{2n}=e^{- 2\hbar^{-1} \Im (\widehat H_\varkappa \phi, \phi)t }, \quad t\ge 0,
$$
with
 \begin{align*}
 2\hbar^{-1} \Im (\widehat H_\varkappa \phi, \phi)t&=2c\,\Im\left ( i\int_0^\ell \phi'(x)\overline{\phi(x)}dx\right )
 \\&= c(|\phi(\ell)|^2 -|\phi(0)|^2)=\tau.
 \end{align*}
 \end{proof}

 One can go back from the reduced  description of the open quantum system to the full one
 following the extended Hilbert space approach  presented below.

 Consider an open quantum system prepared in the state $\phi \in L^2(\mathbb{S})=L^2((0,\ell))$ the time evolution of which  generated by the dissipative operator
 $\widehat H_\varkappa$ with the boundary condition parameter $\varkappa$, $|\varkappa|<1$, referred to in Theorem \ref{quanmondis}.

 Suppose  that $\phi\in W_2^1((0, \ell))$ is  such that the radiation condition
  \begin{equation}\label{instr}
 |\phi(\ell)|> |\phi(0)|
 \end{equation}
holds. Assume, in addition, that $\phi \in\Dom ( \widehat H_\varkappa)$, that is,
 $$
   \varkappa=\frac{\phi(0)}{\phi(\ell)}.
$$

Along  with  the open  quantum system
in the state space  $L^2((0,\ell))$
 introduce
a new  quantum system in an extended Hilbert space $\fH$
 containing    $L^2((0,\ell))$ as a (proper) subspace. For the Hamiltonian  $\bbH$  of the new system in the extended Hilbert space
we choose a (minimal)  self-adjoint dilation of the dissipative operator $\widehat H_\varkappa$
and  the new state of the system  $\widehat \phi\in \fH $  is a clone of  $\phi$ considered as an element of the extended Hilbert space $\fH$.

The  results of continuous monitoring of  the quantum evolution  $\widehat \phi \mapsto  e^{-it/\hbar \bbH}\widehat \phi$    generated by the self-adjoint Hamiltonian
$\bbH$   in the Hilbert space $\fH$  can be described  as follows.

\begin{corollary}\label{smes0}
{\it
 Suppose that $\bbY$ is the  metric graph  $\bbY=(0,\ell)$ in Case $(ii)$.   Given   $|\varkappa|<1$, in the Hilbert space $\cH=L^2(\bbY)$  denote by  $\widehat H_\varkappa$
   the dissipative differentiation operator  $$
\widehat H_\varkappa=i c \hbar \frac{d}{dx}
 $$
 on
 \begin{equation}\label{bckappa0}
 \Dom( \widehat H_\varkappa)=\left \{f\in W_2^1((0, \ell))\,|\, f(0)=\varkappa f(\ell)\right \}.
 \end{equation}

Let $\bbH$  be a self-adjoint dilation of the dissipative operator $\widehat H_\varkappa$
in an extended Hilbert $\fH$ space containing the original Hilbert space $\cH$  as a $($proper$)$ subspace $\cH=L^2((0,\ell))\subset \fH$.
Suppose that
$$
\phi\in \cH, \quad \|\phi\|=1.
$$

Then
$$
 \lim_{n\to \infty} |(e^{-it/(\hbar n)\bbH } \phi,  \phi)|^{2n}= \lim_{n\to \infty} |(e^{i|t|/(\hbar n)\widehat H_\varkappa } \phi,  \phi)|^{2n}, \quad t\in \bbR,
 $$
 provided that at least one (and therefore both) of the limits exist.

 In particular, if
 $$
 \phi\in \Dom (\widehat H_\varkappa),
 $$
then
$$
 \lim_{n\to \infty} |(e^{-it/(\hbar n)\bbH } \phi,  \phi)|^{2n}=e^{-\tau  t},\quad t>0,
 $$
where the decay constant $\tau$ is given by
\begin{equation}\label{obscheetaudil0}
\tau=
c(|\phi(\ell )|^2- |\phi(0)|^2).
\end{equation}
}

\end{corollary}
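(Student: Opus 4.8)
The plan is to reduce the entire statement to the single identity
$$
|(e^{-i(t/(\hbar n))\bbH}\phi, \phi)_{\fH}|^{2}=|(e^{i(|t|/(\hbar n))\widehat H_\varkappa}\phi, \phi)_{\cH}|^{2}, \qquad t\in\bbR,
$$
from which both assertions follow by raising to the power $n$ and passing to the limit. The starting point is the analytic content of a self-adjoint dilation recorded in Theorem \ref{dilthm} (cf. eq. \eqref{dil}): since $\bbH$ is a self-adjoint dilation of the maximal dissipative operator $\widehat H_\varkappa$ acting in $\cK=\cH=L^2((0,\ell))\subset\fH$, the compression of the unitary group of $\bbH$ onto $\cH$ reproduces the contraction semigroup generated by $\widehat H_\varkappa$,
$$
e^{i(s/\hbar)\widehat H_\varkappa}=P_{\cH}\,e^{i(s/\hbar)\bbH}\big|_{\cH},\qquad s\ge 0,
$$
where $P_{\cH}$ is the orthogonal projection of $\fH$ onto $\cH$. (That $\widehat H_\varkappa$ generates a contraction semigroup for $s\ge 0$ follows from its dissipativity, which is the integration-by-parts computation in the proof of Theorem \ref{quanmondis}.) Because $\phi\in\cH$ we have $P_{\cH}\phi=\phi$, and pairing the displayed compression formula with $\phi$ yields $(e^{i(s/\hbar)\bbH}\phi,\phi)_{\fH}=(e^{i(s/\hbar)\widehat H_\varkappa}\phi,\phi)_{\cH}$ for all $s\ge 0$.

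Next I would upgrade this one-sided relation to all of $\bbR$ using self-adjointness of $\bbH$. Since $e^{-i(s/\hbar)\bbH}=(e^{i(s/\hbar)\bbH})^{*}$, one has $(e^{-i(s/\hbar)\bbH}\phi,\phi)=\overline{(e^{i(s/\hbar)\bbH}\phi,\phi)}$, so the return probability $|(e^{\pm i(s/\hbar)\bbH}\phi,\phi)|^{2}$ is an even function of $s$ and depends only on $|s|$. Combining this with the dilation identity applied at the nonnegative argument $|s|$ gives
$$
|(e^{-i(s/\hbar)\bbH}\phi,\phi)|^{2}=|(e^{i(|s|/\hbar)\bbH}\phi,\phi)|^{2}=|(e^{i(|s|/\hbar)\widehat H_\varkappa}\phi,\phi)|^{2},\qquad s\in\bbR .
$$
Setting $s=t/n$ and raising both sides to the $n$-th power produces the boxed identity above; taking $n\to\infty$ then establishes the first assertion, namely that the two limits coincide whenever one of them exists (the evenness guarantees it suffices to treat $t>0$).

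For the explicit decay rate I would invoke Theorem \ref{quanmondis} (equivalently Lemma \ref{dissss}) under the hypothesis $\phi\in\Dom(\widehat H_\varkappa)$: it gives $\lim_{n\to\infty}|(e^{i(|t|/(\hbar n))\widehat H_\varkappa}\phi,\phi)|^{2n}=e^{-\tau|t|}$ with $\tau=2\hbar^{-1}\Im(\widehat H_\varkappa\phi,\phi)=c(|\phi(\ell)|^{2}-|\phi(0)|^{2})$, the last equality being the integration-by-parts identity already used in Theorem \ref{quanmondis}; this yields \eqref{obscheetaudil0}. Transporting this through the boxed identity furnishes $\lim_{n\to\infty}|(e^{-i(t/(\hbar n))\bbH}\phi,\phi)|^{2n}=e^{-\tau t}$ for $t>0$. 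I do not expect a genuine obstacle here, since all ingredients are already in place; the only point demanding care is the bookkeeping of signs and of the factor $\hbar$, together with the fact that the compression formula holds only for nonnegative time. It is precisely the self-adjointness of $\bbH$ (hence the evenness of the survival probability) that upgrades the one-sided compression to the two-sided statement, which is why the absolute value $|t|$ appears on the dissipative side and is indispensable.
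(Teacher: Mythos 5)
Your proof is correct and follows essentially the same route as the paper: the compressed-semigroup identity $(e^{i(s/\hbar)\bbH}\phi,\phi)=(e^{i(s/\hbar)\widehat H_\varkappa}\phi,\phi)$ for $s\ge 0$ coming from the dilation (Theorem \ref{dilthm}), the evenness of the return probability from self-adjointness of $\bbH$, and Lemma \ref{dissss}/Theorem \ref{quanmondis} for the explicit rate $\tau=2\hbar^{-1}\Im(\widehat H_\varkappa\phi,\phi)=c(|\phi(\ell)|^2-|\phi(0)|^2)$ — this is exactly the argument the paper uses for the analogous dilation corollary at the end of Section 17 (eq. \eqref{univer}). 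Your observation that the identity holds termwise (not only in the limit), which makes the ``one limit exists iff both do'' clause immediate for arbitrary $\phi\in\cH$, is a clean bookkeeping of what the paper leaves implicit.
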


 \subsection{Discussion}
 The decay law  \eqref{clastau},  \eqref{clastau1} suggests  that  the interference effects are definitely absent.
In order to get an adequate explanation for  the phenomenon, instead of applying the composition law of amplitudes \eqref{amplaw}  one has to use the calculus of probabilities
   \begin{equation}\label{problaw1}
 |\phi(\ell)|^2=|\phi_\text{em}|^2+|\phi(0)|^2.
 \end{equation}
Here is an argument supporting  \eqref{problaw1}:  a particle arriving at the junction point still has  two options: a) either to   stay on the track or b) be emitted.
However, the transition from the open quantum system with state space $\cH=L^2(\mathbb{S})$ referred to in Theorem \eqref{quanmondis} to the closed one in the extended Hilbert space $\fH$ containing $\cH$ as a proper subspace and discussed  in Corollary \ref{smes0} assumes that an additional scattering channel $\fH\ominus \cH$ is added and the emitted particles as well as the ones stayed on the track  can eventually  be counted.
In other words, we are dealing with an exclusive alternative.

 From the experimental viewpoint in this case, the arrangement of the corresponding Gedankenexperiment involves the installation of two additional detectors
 $D_{\ell}$ and $D_{0}$ that  count the particles that
 pass through the point $x=\ell$ and $x=0$. In other words,
we
 accept  the experimental condition  that the emitted particles can eventually  be counted (by combining the readings of the two detectors).

  Given \eqref{problaw1}, arguing as in Section 19
we
arrive  to  the exponential decay law
  \begin{equation}\label{clssika1}
 P(t)=e^{- |\phi_{\text{em}}|^2 c t}=e^{- c\Delta |\phi|^2  t}, \quad t\ge 0,
 \end{equation}
where $P(t)$  stands for the probability to detect  the particle on the ring at the moment of time $t$ and
 \begin{equation}\label{classdecr}
|\phi_{\text{em}}|^2= \Delta |\phi|^2=|\phi(\ell)|^2-|\phi(0)|^2.
 \end{equation}

Below we offer an heuristic explanation of the law \eqref{classdecr} based on purely classical interpretation of the nature of a $\lamed\gimel$-particle.

  When we watch the beam  of $\lamed\gimel$-particles
  by observing the readings of the two detectors,
  we indeed deal with an exclusive alternative. Denote by $N_\infty(t)$ the total amount of particles on the ring  at the moment of time $t$ and let $N_\ell$  be the amount of particles  that passed through the point $x=\ell$ and arrived at the check point $x=0$ during the time interval $[0,t]$. The arrived particles   ``have'' the alternative: either   to keep moving  on the ring or to be emitted.
  By checking the  readings of the detector $D_0$ we know  that     $N_0$
   out of  $N_\infty $ particles
   stayed traveling  along the ring. Next, taking into account the readings of the detector $D_\ell$, we conclude that
 the remaining  $\Delta N=N_\ell-N_0$ particles   have beed emitted  during the time interval $[0,t]$. (Here we implicitly assume that there is no other mechanism that causes the  particles
to radiate. Why this hypothesis is consistent with the way of the suggested reasoning will be explained later).

 Given the wave-function probabilistic interpretation above, it is easy to see that
 $$
 \frac{N_\ell}{N_\infty}= c|\phi(\ell-)|^2t   +o(t) \quad\text{ and }\quad  \frac{N_0}{N_\infty}= c|\phi(0+)|^2t +o(t)
 \quad
 \text{as}\quad  t\to 0.
 $$

Therefore,
$$
 \frac{ \Delta N }{N_\infty}  =\left ( |\phi(\ell)|^2-|\phi(0)|^2\right ) ct+o(t) .
 $$
 Repeating that monitoring    over and over, in the limit $ t\to 0$, we arrive at the differential equation
 $$
 \frac{dN}{dt}=-c\left (|\phi(\ell)|^2-|\phi(0)|^2\right )N,
 $$
 $$
 N(0)=N_\infty,
 $$
that  governs the  counting  process.

 Therefore,  under continuous monitoring with detectors that are going off,
 the total  number of particles $N(t)$   as a function of time falls off exponentially  as
 \begin{equation}\label{nuivot}
 N(t)=N_\infty e^{-  c\Delta |\phi|^2 t},
 \end{equation}
where  the  decrement  $ \Delta |\phi|^2>0$ is given by
\eqref{classdecr}
 (provided that   $|\phi|$ has a jump at the point $x=0$).

 Notice that if the state $\phi$ is a continuous function on the ring, then no emission is observed and
then the quantum Zeno effect  takes place.

Summarizing, we arrive at the conclusion that the computation of the emission probability for the quantum system $(H,\phi)$ referred to in Theorem \ref{quanmon}
requires the application of the composition law  of amplitudes \eqref{amplaw}  (the interference alternative scenario). Meanwhile, the decay rate  for the open quantum system  $(\widehat H_\varkappa, \phi)$ referred to in Theorem \ref{quanmondis} or for the quantum system $ (\bbH, \widehat \phi)$ in the extended Hilbert space $\fH$ (see Corollary \ref{classmoncor})
can be evaluated using the  rules of the calculus of probabilities \eqref{problaw1} (the exclusive alternative scenario).

\section{The Self-adjoint dilation}

The self-adjoint dilation $\bbH$ of the dissipative operator $\widehat H_\varkappa$ referred to in Corollary \ref{smes0} can be described explicitly as the differentiation operator
on the metric graph $\bbY$ with appropriate boundary conditions. However, the geometry of the metric
graph that determines the configuration space of the quantum system depends on whether  or not the parameter $\varkappa$ in the boundary  condition \eqref{bckappa} vanishes.

 If $\varkappa\ne 0$,
 as it follows from Theorem \ref{dilthm}, the extended Hilbert space can  be chosen to coincide with
 $\fH=L^2(\bbY)=L^2(\overline{\bbY})$
 where  $\bbY$ is the metric graph  $$\bbY=(-\infty,0)\sqcup(0,\ell)\sqcup(0,\infty)\quad \text{ in Case } (iii) $$ and $\overline{\bbY}$  denotes its one-cycle completion,
 the configuration space of the extended quantum system, while the  self-adjoint dilation
 $\bbH$, the Hamiltonian, coincides with  the differentiation operator
on the graph
    \begin{equation}\label{}
 \bbH=ic\hbar \frac{d}{dx}
  \end{equation}
defined on the domain of functions satisfying the boundary conditions
\begin{equation}\label{bbking}
\begin{pmatrix}
f_\infty(0+)\\
f_\ell(0)
\end{pmatrix}=
\begin{pmatrix}
|\varkappa|&- \sqrt{1-|\varkappa|^2} \frac{\varkappa}{|\varkappa|}\\
 \sqrt{1-|\varkappa|^2}&\varkappa
\end{pmatrix}
 \begin{pmatrix}
f_\infty(0-)\\
f_\ell(\ell)
\end{pmatrix}.
\end{equation}

 In the exceptional case  $\varkappa= 0$,  the boundary condition \eqref{bckappa}   that determines the dissipative operator $\widehat H_0$
 is local. Therefore, it is convenient to assume that the configuration space for the corresponding open quantum system (if $\varkappa =0$) is the finite interval $(0,\ell)$ rather than a ring.
Consequently, in this case  the extended Hilbert space may be chosen as
 $\fH=L^2(\bbY)$
 where  $\bbY$ is the metric graph
 $$
 \bbY=(-\infty,0)\sqcup(0,\ell)\sqcup(\ell,\infty)\quad \text{ in Case }(i).
 $$
The  self-adjoint dilation (the Hamiltonian)
 $\bbH$ of the  dissipative operator $\widehat H_0$ can be chosen  to be the differentiation operator
on the graph
    \begin{equation}\label{bich}
 \bbH=ic\hbar \frac{d}{dx}
  \end{equation}
with  the self-adjoint boundary condition
 \begin{equation}\label{bichbc}
f(\ell+)=\Theta f(\ell-), \quad |\Theta|=1,
  \end{equation}with an  arbitrary choice of the unimodular extension parameter $\Theta$.

Notice that in the limit $\varkappa\to 0$ along the ray $\varkappa=-|\varkappa|\Theta$  the boundary conditions
\eqref{bbking} split as
$$
f_\infty(0+)=\Theta f_\ell(\ell)
$$
and
\begin{equation}\label{posled}
f_\infty(0-)=f_\ell(0).
\end{equation}
In view of \eqref{posled}, the one-cycle graph $\overline{\bbY}$ ``unwinds" to a straight line
$$\bbY=(-\infty,0)\sqcup(0,\ell)\sqcup(\ell,\infty)$$ which can naturally  be identified with the real axis.
One can show that  the corresponding  self-adjoint dilations  of the dissipative operators $\widehat H_{-|\varkappa |\Theta}$
approach  (in the strong resolvent sense)
 the operator \eqref{bich}  with the boundary condition \eqref{bichbc}.

Notice that  emission  amplitude  $\phi_{\text{em}}$  from \eqref{classdecr}  can be evaluated
by directly solving   the Schr\"odinger equation
\begin{equation}\label{schonY}
i\hbar \frac{\partial}{\partial t}\Psi =\bbH \Psi
\end{equation}
on the one-cycle graph $ \overline{\bbY}$.

Indeed, representing the initial state  $\widehat \phi $ in the extended Hilbert space $\fH=L^2(\overline{\bbY})$ as the two-component vector function
$$
\widehat \phi=\begin{pmatrix}\widehat \phi_\infty\\
\widehat \phi_\ell
\end{pmatrix}
=\begin{pmatrix}0\\ \phi
\end{pmatrix},\quad \widehat \phi \in L^2(\overline{\bbY}),
$$
denote by $\Psi (t, x)$, $x\in \overline{\bbY}$  the solution of the Schr\"odinger equation
\eqref{schonY}
with the initial condition
$$
\Psi|_{t=0}=\widehat \phi=\begin{pmatrix}0\\ \phi
\end{pmatrix}.
$$
We claim that  emission  amplitude  $\phi_{\text{em}}$    can be evaluated  as the limiting value of the first component $\Psi_\infty(t, 0+)$
of the solution $\Psi$ as $t\to 0$.

Indeed, since
$$
\lim_{t\downarrow 0}\lim_{\varepsilon \downarrow 0}\Psi (t, \epsilon)=\begin{pmatrix}
f_\infty(0+)\\
f_\ell(0)
\end{pmatrix}=
\begin{pmatrix}
|\varkappa|&- \sqrt{1-|\varkappa|^2} \frac{\varkappa}{|\varkappa|}\\
 \sqrt{1-|\varkappa|^2}&\varkappa
\end{pmatrix}
 \begin{pmatrix}
 0\\
 \phi(\ell)
 \end{pmatrix},
$$
we have
$$
\phi_{\text{em}}=  \Psi_\infty (0+, 0+)=- \sqrt{1-|\varkappa|^2} \frac{\varkappa}{|\varkappa|}\phi(\ell).
$$
Therefore, the probability density of the event of emission of a particle is given by
$$
|\phi_{\text{em}}|^2= (1-|\varkappa|^2)|\phi(\ell)|^2=|\phi(\ell)|^2-|\phi(0)|^2=\Delta |\phi|^2,
$$
which agrees with  \eqref{clastau}, \eqref{clastau1} (cf. Remark \ref{chtoto}).  Here we have used that the boundary condition
$$
\phi(0)=\varkappa\phi(\ell)
$$
holds, that is,
$$
\phi\in \Dom (\widehat H_\varkappa).
$$

Notice, that in the exceptional case,  that is,   if the initial state is such that $\phi(0)=0$,
 the  dissipative operator $\widehat H_0$ associated with the corresponding open quantum system has the domain
$$
   \Dom( \widehat H_0)=\{f\in W_2^1((0,\ell))\, |\, f(0)=0\}.
   $$
In this case, the differentiation operator $\bbH$ on the whole real axis, $$
\bbH=ic\hbar \frac{d}{dx}
\quad \text{on}\quad
\dom (\bbH)=W_2^1((-\infty, \infty)),
$$
 dilates
$\widehat H_0$. Therefore, the configuration space of the extended quantum system is just the real axis $\bbY=\bbR$, not the one-cycle graph $\overline{\bbY}$.
The corresponding solution $\Psi(x,t)$ of  the Schr\"odinger equation \eqref{schonY}
with the initial data
$$
\Psi|_{t=0}=\phi
$$
(here we do not distinguish the function $\phi$ on $[0,\ell]$ and  its extension by zero on the  whole real axis $\bbR$)
is a one-component function given by
$$
\Psi(x,t)=\phi(x-ct).
$$
In this case the emission amplitude $ \phi_{\text{em}}$ can be evaluated as
$$
\phi_{\text{em}}=\Phi(x-ct)|_{x=\ell,
\, t=0}=\phi(\ell),
$$
so that again
$$
|\phi_{\text{em}}|^2= |\phi(\ell)|^2=|\phi(\ell)|^2-|\phi(0)|^2=\Delta |\phi|^2.
$$

It is also worth mentioning that the decrement  $ \Delta | \phi|^2$ given by \eqref{classdecr} and referred to in Theorem  \ref{quanmondis} is  gauge invariant
 while
  $ |\Delta \phi|^2$ defined  in \eqref{quantdecr} is not.

  Indeed, if
\begin{equation}\label{gauge}
(Vf)(x)=e^{i\lambda(x)}f(x), \quad x\in (0,\ell),
\end{equation}
 is  a (unitary) gauge transformation,
 where $\lambda(x) $ is a differentiable function on $[0, \ell]$,
 then
 $$
 \Delta | V\phi|^2=\Delta | \phi|^2
 $$
and
 $$
 |\Delta V \phi|^2 =|e^{i\Phi}\phi(\ell)-\phi(0)|^2 \ne   |\Delta \phi|^2\quad \text{(in general)}.
  $$
Here,
  $$
  \Phi=\lambda(\ell)-\lambda(0)=\int_0^\ell \frac{d}{dx}\lambda(x) dx.
  $$
 is a shift of the  relative phase of the wave function.

As we have already pointed out, if the wave function is continuous at the junction point, that is
$$
\phi(0)=\phi(\ell),
$$
then $$
\Delta | \phi|^2=
 |\Delta  \phi|^2=0.
 $$
 In this case  the quantum Zeno effect takes place regardless of whether  both  of the detectors $D_0$ and $D_\ell$ go off or only one of them does.
 Moreover,  if $V$ is a gauge transformation, we also have that
 $$
 \Delta | V\phi|^2=\Delta | \phi|^2=0
 $$that  shows that the Zeno effect  is stable with respect to the gauge  transformations when  both of the  detectors $D_0$ and $D_\ell$ go off.
The situation is quite different
 in the experiment when only one  of the detectors  goes off. A tiny  variation of the phase of the wave functions can easily
  transfer the system from the quantum Zeno mode to the exponential decay regime with
  the decay rate given by  the ``magnetic" decrement
$$
 |\Delta_\Phi \phi|^2=|e^{-i\Phi}\phi(\ell)-\phi(0)|^2.
$$
We remark that
 $$ (|\phi(\ell)|-|\phi(0)|)^2\le |\Delta_\Phi \phi|^2\le (|\phi(\ell)|+|\phi(0)|)^2.
 $$
Moreover,  by changing the ``flux" $\Phi$, the upper and as well as the lower  bound for the magnetic decrement
can easily  be attained.  In particular,
 $$ 0\le |\Delta_\Phi \phi|^2\le4 |\phi(0)|^2,
$$
whenever the continuity condition \eqref{contj} at the junction point holds.

\section{General open quantum systems on a ring}

Notice that while considering open quantum systems $(\widehat H_\varkappa, \phi)$ on a ring referred to in  Theorem \ref{quanmondis}, we assumed
  $$
 \phi\in \dom(\widehat H_\varkappa).
 $$
This requirement can be relaxed and we arrive at the following more general  result.

 \begin{theorem}\label{general}
Suppose that $\phi\in W_2^1((0, \ell))$ is a state, $ \|\phi\|=1$. Given   $|\varkappa|\le1$, in the Hilbert space $\cH=L^2(\bbY)$, $\bbY=(0,\ell)$,
denote by   $$ \widehat H_\varkappa=i c\hbar \frac{d}{dx} $$  the  differentiation operator with the  boundary condition
$$
f(0)=\varkappa f(\ell),
$$
that is,
 $$
 \Dom( \widehat H_\varkappa)=\left \{f\in W_2^1((0, \ell))\,|\, f(0)=\varkappa f(\ell)\right \}.
 $$

Then
\begin{equation}\label{decdec}
 \lim_{n\to \infty} |(e^{it/n\widehat H_\varkappa  } \phi,  \phi)|^{2n}=e^{-\tau  t},\quad t>0,
 \end{equation}
where
\begin{equation}\label{obscheetau}
\tau=
|\phi(0)-\varkappa \phi(\ell)|^2+(1-|\varkappa|^2)|\phi(\ell)|^2.
\end{equation}

 \end{theorem}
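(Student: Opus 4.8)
The plan is to reduce the assertion to a first–order, small-time expansion of the return amplitude. Let $V_s=e^{is\widehat H_\varkappa}$, $s\ge 0$, be the semigroup generated by $\widehat H_\varkappa$; it is contractive because
$$
\Im(\widehat H_\varkappa f,f)=\tfrac12(1-|\varkappa|^2)|f(\ell)|^2\ge 0,\qquad f\in\Dom(\widehat H_\varkappa),
$$
after the harmless normalization of the transport speed to $1$ (the constant is restored by rescaling $t$, and with it the computation below reproduces exactly the stated $\tau$). Writing $a_s=(V_s\phi,\phi)$ and $s=t/n$, we have $|a_{t/n}|^{2n}=\exp\!\big(n\log|a_{t/n}|^2\big)$, so it suffices to establish the expansion
$$
a_s=1+As+o(s)\quad(s\downarrow0),\qquad \Re A=-\tfrac12\tau,
$$
since then $\log|a_s|^2=2\Re(A)s+o(s)=-\tau s+o(s)$ and $n\log|a_{t/n}|^2\to-\tau t$, which is precisely \eqref{decdec}. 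Thus the entire problem is the extraction of the linear coefficient $A$; the only real difficulty is that $\phi$ need \emph{not} lie in $\Dom(\widehat H_\varkappa)$, so the naive expansion $a_s=1+is(\widehat H_\varkappa\phi,\phi)+o(s)$ is unavailable.

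First I would make the action of $V_s$ explicit. Integrating the transport equation $\partial_s(V_sf)=-\partial_x(V_sf)$ subject to the feedback $f(0)=\varkappa f(\ell)$ gives, for $0<s<\ell$,
$$
(V_sf)(x)=\begin{cases}\varkappa\,f(\ell-s+x),&0\le x<s,\\ f(x-s),&s\le x\le\ell,\end{cases}
$$
i.e. $V_s$ is the right shift by $s$ with the mass leaving the right endpoint re-injected at the left endpoint with the factor $\varkappa$. As a built-in consistency check, restricting to $\phi\in\Dom(\widehat H_\varkappa)$ (so $\phi(0)=\varkappa\phi(\ell)$) reproduces the decay constant $|\phi(\ell)|^2-|\phi(0)|^2$ of Theorem \ref{quanmondis}, while $|\varkappa|=1$ reproduces Corollary \ref{bohmcor}.

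Next I would compute $a_s$ from this formula. Splitting the inner product along the two cases,
$$
a_s=\varkappa\int_0^{s}\phi(\ell-s+x)\overline{\phi(x)}\,dx+\int_{s}^\ell\phi(x-s)\overline{\phi(x)}\,dx .
$$
Since $\phi\in W_2^1((0,\ell))\subset C([0,\ell])$, the first integral is $s\,\varkappa\,\phi(\ell)\overline{\phi(0)}+o(s)$. In the second, the substitution $y=x-s$ and subtraction of $\|\phi\|^2=1$ produce the boundary term $\int_{\ell-s}^\ell|\phi|^2=s|\phi(\ell)|^2+o(s)$ together with the difference-quotient term $\int_0^{\ell-s}\phi(y)\big[\overline{\phi(y+s)}-\overline{\phi(y)}\big]dy$, which equals $s\int_0^\ell\phi\overline{\phi'}+o(s)$ because $L^2$-difference quotients of $W_2^1$ functions converge to $\phi'$. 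Collecting terms yields $A=\varkappa\phi(\ell)\overline{\phi(0)}-|\phi(\ell)|^2+\int_0^\ell\phi\overline{\phi'}$, and since $\Re\int_0^\ell\phi\overline{\phi'}=\tfrac12(|\phi(\ell)|^2-|\phi(0)|^2)$ by integration by parts,
$$
\tau=-2\Re A=|\phi(0)|^2+|\phi(\ell)|^2-2\Re\!\big(\varkappa\phi(\ell)\overline{\phi(0)}\big)=|\phi(0)-\varkappa\phi(\ell)|^2+(1-|\varkappa|^2)|\phi(\ell)|^2,
$$
which is exactly \eqref{obscheetau}. I expect the main obstacle to be the rigorous control of the $o(s)$ remainders under the bare $W_2^1$ hypothesis — in particular justifying the error in the difference-quotient term without assuming $\phi\in\Dom(\widehat H_\varkappa)$ — and the careful verification that the displayed formula genuinely solves the abstract Cauchy problem for $\widehat H_\varkappa$ near the vertex. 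An alternative route, avoiding the explicit shift, is to pass to a minimal self-adjoint dilation $\bbH$ of $\widehat H_\varkappa$ (constructed in Section 22), use $(V_s\phi,\phi)=(e^{is\bbH}\widehat\phi,\widehat\phi)$ for $s\ge0$, and read off $\tau$ from the heavy-tail asymptotics of the spectral measure $(E_{\bbH}(d\lambda)\widehat\phi,\widehat\phi)$ exactly as in Theorems \ref{levy} and \ref{decr}; I expect the direct computation above to be shorter.
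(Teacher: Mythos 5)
Your proposal is correct and takes essentially the same approach as the paper's proof: the paper likewise writes out the shift-with-reinjection semigroup explicitly, splits $(e^{it\widehat H_\varkappa}\phi,\phi)$ into the same two integrals $I+J$, handles $I$ and the boundary term by continuity of the $W_2^1$ representative, treats the remaining term via the expansion $\phi(x-t)=\phi(x)-t\phi'(x)+\eta_t(x)$ with $\|\eta_t\|_{L^2}=o(t)$ (your difference-quotient step), and recovers $\tau$ from $\Re\int_0^\ell\phi\,\overline{\phi'}=\tfrac12\bigl(|\phi(\ell)|^2-|\phi(0)|^2\bigr)$. The only cosmetic difference is that the paper phrases the reduction through $\Re\,(e^{it\widehat H_\varkappa}\phi,\phi)=1-\tfrac12\tau t+o(t)$ rather than through $\log|a_s|^2$, and it normalizes $c=\hbar=1$ exactly as you do.
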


 \begin{proof} Without loss we will assume that we work in the system of units where $c=1$ and $\hbar =1$.

 It is sufficient to prove the asymtotic representation
  \begin{equation}\label{obschee}
\Re (e^{i t\widehat H_\varkappa }\phi, \phi)=1-\frac12  \tau t+o(t) \quad \text{as }\quad t\downarrow 0.
\end{equation}

  Denote by $W(t)=e^{it\widehat H_\varkappa}$ the contractive semi-group generated by the operator $\widehat H_\varkappa$.
 Notice that
 $$
 W(t+\ell)=\varkappa W(t), \quad t\ge 0 ,
 $$
 with
 $$
( W(t)\phi(x)=
\begin{cases}
\varkappa \phi(\ell+x-t),& 0<x<t
\\
\phi(x-t),&x<t<\ell
\end{cases}.
 $$
 Assume  that $\phi\in W_2^1((0,\ell))$ is the (absolutely)  continuous representative of the element $\phi$ (denoted by the same symbol). We have
 $$
 (W(t)\phi,\phi)=\int_0^t\varkappa  \phi(\ell+x-t)\overline{\phi(x)}dx +\int_t^\ell  \phi(x-t)\overline{\phi(x)}dx=I+J,
 $$
where
 $$
 I=\varkappa \int_0^t \phi(\ell+x-t)\overline{\phi(x)}dx
 $$
and
 $$J=\int_t^\ell  \phi(x-t)\overline{\phi(x)}dx.
 $$
Since $\phi$ is a continuous function, we get
 $$
 I=t \varkappa \phi(\ell)\overline{\phi(0)}+o(t)\quad \text{as }\quad t\downarrow 0.
 $$
 Moreover, the membership $\phi\in W_2^1((0,\ell))$ ensures the representation
   $$
 \phi(x-t)=\phi(x)-t\phi'(x)+\eta_t(x),\quad \text{for a.e. }\quad x\in [\ell-t, \ell]
 $$
 where
$\phi'$ denotes the generalized derivative of $\phi$ and
$$
\|\eta_t\|_{L^2((\ell-t, \ell))}=o(t)  \quad \text{as}\quad t\downarrow 0.
$$

Therefore,
 \begin{align*}
 J&=\int_t^\ell  \phi(x)\overline{\phi(x)}dx-t\int_t^\ell  \phi(x)\overline{\phi'(x)}dx+o(t)
 \\&=\int_0^\ell  \phi(x)\overline{\phi(x)}dx-\int_0^t  \phi(x)\overline{\phi(x)}dx-t\int_0^\ell  \phi(x)\overline{\phi'(x)}dx+o(t)
 \\&=\int_0^\ell  \phi(x)\overline{\phi(x)}dx-t|\phi(0)|^2-t\int_0^\ell  \phi(x)\overline{\phi'(x)}dx+o(t)
 \\&=1-t\left [ |\phi(0)|^2+\int_0^\ell  \phi(x)\overline{\phi'(x)}dx\right ]+o(t)\quad \text{as}\quad t\downarrow 0.
\end{align*}
Here we have used that $\phi$ is a state, that is,
$$
\|\phi\|^2=\int_0^\ell  \phi(x)\overline{\phi(x)}dx=1.
$$

Combining the asymptotic representations for $I$ and $J$ we get
$$
I+J=1-t\left [
|\phi(0)|^2+\int_0^\ell  \phi(x)\overline{\phi'(x)}dx -\varkappa \phi(\ell)\overline{\phi(0)}\right ]+o(t) \quad \text{as}\quad t\downarrow 0.
$$

Since
$$
\Re\left (  \int_0^\ell  \phi(x)\overline{\phi(x)}dx\right )=\frac{|\phi(\ell)|^2-|\phi(0)|^2}{2},
$$
we have
\begin{align*}
\Re (e^{i t\widehat H_\varkappa }\phi, \phi)&=\Re (W(t)\phi,\phi)=\Re(I+J)
\\&=1-\frac12
 \tau t +o(t)\quad \text{as}\quad t\downarrow 0.
\end{align*}
Here
\begin{align*}
\tau&=|\phi(0)|^2+|\phi(\ell)|^2 -2 \Re \left (\varkappa \phi(\ell)\overline{\phi(0)}\right )
\\&=|\phi(0)-\varkappa \phi(\ell)|^2+(1-|\varkappa|^2)|\phi(\ell)|^2,
\end{align*}
which proves \eqref{obschee}.

 \end{proof}

\begin{remark}
Notice that Theorems \ref{quanmon} and \ref{quanmondis} (with $c=\hbar =1$) are particular cases of the obtained result.
Indeed, if $|\varkappa|=1$ and therefore  the operator $\widehat H_\varkappa$ is self-adjoint, then
the expression for the decay rate $\tau$ in \eqref{obscheetau}
simplifies to
$$
\tau=|\phi(0)-\varkappa\phi(\ell)|^2=|\Delta_\varkappa \phi|^2
$$
(cf. Theorem \ref{quanmon}  ($\varkappa=1$)).

On the other hand, if
$$\phi\in\Dom(\widehat H_\varkappa),
$$
we have  that
$
\phi(0)=\varkappa \phi(\ell)
$ and therefore the general expression \eqref{obscheetau}
for the  decay rate  reduces to
\begin{equation}\label{decrate}
\tau=(1-|\varkappa|^2)|\phi(\ell)|^2=|\phi(\ell)|^2-|\phi(0)|^2=\Delta |\phi|^2
\end{equation}
(cf. Theorem \ref{quanmondis}).

\end{remark}

  Applying Corollary \ref{smes0} and taking into account that
   the self-adjoint dilation of the dissipative operator $\widehat H_\varkappa$  is now explicitly available, see Section 21,
  in view of Theorem \ref{general},
  we arrive at the following two results.  It is convenient to threat  the case $\varkappa\ne 0$ and the exceptional case $\varkappa=0$ separately.

 \begin{theorem}$(\varkappa\ne 0)$\label{classmoncor}
  Suppose that $\bbY$ is the  metric graph  in Case $(iii)$,
 $$\bbY=(-\infty,0)\sqcup(0,\ell)\sqcup(0,\infty).$$
 Given  $0<|\varkappa|<1$,
  in the Hilbert space $\cH=L^2(\bbY)$  denote by
 $ \bbH$ the differentiation operator
    \begin{equation}\label{}
 \bbH=ic\hbar \frac{d}{dx}
  \end{equation}
defined on the domain of functions satisfying the boundary conditions
$$
\begin{pmatrix}
f_\infty(0+)\\
f_\ell(0)
\end{pmatrix}=\begin{pmatrix}
|\varkappa|&- \sqrt{1-|\varkappa|^2} \frac{\varkappa}{|\varkappa|}\\
 \sqrt{1-|\varkappa|^2}&\varkappa
\end{pmatrix}
 \begin{pmatrix}
f_\infty(0-)\\
f_\ell(\ell)
\end{pmatrix}.
$$
Suppose that  $\widehat \phi=\begin{pmatrix}0\\\phi
\end{pmatrix}\in \fH$ is a state $(\|\phi \|=1)$ such that
\begin{equation}\label{groten}
 \phi\in W_2^1((0, \ell))\subset L^2(\bbY).
\end{equation}

 Then
 \begin{equation}\label{clastau2}
\lim_{n\to\infty}|(e^{-it/(\hbar n)\bbH }\widehat \phi,\widehat  \phi)|^{2n}=e^{-\tau |t|},\quad t\in \bbR,
 \end{equation}
 where
 \begin{equation}\label{obscheetaudil}
\tau=
c|\phi(0)-\varkappa \phi(\ell)|^2+c(1-|\varkappa|^2)|\phi(\ell)|^2.
\end{equation}
  \end{theorem}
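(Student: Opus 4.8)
The plan is to obtain the statement by chaining together two results already in hand: the reduction of the dilated (unitary) dynamics to the dissipative one furnished by Corollary \ref{smes0}, and the explicit evaluation of the decay rate for the dissipative semigroup \emph{without} any domain restriction provided by Theorem \ref{general}. The crucial point is that Theorem \ref{general} only assumes $\phi\in W_2^1((0,\ell))$, which is precisely hypothesis \eqref{groten}; it does \emph{not} require $\phi\in\Dom(\widehat H_\varkappa)$. Thus no new analysis is needed and the whole argument amounts to identifying $\bbH$ as the correct dilation and keeping track of constants.

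First I would identify the self-adjoint operator $\bbH$ of the statement with the minimal self-adjoint dilation of the dissipative differentiation operator $\widehat H_\varkappa=ic\hbar\frac{d}{dx}$ on $L^2((0,\ell))$ subject to $f(0)=\varkappa f(\ell)$. Setting $k=|\varkappa|$ and $\Theta=-\varkappa/|\varkappa|$ (so that $-k\Theta=\varkappa$), the bond matrix in \eqref{bbking} equals $\begin{pmatrix}k&\sqrt{1-k^2}\Theta\\ \sqrt{1-k^2}&-k\Theta\end{pmatrix}$, and the boundary condition $f_\ell(0)=-k\Theta f_\ell(\ell)$ of $\widehat d_\Theta$ (see \eqref{domneinv}) turns into $f(0)=\varkappa f(\ell)$. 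Hence by Theorem \ref{dilthm}, together with the discussion in Section 21 leading to \eqref{bbking}, the operator $\bbH$ with the boundary conditions of the statement is exactly the self-adjoint dilation of $\widehat H_\varkappa$, with $L^2((0,\ell))$ embedded in $\fH=L^2(\bbY)$ as the subspace of vector-functions whose first ($f_\infty$) component vanishes. In particular the embedded state $\widehat\phi=(0,\phi)^T$ satisfies $\|\widehat\phi\|=\|\phi\|=1$.

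With the dilation property secured, I would invoke Corollary \ref{smes0} to write
$$
\lim_{n\to\infty}|(e^{-it/(\hbar n)\bbH}\widehat\phi,\widehat\phi)|^{2n}=\lim_{n\to\infty}|(e^{i|t|/(\hbar n)\widehat H_\varkappa}\phi,\phi)|^{2n},
$$
valid as soon as the right-hand limit exists. That existence, and its value, are delivered by Theorem \ref{general}: reinstating the physical constants (the proof of Theorem \ref{general} was carried out in units $c=\hbar=1$), the rate computed there gets multiplied by $c$, since $\widehat H_\varkappa$ enters the exponent through $\widehat H_\varkappa/\hbar=ic\frac{d}{dx}$. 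This gives
$$
\lim_{n\to\infty}|(e^{i|t|/(\hbar n)\widehat H_\varkappa}\phi,\phi)|^{2n}=e^{-\tau|t|},\qquad \tau=c|\phi(0)-\varkappa\phi(\ell)|^2+c(1-|\varkappa|^2)|\phi(\ell)|^2,
$$
which is exactly \eqref{clastau2}--\eqref{obscheetaudil}. Combining the two displays completes the proof.

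The main (and essentially the only) obstacle I anticipate is the careful bookkeeping of the constants $c,\hbar$ and of the sign conventions: one must confirm that, under the dilation, the semigroup $e^{-it/(\hbar n)\bbH}$ of the statement corresponds to $e^{i|t|/(\hbar n)\widehat H_\varkappa}$ as phrased in Corollary \ref{smes0}, and that the short-time expansion $\Re(e^{is\widehat H_\varkappa}\phi,\phi)=1-\tfrac12\tau s+o(s)$ from the proof of Theorem \ref{general}, once rescaled by $c$, reproduces the stated $\tau$. All of this is routine once the matching $k=|\varkappa|$, $\Theta=-\varkappa/|\varkappa|$ has been fixed; no analytic ingredient beyond Corollary \ref{smes0} and Theorem \ref{general} is required.
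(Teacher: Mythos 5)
Your proposal is correct and follows essentially the same route as the paper: the paper states this theorem with no separate proof, deriving it exactly as you do by combining Corollary \ref{smes0} (equality of the dilated and dissipative limits) with Theorem \ref{general} (which indeed requires only $\phi\in W_2^1((0,\ell))$, not membership in $\Dom(\widehat H_\varkappa)$), once $\bbH$ is identified, via the matching $k=|\varkappa|$, $\Theta=-\varkappa/|\varkappa|$, with the explicit self-adjoint dilation of $\widehat H_\varkappa$ constructed in Sections 21--22. Your bookkeeping of the constants $c,\hbar$ and of the bond matrix in \eqref{bbking} is also consistent with the paper.
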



 \begin{theorem}$(\varkappa= 0)$\label{kurioz}
  Suppose that $\bbY$ is the  metric graph  $\bbY=(-\infty,0)\sqcup(0,\ell)\sqcup(\ell,\infty)$ in Case (i).   In the Hilbert space $\cH=L^2(\bbY)$  denote by
 $ \bbH$ the differentiation operator
    \begin{equation}\label{}
 \bbH=ic\hbar \frac{d}{dx}
  \end{equation}
defined on $W_2^1(\bbY)$.

Suppose that  $\phi\in W_2^1((0, \ell))\subset L^2(\bbY)$ and  $\|\phi \|=1$.

 Then
 \begin{equation}\label{QEDD}
\lim_{n\to\infty}|(e^{-it/(\hbar n)\bbH }\phi, \phi)|^{2n}=e^{-\tau |t|},\quad t\in \bbR,
 \end{equation}
 where
  \begin{equation}\label{ibrag}
\tau=
c(|\phi(0)|^2+|\phi(\ell)|^2).
\end{equation}

  \end{theorem}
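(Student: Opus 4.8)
The plan is to deduce this exceptional ($\varkappa=0$) case from the two tools just assembled: the dilation identity of Section~22 together with Corollary~\ref{smes0}, which transfers the continuous-monitoring decay rate of a self-adjoint dilation to that of the underlying dissipative generator, and Theorem~\ref{general}, which already computes that rate for $\widehat H_\varkappa$ and an arbitrary $\phi\in W_2^1((0,\ell))$. Concretely, identify the graph $\bbY=(-\infty,0)\sqcup(0,\ell)\sqcup(\ell,\infty)$ with the real axis, so that $\bbH=ic\hbar\frac{d}{dx}$ on $W_2^1(\bbY)=W_2^1(\bbR)$ is the free self-adjoint momentum on $\bbR$; as explained in Section~22 this is the minimal self-adjoint dilation of the maximal dissipative operator $\widehat H_0=ic\hbar\frac{d}{dx}$ with $\Dom(\widehat H_0)=\{f\in W_2^1((0,\ell))\,|\,f(0)=0\}$, the subspace $\cH=L^2((0,\ell))$ being embedded as functions supported on the middle segment $[0,\ell]$.

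First I would invoke Corollary~\ref{smes0} with $\varkappa=0$: since $\phi\in W_2^1((0,\ell))\subset\cH$ and $\|\phi\|=1$, it gives
\[
\lim_{n\to\infty}\bigl|(e^{-it/(\hbar n)\bbH}\phi,\phi)\bigr|^{2n}
=\lim_{n\to\infty}\bigl|(e^{i|t|/(\hbar n)\widehat H_0}\phi,\phi)\bigr|^{2n},
\qquad t\in\bbR,
\]
whenever one (hence both) of the limits exists. The right-hand limit is supplied by Theorem~\ref{general} applied with $\varkappa=0$: for $s=|t|>0$ it equals $e^{-\tau s}$ with $\tau=c\bigl(|\phi(0)-0\cdot\phi(\ell)|^2+(1-0)|\phi(\ell)|^2\bigr)=c(|\phi(0)|^2+|\phi(\ell)|^2)$, the physical constant $c$ being restored by the time-rescaling exactly as in Theorem~\ref{quanmondis}. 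Substituting back yields $\lim_n|(e^{-it/(\hbar n)\bbH}\phi,\phi)|^{2n}=e^{-\tau|t|}$; the passage from $s>0$ to all $t\in\bbR$ is automatic because the return probability $t\mapsto|(e^{-it\bbH/\hbar}\phi,\phi)|$ is even for a unitary group, which is precisely what produces the $|t|$ already present in the right-hand side of Corollary~\ref{smes0}.

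The step needing the most care is confirming that Corollary~\ref{smes0} applies here even though $\phi$ is a general $W_2^1$ state and need not lie in $\Dom(\widehat H_0)$: the corollary only requires $\phi\in\cH$, and its equality rests on the operator-level dilation identity $P_\cH\,e^{\mp it\bbH/\hbar}\big|_\cH=e^{\mp it\widehat H_0/\hbar}$ valid on the appropriate one-sided range of $t$, which is the compression of the free shift group on $L^2(\bbR)$ onto the absorbed (nilpotent) shift semigroup of $\widehat H_0$ described in Section~22. As an independent check, and because it is the mechanism behind Theorem~\ref{general}, one may compute directly: with $\phi$ supported on $[0,\ell]$ and $e^{-it\bbH/\hbar}$ the free shift, $(e^{-it\bbH/\hbar}\phi,\phi)=\int_\bbR\phi(x+ct)\overline{\phi(x)}\,dx$, and the first-order $L^2$-expansion $\phi(\cdot+ct)=\phi+ct\,\phi'+o(t)$ together with the boundary identity $\Re\int_0^\ell\phi'\overline\phi\,dx=\tfrac12(|\phi(\ell)|^2-|\phi(0)|^2)$ gives $\Re(e^{-it\bbH/\hbar}\phi,\phi)=1-\tfrac12\tau t+o(t)$; since the imaginary part is $O(t)$, one obtains $|(e^{-it\bbH/\hbar}\phi,\phi)|^2=1-\tau t+o(t)$ and hence the claimed limit, reproducing $\tau=c(|\phi(0)|^2+|\phi(\ell)|^2)$.
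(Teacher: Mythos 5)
Your main argument is correct, and it is in fact exactly the derivation the paper itself sketches in the sentence immediately preceding the theorem: identify $\bbY$ with $\bbR$, recall (Sections 5 and 22, and the remark following Theorem \ref{dilthm}) that the free momentum $\bbH$ on $W_2^1(\bbR)$ dilates $\widehat H_0$ with the Dirichlet condition at $0$, invoke the first part of Corollary \ref{smes0} (which, as you correctly stress, needs only $\phi\in\cH$, not $\phi\in\Dom(\widehat H_0)$), and compute the rate from Theorem \ref{general} with $\varkappa=0$, restoring the constant $c$ by time rescaling; evenness of $t\mapsto|(e^{-it\bbH/\hbar}\phi,\phi)|$ handles negative $t$. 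However, the proof the paper actually writes out for Theorem \ref{kurioz} is a different, self-contained one: it takes the Fourier transform of $\phi$, integrates by parts to get $\widehat\phi(\lambda)=\frac{i}{\sqrt{2\pi}}\,\frac{\phi(\ell)e^{-i\ell\lambda}-\phi(0)}{\lambda}+o(|\lambda|^{-1})$, shows the spectral measure $|\widehat\phi(\lambda)|^2d\lambda$ has symmetric Cauchy-type tails with one-sided mass $\frac{|\phi(0)|^2+|\phi(\ell)|^2}{2\pi\lambda}(1+o(1))$, and then applies the Gnedenko--Kolmogorov Theorem \ref{stthm} with $\alpha=1$, $c_1=c_2=\frac14(|\phi(0)|^2+|\phi(\ell)|^2)$, identifying the limit as the characteristic function of a symmetric $1$-stable law. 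What each route buys: yours is shorter given the machinery already assembled and explains the answer structurally through the open-system/dilation picture; the paper's is independent of the dilation and coupling theory, makes the heavy-tail mechanism explicit (in particular that $\phi\notin\Dom(|\bbH|^{1/2})$ unless $\phi(0)=\phi(\ell)=0$), and extends immediately to states with finitely many jumps, as the paper records right after the proof.

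One caveat about your supplementary ``independent check'': as written it is not valid. Since $\phi(0)$ and $\phi(\ell)$ need not vanish, the extension of $\phi$ by zero does not lie in $W_2^1(\bbR)$, and the expansion $\phi(\cdot+ct)=\phi+ct\,\phi'+o(t)$ fails in $L^2(\bbR)$: near each jump the error has $L^2$ size of order $\sqrt{t}$. If one nonetheless pairs that expansion against $\phi$ and uses only $\Re\int_0^\ell\phi'\overline\phi\,dx=\tfrac12\bigl(|\phi(\ell)|^2-|\phi(0)|^2\bigr)$, one obtains $\Re\,(e^{-it\bbH/\hbar}\phi,\phi)=1+\tfrac{ct}{2}\bigl(|\phi(\ell)|^2-|\phi(0)|^2\bigr)+o(t)$, which is wrong in sign and magnitude (it would even exceed $1$ when $|\phi(\ell)|>|\phi(0)|$, contradicting Cauchy--Schwarz). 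The correct direct computation must restrict the inner product to the overlap interval $[0,\ell-ct]$; the truncation contributes the extra term $-ct|\phi(\ell)|^2$ (or $-ct|\phi(0)|^2$ for the opposite shift direction), and only after adding it does one recover $1-\tfrac{\tau}{2}t+o(t)$ with $\tau=c(|\phi(0)|^2+|\phi(\ell)|^2)$. This is precisely the bookkeeping carried out in the paper's proof of Theorem \ref{general}, which is why your main route --- quoting that theorem rather than redoing the expansion --- is the sound one; since the check is only supplementary, it does not undermine your proof.
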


Below  provide an independent (illustrative) proof of Theorem \ref{kurioz} based on a direct application of the Gnedenko-Kolmorogov limit theorem.

\begin{proof}
  Let $\widehat\phi $ denote the Fourier transform of the state $\phi$,
  $$
\widehat   \phi(\lambda)=\frac{1}{\sqrt{2\pi}}\int_{-\infty}^\infty e^{-i\lambda x} \phi(x) dx=\frac{1}{\sqrt{2\pi}}\int_{0}^\ell e^{-i\lambda x} \phi(x)dx.
  $$
Under the hypotheses of $\phi\in W_2^1((0, \ell))$, we integrate by parts  and obtain the asymptotic representation
$$\widehat   \phi(\lambda)=\frac{i}{\sqrt{2\pi}}\frac{\phi(\ell) e^{-i\ell\lambda}-\phi(0)}{\lambda}
+o\left (\frac{1}{|\lambda|}\right )\quad \text{as }\quad |\lambda|\to \infty.$$

We have
\begin{align*}
\int_\lambda^\infty |\widehat \phi(s)|^2ds&=\frac{1}{2\pi}\int_{\lambda}^\infty \frac{|\phi(\ell)|^2+|\phi(0)|^2}{s^2}ds
\\&-2 \Re\left [\phi(\ell)\overline{\phi(0)}\int_\lambda^\infty e^{-i\ell s}\frac{ds}{s^2}
\right ]+o\left (\frac{1}{|\lambda|}\right )
\\&=\frac{|\phi(\ell)|^2+|\phi(0)|^2}{2\pi }\frac{1}{\lambda}+o\left (\frac{1}{|\lambda|}\right )
\quad \text{as}\quad \lambda\to \infty.\end{align*}
In a completely similar way  one shows that
$$
\int^{\lambda}_{-\infty} |\widehat \phi(s)|^2ds=\frac{|\phi(\ell)|^2+|\phi(0)|^2}{2\pi }\frac{1}{|\lambda|}+o\left (\frac{1}{|\lambda|}\right )
\quad \text{as}\quad \lambda\to - \infty.
$$
Therefore, the distribution function $F(x)$ of the (absolutely continuous)  probability measure $|\widehat \phi(x)|^2 dx$ satisfies the hypotheses \eqref{as+} and \eqref{as-} of Theorem \ref{stthm} in Appendix H
with $\alpha=1$, $h(x)=1$ and
$$
c_1=c_2=\frac{|\phi(\ell)|^2+|\phi(0)|^2}{2\pi}\cdot \frac{\pi}{2}=\frac{ |\phi(\ell)|^2+|\phi(0)|^2}{4}.
$$
Therefore, the law $F(x)$ belongs to the normal domain of attraction of the symmetric $1$-stable law the characteristic function of which is given by
$$
e^{-\frac12|(|\phi(\ell)|^2+|\phi(0)|^2)|t|}.
$$
In particular,
$$
\lim_{n\to \infty}|\Phi(t/n)|^{2n}=e^{-|(|\phi(\ell)|^2+|\phi(0)|^2)|t|},
$$
where $$\Phi(t)=\int_\bbR e^{itx}dF(x)$$ is the characteristic function of the probability distribution $|\widehat \phi (x)|^2 dx$.
Since
$$
(e^{-it/\hbar H}\phi,\phi)=\Phi(-ct),
$$
we conclude that
$$\lim_{n\to \infty} |(e^{-it/(\hbar n) H} \phi,  \phi)|^{2n}=\lim_{n\to \infty}|\Phi(-ct/n)|^{2n}=e^{-\tau  |t|},
$$
where $\tau$ is given by \eqref{ibrag}.
\end{proof}


The main idea of the proof (with appropriate minimal adjustments) can be used to obtain  the following more general result.

Suppose that the state $\phi$ is a piecewise continuous function with discontinuity
points $a_1<a_2<\dots, a_N$ such that
$$
\phi \in W_2^1((-\infty, a_1))\oplus W_2^1((a_1,a_2))\oplus \dots \oplus W_2^1((a_{N-1},a_N))
\oplus W_2^1((a_N, \infty)).$$

 Then
 $$
 \lim_{n\to \infty} |(e^{-it/(\hbar n) H} \phi,  \phi)|^{2n}=e^{-\tau |t|},
 $$
 where
 $$
 \tau=c\sum_{k=1}^N|\Delta \phi(a_k)|^2
 $$
 and
 $$
 \Delta \phi(a_k)=\phi (a_k+0)-\phi(a_k-0), \quad k=1, 2, \dots, N,
 $$
 is the jump of the piecewise continuous representative
 of the state $\phi$ at the point $a_k$.

\subsection{Discussion}

The decay law \eqref{decdec} shows that the decay rate  \eqref{obscheetau}  splits
into   two terms,
$$\tau=\tau_{\text{excl}}+\tau_{\text{inter}},
$$
where
\begin{equation}\label{101}\tau_{\text{excl}}=c(1-|\varkappa|^2)
 |\phi(
\ell)|^2
 \end{equation}
 and
 \begin{equation}\label{102} \tau_{\text{inter}}=c|\phi(0)-\varkappa \phi(\ell)|^2.
  \end{equation}

 Recall that the configuration space of the open quantum system referred to in Theorem \ref{general}
  is the  ring $\mathbb{S}$ obtained from the interval $[0,\ell]$ by identifying its end-points. As the result of gluing the ends of the interval,
  a ``point defect'' occurs which can be
   perceived as  a kind of membrane which can be characterized by a quantum gate coefficient $\varkappa$,
    the amplitude that the particle  goes through the membrane. Arguing as a pure probabilist, one  can conclude  that the particle penetrates through the membrane with probability
   $| \varkappa \phi(\ell)|^2$ and therefore with probability proportional to  $(1-|\varkappa|^2)
 |\phi(
\ell)|^2$ it should be emitted. In other words,  the law of probabilities is
applicable
\begin{equation}\label{103}
|\phi(\ell)|^2=|\phi_{\text{em}}^{\text{excl}}|^2+|\varkappa|^2|\phi(\ell)|^2.
 \end{equation}

 However,  a secondary emission mechanism is available:  the amplitude $\varkappa \phi(\ell)$  that the particle can be found to the right from the membrane
 interferes with the amplitude $\phi(0)$ to stay on the ring $\mathbb{S}$. So  composition law for amplitudes
\begin{equation}\label{104}
\varkappa \phi(\ell)=\phi_{\text{em}}^{\text{inter}}+\phi(0)
 \end{equation}
should be take into account.

It remains to recall that the emission amplitudes and the corresponding decay constants are related as
\begin{equation}\label{105}
\tau_{\text{excl}}=c|\phi_{\text{em}}^{\text{excl}}|^2
 \end{equation}
and
\begin{equation}\label{106}
\tau_{\text{inter}}=c|\phi_{\text{em}}^{\text{inter}}|^2
 \end{equation}
 and then \eqref{101} and  \eqref{102} follow from  \eqref{103},  \eqref{105}  and  \eqref{104}, \eqref{106}, respectively.

 Summarizing, we arrive at the following descriptive understanding of the decay processes under continuous monitoring.
 A   $\lamed\gimel$-particle moves along the ring from $0$ to $\ell$ as a particle, hits the membrane and with some probability is emitted following the classical scheme  \eqref{classdecr} discussed in Section 20. After the collision with the membrane, the particle transforms into a wave, interferes with itself and experiences the secondary emission following the radiation szenario described in Section 19.

 The suggested interpretation allows one to enhance  the status of
informal reasoning that led to the decay law \eqref{nuivot}:
set
$$
\varkappa =\frac{\phi(0)}{\phi(\ell)}
$$
and observe that
 $$\tau_{\text{inter}}=|\phi(0)-\varkappa \phi(\ell)|^2=0$$
 whenever  the initial state
$\phi$ belongs to the domain
of the dissipative operator $ \widehat  H_\varkappa$.
 This is what  was implicitly assumed in the derivation of the law  \eqref{nuivot}.
Having  these remarks in mind, one can consider that derivation being an informal retelling  of the rigorous time-dependent proof of Theorem \ref{general} in the case where the initial state
$\phi$ belongs to the domain
of $ \widehat  H_\varkappa$ .

 \subsection{Random Phase method}

The traditional way of deriving the laws of probabilities \eqref{problaw} from  the law of amplitudes \eqref{amplaw1}, \eqref{amplaw2} is based on the hypothesis that
``the performance of the corresponding experiment will necessarily alter the phase''
 \cite{Hei}  of the wave function by an unknown amount which eventually, after averaging, yields the law of probabilities \eqref{problaw}.

 We suggest the following mnemonic rule for heuristic derivation of the decay law
when  dealing with the open system referred to in Theorem \ref{general}.

 Start with the Kirchhoff rule for the amplitudes (see \eqref{amplaw})
 $$
 \phi_{\text{em}}=\phi(\ell)-\phi(0)
 $$
and
rewrite the rule in the following equivalent form
$$
 \phi_{\text{em}}+\varkappa \phi(\ell)=\phi(\ell)+(\varkappa \phi(\ell)-\phi(0)).
 $$
The performance of the corresponding experiment assumes counting incoming and emitted particles by observing readings of the corresponding detectors.

Rewrite as
$$
 \Theta_{\text{em}}\phi_{\text{em}}+\varkappa \phi(\ell)=\Theta_\ell\phi(\ell)+(\varkappa \phi(\ell)-\phi(0)),
 $$
 where $ \Theta_{\text{em}}$ and $\Theta_\ell$ are unimodular independent random variables
with zero mean.
After  the corresponding averaging we get
$$
\EE |\Theta_{\text{em}}\phi_{\text{em}}+\varkappa \phi(\ell)|^2=\EE|\Theta_\ell\phi(\ell)+(\varkappa \phi(\ell)-\phi(0))|^2,
 $$
where $\EE$ denotes the corresponding  mathematical expectation. Since  $ \Theta_{\text{em}}$ and $\Theta_\ell$ are independent with zero mean,
we arrive
at the law
$$
 |\phi_{\text{em}}|^2+|\varkappa|^2 \phi(\ell)|^2=|\phi(\ell)|^2+|\varkappa \phi(\ell)-\phi(0)|^2.
$$
Hence
$$
|\phi_{\text{em}}|^2=(1-|\varkappa|^2)| \phi(\ell)|^2+|\varkappa \phi(\ell)-\phi(0)|^2,
$$
which coincides with the decay rate $\tau$ given by \eqref{obscheetau}.

\section{Operator coupling  limit theorems}\label{limtem}

We say that  a  dissipative operator $\widehat A$ belongs to class $\fD(\cH)$ if
 $\widehat A$  is a   quasi-selfadjoint  extension of  a symmetric operators $\dot A$ with deficiency indices
$(1,1)$ (see Appendix G).  Recall that in this case the symmetric operator $\dot A$ can be recovered from $\widehat A$ as
\begin{equation}\label{pravilo}
\dot A=\widehat A|_{\dom (\widehat A)\cap \dom ((\widehat A))^*}.
\end{equation}

\begin{definition}

 We will say that two dissipative operators $\widehat A_1\in \fD(\cH_1)$  and $\widehat A_2\in \fD(\cH_2)$ coincide in distribution (are equally distributed),
in writing
$$
\widehat A_{1}\overset{\substack{\text{ d}}}{=}\widehat A_{2}
$$
if there  are appropriate  self-adjoint reference operators    $A_{1,2}$  such that the characteristic functions of the triples $(\dot A_1,\widehat A_1, A_1) $ and $(\dot A_2,\widehat A_2, A_2)  $ coincide. That is,
 $$
S_{(\dot A_1,\widehat A_1, A_1)}(z)=S_{(\dot A_2, \widehat A_2, A_2)}(z), \quad z\in \bbC_+.
$$
Here the symmetric operators $\dot A_{1,2} $ $(\dot A_{1,2}\subset  A_{1,2})$ are defined by \eqref{pravilo}.
\end{definition}

Recall that an  operator $\widehat A$ in the Hilbert space $\cH$  is completely non-self-adjoint if $\cH $ cannot be represented  in the form of an orthogonal sum  of two subspaces $\cH_1$ and $\cH_0\ne 0$ with the following properties, see, e.g.,  \cite{Brod}:
\begin{itemize}
\item[1)] $\cH_1$ and $\cH_0$ are invariant  relative to $\widehat A$;
\item[2)]
$\widehat A $
induces in $\cH_0$    a self-adjoint operator.
\end{itemize}

It is  worth mentioning that if  completely non-selfadjoint operators $\widehat A_1\in \fD(\cH_1)$ and $\widehat A_2\in \fD(\cH_2)$ are equally distributed, then they are unitarily equivalent. Indeed, in this case the corresponding symmetric operators $\dot A_{1,2} $ are prime and then the claim follows from  the Uniqueness Theorem \ref{unitar} in Appendix C.

\begin{definition} We say that a sequence of dissipative operators $\{\widehat A_n\}_{n=1}^\infty$, $\widehat A_n\in  \fD(\cH_n)$ converges in distribution   to a dissipative operator $\widehat A \in  \fD(\cH)$,
in writing,
$$
\lim_{n\to \infty} \widehat A \overset{\substack{\text{ d}}}{=}\widehat A,
$$ if one can find a sequence of reference self-adjoint operators $A_n=A^*_n$ in $\cH_n$ and $A=A^*$ in $\cH$ such that
the corresponding characteristic functions converge pointwise. That is,
$$
\lim_{n\to \infty} S_{(\dot A_n, \widehat A_n, A_n)}(z)=S_{(\dot A,\widehat A, A)}(z), \quad z\in \bbC_+.
$$
\end{definition}

As long as convergence in distribution for a sequence of operators is introduced, the next natural question is
to understand the behavior of the $n$-fold coupling of an operator with itself as $n\to \infty$.

The following limit theorem sheds some light on  the typical behavior of   $n$-fold couplings of operators from the class $ \fD(\cH) $.

 \begin{theorem}\label{limthm}
 Suppose that $\widehat A$ is a maximal dissipative operator from the class  $ \fD(\cH) $. Assume  that  the   characteristic function $S(z)=S_{(\dot A, \widehat A,A)}(z)$ associated with the triple  $(\dot A, \widehat A,A)$ for  some  (and therefore with all) reference self-adjoint extension $  A$  admits an analytic continuation to the lower  half-plane in a neighborhood of a point $\mu\in \bbR$.

\begin{itemize}
\item[(i)]
If $|S(\mu)|=1$, then
 $$
 \lim_{n\to \infty} n \, \underbrace{(\widehat A-\mu I)\uplus (\widehat A-\mu I)\uplus \dots\uplus (\widehat A-\mu I)}_{n \text{ times}} \overset{\substack{\text{ d}}}{=} \widehat D_{II}(\ell),
 $$
 where   $\widehat D_{II}(\ell)$ is  the dissipative differentiation operator on  the  finite interval $[0, \ell]$
with
 $$
\dom ( \widehat D_{II}(\ell))=\{f\in W_2^1((0, \ell))\, |\, f(0)=0\}
 $$
 and
\begin{equation}\label{lgiven}
\ell=\frac1i\frac{d}{d\lambda} \log S_{(\widehat A, A)}(\mu)>0.
 \end{equation}
\item[(ii)]
If
 $|S(\mu+i0)|<1$, then
 \begin{equation}\label{clcl}
  \lim_{n\to \infty} n \, \underbrace{(\widehat A-\mu I)\uplus (\widehat A-\mu I)\uplus \dots\uplus (\widehat A-\mu I)}_{n \text{ times}} \overset{\substack{\text{ d}}}{=} \widehat D_{I}(0),
 \end{equation}
where   $\widehat D_{I}(0)$ is  the dissipative differentiation operator on  the  real axis
on
 $$
\dom ( \widehat D_{I}(0))=\{f\in W_2^1((-\infty, 0))\oplus W_2^1(( 0, \infty ))\, |\, f(0+)=0\}.
 $$
\end{itemize}

 \end{theorem}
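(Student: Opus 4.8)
The plan is to reduce the whole statement to the behaviour of a single scalar object, the characteristic function of the triple, under the three operations appearing in the statement (shift, coupling, dilation), and then to an elementary limit $\lim_{n}[S(z/n+\mu)]^n$. First I would record how each operation acts. By the Multiplication Theorem (Theorem \ref{opcoup} in Appendix G) the characteristic function of an operator coupling is the product of the characteristic functions of the factors, so by iteration the $n$-fold coupling $(\widehat A-\mu I)^{\uplus n}$ has characteristic function equal to the $n$-th power of that of $\widehat A-\mu I$. By the invariance principle for affine transformations (Appendix F; cf. Theorem \ref{coinvv}), the shift $\widehat A\mapsto\widehat A-\mu I$ replaces $S(z)$ by $S(z+\mu)$ and the dilation by the positive factor $n$ replaces $S(z)$ by $S(z/n)$, in each case up to a unimodular constant. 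Since the definition of convergence in distribution allows a free choice of the reference self-adjoint extensions $A_n$, and such a change multiplies the characteristic function only by a unimodular constant (Lemma \ref{ind0} in Appendix E), I may absorb all these constants. With an appropriate choice of reference operators the triple attached to $n\,(\widehat A-\mu I)^{\uplus n}$ therefore has characteristic function
\[
S_n(z)=\bigl[S(z/n+\mu)\bigr]^n,\qquad z\in\bbC_+ .
\]
Because $S$ continues analytically across a neighbourhood of $\mu$ and $z/n+\mu\to\mu$ for fixed $z\in\bbC_+$, each $S(z/n+\mu)$ is well defined for large $n$, and the problem becomes that of computing $\lim_n S_n(z)$.

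For part (i), where $|S(\mu)|=1$, the function $S$ does not vanish near $\mu$, so I would write $S(z)=e^{i\psi(z)}$ with $\psi$ analytic near $\mu$ and $\psi(\mu)\in\bbR$; contractivity $|S|\le 1$ on $\bbC_+$ gives $\Im\psi(z)=-\log|S(z)|\ge 0$ there, so $\psi$ is locally a Herglotz function. A Taylor expansion yields $S(z/n+\mu)=S(\mu)\bigl(1+\tfrac{i\psi'(\mu)}{n}z+O(n^{-2})\bigr)$, the remainder being controlled uniformly on compacta by Cauchy estimates, whence, after absorbing the unimodular factor $S(\mu)^{n}$ into the reference operators,
\[
S(\mu)^{-n}S_n(z)=\Bigl(1+\tfrac{i\psi'(\mu)}{n}z+O(n^{-2})\Bigr)^n\longrightarrow e^{\,i\psi'(\mu)z}=e^{i\ell z},
\]
with $\ell=\psi'(\mu)=\tfrac1i\tfrac{d}{d\lambda}\log S(\mu)$. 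By Lemma \ref{key} (Case (ii)) the function $e^{i\ell z}$ is exactly the characteristic function of $\widehat D_{II}(\ell)$, and the uniqueness Theorem \ref{unitar} in Appendix C identifies the limit in distribution.

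Part (ii) is then immediate: if $|S(\mu+i0)|<1$ then $|S(z/n+\mu)|\to|S(\mu)|<1$ for each fixed $z\in\bbC_+$, so $|S_n(z)|=|S(z/n+\mu)|^n\to 0$ and hence $S_n(z)\to 0$. By Lemma \ref{key} (Case (i) with $k=0$) the constant $0$ is the characteristic function of $\widehat D_{I}(0)$, which gives the asserted convergence \eqref{clcl}.

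The main obstacle is the strict positivity $\ell>0$ in part (i), i.e. the boundary behaviour of $\psi$ at $\mu$. Since $\psi$ is locally Herglotz, $\psi'(\mu)\ge 0$ is automatic; strictness is a boundary Schwarz/Julia--Carath\'eodory phenomenon, which I would establish by noting that if $\psi'(\mu)=0$ then, $\psi$ being non-constant, $\psi(z)-\psi(\mu)$ would vanish to order at least two at $\mu$, forcing $\Im\psi$ to change sign in $\bbC_+$ near $\mu$ and contradicting $\Im\psi\ge 0$. The only way $\psi$ can be constant is $S\equiv\mathrm{const}$ of modulus one, that is $\varkappa=S(i)$ unimodular (see \eqref{sac}) and $\widehat A$ self-adjoint, the degenerate situation excluded from case (i). The remaining work is bookkeeping that I expect to be routine: making the uniform remainder estimate precise, checking that $\widehat A-\mu I$ and $n(\widehat A-\mu I)^{\uplus n}$ stay in the class $\fD$ so that the couplings and the notion of convergence in distribution apply, and verifying that the unimodular constants absorbed into the reference extensions do not affect the identification of the limiting triples via Lemma \ref{key}.
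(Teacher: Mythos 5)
Your proof is correct and follows essentially the same route as the paper's: both reduce the statement, via the Multiplication Theorem and the invariance principle (absorbing the unimodular factors into the free choice of reference self-adjoint extensions), to the scalar limit $\lim_{n\to\infty}\bigl[S(z/n+\mu)\bigr]^n$, handle (i) by Taylor expansion of $S$ at $\mu$ and (ii) by the modulus estimate $|S(z/n+\mu)|^n\to 0$, and then identify the limiting characteristic functions $e^{i\ell z}$ and $0$ with those of $\widehat D_{II}(\ell)$ and $\widehat D_I(0)$ via Lemma \ref{key}. The one substantive addition is your boundary-Schwarz (Julia--Carath\'eodory) argument proving $\ell>0$, a point the paper's proof asserts in \eqref{lgiven} but does not justify; your argument for it is correct, using that $S$ cannot be a unimodular constant since $|S(i)|=|\varkappa|<1$ for a non-selfadjoint $\widehat A\in\fD(\cH)$.
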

 \begin{proof}
(i).
 Introduce the $n$-fold coupling
 $$
 \widehat B_n=  \biguplus_{k=1}^n n (\widehat A-\mu I).$$
 Combining the invariance principle   (see Theorem   \ref{coinvv} in Appendix F)  and  the Multiplication Theorem \ref{opcoup} in Appendix G,
one ensures the existence of the corresponding reference operators $B_n$ and
 unimodular factors $\Theta_n$ such that
\begin{align*}
 S_{(\dot B_n, \widehat B_n, B_n)}(z)&=\Theta_n \left (S_{(\dot A, \widehat A, A)}\left (\frac zn+\mu \right )\right )^n
 \\&=\Theta_n \left (S_{(\dot A, \widehat A, A)}(\mu)+ i\frac{d}{d\lambda} S_{(\dot A, \widehat A, A)}(\mu)\frac zn+o(n^{-1})\right )^n\quad \text{as} \quad n\to \infty.
 \end{align*}
 By choosing a possibly different sequence of reference operators $B_n'$ one obtains that
 $$
  S_{(\dot B_n, \widehat B_n, B_n')}(z)=\left (1+i\ell  \frac zn  + o(n^{-1})\right )^n \quad \text{ as} \quad n\to \infty,
 $$
 where $\ell $ is given by \eqref{lgiven}.

 Therefore, $$
  \lim_{n\to \infty}S_{(\dot B_n, \widehat B_n, B_n')}(z)=e^{i\ell z},  \quad z\in \bbC_+.
  $$
 To  complete the proof it remains to notice that
 $$
 S_{(\dot D_{II}, \widehat D_{II}(\ell), D_{II})}(z)=e^{i\ell z},\quad z\in \bbC_+,
 $$
 where  $(\dot D_{II}, \widehat D_{II}(\ell), D_{II})$ is the triple  $(\dot D, \widehat D(\ell), D)$ referred to in Lemma \ref{key} in Case (ii).

(ii).  From Theorem \ref{coinvv} in Appendix F it follows that there exists a sequence of unimodular  factors $|\Theta_n|=1$ such that
 $$
 S_{(n (\dot A-\mu I), (n (\widehat A-\mu I), n (A-\mu I))}(z)= \Theta_n S_{(\dot A,  \widehat A, A)}\left (\frac zn+\mu \right ).
 $$
 Since
 $$
 \lim_{n\to \infty}  S_{(\dot A,  \widehat A, A)}\left (\frac zn+\mu\right )=S(\mu+i0), \quad z\in \bbC_+,
 $$
 one can always choose reference operators $ A_n'$ such that
\begin{equation}\label{vykk}
\lim_{n\to \infty}S_{((n (\dot A-\mu I), n (\widehat A-\mu I), n (A'-\mu I))}(z)=|S(\mu+i0)|=k.
 \end{equation}
  Since $k<1$,  we have that
 \begin{equation}\label{law}
  \lim_{n\to \infty} \left ( S_{(n (\dot A-\mu I), (n (\widehat A-\mu I), n (A-\mu I))}(z)\right )^n=0 \quad \text{for all }\quad z\in \bbC_+.
 \end{equation}
However,  by Lemma \ref{key}, we have that
 $$
 S_{(\dot D_I(0), \widehat D_I(0), D_I(0))}(z)=0, \quad z\in \bbC_+,
 $$
 and hence \eqref{law} means that the sequence of operators
 $$\widehat B_n=\underbrace{(\widehat A-\mu I)\uplus (\widehat A-\mu I)\uplus \dots\uplus (\widehat A-\mu I)}_{n \text{ times}} $$
 converges to $\widehat  D_I(0) $ in  distribution.

 \end{proof}

\begin{remark}\label{limrem} A closer look at the proof shows that (ii) is a simple consequence of the limit relation   (combine  \eqref{vykk} and Lemma \ref{key})
 \begin{equation}\label{elele}
 \lim_{n\to \infty} (n (\widehat A-\mu) I) \overset{\substack{ \text{d}}}{=} \widehat D_I(k),
 \end{equation}
 where
 $
 k=|S(\mu+i0)|$ and  $\widehat D_I(k)$
 is  the dissipative differentiation operator on  the  real axis
 on
 $$
\dom ( \widehat D_{I}(0))=\{f\in W_2^1((-\infty, 0))\oplus W_2^1(( 0, \infty ))\, |\, f(0+)=kf(0-)\}.
 $$
\end{remark}

 \begin{remark}
 Borrowing the terminology from probability theory, we will  say that an operator $\widehat A$ has a (strictly) stable distribution if
for arbitrary constants $c_1$ and $c_2$ there exists a constant $c$ such that
$$
c_1\widehat A_1\uplus  c_2\widehat A_2 \overset{\substack{\text{ d}}}{=}c\widehat A,
$$
whenever
$$
\widehat A_1  \overset{\substack{\text{ d}}}{=}\widehat A_2 \overset{\substack{\text{ d}}}{=}\widehat A.
$$

  In particular,  the operator $\widehat  D_{II}(\ell)$ has a stable distribution since
\begin{equation}\label{coconder}
c_1 \widehat D_{II}(\ell)\uplus c_2 \widehat D_{II}(\ell)\overset{\substack{\text{ d}}}{=}c\widehat D_{II}(\ell)
\end{equation}
with
\begin{equation}\label{conder}
\frac{1}{c}=\frac{1}{c_1}+\frac{1}{c_2}
\end{equation}

 The stability laws \eqref{coconder} and \eqref{conder} in particular imply
   \begin{equation}\label{rel44}
n \cdot \underbrace{\widehat D_{II}(\ell)\uplus \widehat D_{II}(\ell) \uplus \dots \uplus \widehat D_{II}(\ell)}_{n \, \text{times}} \overset{\substack{\text{ d }}}{=}\widehat D_{II}(\ell)
 \end{equation}
and
  \begin{equation}\label{rel45}
\frac 1n\cdot  \underbrace{V_\ell\uplus V_\ell\uplus \dots \uplus V_\ell}_{n \, \text{times}} \overset{\substack{\text{ d }}}{=}V_\ell,
 \end{equation}
 which is in  a good agreement with limit Theorems \ref{limthm} and Theorem  \ref{limthmbdd}  (see below).

Although the distribution laws for the operators $\widehat D_{I}(k)$, $0<k<1$, and $\widehat D_{III}(k,\ell)$ are not stable, nevertheless corresponding laws are infinitely divisible in the sense that
$$
n\cdot  \underbrace{\widehat D_{I}(k^{1/n})\uplus \widehat D_{I}(k^{1/n}) \uplus \dots \uplus  \widehat D_{I}(k^{1/n})}_{n \, \text{times}} \overset{\substack{\text{ d }}}{=}\widehat D_{I}(k),
$$
where  $\widehat D_I(k)$
 is  the dissipative differentiation operator on  the  real axis
 on
 $$
\dom ( \widehat D_{I}(k))=\{f\in W_2^1((-\infty, 0))\oplus W_2^1(( 0, \infty ))\, |\, f(0+)=kf(0-)\},
 $$
and
$$
n \cdot \underbrace{\widehat D_{III}\left (k^{1/n}, n^{-1}\ell \right ) \uplus \dots \uplus  \widehat D_{III}\left (k^{1/n}, n^{-1}\ell \right ) }_{n \, \text{times}} \overset{\substack{\text{ d }}}{=}\widehat D_{III}(k, \ell),
$$
with  $ \widehat D_{III}(k, \ell)$  the dissipative differentiation operator on the metric graph
 $$\bbY=(-\infty, 0]\sqcup [0, \infty) \sqcup[0, \ell]$$
 on
 $$
\Dom( \widehat D_{III}(k, \ell))=\left \{ f_\infty\oplus f_\ell\in  W_2^1(\bbY)\, |\,  \begin{cases}
f_\infty(0+)&=k f_\infty(0-)
\\
f_\ell(0+)&=\sqrt{1-k^2} f_\infty(0-)\\
\end{cases}
\right \}.
$$

It is also worth mentioning that by  Lemma \ref{key} and the Multiplication Theorem  \ref{opcoup} in Appendix G,
 the basic differentiation  operators $\widehat D_I(k)$, $ \widehat D_{II}(\ell)$ and   $\widehat D_{III}(k,\ell)$ have the  ``addition" laws with respect to the operator coupling
\begin{equation}\label{ppz1}
 \widehat  D_I(k)\uplus \widehat D_I(k')\overset{\substack{\text{ d}}}{=} \widehat  D_I(kk'),
\end{equation}
 while
  \begin{equation}\label{ppz2}
 \widehat  D_{II}(\ell)\uplus \widehat D_{II}(\ell')\overset{\substack{\text{ d}}}{=} \widehat  D_{II}(\ell+\ell').
  \end{equation}
  One also gets the following spectral synthesis rule
  \begin{equation}\label{ppz3}
 \widehat  D_{III}(k,\ell)\uplus \widehat D_{III}(k',\ell')\overset{\substack{\text{ d}}}{=} \widehat  D_{III}(kk',\ell+\ell').
    \end{equation}
 Notice that  in the  first two cases one can even state that the equalities in distribution imply the unitary equivalence of the corresponding operators.

\end{remark}

 The Limit Theorem \ref{limthm} (i)  has its natural counterpart for the class $\fD_0(\cH)$ (see Appendix G for the definition of the class).

 \begin{theorem}\label{limthmbdd}
 Suppose that $\widehat A\in  \fD_0(\cH)$ is a maximal  bounded dissipative operator. Then
 \begin{equation}\label{LLN}
 \lim_{n\to \infty}\frac1n \underbrace{ \widehat A \uplus   \widehat A\uplus \dots\uplus   \widehat A}_{n \text{ times} }\overset{\substack{\text{ d }}}{=}V_\ell,
\end{equation}
 where $V_\ell$ is the Volterra operator in $L^2((0, \ell))$
 $$
 (V_\ell f)(x)=i\int_0^x f(t)dt, \quad  f\in L^2((0, \ell)),
 $$
 and
 $$
 \ell=2\tr\left (\Im (\widehat A)\right ) .
 $$

 \end{theorem}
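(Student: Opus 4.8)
The plan is to pass to characteristic functions and take the limit, in close analogy with the proof of Theorem \ref{limthm}. The key preliminary is the behaviour of the characteristic function of $\widehat A$ near $z=\infty$, and this is where boundedness is decisive. Since $\widehat A\in\fD_0(\cH)$ has a one-dimensional imaginary part, write $\Im\widehat A=\frac{1}{2i}(\widehat A-(\widehat A)^*)=(\,\cdot\,,g)g\ge0$, so that $\|g\|^2=\tr(\Im\widehat A)$. In Liv\v{s}ic form the characteristic function is $W(z)=1-2i((\widehat A-zI)^{-1}g,g)$, which, via the fundamental relation \eqref{chchch} of Appendix C and a bounded reference self-adjoint extension $A$, agrees with $S_{(\dot A,\widehat A,A)}(z)$ up to a unimodular factor. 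Because $\|\widehat A\|<\infty$, the resolvent admits the Neumann expansion $(\widehat A-zI)^{-1}=-z^{-1}I-z^{-2}\widehat A-\cdots$ for $|z|>\|\widehat A\|$, whence
$$
S_{(\dot A,\widehat A,A)}(z)=\Theta_0\Bigl(1+\frac{i\ell}{z}+O(z^{-2})\Bigr),\qquad z\to\infty,\quad|\Theta_0|=1,\quad \ell:=2\tr(\Im\widehat A)=2\|g\|^2>0 .
$$

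Next I would assemble the characteristic function of $\widehat B_n:=\tfrac1n\bigl(\widehat A\uplus\cdots\uplus\widehat A\bigr)$ ($n$ copies). By the Multiplication Theorem \ref{opcoup} of Appendix G the $n$-fold coupling has characteristic function $\bigl(S_{(\dot A,\widehat A,A)}(z)\bigr)^n$ up to a unimodular factor, and by the invariance principle for affine transformations (Theorem \ref{coinvv} of Appendix F) the scaling by $1/n$ replaces the argument $z$ by $nz$ (indeed $W_{c\widehat A}(z)=W_{\widehat A}(z/c)$). Absorbing the unimodular constants into the choice of reference self-adjoint extensions $B_n$ (Lemma \ref{ind0} of Appendix E), I obtain
$$
S_{(\dot B_n,\widehat B_n,B_n)}(z)=\bigl(S_{(\dot A,\widehat A,A)}(nz)\bigr)^n\cdot\overline{\Theta_0^{\,n}},\qquad z\in\bbC_+ .
$$
Inserting the expansion from the first step and using $(1+\frac{c}{n}+O(n^{-2}))^n\to e^{c}$ pointwise on $\bbC_+$ yields $\lim_{n\to\infty}S_{(\dot B_n,\widehat B_n,B_n)}(z)=e^{i\ell/z}$.

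It then remains to recognise $e^{i\ell/z}$ as the characteristic function of the Volterra operator. Here $\Im V_\ell=\tfrac12(\,\cdot\,,\mathbf 1)\mathbf 1$, where $\mathbf 1$ is the constant function on $(0,\ell)$, so $\tr(\Im V_\ell)=\tfrac12\|\mathbf 1\|^2=\tfrac{\ell}{2}$; solving $(V_\ell-zI)u=\mathbf 1$ gives $u(x)=-z^{-1}e^{ix/z}$ and hence $((V_\ell-zI)^{-1}\mathbf 1,\mathbf 1)=i(e^{i\ell/z}-1)$, so that $W_{V_\ell}(z)=1-2i\cdot\tfrac12 i\,(e^{i\ell/z}-1)=e^{i\ell/z}$. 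Thus $\lim_n S_{\widehat B_n}=S_{V_\ell}$, with precisely $\ell=2\tr(\Im V_\ell)=2\tr(\Im\widehat A)$; the equality of traces is no accident, since $\tr(\Im\,\cdot\,)$ is additive under coupling and homogeneous of degree one under scaling, hence invariant under $\tfrac1n(\text{$n$-fold coupling})$. This establishes the asserted convergence in distribution.

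The step I expect to be the main obstacle is the first one: extracting the second-order asymptotics of the \emph{abstract} characteristic function $S_{(\dot A,\widehat A,A)}$ at infinity and identifying its leading correction with $\tr(\Im\widehat A)$ through the identity of Appendix C, while keeping scrupulous track of the unimodular and orientation factors so that the limit is the genuine (contractive in $\bbC_+$) characteristic function $S_{V_\ell}$ rather than its conjugate. The only invariant that the definition of convergence in distribution truly controls is $|S|$ (Remark \ref{mnogopros}), and the argument must be organised so that the reference extensions $B_n$ can be chosen to turn this control into convergence of the full characteristic functions.
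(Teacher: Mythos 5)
Your overall route is exactly the paper's: expand the characteristic function of $\widehat A$ at infinity, convert the averaged $n$-fold coupling into the $n$-th power $S_{\widehat A}(nz)^n$ via the multiplication theorem and the invariance principle, pass to the pointwise limit on $\bbC_+$, and recognize the result as the characteristic function of $V_\ell$ by a direct resolvent computation (the paper's proof of Theorem \ref{limthmbdd} does precisely this, citing the bounded-class Multiplication Theorem \ref{opcoupbdd}, for which the product formula is exact, rather than Theorem \ref{opcoup}). However, there is one concrete error, and it is exactly the point you flagged at the end without resolving. The function you work with is not the paper's characteristic function: you set $W(z)=1-2i((\widehat A-zI)^{-1}g,g)$, using the resolvent of $\widehat A$ itself, and assert that it agrees with $S_{(\dot A,\widehat A,A)}(z)$ up to a unimodular factor. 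It does not. By \eqref{kreinaron} and Lemma \ref{lemma318},
$$
1-2it((\widehat A-zI)^{-1}g,g)=\frac{1-itM(z)}{1+itM(z)}=\bigl(S(z)\bigr)^{-1},
$$
where $S$ is the characteristic function \eqref{chfbd}, built from $((\widehat A)^*-zI)^{-1}$; equivalently $W(z)=\overline{S(\overline z)}$. A reciprocal of a contractive function is not a unimodular multiple of it. This is why every formula of yours carries the opposite sign from the true one: the genuine expansion is $S_{\widehat A}(z)=1-2i\,\tr(\Im \widehat A)/z+o(1/z)$, the genuine limit is $e^{-i\ell/z}$, and the genuine characteristic function of the Volterra operator is $S_{V_\ell}(z)=e^{-i\ell/z}$. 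Your limit $e^{+i\ell/z}$ has modulus strictly greater than $1$ on $\bbC_+$, so it cannot be the characteristic function of any dissipative operator; what you computed throughout is $S^{-1}$, i.e.\ the ``conjugate'' you were worried about.

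Why this matters: the Multiplication Theorem, the invariance principle, and --- crucially --- the very definition of convergence in distribution are all formulated for the functions $S_{(\dot A_n,\widehat A_n,A_n)}$, so as written your argument never establishes the statement that defines the conclusion. Fortunately the error is self-cancelling and the repair is one line. Either use \eqref{chfbd} as stated (adjoint resolvent), which flips every sign and reproduces the paper's computation verbatim, $\bigl(S_{\widehat A}(nz)\bigr)^n\to e^{-i\ell/z}=S_{V_\ell}(z)$; or keep your $W=S^{-1}$ and observe that, since $W_{B_n}\to e^{i\ell/z}\neq 0$ pointwise on $\bbC_+$, the genuine characteristic functions $S_{B_n}=1/W_{B_n}$ converge to $1/W_{V_\ell}=S_{V_\ell}$. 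With that correction the rest of your argument is sound: the Neumann-series asymptotics, the exact scaling law $S_{c\widehat A}(z)=S_{\widehat A}(z/c)$ for the bounded class, the computation $\tr(\Im V_\ell)=\ell/2$, and your (correct, and nice) observation that $\tr(\Im\,\cdot\,)$ is additive under coupling and homogeneous under scaling, which explains why the length parameter $\ell$ is preserved in the limit.
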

 \begin{proof}
 Since
 $$
\left (\frac{1}{2i}( V_\ell-V_\ell^*)f\right )(x)=\frac12\int_0^\ell f(t)dt \cdot \chi_{[0, \ell]}(x), \quad f\in L^2((0,\ell)),
 $$where
$$
\chi_{[0,\ell]}(x)=1 \quad \text{ on } \quad [0,  \ell],
$$
the imaginary part of $V_\ell$ is one-dimensional and therefore  $V_\ell\in  \fD_0( L^2((0,\ell))$.

Next, we evaluate the characteristic function  of the (bounded)  operator  $V_\ell$ (see \eqref{chfbd} in Appendix A).
We have
$$
S_{V_\ell}(z)=1+i(V_\ell^*-zI)^{-1}\chi_{[0,\ell]},\chi_{[0,\ell]}).
$$

Since
$$
\left ((V_\ell^*-zI)^{-1}\chi_{[0,\ell]}\right )(x)=\frac{(-1)}{z}e^{\frac{i}{z}(x-\ell)}, \quad x\in [0, \ell],
$$
we finally get that
$$
S_{V_\ell}(z)=1-\frac{1}{iz}\int_0^\ell e^{\frac{i}{z}(x-\ell)}dx=\exp\left (-i\frac{\ell}{z}\right ),\quad z\in \bbC_+.
$$

 Next,
 since $\widehat A\in  \fD_0(\cH)$, we have the representation
 $$
 \widehat A=A+i(\cdot g)g
 $$
  for some $g\in \cH$, $g\ne 0$.
 Therefore, the characteristic function $S_{\widehat A} $ has the following asymptotics  $$
  S_{\widehat A}(z)= 1+2i  ((\widehat A)^*-zI)^{-1}g, g)=1- 2i \frac{\tr (\Im (\widehat A))}{z}+o\left (\frac1z\right )\quad \text{as} \quad z\to \infty.
  $$

Set $$B_n=\frac{  \widehat A \uplus   \widehat A\uplus \dots\uplus   \widehat A}{n}.
 $$
Using  the invariance principle, Theorem \ref{coinvv} in Appendix F,  and the Multiplication Theorem \ref{opcoupbdd}   in Appendix G, one concludes that
 \begin{align*}
 \lim_{n\to \infty}S_{B_n}(z)&=  \lim_{n\to \infty}S_{\widehat A}(nz)
 = \lim_{n\to \infty}\left (1- 2i \frac{\tr (\Im (\widehat A))}{nz}+o\left (\frac1n\right )\right )^n
 \\&=\exp\left (-\frac{i\ell}{z}\right ), \quad z\in \bbC_+.
 \end{align*}
 Therefore,   $ \lim_{n\to \infty}S_{B_n}(z)$ coincides with the characteristic function of the Volterra operator $V_\ell$
which completes the proof.
 \end{proof}

 \begin{remark}
 Notice that the addition laws with respect to operator coupling  \eqref{ppz1}-\eqref{ppz3} can be extended by the following rule
$$
V_\ell\uplus V_{\ell'}\overset{\substack{\text{ d}}}{=}V_{\ell+\ell'},
$$
which can be deduced from \eqref{ppz2}  applying the invariance principe combined with the  observation that
$$
V_\ell=-D_{II}(\ell)^{-1}.
$$
In particular,  the operator $V_\ell$ has a stable distribution:
$$
(c_1' V_\ell) \uplus (c_2' V_\ell)\overset{\substack{\text{ d}}}{=}c'V_\ell ,
$$
with
$$
c'=c_1'+c_2'.
$$
\end{remark}

We conclude this section by the discussion of the   limit distribution  {\it universality} of the real part of $n$-fold couplings  (as $n\to\infty$)  of bounded dissipative operators from the class $\fD_0(\cH)$.

 \begin{corollary}{\it
  Suppose that $\widehat A\in  \fD_0(\cH)$ is a maximal  bounded dissipative operator and
 $
 \ell=2\tr\left (\Im (\widehat A)\right )
 $.
Let
  $$
  B_n=\frac1n \underbrace{ \widehat A \uplus   \widehat A\uplus \dots\uplus   \widehat A}_{n \text{ times} }
  $$
  denote  the ``averaged'' $n$-fold coupling of the operator $\widehat A$ with itself.

 Let  $M_n(z)$  be the Weyl-Titschmarsh function of the self-adjoint operator $\Re (\widehat B_n)$,
$$
M_n(z)=\frac{1}{\tr (\Im ( \widehat B_n))} \tr \left ((\Re(\widehat B_n)-zI)^{-1} \Im( \widehat B_n)\right ).
$$
Suppose that  $\mu_n(d\lambda)$ is    the probability measure from the representation theorem
$$
M_n(z)=\int_\bbR\frac{d\mu_n(\lambda)}{\lambda-z}, \quad z\in \bbC_+.
$$

Then the sequence of the measure  $\{\mu_n(d\lambda)\}_{n=1}^{\infty}$ converges weakly to
 the  pure point probability measure $\mu(d\lambda)$ }given by
\begin{equation}\label{numi}
\mu (d\lambda) =\frac4{\pi^2}\sum_{k\in \bbZ} \frac{1}{(2k+1)^2}\delta_{z_k}(d\lambda),
\end{equation}
with the ``Dirac masses" $ \delta_{z_k} (d\lambda)$ at the points
 $$
 z_k=\frac1\ell \frac{2}{(k+\frac 12)\pi}, \quad k\in \bbZ.
  $$
 That is,
  $$\lim_{n\to \infty}\mu_n((-\infty, \lambda))=\mu((-\infty, \lambda))\
  $$
at every point of continuity of the function $F(\lambda)=\mu((-\infty, \lambda))$.

 \end{corollary}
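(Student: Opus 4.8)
The plan is to reduce the statement to the convergence in distribution already established in Theorem \ref{limthmbdd}, namely $S_{\widehat B_n}(z)\to S_{V_\ell}(z)=e^{-i\ell/z}$ pointwise on $\bbC_+$, and then to translate this convergence of characteristic functions into weak convergence of the representing measures of the Weyl--Titchmarsh functions $M_n$. Since operator coupling preserves the class $\fD_0$, each $\widehat B_n$ is a bounded dissipative operator with one-dimensional imaginary part, say $\Im(\widehat B_n)=(\cdot\,,g_n)g_n$, so that $\tr(\Im(\widehat B_n))=\|g_n\|^2$ and $M_n(z)=\|g_n\|^{-2}\big((\Re(\widehat B_n)-zI)^{-1}g_n,g_n\big)$. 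First I would record that the normalized coupling leaves $\tr(\Im(\widehat B_n))$ invariant: comparing the expansion $S_{\widehat B_n}(z)=1-2i\,\tr(\Im(\widehat B_n))/z+o(1/z)$ as $z\to\infty$ with the identity $S_{\widehat B_n}(z)=\big(S_{\widehat A}(nz)\big)^n$ used in the proof of Theorem \ref{limthmbdd} gives $\tr(\Im(\widehat B_n))=\tr(\Im(\widehat A))=\ell/2$ for every $n$.

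The key algebraic step is to relate $M_n$ to the characteristic function. A rank-one resolvent (Sherman--Morrison) computation yields
\[
\big((\widehat B_n^*-zI)^{-1}g_n,g_n\big)=\frac{m}{1-im},\qquad m=\big((\Re(\widehat B_n)-zI)^{-1}g_n,g_n\big)=\tfrac{\ell}{2}M_n(z),
\]
so that, by the characteristic-function formula \eqref{chfbd} in Appendix A,
\[
S_{\widehat B_n}(z)=\frac{1+i\tfrac{\ell}{2}M_n(z)}{1-i\tfrac{\ell}{2}M_n(z)},\qquad\text{equivalently}\qquad M_n(z)=\frac{2}{i\ell}\cdot\frac{S_{\widehat B_n}(z)-1}{S_{\widehat B_n}(z)+1}.
\]
Because $|S_{\widehat B_n}(z)|<1$ and $|e^{-i\ell/z}|<1$ on $\bbC_+$, the M\"obius map above is continuous along the convergence $S_{\widehat B_n}(z)\to e^{-i\ell/z}$, and hence $M_n(z)\to M_{V_\ell}(z)$ for every $z\in\bbC_+$, where
\[
M_{V_\ell}(z)=\frac{2}{i\ell}\cdot\frac{e^{-i\ell/z}-1}{e^{-i\ell/z}+1}=-\frac{2}{\ell}\tan\!\Big(\frac{\ell}{2z}\Big).
\]

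Next I would identify $M_{V_\ell}$ as the Cauchy transform of $\mu$. The function $-\tfrac{2}{\ell}\tan(\ell/2z)$ is meromorphic with simple poles precisely at the zeros $z_k$ of $\cos(\ell/2z)$; a residue computation at these poles (using $M_{V_\ell}(z)\sim w_k/(z_k-z)$) produces the atoms $z_k$ together with the weights $w_k=\tfrac{4}{\pi^2}(2k+1)^{-2}$, and one checks $\sum_k w_k=1$, so that $M_{V_\ell}(z)=\int_{\bbR}(\lambda-z)^{-1}\,d\mu(\lambda)$ with $\mu$ the pure point probability measure in \eqref{numi}. Finally, each $M_n$ is the Cauchy transform of the probability measure $\mu_n$; the pointwise convergence $M_n\to M_{V_\ell}$ on $\bbC_+$ together with the fact that the limit is the Cauchy transform of a probability measure allows me to invoke the continuity (Grommer--Hamburger) theorem for Nevanlinna functions to conclude that $\mu_n\to\mu$ weakly.

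The main obstacle I expect is the last passage, from pointwise convergence of the Cauchy transforms to weak convergence of the measures. Pointwise convergence of Cauchy transforms on $\bbC_+$ gives only vague convergence in general, so one must rule out escape of mass to infinity; here this is automatic because all $\mu_n$ and the limit $\mu$ are probability measures, which forces tightness and upgrades vague to weak convergence. The remaining care is bookkeeping: verifying that coupling preserves $\fD_0$ (so that $\Im(\widehat B_n)$ really is rank one and the normalization $\|g_n\|^2=\ell/2$ holds uniformly in $n$), and confirming via Stieltjes inversion that $M_{V_\ell}$ carries no absolutely continuous or singular continuous part, so that its representing measure is exactly the atomic measure dictated by the poles of the tangent.
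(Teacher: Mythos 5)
Your proposal is correct and follows essentially the same route as the paper: the M\"obius relation between $M_n(z)$ and $S_{\widehat B_n}(z)$ (the paper simply cites Lemma \ref{lemma318} instead of re-deriving it via the rank-one resolvent formula), the limit $S_{\widehat B_n}(z)\to e^{-i\ell/z}$ from Theorem \ref{limthmbdd}, identification of the limiting Herglotz function through the partial-fraction expansion of the tangent, and the passage from pointwise convergence of Stieltjes transforms of probability measures to weak convergence (the paper invokes the Geronimo--Hill analog of the L\'evy continuity theorem where you invoke Grommer--Hamburger; these play the identical role). Incidentally, your expression $-\frac{2}{\ell}\tan\bigl(\frac{\ell}{2z}\bigr)$ is the correct form of the limit, which the paper's displays render with a typo as $\tan\frac{\ell}{2}z$.
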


\begin{proof}
From  Lemma \ref{lemma318} in Appendix A it follows that
$$
M_n(z)=\frac{1}{it_n}\frac{S_{\widehat B_n}(z)-1}{S_{\widehat B_n}(z)+1},
$$
where $S_{\widehat B_n}(z)$ is the characteristic function of $\widehat B_n$ and
$$
t_n=\tr(\Im (\widehat B_n))=\tr(\Im (\widehat A))=\frac\ell2.
$$
Applying Theorem \ref{limthmbdd} we get that $$
\lim_{n\to \infty}S_{\widehat B_n}(z)=\exp\left (-\frac{i\ell}{z}\right )
$$
and therefore
$$
\lim_{n\to \infty}M_n(z)=\frac{2}{i\ell}\frac{\exp\left (-\frac{i\ell}{z}\right )-1}{\exp\left (-\frac{i\ell}{z}\right )+1}=-\frac{2}{\ell}\tan \frac\ell2 z, \quad z\in \bbC_+.
$$
Since
$$ \tan(z)=-2\sum _{k=0}^{\infty }\left(\frac {z}{z^2-(k+\frac 12)^2\pi^2}\right )
$$
and hence
  $$
  -\tan\frac1z= 2\sum _{k=0}^{\infty }\left(\frac {z}{1-(k+\frac 12)\pi z) (1+(k+\frac 12)\pi z)}\right ),
  $$
  it is easy to see that
  $$
  -\frac{2}{\ell}\tan \frac\ell2 z=\int_\bbR \frac{d\mu(\lambda)}{\lambda-z}, \quad z\in \bbC_+,
  $$
  with $\mu (d\lambda)$ given by \eqref{numi}.
  Since $\mu_n$ and $\mu$ are probability measures and
  \begin{equation}\label{stst}
  \lim_{n\to \infty}M_n(z)=\lim_{n\to \infty}\int_\bbR \frac{d\mu_n(\lambda)}{\lambda-z}=\int_\bbR \frac{d\mu(\lambda)}{\lambda-z} , \quad z\in \bbC_+,
  \end{equation}
 the pointwise convergence of the Stieltjes transforms \eqref{stst}  ensures the weak convergence of  $\mu_n$ to   $\mu$  by an analog of the L\'evy continuity theorem \cite{hill}.

\end{proof}

\appendix

\section{The characteristic function for rank-one perturbations}

In this section we introduce a characteristic function of a maximal
dissipative operator $\widehat A$ in the case where the domains of the operator  and its adjoint coincide \cite{ABT,Brod, Kuzhel,Lv1,LJ,MT-S,PavDrog,Str60}, that is,
\begin{equation}\label{domm}
\Dom(\widehat A)=\Dom ((\widehat  A)^*).
\end{equation}

For instance, if the operator $\widehat A$ is bounded, condition \eqref{domm} holds automatically.

We will treat the simplest case where   $\widehat  A$ has a  rank-one
imaginary part of the form
$$
\Im ( \widehat  A)= \frac{1}{2i}(\widehat A-(\widehat A)^*)=tP,
$$
where $t >0$ and
 $P$ is   a rank-one
 orthogonal projection,
$$
P =(\cdot , g)g, \quad \|g\|=1.
$$

Denote by $A$  the real part of $\widehat A$,
$$ A=\Re(\widehat  A)=\frac12(\widehat A+(\widehat A)^*), \quad \Dom (A)=\Dom(\widehat A)=\Dom ((\widehat  A)^*),$$
so that
$$
\widehat A= A+it P.
$$

The resolvent of $\widehat A$ is described in  the following lemma.

\begin{lemma}[cf. \cite{Simon}]\label{resf}
Let   $\widehat  A$ be a maximal dissipative operator with   a rank-one
imaginary part,
 $$  \widehat A= A+it P,$$
  $A=A^*$, $P =(\cdot , g)g$, $\|g\|=1$, and $t>0$.

 Denote by
  $$
M(z)=(( A-zI)^{-1}g,g), \quad z\in \rho (A),
$$
  the $M$-function associated with the real part
 $\Re(\widehat A)$ of the operator $\widehat A$
and the unit vector $g$.

Then the following resolvent formula
\begin{equation}\label{resform}
(\widehat A -zI)^{-1}=(A -zI)^{-1}-
p(z)
(A -zI)^{-1}P( A -zI)^{-1},
\end{equation}
$$
z\in \rho(\widehat A)\cap  \rho(A),
$$holds, where
$$
p(z)=\frac{1}{M(z)+\frac{1}{it}}.
$$

In particular,
\begin{equation}\label{kreinaron}
((\widehat A -zI)^{-1}g, g)=\frac{M(z)}{1+it M(z)}.
\end{equation}

Moreover,
\begin{equation}\label{vkluch}
\spec (\widehat A)\subset \{z\, : 0\le \Im(z)\le t\}.
\end{equation}

\end{lemma}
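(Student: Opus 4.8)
The plan is to treat $\widehat A-zI=(A-zI)+it(\cdot,g)g$ as a rank-one bounded perturbation of the boundedly invertible resolvent $R_z=(A-zI)^{-1}$, defined for $z\in\rho(A)\supset\bbC\setminus\bbR$, and to produce the inverse by a Sherman--Morrison ansatz. The one computational fact that drives everything is the rank-one identity $PR_zP=M(z)P$, which follows at once from $Pf=(f,g)g$ and the definition $M(z)=(R_zg,g)$. Accordingly I would propose $X=R_z-p(z)\,R_zPR_z$ as a candidate for $(\widehat A-zI)^{-1}$ and fix the scalar $p(z)$ by forcing $X$ to be an inverse.

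The second step is to verify that $X$ is a genuine two-sided inverse. Multiplying out $(\widehat A-zI)X=(R_z^{-1}+itP)X$ and reducing every occurrence of $PR_zP$ to $M(z)P$ collapses the expression to $I+(it-p(z)-it\,p(z)M(z))\,PR_z$; the bracket vanishes exactly when $p(z)=\frac{it}{1+itM(z)}$, and a one-line manipulation shows $\frac{it}{1+itM(z)}=\frac{1}{M(z)+\frac{1}{it}}$, the value claimed in \eqref{resform}. The identity $X(\widehat A-zI)=I$ on $\Dom(\widehat A)=\Dom(A)$ is checked the same way; here the hypothesis \eqref{domm} is precisely what guarantees that $\widehat A-zI$ equals $(A-zI)+itP$ on the common domain, so the domain bookkeeping stays legitimate even for unbounded $A$. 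Since $X$ is bounded, this simultaneously yields $z\in\rho(\widehat A)$ whenever $1+itM(z)\ne0$ and establishes \eqref{resform} on $\rho(\widehat A)\cap\rho(A)$ (on which $1+itM(z)\ne0$ holds automatically, as the right-hand side of \eqref{kreinaron} is then finite).

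For the Krein-type formula \eqref{kreinaron} I would simply apply \eqref{resform} to $g$ and pair with $g$, using $R_zPR_zg=M(z)R_zg$ (again the rank-one identity) to obtain $(R_zPR_zg,g)=M(z)^2$. Then $((\widehat A-zI)^{-1}g,g)=M(z)-p(z)M(z)^2$, and substituting $p(z)=\frac{it}{1+itM(z)}$ and clearing the common denominator gives $\frac{M(z)}{1+itM(z)}$.

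Finally, the spectral inclusion \eqref{vkluch} I would derive from a numerical-range estimate rather than from the resolvent formula. For $f\in\Dom(\widehat A)$ one computes $\Im(\widehat A f,f)=t|(f,g)|^2$, so $0\le\Im(\widehat A f,f)\le t\|f\|^2$ by Cauchy--Schwarz. If $\Im z<0$ or $\Im z>t$, then $\Im((\widehat A-zI)f,f)$ is bounded away from zero, yielding the coercive estimate $\|(\widehat A-zI)f\|\ge \dist(\Im z,[0,t])\,\|f\|$; hence $\widehat A-zI$ is injective with closed range. To upgrade to surjectivity I would run the mirror estimate for $(\widehat A)^*=A-itP$, whose numerical range lies in $[-t\|f\|^2,0]$: for the same $z$ the operator $(\widehat A)^*-\bar z I$ is injective, so $\Ran(\widehat A-zI)$ is dense, and combined with closed range this forces $\Ran(\widehat A-zI)=\cH$, i.e. $z\in\rho(\widehat A)$. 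I expect this surjectivity step to be the main obstacle: injectivity plus a lower bound is immediate, but promoting it to a resolvent point requires the adjoint estimate together with the identification $(\widehat A)^*=A-itP$, which again rests on \eqref{domm}.
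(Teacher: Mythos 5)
Your proof is correct, and it diverges from the paper's in an instructive way. For the resolvent formula \eqref{resform} the two arguments are the same rank-one algebra in different packaging: the paper starts from the second resolvent identity $(\widehat A -zI)^{-1}=(A -zI)^{-1}- it (\widehat A -zI)^{-1}P( A -zI)^{-1}$, applies it to $g$, solves the resulting scalar equation for $(\widehat A -zI)^{-1}g$, and substitutes back, whereas you posit the Sherman--Morrison ansatz $X=R_z-p(z)R_zPR_z$ and verify that it is a two-sided inverse; your version has the small extra benefit of \emph{proving}, rather than assuming, that every $z\in\rho(A)$ with $1+itM(z)\ne 0$ lies in $\rho(\widehat A)$, while the paper's derivation presupposes $z\in\rho(\widehat A)$. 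Where you genuinely part ways is the spectral inclusion \eqref{vkluch}: the paper identifies the non-real spectrum of $\widehat A$ with the zero set of $\frac{1}{it}+M(z)$ and then invokes the Herglotz bound $|M(z)|\le 1/\Im(z)$, $z\in\bbC_+$ (zeros in $\bbC_-$ being excluded since $\Im M<0$ there); this is shorter and yields sharper information, namely that the non-real spectrum consists exactly of eigenvalues located at the zeros of $1+itM$, with eigenvectors $(A-zI)^{-1}g$. Your numerical-range argument ($\Im(\widehat A f,f)=t|(f,g)|^2\in[0,t\|f\|^2]$, the coercive lower bound off the strip, and the mirror estimate for $(\widehat A)^*=A-itP$ to upgrade injectivity and closed range to surjectivity) is more elementary, uses no function theory, and generalizes verbatim to bounded perturbations of arbitrary rank whose numerical range lies in a strip. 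One cosmetic point: your justification that $1+itM(z)\ne0$ holds automatically on $\rho(\widehat A)\cap\rho(A)$ is circular as phrased; the clean statement is that if $1+itM(z)=0$ at some $z\in\rho(A)$, then $(\widehat A-zI)(A-zI)^{-1}g=(1+itM(z))g=0$, so $(A-zI)^{-1}g\ne 0$ is an eigenvector of $\widehat A$ and such $z$ cannot belong to $\rho(\widehat A)$.
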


\begin{proof}
To prove the resolvent formula \eqref{resform}, one observes that
\begin{equation}\label{poraby}
(\widehat A -zI)^{-1}=(A -zI)^{-1}-
it (\widehat A -zI)^{-1}P( A -zI)^{-1},
\end{equation} and hence
$$
(\widehat A -zI)^{-1}g=(A -zI)^{-1}g-
it (( A -zI)^{-1}g,g)(\widehat A -zI)^{-1}g,
$$
 which yields the representation
 $$
 (\widehat A -zI)^{-1}g=(1+it M(z))^{-1}(A -zI)^{-1}g.
 $$
 Substituting this equality  back to
 \eqref{poraby}, one obtains
 $$
 (\widehat A -zI)^{-1}=(A -zI)^{-1}-
\frac{it}{1+it M(z)} (A -zI)^{-1}P( A -zI)^{-1},
$$
which proves \eqref{resform}.

 Now, it is easy to see  that the non-real spectrum of $\widehat A$
coincides with those $z$  that satisfy the equation
$$
\frac{1}{it}+M(z)=0, \quad \Im (z) \ne 0.
$$
To complete the proof, we use the inequality
$$
|M(z)|\le \frac{1}{\Im (z)}, \quad z\in \bbC_+.
$$
Therefore,
$
\{z: \, \Im(z)>t\}\subset \rho(\widehat A),
$ which proves \eqref{vkluch}.
\end{proof}

\begin{theorem}\label{cycl} Let   $\widehat  A$ be a maximal dissipative operator with   a rank-one
imaginary part
$$  \widehat A= A+it P,$$ $A=A^*$, $P =(\cdot , g)g$, $\|g\|=1$, and $t>0$.
Then $\widehat A$  is completely non-self-adjoint if and only  if  the element $g$ is generating for
 the self-adjoint operator $A= \Re (\widehat A)$.
\end{theorem}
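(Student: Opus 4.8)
The plan is to identify the self-adjoint part of $\widehat A$ explicitly with the orthogonal complement of the cyclic subspace of $A$ generated by $g$. Write $\cG = \overline{\Span\{(A-zI)^{-1}g : z \in \bbC \setminus \bbR\}}$ for the smallest closed subspace that contains $g$ and reduces the self-adjoint operator $A = \Re(\widehat A)$. By construction $g$ is generating for $A$ if and only if $\cG = \cH$, so the theorem amounts to the single assertion that $\widehat A$ is completely non-self-adjoint precisely when $\cG^\perp = \{0\}$. The whole argument then reduces to showing that $\cG^\perp$ is exactly the maximal reducing subspace on which $\widehat A$ acts self-adjointly.

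First I would check that $\cG^\perp$ reduces $\widehat A$ and that $\widehat A|_{\cG^\perp}$ is self-adjoint. The point is that $g \in \cG$, so for every $f \in \cG^\perp$ one has $Pf = (f,g)g = 0$; hence on $\cG^\perp$ the perturbation disappears and $\widehat A$ coincides with $A|_{\cG^\perp}$, which is self-adjoint because $\cG$, and therefore $\cG^\perp$, reduces $A$. Since $g \in \cG$ also gives $P(\cG) \subset \cG$, the subspace $\cG$ is invariant under $\widehat A = A + itP$ as well, so the splitting $\cH = \cG \oplus \cG^\perp$ reduces $\widehat A$. This settles the forward implication by contraposition: if $g$ is not generating then $\cG^\perp \neq \{0\}$ is a nonzero reducing subspace on which $\widehat A$ is self-adjoint, so $\widehat A$ is not completely non-self-adjoint.

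For the reverse implication I assume $\cG = \cH$ and suppose, for contradiction, that $\cH = \cH_0 \oplus \cH_1$ is an orthogonal decomposition into subspaces invariant under $\widehat A$ with $\cH_0 \neq \{0\}$ and $\widehat A|_{\cH_0}$ self-adjoint. Because both complementary subspaces are invariant, the orthogonal projection onto $\cH_0$ reduces $\widehat A$ and hence commutes with its real part $A$ and with its imaginary part $tP$; the self-adjointness of $\widehat A|_{\cH_0}$ then forces the imaginary part to vanish on $\cH_0$, i.e. $P|_{\cH_0} = 0$. Since $Pf = (f,g)g$ and $g \neq 0$, this means $(f,g) = 0$ for all $f \in \cH_0$, so $g \perp \cH_0$, that is $g \in \cH_1$. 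But $\cH_1$ reduces $\widehat A$, hence reduces $A$, so $\cH_1$ is a closed $A$-reducing subspace containing $g$; by minimality of $\cG$ it contains $\cG = \cH$, giving $\cH_1 = \cH$ and $\cH_0 = \{0\}$, the desired contradiction.

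The main obstacle is the domain bookkeeping behind the two claims used above, since $A$, and thus $\widehat A$, may be unbounded. I would make precise that an orthogonal pair of subspaces both invariant under $\widehat A$ forms a reducing pair, and that reducing $\widehat A$ passes to reducing its real and imaginary parts; the cleanest route is to phrase reduction through the bounded resolvent, using the resolvent identity \eqref{resform} of Lemma \ref{resf} to transfer commutation of the projection between $(\widehat A - zI)^{-1}$ and $(A-zI)^{-1}$, together with the standing assumption $\Dom(\widehat A) = \Dom((\widehat A)^*) = \Dom(A)$ from \eqref{domm} and the boundedness of $P$. Once the reduction to real and imaginary parts is justified, the identity $\frac{1}{2i}\bigl(\widehat A|_{\cH_0} - (\widehat A|_{\cH_0})^*\bigr) = tP|_{\cH_0}$ makes the equivalence between self-adjointness on $\cH_0$ and the vanishing $P|_{\cH_0} = 0$ immediate.
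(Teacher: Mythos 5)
Your proposal is correct, and its ``if'' direction takes a genuinely different route from the paper's. The forward direction is the same in both arguments: since $g\in\cG$, the perturbation $itP$ vanishes on $\cG^\perp$, which reduces $A$, so $\widehat A$ acts there as a self-adjoint operator. For the converse, the paper argues analytically: for $0\ne h$ in a reducing subspace on which $\widehat A$ is self-adjoint, it expands $((\widehat A-iyI)^{-1}h,h)$ by the resolvent formula \eqref{resform} and plays the reflection symmetry $f(y)=\overline{f(-y)}$, enjoyed by the resolvent quadratic forms of the self-adjoint objects, against the denominator $\frac{1}{it}+M(iy)$, which fails this symmetry; the resulting identity forces $((A-iyI)^{-1}h,g)((A-iyI)^{-1}g,h)\equiv 0$, contradicting the generating property of $g$. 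You instead argue geometrically: self-adjointness of the part annihilates the imaginary part on $\cH_0$, so $g\perp\cH_0$, and then minimality of the cyclic subspace $\cG$ swallows $\cH_1$ and forces $\cH_0=\{0\}$. Your route is more elementary (no Weyl function $M$, no reflection trick) and yields the sharper structural statement that $\cG^\perp$ is exactly the maximal reducing subspace on which $\widehat A$ is self-adjoint, i.e.\ the rank-one analogue of Theorem \ref{prostota0}; the paper's route, in exchange, runs entirely on resolvent identities already established in Lemma \ref{resf}.

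One step of yours should be reorganized rather than patched. The claim that two complementary $\widehat A$-invariant subspaces automatically form a reducing pair, so that the projection commutes with $A$ and with $tP$, is not free for unbounded operators: invariance alone does not give $Q\,\Dom(\widehat A)\subset\Dom(\widehat A)$, and the identity \eqref{resform} transfers commutation of the projection in the wrong direction for this purpose. Fortunately you never need it. For $f\in\Dom(\widehat A)\cap\cH_0$ (which is dense in $\cH_0$ because the induced operator is self-adjoint there), symmetry of that induced operator gives $0=\Im(\widehat A f,f)=\Im(Af,f)+t|(f,g)|^2=t|(f,g)|^2$, so $(f,g)=0$ on a dense subset of $\cH_0$ and hence $g\perp\cH_0$ directly. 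Consequently $\widehat A$ and $A$ agree on $\Dom(\widehat A)\cap\cH_0$, so the induced operator is a self-adjoint restriction of $A$ acting in $\cH_0$; for nonreal $z$ this yields $(A-zI)^{-1}\cH_0\subset\cH_0$, and by passing to adjoints, $(A-zI)^{-1}\cH_1\subset\cH_1$. Since $g\in\cH_1$, this gives $\cG\subset\cH_1$, and your contradiction with $\cG=\cH$ completes the proof.
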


\begin{proof}
Introduce the subspace,
$$
\cH_g=\Span_{\delta \in \cB(\bbR)}\{ E_{A}(\delta)g\}, \quad\text{with }
\cB(\bbR)\text{ the Borel $\sigma$-algebra}.
$$

 {\it Only If Part}.  Suppose  that  $\widehat A$  is completely non-self-adjoint.
Assume that $g$ is   not a generating element  for the self-adjoint operator
 $ A=\Re(\widehat A)$.  Then the orthogonal complement $\cH_g$ reduces the real part
$ A$ and since
the element  $g$ is orthogonal to $\cH_g^\perp$ of $\cH$, one obtains that
$$
\widehat Ah=Ah \quad \text{for } h\in \cH_g^\perp.
$$
Since $\cH_g$ reduces $ A$, the subspace   $\cH_g^\perp$ reduces $\widehat A$ as well. Furthermore,
the part of $\widehat A$ on $\cH_g^\perp$ is a self-adjoint operator.
Therefore, $\widehat A$ is not completely non-self-adjoint. We get a contradiction.

 {\it If Part}.
Suppose that $g$ is    a generating element  for the self-adjoint operator
 $A=\Re(\widehat A)$. Assume that $\widehat A$ is not completely non-self-adjoint and let
$\cH_0$ be its reducing subspace such that the part of $\widehat A$
in this subspace is a self-adjoint operator.

Given $0\ne h\in \cH_0$,
using  the resolvent formula
\eqref{resform},
 one obtains that
\begin{align*}
((\widehat A -iyI)^{-1}h,h)&=(( A -ityI)^{-1}h,h)
\\& -
\frac{
((A -ityI)^{-1}h,g)(( -ityI)^{-1}g,h)}{\frac{1}{it}
+M( iy)},
\\|y|&>t, \quad y\in \bbR.
\end{align*}
Therefore,
\begin{equation}\label{prox}
f_1(y)=f_2(y)-\frac{f_3(y)}{f_4(y)}, \quad |y|>t,
\end{equation}
where the functions $f_k$, $k=1,2,3,4$
are given by
\begin{align*}
f_1(y)&=((\widehat A -iyI)^{-1}h,h),
\\f_2(y)&=(( A -ityI)^{-1}h,h),\\
f_3(y)&=(( A -ityI)^{-1}h,g)(( A -ityI)^{-1}g,h)),
\\f_4(y)&=\frac{1}{it}
+M( iy).
\end{align*}
One observes that
\begin{equation}\label{F44}
f_k(y)=\overline{f_k(-y)}, \quad y\in \bbR,\quad  |y|>t, \quad k=1,2,3.
\end{equation}
However,
\begin{equation}\label{F4}
f_4(y)\ne \overline{f_4(-y)}
\end{equation}
for
$$
\overline{f_4(-y)}=\overline{\frac{1}{it}
+M( -iy)}=\frac{(-1)}{it}
+M( iy)\ne \frac{1}{it}
+M( iy)=f_4(y).
$$
Inequality \eqref{F4} together with \eqref{F44} is inconsistent with \eqref{prox}, provided that  $f_3(y)\ne 0$.
Finally,  the fact that  the function
$f_3(y)
$  is not identically zero easily follows from the assumption that
 the element $g$ is generating  for $A$.

The obtained contradiction shows that there is no reducing subspace such that the part of $\widehat A $ in this subspace is self-adjoint. Therefore,
$\widehat A$ is completely non-self-adjoint.

\end{proof}

\begin{theorem}[cf. \cite{BL60}]\label{osnov}
Assume that   $\widehat  A$ is a maximal dissipative operator with   a rank-one
imaginary part, $  \widehat A= A+it P,$ $A=A^*$, $P =(\cdot , g)g$, $\|g\|=1$, and $t>0$.
Assume, in addition, that  $\widehat A$ is completely
non-self-adjoint.

Let  $\mu(d\lambda)$ be  the   probability   measure from the representation
$$
(( A-zI)^{-1}g,g)=\int_\bbR\frac{d\mu(\lambda)}{\lambda-z},
\quad z\in \bbC_+.
$$

Then the operator
$\widehat A$ is unitarily equivalent to the operator $\widehat \cB$ of the form
$$
(\widehat \cB f)(\lambda)=\lambda f(\lambda)+it(f,\mathbb{1}) \mathbb{1}(\lambda),
$$
$$
\Dom(\widehat \cB)=\left \{f\in \,L^2(\bbR;d\mu) \,\bigg | \, \int_\bbR \lambda^2
| f(\lambda)|^2d
\mu(\lambda)<\infty \right \},
$$ in the Hilbert space $L^2(\bbR; d\mu)$,
where
\begin{equation}\label{defee}
\mathbb{1}(\lambda)=1\quad \text{for $\mu$-a.e. } \lambda\in \bbR.
\end{equation}
\end{theorem}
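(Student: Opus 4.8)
The plan is to realize the model operator by applying the spectral theorem to the self-adjoint real part $A=\Re(\widehat A)$ in its cyclic form. Since $\widehat A$ is assumed completely non-self-adjoint, Theorem \ref{cycl} guarantees that $g$ is a generating element for $A$, that is, $\cH=\Span_{\delta\in\cB(\bbR)}\{E_A(\delta)g\}$. For a self-adjoint operator equipped with a cyclic vector, the spectral theorem produces a unitary map $U:\cH\to L^2(\bbR;d\mu)$, where $\mu(d\lambda)=(E_A(d\lambda)g,g)$ is a probability measure (as $\|g\|=1$), such that $UAU^{-1}$ is the operator of multiplication by the independent variable $\lambda$ and $Ug=\mathbb{1}$, with $\mathbb{1}$ the constant function from \eqref{defee}. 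The first point to record is that this $\mu$ coincides with the representing measure of the statement: from the Stieltjes representation $((A-zI)^{-1}g,g)=\int_\bbR (\lambda-z)^{-1}(E_A(d\lambda)g,g)$ one reads off $\mu(d\lambda)=(E_A(d\lambda)g,g)$ directly.

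Next I would track how the rank-one imaginary part transforms under $U$. For $f\in L^2(\bbR;d\mu)$, unitarity of $U$ gives $UPU^{-1}f=(U^{-1}f,g)_{\cH}\,Ug=(f,Ug)_{L^2(\mu)}\,\mathbb{1}=(f,\mathbb{1})\,\mathbb{1}$, so $UPU^{-1}$ is exactly the rank-one projection onto $\mathbb{1}$ occurring in the definition of $\widehat\cB$. Since $UAU^{-1}$ is multiplication by $\lambda$, it follows that $U\widehat A U^{-1}=UAU^{-1}+it\,UPU^{-1}$ acts as $f(\lambda)\mapsto \lambda f(\lambda)+it(f,\mathbb{1})\mathbb{1}(\lambda)$, which is precisely the action of $\widehat\cB$.

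Finally I would check that $U$ carries $\Dom(\widehat A)$ onto $\Dom(\widehat\cB)$. Because the imaginary part $itP$ is bounded, $\Dom(\widehat A)=\Dom(A)$, and under the spectral representation the domain of $A$, i.e.\ of multiplication by $\lambda$, is exactly $\{f\in L^2(\bbR;d\mu)\,|\,\int_\bbR\lambda^2|f(\lambda)|^2\,d\mu(\lambda)<\infty\}$, matching $\Dom(\widehat\cB)$. I do not anticipate a genuine obstacle: once cyclicity is supplied by Theorem \ref{cycl}, the argument reduces to the standard cyclic spectral theorem together with the one-line computation for the perturbation. The only steps demanding a little care are the identification of the measure $\mu$ (immediate, as noted) and the verification $Ug=\mathbb{1}$, both of which are built into the cyclic spectral representation; with these in place the unitary equivalence $\widehat A\cong\widehat\cB$ follows at once.
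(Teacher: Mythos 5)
Your proposal is correct and follows essentially the same route as the paper: complete non-self-adjointness yields, via Theorem \ref{cycl}, that $g$ is generating for $A=\Re(\widehat A)$, and then the cyclic spectral theorem supplies a unitary $\cU$ with $\cU A\cU^{-1}=\cB$ and $\cU g=\mathbb{1}$, whence $\cU\widehat A\cU^{-1}=\widehat\cB$. The only difference is that you spell out explicitly the transformation $\cU P\cU^{-1}=(\cdot,\mathbb{1})\mathbb{1}$ and the domain identification, which the paper leaves implicit.
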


\begin{proof}  By hypothesis $\widehat A$ is completely
non-self-adjoint and therefore, by Theorem
 \ref{cycl}, the real part  $ A$ has  simple spectrum and the vector $g$
 is   generating for $ A$.

It follows that  the spectral measures
$$\nu(d\lambda)=(E_{ A}(d\lambda)g,g)$$
and
$$\mu(d\lambda)=(E_{ \cB}(d\lambda)\mathbb{1},\mathbb{1}),$$ where $\cB$ is the self-adjoint operator of multiplication  by  independent variable in $L^2(\bbR;d\mu)$ and   $\mathbb{1}(\lambda)$ is given by \eqref{defee}, coincide.
Since  $ \cB$ has  simple spectrum and $\mathbb{1}(\lambda)$ is a generating  vector for the self-adjoint operator
$ \cB$,  the Spectral Theorem for self-adjoint operators
yields the existence of   a unitary operator  $\cU:\cH\to L^2(\bbR;d\mu)$ such that
$$
\cU A\cU^{-1}=\cB
\quad \text{ and} \quad \cU g=\mathbb{1}.
$$
Hence
$$
\cU\widehat  A\cU^{-1}=\widehat \cB,
$$
which completes the proof.
\end{proof}

Given  a non-self-adjoint dissipative
operator $\widehat A$ with a  rank-one  imaginary part,  $  \widehat A= A+it P,$ $A=A^*$, $P =(\cdot , g)g$, $\|g\|=1$, and $t>0$,
denote by $S(z)$ the characteristic function of operator $\widehat A$ following  \cite{Lv1}
\begin{equation}\label{chfbd}
S(z)=1+2 i t\left (((\widehat A)^*-zI)^{-1}g,g\right ), \quad z\in \bbC_+.
\end{equation}

\begin{lemma}\label{lemma318}
 Let   $\widehat  A$ be a maximal dissipative operator with   a rank-one
imaginary part,
 $  \widehat A= A+it P,$ $A=A^*$, $P =(\cdot , g)g$, $\|g\|=1$, and $t>0$.

Then the  characteristic function $S(z)$ of
 $\widehat A$ admits the representation
$$
S(z)=\frac{1+itM(z)}{1-itM(z)}, \quad z\in \bbC_+,
$$
where $M(z)$ is the $M$-function of $\Re(\widehat A)=A$
given by
$$
M(z)=(( A-zI)^{-1}g,g), \quad z\in \rho (A).
$$
\end{lemma}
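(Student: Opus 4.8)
The plan is to reduce everything to the resolvent formula of Lemma \ref{resf}, which was derived by a purely algebraic manipulation and therefore does not really require the sign hypothesis $t>0$. Starting from the definition \eqref{chfbd}, namely
$$
S(z)=1+2it\left(((\widehat A)^*-zI)^{-1}g,g\right),\quad z\in\bbC_+,
$$
the first observation I would make is that since $A=A^*$ and $P=(\cdot,g)g$ is a (self-adjoint) rank-one projection, the adjoint is
$$
(\widehat A)^*=A-itP=A+i(-t)P.
$$
Thus $(\widehat A)^*$ is again an operator of the same form $A+i\tilde t P$, now with $\tilde t=-t<0$, i.e. a maximal accumulative operator. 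Because its imaginary part is nonpositive, its spectrum lies in the lower half-strip $\{z:\, -t\le\Im(z)\le 0\}$ (the mirror image of \eqref{vkluch}), so that the whole open upper half-plane $\bbC_+$ belongs to $\rho((\widehat A)^*)$ and the resolvent appearing in \eqref{chfbd} is well defined for $z\in\bbC_+$.

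Next I would apply the scalar resolvent identity \eqref{kreinaron} to the operator $A+i\tilde t P$. Here I would remark that the chain of equalities producing \eqref{resform} and \eqref{kreinaron} in the proof of Lemma \ref{resf} is an identity valid at every point of $\rho(A+i\tilde tP)\cap\rho(A)$ and uses $t>0$ only at the very last step, to establish the spectral inclusion \eqref{vkluch}; the formula itself holds verbatim for any real $\tilde t$. Substituting $\tilde t=-t$ therefore gives
$$
\left(((\widehat A)^*-zI)^{-1}g,g\right)=\frac{M(z)}{1+i\tilde t\,M(z)}=\frac{M(z)}{1-it\,M(z)},\quad z\in\bbC_+,
$$
with $M(z)=((A-zI)^{-1}g,g)$ as in the statement.

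Finally, plugging this into \eqref{chfbd} and combining the fractions yields
$$
S(z)=1+2it\cdot\frac{M(z)}{1-it\,M(z)}=\frac{(1-it\,M(z))+2it\,M(z)}{1-it\,M(z)}=\frac{1+it\,M(z)}{1-it\,M(z)},
$$
which is the asserted representation. I do not expect any genuine obstacle here: the only point needing care is the legitimacy of transporting the resolvent formula of Lemma \ref{resf} to the adjoint (equivalently, to the sign $\tilde t=-t$), which is handled by the two remarks above — that the identity is algebraic and that $\bbC_+\subset\rho((\widehat A)^*)$. Note also that the denominator never vanishes for $z\in\bbC_+$, since $\Im M(z)>0$ there forces $1-it\,M(z)\neq 0$, so the quotient is well defined throughout the upper half-plane.
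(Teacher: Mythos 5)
Your proposal is correct and follows essentially the same route as the paper: the paper likewise applies the rank-one resolvent formula of Lemma \ref{resf} with $t$ replaced by $-t$ (i.e.\ to $(\widehat A)^*=A-itP$, via the function $M_{-t}(z)=(((\widehat A)^*-zI)^{-1}g,g)$ defined on $\bbC_+$) and then substitutes into the definition \eqref{chfbd} to obtain $S(z)=1+2itM_{-t}(z)=\frac{1+itM(z)}{1-itM(z)}$. Your extra remarks — that the derivation of \eqref{kreinaron} is algebraic and hence sign-independent, that $\bbC_+\subset\rho((\widehat A)^*)$, and that the denominator cannot vanish since $\Im M(z)>0$ — merely make explicit what the paper leaves implicit.
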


\begin{proof} Introduce the function
$$
M_t(z)=((\widehat  A-zI)^{-1}g,g)=(( A+itP -zI)^{-1}g,g),\quad z\in
\begin{cases} \bbC_-, \text{ if } t >0\\
\bbC_+, \text{ if }  t< 0
\end{cases}.
$$
From the resolvent formula \eqref{resform} one gets that
$$
M_t(z)=\frac{M(z)}{1+itM(z)},\quad z\in
\begin{cases} \bbC_-, \text{ if } t >0\\
\bbC_+, \text{ if }  t< 0
\end{cases},
$$
and hence
$$
S(z)=1+2 i tM_{-t}(z)=\frac{1+itM(z)}{1-itM(z)}, \quad z\in \bbC_+,
$$
completing the proof.
\end{proof}

\begin{theorem}[\cite{Lv1}]\label{lifunique}
The characteristic function  of a completely non-self-adjoint
operator with a rank-one imaginary  part uniquely determines the operator up to unitary equivalence.
\end{theorem}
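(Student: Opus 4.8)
The plan is to show that the characteristic function $S(z)$ is a complete unitary invariant by extracting from it precisely the two pieces of data that, according to Theorem \ref{osnov}, determine the operator up to unitary equivalence: the strength $t>0$ of the imaginary part and the probability measure $\mu(d\lambda)=(E_A(d\lambda)g,g)$ associated with the real part $A=\Re(\widehat A)$ and the vector $g$. Indeed, once $\widehat A$ is completely non-self-adjoint, Theorem \ref{cycl} guarantees that $g$ is generating for $A$, and Theorem \ref{osnov} realizes $\widehat A$ as the model operator $\widehat\cB$ on $L^2(\bbR;d\mu)$, which depends on nothing but the pair $(\mu,t)$. Hence it suffices to prove that $S$ determines $(\mu,t)$ uniquely; two completely non-self-adjoint operators with rank-one imaginary parts sharing the same characteristic function will then be unitarily equivalent to one and the same model, and therefore to each other.

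First I would invert the relation supplied by Lemma \ref{lemma318}. Since $S(z)=\dfrac{1+itM(z)}{1-itM(z)}$ with $M(z)=((A-zI)^{-1}g,g)=\int_\bbR\frac{d\mu(\lambda)}{\lambda-z}$, solving the Möbius transformation yields
\begin{equation*}
itM(z)=\frac{S(z)-1}{S(z)+1},\qquad z\in\bbC_+.
\end{equation*}
This already expresses the product $tM(z)$ in terms of $S$ alone, but the individual factors $t$ and $M$ are not yet separated, and disentangling them is the crux of the argument.

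The separation is achieved using the normalization $\|g\|=1$, which forces $\mu$ to be a probability measure. The hard part is this bookkeeping at infinity: since $\mu(\bbR)=1$, the Cauchy--Stieltjes transform satisfies $\lim_{y\to\infty}yM(iy)=i$, i.e. $M(iy)\sim i/y$. Multiplying the inverted identity by $y$ along the imaginary axis gives
\begin{equation*}
\lim_{y\to\infty}y\,\frac{S(iy)-1}{S(iy)+1}=it\lim_{y\to\infty}yM(iy)=it\cdot i=-t,
\end{equation*}
so $t=-\lim_{y\to\infty}y\frac{S(iy)-1}{S(iy)+1}>0$ is recovered directly from $S$. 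With $t$ now known, the displayed inversion determines $M(z)=\frac1{it}\frac{S(z)-1}{S(z)+1}$ for all $z\in\bbC_+$; being a Herglotz--Nevanlinna function, $M$ is the Cauchy--Stieltjes transform of the unique measure $\mu$, recovered from $M$ by the Stieltjes inversion formula.

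Finally I would assemble these recoveries: the data $(\mu,t)$, and hence the model operator $\widehat\cB$, is uniquely determined by $S$. Invoking Theorem \ref{osnov} for each of the two operators under comparison, both are unitarily equivalent to the common model $\widehat\cB$ built from $(\mu,t)$, which completes the proof. I expect the only genuine subtlety to be the step separating $t$ from $M$; everything else is a direct reading of Lemma \ref{lemma318} together with the classical uniqueness of the representing measure in a Cauchy--Stieltjes representation.
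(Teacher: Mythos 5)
Your proposal is correct and follows essentially the same route as the paper's proof: reduce via Theorem \ref{osnov} to recovering the pair $(\mu,t)$, invert the relation of Lemma \ref{lemma318}, extract $t$ from the behavior of $S$ at infinity (your limit $-t=\lim_{y\to\infty}y\frac{S(iy)-1}{S(iy)+1}$ is equivalent to the paper's $t=\frac{i}{2}\lim_{z\to\infty}z(S(z)-1)$, since $S(iy)\to 1$), and then recover $\mu$ from $M(z)=\frac{1}{it}\frac{S(z)-1}{S(z)+1}$ by the Stieltjes inversion formula. The only cosmetic difference is that you justify the normalization at infinity through $\lim_{y\to\infty}yM(iy)=i\mu(\bbR)$ rather than by expanding $S$ asymptotically, which amounts to the same computation.
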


\begin{proof}
Suppose that
$\widehat A$ is a completely non-self-adjoint
operator with  a rank-one imaginary part so that
$$
\widehat A=A+it(\cdot, g)g
$$
for some $\|g\|=1$, $  A=A^*$ and $t>0$.

 In view of Theorem \ref{osnov}, it suffices to show that the characteristic function
uniquely determines the parameter $t$ and the (probability) measure $\mu$ from the representation
for the $M$-function (in the case in question, $\mu(\bbR)=\|g\|^2=1$)
\begin{equation}\label{MMM}
M(z)=((A-zI)^{-1}g,g)=\int_\bbR\frac{d\mu(\lambda)}{\lambda-z},
\quad z\in \bbC_+.
\end{equation}
Indeed, by Lemma \ref{lemma318},
$$
S(z)=\frac{1+itM(z)}{1-itM(z)}=
\frac{1-itz^{-1}+o(z^{-1})}{1+itz^{-1}+o(z^{-1})}=1-\frac{2it}{z} +o(z^{-1}), \quad z\to \infty,
$$
and hence
$$
\frac{i}{2}\lim_{z\to \infty}z(S(z)-1)=t,
$$
which uniquely determines the perturbation parameter $t$.
Since
$$
M(z)=\frac{1}{it}\cdot \frac{S(z)-1}{S(z)+1},
$$
the knowledge of the characteristic function  $S(z)$ also  uniquely determines  the probability measure $\mu(d\lambda)$ in \eqref{MMM} by the Stieltjes inversion formula.
\end{proof}

\section{Prime symmetric operators}\label{A1}

Recall that a  densely defined linear operator $\dot A$ in a Hilbert space $\cH$ is called symmetric if
$$
(\dot Ax, y)=(x, \dot Ay) \quad \text{for all}\quad x,y\in \Dom(\dot A).
$$

\begin{definition} A symmetric operator
 $\dot A$ is called  a prime operator
if there does not exist a (non-trivial) subspace invariant under $\dot A$
such that the restriction of $\dot A$ on this subspace is self-adjoint.
\end{definition}

If a symmetric operator is not prime, it is useful to separate its self-adjoint
part from its prime part.


\begin{theorem}\label{prostota0}
Let $\dot A$ be a
 closed symmetric operator with equal  deficiency indices
in a Hilbert space $\cH$.  Then the Hilbert space splits into the orthogonal sum of two subspaces
\begin{equation}\label{plm1}
\cH=\cK\oplus \cL,
\end{equation}
where $$
\cK=\overline{\Span_{\Im(z)\ne 0}\Ker ((\dot A)^*-zI)}
$$
and
$$
\cL=\bigcap_{\Im(z)\ne 0} \Ran(\dot A-zI).
$$

 Both of the subspaces  $\cK$ and $\cL$ reduce the symmetric operator $\dot A$. Moreover,
the part $\dot A|_\cL$ of $\dot A$ in $\cL$ is a self-adjoint operator
and the part $\dot A|_\cK$  of $\dot A$ in $\cK$ is a prime symmetric operator.

\end{theorem}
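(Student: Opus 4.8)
The plan is to first establish the orthogonal splitting $\cH=\cK\oplus\cL$, then prove that both pieces \emph{reduce} $\dot A$ (not merely a self-adjoint extension), and finally read off the self-adjoint/prime dichotomy from the very definitions of $\cL$ and $\cK$. The one genuine difficulty is the passage from reduction of an auxiliary self-adjoint extension to reduction of the smaller operator $\dot A$ itself.

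First I would record the standard facts that for $\Im(z)\neq 0$ the range $\Ran(\dot A-zI)$ is closed (since $\dot A$ is closed and $z$ is a point of regular type) and that $\Ran(\dot A-zI)^{\perp}=\Ker((\dot A)^{*}-\overline{z}I)$. Taking orthogonal complements in the intersection defining $\cL$ and letting $z$ run over $\bbC\setminus\bbR$ turns $\cL^{\perp}$ into the closed linear span of all deficiency subspaces, that is $\cL^{\perp}=\cK$, which gives $\cH=\cK\oplus\cL$ at once.

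The heart of the argument is to show $\cK$ reduces $\dot A$. Since the deficiency indices are equal, I fix a self-adjoint extension $A$ of $\dot A$ and generate the deficiency vectors by $g_{z}=(A-iI)(A-zI)^{-1}g_{i}$, where $g_{i}$ spans $\Ker((\dot A)^{*}-iI)$. Using the resolvent identity one computes $(A-zI)^{-1}g_{w}=(w-z)^{-1}(g_{w}-g_{z})$, so every resolvent $(A-zI)^{-1}$ leaves $\cK$ invariant; hence the orthogonal projection $P=P_{\cK}$ commutes with $A$, and $\cK,\cL$ reduce $A$. To upgrade this to $\dot A$ I would use the characterization $f\in\Dom(\dot A)$ iff $f\in\Dom(A)$ and $(A-iI)f\perp\Ker((\dot A)^{*}+iI)$, which follows from von Neumann's formula by the computation $(A-iI)(g_{+}+Vg_{+})=-2iVg_{+}$ for the parametrizing partial isometry $V$. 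Then for $f\in\Dom(\dot A)$ we have $Pf\in\Dom(A)$ and $(A-iI)Pf=P(A-iI)f$; because $\Ker((\dot A)^{*}+iI)\subseteq\cK$, $P$ acts as the identity there, so $P(A-iI)f$ remains orthogonal to $\Ker((\dot A)^{*}+iI)$, giving $Pf\in\Dom(\dot A)$ and $\dot A Pf=P\dot A f$. Thus $\cK$, and by the same reasoning $\cL$, reduces $\dot A$.

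The two structural claims then follow cheaply. Since $\cL$ reduces $\dot A$, the deficiency subspaces of $\dot A|_{\cL}$ computed inside $\cL$ are $\Ker((\dot A)^{*}-zI)\cap\cL$, and these vanish because $\Ker((\dot A)^{*}-zI)\subseteq\cK=\cL^{\perp}$; deficiency indices $(0,0)$ force $\dot A|_{\cL}$ to be self-adjoint. For primeness of $\dot A|_{\cK}$, suppose a subspace $\cK'\subseteq\cK$ reduces it with $\dot A|_{\cK'}$ self-adjoint; then $\Ran(\dot A|_{\cK'}-zI)=\cK'\subseteq\Ran(\dot A-zI)$ for every non-real $z$, whence $\cK'\subseteq\cL\cap\cK=\{0\}$, so no nontrivial self-adjoint part survives in $\cK$. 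I expect the main obstacle to be precisely the upgrade in the third paragraph: reduction of the self-adjoint extension $A$ does not transfer automatically to the domain-restricted operator $\dot A$, and it is the range/orthogonality characterization of $\Dom(\dot A)$ that makes the step go through.
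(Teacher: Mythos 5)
Your proposal is correct, and while its end-points (the splitting $\cH=\cK\oplus\cL$, obtained from $\Ran(\dot A-zI)^{\perp}=\Ker((\dot A)^{*}-\overline{z}I)$, and the primeness argument for $\dot A|_{\cK}$) coincide with the paper's, the central mechanism is genuinely different. The paper never proves that the orthogonal projection onto $\cK$ commutes with anything. Instead it establishes self-adjointness of $\dot A|_{\cL}$ by a range criterion: $\cL$ is $\dot A$-invariant; for every non-real $z$ and every $f\in\cL$ the element $g_z=(A-zI)^{-1}f\in\Dom(\dot A)$ is shown, by a direct inner-product computation exploiting $f\perp\cK$, to be orthogonal to every deficiency subspace, whence $(\dot A-zI)(\cL\cap\Dom(\dot A))=\cL$; a separate argument gives density of $\cL\cap\Dom(\dot A)$ in $\cL$; self-adjointness then follows because the restriction is a densely defined symmetric operator with $\Ran(\dot A|_{\cL}\pm iI)=\cL$. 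You instead prove reduction first --- resolvent invariance of $\cK$ for a fixed self-adjoint extension $A$, then the transfer down to $\dot A$ via the characterization $f\in\Dom(\dot A)$ iff $f\in\Dom(A)$ and $(A-iI)f\perp\Ker((\dot A)^{*}+iI)$ --- and then self-adjointness of $\dot A|_{\cL}$ comes for free from deficiency indices $(0,0)$. Your route isolates a reusable fact (the projection onto $\cK$ commutes with $A$ and with $\dot A$), replacing the paper's surjectivity-plus-density computation by the von Neumann domain lemma; the paper's route avoids that lemma and never needs to know that adjoints split along an orthogonal-sum decomposition, a fact your deficiency-index step tacitly uses.

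Two small points to tighten. First, the theorem allows arbitrary equal deficiency indices, whereas ``$g_i$ spans $\Ker((\dot A)^{*}-iI)$'' reads as indices $(1,1)$; let $g$ range over the whole deficiency subspace and use that $(A-iI)(A-zI)^{-1}$ maps $\Ker((\dot A)^{*}-iI)$ onto $\Ker((\dot A)^{*}-zI)$ --- the resolvent computation is unchanged. Second, for the deficiency-index step you should record that $\dot A|_{\cL}$ is densely defined in $\cL$ (immediate: the image of $\Dom(\dot A)$ under the orthogonal projection onto $\cL$ lies in $\Dom(\dot A)\cap\cL$ and is dense in $\cL$) and that reduction makes $(\dot A)^{*}$ split as the orthogonal sum of the adjoints of the two parts, so that $\Ker((\dot A|_{\cL})^{*}-zI)=\Ker((\dot A)^{*}-zI)\cap\cL$; both steps are routine given reduction, but they are exactly what makes ``indices $(0,0)$'' meaningful.
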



\begin{proof}
Since by hypothesis the operator
$\dot A$ is a closed symmetric operator, $\Ran(\dot A-zI)$,
$\Im (z)\ne 0$,
 is a closed subspace and hence
$\cL$  being the intersection of closed subspaces is a closed subspace itself.

Assume that $h\in \cL$ and hence
\begin{equation}\label{iadra1}
h\in \Ran(\dot A-zI)\quad \text{for   all } z\in \bbC\setminus \bbR.
\end{equation}
Since
\begin{equation}\label{iaki1}
\cH=\Ker((\dot A)^*-\overline{z}I)\oplus \Ran (\dot A-zI), \quad z\in \bbC\setminus \bbR,
\end{equation}
from \eqref{iadra1} and \eqref{iaki1} one concludes that $h$ is
 orthogonal to the subspace
$\Ker((\dot A)^*-zI)$ for any $z\in \bbC\setminus \bbR$.
Hence $h$ is orthogonal to $\cK$, which means that
\begin{equation}\label{eeff1}
\cL\subset \cK^\perp.
\end{equation}

Now, assume that an element $h$ is orthogonal to $\cK$.

 Since the linear set
$\cD=\Span_{z\in \bbC\setminus \bbR}\Ker ((\dot A)^*-\overline{z}I)$ is a dense subset in    $\cK$,
the element $h$ is orthogonal to $\Ker ((\dot A)^*-\overline{z}I)$ for any
$z\in \bbC\setminus \bbR$. Therefore, by \eqref{iaki1},
$$
h\in \Ran(\dot A-zI)\quad \text{for   all } z\in \bbC\setminus \bbR
$$
and hence
$
h\in \cL
$
which means that
\begin{equation}\label{ffee1}
 \cK^\perp\subset \cL.
\end{equation}
Combining \eqref{eeff1} and \eqref{ffee1}
completes the proof of \eqref{plm1}.

To prove the remaining assertion of the theorem,
 we show first that the subspace $\cL$ is $\dot A$-invariant.

Indeed, assume that  $f\in
\cL\cap \, \Dom (\dot A)$ and hence
\begin{equation}\label{rangi1}
f\in \Ran(\dot A-z I) \quad \text{ for all} \quad z\in \bbC\setminus \bbR.
\end{equation}
Then from \eqref{rangi1} follows that
$$
\dot A f=(\dot A-zI)f+zf\in \Ran (\dot A-zI) \text{ for all }
 \,\,\,z\in \bbC\setminus \bbR,
$$ and
hence $\dot A f\in \cL$ by the definition of the space $\cL$.

Next, we will show that
\begin{equation}\label{ssn}
(\dot A-zI)(\cL\cap \Dom(\dot A))
=\cL\quad \text{ for any } z\in \bbC\setminus \bbR.\
\end{equation}

To see this, take an $f\in \cL$. Then $f\in \Ran(\dot  A-zI)$ for any $z\in \bbC\setminus \bbR$,
and therefore, for any $z\in \bbC\setminus \bbR$
there exists an element $g_z\in \dom(\dot A)$ such that
$$
f=(\dot A-zI)g_z.
$$

We claim that  $$g_z\perp \Ker ((\dot A)^*-z  I)\quad \text{ for all
}\quad  z \in \bbC\setminus \bbR.
$$

Indeed, let $A$ be a(ny)
self-adjoint extension of $\dot A$ (recall that $\dot A$ has equal deficiency indices and therefore $\dot A$ admits
self-adjoint extensions).
Then $$f=(\dot A-zI)g_z=(  A-zI)g_z$$ and hence
\begin{equation}\label{sno}
g_z=(  A-zI)^{-1}f,\quad  z \in \bbC\setminus \bbR.
\end{equation}

Fix a $z$ with $\Im (z)\ne 0$.
For any element $f_\zeta$ such that
$$f_\zeta \in \Ker ((\dot A)^*-\zeta  I),
 \quad \zeta \in \bbC\setminus \bbR,$$
with $\zeta\ne \overline{z}$,
one gets that
\begin{align}
(g_z, f_\zeta)&=(( A-zI)^{-1}f, f_\zeta)=
(f,(  A-\overline{z}I)^{-1}f_\zeta)\nonumber
\\&=\frac{1}{z-\overline{\zeta}}\left ((f,
( A-\zeta I)(  A-\overline{z}I)^{-1}f_\zeta)
-(f, f_\zeta)\right )\label{nji}.
\end{align}

By assumption $ f\in \cL$
 and hence $f\perp \cK$ by the first part of the proof.
Since
$$(A-\zeta I)(
 A-\overline{z}I)^{-1}f_\zeta\in \Ker ((\dot A)^*-\overline{z}I)\subset \cK,$$
$$
f_\zeta\in \Ker ((\dot A)^*-\zeta I)\subset \cK,
$$ and $f\perp \cK$,
 from
\eqref{nji} follows that
$(g_z, f_\zeta)=0$, i.e.,
\begin{equation}\label{snov}
g_z\perp \Ker ((\dot A)^*-\zeta I),\quad \zeta\ne \overline{z},\quad
\Im(\zeta)\ne0.
\end{equation}

It remains to show that
\begin{equation}\label{snova1}
g_z\perp \Ker ((\dot A)^*-\overline{z} I).
\end{equation}

Indeed, for any
$f_{\overline{z}}
\in \Ker ((\dot A)^*-\overline{z} I)$
we have that
$$( A-\overline{z} I)
(  A-\zeta I)^{-1}f_{\overline{z}}\in \Ker ((\dot A)^*-\zeta I).$$
Therefore, by
\eqref{snov},
$$
(g_z, ( A-\overline{z} I)
( A-\zeta I)^{-1}f_{\overline{z}})=0, \quad \zeta\ne \overline{z},\quad
\Im(\zeta)\ne0.
$$
Hence,
$$
(g_z,f_{\overline{z}})=\lim_{\zeta \to \overline{z}}
(g_z, ( A-\overline{z} I)
( A-\zeta I)^{-1}f_{\overline{z}})=0, \quad \text{for  all} \quad f_{\overline{z}}
\in \Ker ((\dot A)^*-\overline{z} I),
$$
which proves \eqref{snova1}

From \eqref{plm1} follows that $ g_z\in \cL$ which justifies
\eqref{ssn} for  $g_z$ was chosen to be an element of $\dom(\dot A)$.

Denote by $B$ the restriction of $\dot A$ on
$$
\Dom (B)=\cL\cap \Dom(\dot A).
$$

Our next claim is that  $\Dom (B)$ is dense in $\cL$.

Indeed, let $f\in \cL$ and $f\perp \Dom(B)$, that is,
$$
(f, g)=0 \quad \text{for all } \quad g \in \Dom(B).
$$
From \eqref{ssn} follows that
 $f\in \Ran (B-iI)$ and hence $f=(B-iI)h$ for some
 $h\in \Dom (B)$. Thus, for all $g \in \Dom(B)$ one obtains that
\begin{equation}\label{ssnn}
(f, g)=((B-iI)h, g)=(h, (B+iI)g)=0.
\end{equation}
On the other hand, \eqref{ssn} yields
$$
\Ran(B+iI)=\cL
$$and therefore from \eqref{ssnn} follows that
 $h=0$ and hence $f=(B-iI)h=0$.

So, we have shown that the operator
  $B$ is a densely defined symmetric operator in the Hilbert space  $\cL$
such that  $\Ran (B\pm iI)=\cL$ which means that  $B$ is  self-adjoint.

Now suppose  that a subspace $\cK_0$ reduces $\dot A$ and that  the
 part $\dot A|_{\cK_0}$ is self-adjoint. Then
$$\Ran(\dot A|_{\cK_0}-zI_{\cK_0})=\cK_0, \quad z \in \bbC\setminus \bbR,$$
which means that $\cK_0\subset \cL$, proving that the part $\dot A$ in the subspace $\cK$
is a prime symmetric operator.

The proof is complete.

\end{proof}

\begin{remark}\label{primepart} In the situation of Theorem \ref{prostota0} the operator $\dot A|_{\cK}$ is called the {\it prime part} of the symmetric operator $\dot A$.
\end{remark}

Recall that an element $h\in \cH$ is said to be a generating element for a self-adjoint operator $H$ in the  Hilbert space $\cH$ if
$$
\overline{\text{span}_{\Im(z)\ne 0}\{(H-zI)^{-1}h\}}=\cH.
$$
We also say   that a self-adjoint operator has  {\it simple spectrum} if it has a generating element.

\begin{corollary}[cf. \cite{T87}]\label{den}
{\it  Let $\dot A$ be a
 closed symmetric operator with equal  deficiency indices
 in a Hilbert space $\cH$.

Then $\dot A$ is a prime operator if and only if
$$
\cH=\overline{\Span_{\Im(z)\ne 0}\Ker ((\dot A)^*-zI)}.
$$

 If, in addition,  $\dot A$ has deficiency indices $(1,1)$,
then $\dot A$ is a prime operator if and only if for
 any self-adjoint extension $A$
of  $\dot A$
a deficiency element $0\ne g_+\in\Ker ((\dot A)^*-iI) $ is generating, that is,
\begin{equation}\label{prosrez}
\cH=\overline{{\Span}_{\Im( z)\ne 0}(A-zI)^{-1}g_+}.
\end{equation}

In particular, in this case,   any self-adjoint extension of $\dot A$ has  simple spectrum.
}
\end{corollary}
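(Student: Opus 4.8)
The plan is to reduce everything to the canonical decomposition of Theorem \ref{prostota0}. Write $\cH = \cK \oplus \cL$ with $\cK = \overline{\Span_{\Im(z)\neq 0}\Ker((\dot A)^* - zI)}$ and $\cL = \bigcap_{\Im(z)\neq 0}\Ran(\dot A - zI)$, both reducing $\dot A$, with $\dot A|_\cL$ self-adjoint and $\dot A|_\cK$ prime. The first equivalence is then immediate: if $\cH = \cK$ then $\dot A = \dot A|_\cK$ is prime by Theorem \ref{prostota0}; conversely, if $\dot A$ is prime then $\cL$, being a reducing subspace on which $\dot A$ acts as a self-adjoint operator, must be trivial (otherwise it would contradict the definition of a prime operator), so $\cH = \cK$.

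For the deficiency index $(1,1)$ case, I would first recall that $\dim\Ker((\dot A)^* - zI) = 1$ for every $z\in\bbC\setminus\bbR$, so each deficiency subspace is spanned by a single vector $g_z$. Fix a self-adjoint extension $A$ and a nonzero $g_+ \in \Ker((\dot A)^* - iI)$. I would invoke the relation established in the proof of Lemma \ref{nudada} that $g_z = (A-iI)(A-zI)^{-1}g_+$ generates $\Ker((\dot A)^* - zI)$; rewriting it with the resolvent identity $(A-iI)(A-zI)^{-1} = I + (z-i)(A-zI)^{-1}$ yields the key formula $g_z = g_+ + (z-i)(A-zI)^{-1}g_+$. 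Let $\cH_{g_+}$ denote the cyclic subspace $\overline{\Span_{\Im(z)\neq 0}\{(A-zI)^{-1}g_+\}}$, so that \eqref{prosrez} asserts exactly $\cH_{g_+} = \cH$. The plan is to prove $\cH_{g_+} = \cK$, from which $g_+$ is generating for $A$ iff $\cK = \cH$ iff $\dot A$ is prime.

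The inclusion $\cH_{g_+} \subseteq \cK$ is easy: $g_+ = g_i \in \cK$, and the key formula gives $(A-zI)^{-1}g_+ = (z-i)^{-1}(g_z - g_+) \in \cK$ for $z\neq i$. For the reverse inclusion I would show $g_+ \in \cH_{g_+}$ via the strong limit $-iy(A-iyI)^{-1}g_+ \to g_+$ as $y\to+\infty$, which follows from the spectral calculus of the self-adjoint operator $A$; since each term lies in the closed subspace $\cH_{g_+}$, so does $g_+$, and then the key formula places every $g_z$ in $\cH_{g_+}$, giving $\cK \subseteq \cH_{g_+}$. Because this runs for every self-adjoint extension $A$ while the condition $\cK = \cH$ is independent of $A$, the generating property holds for one extension iff it holds for all, matching the quantified statement. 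Finally, the assertion that $g_+$ is generating for $A$ is, by definition, precisely the statement that $A$ has simple spectrum, so the concluding ``in particular'' is immediate. The main obstacle is the pair of dual span-equalities establishing $\cH_{g_+} = \cK$; within it the only genuinely subtle point is securing $g_+ \in \cH_{g_+}$, which requires the resolvent limit above rather than any finite linear combination of the vectors $(A-zI)^{-1}g_+$.
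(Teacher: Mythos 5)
Your proposal is correct. Both you and the paper reduce the first equivalence to Theorem \ref{prostota0}, and both rest on the same two computational facts: that $g_z=(A-iI)(A-zI)^{-1}g_+$ spans $\Ker((\dot A)^*-zI)$, and the first resolvent identity $(A-iI)(A-zI)^{-1}=I+(z-i)(A-zI)^{-1}$. Where you diverge is the architecture of the second part. The paper argues by duality on orthogonal complements: if \eqref{prosrez} fails, it takes $h\ne 0$ orthogonal to the resolvent orbit of $g_+$ and shows that $g=(A+iI)^{-1}h$ is orthogonal to every deficiency subspace, and conversely; this way it never needs to locate $g_+$ inside the cyclic subspace $\cH_{g_+}=\overline{\Span_{\Im(z)\ne 0}(A-zI)^{-1}g_+}$. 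You instead prove the sharper subspace identity $\cH_{g_+}=\cK$, which forces you to supply the one ingredient the paper's route avoids: the strong limit $-iy(A-iyI)^{-1}g_+\to g_+$ as $y\to+\infty$, needed to put $g_+$ itself into the closed cyclic subspace; you correctly flag this as the only subtle step, and it is indeed justified by the spectral theorem. Your version buys a cleaner intermediate statement --- the cyclic subspace of any nonzero deficiency vector under any self-adjoint extension is literally $\cK$ --- which makes the independence of the choice of $A$, i.e.\ the ``for any self-adjoint extension'' quantifier in the corollary, completely transparent; the paper's orthogonality argument is shorter but yields only the equivalence of the two density statements. One cosmetic point: your formula $(A-zI)^{-1}g_+=(z-i)^{-1}(g_z-g_+)$ covers $z\ne i$ only; the missing vector $(A-iI)^{-1}g_+$ lies in $\cK$ by continuity of the resolvent (or simply because omitting the single point $z=i$ does not change the closed span), a one-line patch.
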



\begin{proof}
The first assertion has already been proven in Theorem \ref{prostota0}.

To prove the remaining statement, one proceeds as follows.

Suppose that \eqref{prosrez} fails to hold and therefore the orthogonality condition
\begin{equation}\label{eqv1}
((A-zI)^{-1}g_+,h)=0\quad \text{for all} \quad z\in \bbC\setminus \bbR
\end{equation}
holds for some element $h\in \cH$, $h\ne 0$. Since
\begin{align*}
((A-zI)^{-1}g_+,h)&=((A-zI)^{-1}g_+,(A+iI)(A+iI)^{-1}h)
\\&=((A-iI)(A-zI)^{-1}g_+,(A+iI)^{-1}h),
\end{align*}
one concludes that
\begin{equation}\label{eqv2}
((A-iI)(A-zI)^{-1}g_+,g)=0\quad \text{for all} \quad z\in \bbC\setminus \bbR,
\end{equation}
where
$$
g=(A+iI)^{-1}h.
$$
Observing that
$$
(A-iI)(A-zI)^{-1}g_+\in \Ker ((\dot A)^*-zI),
$$
the hypothesis that $\dot A$ has deficiency indices $(1,1)$ yields the orthogonality condition
$$
g\perp \overline{\Span_{\Im( z)\ne 0}\Ker ((\dot A)^*-zI)},
$$
and  therefore $\dot A$ is not a prime operator by Theorem \ref{prostota0}.

Conversely, suppose that $\dot A$ is not a prime operator and therefore \eqref{eqv2} holds for some $0\ne g\in \cH$. In particular,
\begin{equation}\label{nulili}
(g_+, g)=0.
\end{equation}

Using the first resolvent identity,
$$
(A-zI)^{-1}-(A-iI)^{-1}=(z-i)(A-iI)^{-1}(A-zI)^{-1},
$$
one obtains
$$
(A-iI)(A-zI)^{-1}=I+(z-i)(A-zI)^{-1}.
$$
Therefore,
$$
((A-iI)(A-zI)^{-1}g_+, g)=(g_+, g)+((z-i)(A-zI)^{-1}g_+, g)
$$
for all $z\in \bbC\setminus \bbR$.
Using  \eqref{eqv2} and \eqref{nulili}, this equality  yields
$$
((A-zI)^{-1}g_+,g)=0\quad \text{for all} \quad z\in \bbC\setminus \bbR, \quad z\ne i,
$$ and therefore, by continuity,
$$
((A-zI)^{-1}g_+,g)=0\quad \text{for all} \quad z\in \bbC\setminus \bbR,
$$
which shows that
$$
\cH\ne \overline{{\Span}_{\Im( z)\ne 0}(A-zI)^{-1}g_+}.
$$
So, we have shown that $\dot A$ is a prime operator if and only if \eqref{prosrez} holds.

The proof is complete.
\end{proof}
We will also need a variant of the first part of  this corollary in case when
 $\dot A$ has deficiency indices $(0,1)$ or $(1,0)$.


\begin{lemma}\label{popopo}Let $\dot A$ be a
 symmetric operator with   deficiency indices $(0,1)$ in a Hilbert space $\cH$.
Then $\dot A$ is a prime operator if and only if
\begin{equation}\label{popo}
\cH=\overline{\Span_{\Im (z)<0}\Ker ((\dot A)^*-zI)}.
\end{equation}
\end{lemma}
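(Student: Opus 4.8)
The plan is to follow the strategy of Theorem~\ref{prostota0} and Corollary~\ref{den}, but to circumvent the fact that a symmetric operator with deficiency indices $(0,1)$ has \emph{no} self-adjoint extension, so the reference resolvent $(A-zI)^{-1}$ used there is unavailable. The substitute is the intrinsic resolvent: since the upper deficiency index vanishes, for every $w\in\bbC_-$ the operator $\dot A-wI$ is injective (because $\Im w\neq 0$) and onto (because $\Ker((\dot A)^*-\overline w I)=\{0\}$ for $\overline w\in\bbC_+$), so $(\dot A-wI)^{-1}$ is a bounded everywhere-defined operator on $\cH$. I would first set $\cL:=\cK^\perp$, where $\cK=\overline{\Span_{\Im z<0}\Ker((\dot A)^*-zI)}$, and record the identity $\cL=\bigcap_{\Im z\neq 0}\Ran(\dot A-zI)$. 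This follows from the decomposition $\cH=\Ran(\dot A-zI)\oplus\Ker((\dot A)^*-\overline z I)$ (valid and with closed ranges off the real axis), together with the vanishing $\Ker((\dot A)^*-zI)=\{0\}$ for $z\in\bbC_+$, which makes $\Ran(\dot A-wI)=\cH$ for $w\in\bbC_-$; hence the span over $\Im z<0$ already sees all deficiency subspaces and the intersection over the lower half-plane is harmless. The lemma then reduces to the equivalence: $\dot A$ is prime if and only if $\cL=\{0\}$.

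The computation that replaces the self-adjoint resolvent is the elementary identity $((\dot A-wI)g,f_\zeta)=(\overline\zeta-w)(g,f_\zeta)$ for $g\in\Dom(\dot A)$ and $f_\zeta\in\Ker((\dot A)^*-\zeta I)$, obtained by transferring $\dot A$ onto $f_\zeta$. Consequently, if $f\in\cL$ and $(\dot A-wI)g=f$, then $(\overline\zeta-w)(g,f_\zeta)=(f,f_\zeta)=0$, so $(g,f_\zeta)=0$ whenever $\overline\zeta\neq w$. I would invoke this twice with $B:=\dot A|_{\cL\cap\Dom(\dot A)}$ (the invariance $\dot A(\cL\cap\Dom(\dot A))\subseteq\cL$ being immediate from the range description). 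Taking $w=-i\in\bbC_-$ so that $g=(\dot A+iI)^{-1}f$ exists, the condition $\overline\zeta\neq w$ holds for \emph{all} $\zeta\in\bbC_-$, giving $g\in\cL$ and hence $\Ran(B+iI)=\cL$. Taking $w=i\in\bbC_+$, where $g$ is the unique preimage (it exists since $f\in\cL\subseteq\Ran(\dot A-iI)$ and is unique by injectivity), I obtain $g\perp\Ker((\dot A)^*-\zeta I)$ for every $\zeta\in\bbC_-\setminus\{-i\}$.

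The single excluded point $\zeta=-i$ is the one genuinely delicate step, and it is exactly where a self-adjoint extension would normally be used. I would close it by a continuity argument intrinsic to $\dot A$: fixing $0\neq u_0\in\Ker((\dot A)^*+iI)$ and setting $u_\zeta=u_0+(\zeta+i)(\dot A-\zeta I)^{-1}u_0$ for $\zeta$ near $-i$ (legitimate because $(\dot A-\zeta I)^{-1}$ exists throughout $\bbC_-$), a direct check gives $(\dot A)^*u_\zeta=\zeta u_\zeta$, so $u_\zeta$ spans the one-dimensional $\Ker((\dot A)^*-\zeta I)$, depends continuously on $\zeta$, and satisfies $u_{-i}=u_0$. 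Since $(g,u_\zeta)=0$ for $\zeta\neq -i$, letting $\zeta\to -i$ yields $(g,u_0)=0$, so $g\in\cL$ and $\Ran(B-iI)=\cL$. With $\Ran(B\pm iI)=\cL$ established, a density argument finishes that $B$ is self-adjoint on $\cL$: if $f\in\cL$ is orthogonal to $\Dom(B)$, write $f=(B+iI)h$ with $h\in\Dom(B)$; then for all $g\in\Dom(B)$ one has $0=(f,g)=(h,(B-iI)g)$, and since $(B-iI)g$ ranges over $\cL$ while $h\in\cL$, this forces $h=0$, hence $f=0$. Thus $B=\dot A|_\cL$ is self-adjoint, and $\cL\neq\{0\}$ makes $\dot A$ non-prime.

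For the converse implication I would argue directly, without needing the deficiency-index structure: if $\dot A$ is not prime there is a nontrivial invariant subspace $\cH_0$ on which $A_0=\dot A|_{\cH_0}$ is self-adjoint; since $A_0-zI$ maps $\Dom(A_0)$ onto $\cH_0$ for every $\Im z\neq 0$, each $f\in\cH_0$ lies in $\Ran(\dot A-zI)$ for all such $z$, whence $\cH_0\subseteq\bigcap_{\Im z\neq 0}\Ran(\dot A-zI)=\cL$ and $\cK=\cL^\perp\neq\cH$. Combining the two implications gives the stated equivalence. I expect the limiting step at $\zeta=-i$ to be the main obstacle, and the explicit analytic family $u_\zeta$ built from the intrinsic resolvent is the device that resolves it.
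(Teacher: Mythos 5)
Your proof is correct, and it takes a genuinely different route from the paper's. The paper proves the ``only if'' direction by invoking the Akhiezer--Glazman structure theorem: a prime symmetric operator with deficiency indices $(0,1)$ is unitarily equivalent to the differentiation operator on $L^2(\bbR_+)$ with the Dirichlet condition at the origin, whose deficiency vectors are the exponentials $e^{-izx}$, $z\in \bbC_-$, so that \eqref{popo} reduces to the uniqueness theorem for the Laplace transform; for the ``if'' direction the paper cites Theorem \ref{prostota0} together with the vanishing of the kernels in $\bbC_+$. You instead argue abstractly throughout: you re-derive the decomposition $\cH=\cK\oplus\cL$ with $\cL=\bigcap_{\Im z\neq 0}\Ran(\dot A-zI)$ and then extend the self-adjointness argument of Theorem \ref{prostota0} to the maximal symmetric case, replacing the resolvent of a (nonexistent) self-adjoint extension by the intrinsic resolvent $(\dot A-wI)^{-1}$, which is bounded and everywhere defined for $w\in\bbC_-$ precisely because the upper deficiency index vanishes. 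Your two applications of the identity $((\dot A-wI)g,f_\zeta)=(\overline\zeta-w)(g,f_\zeta)$ are sound, and the analytic family $u_\zeta=u_0+(\zeta+i)(\dot A-\zeta I)^{-1}u_0$ correctly closes the excluded point $\zeta=-i$: the computation $(\dot A)^*u_\zeta=\zeta u_\zeta$ and the bound $\|(\dot A-\zeta I)^{-1}\|\le|\Im\zeta|^{-1}$ both check out, and $u_\zeta\neq 0$ near $\zeta=-i$ by continuity. As to what each approach buys: the paper's argument is shorter but leans on the classification theorem and a concrete model; moreover, Theorem \ref{prostota0} is stated under the hypothesis of \emph{equal} deficiency indices, so the paper's citation of it in the ``if'' part is, strictly speaking, an over-reach (only those steps of its proof not requiring a self-adjoint extension are actually used), whereas your direct argument for that direction ($\cH_0\subseteq\cL$ for any invariant subspace with self-adjoint restriction) is self-contained and sidesteps the issue. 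Your method also proves more than the lemma asks: it yields the full analogue of the decomposition of Theorem \ref{prostota0} (prime part plus self-adjoint part) for maximal symmetric operators. One point you should make explicit, as the paper does: assume at the outset, without loss of generality, that $\dot A$ is closed, since your identity $\cL=\cK^{\perp}$ and the surjectivity of $\dot A-wI$ for $w\in\bbC_-$ both rely on the ranges $\Ran(\dot A-zI)$, $\Im z\neq 0$, being closed.
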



\begin{proof} {\it ``Only if ''} Part. Without loss one may assume that $\dot A$ is a closed operator.
It is well known (see,  \cite[Theorem 2, Ch.VIII, Sec. 104]{AkG}) that $\dot A$
is unitarily equivalent to the differentiation operator on the positive semi-axis
with the Dirichlet boundary condition at the origin. So,
 without loss of generality, one may assume that
$ \cH=L^2(\bbR_+)
$
and
\begin{equation}
(\dot Af)(x)=-\frac{1}{i} \frac{d}{dx}f(x) \quad \text{a. e.} \,\,x\in
\bbR_+
\end{equation}
on
$$
\Dom( \dot A )=
\left \{f\in W_2^1(\bbR_+),\,\,f(0)=0\right \}.
$$Clearly, the functions
 $$
h_z(x)=e^{-izx}, \quad x\in (0, \infty),
$$
generate the subspaces $\Ker ((\dot A)^*-zI)$, $\Im (z)<0$.

Now, if $f\in L^2((0,\infty))$
 is orthogonal to $h_z$ for all $z\in \bbC_-$, then
$$
\int_0^\infty e^{-izx}\overline{f(x)}dx=0\quad \text{ for all } z\in \bbC_-.
$$
In particular,
$$
H(s)=\int_0^\infty e^{-sx}\overline{f(x)}dx=0,  \quad s>0,
$$
and hence $f=0$ by the uniqueness theorem for the Laplace transform (see, e.g.,
\cite[Th. 5.5]{Doet}.
Therefore,
\eqref{popo} holds which  completes the proof.

{\it ``If ''} Part. Suppose that $\dot A$ is not a prime operator. Thereofre, by Theorem \ref{prostota0},
$$
\cH\ne\overline{\Span_{\Im(z)\ne 0}\Ker ((\dot A)^*-zI)}.
$$
Since $\dot A$ has deficiency indices $(0,1)$,
$$\overline{\Span_{\Im(z)\ne 0}\Ker ((\dot A)^*-zI)}=\overline{\Span_{\Im(z)< 0}\Ker ((\dot A)^*-zI)},
$$
and hence
$$
\cH\ne\overline{\Span_{\Im(z)< 0}\Ker ((\dot A)^*-zI)}.
$$

Therefore, \eqref{popo} fails to hold, which completes the proof of the  {\it ``If ''} Part.
\end{proof}

In a similar way one proves the following statement.


\begin{lemma}\label{radik}Let $\dot A$ be a
 symmetric operator with   deficiency indices $(1,0)$ in a Hilbert space $\cH$.
Then $\dot A$ is a prime operator if and only if
\begin{equation}\label{popo1}
\cH=\overline{\Span_{\Im (z)>0}
\Ker ((\dot A)^*-zI)}.
\end{equation}
\end{lemma}

\section{A functional model of a triple}

Recall  the notion of
 a {\it functional model} of a prime
 dissipative triple parameterized by the characteristic
function  \cite{MT-S}.

Given a contractive analytic map $S(z)$,
\begin{equation}\label{chchch}
S(z)=\frac{s(z)-\varkappa} {\overline{ \varkappa }\,s(z)-1}, \quad z\in \bbC_+,
\end{equation}
where $|\varkappa|<1$ and $s(z)$ is
an  analytic, contractive function in $\bbC_+$
satisfying the Liv\v{s}ic criterion \cite{L46} (also see \cite[Theorem 1.2]{MT-S}),
that is,
\begin{equation}\label{vsea0}
s(i)=0\quad \text{and}\quad \lim_{z\to \infty}
z(s(z)-e^{2i\alpha})=\infty \quad \text{for all} \quad  \alpha\in
[0, \pi),
\end{equation}
$$
0< \varepsilon \le \text{arg} (z)\le \pi -\varepsilon,
$$
introduce the function
\begin{equation}\label{s&M}
M(z)=\frac1i\cdot\frac{s(z)+1}{s(z)-1},\quad z\in \bbC_+.
\end{equation}
One observes that
$$
M(z)=\int_\bbR \left
(\frac{1}{\lambda-z}-\frac{\lambda}{1+\lambda^2}\right )
d\mu(\lambda), \quad z\in \bbC_+,
$$
for some infinite Borel measure,
\begin{equation}\label{infm}
\mu(\bbR)=\infty,
\end{equation}
such that
\begin{equation}\label{normmu}
\int_\bbR\frac{d\mu(\lambda)}{1+\lambda^2}=1.
\end{equation}

In the Hilbert space $L^2(\bbR;d\mu)$ introduce
 the    (self-adjoint)
operator  $\cB$  of multiplication   by  independent variable
 on
\begin{equation}\label{nacha1}
\Dom(\cB)=\left \{f\in \,L^2(\bbR;d\mu) \,\bigg | \, \int_\bbR
\lambda^2 | f(\lambda)|^2d \mu(\lambda)<\infty \right \}.
\end{equation} Denote by  $\dot \cB$  its
 restriction
on
\begin{equation}\label{nacha2}
\Dom(\dot \cB)=\left \{f\in \Dom(\cB)\, \bigg | \, \int_\bbR
f(\lambda)d \mu(\lambda) =0\right \},
\end{equation}
and let
 $\widehat \cB$ be   the dissipative quasi-selfadjoint extension  of the  symmetric operator  $\dot \cB$
 on
\begin{equation}\label{nacha3}
\Dom(\widehat \cB)=\dom (\dot \cB)\dot +\linspan\left
\{\,\frac{1}{\lambda -i}- \varkappa\frac{1}{\lambda +i}\right \},
\end{equation}
where the von Neumann parameter $\varkappa$ of  the triple
$(\dot \cB, \widehat \cB, \cB)$  is given by
$$\varkappa=S(i).
$$

Notice that in this case
\begin{equation}\label{kone}
\Dom(\cB)=
\dom (\dot \cB)\dot +\linspan\left
\{\,\frac{1}{\lambda -i}- \frac{1}{\lambda +i}\right \}.
\end{equation}

We will refer to the triple  $(\dot \cB,   \widehat \cB,\cB)$ as
{\it  the model
 triple } in the Hilbert space $L^2(\bbR;d\mu)$.

Let  $\dot A$ be  a densely defined symmetric operator with deficiency indices $(1,1)$,   $ A$ its self-adjoint (reference) extension and
$\widehat A$ a maximal non-selfadjoint dissipative extension of $\dot A$.

 Denote by    $S_\fA(z)=S_{(\dot A,\widehat  A, A)}(z)$ the characteristic function of the triple $\fA=(\dot A,\widehat  A, A)$
as  introduced in Section \ref{s3} by eq. \eqref{obr}.
 Notice that  the characteristic function $S_{\fB}(z)$ of the model triple $\fM=(\dot \cB, \widehat \cB, \cB)$
is given by \eqref{chchch}.

The following uniqueness result  shows that the characteristic function of a triple  $(\dot A,\widehat  A, A)$
 is a complete unitary invariant of
the  triple whenever  the symmetric operator
$\dot A$ is prime, equivalently, the dissipative operator $\widehat  A$ is completely non-selfadjoint.

\begin{theorem}[{\cite[Theorems 1.4, 4.1]{MT-S}}]\label{unitar}
Suppose that $\dot A$ and $\dot B$
 are prime, closed, densely defined  symmetric operators
with deficiency indices $(1,1)$. Assume, in addition, that
  $A$ and $B$ are some self-adjoint
extensions of $\dot A$ and $\dot B$ and that
$\widehat A$ and $\widehat B$ are maximal dissipative
extensions of $\dot A$ and $\dot B$, respectively
$(\widehat A\ne (\widehat A)^*$,
$\widehat B\ne (\widehat B)^*)$.

Then
\begin{itemize}
\item[(i)] the triples  $\fA=(\dot A,\widehat A, A)$ and $\fB=(\dot
B,\widehat B,  B)$
 are mutually unitarily equivalent\footnote{We say that triples  of operators
$(\dot A, \widehat A, A)$ and $(\dot B,\widehat B,  B)$
 in Hilbert spaces $\cH_A$ and $\cH_B$
are mutually unitarily equivalent if there is
a unitary map $\cU$ from $\cH_A$ onto $\cH_B$ such that
$\dot B=\cU\dot A\cU^{-1}$, $\widehat  B=\cU\widehat  A\cU^{-1}$, and  $ B=\cU A\cU^{-1}$.}
if, and only if, the characteristic functions   $S_\fA(z)=S_{(\dot A,\widehat  A, A)}(z)$
and $S_\fB(z)=S_{(\dot B, \widehat B, B)}(z)$ of the triples coincide;

\item[(ii)] the triple  $(\dot A,\widehat A,  A)$ is mutually unitarily
equivalent to the model triple
$$\fM=(\dot \cB, \widehat\cB,  \cB )$$
in
the Hilbert space $L^2(\bbR;d\mu)$, where $\mu(d\lambda)$ is the
representing measure for the Weyl-Titchmarsh function $M(z)=M_{(\dot A,
A)}(z)$ associated with the pair $(\dot A, A)$.
\end{itemize}

In particular,
\begin{itemize}\item[(iii)] the pairs $(\dot A, A)$ and $ (\dot B, B)$ are mutually unitary equivalent if and only if $M_{(\dot A, A)}(z)=M_{(\dot B, B)}(z)$.
\end{itemize}

\end{theorem}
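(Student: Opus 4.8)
The plan is to prove the central statement (ii) first — that every admissible triple is mutually unitarily equivalent to the model triple $(\dot\cB,\widehat\cB,\cB)$ built from its Weyl-Titchmarsh function — and then to obtain (i) and (iii) as essentially formal corollaries. This ordering is natural because the characteristic function, the Liv\v{s}ic function, the von Neumann parameter, and the representing measure $\mu(d\lambda)$ are all interconvertible through the explicit scalar relations \eqref{blog}, \eqref{charweyl}, \eqref{sac} and \eqref{obr}; once a canonical (model) form is available, comparing two triples collapses to comparing a single scalar function.

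First I would record that, since $\dot A$ is prime with deficiency indices $(1,1)$ and $A$ is a self-adjoint reference extension, Corollary \ref{den} guarantees that a deficiency element $g_+\in\Ker((\dot A)^*-iI)$ is a generating vector for $A$; hence $A$ has simple spectrum. The spectral theorem then produces a unitary map $\cU:\cH\to L^2(\bbR;d\mu)$ carrying $A$ onto the multiplication operator $\cB$ of \eqref{nacha1}, where $\mu(d\lambda)$ is the measure representing $M_{(\dot A,A)}(z)$ as in \eqref{defi}. The normalization \eqref{normmu} is automatic, and $\mu(\bbR)=\infty$ because $g_+$ is a unit vector that cannot lie in $\Dom(A)$ (otherwise $Ag_+=ig_+$ would force a nonreal eigenvalue).

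The decisive step is to track the deficiency elements through $\cU$. I would show that $\cU g_\pm=\Theta_\pm(\lambda\mp i)^{-1}$ for unimodular constants $\Theta_\pm$ (the content of Remark \ref{snoska}), and then use the two distinguished constraints \eqref{rss} and \eqref{parpar} — that $g_+-g_-\in\Dom(A)$ and $g_+-\varkappa g_-\in\Dom(\widehat A)$ — to fix the relative phase so that the images obey \eqref{kone} and \eqref{nacha3} with the same $\varkappa=S(i)$. Once the deficiency subspaces together with these two vectors are matched, the domain descriptions \eqref{nacha1}--\eqref{nacha3} force $\cU\dot A\cU^{-1}=\dot\cB$ and $\cU\widehat A\cU^{-1}=\widehat\cB$, completing (ii).

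With (ii) in hand, (i) and (iii) follow quickly. For the ``only if'' direction of (i), a unitary equivalence of triples manifestly preserves resolvents and the full deficiency data, hence the characteristic function. For the ``if'' direction, equality $S_\fA=S_\fB$ yields $\varkappa_\fA=\varkappa_\fB$ by \eqref{sac}, then $s_\fA=s_\fB$ by \eqref{obr}, then $M_\fA=M_\fB$ by \eqref{blog}; so the two triples share one representing measure and are each equivalent to the same model triple, hence to one another. Statement (iii) is the specialization in which $\widehat A$ is discarded, the pair $(\dot A,A)$ being modeled by $(\dot\cB,\cB)$ through $\mu$ alone. The main obstacle I anticipate is the phase bookkeeping in the decisive step: a \emph{single} unitary must simultaneously intertwine all three operators, which requires that the images of $g_+$ and $g_-$ be pinned down not merely up to independent phases but with the one relative phase dictated by \eqref{rss}, and that this same map then respect the dissipative domain \eqref{nacha3}. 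Establishing that compatibility, rather than the spectral representation itself, is where the real work lies.
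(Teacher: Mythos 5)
Your proposal is correct and follows exactly the route of the proof the paper relies on (the theorem is imported from \cite{MT-S} rather than reproved here, but the supporting apparatus is all in this paper): the spectral representation of $A$ with the generating deficiency vector $g_+$ supplied by Corollary \ref{den}, the identification $d\mu(\lambda)=(1+\lambda^2)\,(E_A(d\lambda)g_+,g_+)$ with $\mu(\bbR)=\infty$, the phase-fixing of $\cU g_\pm$ via \eqref{rss} and the infiniteness of $\mu$ as in Remark \ref{snoska}, and then the domain descriptions \eqref{nacha1}--\eqref{nacha3} together with quasi-selfadjointness forcing $\cU\dot A\cU^{-1}=\dot\cB$ and $\cU\widehat A\cU^{-1}=\widehat\cB$, after which (i) and (iii) follow from the scalar dictionary \eqref{blog}, \eqref{sac}, \eqref{obr}. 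You also correctly single out the one genuinely delicate point — that a single unitary must respect all three domains, which is precisely what the relative-phase argument settles.
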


\begin{remark}
\label{snoska} Notice that in view of \eqref{kone}, if $\cU$ is the unitary operator from the Hilbert space $\cH$ onto $L^2(\bbR;d\mu)$ that implements  mutual unitary equivalence of the triples
 $(\dot A,\widehat A,  A)$ and  $(\dot \cB, \widehat\cB,  \cB )$, then
 $$
 (\cU g_\pm )(\lambda)=\frac{\Theta}{\lambda\mp i}\quad \text{for some}\quad |\Theta|=1,
 $$
 provided that $g_\pm \in \Ker ((\dot A)^*\mp iI)$ are normalized deficiency elements of $\dot A$ such that
 \begin{equation}\label{aassdd}
 g_+-g_-\in \Dom (A).
 \end{equation}
 Indeed,
for the normalized deficiency elements  $g_\pm$ and
$\cU g_\pm$ of the symmetric  operators $\dot A$ and $\dot \cB$, respectively, we have
 $$
 (\cU g_\pm )(\lambda)=\frac{\Theta_\pm}{\lambda\mp i}\quad \text{for some}\quad |\Theta_\pm|=1.
 $$
Since $\cB=\cU A \cU^{-1}$, from \eqref{aassdd} it follows that the function $h=\cU(g_+-g_-)$ given by
 $$
 h(\lambda)=\frac{\Theta_+}{\lambda- i}-\frac{\Theta_-}{\lambda+ i}, \quad \lambda\in \bbR,
 $$
 belongs to $\dom(\cB)=L^2(\bbR;(1+\lambda^2)d\mu(\lambda))$ and therefore
 $$
 \int_\bbR\Big|\frac{\Theta_+}{\lambda- i}-\frac{\Theta_-}{\lambda+ i}\Big|^2(1+\lambda^2)d\mu(\lambda)<\infty.
 $$
 Taking into account that  by \eqref{infm} the measure $\mu(d\lambda)$ is infinite, one necessarily gets that $\Theta_+=\Theta_-$ and the claim follows.
\end{remark}

\section{The spectral analysis of the model dissipative operator }


In the suggested functional model for a triple in  $L^2(\bbR;d\mu)$,
the eigenfunctions of the model dissipative operator
 $\widehat \cB$ from the model  triple $(\dot \cB, \widehat \cB,
 \cB)$ given by \eqref{nacha1}-\eqref{nacha3}
look exceptionally simple \cite{MT-S}.

\begin{lemma}\label{specpoint} Suppose that
$\fM=(\dot \cB, \widehat \cB, \cB)$ is the model triple  in $L^2(\bbR;d\mu)$ given by \eqref{nacha1}-\eqref{nacha3}.
 Then a point $z_0\in \bbC_+$ is an eigenvalue
of the dissipative operator  $\widehat \cB$ if
and only if

$$S_{\fM}(z_0)=S_{(\dot \cB, \widehat \cB, \cB)}(z_0)=0.$$

 In
this case, the corresponding eigenfunction $f$ is of the form
$$
f(\lambda)=\frac{1}{\lambda-z_0}\quad
 \text{for $\mu$-almost all  } \lambda\in \bbR.
$$
\end{lemma}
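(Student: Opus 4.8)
The plan is to work directly in the model Hilbert space $L^2(\bbR;d\mu)$ with the explicit description of $\Dom(\widehat\cB)$ given by \eqref{nacha3}. First I would write out what it means for $f\in\Dom(\widehat\cB)$ to be an eigenfunction: by \eqref{nacha3} any such $f$ has the form
$$
f(\lambda)=f_0(\lambda)+c\left(\frac{1}{\lambda-i}-\varkappa\frac{1}{\lambda+i}\right),
$$
where $f_0\in\Dom(\dot\cB)$, $c\in\bbC$, and $\varkappa=S_\fM(i)$ is the von Neumann parameter of the model triple. The operator $\widehat\cB$ acts as multiplication by $\lambda$ on this domain, so the eigenvalue equation $\widehat\cB f=z_0 f$ becomes $(\lambda-z_0)f(\lambda)=0$ for $\mu$-a.e.\ $\lambda$. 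Since $z_0\in\bbC_+$ and the spectral measure $\mu$ is supported on the real axis, the factor $\lambda-z_0$ never vanishes on $\supp\mu$, which forces the candidate eigenfunction to be $f(\lambda)=\frac{1}{\lambda-z_0}$ up to a scalar — the form asserted in the lemma. The real content is then to decide \emph{when} this particular function actually lies in $\Dom(\widehat\cB)$.

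**Membership in the domain via the boundary functional.**

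The key step is to test the candidate $f(\lambda)=\frac{1}{\lambda-z_0}$ against the domain description. I would use the same decomposition technique that appears in the proof of Proposition \ref{NeuLif}: writing $\frac{1}{\lambda-z_0}$ in the form $\frac{\lambda}{\lambda^2+1}+M(z_0)\frac{1}{\lambda^2+1}+h(\lambda)$ with $h\in\Dom(\dot\cB)$, using \eqref{defi} and the normalization \eqref{normmu}. This re-expresses $f$ in terms of the basis elements $\frac{1}{\lambda-i}$, $\frac{1}{\lambda+i}$ modulo $\Dom(\dot\cB)$. Then $f\in\Dom(\widehat\cB)$ precisely when the coefficient of $\frac{1}{\lambda+i}$ matches the constraint imposed by \eqref{nacha3}, i.e.\ the ratio of the $\frac{1}{\lambda+i}$-coefficient to the $\frac{1}{\lambda-i}$-coefficient must equal $-\varkappa$ (compare the computation in Proposition \ref{NeuLif}, where the analogous constraint produced $-\varkappa=\frac{(1-\alpha)+i(M(z)-\alpha M(\overline z))}{(1-\alpha)-i(M(z)-\alpha M(\overline z))}$). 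Unwinding this condition and using the relation \eqref{charweyl} between $S$ and $M$, the membership requirement collapses exactly to $S_\fM(z_0)=0$.

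**Assembling the equivalence.**

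Finally I would combine the two directions. If $S_\fM(z_0)=0$, the computation above shows $\frac{1}{\lambda-z_0}\in\Dom(\widehat\cB)$, and since $\widehat\cB$ acts as multiplication by $\lambda$ it is immediate that this function is an eigenfunction with eigenvalue $z_0$; in particular $z_0\in\bbC_+$ is an eigenvalue. Conversely, any eigenfunction for an eigenvalue $z_0\in\bbC_+$ must be (a multiple of) $\frac{1}{\lambda-z_0}$ by the forcing argument of the first paragraph, and its membership in $\Dom(\widehat\cB)$ forces $S_\fM(z_0)=0$ by the boundary-functional computation. The main obstacle I anticipate is bookkeeping in the second paragraph: correctly identifying the $\frac{1}{\lambda\mp i}$-coefficients of $\frac{1}{\lambda-z_0}$ after subtracting off a $\Dom(\dot\cB)$-component, and verifying that the resulting algebraic condition is genuinely $S_\fM(z_0)=0$ rather than some conjugated or reciprocal variant. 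This is precisely the kind of calculation carried out in Proposition \ref{NeuLif}, so I would model the argument on that proof and reuse \eqref{charweyl} and \eqref{sac} to pin down the identification.
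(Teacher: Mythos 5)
Your reduction in the first paragraph rests on a false claim, and this is a genuine gap. The operator $\widehat\cB$ does \emph{not} act as multiplication by $\lambda$ on all of $\Dom(\widehat\cB)$: multiplication by the real variable $\lambda$ is a symmetric operator, so if $\widehat\cB$ acted that way it could not be a non-selfadjoint dissipative operator and would have no eigenvalues in $\bbC_+$ at all. Concretely, on the defect part of the domain $\widehat\cB$ acts through $(\dot \cB)^*$, and $(\dot \cB)^*g_z=zg_z$ for $g_z(\lambda)=\frac{1}{\lambda-z}$, whereas pointwise $\lambda\cdot g_z(\lambda)=1+zg_z(\lambda)$; the constant $1$ is not even in $L^2(\bbR;d\mu)$ because $\mu(\bbR)=\infty$ (see \eqref{infm}). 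Thus for $f=f_0+K\bigl(\frac{1}{\lambda-i}-\varkappa\frac{1}{\lambda+i}\bigr)$, $f_0\in\Dom(\dot\cB)$, one has $(\widehat\cB f)(\lambda)=\lambda f(\lambda)-K(1-\varkappa)$ pointwise, so the eigenvalue equation is \emph{not} $(\lambda-z_0)f=0$. Moreover, even granting that equation, your inference is a non sequitur: since $\lambda-z_0\neq0$ on $\supp\mu\subset\bbR$, it would force $f=0$ $\mu$-a.e., not $f=\frac{1}{\lambda-z_0}$. The correct forcing argument is quasi-selfadjointness: $\widehat\cB\subset(\dot\cB)^*$, so any eigenfunction of $\widehat\cB$ at $z_0\in\bbC_+$ lies in $\Ker((\dot\cB)^*-z_0I)$, which is one-dimensional (deficiency indices $(1,1)$) and spanned by $g_{z_0}(\lambda)=\frac{1}{\lambda-z_0}$ (cf. \eqref{defelrep}); conversely, if $g_{z_0}\in\Dom(\widehat\cB)$ then $\widehat\cB g_{z_0}=(\dot\cB)^*g_{z_0}=z_0g_{z_0}$, so $z_0$ is an eigenvalue.

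Once that step is repaired, your second and third paragraphs do give a valid proof, by a route organized somewhat differently from the paper's. You reduce everything to the membership question $g_{z_0}\in\Dom(\widehat\cB)$ and settle it by the decomposition $\frac{1}{\lambda-z_0}=\frac{\lambda+M(z_0)}{\lambda^2+1}+h$ with $h\in\Dom(\dot\cB)$ (which uses \eqref{defi} and \eqref{normmu}), then match the coefficients of $\frac{1}{\lambda\mp i}$ against the ratio $-\varkappa$ prescribed by \eqref{nacha3}; this is exactly the case $\alpha=0$ of the computation in Proposition \ref{NeuLif}, and it yields $M(z_0)=i\,\frac{1+\varkappa}{1-\varkappa}$, equivalently $s_{(\dot\cB,\cB)}(z_0)=\varkappa$, equivalently $S_\fM(z_0)=0$ by \eqref{charweyl}. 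The paper instead starts from an arbitrary eigenfunction written as $f_0+K(g_+-\varkappa g_-)$, applies $\widehat\cB-z_0I$, solves for $f_0$, and imposes the constraint $\int_\bbR f_0\,d\mu=0$ from \eqref{nacha2} — the same normalization data, deployed in the reverse order. So your plan for the algebraic core is sound; only the first paragraph needs to be replaced by the kernel-inclusion argument above.
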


\begin{proof} Suppose that $z_0\in \bbC_+$ is an eigenvalue
of $\widehat \cB$ and
 $f\in L^2(\bbR;d\mu)$ is  the corresponding eigenvector,
$$
\widehat \cB f=z_0 f, \quad f\in\Dom( \widehat \cB ).$$ Since
$f\in\Dom( \widehat \cB)$, the element $f$ admits the
representation
$$
f(\lambda)=f_0(\lambda)+K\left (\frac{1}{\lambda-i}-\varkappa
\frac{1}{\lambda+i}\right ),
$$
where $f_0\in \Dom (\dot \cB)$, $K$ is some constant and $\varkappa= S_\fB (i)$.
Then
$$
0=((\widehat \cB-z_0I)f)(\lambda)=(\lambda-z_0)f_0(\lambda)+K\left
(\frac{i-z_0}{\lambda-i}+\varkappa \frac{i+z_0}{\lambda+i}\right ).
$$
Hence,
$$
f_0(\lambda)=-\frac{K}{\lambda-z_0}\left
(\frac{i-z_0}{\lambda-i}+\varkappa \frac{i+z_0}{\lambda+i}\right ) .$$
Since $ f_0\in \Dom (\dot \cB)$, one obtains that
$$
\int_\bbR f_0(\lambda) d\mu(\lambda)=0
$$
and hence
\begin{align*}
0&=\int_\bbR\frac{1}{\lambda-z_0}\left
(\frac{i-z_0}{\lambda-i}+\varkappa \frac{i+z_0}{\lambda+i}\right )
d\mu(\lambda)
\\&
\\&=-\int_\bbR\left (\frac{1}{\lambda-z_0}-\frac{1}{\lambda-i}\right )
d\mu(\lambda)
+\varkappa \int_\bbR\left
(\frac{1}{\lambda-z_0}-\frac{1}{\lambda+i}\right )d\mu(\lambda)
\\&
\\&=
-M(z_0)+M(i)+\varkappa( M(z_0)-M(-i))
\\&
\\&=-M(z_0)+i+\varkappa( M(z_0)+i)
,\end{align*}
where $M(z)=M_{(\dot \cB,
\cB)}(z)$ is the Weyl-Titchmarsh function associated with the model pair
 $(\dot \cB, \cB)$.

Therefore,
$$
\varkappa =\frac{M(z_0)-i}{M(z_0)+i}=s_{(\dot \cB, \cB)}(z_0),
$$
where  $s_{(\dot \cB, \cB)}(z)$ is the Liv\v{s}ic function associated with the pair  $(\dot \cB, \cB)$.
Hence,  the characteristic  function $S_\fB(z)$
vanishes at the point $z_0$,
$$
S_{\fB}(z_0)=\frac{s_{(\dot \cB,
\cB)}(z_0)-\varkappa} {\overline{\varkappa}s_{(\dot \cB,
\cB)}(z_0)-1}=0.
$$
In this case,
\begin{align*}
f(\lambda)&=K\left [\left (\frac{1}{\lambda-i}-\varkappa
\frac{1}{\lambda+i}\right ) -\frac{1}{\lambda-z_0}
\left (\frac{i-z_0}{\lambda-i}+\varkappa \frac{i+z_0}{\lambda+i}\right )
\right ]\\& =K\left [\frac{1}{\lambda-i}\left  (1-\frac{i-z_0}{\lambda-z_0}\right )
-\varkappa\frac{1}{\lambda+i}\left (1+\frac{i+z_0}{\lambda-z_0}\right ) \right ]
\\&
=K(1-\varkappa)\cdot \frac{1}{\lambda-z_0}.
\end{align*}
So, we have shown that if $ z_0$ is an eigenvalue of
$\widehat \cB$,  then
$$
S_\fM(z_0)=0,
$$
with the corresponding eigenelement $f$  of the form
\begin{equation}\label{eigenf}
f(\lambda)=\frac{1}{\lambda-z_0}.
\end{equation}

 Repeating the same reasoning in the reversed order, one shows that
if $S_\fM(z_0)=0$, then the function
$f$ given by \eqref{eigenf}
belongs to $\Dom (\widehat \cB)$ and $\widehat \cB f=z_0f $.
\end{proof}

For the resolvents of the model dissipative operator $\widehat\cB$
and the self-adjoint (reference)
operator $\cB$ from the model  triple $\fM=(\dot \cB, \widehat \cB, \cB)$
 one gets
 the following resolvent formula \cite{MT-S}.
\begin{theorem}\label{modeldef} Suppose that
$\fM=(\dot \cB, \widehat \cB, \cB)$ is the model triple in the Hilbert space
$L^2(\bbR;d\mu) $ given by \eqref{nacha1}-\eqref{nacha3}.

 Then the
resolvent
 of the model dissipative operator $\widehat \cB$  in
$L^2(\bbR;d\mu) $ has the form
\begin{equation}\label{rezfor1}
(\widehat \cB- zI )^{-1}=(\cB- zI )^{-1}-p(z)(\cdot\, ,
g_{\overline{z}})g_z ,
\end{equation}
 with
\begin{equation}\label{rezfor12}
p(z)=\left (M_{(\dot \cB,
\cB)}(z)+i\frac{\varkappa+1}{\varkappa-1}\right )^{-1},
\end{equation}
$$z\in\rho(\widehat \cB)\cap \rho(\cB).
$$
Here $M_{(\dot \cB,
\cB)}(z)$ is the Weyl-Titchmarsh function associated with the pair
 $(\dot \cB, \cB)$ continued to the lower half-plane by the Schwarz reflection
principle, $\varkappa$ is the von Neumann parameter of the  triple $\fM$,
and the deficiency elements $g_z$,
$$
g_z\in \Ker ((\dot \cB)^*-zI), \quad z\in \bbC\setminus \bbR,
$$
are given by
\begin{equation}\label{defelrep}
g_z(\lambda)=\frac{1}{\lambda-z} \quad
\text{for $\mu$-almost all  } \lambda\in \bbR.
\end{equation}

\end{theorem}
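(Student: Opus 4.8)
The plan is to establish this Krein-type resolvent formula in two stages: first fix the rank-one structure of the difference $(\widehat\cB-zI)^{-1}-(\cB-zI)^{-1}$, and then extract the scalar coefficient $p(z)$ from the boundary condition describing $\Dom(\widehat\cB)$. Before either stage I would check that $g_z(\lambda)=(\lambda-z)^{-1}$ is genuinely a deficiency element, i.e. $g_z\in\Ker((\dot\cB)^*-zI)$ for $z\in\bbC\setminus\bbR$ (and more generally for $z\in\rho(\cB)$). Since $\dot\cB$ is the restriction of the multiplication operator $\cB$ to $\{f:\int f\,d\mu=0\}$, a short computation identifies $\Dom((\dot\cB)^*)=\{h\in L^2(\bbR;d\mu):\lambda h-c\,\mathbb{1}\in L^2\text{ for some }c\in\bbC\}$ with $(\dot\cB)^*h=\lambda h-c\,\mathbb{1}$; because $\mu$ is infinite by \eqref{infm} the constant $\mathbb{1}\notin L^2$, and the identity $\lambda g_z=\mathbb{1}+z g_z$ then gives $(\dot\cB)^*g_z=z g_z$. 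This is exactly where the infiniteness of $\mu$ enters, and it simultaneously justifies \eqref{defelrep}.

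Next, for $z\in\rho(\widehat\cB)\cap\rho(\cB)$ the difference $R(z)=(\widehat\cB-zI)^{-1}-(\cB-zI)^{-1}$ maps $L^2(\bbR;d\mu)$ into $\Ker((\dot\cB)^*-zI)=\Span\{g_z\}$, since both resolvents solve $((\dot\cB)^*-zI)u=\phi$ and their difference solves the homogeneous equation. Hence $R(z)\phi=c(\phi)\,g_z$ with $c$ a bounded functional, so $c(\phi)=(\phi,w_z)$ for some $w_z$. A duality argument, using $R(z)^*=(\widehat\cB^*-\overline{z}I)^{-1}-(\cB-\overline{z}I)^{-1}$ together with the fact that $\widehat\cB^*$ is the quasi-selfadjoint extension of $\dot\cB$ with von Neumann parameter $\overline{\varkappa}$ (so that $R(z)^*$ lands in $\Span\{g_{\overline{z}}\}$), forces $w_z$ to be a scalar multiple of $g_{\overline{z}}$. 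This yields the desired form $R(z)\phi=-p(z)(\phi,g_{\overline{z}})\,g_z$ with a single scalar $p(z)$.

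Finally I would compute $p(z)$ by imposing that $\psi:=(\widehat\cB-zI)^{-1}\phi$ lie in $\Dom(\widehat\cB)$. In the von Neumann decomposition $\Dom((\dot\cB)^*)=\Dom(\dot\cB)\dot+\Span\{g_i\}\dot+\Span\{g_{-i}\}$, membership in $\Dom(\widehat\cB)=\Dom(\dot\cB)\dot+\Span\{g_i-\varkappa g_{-i}\}$ is equivalent to the relation $B=-\varkappa A$ between the $g_i$- and $g_{-i}$-coefficients $A,B$ of $\psi$. I would read off these coefficients from (i) the decomposition of $g_z$, which, exactly as in the proof of Proposition \ref{NeuLif}, is $g_z=\tfrac12(1-iM(z))g_i+\tfrac12(1+iM(z))g_{-i}+(\text{element of }\Dom(\dot\cB))$ with $M=M_{(\dot\cB,\cB)}$, and (ii) the decomposition of $(\cB-zI)^{-1}\phi\in\Dom(\cB)=\Dom(\dot\cB)\dot+\Span\{g_i-g_{-i}\}$ from \eqref{kone}, whose coefficient $a$ of $g_i-g_{-i}$ satisfies $2ia=\int(\lambda-z)^{-1}\phi\,d\mu=(\phi,g_{\overline{z}})$ thanks to \eqref{normmu}. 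The decisive point is that this same functional $(\phi,g_{\overline{z}})$ appears both in the correction term and in $a$; substituting into $B=-\varkappa A$ makes $(\phi,g_{\overline{z}})$ cancel, leaving the scalar equation $p(z)=(\varkappa-1)\big/\big(2i(\beta+\varkappa\alpha)\big)$ with $\alpha,\beta$ as above. Simplifying, $2i(\beta+\varkappa\alpha)=i(1+\varkappa)-M(z)(1-\varkappa)=-(1-\varkappa)\big(M(z)-i\tfrac{1+\varkappa}{1-\varkappa}\big)$, whence $p(z)=\big(M(z)+i\tfrac{\varkappa+1}{\varkappa-1}\big)^{-1}$, as claimed. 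For $z\in\bbC_-$ the same computation applies verbatim once $M$ is continued by the Schwarz reflection principle and $g_z$ is taken from \eqref{defelrep}.

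I expect the main obstacle to be the bookkeeping in this last step: correctly extracting $a,A,B$ while tracking the non-$L^2$ constant $\mathbb{1}$, and then confirming that the $\phi$-dependence cancels so that a genuine $\phi$-independent scalar $p(z)$ emerges and collapses to the stated rational expression; the structural identity $2ia=(\phi,g_{\overline{z}})$ is the crux that makes this cancellation possible.
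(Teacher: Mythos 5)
Your proposal is correct, but it reaches \eqref{rezfor1}--\eqref{rezfor12} by a genuinely different route than the paper. The paper solves the resolvent equation directly: it writes the unknown $f=(\widehat\cB-zI)^{-1}h$ via \eqref{nacha3} as $f=f_0+K\bigl(\tfrac{1}{\lambda-i}-\varkappa\tfrac{1}{\lambda+i}\bigr)$ with $f_0\in\Dom(\dot\cB)$, applies $\widehat\cB-zI$, and then uses the constraint $\int_\bbR f_0\,d\mu=0$ to solve for $K$, the key integral being $\int_\bbR\tfrac{1}{\lambda-z}\bigl(\tfrac{i-z}{\lambda-i}+\varkappa\tfrac{i+z}{\lambda+i}\bigr)d\mu=i-M(z)+\varkappa(M(z)+i)$; the rank-one structure and the value of $p(z)$ drop out of this single computation. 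You instead establish the Krein form $R(z)=-p(z)(\cdot,g_{\overline z})g_z$ a priori (both resolvents are right inverses of $(\dot\cB)^*-zI$, so their difference lands in the one-dimensional kernel $\Span\{g_z\}$, and the duality step identifies $w_z\in\Span\{g_{\overline z}\}$), and only afterwards pin down the scalar from the domain condition $B=-\varkappa A$, using the $M(z)$-coefficients of $g_z$ and the identity $2ia=(\phi,g_{\overline z})$; your simplification then reproduces \eqref{rezfor12} exactly, since $i\tfrac{\varkappa+1}{\varkappa-1}=-i\tfrac{1+\varkappa}{1-\varkappa}$. Both arguments ultimately rest on the same two ingredients — the characterization $\int(\cdot)\,d\mu=0$ of $\Dom(\dot\cB)$ and the parametrization \eqref{nacha3} of $\Dom(\widehat\cB)$ — but your organization makes it transparent that the rank-one form is forced by the deficiency index alone and shows separately where $\varkappa$ and $M$ enter, at the cost of needing the description of $\Dom((\dot\cB)^*)$ and the duality step, which the paper's one-shot computation avoids. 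Two minor points: your parenthetical that $\widehat\cB^*$ has von Neumann parameter $\overline\varkappa$ is imprecise in the paper's convention (one has $g_{-i}-\overline\varkappa\, g_{i}\in\Dom(\widehat\cB^*)$, with the roles of $g_{\pm i}$ swapped), but this is harmless since your argument only uses $\dot\cB\subset\widehat\cB^*\subset(\dot\cB)^*$; and the cancellation of $(\phi,g_{\overline z})$ at the end should be carried out for one fixed $\phi$ with $(\phi,g_{\overline z})\neq 0$, which suffices precisely because your Stage 1 already shows $p(z)$ is $\phi$-independent.
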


\begin{proof} Given $h\in L^2(\bbR; d\mu)$ and $z\in \rho(\widehat \cB)$,
 suppose that
\begin{equation}\label{jkl}
(\widehat \cB- zI)f=h \quad \text{ for  some } f\in \Dom(\widehat
\cB).
\end{equation}
Since $f\in \Dom(\widehat \cB)$, one gets the representation
\begin{equation}\label{eqff}
f(\lambda)=f_0(\lambda)+K\left (\frac{1}{\lambda-i}-
\varkappa\frac{1}{\lambda+i}\right )
\end{equation}
for some $f_0\in \Dom(\dot \cB)$ and $K\in \bbC$. Eq. \eqref{jkl}
yields
$$
(\lambda-z)f_0(\lambda)+K\left (\frac{i-z}{\lambda-i}+
\varkappa\frac{i+z}{\lambda+i}\right )=h(\lambda)
$$
and hence
\begin{equation}\label{eqff1}
f_0(\lambda)=\frac{h(\lambda)}{\lambda-z}-
\frac{K}{\lambda-z}\left (\frac{i-z}{\lambda-i}+
\varkappa\frac{i+z}{\lambda+i}\right ).
\end{equation}
Since $f_0\in \Dom(\dot \cB)$,
$$
\int_\bbR f_0(\lambda)d\mu(\lambda)=0.
$$
Integrating \eqref{eqff1} against  the measure $\mu(d\lambda)$, one obtains that
\begin{equation}\label{KK}
K \int_\bbR \frac{1}{\lambda-z}\left (\frac{i-z}{\lambda-i}+
\varkappa\frac{i+z}{\lambda+i}\right )d\mu(\lambda)= \int_\bbR
\frac{h(\lambda)}{\lambda-z}d\mu(\lambda).
\end{equation}
Observing that
$$
\int_\bbR \frac{1}{\lambda-z}\left (\frac{i-z}{\lambda-i}+
\varkappa\frac{i+z}{\lambda+i}\right
)d\mu(\lambda)=i-M(z)+\varkappa(M(z)+i),
$$ with
$
M(z)=M_{(\dot \cB, \widehat \cB)}(z)
$, and solving \eqref{KK} for $K$, one obtains
$$
K=\frac{1}{(\varkappa-1)M(z)+i(1+\varkappa)}\int_\bbR
\frac{h(\lambda)}{\lambda-z}d\mu(\lambda)
.$$ Combining \eqref{eqff} and \eqref{eqff1}, for the element $f$
we have the representation
\begin{align}
f(\lambda)&=\frac{h(\lambda)}{\lambda-z}+ \frac{K}{\lambda-z}\left
(\frac{\lambda-z}{\lambda-i}-
\varkappa\frac{\lambda-z}{\lambda+i}- \left
[\frac{i-z}{\lambda-i}+ \varkappa\frac{i+z}{\lambda+i}
\right
]\right )
\nonumber\\&
\nonumber\\&=
\frac{h(\lambda)}{\lambda-z} -K\frac{\varkappa-1}{\lambda-z}
\nonumber\\&
\label{rezfor3}\\&
=\frac{h(\lambda)}{\lambda-z}-\left ( M(z) +
i\frac{\varkappa+1}{\varkappa-1}
\right)^{-1}\frac{1}{\lambda-z}\int_\bbR
\frac{h(\lambda)}{\lambda-z}d\mu(\lambda),
\nonumber\\&
\nonumber\\&\quad \quad \quad \quad \quad z\in \rho(\widehat \cB)\cap\rho(\cB),\nonumber
\end{align}
where we have used \eqref{KK} on the last  step.

To complete the proof, it remains to recall that $f=(\widehat \cB-zI)^{-1}h$ and to compare \eqref{rezfor1} with \eqref{rezfor3}.

\end{proof}

\begin{remark}\label{balansir} Using \eqref{s&M}, it is
 easy to see that the poles of the function
$p(z)$, defined in \eqref{rezfor12}, in the upper half-plane coincide with the roots of the equation
$$
s_{(\dot \cB, \cB)}(z)=\varkappa, \quad z\in \bbC_+,
$$
provided that $\varkappa \ne 0$ and $M(z)\ne i$  identically
in the upper half-plane.
Therefore, the zeros of the characteristic function $S_\fM(z)=S_{(\dot \cB, \widehat \cB, \cB)}(z)$
in the upper half-pane determine the poles of the resolvent of
 the dissipative operator
$\widehat \cB$ (cf. Lemma \ref{specpoint}).

We also remark that if $\varkappa = 0$ and $M(z)= i$ for all $z\in \bbC_+$, then the point spectrum of
the dissipative operator $\widehat \cB$ fills in the whole
open upper half-plane $\bbC_+$.
\end{remark}

Given a triple $(\dot A, \widehat A, A)$
satisfying  \eqref{rss} and \eqref{parpar},  the following
corollary provides an analog of the
Krein formula for resolvents
 for  all quasi-selfadjoint
dissipative extensions of the symmetric operator $\dot A$
with deficiency indices $(1,1)$.

\begin{corollary}[{\cite{MT-S}}]
{\it
Let  $\fA=(\dot A, \widehat A, A)$ be a triple
satisfying  \eqref{rss} and \eqref{parpar}. Then
 the following resolvent formula
 \begin{equation}\label{krres}
(\widehat A-zI)^{-1}=(A-zI)^{-1}-p(z)(\cdot\, ,g_{\overline{z}})g_z,
\end{equation}
$$z\in \rho(\widehat A)\cap \rho(A),
 $$
holds.

Here
\begin{itemize}
\item[(a)] the function $p(z )$ is given by
\begin{align}
 p(z)&=\left (M_{(\dot A,A)}(z)+i\frac{\varkappa+1}{\varkappa-1}\right )^{-1}
\label{WTf}\\&
=i\left (\frac{s_{(\dot A,A)}(z)+1}{s_{(\dot A,A)}(z)-1}
-\frac{\varkappa+1}{\varkappa-1}\right )^{-1};
\label{Lf}\end{align}
\item[(b)]$M_{(\dot A, A)}(z) $ and $s_{(\dot A, A)}(z)$ are
 the Weyl-Titchmarsh
and   the Liv\v{s}ic function
of the pair $(\dot A, A)$, respectively;
\item[(c)]
$g_z$ are deficiency elements of $\dot A$,
$$
g_z\in \Ker((\dot A)^*-zI),
$$
 satisfying the normalization condition
\begin{equation}\label{norcon}
\|g_z\|=\left (\int_\bbR \frac{d\mu(\lambda)}{|\lambda-z|^2}
\right )^{1/2}
\end{equation}
$($the deficiency elements $g_z$ can be chosen to be analytic in
$z\in  \rho(\widehat A)\cap \rho(A)$$)$;
\item[(d)]
 $\mu(d\lambda)$ is the measure from the Herglotz-Nevanlinna representation
$$M_{(\dot A, A)}(z)=\int_\bbR \left
(\frac{1}{\lambda-z}-\frac{\lambda}{1+\lambda^2}\right )
d\mu(\lambda);
$$

\item[(e)] $\varkappa$ is the von Neumann
 parameter of the triple $(\dot A, \widehat A, A)$ which characterizes the domain of
 the dissipative extension $\widehat A$ in such a way that
\begin{equation}\label{vnp}
g_+-g_{-}\in \Dom (A) \quad \text{and}\quad g_+-\varkappa g_{-}\in
\Dom (\widehat A),
\end{equation}
where $g_\pm=g_{\pm i}$.
\end{itemize}

}
\end{corollary}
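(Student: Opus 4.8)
The plan is to reduce the abstract resolvent identity to the functional-model computation already carried out in Theorem \ref{modeldef}, and then to transport the result back through the unitary equivalence supplied by Theorem \ref{unitar}. First I would treat the case where $\dot A$ is prime. By Theorem \ref{unitar}(ii) there is a unitary map $\cU:\cH\to L^2(\bbR;d\mu)$ — with $\mu(d\lambda)$ the Herglotz-Nevanlinna representing measure of $M_{(\dot A,A)}(z)$ — implementing the mutual unitary equivalence of $(\dot A,\widehat A,A)$ with the model triple $(\dot \cB,\widehat \cB,\cB)$ of \eqref{nacha1}--\eqref{nacha3}, and by Theorem \ref{unitar}(i) (via \eqref{sac}) the von Neumann parameter $\varkappa$ is preserved. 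Theorem \ref{modeldef} then gives the model formula
$$(\widehat \cB-zI)^{-1}=(\cB-zI)^{-1}-p(z)(\cdot,\tilde g_{\overline z})\tilde g_z,$$
with $\tilde g_z(\lambda)=(\lambda-z)^{-1}$ and $p(z)=\bigl(M_{(\dot \cB,\cB)}(z)+i\frac{\varkappa+1}{\varkappa-1}\bigr)^{-1}$. Since the representing measures coincide we have $M_{(\dot \cB,\cB)}=M_{(\dot A,A)}$, so $p(z)$ already agrees with \eqref{WTf}.

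Conjugating the model identity by $\cU$ and using unitarity to rewrite $\cU^{-1}(\cdot,\tilde g_{\overline z})\tilde g_z\,\cU=(\cdot,g_{\overline z})g_z$ with $g_z:=\cU^{-1}\tilde g_z$, I would obtain \eqref{krres}. Because $\cU$ intertwines $(\dot \cB)^*$ with $(\dot A)^*$, the elements $g_z$ lie in $\Ker((\dot A)^*-zI)$, can be chosen analytic in $z\in\rho(\widehat A)\cap\rho(A)$ (as $\tilde g_z(\lambda)=(\lambda-z)^{-1}$ is manifestly an analytic $L^2(d\mu)$-valued function), and inherit the normalization $\|g_z\|=\|\tilde g_z\|=\bigl(\int_\bbR|\lambda-z|^{-2}d\mu(\lambda)\bigr)^{1/2}$, which is \eqref{norcon}. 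The conditions \eqref{vnp} follow by pulling back \eqref{kone} and \eqref{nacha3}, since in the model $\tilde g_i-\tilde g_{-i}\in\Dom(\cB)$ and $\tilde g_i-\varkappa\tilde g_{-i}\in\Dom(\widehat \cB)$.

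The equivalence of the two expressions \eqref{WTf} and \eqref{Lf} for $p(z)$ is then a purely algebraic check: substituting $M(z)=\frac1i\frac{s(z)+1}{s(z)-1}$ from \eqref{s&M} (equivalently \eqref{blog}) into \eqref{WTf} and factoring out $-i$ turns the denominator into $-i\bigl(\frac{s(z)+1}{s(z)-1}-\frac{\varkappa+1}{\varkappa-1}\bigr)$; since $(-i)^{-1}=i$, this produces the Liv\v{s}ic-function form \eqref{Lf}.

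Finally, to remove the primeness assumption I would invoke Theorem \ref{prostota0} and split $\cH=\cK\oplus\cL$ into the prime part $\cK=\overline{\Span_{\Im z\ne0}\Ker((\dot A)^*-zI)}$ and the self-adjoint part $\cL$. On $\cL$ the three operators $\dot A,\widehat A,A$ coincide, so $(\widehat A-zI)^{-1}=(A-zI)^{-1}$ there, while the rank-one term vanishes on $\cL$ because $g_{\overline z}\in\cK\perp\cL$; thus \eqref{krres} reduces to the already-established prime case on $\cK$ and holds trivially on $\cL$. The work here is bookkeeping rather than a genuine obstacle: the one point demanding care is verifying that the normalized, $z$-analytic choice of $g_z$ is globally consistent with the convention $g_\pm=g_{\pm i}$ and with \eqref{vnp}, which is transparent in the model representation and is transported intact by $\cU$.
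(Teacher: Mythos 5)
Your proposal is correct and takes essentially the same route the paper intends: the corollary is positioned as a direct consequence of the model resolvent formula of Theorem \ref{modeldef}, transported back through the unitary equivalence with the model triple supplied by Theorem \ref{unitar}, with $p(z)$ unchanged because the representing measure (hence the Weyl--Titchmarsh function) and the von Neumann parameter are preserved. Your additional algebraic verification of \eqref{Lf} via \eqref{blog} and the reduction of the non-prime case to the prime one via Theorem \ref{prostota0} are sound, routine supplements to that argument.
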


\begin{remark} We would like to stress
 that the von Neumann parameter
$\varkappa$ of the triple $\fA=(\dot A, \widehat A, A)$,
 the Liv\v{s}ic function $s_{(\dot A, A)}(z)$,
and
the Weyl-Titchmarsh function
$M_{(\dot A, A)}(z)$,
 can easily  be recovered from the knowledge of the  the
characteristic function $S(z)=S_\fA(z)$. (Recall that  the
characteristic function $S(z)$
 is a complete unitary invariant of the triple  $\fA=(\dot A, \widehat A, A)$, provided that $\dot A$ is a prime operator).

Indeed,
\begin{align*}
\varkappa  &=S(i),
\\s_{(\dot A, A)}(z)&=\frac{S(z)-\varkappa}{\overline{\varkappa} S(z)-1},
\\
 M_{(\dot A, A)}(z)&=\frac1i\cdot \frac{s_{(\dot A, A)}(z)+1}{s_{(\dot A, A)}(z)-1},
\end{align*}
$$z\in \bbC_+,
$$
with $ M_{(\dot A, A)}(z)$
continued to the lower half-plane by the Schwarz reflection principle
$$
M_{(\dot A, A)}(z)=\overline{M_{(\dot A, A)}(\overline{z})},\quad z\in \bbC_-.
$$
\end{remark}
\begin{remark}
The resolvent formula \eqref{krres}--\eqref{WTf}
also holds if $|\varkappa|=1$ and hence $\widehat A$ is self-adjoint. In this case, it
coincides with  the   Krein resolvent formula
for self-adjoint extensions of $\dot A$.
\end{remark}

\begin{remark}
If two triples
$(\dot A,\widehat A_1,A)$ and $(\dot A ,\widehat A_2,A)$ with the same reference operator $A$  satisfy
 \eqref{rss} and \eqref{parpar}
and have  the von Neumann parameters $\varkappa_1$ and $\varkappa_2$, respectively,
then one gets the following resolvent formula
 for the dissipative extensions $\widehat A_1$ and $\widehat A_2$
refining, in the rank-one setting,  a result in \cite{K}:
$$
(\widehat A_2-zI)^{-1}=(\widehat A_1-zI)^{-1}-q(z)(\cdot, g_{\overline{z}})g_z,
$$
where
$
q(z)=p_2(z)-p_1(z)
$
with
\begin{align*}
p_k(z)&=
\left (M_{(\dot A,
A)}(z)+i\frac{\varkappa_k+1}{\varkappa_k-1}\right )^{-1},
\\&=i\left (\frac{s_{(\dot A,A)}(z)+1}{s_{(\dot A,A)}(z)-1}
-\frac{\varkappa_k+1}{\varkappa_k-1}\right )^{-1},\quad k=1,2,
\end{align*}
$$
z\in \rho(A)\cap \rho(\widehat A_1)\cap \rho(\widehat A_2).
$$
We recall, see \eqref{obr},  that if $S_1(z)$ and $S_2(z)$ are the characteristic functions of
the triples $(\dot A,\widehat A_1,A)$ and $(\dot A, \widehat A_2,A)$,
respectively, then
$$
s_{(\dot A,A)}(z)=\frac{S_k(z)-\varkappa_k}{\overline{\varkappa_k} S_k(z)-1},
\quad k=1,2.$$
\end{remark}

\begin{corollary}\label{obrat1}
{\it Suppose that
$\fM=(\dot \cB, \widehat \cB, \cB)$ is the model triple  in $L^2(\bbR;d\mu)$ given by \eqref{nacha1}-\eqref{nacha3}. Assume  that $z=0$
is a regular point for both the
dissipative operator $\widehat \cB$ and the $($reference$)$ self-adjoint operator $\cB$.
Then the inverse $\widehat \cB^{-1}$ is a rank-one perturbation of
 the $($bounded$)$ self-adjoint operator $\cB^{-1}$. That is,
$$
\widehat \cB^{-1}=\cB^{-1}-p Q,
$$
where
\begin{equation}\label{pppp}
p=\left (M(0)+
i\frac{\varkappa+1}{\varkappa-1}\right)^{-1},
\end{equation}
$Q$ is a rank-one self-adjoint operator
$$
(Qf)(\lambda)=\frac{1}{\lambda}
\int_\bbR \frac{f(s)}{s}d\mu(s), \quad
\text{ $\mu$-a.e. } \lambda \in \bbR,
$$
and $M(0)$ is the value of the Weyl-Titchmarsh function associated with the pair $(\dot \cB, \cB)$
at the point zero.
}
\end{corollary}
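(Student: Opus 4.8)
The plan is to obtain the claimed identity by specializing the model resolvent formula of Theorem \ref{modeldef} to the point $z=0$. Since by hypothesis $z=0$ is a regular point for both $\widehat\cB$ and $\cB$, we have $0\in\rho(\widehat\cB)\cap\rho(\cB)$, so that the bounded operators $\widehat\cB^{-1}=(\widehat\cB-0\cdot I)^{-1}$ and $\cB^{-1}=(\cB-0\cdot I)^{-1}$ are both defined and the resolvent formula \eqref{rezfor1}--\eqref{rezfor12} may be evaluated at $z=0$. This gives
$$
\widehat\cB^{-1}=\cB^{-1}-p(0)(\cdot,g_{0})g_{0},
$$
where I have used that $\overline{0}=0$, so that the two deficiency elements appearing in \eqref{rezfor1} coincide.

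Next I would identify the three ingredients. By \eqref{defelrep} the deficiency element is $g_0(\lambda)=1/\lambda$ for $\mu$-almost all $\lambda\in\bbR$, and by \eqref{rezfor12}
$$
p(0)=\left(M_{(\dot\cB,\cB)}(0)+i\frac{\varkappa+1}{\varkappa-1}\right)^{-1}=p,
$$
which matches the stated constant once one writes $M(0)=M_{(\dot\cB,\cB)}(0)$; this value is well defined since $M_{(\dot\cB,\cB)}$ is analytic at the regular point $0\in\rho(\cB)$. It then remains to check that the rank-one term $(\cdot,g_0)g_0$ equals the operator $Q$. Because $g_0$ is real-valued on $\bbR$, for $f\in L^2(\bbR;d\mu)$ one computes
$$
\big((\cdot,g_0)g_0\,f\big)(\lambda)=(f,g_0)\,g_0(\lambda)=\frac{1}{\lambda}\int_\bbR\frac{f(s)}{s}\,d\mu(s)=(Qf)(\lambda),
$$
so that $(\cdot,g_0)g_0=Q$, and substituting yields $\widehat\cB^{-1}=\cB^{-1}-pQ$.

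Finally I would address the well-definedness and self-adjointness of $Q$. The operator $(\cdot,g_0)g_0$ is bounded, indeed rank one, precisely when $g_0\in L^2(\bbR;d\mu)$, that is, when $\int_\bbR|\lambda|^{-2}\,d\mu(\lambda)<\infty$. This is exactly what the hypothesis $0\in\rho(\cB)$ guarantees: the measure $\mu$ then assigns no mass to a neighborhood of the origin, while integrability at infinity follows from the normalization \eqref{normmu}. Self-adjointness of $Q=(\cdot,g_0)g_0$ is immediate, since $(\cdot,g)g$ is self-adjoint for any $g\in L^2(\bbR;d\mu)$. The only point requiring genuine attention is this integrability of $g_0=1/\lambda$ near the origin, which is precisely the role played by the regularity assumption on $\cB$; beyond that, the corollary is a direct evaluation of Theorem \ref{modeldef} at $z=0$, and I expect the argument to be short once this boundedness issue is settled.
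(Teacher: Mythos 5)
Your proof is correct and follows exactly the route the paper intends: Corollary \ref{obrat1} is stated there without a separate argument precisely because it is the specialization of the resolvent formula \eqref{rezfor1}--\eqref{rezfor12} of Theorem \ref{modeldef} at the real regular point $z=0$, with $g_0(\lambda)=1/\lambda$ and $(\cdot,g_0)g_0=Q$. Your additional verification that $0\in\rho(\cB)$ forces $\mu$ to vanish near the origin, so that $g_0\in L^2(\bbR;d\mu)$ by the normalization \eqref{normmu} and $M(0)$ is well defined, is the right (and only) point needing care.
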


\section{Transformation laws}\label{s4}

In this section we discuss
the dependence of the Liv\v{s}ic, Weyl-Titchmarsh and   characteristic functions upon the reference  operator.

\begin{lemma}\label{ind0}
Assume that   $\dot A$ is a symmetric, densely  defined, closed operator with
deficiency indices $(1,1)$ and $\widehat A$ its maximal dissipative extension such that $\widehat A\ne (\widehat A)^*$.
 Suppose that
 $g_\pm\in\Ker ((\dot A)^*\mp iI)$,  $||g_\pm||=1$.
  Given $\alpha\in [0, \pi)$,
denote by  $A_\alpha$ a unique self-adjoint extension
 of $\dot A$ such that
\begin{equation}\label{dom1}
g_+-e^{2i\alpha}g_-\in \Dom (A_\alpha).
\end{equation}
Let
$
s_\alpha(z)=s_{(\dot A, A_\alpha)}(z),
$
$M_\alpha(z)=M_{(\dot A, A_\alpha)}(z)$, and
$
S_\alpha(z)=S_{(\dot A, \widehat A, A_\alpha)}(z)
$
be  the Liv\v{s}ic  function,
the
 Weyl-Titchmarsh function associated with the pair $(\dot A, A_\alpha)$,
   and
the characteristic function   associated with the triple $(\dot A, \widehat A, A_\alpha)$,
respectively.

Then \begin{align}
s_\alpha(z)&=e^{2i(\beta-\alpha)}s_\beta(z) \label{vags},
\\
M_\beta(z)&=\frac{\cos(\beta-\alpha) M_\alpha(z)-\sin (\beta-\alpha)}{
\cos(\beta-\alpha) +\sin (\beta-\alpha)M_\alpha(z)}, \label{trans}
\\
S_\alpha(z)&=e^{2i(\beta-\alpha)}S_\beta(z). \label{vagS}
\end{align}
\end{lemma}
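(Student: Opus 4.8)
The plan is to reduce all three identities to a single observation: replacing the reference self-adjoint extension $A_\alpha$ by $A_\beta$ amounts to rescaling the normalized deficiency element $g_-$ by a unimodular factor inside the formula \eqref{charsum}. First I would exploit the freedom in the choice of deficiency elements recorded just after \eqref{charsum}. Since $\|e^{2i\alpha}g_-\| = 1$, $e^{2i\alpha}g_- \in \Ker((\dot A)^* + iI)$, and by the very definition \eqref{dom1} of $A_\alpha$ one has $g_+ - e^{2i\alpha}g_- \in \Dom(A_\alpha)$, the pair $\{g_+, e^{2i\alpha}g_-\}$ is an admissible basis for computing the Livšic function of $(\dot A, A_\alpha)$. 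Plugging it into \eqref{charsum} and using $(g_z, e^{2i\alpha}g_-) = e^{-2i\alpha}(g_z, g_-)$ gives
$$
s_\alpha(z) = e^{-2i\alpha}\,\frac{z-i}{z+i}\cdot\frac{(g_z,g_-)}{(g_z,g_+)}.
$$
As the fraction $\frac{z-i}{z+i}\frac{(g_z,g_-)}{(g_z,g_+)}$ is $\alpha$-independent, comparing the expressions for $s_\alpha$ and $s_\beta$ yields \eqref{vags} at once.

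Second, I would derive \eqref{trans} by a purely algebraic manipulation of \eqref{vags} through the dictionary \eqref{blog} (equivalently \eqref{s&M}) relating the Livšic and Weyl-Titchmarsh functions. Writing $\theta = \beta - \alpha$ and substituting $s_\beta = e^{-2i\theta}s_\alpha$ together with $s_\alpha = \frac{M_\alpha - i}{M_\alpha + i}$ into $M_\beta = \frac1i\frac{s_\beta + 1}{s_\beta - 1}$, and collecting terms via $e^{-2i\theta}+1 = 2e^{-i\theta}\cos\theta$ and $e^{-2i\theta}-1 = -2ie^{-i\theta}\sin\theta$, the factors $e^{-i\theta}$ cancel between numerator and denominator and one lands exactly on the Möbius transformation displayed in \eqref{trans}. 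This step is routine trigonometric bookkeeping.

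Third, for \eqref{vagS} I would track how the von Neumann parameter transforms under the change of deficiency basis. The characteristic function \eqref{ch12} is expressed in terms of the Livšic function and the von Neumann parameter computed with respect to the same admissible basis. For the triple $(\dot A, \widehat A, A_\alpha)$ with basis $\{g_+, e^{2i\alpha}g_-\}$, the parameter $\varkappa_\alpha$ is determined by $g_+ - \varkappa_\alpha e^{2i\alpha}g_- \in \Dom(\widehat A)$; comparing with the canonical relation $g_+ - \varkappa g_- \in \Dom(\widehat A)$ from Remark \ref{vonbas} forces $\varkappa_\alpha = e^{-2i\alpha}\varkappa$, whence $\overline{\varkappa_\alpha} = e^{2i\alpha}\overline{\varkappa}$. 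Setting $t(z) = \frac{z-i}{z+i}\frac{(g_z,g_-)}{(g_z,g_+)}$ so that $s_\alpha = e^{-2i\alpha}t$, and substituting $s_\alpha, \varkappa_\alpha$ into \eqref{ch12}, the factor $e^{2i\alpha}$ in the denominator cancels and gives $S_\alpha(z) = e^{-2i\alpha}\frac{t - \varkappa}{\overline{\varkappa}t - 1}$; the identical computation with $\beta$ produces the same $\alpha$-independent fraction, and \eqref{vagS} follows.

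I expect no serious obstacle; the only delicate point is the third step, namely keeping precise track of the dependence of the von Neumann parameter on the chosen normalized deficiency basis, since the $\varkappa$ appearing in \eqref{ch12} is defined relative to the very elements used to evaluate the Livšic function. Obtaining the scaling $\varkappa \mapsto e^{-2i\alpha}\varkappa$ (and not, say, its complex conjugate) is exactly what makes the $\alpha$-dependence cancel cleanly in $S_\alpha$, and it must be read off consistently from Remark \ref{vonbas}.
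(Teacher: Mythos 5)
Your proposal is correct and follows essentially the same route as the paper's own proof: a unimodular rescaling of the deficiency basis $\{g_+,e^{2i\alpha}g_-\}$ yielding $s_\alpha=e^{-2i\alpha}t$, the dictionary \eqref{blog} for the M\"obius law \eqref{trans}, and the transformation $\varkappa_\alpha=e^{-2i\alpha}\varkappa$ of the von Neumann parameter for \eqref{vagS}. The only cosmetic difference is that the paper anchors the computation at the reference extension $A_*$ (the case $\beta=0$) and then invokes the one-parameter group property, whereas you factor out the $\alpha$-independent core $t(z)$ and obtain the general $(\alpha,\beta)$ relation directly, making explicit two steps the paper leaves terse.
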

\begin{proof}
Suppose that the deficiency elements
 $g_\pm\in\Ker ((\dot A)^*\mp iI)$,  $||g_\pm||=1$, are such that
 $$
g_+-\varkappa g_-\in \Dom (\widehat A).
$$
for some   $\varkappa$, $|\varkappa| <1$.
Denote by $A_*$ the reference self-adjoint  extension of $\dot A$ such that
$$
g_+-g_-\in \Dom(A_*),
$$
and let $s_*(z)=s_{(\dot A, A_*)}(z)$, $M_*(z)=M_{ (\dot A, A_*)}(z)$ and $S_*(z)=S_{(\dot A,\widehat A, A_*)}(z)$ be the corresponding Liv\v{s}ic, Weyl-Titchmarsh and the characteristic functions, respectively.
From the definition of the Liv\v{s}ic function  $s_\alpha(z)$ it follows that
$$
s_\alpha(z)=e^{-2i\alpha}s_*(z).
$$
By \eqref{blog} we have
$$
s_*(z)=\frac{M_*(z)-i}{M_*(z)+i}\quad \text{and}\quad s_\alpha(z)=e^{-2i\alpha}s_*(z)=\frac{M_\alpha(z)-i}{M_\alpha(z)+i}\
$$
and hence
$$
M_\alpha(z) =\frac{\cos( \alpha) M_*(z)-\sin (\alpha)(z)}{
\cos(\alpha) +\sin (\alpha)M_*(z)}.
$$

From \eqref{obr} it follows
$$
S_*(z)=\frac{s_*(z)-\varkappa} {\overline{\varkappa}s_*(z)-1}.
$$
Therefore,
$$
e^{-2i\alpha}S_*(z)=\frac{e^{-2i\alpha}s_*(z)-e^{-2i\alpha}\varkappa} {\overline{(e^{-2i\alpha}\varkappa)}e^{-2i\alpha}s_*(z)-1}
=\frac{s_\alpha(z)-e^{-2i\alpha}\varkappa} {\overline{(e^{-2i\alpha}\varkappa)}s_\alpha(z)-1} =S_\alpha(z),
$$
which proves \eqref{vags}, \eqref{trans}, \eqref{vagS} first  for $\beta=0$ and hence for all $\beta$ taking into account that the transformations $\alpha\to s_\alpha, M_\alpha, S_\alpha$  are one-parameter groups.
\end{proof}

Our next result shows that the concept of a characteristic function of a triple is essentially determined by the corresponding dissipative operator only rather than by the triple itself (cf. \cite{AkG,L46}).

\begin{proposition}\label{concept} {\it Let $(\dot A, \widehat A, A)$ and $(\dot B, \widehat B, B)$ be two triples satisfying the hypothesis of Lemma \ref{ind0}.
Suppose that the dissipative operators $\widehat A$ and $\widehat B$ are unitarily  equivalent.

 Then the characteristic functions of the triples   $(\dot A, \widehat A, A)$ and $(\dot B, \widehat B, B)$ coincide up to a  constant unimodular factor.

In particular, the absolute values of the von Neumann parameters $ \varkappa_{(\dot A, \widehat A, A)}$ and $ \varkappa_{(\dot B, \widehat B, B)}$
 of the triples  $(\dot A, \widehat A, A)$ and $(\dot B, \widehat B, B)$ coincide,
 \begin{equation}\label{vonuni}
| \varkappa_{(\dot A, \widehat A, A)}|=|\varkappa_{(\dot B, \widehat B, B)}|.
 \end{equation}
}
\end{proposition}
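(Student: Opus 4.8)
The plan is to transport the whole triple $(\dot A, \widehat A, A)$ through the unitary operator implementing the equivalence of the dissipative operators, and then to absorb the discrepancy between the two reference operators into a unimodular constant by means of the transformation law of Lemma \ref{ind0}. First I would fix a unitary operator $\cU$ with $\widehat B = \cU \widehat A \cU^{-1}$. Taking adjoints gives $(\widehat B)^* = \cU (\widehat A)^* \cU^{-1}$, so $\cU$ maps $\Dom(\widehat A)\cap \Dom((\widehat A)^*)$ onto $\Dom(\widehat B)\cap \Dom((\widehat B)^*)$. Since each symmetric operator is recovered from its dissipative extension as the restriction to this intersection (see \eqref{ongin} and Remark \ref{metmet}), it follows that $\dot B = \cU \dot A \cU^{-1}$. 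I would then set $A' = \cU A \cU^{-1}$, which is a self-adjoint extension of $\dot B$ because $A$ extends $\dot A$; hence $(\dot B, \widehat B, A')$ is a genuine triple, mutually unitarily equivalent to $(\dot A, \widehat A, A)$ via $\cU$.

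The crucial observation is that the characteristic function is manifestly a unitary invariant of a triple, read off directly from its definition \eqref{ch12}, with no appeal to primeness (this is merely the trivial direction of Theorem \ref{unitar}). Indeed, $\cU$ carries normalized deficiency elements $g_\pm\in\Ker((\dot A)^*\mp iI)$ and $g_z\in\Ker((\dot A)^*-zI)$ to normalized deficiency elements of $\dot B$, preserves all inner products entering the Liv\v{s}ic function \eqref{charsum}, and respects both the relation $g_+-g_-\in\Dom(A)$ (so $\cU g_+-\cU g_-\in\Dom(A')$) and the relation $g_+-\varkappa g_-\in\Dom(\widehat A)$ (so $\cU g_+-\varkappa\,\cU g_-\in\Dom(\widehat B)$). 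Consequently the Liv\v{s}ic function of $(\dot B, A')$ equals that of $(\dot A, A)$, the von Neumann parameter of $(\dot B, \widehat B, A')$ equals $\varkappa$, and therefore $S_{(\dot B, \widehat B, A')}(z) = S_{(\dot A, \widehat A, A)}(z)$.

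Finally I would compare the two reference operators $A'$ and $B$, both self-adjoint extensions of $\dot B$. By Lemma \ref{ind0}, equation \eqref{vagS}, passing from one reference extension to another multiplies the characteristic function by a unimodular constant, so $S_{(\dot B, \widehat B, A')}(z) = e^{2i(\beta-\alpha)} S_{(\dot B, \widehat B, B)}(z)$ for suitable $\alpha,\beta$. Combining the two equalities yields $S_{(\dot A, \widehat A, A)}(z) = e^{2i(\beta-\alpha)} S_{(\dot B, \widehat B, B)}(z)$, which is the first assertion. For the von Neumann parameters I would invoke \eqref{sac}, namely $\varkappa_{(\dot A, \widehat A, A)} = S_{(\dot A, \widehat A, A)}(i)$ and $\varkappa_{(\dot B, \widehat B, B)} = S_{(\dot B, \widehat B, B)}(i)$ for the canonically normalized bases; taking moduli at $z=i$ in the last display gives $|\varkappa_{(\dot A, \widehat A, A)}| = |\varkappa_{(\dot B, \widehat B, B)}|$, proving \eqref{vonuni}. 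I expect no serious analytic obstacle here; the only point requiring genuine care is the domain bookkeeping establishing $\dot B = \cU\dot A\cU^{-1}$ and the verification that the characteristic function is correctly computed from the transported deficiency data.
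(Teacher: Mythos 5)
Your proposal is correct and follows essentially the same route as the paper: transport the triple through the implementing unitary (so that $\dot B=\cU\dot A\cU^{-1}$ and $A'=\cU A\cU^{-1}$ is a reference extension of $\dot B$), compare the two reference extensions $A'$ and $B$ via Lemma \ref{ind0}, eq. \eqref{vagS}, and deduce \eqref{vonuni} from \eqref{sac} by taking moduli at $z=i$. The only deviation is that where the paper cites the uniqueness Theorem \ref{unitar} to equate the characteristic functions of the mutually unitarily equivalent triples $(\dot A,\widehat A, A)$ and $(\dot B,\widehat B, A')$, you verify this invariance directly from the definitions \eqref{charsum} and \eqref{ch12}; this is sound and, if anything, slightly cleaner, since it avoids leaning on a theorem whose stated hypotheses include primeness, which the proposition itself does not assume.
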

\begin{proof}
To be more specific, assume that $\cU$ is a unitary operator such that
$$
\widehat B=\cU^{-1} \widehat A \cU.
$$ That is,
$$
\cU(\dom (\widehat B))= \dom (\widehat A)
$$
and
$$
\cU\widehat B f= \widehat A \cU\quad \text{for all } f\in \dom (\widehat B).
$$
Literally repeating  the  proof of Lemma \ref{nudada}  one shows that
$$
(\widehat B)^*=\cU^{-1} (\widehat A)^* \cU.
$$
Therefore, the symmetric operators $\dot A$ and $\dot B$ are unitarily equivalent
$$
\dot B=\cU^{-1}\dot  A \cU,$$
since
$$
\dot A=\widehat A|_{\Dom (\widehat A)\cap \dom((\widehat A)^*)}\quad\text{and}  \quad  \dot B=\widehat B|_{\Dom (\widehat B)\cap \dom((\widehat B)^*)}.
$$
Moreover, the operator $B'=\cU^{-1} A \cU$ is a self-adjoint extensions of $\dot B$.

By Lemma \ref{ind0}, the characteristic functions $S_{(\dot B, \widehat B, B)}(z) $ and  $S_{(\dot B, \widehat B, B')}(z)$ of the triples
$(\dot B, \widehat B, B) $ and  $(\dot B, \widehat B, B')$ are related as
$$
S_{(\dot B, \widehat B, B)}(z)=\Theta S_{(\dot B, \widehat B, B')}(z), \quad z\in \bbC_+,
$$
for some constant $\Theta$, $|\Theta|=1$. Since  the tripes $(\dot B, \widehat B, B')$ and  $(\dot A, \widehat A, A)$ are mutually unitarily equivalent by construction, we have
$$
S_{(\dot B, \widehat B, B')}(z)=S_{(\dot A, \widehat A, A)}(z),\quad z\in \bbC_+,
$$
by the uniqueness Theorem \ref{unitar}.
Therefore,
$$
S_{(\dot B, \widehat B, B)}(z)=\Theta S_{(\dot A, \widehat A, A)}(z),\quad z\in \bbC_+,
$$
which completes the proof of the first assertion of the proposition.

To prove \eqref{vonuni}, we use the relation \eqref{sac} to conclude that
$$
| \varkappa_{(\dot A, \widehat A, A)}|=
|S_{(\dot A, \widehat A, A)}(i)|=|S_{(\dot B, \widehat B, B)}(i)|=|\varkappa_{(\dot B, \widehat B, B)}|.
$$
\end{proof}

Our next  goal is
to  establish a transformation law
 for  the  Liv\v{s}ic function
 under the affine transformations of  the pair $(\dot A, A)$,
$$
(\dot A, A)\longrightarrow (a\dot A+bI , aA+bI),
\quad a, b\in \bbR, \quad a>0.
$$


\begin{lemma}\label{masha} Let $\dot A$ be a symmetric operator
 with deficiency indices $(1,1)$ and $A$ its   self-adjoint
extension.
Suppose that
$
f(z)=az+b
$
with $a, b\in \bbR$, $a>0$ is an affine transformation.

Then the Liv\v{s}ic
function  associated with the pair
$(f(\dot A), f(A))$ admits the representation
\begin{equation}\label{vtcher}
s_{(f(\dot A), f(A))}(z)=\frac{m(z)-m(i)}{m(z)-\overline{m(i)}},
\end{equation}
where
$$
m(z)=M(f^{-1}(z))
$$
and   $M(z)=M_{(\dot A, A)}(z)$ is the Weyl-Titchmarsh function associated with the pair $(\dot A, A)$.

\end{lemma}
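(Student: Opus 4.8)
The plan is to reduce the whole statement to a single reproducing-kernel identity for the deficiency elements of $\dot A$ and then read off \eqref{vtcher} by a direct substitution. First I would record how the affine map acts on the deficiency structure: since $a,b\in\bbR$ and $a>0$ we have $(f(\dot A))^*=a(\dot A)^*+bI$, so that $\Ker((f(\dot A))^*-zI)=\Ker((\dot A)^*-f^{-1}(z)I)$ for every $z$, while $f(A)=aA+bI$ is a self-adjoint extension of the symmetric operator $f(\dot A)$, which again has deficiency indices $(1,1)$. Because $a>0$, the branch $f^{-1}(z)=(z-b)/a$ maps $\bbC_+$ into $\bbC_+$; in particular the deficiency elements of $f(\dot A)$ at $\pm i$ lie in $\Ker((\dot A)^*-w_\pm I)$ with $w_\pm=f^{-1}(\pm i)$, and $w_-=\overline{w_+}$.

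The heart of the argument is the inner-product formula
$$(g_z,g_w)=\frac{M(z)-M(\overline w)}{z-\overline w},$$
where $g_w=(A-iI)(A-wI)^{-1}g_+$ is the standard realization of a deficiency element in $\Ker((\dot A)^*-wI)$, the element $g_+\in\Ker((\dot A)^*-iI)$ is normalized, and $M=M_{(\dot A,A)}$. I would establish this following the computation in the proof of Lemma \ref{nudada}: passing to the (probability) spectral measure $\nu(d\lambda)=(E_A(d\lambda)g_+,g_+)$ gives $(g_z,g_w)=\int_\bbR\frac{\lambda^2+1}{(\lambda-z)(\lambda-\overline w)}\,d\nu(\lambda)$, while $M(z)=\int_\bbR\frac{1+\lambda z}{\lambda-z}\,d\nu(\lambda)$; a partial-fraction decomposition of the integrand together with $\int\frac{d\nu}{\lambda-z}=\frac{M(z)-z}{z^2+1}$ collapses to the displayed kernel. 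Two by-products will be used: first $M(i)=i$, so the Livšic normalization is automatic, and second $\|g_{w_+}\|^2=\|g_{w_-}\|^2=\Im M(w_+)/\Im w_+$, i.e. the two deficiency elements of $f(\dot A)$ at $\pm i$ have equal norm.

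With the kernel in hand I would evaluate the Livšic function of $(f(\dot A),f(A))$ directly from \eqref{charsum}. Choosing $h_z=g_{f^{-1}(z)}$ and $h_\pm=g_{w_\pm}$, the identity gives $(h_z,h_-)=\frac{m(z)-m(i)}{f^{-1}(z)-w_+}$ and $(h_z,h_+)=\frac{m(z)-\overline{m(i)}}{f^{-1}(z)-w_-}$, where $m(z)=M(f^{-1}(z))$, and $m(i)=M(w_+)$, $\overline{m(i)}=M(w_-)$ by Schwarz reflection. Since $f^{-1}(z)-w_+=(z-i)/a$ and $f^{-1}(z)-w_-=(z+i)/a$, the prefactor $\frac{z-i}{z+i}$ in \eqref{charsum} cancels exactly against $\frac{f^{-1}(z)-w_-}{f^{-1}(z)-w_+}=\frac{z+i}{z-i}$, leaving precisely $\frac{m(z)-m(i)}{m(z)-\overline{m(i)}}$, which is \eqref{vtcher}.

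The one delicate point—and the main obstacle—is legitimizing the use of the unnormalized, a priori non-compatible elements $h_\pm=g_{w_\pm}$ in \eqref{charsum}, since the Livšic function is only insensitive to a common unimodular rescaling of a normalized pair satisfying $h_+-h_-\in\Dom(f(A))$. This is resolved by the two by-products above: because $w_-=\overline{w_+}$ one checks via the resolvent identity that $g_{w_+}-g_{w_-}\in\Dom(A)=\Dom(f(A))$, and because $\|g_{w_+}\|=\|g_{w_-}\|$ the normalized compatible pair is obtained from $g_{w_\pm}$ by a single real positive factor, which cancels in the ratio $(h_z,h_-)/(h_z,h_+)$. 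Hence the expression computed above is genuinely the Livšic function of the transformed pair, which finishes the proof.
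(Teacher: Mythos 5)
Your proof is correct, and while it shares the paper's overall strategy---identify the deficiency elements of $f(\dot A)$ at $\pm i$ with deficiency elements of $\dot A$ at $w_\pm=f^{-1}(\pm i)$, check equal norms and the compatibility condition, then evaluate the ratio in \eqref{charsum} in terms of $M$---the implementation is genuinely different. The paper first reduces to the case of a prime operator $\dot A$ (via Remark \ref{mnogopros}) so that the model Theorem \ref{unitar} applies, and then works in $L^2(\bbR;d\mu)$, where the deficiency elements are the explicit fractions $G_z(\lambda)=(\lambda-f^{-1}(z))^{-1}$ and the inner products are integrals against the (infinite) representing measure $\mu$ of $M$, handled by partial fractions. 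You instead stay in the abstract Hilbert space: the resolvent realization $g_w=(A-iI)(A-wI)^{-1}g_+$ together with the finite spectral measure $\nu(d\lambda)=(E_A(d\lambda)g_+,g_+)$ yields the kernel identity $(g_z,g_w)=\frac{M(z)-M(\overline w)}{z-\overline w}$, after which \eqref{vtcher} drops out by substitution and cancellation of the factors $(z\mp i)/a$. What your route buys: no primeness reduction and no appeal to the functional model, so the argument is self-contained and applies verbatim to non-prime $\dot A$; moreover, the compatibility $g_{w_+}-g_{w_-}\in\Dom(A)$ and the equality of norms come for free from the resolvent identity and the kernel formula. What the paper's route buys: completely explicit deficiency elements; and since $d\mu(\lambda)=(1+\lambda^2)\,d\nu(\lambda)$, the two computations are, at bottom, the same partial-fraction calculation in different coordinates. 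Your handling of the one delicate point---the legitimacy of using the unnormalized pair $g_{w_\pm}$ in \eqref{charsum}---is exactly right: equal norms plus compatibility mean the normalized compatible pair differs from $g_{w_\pm}$ by a single common positive constant, which cancels in the ratio, and the formula is insensitive to the scaling of $g_z$ as well.
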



\begin{proof} By Remark \ref{mnogopros}, without loss of generality one may assume that $\dot A$ is a prime symmetric operator.
Let $M(z)=M_{(\dot A, A)}(z)$ be the Weyl-Titchmarsh function associated with the pair $(\dot A, A)$.
Next,  we may assume that
$A$ is the multiplication operator by independent variable in $L^2(\bbR, d\mu)$ and $\dot A$ is its restriction on
$$
\Dom (\dot A)=\left \{f\in \Dom (A) \,\bigg  |\int_\bbR f(\lambda)d\mu(\lambda)=
0\right \},
$$
where $\mu(d\lambda)$ is the representing measure for $M(z)$ (see Theorem  \ref{unitar}).

Introduce the  family of  functions
$$
G_z(\lambda)=\frac{1}{\lambda-f^{-1}(z)}, \quad \Im(z)\ne 0.
$$
Clearly,
$$
G_z\in \Ker((f(\dot A)^*)-  zI),\quad \Im(z)\ne 0
.$$
Set
$$
G_+=G_{f^{-1}(i)}\quad\text{and}\quad G_-= G_{f^{-1}(-i)}.
$$
One easily checks that
$$
\|G_+\|=\|G_-\|, \quad  G_\pm\in \Ker ((f(\dot A)\mp i I),
$$
and that
$$
G_+-G_-\in \Dom (A)=\Dom (f(A)).
$$

Therefore,
the Liv\v{s}ic  function $s_{(f(\dot A), f(A))}(z)$ has  the representation
\begin{equation}\label{hjk1}
s_{(f(\dot A), f(A))}(z)=\frac{z-i}{z+i}
\frac{(G_z,G_-)}{(G_z,G_+)}.
\end{equation}
We have
$$
(G_z,G_-)=\int_\bbR \frac{d\mu(\lambda)}{(\lambda-f^{-1}(z))
\overline{(\lambda-f^{-1}(-i))}}
$$
and
$$
(G_z,G_+)=\int_\bbR \frac{d\mu(\lambda)}
{(\lambda-f^{-1}(z))\overline{(\lambda-f^{-1}(i))}},
$$
where  $\mu (d\lambda)$ is the measure from the representation
$$
M(z)=\int_\bbR \left (\frac{1}{\lambda-z}-\frac{\lambda}{1+\lambda^2}\right )
d\mu(\lambda), \quad \Im(z)\ne 0.
$$
Therefore,
\begin{align}
(z-i)(G_z,G_-)&=(z-i)\int_\bbR \frac{d\mu(\lambda)}
{(\lambda-f^{-1}(z))(\lambda-f^{-1}(i))}
\label{hjk2}
\\&\nonumber\\&
=\frac{z-i}{f^{-1}(i)-f^{-1}(z)}\left (M(f^{-1}(z))-M(f^{-1}(i))
\right )\nonumber
\\&\nonumber\\&
=-a\left (M(f^{-1}(z))-M(f^{-1}(i))
\right )\nonumber
\end{align}
and
\begin{align}
(z+i)(G_z,G_+)&=(z+i)\int_\bbR \frac{d\mu(\lambda)}
{(\lambda-f^{-1}(z))(\lambda-f^{-1}(-i))}
\label{hjk3}
\\& \nonumber\\&
=\frac{z+i}{f^{-1}(-i)-f^{-1}(z)}\left (M(f^{-1}(z))-M(f^{-1}(-i))
\right )
\nonumber \\& \nonumber\\&
=-a\left (M(f^{-1}(z))-M(f^{-1}(-i))
\right ).\nonumber
\end{align}

Now  \eqref{hjk1}, \eqref{hjk2} and \eqref{hjk3}
 yield
$$s_{(f(\dot A), f(A))}(z)=
\frac{M(f^{-1}(z))-M(f^{-1}(i))}{M(f^{-1}(z))-M(f^{-1}(-i))}
=\frac{m(z)-m(i)}{m(z)-\overline{m(i)}},
$$
which completes the proof.

\end{proof}

Next, we discuss the transformation law under the  affine transformation of the pair
$(\dot A, A)$ given by
$$
(\dot A, A)\longrightarrow (-\dot A, -A).
$$


\begin{lemma}\label{minus}
If $\dot A$ is a closed symmetric operator with deficiency indices
$(1,1)$ and $A$ its self-adjoint extension, then the
Weyl-Titchmarsh functions $M_\pm(z)$ associated with the pairs
$(\pm\dot A, \pm A)$ are related as follows
\begin{equation}\label{xiloC}
M_-(z)=-\overline{M_+(-\overline{z})}, \quad z\in \rho(A).
\end{equation}

In particular, for  the Li\v{s}ic functions associated with the pairs
$(\dot A, A)$ and  $(-\dot A, -A)$  we have
\begin{equation}\label{xiloCC}
s_{(-\dot A,- A)}(z)=\overline{s_{(\dot A, A)}(-\overline{z})}, \quad z\in \bbC_+.
\end{equation}

\begin{proof}
Let $n$  be  a unit vector in $\Ker ((\dot A)^*-iI)$ and
\begin{equation}\label{keliC}
m=(A-iI)(A+iI)^{-1}n.
\end{equation}
Then $m\in \Ker ((\dot A)^*+iI)=\Ker ((-\dot
A)^*-iI).$ By the definition of the Weyl-Titchmarsh function one
obtains that
$$
M_-(z)=\left ((-Az+I)(-A-zI)^{-1}m, m\right ).
$$
Therefore,
\begin{align}
-M_-(-\overline{z})&=((A\overline{z}+I)(A-\overline{z}I)^{-1}m,m)
\nonumber\\
&=((A\overline{z}+I)(A-\overline{z}I)^{-1}(A-iI)(A+iI)^{-1}n,(A-iI)(A+iI)^{-1}n)
\nonumber\\
&=((A-iI)(A+iI)^{-1}(A\overline{z}+I)(A-\overline{z}I)^{-1}n,(A-iI)(A+iI)^{-1}n)
\nonumber \\
&=((A\overline{z}+I)(A-\overline{z}I)^{-1}n,n)=M_+(\overline{z})=\overline{M_+(z)}.\label{xilC}
\end{align}
Here we have used  the    Schwarz symmetry  principle for the
Weyl-Titchmarsh function
$$
M_+(z)=\overline{M_+(\overline{z})}, \quad z\in \rho(A),
$$
and the observation  that the Cayley
transform $(A-iI)(A+iI)^{-1}$ is a unitary operator commuting with
the operator $A$.
Finally, \eqref{xiloC} follows from \eqref{xilC} by the substitution
$z\to -\overline{z}$.

To prove the last assertion, we use the relation  \eqref{blog} to conclude that
$$
s_{(-\dot A, -A)}(z)=\frac{M_-(z)-i}{M_-(z)+i}
=\frac{-\overline{M_+(-\overline{z})}-i}{-\overline{M_+(-\overline{z})}+i}
=\overline{s_{(\dot A, A)}(-\overline{z})},
$$
completing the proof.
\end{proof}
\end{lemma}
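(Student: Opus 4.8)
The plan is to exploit the definition \eqref{WTF} of the Weyl-Titchmarsh function together with the fact that the Cayley transform of $A$ furnishes a canonical identification between the two deficiency subspaces. First I would fix a normalized deficiency element $n \in \Ker((\dot A)^* - iI)$, $\|n\| = 1$, and set
\begin{equation}
m = (A - iI)(A+iI)^{-1}n.
\end{equation}
Since the Cayley transform $(A-iI)(A+iI)^{-1}$ is a unitary operator commuting with every bounded function of $A$, the vector $m$ has norm one and lies in $\Ker((\dot A)^* + iI) = \Ker((-\dot A)^* - iI)$; thus $m$ is an admissible normalized deficiency element for the operator $-\dot A$ at the point $+i$. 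Because $M_-(z)$ does not depend on the concrete choice of the normalized deficiency element (as noted right after \eqref{WTF}), I may compute it through $m$:
\begin{equation}
M_-(z) = ((-Az+I)(-A-zI)^{-1}m, m), \quad z \in \bbC_+.
\end{equation}

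Next I would evaluate $-M_-(-\overline{z})$. Replacing $z$ by $-\overline{z}$ and using $(-A+\overline{z}I)^{-1} = -(A-\overline{z}I)^{-1}$ turns the inner product into $((A\overline{z}+I)(A-\overline{z}I)^{-1}m, m)$. Substituting the definition of $m$ and transferring the Cayley factor from the second argument to the first via its adjoint $(A+iI)(A-iI)^{-1}$, all the resolvent and Cayley factors become functions of the single self-adjoint operator $A$ and therefore commute; the Cayley transform cancels against its inverse, leaving
\begin{equation}
-M_-(-\overline{z}) = ((A\overline{z}+I)(A-\overline{z}I)^{-1}n, n) = M_+(\overline{z}).
\end{equation}
The Schwarz reflection principle for Herglotz--Nevanlinna functions gives $M_+(\overline{z}) = \overline{M_+(z)}$, so that $-M_-(-\overline{z}) = \overline{M_+(z)}$, and the substitution $z \mapsto -\overline{z}$ then yields the claimed identity \eqref{xiloC}.

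Finally, for the Liv\v{s}ic functions I would feed the just-established relation into \eqref{blog}. Writing $s_{(-\dot A, -A)}(z) = (M_-(z)-i)/(M_-(z)+i)$ and inserting $M_-(z) = -\overline{M_+(-\overline{z})}$, a short conjugation computation collapses the expression to $\overline{(M_+(-\overline{z}) - i)/(M_+(-\overline{z})+i)} = \overline{s_{(\dot A, A)}(-\overline{z})}$, which is \eqref{xiloCC}. I do not anticipate a genuine obstacle: the only delicate point is the bookkeeping in the middle step, namely inserting the Cayley transform with the correct orientation and invoking both the norm equality $\|m\| = \|n\|$ and the phase-independence of $M$ to legitimize computing $M_-$ through $m$ rather than through a pair $(h_+, h_-)$ of deficiency elements tied by $h_+ - h_- \in \Dom(A)$.
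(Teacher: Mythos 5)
Your proposal is correct and follows essentially the same route as the paper's own proof: the Cayley transform $(A-iI)(A+iI)^{-1}$ maps $n\in\Ker((\dot A)^*-iI)$ to an admissible deficiency element $m$ for $-\dot A$, the factors commute as functions of $A$ and cancel, the Schwarz reflection principle gives \eqref{xilC}, and \eqref{blog} then yields \eqref{xiloCC}. The only cosmetic difference is that you make explicit the phase-independence of $M$ (stated in the paper right after \eqref{WTF}) to justify computing $M_-$ through $m$, a point the paper's proof uses tacitly.
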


\section{The invariance principle }
 The main goal of this section is to establish  an  invariance principle for  the characteristic function  of a triple of operators under linear transformations of the operators from  the triple.

Introduce    the class  $\fD(\cH)$ of
 maximal dissipative  unbounded densely defined operators $\widehat A $,
($\widehat A \ne (\widehat A)^* $),
 in the Hilbert space $\cH$  such that
$$\dot A= \widehat A |_{\Dom (\widehat A)\cap \Dom(\widehat A^*)}$$   is a densely defined symmetric operator with deficiency indices $(1,1)$. In this case,
$$
\dot A\subset \widehat A\subset  (\dot A)^*
$$
and therefore $\widehat A$ is automatically a quasi-selfadjoint extension of $\dot A$ (see, e.g., \cite{MT-S}).

If $f(z)$ is the affine transformation  $
f(z)=az+b
$,
introduce the triple  $f(\fA)$ as
 $$
 f(\fA)=(f(\dot A),
f(\widehat A), f(A)).
  $$

\begin{theorem}\label{coinvv}
Let $\fA=(\dot A, \widehat A, A)$ be a triple such that $\widehat A\in \fD(\cH)$.
Suppose that
$
f(z)=az+b
$
with $a, b\in \bbR$, $a>0$, is an affine transformation.

Let $M(z)$ be the Weyl-Titchmarsh function associated with the pair $(\dot A, A)$.
Then  the von Neumann parameters $\varkappa$ and $  \varkappa'$ of the triples $\fA$ and $f(\fA)$ are related as
\begin{equation}\label{kkllD}
\frac{1+\varkappa}{1-\varkappa}
=\frac{m-\varkappa'\overline{m}}{i(1-\varkappa')},
\end{equation}
where
$$
m=M(f^{-1}(i)).
$$

Moreover,  the characteristic functions   $ S_{f(\fA)}(z)$ and $ S_{\fA}(z)$ are related as
\begin{equation}\label{inveq1}
S_{f(\fA)}(f(z))=\Theta_f S_{\fA}
(z), \quad z\in \bbC_+,
\end{equation}
where
 $$
\Theta_f= \left (\frac{1-\varkappa}{1-\overline{\varkappa}}\right )^{-1}\cdot \frac{1-\varkappa'}{1-\overline{\varkappa'}}.
 $$ is   a unimodular factor.
 In particular,  $\Theta_f $ continuously depends on $f$.

\end{theorem}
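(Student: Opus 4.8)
The plan is to pass to the functional model of the triple and reduce everything to an explicit computation with rational deficiency elements. By Remark \ref{mnogopros} (together with Theorem \ref{prostota0}) both the characteristic function and the von Neumann parameter depend only on the prime part of $\dot A$, and affine transformations respect the prime decomposition, so I may assume $\dot A$ is prime and invoke Theorem \ref{unitar} of Appendix C: without loss $A=\cB$ is multiplication by $\lambda$ on $L^2(\bbR;d\mu)$, where $\mu$ is the representing measure of $M=M_{(\dot A,A)}$, $\dot A=\dot\cB$, and $\dom(\widehat A)=\dom(\dot\cB)\,\dot{+}\,\linspan\{\tfrac{1}{\lambda-i}-\varkappa\tfrac{1}{\lambda+i}\}$ by \eqref{nacha3}. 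Since $a,b$ are real, $(f(\dot A))^*=f((\dot A)^*)$, whence $\Ker((f(\dot A))^*-wI)=\Ker((\dot A)^*-f^{-1}(w)I)$; in the model the deficiency elements of $f(\dot A)$ at $\pm i$ are thus proportional to $\tfrac{1}{\lambda-f^{-1}(\pm i)}$, with $f^{-1}(-i)=\overline{f^{-1}(i)}$.

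First I would compute the von Neumann parameter $\varkappa'$ of $f(\fA)$. Because $\int\frac{d\mu}{|\lambda-z|^2}$ depends only on $\Re z$ and $|\Im z|$, the equinormalized deficiency elements of $f(\dot A)$ satisfying the reference condition $G_+-G_-\in\dom(f(A))=\dom(\cB)$ may be taken to be $G_\pm=\alpha(\lambda-f^{-1}(\pm i))^{-1}$ with a single common constant $\alpha$ (the tail-cancellation argument of Remark \ref{snoska} forces the phases to align), which then cancels. The parameter $\varkappa'$ is determined by $G_+-\varkappa'G_-\in\dom(f(\widehat A))=\dom(\widehat A)$, i.e. there is $c\in\bbC$ with
$$
\frac{1}{\lambda-f^{-1}(i)}-\varkappa'\frac{1}{\lambda-\overline{f^{-1}(i)}}-c\Big(\frac{1}{\lambda-i}-\varkappa\frac{1}{\lambda+i}\Big)\in\dom(\dot\cB).
$$
Membership in $\dom(\cB)$ forces the $1/\lambda$ tails to cancel, giving $1-\varkappa'=c(1-\varkappa)$; membership in $\dom(\dot\cB)$ requires vanishing $\mu$-integral, and since the coefficients already sum to zero the regularizers $\frac{\lambda}{1+\lambda^2}$ drop out, so integrating term by term and using $M(f^{-1}(i))=m$, $M(\overline{f^{-1}(i)})=\overline m$ (Schwarz reflection), $M(\pm i)=\pm i$ yields $m-\varkappa'\overline m=ci(1+\varkappa)$. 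Eliminating $c$ gives exactly \eqref{kkllD}.

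Then I would establish \eqref{inveq1}. From Lemma \ref{masha} and \eqref{blog} the Weyl--Titchmarsh function of $f(\fA)$ satisfies $M_{(f(\dot A),f(A))}(f(z))=\frac{M(z)-\Re(m)}{\Im(m)}$. Writing both characteristic functions through \eqref{charweyl}, $S(z)=-\frac{1-\varkappa}{1-\overline\varkappa}\cdot\frac{M(z)-\omega}{M(z)-\overline\omega}$ with $\omega=i\frac{1+\varkappa}{1-\varkappa}$, and likewise for $S_{f(\fA)}$ with $\omega'=i\frac{1+\varkappa'}{1-\varkappa'}$, the substitution above gives $M_{(f(\dot A),f(A))}(f(z))-\omega'=\frac{M(z)-(\Re(m)+\Im(m)\omega')}{\Im(m)}$. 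A short computation shows that \eqref{kkllD} is precisely the identity $\Re(m)+\Im(m)\omega'=\omega$ (and its conjugate $\Re(m)+\Im(m)\overline{\omega'}=\overline\omega$); granting this, the factor $1/\Im(m)$ cancels between numerator and denominator and $S_{f(\fA)}(f(z))$ becomes a constant multiple of $\frac{M(z)-\omega}{M(z)-\overline\omega}$, whence $S_{f(\fA)}(f(z))=\Theta_f S(z)$ with $\Theta_f=\big(\frac{1-\varkappa}{1-\overline\varkappa}\big)^{-1}\frac{1-\varkappa'}{1-\overline\varkappa'}$. Unimodularity is immediate (each factor is a number over its conjugate), and solving \eqref{kkllD} as $\varkappa'=\frac{m-iP}{\overline m-iP}$ with $P=\frac{1+\varkappa}{1-\varkappa}$ exhibits $\varkappa'$, hence $\Theta_f$, as a continuous function of $f$ through $m=M(f^{-1}(i))$ (and confirms $|\varkappa'|<1$, the denominator being nonzero).

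The main obstacle is the von Neumann parameter computation: one must cleanly separate the two membership conditions (the $1/\lambda$-tail cancellation needed for $\dom(\cB)$ versus the vanishing $\mu$-integral needed for $\dom(\dot\cB)$) and justify the term-by-term integration against the \emph{infinite} measure $\mu$, which is legitimate exactly because the summed coefficients vanish so that the non-integrable regularizers $\frac{\lambda}{1+\lambda^2}$ disappear. Once \eqref{kkllD} is in hand, the remainder is routine M\"obius algebra.
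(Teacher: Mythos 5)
Your proposal is correct and follows essentially the same route as the paper's proof: reduction to a prime operator, passage to the functional model of Theorem \ref{unitar} with the rational deficiency elements $G_\pm(\lambda)=(\lambda-f^{-1}(\pm i))^{-1}$, derivation of \eqref{kkllD} from the two membership conditions (tail cancellation in $\dom(\cB)$ against the infinite measure $\mu$, and the vanishing $\mu$-integral, regularized by $\tfrac{\lambda}{1+\lambda^2}$), followed by M\"obius algebra for \eqref{inveq1}. The only cosmetic differences are that the paper packages the membership computation in the identity \eqref{starstar} and reads off the coefficient ratio instead of eliminating your auxiliary constant $c$, and performs the final comparison through the Liv\v{s}ic-function form \eqref{ch12} of the characteristic function rather than the Weyl--Titchmarsh form \eqref{charweyl}; both versions rest on the same integral identities and the same algebra.
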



\begin{proof}
As in the proof of Lemma \ref{masha},  from the very beginning one can assume that $\dot A$ is a prime symmetric operator.

Let $\mu(d\lambda)$ denote the representing measure for the   Weyl-Titchmarsh function $M(z)$.
Without loss of generality (see Theorem \ref{unitar}) one may assume that
$A$ is the multiplication operator by independent variable in $L^2(\bbR, d\mu)$ and $\dot A$ coincides with  its restriction on
$$
\Dom (\dot A)=\left \{h\in \Dom (A) \,\bigg  |\int_\bbR h(\lambda)d\mu(\lambda)=
0\right \}.
$$

In this case, from Theorem \ref{modeldef} (see \eqref{defelrep}), we know that  the functions
$$
g_+(\lambda)=\frac{1}{\lambda-i} \quad \text{and }\quad g_-(\lambda)=\frac{1}{\lambda+i}
$$
form a basis in the deficiency subspace,
$$
g_\pm\in \Ker((\dot A)^*\mp i I), \quad \|g_\pm\|=1.
$$
From \eqref{kone} is also follows that
\begin{equation}\label{ifff}
g_+-g_-\in \dom (A).
\end{equation}

Clearly,  the functions
$$
G_\pm(\lambda)=\frac{1}{\lambda-f^{-1}(\pm i)}
$$
have the properties
$$
G_\pm \in \Ker((f(\dot A)^*)\mp I),\quad
\|G_+\|=\|G_-\|,$$
 and
\begin{equation}\label{gpm11}
G_+-G_-\in \dom (A)=\Dom(f(A)).
\end{equation}

From the definition of the  von Neumann parameters  $\varkappa, \varkappa'\in \bbD $ for the triples $\fA$  and $f(\fA)$ it follows that
\begin{equation}\label{initiD}
g_+-\varkappa g_-\in \dom (\widehat A)
\end{equation}
and
 \begin{equation}\label{initiiD}
G_+-\varkappa' G_-\in \dom (f(\widehat A))=\dom(\widehat A).
\end{equation}

Introduce the function
\begin{equation}\label{danu}
m(z)=M(f^{-1}(z)), \quad \Im(z)\ne 0.
\end{equation}

In order to
establish the relationship  \eqref{kkllD} between the von Neumann parameters,
 notice that
\begin{equation}\label{starstar}
G_\pm-\left [m(\pm i)\frac{g_+-g_-}{2i}+\frac{g_++g_-}{2}\right ]
\in \Dom(f(\dot A)),
\end{equation}
that is,
$$
\int_\bbR \left (G_\pm(\lambda)-\left [m(\pm i)\frac{g_+(\lambda)-g_-(\lambda)}{2i}+
\frac{g_+(\lambda)+g_-(\lambda)}{2}\right ]\right )d\mu(\lambda)=0.
$$
$$
$$

Indeed, since
$$
\frac{g_+(\lambda)-g_-(\lambda)}{2i}=\frac{1}{\lambda^2+1},
\quad
\frac{g_+(\lambda)+g_-(\lambda)}{2}=\frac{\lambda}{\lambda^2+1},
$$
and
$$
G_\pm(\lambda)=\frac{1}{\lambda-f^{-1}(\pm i)},
$$
one needs to verify the equality
$$\int_\bbR \left (\frac{1}{\lambda-f^{-1}(\pm i)}-\left [\frac{M(f^{-1}(\pm i))}
{\lambda^2+1}+
\frac{\lambda}{\lambda^2+1}\right ]\right )d\mu(\lambda)=0,
$$
which simply follows from the observations that
$$
\int_\bbR\frac{M(f^{-1}(\pm i))}
{\lambda^2+1}d\mu(\lambda)=M(f^{-1}(\pm i))
$$
and
$$
\int_\bbR \left (\frac{1}{\lambda-f^{-1}(\pm i)}-
\frac{\lambda}{\lambda^2+1}\right )d\mu(\lambda)=M(f^{-1}(\pm i)).
$$

Combining \eqref{initiiD} and \eqref{starstar} we get that
\begin{align*}
h&=\left (m\frac{g_+-g_-}{2i}+\frac{g_++g_-}{2}\right )
-\varkappa'\left ( \overline{m}\frac{g_+-g_-}{2i}+\frac{g_++g_-}{2}\right )
\\&=
\frac{m-\varkappa'\overline{m}+i -i\varkappa'}{2i}g_+
+\frac{i-i\varkappa'-m+\varkappa'\overline{m} }{2i}g_- \in  \Dom(\widehat A).
\end{align*}

Therefore, in view of \eqref{initiD},
\begin{equation}\label{varkap}
\varkappa=\frac{m-\varkappa'\overline{m}-i+i\varkappa'}
{m-\varkappa'\overline{m}+i -i\varkappa'}
=\frac{m-i-\varkappa'(\overline{m}-i)}
{m+i-\varkappa'(\overline{m} +i)}
\end{equation}
and
\eqref{kkllD} follows.

From   \eqref{initiD} and \eqref{initiiD} it follows that
 the characteristic function $S_{\fA}(z)$ associated with the triple $\fA=(\dot A, \widehat A,A)$
(see \eqref{ch12}) can be evaluated as
$$
S_{\fA}(z)=\frac{s_{(\dot A, A)}(z)-\varkappa}
{\overline{ \varkappa}s_{(\dot A, A)}(z) -1}.
$$

Representing $s_{(\dot A,A)}(z)$ via the
Weyl-Titchmarsh function $M(z)$,
$$
s_{(\dot A,A)}(z)=\frac{M(z)-i}{M(z)+i},
$$
one concludes that
$$
S_{\fA}(z)=\frac{\frac{M(z)-i}{M(z)+i}-\varkappa}
{\overline{ \varkappa}\frac{M(z)-i}{M(z)+i} -1}.
$$
Therefore, taking into account \eqref{danu}, one obtains
\begin{equation}\label{zxcvD}
S_{\fA}(f^{-1}(z))=\frac{\frac{m(z)-i}{m(z)+i}-\varkappa}
{\overline{ \varkappa}\frac{m(z)-i}{m(z)+i} -1}.
\end{equation}

In a similar way, using that  $$
G_+-G_-\in \Dom(f(A))
\quad \text{and}\quad
G_+-\varkappa G_+\in \Dom (f(\widehat A)),
$$
one also gets
$$
S_{f(\fA)}(z)=\frac{s_{(f(\dot A),f( A))}(z)-\varkappa'}
{\overline{ \varkappa'}s_{(f((\dot A), f(A))}(z) -1}.
$$

By Lemma \ref{masha},
$$
s_{(f(\dot A),f( A))}(z)=\frac{m(z)-m}{m(z)-\overline{m}},
$$
so that
\begin{equation}\label{vcxzD}
S_{f(\fA)}(z)=
\frac{\frac{m(z)-m}{m(z)-\overline{m}}-\varkappa'}
{\overline{ \varkappa'} \frac{m(z)-m}{m(z)-\overline{m}}-1}.
\end{equation}

From \eqref{zxcvD} one gets that
\begin{align}
S_{\fA}(f^{-1}(z))&=
\frac{\frac{m(z)-i}{m(z)+i}-\varkappa}
{\overline{ \varkappa}\frac{m(z)-i}{m(z)+i} -1}
=
\frac{{m(z)-i}-\varkappa (m(z)+i)}
{\overline{ \varkappa}(m(z)-i)-(m(z)+i) }
\label{ssddD}
\\&\nonumber \\&
=\frac{(1-\varkappa)m(z)-i(1+\varkappa )}
{(\overline{ \varkappa}-1)m(z)-i(1+\overline{\varkappa} )}
=\frac{1-\varkappa}{\overline{\varkappa}-1}\cdot
\frac{m(z)-i\frac{1+\varkappa}{1-\varkappa}}
{m(z)+i\frac{1+\overline{\varkappa}}{1-\overline{\varkappa}}}.
\nonumber\end{align}

A similar computation for the right hand side of \eqref{vcxzD} yields
\begin{align}
S_{f(\fA)}(z)&=
\frac{\frac{m(z)-m}{m(z)-\overline{m}}-\varkappa'}
{\overline{ \varkappa'} \frac{m(z)-m}{m(z)-\overline{m}}-1}
=
\frac{{m(z)-m}-\varkappa' (m(z)-\overline{m})}
{\overline{ \varkappa'}(m(z)-m)-(m(z)-\overline{m}) }
\label{ddssD}
 \\&
=\frac{(1-\varkappa')m(z)-(m-\varkappa'\overline{m})}
{(\overline{ \varkappa'}-1)m(z)+(\overline{m}-\overline{\varkappa'}m )}
=\frac{1-\varkappa'}{\overline{\varkappa'}-1}\cdot
\frac{m(z)-\frac{m-\varkappa'\overline{m}}{1-\varkappa'}}
{m(z)+\frac{\overline{m}-\overline{\varkappa'}m}
{\overline{\varkappa'}-1}}
\nonumber
\\&\nonumber\\&
=\frac{1-\varkappa'}{\overline{\varkappa'}-1}\cdot
\frac{m(z)-i\frac{1+\varkappa}{1-\varkappa}}
{m(z)+i\frac{1+\overline{\varkappa}}{1-\overline{\varkappa}}},
\nonumber\end{align}
where we used the relation \eqref{kkllD} on the last step.

Comparing  \eqref{ssddD} and \eqref{ddssD}, we obtain
$$
S_{f(\fA)}(f(z))=\left (\frac{1-\varkappa}{1-\overline{\varkappa}}\right )^{-1}\cdot \frac{1-\varkappa'}{1-\overline{\varkappa'}}
\cdot S_{\fA}(z),$$
which proves \eqref{inveq1}.

\end{proof}


We conclude this section by establishing an invariance principle under the  anti-holomorphic  transformation (involution) of the   triple
$$
\fA=(\dot A, \widehat A, A)\longrightarrow -\fA^*= (-\dot A, -(\widehat A
)^*, -A).$$

\begin{theorem}\label{comin}
Let $\dot A$ be a densely defined,  closed symmetric operator
  with deficiency indices $(1,1)$, $A$ its   self-adjoint
extension and $\widehat A$  quasi-selfadjoint
dissipative extension of $\dot A$.

Then the characteristic functions associated with the triples $\fA=(\dot A,\widehat A, A)$
and $-\fA^*=(-\dot A, -(\widehat A)^*, -A)$  are related as follows
\begin{equation}\label{charinv}
S_{-\fA^*}(z)=\overline{S_{\fA}(-\overline{z})}.
\end{equation}

\end{theorem}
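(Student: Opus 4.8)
The statement to prove is the involution invariance principle for characteristic functions, namely that if $\fA=(\dot A,\widehat A, A)$ is a triple then the triple $-\fA^*=(-\dot A, -(\widehat A)^*, -A)$ has characteristic function
$$
S_{-\fA^*}(z)=\overline{S_{\fA}(-\overline{z})}, \quad z\in\bbC_+.
$$
My strategy is to reduce everything to the representation \eqref{ch12} of the characteristic function in terms of the Liv\v{s}ic function $s_{(\dot A,A)}(z)$ and the von Neumann parameter $\varkappa$, and then to track how each of these two ingredients transforms under the passage $\fA\mapsto -\fA^*$. The transformation law for the Liv\v{s}ic function is already available: Lemma \ref{minus} (see \eqref{xiloCC}) gives
$$
s_{(-\dot A,-A)}(z)=\overline{s_{(\dot A,A)}(-\overline{z})}, \quad z\in\bbC_+.
$$
So the bulk of the work reduces to understanding the von Neumann parameter of the triple $-\fA^*$.

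\textbf{Tracking the von Neumann parameter.} First I would fix a basis $g_\pm\in\Ker((\dot A)^*\mp iI)$, $\|g_\pm\|=1$, normalized so that
$$
g_+-g_-\in\Dom(A) \quad\text{and}\quad g_+-\varkappa g_-\in\Dom(\widehat A)
$$
for the von Neumann parameter $\varkappa$ of $\fA$. The key observation is that $(-\dot A)^*=-(\dot A)^*$, so that $g_\mp$ now serves as a deficiency element for $-\dot A$ at $\pm i$: indeed $g_-\in\Ker((\dot A)^*+iI)=\Ker((-\dot A)^*-iI)$ and similarly $g_+\in\Ker((-\dot A)^*+iI)$. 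Thus the natural basis for the triple $-\fA^*$ is $\{g_-,g_+\}$ with the roles of ``$+$'' and ``$-$'' exchanged. Since $g_+-g_-\in\Dom(A)=\Dom(-A)$, the reference normalization condition $(g_-)-(g_+)\in\Dom(-A)$ holds up to sign, which only affects the reference extension by a harmless phase that I would absorb via Lemma \ref{ind0}. The main point is to compute the parameter describing $\Dom(-(\widehat A)^*)$: I would start from $g_+-\varkappa g_-\in\Dom(\widehat A)$, apply the computation in the Remark following Proposition \ref{NeuLif} (or directly the relation $g_z-\gamma(z)g_{\overline z}\in\Dom(\widehat A)$), and deduce that the element of $\Dom((\widehat A)^*)$ spanned together with $\Dom(\dot A)$ is $g_- - \overline{\varkappa}^{-1}g_+$ or, after renormalizing in the new basis, corresponds to von Neumann parameter $\overline{\varkappa}$ for $-\fA^*$. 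I expect the clean outcome to be that the new parameter is exactly $\overline{\varkappa}$, consistent with \eqref{sac} and the already-established fact $|S_{(\dot A,\widehat A,A)}(i)|=|\varkappa|$.

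\textbf{Assembling the identity.} Once the two transformation laws are in hand — $s_{(-\dot A,-A)}(z)=\overline{s_{(\dot A,A)}(-\overline z)}$ and new parameter $\overline\varkappa$ — I would substitute directly into \eqref{ch12}. Writing $s(z)=s_{(\dot A,A)}(z)$ for brevity, the definition gives
$$
S_{-\fA^*}(z)=\frac{s_{(-\dot A,-A)}(z)-\overline{\varkappa}}{\varkappa\, s_{(-\dot A,-A)}(z)-1}
=\frac{\overline{s(-\overline z)}-\overline{\varkappa}}{\varkappa\,\overline{s(-\overline z)}-1}.
$$
Taking complex conjugates of the right-hand side of \eqref{ch12} evaluated at $-\overline z$ yields
$$
\overline{S_{\fA}(-\overline z)}=\overline{\left(\frac{s(-\overline z)-\varkappa}{\overline{\varkappa}\,s(-\overline z)-1}\right)}
=\frac{\overline{s(-\overline z)}-\overline{\varkappa}}{\varkappa\,\overline{s(-\overline z)}-1},
$$
which is identical to the expression for $S_{-\fA^*}(z)$, completing the proof. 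The main obstacle I anticipate is purely bookkeeping: pinning down precisely how the von Neumann parameter transforms under the involution, including the correct conjugation and any unimodular phase coming from the change of reference extension. The safest route is to do this computation intrinsically via the domain descriptions rather than through the model, since any phase ambiguity is controlled by Lemma \ref{ind0} and does not survive in the final formula because the characteristic function of a triple is a genuine (phase-fixed) invariant once the normalization \eqref{rss} is imposed.
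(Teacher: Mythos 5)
Your proposal is correct and follows essentially the same route as the paper's proof: fix the normalized basis $g_\pm$, observe that under the involution the roles of $g_+$ and $g_-$ swap while the von Neumann parameter becomes $\overline{\varkappa}$ (the paper records this as $g_--\overline{\varkappa}\,g_+\in \Dom((\widehat A)^*)=\Dom((-\widehat A)^*)$), and then combine Lemma \ref{minus} with formula \eqref{ch12} exactly as you do. The only slip is in your hedge: the candidate spanning element $g_--\overline{\varkappa}^{-1}g_+$ would correspond to parameter $1/\overline{\varkappa}$, of modulus greater than one, contradicting dissipativity of $-(\widehat A)^*$; a direct boundary-form computation for the adjoint domain yields $g_--\overline{\varkappa}\,g_+$, i.e.\ precisely the parameter $\overline{\varkappa}$ that you in fact use in your final (correct) algebraic assembly.
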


\begin{proof}

Let $g_\pm$ be normalized deficiency elements of $\dot A$,
$$
g_\pm\in \Ker ((\dot A)^*\mp I),
$$
such that
$$
g_+-g_-\in \Dom (A)
\quad \text{and}\quad
g_+-\varkappa g_-\in \Dom (\widehat A).
$$
Clearly,
$$
g_\pm \in \Ker ((-\dot A)^*\pm I),
$$

$$
g_--g_+\in \Dom (A)=\Dom(-A),
$$and
$$
g_--\overline{\varkappa} g_+\in \Dom ((\widehat A)^*)=\Dom ((-\widehat A)^*).
$$
Hence, using Lemma \ref{minus}, one obtains
\begin{align*}
S_{- \fA^*}(z)&=\frac{s_{(-\dot A, -A)}(z)-\overline{\varkappa}}
{\varkappa s_{(-\dot A, -A)}(z)-1}
=\frac{\overline{s(\widehat A, A)(-\overline{z})}-\overline{\varkappa}}
{\varkappa\overline{s_{(\dot  A, A)}(-\overline{z})} -1}
\\&
=\overline{S_{\fA}(-\overline{z})}.
\end{align*}
The proof is complete.
\end{proof}

\section{The operator coupling and the multiplication theorem}\label{littem}

Introduce the class $\fD_0(\cH)$  of maximal dissipative   densely defined operators $\widehat A $ in the Hilbert space $\cH$    of the form
$$ \widehat A =A+itP,$$
where $A=\Re( \widehat A)$ is a self-adjoint operator,  $t>0$, and $P$ is  a rank-one orthogonal projection \cite{ABT,Brod,Lv1}.

Introduce the concept of an operator coupling of two operators from the classes  $\fD_0(\cH_1)$ and $\fD_0(\cH_2)$.

\begin{definition}\label{opcupd0}
Suppose   that
  $\widehat A_1\in \fD_0(\cH_1)$   and $\widehat A_2\in \fD_0(\cH_2)$ are maximal dissipative  operators acting in the Hilbert spaces $\cH_1$ and $\cH_2$, respectively.

We say that a maximal dissipative  operator  $\widehat A$ from the class  $\ \fD_0(\cH_1\oplus \cH_2)$ is an operator coupling   of $\widehat A_1$ and $\widehat A_2$,
in writing, $$\widehat A=\widehat A_1\uplus \widehat A_2,$$   if
$ \widehat A-(\widehat A_1\oplus \widehat A_2)$ is a rank-one operator,
 the Hilbert space $\cH_1$ is invariant for $\widehat A$, and
   the restriction of $\widehat  A$ on $\cH_1$ coincides with the dissipative operator $\widehat A_1$. That is,
   $$
   \Dom(\widehat A)\cap \cH_1=\Dom (\widehat A_1)
   $$
   and
$$
\widehat A|_{\cH_1\cap \Dom({\widehat A_1)}}=\widehat A_1.
$$

\end{definition}

\begin{theorem} [{cf. \cite{ABT,BL60,Brod,LP}}]
\label{opcoupbdd}   Let
$\widehat A=\widehat A_1\uplus \widehat A_2$ be  an operator coupling  of two  maximal dissipative operators
  $\widehat A_k \in \fD_0(\cH_k)$, $k=1,2$.
  Then the characteristic function of an operator coupling  $\widehat A_1\uplus \widehat A_2$
  coincides with the product of the ones
  of  $ \widehat A_1$ and $\widehat A_2$,
  \begin{equation}\label{proizvas}S_{\widehat A_1\uplus \widehat A_2 }(z)=S_{\widehat A_1}(z)\cdot S_{\widehat A_2}(z), \quad z\in \bbC_+.
\end{equation}

\end{theorem}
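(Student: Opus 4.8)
The plan is to compute the characteristic function of the coupling directly from the resolvent representation \eqref{chfbd}, exploiting the block--triangular structure that the coupling imposes on $\widehat A$. Write $\cH=\cH_1\oplus\cH_2$ and recall that for $\widehat A_k\in\fD_0(\cH_k)$ one has $\Im(\widehat A_k)=t_k(\cdot,g_k)g_k$ with $\|g_k\|=1$, $t_k>0$, and, by \eqref{chfbd}, $S_{\widehat A_k}(z)=1+2it_k(((\widehat A_k)^*-zI)^{-1}g_k,g_k)$ for $z\in\bbC_+$ (the resolvents exist there by \eqref{vkluch}). The first and decisive task is to turn the coupling axioms of Definition \ref{opcupd0} into an explicit operator matrix for $\widehat A$.

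First I would establish the block form. Since $\cH_1$ is invariant for $\widehat A$ and $\widehat A|_{\cH_1}=\widehat A_1$, the operator $\widehat A$ is block upper--triangular with $(1,1)$--entry $\widehat A_1$; the cascade nature of the coupling forces the rank--one difference $R=\widehat A-(\widehat A_1\oplus\widehat A_2)$ to map $\cH$ into $\cH_1$, so that the compression of $\widehat A$ to $\cH_2$ is exactly $\widehat A_2$. Hence
$$\widehat A=\begin{pmatrix}\widehat A_1 & B\\ 0 & \widehat A_2\end{pmatrix},\qquad B:\cH_2\to\cH_1 .$$
The requirement $\widehat A\in\fD_0(\cH)$, i.e. $\Im(\widehat A)=t(\cdot,g)g$ is rank one with $\|g\|=1$, then pins down the remaining data: in blocks $\Im(\widehat A)=\begin{pmatrix}\Im\widehat A_1 & B/2i\\ -B^*/2i & \Im\widehat A_2\end{pmatrix}$, and writing $g=(g^{(1)},g^{(2)})$ and matching the two diagonal blocks against $t(\cdot,g^{(1)})g^{(1)}$ and $t(\cdot,g^{(2)})g^{(2)}$ forces $g^{(1)}=\sqrt{t_1/t}\,g_1$ and $g^{(2)}=\sqrt{t_2/t}\,g_2$ (after a harmless choice of phases of $g_1,g_2$), while the off--diagonal block gives $B=2it(\cdot,g^{(2)})g^{(1)}$; normalization $\|g\|^2=t_1/t+t_2/t=1$ yields the energy--balance identity $t=t_1+t_2$. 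This channel--matching — the output channel $g^{(2)}$ of the lower system feeding the input channel $g^{(1)}$ of the upper one — is the structural heart of the argument.

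With the block form in hand the computation is mechanical. Because $(\widehat A)^*$ is block lower--triangular, for $z\in\bbC_+$
$$((\widehat A)^*-zI)^{-1}=\begin{pmatrix}R_1(z) & 0\\ -R_2(z)B^*R_1(z) & R_2(z)\end{pmatrix},\qquad R_k(z)=((\widehat A_k)^*-zI)^{-1}.$$
Substituting into $S_{\widehat A}(z)=1+2it(((\widehat A)^*-zI)^{-1}g,g)$ and using $B^*=-2it(\cdot,g^{(1)})g^{(2)}$ splits the pairing into three pieces. Setting $P:=2it(R_1g^{(1)},g^{(1)})$ and $Q:=2it(R_2g^{(2)},g^{(2)})$, the two diagonal contributions reduce (via $\|g^{(k)}\|^2=t_k/t$, which cancels the prefactor $t$ into $t_k$) to $P=S_{\widehat A_1}(z)-1$ and $Q=S_{\widehat A_2}(z)-1$, while the cross term evaluates to $2it\cdot 2it\,(R_1g^{(1)},g^{(1)})(R_2g^{(2)},g^{(2)})=PQ$. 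Hence $S_{\widehat A}(z)-1=P+PQ+Q=(1+P)(1+Q)-1$, that is $S_{\widehat A}(z)=S_{\widehat A_1}(z)S_{\widehat A_2}(z)$, which is \eqref{proizvas}.

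The main obstacle is thus the structural first step rather than the final algebra: one must extract the precise operator matrix — in particular that the compression of $\widehat A$ to $\cH_2$ returns $\widehat A_2$ (equivalently $\Ran R\subseteq\cH_1$), together with the exact channel vectors and off--diagonal term — from the coupling conditions combined with the rank--one constraint on $\Im(\widehat A)$. It is exactly this co--invariance/cascade condition that guarantees the product; without it the $(2,2)$--block would instead be a perturbation $\widehat A_2+C$ and the cross term would only produce $S_{\widehat A_2+C}$, so the factorization would fail. Once the channel matching $g^{(1)}=\sqrt{t_1/t}\,g_1$, $g^{(2)}=\sqrt{t_2/t}\,g_2$, $B=2it(\cdot,g^{(2)})g^{(1)}$ is secured, the factorization drops out of the block resolvent identity. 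I would also remark that \eqref{proizvas} is the rank--one dissipative instance of the Brodskii--Liv\v{s}ic multiplication theorem for operator colligations, and can alternatively be deduced from that theory (cf. \cite{Brod,BL60,LP}).
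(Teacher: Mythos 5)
Your proof is correct and, in its structural core, follows the same route as the paper's: both arguments reduce the coupling to the block upper-triangular ``cascade'' form with diagonal entries $\widehat A_1,\widehat A_2$ and rank-one off-diagonal channel $B=2it(\cdot,g^{(2)})g^{(1)}$, with $g^{(k)}=\sqrt{t_k/t}\,g_k$ and $t=t_1+t_2$; this is precisely the normal form \eqref{kokol} that the paper derives, your ``harmless phase choices'' playing the role of the paper's unimodular factor $\Theta$. The difference lies in how the product formula is then extracted: the paper stops at this normal form and invokes the Multiplication Theorem of Brodskii (\cite{Brod}, Theorem 2.1), ``literally repeating'' its proof, whereas you finish with an explicit, self-contained computation — inverting the block lower-triangular $(\widehat A)^*-zI$ and expanding $2it(((\widehat A)^*-zI)^{-1}g,g)=P+Q+PQ$ with $P=S_{\widehat A_1}(z)-1$, $Q=S_{\widehat A_2}(z)-1$. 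I checked this computation, including the sign of the cross term produced by $B^*=-2it(\cdot,g^{(1)})g^{(2)}$; it is correct, and it buys independence from the external reference, which is a genuine improvement in readability.

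The one soft spot — which you candidly flag as the ``structural heart'' — is the claim that $\Ran(R)\subseteq\cH_1$, equivalently that the $(2,2)$-compression of $\widehat A$ is exactly $\widehat A_2$. You attribute this to ``the cascade nature of the coupling'', and the paper makes the very same jump when it asserts ``in particular $(\widehat A)^*P_2=(\widehat A_2)^*P_2$''. Neither statement actually follows from Definition \ref{opcupd0} as written: the operator $\widehat A=\widehat A_1\oplus\Re(\widehat A_2)$ satisfies every listed condition — it lies in $\fD_0(\cH_1\oplus\cH_2)$ because $\Im(\widehat A)=t_1(\cdot,g_1)g_1$, the difference $\widehat A-(\widehat A_1\oplus\widehat A_2)=-it_2(\cdot,g_2)g_2$ is rank one, $\cH_1$ is invariant and the restriction to it is $\widehat A_1$ — yet its characteristic function equals $S_{\widehat A_1}(z)$, not the product. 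So the co-invariance of $\cH_2$, the bounded analogue of condition (ii) in Definition \ref{defcoup}, is an additional hypothesis that must be read into the definition; it cannot be recovered from the rank-one constraint on $\Im(\widehat A)$ together with the invariance of $\cH_1$ alone (your ``matching of diagonal blocks'' already presupposes the triangular form with $(2,2)$-entry $\widehat A_2$). Since the paper's own proof proceeds exactly as yours does on this point, your proposal is on par with it; for an airtight write-up, state the co-invariance of $\cH_2$ explicitly as part of the coupling hypothesis, after which your block matching of $\Im(\widehat A)$ and the resolvent computation do close the argument.
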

\begin{proof}
Suppose that
$$
\widehat A_k= A_k+i(\cdot, g_k )g_k, \quad k=1,2,
$$
where $A_k=A_k^*$, $g_k\in \cH_k$.  Denote by $P_k$ ($k=1, 2$) the orthogonal projections onto the subspaces $\cH_k$, respectively.
From the definition of an operator coupling it follows that
$$
\widehat A=\widehat A_1P_1+\widehat A_2P_2+(\cdot ,\widetilde g) g
$$
for some $g, \widetilde g\in \cH_1\oplus \cH_2$ and that
$$
\widehat AP_1=\widehat A_1P_1.
$$
In particular,
$$
(\widehat A)^*P_2=(\widehat A_2)^*P_2
$$
and therefore
\begin{equation}\label{otto}
\widetilde g\in \cH_2 \quad \text{and}\quad  g\in \cH_1.
\end{equation}

First we show that
$$
\Im (\widehat A)=(\cdot ,\phi)\phi,
$$
where
\begin{equation}\label{predd}
\phi=(\Theta_1g_1)\oplus (\Theta_2g_2)
\end{equation}
for some $|\Theta_k|=1$, $k=1,2$.  Indeed, we have
\begin{equation}\label{dlinno}
(\cdot, g_1)g_1+(\cdot, g_2)g_2+\frac{1}{2i }\left ((\cdot,\widetilde  g)g-(\cdot, g)\widetilde g\right )=(\cdot ,\phi)\phi.
\end{equation}
Introducing
$$
\phi_k=P_k \phi, \quad k=1,2,
$$
from \eqref{otto} and \eqref{dlinno}  it follows that
$$
|(g_k,g_k)|=|(g_k, \phi)|=|(g_k, \phi_k)|, \quad k=1,2,
$$
and then  we get \eqref{predd}.

Rewrite the equality \eqref{dlinno} one more time
$$
(\cdot, g_1)g_1+(\cdot, g_2)g_2+\frac{1}{2i }\left ((\cdot,\widetilde  g)g-(\cdot, g)\widetilde g\right )=(\cdot ,(\Theta_1g_1)\oplus (\Theta_2g_2)
)(\Theta_1g_1)\oplus (\Theta_2g_2).
$$
We get
$$
\frac{1}{2i }\left ((\cdot,\widetilde  g)g-(\cdot, g)\widetilde g\right )=\overline{\Theta_1}\Theta_2(\cdot ,g_1)
g_2+ \Theta_1\overline{\Theta_2}(\cdot ,g_2)
g_1
$$and therefore
$$
(\cdot,\widetilde  g)g=2i(\cdot,\phi_2)\phi_1.
$$
In particular, we have that
$$
\widehat A=( A_1+i(\cdot, \phi_1)\phi_1)P_1+( A_2+i(\cdot, \phi_2)\phi_2)P_2+2i(\cdot,\phi_2)\phi_1,
$$
$$
\Im(\widehat A )=(\cdot , (\phi_1+\phi_2))(\phi_1+\phi_2)
$$
and we arrive at the definition of an operator coupling as presented in  \cite[eq. (2.1)]{Br}.  Literally  repeating step by step the proof of the Multiplication Theorem
\cite[Theorem 2.1]{Brod} one justifies \eqref{proizvas}.
\end{proof}

\begin{remark}  We remark that an operator coupling of two dissipative operators from the classes $\fD_0(\cH_1)$  and $\fD_0(\cH_2)$ is not unique. In fact, we have shown that
an  operator coupling    $\widehat A_1\uplus \widehat  A_2$  of two dissipative operators
$$\widehat A_k= A_k+i(\cdot, g_k )g_k, \quad k=1,2,$$ is necessarily of the form
 \begin{equation}\label{kokol}
\widehat A_1\uplus \widehat A_2=( A_1+i(\cdot, g_1)g_1)P_1+( A_2+i(\cdot, g_2)g_2)P_2+2i\Theta (\cdot,g_2)g_1,
\end{equation}
for some $|\Theta|$=1.
Moreover, for any choice of $\Theta$ such that $|\Theta|=1$
the right hand side of \eqref{kokol} meets the requirements to be an operator coupling of $\widehat A_1$ and $\widehat A_2$.
\end{remark}

Recall that    the class  $\fD(\cH)$ consists of all
 maximal dissipative  unbounded densely defined operators $\widehat A $,
  \textcolor{red}{($\widehat A \ne (\widehat A)^* $),
 } in the Hilbert space $\cH$  such that
$$\dot A= \widehat A |_{\Dom (\widehat A)\cap \Dom((\widehat A)^*)}$$   is a densely defined symmetric operator with deficiency indices $(1,1)$
(see Appendix  F).

\begin{definition} [\cite{MT-MAT}] \label{defcoup}
Suppose   that
  $\widehat A_1\in \fD(\cH_1)$ and   $\widehat A_2\in \fD(\cH_2)$.
We say that a   $\widehat A\in \fD(\cH_1\oplus\cH_2)$ is an operator coupling   of $\widehat A_1$ and $\widehat A_2$,
in writing, $$\widehat A=\widehat A_1\uplus \widehat A_2,$$   if
\begin{itemize}
\item[(i)] the Hilbert space $\cH_1$ is invariant for $\widehat A$ and
   the restriction of $\widehat  A$ on $\cH_1$ coincides with the dissipative operator $\widehat A_1$, that is,
   $$
   \Dom(\widehat A)\cap \cH_1=\Dom (\widehat A_1),
   $$
$$
\widehat A|_{\cH_1\cap \Dom({\widehat A_1)}}=\widehat A_1,
$$
\end{itemize}
and
\begin{itemize}
\item[(ii)]the symmetric operator  $\dot A=   \widehat A|_{\Dom(\widehat A)\cap \Dom ((\widehat A)^*)}$ has the property
$$
\dot A\subset \widehat A_1\oplus (\widehat  A_2)^*.
 $$
 \end{itemize}
\end{definition}

 The corresponding multiplication theorem   for the class $ \fD(\cH)$  can be formulated as follows (see  \cite[Theorem 6.1, cf. Theorem 5.4]{MT-MAT}).

\begin{theorem}\label{opcoup}
Suppose   that $\widehat A=\widehat A_1\uplus \widehat A_2$ is an operator coupling  of two  maximal dissipative operators
  $\widehat A_k \in \fD(\cH_k)$, $k=1,2$. Denote by  $\dot A $, $\dot A_1$ and $\dot A_2$  the corresponding  symmetric operators with deficiency indices $(1,1)$, respectively.
That is,  $$\dot A=\widehat A|_{\Dom(\widehat A)\cap\Dom((\widehat A)^*)}
$$
and
$$
\dot A_k=\widehat A_k|_{\Dom(\widehat A_k)\cap\Dom((\widehat A_k)^*)}, \quad k=1,2.
$$

Then there exist self-adjoint reference operators  $A$, $A_1$, and $A_2$,  extending   $\dot A$, $\dot A_1$ and $\dot A_2$, respectively,  such that
\begin{equation}\label{proizvasunbdd}
S_{(\dot A,\widehat A_1\uplus \widehat A_2,  A)}(z)=S_{(\dot A_1,\widehat A_1,A_1)}(z)\cdot S_{(\dot A_2,\widehat A_2, A_2)}(z), \quad z\in \bbC_+.
\end{equation}

Moreover, for any operator coupling $\widehat A$ of $\widehat A_1$ and $\widehat A_2$, the multiplication rule
\begin{equation}\label{multkiappa1}\widehat \kappa(\widehat A)=\widehat \kappa (\widehat A_1)\cdot \widehat \kappa(\widehat A_2)
\end{equation}
holds.
Here $\widehat \kappa(\cdot )$ stands for  the absolute value of the von Neumann parameter of a dissipative  operator defined by \eqref{defmoduli}.

\end{theorem}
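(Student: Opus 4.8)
The plan is to reduce the statement to an analytic identity between characteristic functions and then to the bounded multiplication theorem, Theorem \ref{opcoupbdd}, wherever a transform to the class $\fD_0$ is available, handling the genuinely unbounded factors directly in the functional model. First I would pass to prime operators: by Remark \ref{mnogopros} the characteristic function of a triple depends only on the prime part of its symmetric operator and is a complete unitary invariant there (Theorem \ref{unitar}), so I may replace $\dot A_1,\dot A_2,\dot A$ by their prime parts without changing any of the three characteristic functions. This turns \eqref{proizvasunbdd} into the purely function-theoretic assertion that the reference extensions can be chosen so that the factorization of analytic contractions holds on $\bbC_+$.

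Next I would isolate the phase bookkeeping using the invariance principle. By Theorem \ref{coinvv} (and Lemma \ref{ind0}) a change of the self-adjoint reference operator multiplies the characteristic function of a triple by a unimodular constant. Hence the reference-independent content of \eqref{proizvasunbdd} is the additivity of the von Neumann logarithmic potentials of Definition \ref{logpotdef},
\[
\Gamma_{\widehat A}(z)=\Gamma_{\widehat A_1}(z)+\Gamma_{\widehat A_2}(z),\qquad z\in\bbC_+,
\]
which only involves $\log|S|$ and therefore only the dissipative operators $\widehat A,\widehat A_1,\widehat A_2$ themselves. Once this modulus identity is established, I would recover the exact product \eqref{proizvasunbdd} by choosing the three reference extensions so that the accumulated unimodular factors cancel, which is possible precisely because of the one-parameter freedom quantified by Theorem \ref{coinvv}.

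The structural heart is the computation of $S_{\widehat A}$ from the triangular (colligation) form of a coupling. By Definition \ref{defcoup} the subspace $\cH_1$ is $\widehat A$-invariant with $\widehat A|_{\cH_1}=\widehat A_1$, so with respect to $\cH=\cH_1\oplus\cH_2$ the operator $\widehat A$ is block upper-triangular, and condition (ii), $\dot A\subset\widehat A_1\oplus(\widehat A_2)^*$, pins down the $(2,2)$ behaviour up to the single off-diagonal coupling datum, exactly as in the canonical bounded form \eqref{kokol}. For the bounded factors this is Theorem \ref{opcoupbdd}: whenever a common real regular point is available (arranged by an affine shift, which is harmless by Theorem \ref{coinvv}) I would pass to $\widehat B_k=-(\widehat A_k)^{-1}\in\fD_0(\cH_k)$, which by Corollary \ref{obrat1} is a rank-one perturbation of a bounded self-adjoint operator; inversion preserves $\cH_1$-invariance and the rank-one defect, so $\widehat B=-(\widehat A)^{-1}$ is a coupling of $\widehat B_1,\widehat B_2$ in the sense of Definition \ref{opcupd0}, Theorem \ref{opcoupbdd} gives $S_{\widehat B}=S_{\widehat B_1}S_{\widehat B_2}$, and the Möbius substitution $z\mapsto -1/z$ together with the affine invariance of Theorem \ref{coinvv} transports this back to \eqref{proizvasunbdd}. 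For a factor whose spectrum fills the real axis (e.g. $\widehat D_I(k)$), \emph{no} such transform to $\fD_0$ exists, and one must instead carry out the same colligation computation intrinsically, using the resolvent formula of Theorem \ref{modeldef} to write the triangular resolvent of $\widehat A$ in the model $L^2(\bbR;d\mu)$ and to read off that the off-diagonal term contributes exactly the product of the two transfer functions.

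I expect this last point to be the main obstacle: establishing transfer-function multiplication for genuinely unbounded operators with rank-one defect, where the reduction to the classical Brodski\u{\i} theorem is unavailable and every domain manipulation must be performed directly on the unbounded colligation. Finally, the \emph{moreover} assertion \eqref{multkiappa1} follows cheaply. Evaluating the product formula at $z=i$ and using $\varkappa=S_{(\dot A,\widehat A,A)}(i)$ from \eqref{sac} gives $\varkappa=\varkappa_1\varkappa_2$, whence $\widehat\kappa(\widehat A)=\widehat\kappa(\widehat A_1)\widehat\kappa(\widehat A_2)$ after taking moduli; since $\widehat\kappa=|\varkappa|$ is a reference-independent unitary invariant of the dissipative operator by Remark \ref{mnogopros}, this multiplicativity holds for \emph{every} operator coupling, not merely for the one realizing the normalized factorization \eqref{proizvasunbdd}.
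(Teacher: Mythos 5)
There is a genuine gap, and it is the one you yourself flag at the end. Your plan handles the multiplicative core only in the special branch where $\widehat A_1$, $\widehat A_2$ (and, implicitly, the coupling $\widehat A$ itself) admit a \emph{common} real regular point, so that everything can be inverted into the bounded class $\fD_0$ and Theorem \ref{opcoupbdd} applied. For the remaining case --- a factor whose spectrum fills the real axis, such as $\widehat D_I(k)$, or two factors whose real resolvent sets simply do not intersect (an affine shift moves all spectra simultaneously, so it can never \emph{create} a common regular point) --- you say only that ``one must instead carry out the same colligation computation intrinsically'' and that you ``expect this last point to be the main obstacle.'' That deferred computation \emph{is} the theorem: the whole content of \eqref{proizvasunbdd} for the class $\fD(\cH)$ is precisely the unbounded colligation multiplication that cannot be reduced to Brodski\u{\i}'s bounded theorem. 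Note also that the paper itself offers no proof to fall back on: Theorem \ref{opcoup} is quoted from \cite[Theorem 6.1]{MT-MAT}, so your proposal would have to reconstruct that external argument, and as written it does not.

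Even the bounded branch has two unproven bridges. First, you claim $-\widehat A^{-1}$ is a coupling of $-\widehat A_1^{-1}$ and $-\widehat A_2^{-1}$ in the sense of Definition \ref{opcupd0} because ``inversion preserves \dots the rank-one defect,'' but Definition \ref{defcoup} contains no rank-one condition to preserve; what must be verified is that $\widehat A^{-1}-\bigl(\widehat A_1^{-1}\oplus\widehat A_2^{-1}\bigr)$ has rank one. Krein's formula only gives that $\widehat A^{-1}-\bigl(\widehat A_1^{-1}\oplus(\widehat A_2^{*})^{-1}\bigr)$ is rank one, and $(\widehat A_2^{*})^{-1}-\widehat A_2^{-1}$ is a second rank-one operator, so a priori your difference is rank two; the needed alignment of these two rank-one terms is exactly the triangular structure you would have to extract from condition (ii) of Definition \ref{defcoup}, and this is not done. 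Second, transporting the product formula back requires an invariance principle for the M\"obius map $z\mapsto -1/z$ relating the $\fD_0$-characteristic function \eqref{chfbd} of $-\widehat A^{-1}$ to the triple characteristic function \eqref{ch12} of $\widehat A$; the paper's Theorem \ref{coinvv} covers only affine maps $f(z)=az+b$ with $a>0$, so this bridge is also missing. The parts of your proposal that do work --- reduction to prime parts, absorbing unimodular constants via Lemma \ref{ind0}, and deducing \eqref{multkiappa1} from \eqref{proizvasunbdd} at $z=i$ via \eqref{sac} and Remark \ref{mnogopros} --- are correct but peripheral; the multiplicative heart of the theorem is left open.
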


\begin{corollary} {\it Assume the hypotheses of Theorem \ref{opcoup}. Then the von Neumann logarithmic potential $\Gamma_{\widehat A}(z)$
$($see Definition \ref{logpotdef}$)$
 is  an additive functional in the sense that
$$
\Gamma_{\widehat A_1\uplus \widehat A_2}(z)=\Gamma_{\widehat A_1}(z)+\Gamma_{ \widehat A_2}(z),\quad z\in \rho_{\widehat A_1}\cap
\rho_{\widehat A_2}\cap \rho_{\widehat A_1\uplus \widehat A_2}\cap\bbC_+.
$$
}
\end{corollary}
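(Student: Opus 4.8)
The plan is to read off the additivity of $\Gamma$ directly from the multiplicativity of characteristic functions established in Theorem \ref{opcoup}, after first dealing with the fact that the logarithmic potential is attached to a single dissipative operator while the characteristic function is attached to a \emph{triple}. So the first step I would take is to record that $\Gamma_{\widehat A}(z)=\log|S(z)|$ is genuinely a function of $\widehat A$ alone. This is exactly the content of Remark \ref{mnogopros} together with Proposition \ref{concept} in Appendix E: the modulus $|S_{(\dot A,\widehat A,A)}(z)|$ is a complete unitary invariant of the maximal dissipative operator $\widehat A$ and in particular does not depend on which self-adjoint reference extension $A$ of $\dot A$ is used to form the triple. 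Consequently $\Gamma_{\widehat A}$ is well defined by Definition \ref{logpotdef} regardless of the reference operators that will be produced by the multiplication theorem.

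Next I would pin down the domain on which the identity is to be read. On $\rho_{\widehat A}\cap\bbC_+$ the characteristic function $S(z)$ is analytic and nonvanishing, since (by the analog of Lemma \ref{specpoint} in Appendix D, transported to the general triple through the unitary equivalence of Theorem \ref{unitar}) the zeros of $S$ in the upper half-plane coincide precisely with the eigenvalues of $\widehat A$, hence lie in $\spec(\widehat A)$. Therefore on the set
$$
z\in\rho_{\widehat A_1}\cap\rho_{\widehat A_2}\cap\rho_{\widehat A_1\uplus\widehat A_2}\cap\bbC_+
$$
all three characteristic functions $S_{(\dot A,\widehat A_1\uplus\widehat A_2,A)}$, $S_{(\dot A_1,\widehat A_1,A_1)}$ and $S_{(\dot A_2,\widehat A_2,A_2)}$ are nonzero, so that every logarithm appearing below is a finite-valued harmonic function and no boundary or singularity issues arise.

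The decisive step is then simply to apply Theorem \ref{opcoup}: for a suitable choice of reference operators $A$, $A_1$, $A_2$ extending $\dot A$, $\dot A_1$, $\dot A_2$ respectively, one has the factorization
$$
S_{(\dot A,\widehat A_1\uplus\widehat A_2,A)}(z)=S_{(\dot A_1,\widehat A_1,A_1)}(z)\cdot S_{(\dot A_2,\widehat A_2,A_2)}(z),\quad z\in\bbC_+.
$$
Taking moduli and then logarithms on the admissible set gives
$$
\log|S_{(\dot A,\widehat A_1\uplus\widehat A_2,A)}(z)|=\log|S_{(\dot A_1,\widehat A_1,A_1)}(z)|+\log|S_{(\dot A_2,\widehat A_2,A_2)}(z)|,
$$
and invoking the reference-operator independence from the first step rewrites each side as the corresponding von Neumann logarithmic potential, yielding $\Gamma_{\widehat A_1\uplus\widehat A_2}(z)=\Gamma_{\widehat A_1}(z)+\Gamma_{\widehat A_2}(z)$.

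I do not expect a serious obstacle here: the entire analytic weight of the statement is carried by the already-available Multiplication Theorem \ref{opcoup}. The only point that requires genuine care — and the one I would state explicitly rather than gloss over — is the well-definedness issue, namely that passing from the triple-dependent $S$ to the operator-dependent $\Gamma$ is legitimate precisely because $|S|$ is a unitary invariant of $\widehat A$; without that remark the three reference operators produced by Theorem \ref{opcoup} would appear to contaminate the statement, whereas in fact they are immaterial.
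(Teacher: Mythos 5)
Your proof is correct and is exactly the argument the paper intends: the corollary is stated there without proof as an immediate consequence of the Multiplication Theorem \ref{opcoup}, obtained by taking $\log|\cdot|$ of the factorization \eqref{proizvasunbdd} and using the reference-operator independence of $|S(z)|$ (Lemma \ref{ind0}, Proposition \ref{concept}, Remark \ref{mnogopros}). Your explicit attention to well-definedness of $\Gamma_{\widehat A}$ and to the nonvanishing of the characteristic functions on the stated resolvent set merely makes the implicit steps precise; nothing is missing.
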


\section{Stable Laws}\label{gnedkol}
 Recall (see, e.g., \cite{F,IL,Z}) that  a distribution $G$ (of a random variable)  is said to be stable, if a linear combination of two independent random variables with this distribution
  has the same distribution, up to location and scale parameters. That is,
for any $b_1$, $b_2>0$, there exist a $b>0$ and $ a\in \bbR $ such that
$$
G\left (\frac{x}{b_1}\right )\star G\left (\frac{x}{b_2}\right )=G\left (\frac{x-a}{b}\right ),
$$
where $\star$ denotes the convolution of distributions (see \cite[Ch. V.4]{F}).

It turns out that a (non-degenerated) law  $G$ is  stable if and only if  the logarithm of its characteristic function  has the representation \cite[Theorem B.2]{Z}
\begin{equation}\label{stlaw}
\log g(t)=\sigma\left  (it\gamma-|t|^\alpha\left (1- i \beta \frac{t}{|t|}\omega (t, \alpha)\right )\right )
\end{equation}
for some $\sigma >0$, $-\infty <\gamma<\infty$,
\begin{align}
0<&\alpha \le 2,\quad  \text{(the index of stability)}\nonumber
\\
-1\le &\beta\le 1
, \quad \text{(the skew parameter)}. \label{data}
\end{align}
Here
\begin{equation}\label{omegaalpha}
\omega(t, \alpha)=\begin{cases}
 \tan \left (\frac\pi2 \alpha\right ), & \alpha \ne 1, \\
- \frac2\pi \log |t|, & \alpha=1.
\end{cases}
\end{equation}
The skew  parameter $
\beta $ is irrelevant when $\alpha =2$.

Recall (see, e.g., \cite{Z})  that   a distribution $F$ is said to belong to
{\it the domain of attraction of a law}  if
  there are constants $A_n$ and $B_n>0$ such that the following non-zero  limit
 $$
\lim_{n\to \infty} \log \left [f(t/B_n)\right ]^ne^{iA_nt}
$$
exists, where
 $f(t)$   is the characteristic function of the  distribution $F$,
$$
f(t)=\int_\bbR e^{itx}dF(x).
$$
In this case   the limit coincides with the logarithm of a  stable law \eqref{stlaw} for an appropriate choice of the parameters $\alpha$, $\beta$, $\gamma$ and  $\sigma$.

Recall that a positive function $h(x)$, defined for $x\ge 0$, is said to be slowly varying if, for all $t>0$,
$$
\lim_{x\to\infty}\frac{h(tx)}{h(x)}=1.
$$
Also,  by the Karamata theorem (see, e.g., \cite[Appendix 1]{IL} for an exposition of the Karamata theory), a slowly varying function $h$ which is integrable on any finite interval can be represented in the form
$$
h(x)=c(x)\exp\left \{ \int _{x_0}^x\frac{\epsilon(t)}{t}dt \right \},\quad x_0>0,
$$
where
$$
\lim_{x\to \infty}c(x)=c\ne0\quad \text{ and }\quad \lim_{x\to \infty}\epsilon(x)=0.
$$

A key result in this area is the following Gnedenko-Kolmogorov limit theorem.

\begin{theorem} [{\cite[Theorem 2.6.1]{IL}}]\label{stthm}
A distribution $F$ belongs to the domain of attraction of   a stable law  \eqref{stlaw}   with exponent $\alpha$, $0<\alpha\le 2$,  and parameters
$\sigma$, $\beta$ and $\gamma$ if and only if
\begin{align}
1-F(x)&=\frac{c_1+o(1)}{x^\alpha}h(x), \quad x>0\label{as+}\\
F(x)&=\frac{c_2+o(1)}{(-x)^\alpha}h(-x), \quad x<0\label{as-}
\end{align}
as $|x|\to \infty$, where   $c_1, c_2\ge 0$, $c_1+c_2>0$ and $h$ is slowly varying in the sense of Karamata.

In this case,
\begin{equation}\label{kaksigma}
\sigma=(c_1+c_2)d(\alpha),
\end{equation}
where
\begin{equation}\label{dalpha}
d(\alpha)=\begin{cases}
\Gamma (1-\alpha)\cos \left (\frac12 \pi \alpha\right ),& \alpha \ne 1\\
\frac\pi2,&\alpha =1
\end{cases},
\end{equation}
and
\begin{equation}\label{kakbeta}
\beta =\frac{c_1-c_2}{c_1+c_2}.
\end{equation}

\end{theorem}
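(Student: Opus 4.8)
The plan is to prove the theorem through the classical characteristic-function route, combining the L\'evy--Khintchine description of stable laws with Karamata's theory of regular variation. By definition $F$ lies in the domain of attraction of the stable law $G$ with characteristic function $g$ given by \eqref{stlaw} precisely when there are constants $A_n$ and $B_n>0$ such that $[f(t/B_n)]^ne^{iA_nt}\to g(t)$ pointwise, where $f(t)=\int_\bbR e^{itx}\,dF(x)$. First I would record the elementary reduction: since necessarily $B_n\to\infty$ and hence $f(t/B_n)\to1$ uniformly on compact $t$-sets, the convergence $[f(t/B_n)]^ne^{iA_nt}\to g(t)$ is equivalent to $n\bigl(f(t/B_n)-1\bigr)+iA_nt\to\log g(t)$, using the lemma that $w_n\to1$ together with $n(w_n-1)\to L$ forces $w_n^n\to e^L$. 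This turns the whole problem into the asymptotic analysis of $1-f(t)$ as $t\to0$.

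For the necessity direction I would first argue that the admissible norming sequence is regularly varying of index $1/\alpha$: this is forced because $G$ is $\alpha$-stable and the ratios $B_{n+1}/B_n\to1$, so one may write $B_n=n^{1/\alpha}\ell(n)$ with $\ell$ slowly varying. The core computation is then to expand
$$
1-f(t)=\int_\bbR\bigl(1-e^{itx}\bigr)\,dF(x),\qquad t\downarrow0,
$$
by splitting the integral over the two half-axes and integrating by parts, so that the tail functions $1-F(x)$ and $F(-x)$ enter explicitly. Substituting the regular-variation hypotheses \eqref{as+} and \eqref{as-} and applying a Karamata Abelian theorem, I would extract the leading term
$$
1-f(t)=|t|^\alpha h(1/|t|)\,(c_1+c_2)\,d(\alpha)\Bigl(1-i\beta\,\sign(t)\,\omega(t,\alpha)\Bigr)+o\!\left(|t|^\alpha h(1/|t|)\right),
$$
where the constant $d(\alpha)=\Gamma(1-\alpha)\cos(\tfrac\pi2\alpha)$ and the factor $\omega(t,\alpha)$ arise from the canonical integrals $\int_0^\infty(1-\cos u)u^{-\alpha-1}\,du$ and $\int_0^\infty\sin u\,u^{-\alpha-1}\,du$, suitably regularized in each of the regimes $0<\alpha<1$, $\alpha=1$, $1<\alpha\le2$. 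Matching the real and imaginary parts of this expansion against \eqref{stlaw} yields the identifications \eqref{kaksigma} and \eqref{kakbeta}.

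For sufficiency I would run the same expansion in reverse: given \eqref{as+}--\eqref{as-}, choose $B_n$ so that $n\,B_n^{-\alpha}h(B_n)\to1$ --- possible because $x\mapsto x^\alpha/h(x)$ is regularly varying of positive index and hence asymptotically invertible --- and choose $A_n$ to absorb the drift governed by $\gamma$ (and, when $\alpha=1$, the additional logarithmic centering coming from the $-\tfrac2\pi\log|t|$ term in $\omega$). Then $n\bigl(f(t/B_n)-1\bigr)+iA_nt\to\log g(t)$ with exactly the parameters $\sigma,\beta$ dictated by \eqref{kaksigma} and \eqref{kakbeta}, and the elementary exponentiation lemma completes the proof that $F$ is in the domain of attraction of $G$.

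The hard part will be the Abelian--Tauberian asymptotics of $1-f(t)$, and in particular the uniform control of the slowly varying factor $h$ together with the separate treatment of the three regimes. The case $\alpha=1$ is the most delicate, since the centering $A_n$ is then not simply the mean, the normalization interacts with the logarithm through $\omega(t,1)=-\tfrac2\pi\log|t|$, and one must track cancellations carefully to obtain $d(1)=\tfrac\pi2$; the regime $1<\alpha\le2$ requires subtracting the finite mean before integrating by parts, whereas for $0<\alpha<1$ the mean need not exist and no such subtraction is available. Apart from these case distinctions the argument is a standard, if technical, application of regular variation, which is why the statement is quoted from \cite{IL} rather than reproved here.
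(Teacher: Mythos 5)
The paper does not prove this theorem at all: it is stated in Appendix H as a verbatim quotation of \cite[Theorem 2.6.1]{IL}, supplied only as background for the applications in Part 2. So the only meaningful comparison is with the classical argument in that reference, and your sketch does broadly follow that route: reduction of domain-of-attraction to the asymptotics of $1-f(t)$ as $t\to 0$ via the exponentiation lemma, Karamata regular variation, the canonical integrals producing $d(\alpha)$ and $\omega(t,\alpha)$, the choice of $B_n$ through $nB_n^{-\alpha}h(B_n)\to 1$, and the special centering in the case $\alpha=1$. The sufficiency half of your proposal is a correct (if deliberately compressed) account of this Abelian computation.

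There is, however, a genuine structural gap in your necessity direction: as written it is circular. You propose to expand $1-f(t)$ and then to proceed by ``substituting the regular-variation hypotheses \eqref{as+} and \eqref{as-}'' --- but in the necessity direction \eqref{as+} and \eqref{as-} are precisely the conclusions to be established, not available hypotheses; what you have described there is again the sufficiency argument. The Abelian implication (tail asymptotics $\Rightarrow$ transform asymptotics $\Rightarrow$ convergence) cannot simply be run in reverse: asymptotics of the characteristic function at the origin do not by themselves determine the tail behavior of $F$, and inverting this implication is exactly the Tauberian content you defer to a closing remark. The standard repair --- and the actual route of \cite{IL} and of Gnedenko--Kolmogorov --- is the theory of limit laws for triangular arrays of i.i.d.\ summands: the assumed convergence of $[f(t/B_n)]^n e^{iA_n t}$ to the stable law forces vague convergence of $n\,dF(B_n\,\cdot)$, away from the origin, to the L\'evy measure of the limit, i.e.
\begin{equation*}
n\bigl(1-F(B_n x)\bigr)\longrightarrow \frac{\tilde c_1}{x^{\alpha}},
\qquad
n\,F(-B_n x)\longrightarrow \frac{\tilde c_2}{x^{\alpha}},
\qquad x>0,
\end{equation*}
and this, combined with the regular variation of $B_n$ (which you do establish), yields \eqref{as+}--\eqref{as-} together with the identification of $c_1,c_2$, hence \eqref{kaksigma} and \eqref{kakbeta}. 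Without this step, or an equivalent Tauberian theorem, the ``only if'' half of the statement remains unproved in your outline.
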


\begin{remark}\label{strem}  The (tauberian type) relationship between the set of data $(c_1, c_2, h)$ and $(\alpha, \beta, \gamma, \sigma) $  referred to in Theorem \ref{stthm}
(also see \eqref{data})
 can be described as follows:
if a distribution  $F$ belongs to the domain of attraction of the stable law \eqref{stlaw}, that is,
$$
\lim_{n\to \infty} \log \left [f(t/B_n)\right ]^ne^{iA_nt}=\sigma\left  (it\gamma-|t|^\alpha +i \beta  \frac{t}{|t|}\omega (t, \alpha)\right )
$$
for   some  constants $A_n$ and $B_n>0$,
then (see, e.g., \cite[Theorem 2.6.5]{IL})
$$
\log f(t)=i\tilde \gamma t -\sigma |t|^\alpha  h(1/t) \left (1-i\beta \frac{t}{|t|} \omega(t,\alpha)\right )(1+o(1)) \quad \text{as}\quad t\to 0,
$$
where $\tilde \gamma$ is  in general not necessarily the  same as $\gamma$.
Recall that in this case the norming constants $B_n$ necessarily satisfy the relation
 $$
  \lim_{n\to \infty}nB_n^{-\alpha}h(B_n)=1.
  $$

If in the hypothesis of Theorem \ref{stthm} the slowly varying  function $h(x)$ has   the property that
$\displaystyle{\lim_{x\to \infty} h(x)=1}$, then  the scaling factors $B_n$  can be given by
$$
B_n=n^{1/\alpha}.
$$
Under this hypothesis, the probability distribution $F$ is said to belong to the domain of  {\it normal}  attraction of a stable law. In particular, every stable law belongs to the  normal of its own normal attraction.

\end{remark}

\end{document}